\documentclass[a4paper, 11pt, french, english]{memoir}
\usepackage{myhdr}
\pdfoutput=1

\date{2022}

\begin{document}

\includepdf{couverture}

\thispagestyle{empty}

\hspace*{-3.5cm}%
\begin{minipage}{1.3\linewidth}
  \vspace*{-2cm}%
  \selectlanguage{french}

  \paragraph*{Titre :} Une approche générique à la vérification quantitative

  \medskip

  \paragraph*{Mots clés :} vérification quantitative,
  compositionnalité, incrémentalité, robustesse

  \medskip

  \begin{multicols}{2}
    \paragraph*{Résumé :}

    Ce mémoire porte sur la vérification quantitative, c'est-à-dire la
    vérification des propriétés quantitatives des systèmes
    quantitatifs.  Ces systèmes se retrouvent dans de nombreuses
    applications, et leur vérification quantitative est importante,
    mais aussi assez complexe.  En particulier, étant donné que la
    plupart des systèmes trouvés dans les applications sont plutôt
    larges, il est alors essentiel que les méthodes soient
    compositionnelles et incrémentielles. 

    Afin d'assurer la robustesse de la vérification, nous remplaçons
    les réponses booléennes de la vérification standard par des
    distances.  Selon le contexte de l'application, de nombreux types
    de distances différentes sont utilisées dans la vérification
    quantitative.  Par conséquent, il est nécessaire d'avoir une
    théorie générale des distances de systèmes qui puisse s'abstraire
    des distances concrètes, et de développer une vérification
    quantitative qui est indépendante de la distance.  Nous sommes de
    l'avis que dans une théorie de la vérification quantitative, les
    aspects quantitatifs devraient être traités, tout autant que les
    aspects qualitatifs, comme des éléments d'entrée d'un problème de
    vérification.

    Dans ce travail, nous développons de la sorte une théorie générale
    de la vérification quantitative.  Nous supposons comme entrée une
    distance entre traces, ou exécutions, puis utilisons la théorie
    des jeux à objectifs quantitatifs pour définir des distances entre
    systèmes quantitatifs.  Différentes versions du jeu de
    bisimulation ( quantitatif ) donnent lieu à différents types de
    distances : distance de bisimulation, distance de simulation,
    distance d'équivalence de trace, etc., permettant de construire
    une généralisation quantitative du spectre temps linéaire--temps
    de branchement de van Glabbeek.

    Nous étendons notre théorie générale de la vérification
    quantitative à une théorie des spécifications quantitatives.  Pour
    cela nous utilisons des systèmes de transitions modaux, et nous
    développons les propriétés quantitatives des opérateurs usuels
    pour les théories de spécifications. Tout cela est indépendant de
    la distance concrète entre les traces utilisée.
  \end{multicols}

  \vspace{2cm}

  \selectlanguage{english}

  \paragraph*{Title:} \projecttitle

  \medskip

  \paragraph*{Keywords:} quantitative verification, compositionality,
  incrementality, robustness

  \medskip

  \begin{multicols}{2}
    \paragraph*{Abstract:}

    This thesis is concerned with quantitative verification, that is,
    the verification of quantitative properties of quantitative
    systems.  These systems are found in numerous applications, and
    their quantitative verification is important, but also rather
    challenging.  In particular, given that most systems found in
    applications are rather big, compositionality and incrementality
    of verification methods are essential.

    In order to ensure robustness of verification, we replace the
    Boolean yes-no answers of standard verification with distances.
    Depending on the application context, many different types of
    distances are being employed in quantitative verification.
    Consequently, there is a need for a general theory of system
    distances which abstracts away from the concrete distances and
    develops quantitative verification at a level independent of the
    distance.  It is our view that in a theory of quantitative
    verification, the quantitative aspects should be treated just as
    much as input to a verification problem as the qualitative aspects
    are.

    In this work we develop such a general theory of quantitative
    verification.  We assume as input a distance between traces, or
    executions, and then employ the theory of games with quantitative
    objectives to define distances between quantitative systems.
    Different versions of the quantitative bisimulation game give rise
    to different types of distances, viz.~bisimulation distance,
    simulation distance, trace equivalence distance, etc., enabling us
    to construct a quantitative generalization of van Glabbeek's
    linear-time--branching-time spectrum.

    We also extend our general theory of quantitative verification to
    a theory of quantitative specifications.  For this we use modal
    transition systems, and we develop the quantitative properties of
    the usual operators for behavioral specification theories.  All
    this is independent of the concrete distance between traces which
    is utilized.
  \end{multicols}
\end{minipage}

\frontmatter


\tableofcontents*

\mainmatter

\chapter{Introduction}

This thesis is concerned with quantitative verification, that is, the
verification of quantitative properties of quantitative systems.
These systems are found in numerous applications, and their
quantitative verification is important, but also rather challenging.
In particular, given that most systems found in applications are
rather big, compositionality and incrementality are essential.  That
is, quantitative verification should be applied as much as possible to
subsystems and at as high a level as possible, and then verified
partial specifications should be composed and refined into an
implementation.

Much work has been done in the area of compositional and incremental
design, but robust quantitative frameworks are lacking.  This thesis
presents work published between 2009 and 2020 by the author and
various co-authors which attempts to introduce such a framework.  Much
remains to be done, in particular in applications to real-time and
hybrid systems, but we believe that the foundations laid out here will
be useful in this endeavor.

\section{Motivation}

\subsection{Quantitative Verification}

Motivated by applications in real-time systems, hybrid systems,
embedded systems, and other areas, formal verification has seen a
trend towards modeling and analyzing systems which contain
quantitative information.  Quantitative information can thus be a
variety of things: probabilities, time, tank pressure, energy intake,
\etc

A number of quantitative models have been developed: probabilistic
automata~\cite{DBLP:conf/concur/SegalaL94}; stochastic process
algebras~\cite{book/Hillston96}; timed
automata~\cite{DBLP:journals/tcs/AlurD94}; hybrid
automata~\cite{DBLP:journals/tcs/AlurCHHHNOSY95}; timed variants of
Petri nets~\cite{journals/transcom/MerlinF76, DBLP:conf/apn/Hanisch93};
con\-tinuous-time Markov chains~\cite{book/Stewart94}; \etc Similarly,
there is a number of specification formalisms for expressing
quantitative properties: timed computation tree
logic~\cite{DBLP:journals/iandc/HenzingerNSY94}; probabilistic
computation tree logic~\cite{DBLP:journals/fac/HanssonJ94}; metric
temporal logic~\cite{DBLP:journals/rts/Koymans90}; stochastic
continuous logic~\cite{DBLP:journals/tocl/AzizSSB00}; \etc
Quantitative model checking, the verification of quantitative
properties for quantitative systems, has also seen rapid development:
for probabilistic systems in
PRISM~\cite{DBLP:conf/tacas/KwiatkowskaNP02} and
PEPA~\cite{DBLP:conf/cpe/GilmoreH94}; for real-time systems in
Uppaal~\cite{DBLP:journals/sttt/LarsenPY97},
RED~\cite{DBLP:conf/fm/WangME93}, TAPAAL~\cite{DBLP:conf/atva/BygJS09}
and Romeo~\cite{DBLP:conf/cav/GardeyLMR05}; and for hybrid systems in
HyTech~\cite{DBLP:journals/sttt/HenzingerHW97},
SpaceEx~\cite{DBLP:conf/cav/FrehseGDCRLRGDM11} and
HySAT~\cite{DBLP:journals/fmsd/FranzleH07}, to name but a few.

Quantitative model checking has, however, a problem of
\emph{robustness}.  When the answers to model checking problems are
Boolean---either a system meets its specification or it does not---then
small perturbations in the system's parameters may invalidate the
result.  This means that, from a model checking point of view, small,
perhaps unimportant, deviations in quantities are indistinguishable from
larger ones which may be critical.

As an example, Figure~\ref{fi:tatrain} shows three simple
timed-automaton models of a train crossing, each modeling that once
the gates are closed, some time will pass before the train arrives.
Now assume that the specification of the system is
\begin{equation*}
  \textit{The gates have to be closed 60 seconds before the train
    arrives.}
\end{equation*}
Model $A$ does guarantee this property, hence satisfies the
specification.  Model $B$ only guarantees that the gates are closed 58
seconds before the train arrives, and in model $C$, only one second
may pass between the gates closing and the train.

Neither model $B$ or $C$ satisfy the specification, so this is the
result which a model checker like for example Uppaal would output.
What this does not tell us, however, is that model $C$ is dangerously
far away from the specification, whereas model $B$ only violates it
slightly and may be acceptable given other engineering constraints, or
may be more easily amenable to satisfy the specification than model
$C$.

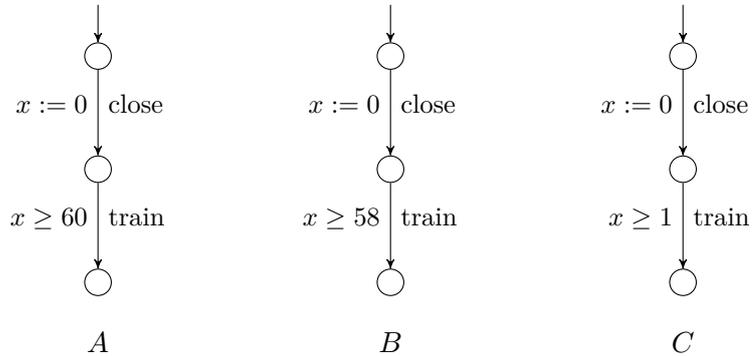
\begin{figure}[tbp]
  \centering
  \begin{tikzpicture}[->,>=stealth',auto,initial text=]
    \tikzstyle{every node}=[font=\small]
    \tikzstyle{every state}=[inner sep=.5mm,minimum size=3.5mm]
    \begin{scope}
      \node[state,initial above] (0) at (0,0) {};
      \node[state] (1) at (0,-1.5) {};
      \node[state] (2) at (0,-3) {};
      \path (0) edge node [left,anchor=base east] {$x:= 0$} node
      [right,anchor=base west] {close} (1);
      \path (1) edge node [left,anchor=base east] {$x\ge 60$} node
      [right,anchor=base west] {train} (2);
      \node[font=\normalsize] at (0,-3.8) {$A$};
    \end{scope}
    \begin{scope}[xshift=10em]
      \node[state,initial above] (0) at (0,0) {};
      \node[state] (1) at (0,-1.5) {};
      \node[state] (2) at (0,-3) {};
      \path (0) edge node [left,anchor=base east] {$x:= 0$} node
      [right,anchor=base west] {close} (1);
      \path (1) edge node [left,anchor=base east] {$x\ge 58$} node
      [right,anchor=base west] {train} (2);
      \node[font=\normalsize] at (0,-3.8) {$B$};
    \end{scope}
    \begin{scope}[xshift=20em]
      \node[state,initial above] (0) at (0,0) {};
      \node[state] (1) at (0,-1.5) {};
      \node[state] (2) at (0,-3) {};
      \path (0) edge node [left,anchor=base east] {$x:= 0$} node
      [right,anchor=base west] {close} (1);
      \path (1) edge node [left,anchor=base east] {$x\ge 1$} node
      [right,anchor=base west] {train} (2);
      \node[font=\normalsize] at (0,-3.8) {$C$};
    \end{scope}
  \end{tikzpicture}
  \caption{%
    \label{fi:tatrain}
    Three timed automata modeling a train crossing.
  }
\end{figure}

In order to address the robustness problem, our approach is to replace
the Boolean yes-no answers of standard verification with
\emph{distances}.  That is, the Boolean co-domain of model checking is
replaced by the non-negative real numbers.  In this setting, the
Boolean \texttt{true} corresponds to a distance of zero, and
\texttt{false} corresponds to any non-zero number, so that
quantitative model checking can now tell us not only that a
specification is violated, but also \emph{how much} it is violated, or
\emph{how far} the system is from corresponding to its specification.

In the example of Figure~\ref{fi:tatrain} and for a simple definition of
system distances, the distance from $A$ to our specification would be $0$,
whereas the distances from $B$ and $C$ to the specification would be $2$
and $59$, respectively.  The precise interpretation of distance values
will be application-dependent; but in any case, it is clear that $C$
is much farther away from the specification than $B$ is.

\subsection{Specification Theories}

One of the major current challenges to rigorous design of software
systems is that these systems are becoming increasingly complex and
difficult to reason about~\cite{Sifakis11}.  As an example, an
integrated communication system in a modern airplane can have more than
$10^{ 900}$ distinct states~\cite{DBLP:conf/forte/BasuBBCDL10}, and
state-of-the-art tools offer no possibility to reason about, and model
check, the system as a whole.
One promising approach to overcome such problems is the one of
\emph{compositional and incremental design}.  Here the reasoning is
done as much as possible at higher \emph{specification} levels rather
than with \emph{implementations}; partial specifications are proven
correct and then composed and refined until one arrives at an
implementation model.  Practical experience indicates that this is a
viable approach~\cite{COMBEST,SPEEDS}.

Specifications of system requirements are high-level finite abstractions
of possibly infinite sets of implementations. A model of a system is
considered an implementation of a given specification if the behavior
defined by the implementation is implied by the description provided by
the specification. 

Any practical specification formalism comes equipped with a number of
operations which permit compositional and incremental reasoning.  The
first of these is a \emph{refinement} relation which allows to
successively distill specifications into more detailed ones and
eventually into implementations.  In an implementation, all optional
behavior defined in the specification has been decided upon in
compliance with the specification.
Also needed is an operation of \emph{logical conjunction} which allows
to combine specifications so that the systems which refine the
conjunction of two specifications are precisely the ones which satisfy
both of them.  Refinement and conjunction together permit incremental
reasoning as specifications are successively refined and conjoined.

For compositional reasoning, one needs another operation of
\emph{structural composition} which allows to infer specifications
from sub-specifications of independent requirements, mimicking at the
implementation level for example the interaction of components in a
distributed system.  A partial inverse of this operation is given by a
\emph{quotient} operation which allows to synthesize a specification
of missing components from an overall specification and an
implementation which realizes a part of that specification.

Over the years, there have been a series of advances on specification
theories~\cite{inbook/natosec/AlfaroH95, ChakrabartiAHM02,
  DBLP:conf/hybrid/DavidLLNW10, thesis/Delahaye10,
  journals/cwi/LynchT89, thesis/Nyman08, thesis/Thrane11}.  The
predominant approaches are based on modal logics and process algebras
but have the drawback that they cannot naturally embed both logical
and structural composition within the same
formalism~\cite{DBLP:conf/avmfss/Larsen89}.  Hence such formalisms do
not permit to reason incrementally through refinement.

In order to leverage these problems, the concept of \emph{modal
  transition systems} was
introduced~\cite{DBLP:conf/avmfss/Larsen89}. In short, modal
transition systems are labeled transition systems equipped with two
types of transitions: \textit{must}~transitions which are mandatory
for any implementation, and \textit{may}~transitions which are
optional for implementations.  It is well established that modal
transition systems match all the requirements of a reasonable
specification theory,
and much progress has been made in this area, see for example
\cite{thesis/Nyman08, DBLP:conf/fmoods/GrulerLS08,
  DBLP:conf/concur/GodefroidHJ01, DBLP:conf/vmcai/GrumbergLLS05}
or~\cite{DBLP:journals/eatcs/AntonikHLNW08} for an overview.  Also,
practical experience shows that the formalism is expressive enough to
handle complex industrial problems~\cite{COMBEST,SPEEDS}.

As an example, consider the modal transition system shown in
Figure~\ref{fi:mtsintro} which models the requirements of a simple email
system in which emails are first received and then delivered.  Before
delivering the email, the system may check or process the email, for
example for en- or decryption, filtering of spam emails, or generating
automatic answers using an auto-reply feature (see
also~\cite{DBLP:conf/fiw/Hall00}).  \textit{Must} transitions,
representing obligatory behavior, are drawn as solid arrows, whereas
\textit{may} transitions, modeling optional behavior, are shown as
dashed arrows: hence any implementation of this email system
specification \emph{must} be able to receive and deliver email, and it
\emph{may} also be able to check arriving email before delivering it.
No other behavior is allowed.

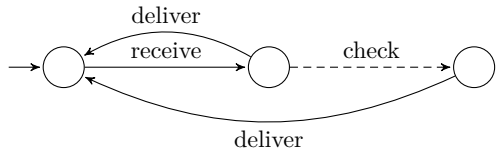
\begin{figure}[tp]
  \centering
  \begin{tikzpicture}[->,>=stealth',shorten >=1pt,auto,node
    distance=2.0cm,initial text=,scale=0.9,transform shape]
    \tikzstyle{every node}=[font=\small] \tikzstyle{every
      state}=[fill=white,shape=circle,inner sep=.5mm,minimum size=6mm]
    \node[state,initial] (s0) at (0,0) {};
    \node[state] (s1) at (3,0) {};
    \node[state] (s2) at (6,0) {};
    \path (s0) edge [solid] node [above,sloped] {receive} (s1);
    \path (s1) edge [solid,bend right] node [above,sloped] {deliver} (s0);
    \path (s1) edge [densely dashed] node [above] {check} (s2);
    \path (s2) edge [solid,bend left=25] node [below,sloped] {deliver} (s0);
  \end{tikzpicture} 
  \caption{Modal transition system modeling a simple email system, with
    an optional behavior: Once an email is received it may \eg~be
    scanned for containing viruses, or automatically decrypted, before
    it is delivered to the receiver.}
  \label{fi:mtsintro}
\end{figure}

Implementations can also be represented within the modal transition
system formalism, simply as specifications without \textit{may}
transitions.  Here, any implementation choice has been resolved, so
that implementations are (isomorphic to) plain labeled transition
systems.  Formally, for a labeled transition system to be an
implementation of a given specification, we require that the states of
the two objects are related by a refinement relation with the property
that all behavior required by the specification has been implemented,
and that any implementation behavior is permitted in the
specification.  Figure~\ref{fi:intro_impl} shows an implementation of
our email specification with two different checks, leading to distinct
processing states.

\begin{figure}[tp]
  \centering
  \begin{tikzpicture}[->,>=stealth',shorten >=1pt,auto,node
    distance=2.0cm,initial text=,scale=0.9,transform shape]
    \tikzstyle{every node}=[font=\small] \tikzstyle{every
      state}=[fill=white,shape=circle,inner sep=.5mm,minimum size=6mm]
    \node[state,initial] (s0) at (0,0) {};
    \node[state] (s1) at (3,0) {};
    \node[state] (s2) at (6,-.7) {};
    \node[state] (s3) at (6,.7) {};
    \path (s0) edge [solid] node [above,sloped] {receive} (s1);
    \path (s1) edge [solid,bend right] node [near start,above]
    {deliver} (s0);
    \path (s1) edge [solid] node [above,sloped] {check} (s2);
    \path (s1) edge [solid] node [above,sloped] {check} (s3);
    \path (s2) edge [solid,bend left=25] node [below] {deliver} (s0);
    \path (s3) edge [solid,bend right=35] node [above] {deliver} (s0);
  \end{tikzpicture} 
  \caption{An implementation of the simple email system in
    Figure~\ref{fi:mtsintro} in which we explicitly model two distinct
    types of email pre-processing.}
  \label{fi:intro_impl}
\end{figure}
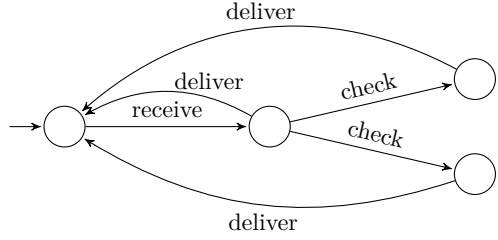

\subsection{Quantitative Specification Theories}

In recent work~\cite{DBLP:journals/jlp/JuhlLS12,
  DBLP:journals/mscs/BauerJLLS12, DBLP:conf/tase/BauerJLSL12,
  DBLP:conf/lpar/BenesKLMS12}, modal transition systems have been
extended by adding richer information to the usual discrete label set
of transition systems, permitting to reason about \emph{quantitative}
aspects of models and specifications.  These quantitative labels can
be used to model and analyze for example timing
behavior~\cite{DBLP:conf/formats/HenzingerMP05,
  DBLP:conf/hybrid/DavidLLNW10}, resource
usage~\cite{DBLP:journals/fmsd/RasmussenLS06,
  DBLP:conf/tase/BauerJLSL12}, or energy
consumption~\cite{DBLP:journals/cacm/BouyerFLM11,
  DBLP:conf/ictac/FahrenbergJLS11}.

In particular, \cite{DBLP:journals/jlp/JuhlLS12}~extends modal
transition systems with integer intervals and introduces corresponding
extensions of the above operations which observe the added quantitative
information, and \cite{DBLP:journals/mscs/BauerJLLS12}~generalizes this
theory to general \emph{structured labels}.  Both theories are, however,
\emph{fragile} in the sense that they rely on Boolean notions of
satisfaction and refinement: as refinement either holds or does not,
they are unable to \emph{quantify} the impact of small variations in
quantities.

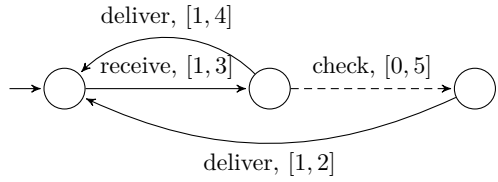
\begin{figure}[tp]
  \centering
  \begin{tikzpicture}[->,>=stealth',shorten >=1pt,auto,node
    distance=2.0cm,initial text=,scale=0.9,transform shape]
    \tikzstyle{every node}=[font=\small] \tikzstyle{every
      state}=[fill=white,shape=circle,inner sep=.5mm,minimum size=6mm]
    \node[state,initial] (s0) at (0,0) {};
    \node[state] (s1) at (3,0) {};
    \node[state] (s2) at (6,0) {};
    \path (s0) edge [solid] node [above,sloped] {receive, $[1,3]$} (s1);
    \path (s1) edge [solid,bend right=45] node [above,sloped] {deliver,
      $[1,4]$} (s0); 
    \path (s1) edge [densely dashed] node [above] {check, $[0,5]$} (s2);
    \path (s2) edge [solid,bend left=25] node [below,sloped] {deliver,
      $[1,2]$} (s0); 
  \end{tikzpicture} 
  \caption{%
    \label{fi:mtsquantities}
    Specification of a simple email system, with integer intervals
    modeling time constraints for performing the corresponding actions.}
\end{figure}

An example of a quantitative specification is shown in
Figure~\ref{fi:mtsquantities}.  The intuition is that any concrete
implementation \emph{must} be able to receive and deliver email,
within one to three and one to four time units, respectively; but it
also \emph{may} be able to check incoming email, \eg~for viruses,
before delivering it.  No other behavior is permitted.

\begin{figure}[tp]
  \centering
\subbottom[\label{fi:introimpl.1}]{
  \begin{tikzpicture}[->,>=stealth',shorten >=1pt,auto,node
    distance=2.0cm,initial text=,scale=0.9,transform shape]
    \tikzstyle{every node}=[font=\small] \tikzstyle{every
      state}=[fill=white,shape=circle,inner sep=.5mm,minimum size=6mm]
    \path[use as bounding box] (-1,-.5) rectangle (5.8,1.5); 
    \node[state,initial] (s0) at (0,0) {};
    \node[state] (s1) at (3,0) {};
    \path (s0) edge [solid] node [above,sloped] {receive, $2$} (s1);
    \path (s1) edge [solid,bend right=45] node [above,sloped] {deliver, $3$} (s0);
    \path (s1) edge [solid,loop right] node [right] {check, $1$} (s1);
  \end{tikzpicture}
  }
\subbottom[\label{fi:introimpl.2}]{
  \begin{tikzpicture}[->,>=stealth',shorten >=1pt,auto,node
    distance=2.0cm,initial text=,scale=0.9,transform shape]
    \tikzstyle{every node}=[font=\small] \tikzstyle{every
      state}=[fill=white,shape=circle,inner sep=.5mm,minimum size=6mm]
    \path[use as bounding box] (-1,-0.5) rectangle (4,1.5); 
    \node[state,initial] (s0) at (0,0) {};
    \node[state] (s1) at (3,0) {};
    \path (s0) edge [solid] node [above,sloped] {receive, $4$} (s1);
    \path (s1) edge [solid,bend right=45] node [above,sloped] {deliver, $3$} (s0);
  \end{tikzpicture}}
\subbottom[\label{fi:introimpl.3}]{
  \begin{tikzpicture}[->,>=stealth',shorten >=1pt,auto,node
    distance=2.0cm,initial text=,scale=0.9,transform shape]
    \tikzstyle{every node}=[font=\small] \tikzstyle{every
      state}=[fill=white,shape=circle,inner sep=.5mm,minimum size=6mm]
    \path[use as bounding box] (-1,-1.2) rectangle (5.8,1.5); 
    \node[state,initial] (s0) at (0,0) {};
    \node[state] (s1) at (3,0) {};
    \node[state] (s2) at (5,0) {};
    \path (s0) edge [solid] node [above,sloped] {receive, $3$} (s1);
    \path (s1) edge [solid,bend right=45] node [above,sloped] {deliver, $3$} (s0);
    \path (s1) edge [solid] node [above] {check, $1$} (s2);
    \path (s2) edge [solid,bend left=25] node [below,sloped] {deliver, $3$} (s0);
  \end{tikzpicture}}
  \subbottom[\label{fi:introimpl.4}]{
  \begin{tikzpicture}[->,>=stealth',shorten >=1pt,auto,node
    distance=2.0cm,initial text=,scale=0.9,transform shape]
    \tikzstyle{every node}=[font=\small] \tikzstyle{every
      state}=[fill=white,shape=circle,inner sep=.5mm,minimum size=6mm]
    \path[use as bounding box] (-1,-1.2) rectangle (4,1.5); 
    \node[state,initial] (s0) at (0,0) {};
    \node[state] (s1) at (3,0) {};
    \path (s0) edge [solid] node [above,sloped] {receive, $2$} (s1);
    \path (s1) edge [solid,bend right=45] node [above,sloped] {deliver, $3$} (s0);
  \end{tikzpicture}}
\caption{%
  \label{fi:introimpl}
  Four implementations of the simple email system in
  Figure~\ref{fi:mtsquantities}.}
\end{figure}
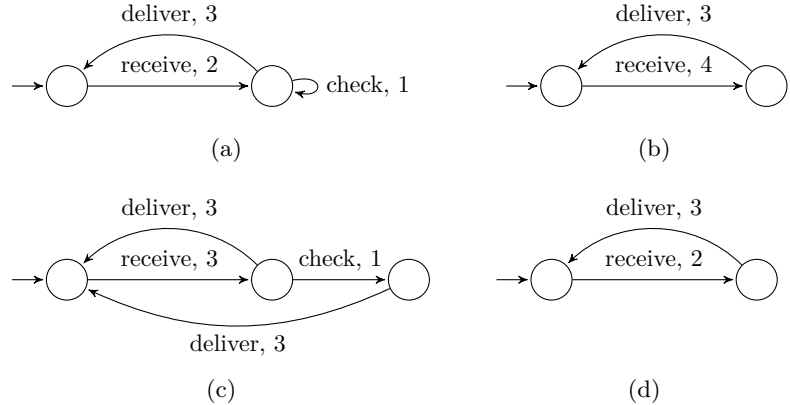

Figure~\ref{fi:introimpl} shows four different implementation
candidates for the specification of Figure~\ref{fi:mtsquantities}.  The
first candidate, in Figure~\ref{fi:introimpl.1}, has an error in the
discrete structure: after receiving an email, it may check the email
indefinitely.  Hence it does not satisfy the specification.  The
second candidate, in Figure~\ref{fi:introimpl.2}, is also problematic:
not implementing the checking part of the specification is entirely
permissible, but it takes too long to receive email.  Thus, if the
timing constraints are abstracted away, it is a perfectly good
implementation; but the quantitative timing constraints are off.  The
implementation candidate in Figure~\ref{fi:introimpl.3} has similar
problems, as it takes too long to deliver emails after checking them.
The transition system in Figure~\ref{fi:introimpl.4} is, finally, a true
implementation of the specification.

An important observation is, now, that even though the systems in
Figs.\ \ref{fi:introimpl.2} and~\ref{fi:introimpl.3} strictly are not
implementations of the email system specification, they conform much
better to it than the system in Figure~\ref{fi:introimpl.1}.
Intuitively, they ``almost'' comply with the specification; given some
other engineering constraints, they might indeed be considered ``good
enough'' compared to the specification.  It is, then, this ``almost''
and ``good enough'' which we shall attempt to formalize in this work.

Our point of view is, more generally speaking, that \emph{any}
quantitative specification formalism falls short with a Boolean notion
of satisfaction and refinement.  If the specification formalism is
intended to model quantitative properties, then it is of little use to
know that a proposed implementation does not precisely adhere to a
specification; much more useful information is obtained by knowing
\emph{how well} it implements the specification, or \emph{how far} it
is deviating.  Of course, the answer to this ``how far'' question
might be $\infty$, due to discrete errors as in
Figure~\ref{fi:introimpl.1}; but in case it is finite, useful knowledge
may be gained, for example as to how much more implementation effort
is needed, or whether one can satisfy oneself with this slightly
imperfect implementation.  Our approach will hence again be to replace
satisfaction and refinement relations by satisfaction and refinement
\emph{distances}.

\subsection{Related Work}

The distance-based approach to quantitative verification has been
developed the furthest for probabilistic and stochastic systems.
Panangaden and Desharnais \etal have worked with distances for Markov
processes in~\cite{DBLP:journals/tcs/DesharnaisGJP04,
  DBLP:conf/uai/FernsPP05, DBLP:conf/concur/DesharnaisGJP99,
  DBLP:conf/qest/DesharnaisLT08, DBLP:conf/lics/DesharnaisJGP02,
  DBLP:books/daglib/0023536, DBLP:conf/mfcs/LarsenMP12,
  DBLP:conf/tacas/BacciBLM13} and other papers, and van~Breugel and
Worrell \etal have developed distances for probabilistic transition
systems in~\cite{DBLP:journals/tcs/BreugelW05,
  DBLP:conf/concur/BreugelW01, DBLP:journals/tcs/BreugelW06}.
De~Alfaro and Stoelinga \etal have worked on distances between
probabilistic systems and specifications
in~\cite{DBLP:journals/tcs/AlfaroFHMS05, DBLP:conf/icalp/AlfaroHM03,
  DBLP:conf/lics/AlfaroMRS07, DBLP:conf/qest/ChatterjeeAFHMS06,
  DBLP:journals/corr/abs-0809-4326, DBLP:journals/lmcs/AlfaroMRS08,
  DBLP:conf/icalp/AlfaroFS04} and other papers.

For real-time and hybrid systems, some explicit work on distances is
available in~\cite{DBLP:conf/formats/HenzingerMP05,
  DBLP:journals/corr/abs-1011-0688, DBLP:conf/formats/QueselFD11}.
Otherwise, distances have been used in approaches to robust
verification~\cite{DBLP:conf/formats/LarsenLTW11,
  DBLP:conf/concur/BouyerLMST11}, and Girard \etal have developed a
theory of approximate bisimulation for robust
control~\cite{DBLP:conf/hybrid/ZhengG09, DBLP:journals/tac/GirardP07}.

Also general work on distances for quantitative systems where the
precise meaning of the quantities remains unspecified has been done.
Van~Breugel has developed a general theory of behavioral
pseudometrics~\cite{DBLP:journals/tcs/Breugel01,
  DBLP:journals/tcs/BonsangueBR98, journals/anyas/Breugel96,
  DBLP:conf/concur/Breugel05}.  Henzinger \etal have employed
distances in a software engineering context
in~\cite{DBLP:journals/tcs/CernyHR12, DBLP:conf/concur/CernyHR10} and
for abstraction refinement and synthesis
in~\cite{DBLP:conf/popl/CernyHR13, DBLP:conf/emsoft/CernyH11,
  DBLP:conf/cav/CernyCHRS11, DBLP:journals/tcs/CernyCHR14,
  DBLP:conf/qest/ChatterjeeAFHMS06}.

Common to all the above distance-based approaches is that they
introduce distances between systems, or between systems and
specifications, and then employ these for approximate or quantitative
verification.  However, depending on the application context, a
plethora of different distances are being used, motivating the need
for a general theory.  This is a point of view which is also argued
in~\cite{DBLP:conf/popl/CernyHR13, DBLP:conf/qest/ChatterjeeAFHMS06}.

To be more specific, most of the above approaches can be classified
according to the way they measure distances between \emph{executions},
or system traces.  The perhaps easiest such way is the
\emph{point-wise} distance, which measures the greatest individual
distance between corresponding points in the traces.  Theory for this
specific distance has been developed
in~\cite{DBLP:journals/tse/AlfaroFS09,
  DBLP:journals/tcs/AlfaroFHMS05, DBLP:conf/icalp/AlfaroFS04,
  DBLP:conf/concur/BouyerLMST11} and other papers.  Sometimes
\emph{discounting} is applied to diminish the influence of individual
distances far in the future, for example
in~\cite{DBLP:journals/tse/AlfaroFS09, DBLP:journals/tcs/AlfaroFHMS05,
  DBLP:conf/icalp/AlfaroFS04}.

Another distance which has been used is the \emph{accumulating} one,
which sums individual distances along executions.  Two major types
have been considered here: the \emph{discounted} accumulating distance
\eg~in~\cite{DBLP:conf/concur/CernyHR10, DBLP:conf/icalp/AlfaroHM03,
  DBLP:conf/emsoft/AlurT11, DBLP:journals/tcs/AlfaroFHMS05} and the
\emph{limit-average} accumulating distance
\eg~in~\cite{DBLP:conf/concur/CernyHR10, DBLP:conf/emsoft/AlurT11}.
Both are well-known from the theory of discounted and mean-payoff
games~\cite{journals/gameth/EhrenfeuchtM79,
  DBLP:journals/tcs/ZwickP96}.

For real-time systems, a useful distance is the \emph{maximum-lead}
distance of~\cite{DBLP:conf/formats/HenzingerMP05} which measures the
maximum difference between accumulated time delays along traces.  For
hybrid systems, things are more complicated, as distances between
hybrid traces have to take into account both spatial and timing
differences, see for example~\cite{DBLP:conf/formats/QueselFD11,
  DBLP:conf/cdc/Girard10, DBLP:conf/hybrid/ZhengG09,
  DBLP:journals/tac/GirardP07}.

\subsection{A General Theory of Quantitative Verification}

Depending on the application context, many different types of
distances are being employed in quantitative verification.
Consequently, there is a need for a general theory of system distances
which abstracts away from the concrete distances and develops
quantitative verification at a level independent of the distance.  It
is our view that in a theory of quantitative verification, the
quantitative aspects should be treated just as much as input to a
verification problem as the qualitative aspects are.

In this work we develop such a general theory of quantitative
verification.  We assume as input a distance between traces, or
executions, and then employ the theory of games with quantitative
objectives to define distances between quantitative systems.
Different versions of the (quantitative) bisimulation game give rise
to different types of distances, \viz~bisimulation distance,
simulation distance, trace equivalence distance, \etc, enabling us to
construct a quantitative generalization of the
linear-time--branching-time spectrum.

We also extend our general theory of quantitative verification to a
theory of quantitative specifications.  For this we use modal
transition systems, and we develop the quantitative properties of the
usual operators for behavioral specification theories.  All this is
independent of the concrete distance between traces which is utilized.

\section{Contributions}

In the following chapters we present work based on eight papers,
published between 2009 and 2020 by the author of this thesis with
different co-authors, on quantitative verification and quantitative
specification theories.  The first three, Chapters~\ref{ch:wtsjlap}
to~\ref{ch:simdistax}, are each concerned with properties of three
specific system distances: the point-wise distance, the discounted
accumulating distance, and the maximum-lead distance.  The next
Chapter~\ref{ch:qltbt} develops our general theory of quantitative
verification and shows basic properties.
Chapters~\ref{ch:weightedmodal} and~\ref{ch:wm2} then extend this
theory to specification theories, first for the discounted
accumulating distance in Chapter~\ref{ch:weightedmodal} and then for
the general setting in Chapter~\ref{ch:wm2}.  In Chapter~\ref{ch:dmts}
we take a break from the quantitative setting in order to introduce an
extension of modal transition systems and show that the so-obtained
specification theory is closely related to other popular specification
formalisms.  The final Chapter~\ref{ch:dmts2} extends these results to
general quantitative and develops their properties.

Compared to their sources, all chapters have been heavily redacted in
order to correct errors, unify notation, and smoothen the
presentation.  Any remaining errors are the sole responsibility of the
author of this thesis.

\subsection{Geometric Preliminaries}

Before we can give an overview of our contributions, we recall a few
standard notions from geometry and topology which we will use
throughout.  Let $\Realnn\cup\{ \infty\}$ denote the extended
non-negative reals.

A \emph{hemimetric} on a set $X$ is a function
$d: X\times X\to \Realnn\cup\{ \infty\}$ which satisfies $d( x, x)= 0$
and $d( x, y)+ d( y, z)\ge d( x, z)$ (the \emph{triangle inequality})
for all $x, y, z\in X$.  The hemimetric is said to be \emph{symmetric}
if also $d( x, y)= d( y, x)$ for all $x, y\in X$; it is said to be
\emph{separating} if $d( x, y)= 0$ implies $x= y$.

A symmetric hemimetric is generally called a \emph{pseudometric}, and
a hemimetric which is both symmetric and separating is simply a
\emph{metric}.  The tuple $( X, d)$ is called a (hemi/pseudo)metric
\emph{space}.

Note that our hemimetrics are \emph{extended} in that they can take the
value $\infty$.  This is convenient for several reasons,
\cf~\cite{journals/rsmfm/Lawvere73}, one of them being that it allows
for a disjoint union, or coproduct, of hemimetric spaces: the disjoint
union of $( X_1, d_1)$ and $( X_2, d_2)$ is the hemimetric space $( X_1,
d_1)\cupplus( X_1, d_2)=( X_1\cupplus X_2, d)$ where points from
different components are infinitely far away from each other, \ie~with
$d$ defined by
\begin{equation*}
  d( x, y)=
  \begin{cases}
    d_1( x, y) &\text{if } x, y\in X_1, \\
    d_2( x, y) &\text{if } x, y\in X_2, \\
    \infty &\text{otherwise}.
  \end{cases}
\end{equation*}
The \emph{product} of two hemimetric spaces $( X_1, d_1)$ and $( X_2,
d_2)$ is the hemimetric space $( X_1, d_1)\times( X_2, d_2)=( X_1\times
X_2, d)$ with $d$ given by $d(( x_1, x_2),( y_1, y_2))= \max( d_1( x_1,
y_1), d_2( x_2, y_2))$.

The \emph{symmetrization} of a hemimetric $d$ on $X$ is the symmetric
hemimetric $\bar d: X\times X\to \Realnn\cup\{ \infty\}$ defined by
$\bar d( x, y)= \max( d( x, y), d( y, x))$; this is the smallest among
all pseudometrics $d'$ on $X$ for which $d\le d'$.  The
\emph{topology} generated by a hemimetric $d$ on $X$ is defined to be
the same as the one generated by its symmetrization $\bar d$; it has
as open sets all unions of open balls
$B( x; r)=\{ y\in X\mid \bar d( x, y)< r\}$, for $x\in X$ and $r> 0$.

A continuous function $f: X \to X$ on a pseudometric space $( X, d)$
is called a \emph{contraction} if there exists $0\le \alpha< 1$ (its
\emph{Lipschitz constant}) such that $d(f(x),f(y)) \le \alpha d(x,y)$
for all $x,y\in X$.

Two pseudometrics $d_1$, $d_2$ on $X$ are said to be
\begin{itemize}
\item \emph{topologically equivalent} provided that for all $x\in X$
  and all $\epsilon\in \Realp$, there exists $\delta\in \Realp$ such
  that $d_1( x, y)< \delta$ implies $d_2( x, y)< \epsilon$ and
  $d_2( x, y)< \delta$ implies $d_1( x, y)< \epsilon$ for all
  $y\in X$,
\item \emph{Lipschitz equivalent} if there exist $m, M\in \Realp$ such
  that $m d_1( x, y)\le d_2( x, y)\le M d_1( x, y)$ for all
  $x, y\in X$.
\end{itemize}
Hemimetrics are topologically or Lipschitz equivalent if their
symmetrizations are.

Topological equivalence is the same as asking the identity function
$\id:( X, d_1)\to( X, d_2)$ to be a homeomorphism, and Lipschitz
equivalence implies topological equivalence.

Topological equivalence of $d_1$ and $d_2$ is also the same as
requiring the topologies generated by $d_1$ and $d_2$ to coincide.
Topological equivalence hence preserves topological notions such as
convergence of sequences: If a sequence $( x_j)$ of points in $X$
converges in one pseudometric, then it also converges in the other.
As a consequence, topological equivalence of hemimetrics $d_1$ and
$d_2$ implies that for all $x, y\in X$, $d_1( x, y)= 0$ if, and only
if, $d_2( x, y)= 0$.

Topological equivalence is the weakest of the common notions of
equivalence for metrics; it does not preserve geometric properties
such as distances or angles.  We are hence mainly interested in
topological equivalence as a tool for showing \emph{negative}
properties; we will later prove a number of results on topological
\emph{inequivalence} of hemimetrics which imply that any other
reasonable metric equivalence, such as Lipschitz equivalence, also
fails for these cases.

The \emph{Hausdorff hemimetric} associated with a hemimetric $d: X\times
X\to \Realnn\cup\{ \infty\}$ is the function $d^\textup{H}: 2^X\times
2^X\to \Realnn\cup\{ \infty\}$ given for subsets $A, B\subseteq X$ by
\begin{equation*}
  d^\textup{H}( A, B)= \adjustlimits \sup_{ x\in A} \inf_{ y\in B} d( x, y).
\end{equation*}
This is a well-known construction for metric spaces,
\cf~\cite{book/Munkres00, book/AliprantisB07}; there it is usually
symmetrized and defined only for \emph{closed} subsets, in which case
it is a metric.  The following alternative formulation follows
straight from the definition:

\begin{proposition}
  \label{wtsjlap.prop:haus}
  For a hemimetric $d$ on $X$, $A, B\subseteq X$, and $\epsilon\in
  \Realp$, we have $d( A, B)\le \epsilon$ if and only if for any $x\in
  A$ there exists $y\in B$ for which \mbox{$d( x, y)\le \epsilon$}.
\end{proposition}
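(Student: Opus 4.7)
The plan is simply to unfold the definition $d^\textup{H}(A,B) = \sup_{x \in A} \inf_{y \in B} d(x,y)$ and verify each direction of the biconditional directly against this formula. No additional topological or order-theoretic machinery is needed.

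For the easy direction $(\Leftarrow)$, I would assume that every $x \in A$ admits some $y \in B$ with $d(x,y) \le \epsilon$. Then for each such $x$ the inner quantity $\inf_{y \in B} d(x,y)$ is bounded above by $\epsilon$, and passing to the supremum over $x \in A$ preserves the inequality, giving $d^\textup{H}(A,B) \le \epsilon$ at once.

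For the direction $(\Rightarrow)$, I would suppose $d^\textup{H}(A,B) \le \epsilon$ and fix an arbitrary $x \in A$. The supremum bound forces $\inf_{y \in B} d(x,y) \le \epsilon$, and by the definition of infimum this yields a witness $y \in B$ with $d(x,y) \le \epsilon$, as required.

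The only point that asks for a moment's care, and hence the main (and minor) obstacle, is the passage from ``$\inf_{y\in B} d(x,y) \le \epsilon$'' to the \emph{existence} of a witness $y \in B$ achieving $d(x,y) \le \epsilon$: strictly speaking the infimum need not be attained, and one formally obtains only witnesses $y_\delta \in B$ with $d(x,y_\delta) \le \epsilon + \delta$ for every $\delta > 0$. Read in this inf-consistent sense — which is the way the proposition will be used below — the statement is an immediate reformulation of the defining sup–inf expression for $d^\textup{H}$.
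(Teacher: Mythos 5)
Your proposal is correct and is essentially the paper's own argument: the paper offers no proof beyond remarking that the proposition ``follows straight from the definition,'' and unfolding the sup--inf expression as you do is all that is intended. Your caveat about the infimum not being attained is well taken --- the forward implication as literally stated can fail for non-closed $B$ (e.g.\ $A=\{ 0\}$, $B= \mathopen] 1, 2\mathclose[$, $\epsilon= 1$ in $\Real$ with the usual metric), so the statement must indeed be read with the $\epsilon+ \delta$ slack you describe, which is how it is used in the sequel.
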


A sequence $(x_j)$ in a metric space $X$ is a \emph{Cauchy sequence}
if it holds that for all $\epsilon > 0$ there exist $N\in\Nat$ such
that $d(x_m,x_n) < \epsilon$ for all $n,m\geq N$. $X$ is said to be
\emph{complete} if every Cauchy sequence in $X$ converges in $X$.

Finally, we recall the Banach fixed-point theorem: Any contraction on
a complete metric space has precisely one fixed point.

\subsection{Chapter~\ref{ch:wtsjlap}, ``Quantitative Analysis of
  Weighted Transition Systems''}

In Chapter~\ref{ch:wtsjlap} we introduce the point-wise, accumulating
and maximum-lead trace distances, all in a discounted version which
allows to diminish the influence of future differences.  In a notation
which is simpler than the one used in Chapter~\ref{ch:wtsjlap} and
follows the one of later chapters, these are given as follows.  Let
$\KK$ be a set of symbols together with an extended metric
$d_\KK: \KK\times \KK\to \Realnn\cup\{ \infty\}$.  A \emph{trace}
$\sigma=( \sigma_0, \sigma_1,\dotsc)$ is an infinite sequence of
symbols in $\KK$.  Let $\lambda\in \Realnn$ with $\lambda\le 1$ be a
\emph{discounting factor} and $\sigma$ and $\tau$ traces.
\begin{itemize}
\item The \emph{point-wise trace distance} between $\sigma$ and $\tau$ is
  \begin{equation*}
    \pwtd{ \sigma, \tau}= \sup_{ i\ge 0} \lambda^i d_\KK( \sigma_i, \tau_i)\,.
  \end{equation*}
\item The \emph{accumulating trace distance} between $\sigma$ and $\tau$ is
  \begin{equation*}
    \awtd{ \sigma, \tau}= \sum_{ i\ge 0} \lambda^i d_\KK( \sigma_i, \tau_i)\,.
  \end{equation*}
\item The \emph{maximum-lead trace distance} between $\sigma$ and $\tau$ is
  \begin{equation*}
    \mltd{ \sigma, \tau}= \sup_{ i\ge 0} \lambda^i\Bigl| \sum_{ 0\le j\le i}(
  \sigma_i- \tau_i)\Bigr|\,.
  \end{equation*}
\end{itemize}
Note that the definition of the last distance requires extra structure
of addition and subtraction on $\KK$; generally this is only used for
$\KK= \Int$ or $\KK= \Real$.

We then use these trace distances to define point-wise, accumulating
and maximum-lead \emph{linear} distances between states in weighted
transition systems.  If $d^\textup{\textsf{T}}$ is any of the above
trace distances, then the linear distance between two states $s$ and
$t$ of a transition system is defined to be
\begin{equation*}
  \distgeneric{ s, t}=
  \adjustlimits \sup_{ \sigma\in \tracesfrom s} \inf_{ \tau\in
    \tracesfrom t} \tdgeneric{ \sigma, \tau}\,,
\end{equation*}
where $\tracesfrom s$ denotes the set of traces emanating from $s$,
similarly for $\tracesfrom t$.  This is thus the Hausdorff distance
from $\tracesfrom s$ to $\tracesfrom t$; note that the definition is
independent of which particular trace distance is used.

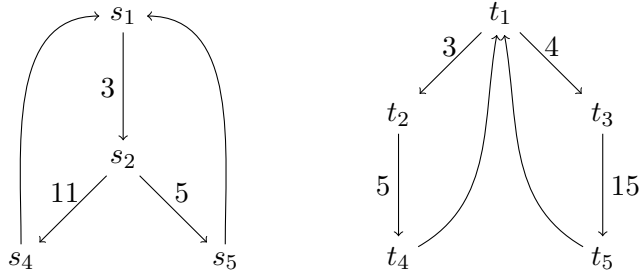
\begin{figure}[tbp]
  \centering
  \begin{tikzpicture}[->, shorten >=1pt, auto, initial text=]
    \begin{scope}[outer sep=1pt,minimum size=5pt,inner sep=2pt, node
      distance=1.9cm]
      \node (v0) {$s_1$}; 
      \node (v1) [below of=v0] {$s_2$};
      \node (v2) [below right of=v1] {$s_5$}; 
      \node (v3) [below left of=v1] {$s_4$};
      \draw (v0) -- node[left]  {$3$} (v1);
      \draw (v1) -- node[above right, xshift=-.1cm] {$5$} (v2);
      \draw (v1) -- node[above left, xshift=.2cm]  {$11$} (v3);
      \draw (v3) to [out=90, in=180] (v0);
      \draw (v2) to [in=0, out=90] (v0);
    \end{scope}
  \end{tikzpicture}     
  \qquad\qquad%
  \begin{tikzpicture}[->, shorten >=1pt, auto, node distance=.75cm,
    initial text=]
    \begin{scope}[outer sep=1pt,minimum
      size=5pt,inner sep=2pt, node distance=1.9cm] 
      \node (u0) {$t_1$};        
      \node (u1) [below left of=u0]{$t_2$};
      \node (u2) [below right of=u0]{$t_3$};
      \node (u3) [below of=u1] {$t_4$};    
      \node (u4) [below of=u2] {$t_5$};    
      \draw (u0) -- node[above]  {$3$} (u1);
      \draw (u0) -- node[above] {$4$} (u2);
      \draw (u1) -- node[left] {$5$} (u3);
      \draw (u2) -- node[right] {$15$} (u4);
      \draw (u3) to[out=30, in=260] (u0);
      \draw (u4) to[out=150, in=280] (u0);
    \end{scope}
  \end{tikzpicture}
  \caption{%
    \label{wtsjlap.fig:example_sim}
    Two example transition systems}
\end{figure}

\begin{example}
  \label{wtsjlap.ex:distances}
  We show a computation of the different distances between the states
  $s_1$ and $t_1$ in the transition system in
  Figure~\ref{wtsjlap.fig:example_sim}.  Edges without specified weight
  have weight $0$, and the discounting factor is $\lambda= .90$.

  It is easy to see that supremum trace distance is obtained for the
  path from $s_1$ which always turns left at $s_2$, \ie takes the
  transition $s_2\tto{11} s_4$, and then for the point-wise and
  accumulating trace distances, that the matching trace from $t_1$
  giving infimum trace distance in turn is obtained for the path which
  always takes the transition $t_1\tto 4 t_3$.  Hence we can compute
  \begin{align*}
    \apwdist{ s_1, t_1} &= \sup_i\big\{ \max( 1, 4\cdot .90)\cdot
    .90^{3i}\big\}= 3.60\,, \\
    \aacdist{ s_1, t_1} &= \sum_i ( 1+ 4\cdot .90)\cdot
    .90^{3i}\approx 17.0\,.
  \end{align*}

  For maximum-lead trace distance the situation is more involved.  It
  can be shown that for this distance, an infimum trace $\tau$ from
  $t_1$ follows the path which takes $t_1\tto{4}t_3$, followed by $t_1
  \tto{3} t_2$ three times, and then repeats $t_1\tto{4}t_3$
  indefinitely.  Using this trace, we obtain
  \begin{align*}
    \apmdist{s_1,t_1}= 13\cdot .90^{ 10}\approx 4.53\,.
  \end{align*}
  \qedhere
\end{example}

The definition of branching distance is \emph{not} independent of
which trace distance is being used.  We only give the definitions for
the first two of our example trace distances.  They are defined as
least fixed points to the following equations:
\begin{align*}
  \pwbd{ s, t} &= \adjustlimits \sup_{ s\tto x s'} \inf_{ t\tto y t'}
  \max\big( d_\KK( x, y), \lambda\, \pwbd{ s', t'}\big) \\
  \awbd{ s, t} &= \adjustlimits \sup_{ s\tto x s'} \inf_{ t\tto y t'}
  d_\KK( x, y)+ \lambda\, \pwbd{ s', t'} \\
\end{align*}

\begin{example}
  Continuing the previous example, repeated application of the
  definition yields the following fixed-point equation for
  $\pwbd{ s_1, t_1}$ (note that there is only one transition from
  $s_1$, $t_2$ and $t_3$, respectively):
  \begin{align*}
    \pwbd{ s_1, t_1} &= \inf \left\{
      \begin{aligned}
        & \max\big(| 3- 3|, .90\, \pwbd{ s_2, t_2}\big) \\
        & \max\big(| 3- 4|, .90\, \pwbd{ s_2, t_3}\big)
      \end{aligned}
    \right. \\
    &= \inf \left\{
      \begin{aligned}
        & \max\big( 0, .90\,| 11- 5|, .90^2 \pwbd{ s_4, t_4}, .90\,| 5-
        5|, .90^2 \pwbd{ s_5, t_4}\big) \\
        & \max\big( 1, .90\,| 11- 15|, .90^2 \pwbd{ s_4, t_5}, .90\,|
        5- 15|, .90^2 \pwbd{ s_5, t_5}\big)
      \end{aligned}
    \right. \\
    &= \inf \left\{
      \begin{aligned}
        & \max\big( 5.4, .90^3 \pwbd{ s_1, t_1}\big) \\
        & \max\big( 9, .90^3 \pwbd{ s_1, t_1}\big)
      \end{aligned}
    \right.
  \end{align*}
  which has least fixed point $\pwbd{ s_1, t_1}= 5.4$.  For the
  accumulating distance, we calculate:
  \begin{align*}
    \awbd{ s_1, t_1} &= \inf \left\{
      \begin{aligned}
        & | 3- 3|+ .90\, \awbd{ s_2, t_2} \\
        & | 3- 4|+ .90\, \awbd{ s_2, t_3}
      \end{aligned}
    \right. \\
    &= \inf \left\{
      \begin{aligned}
        & .90 \sup \left\{
          \begin{aligned}
            & | 11- 5|+ .90\, \awbd{ s_4, t_4} \\
            & | 5- 5|+ .90\, \awbd{ s_5, t_4}
          \end{aligned}
        \right. \\
        & 1+ .90 \sup \left\{
          \begin{aligned}
            & | 11- 15|+ .90\, \awbd{ s_4, t_5} \\
            & | 5- 15|+ .90\, \awbd{ s_5, t_5}
          \end{aligned}
        \right.
      \end{aligned}
    \right. \\
    &= \inf \left\{
      \begin{aligned}
        & .90 \sup \left\{
          \begin{aligned}
            & 6+ .90^2 \awbd{ s_1, t_1} \\
            & .90^2 \awbd{ s_1, t_1}
          \end{aligned}
        \right. \\
        & 1+ .90 \sup \left\{
          \begin{aligned}
            & 4+ .90^2 \awbd{ s_1, t_1} \\
            & 10+ .90^2 \awbd{ s_1, t_1}
          \end{aligned}
        \right.
      \end{aligned}
    \right. \\
    &= \inf\big( .90\big( 6+ .90^2 \awbd{ s_1, t_1}\big), 1+
    .90\big( 10+ .90^2 \awbd{ s_1, t_1}\big)\big) \\
    &= 5.4+ .90^3 \awbd{ s_1, t_1}
  \end{align*}
  Hence $\awbd{ s_1, t_1}\approx 19.9$. \qedhere
\end{example}
  
Linear distances generalize \emph{trace inclusion} for transition
systems, whereas branching distances generalize \emph{simulation}.  We
show that the linear distance between two states is always bounded
above by the corresponding branching distance
(Theorem~\ref{th:wtsjlap.linvsbra}), a generalization of the fact that
simulation implies trace inclusion.

We also show that the point-wise linear and the point-wise branching
distances are \emph{topologically inequivalent}, that is, one may be
zero while the other is infinite.  This is a quantitative
generalization of the fact that trace inclusion and simulation are not
equivalent.  We show the same topological inequivalence for the
accumulating and maximum-lead distances
(Theorem~\ref{th:wtsjlap.topineq}).

When discounting is applied, then the point-wise, accumulating and
max\-imum-lead linear distances are \emph{Lipschitz equivalent}
(Theorem~\ref{wtsjlap.th:trace-dist-eq-disc<1}); similarly, the three
branching distances are Lipschitz equivalent
(Theorem~\ref{th:wtsjlap.topeqbra}).  Without discounting, the distances
are topologically inequivalent.  Lipschitz equivalence means that one
distance is bounded by the other, multiplied by a scaling factor;
hence properties of one distance may be transferred to the other.

Chapter~\ref{ch:wtsjlap} is based on work by the author's PhD~student
Claus Thrane, Kim G.~Larsen, and the author, which has been presented
at the 20th Nordic Workshop on Programming Theory (NWPT)
\cite{UF-Thrane-08-WTS} and subsequently published in the Journal of
Logic and Algebraic Programming (now the Journal of Logical and
Algebraic Methods in Programming) in
2010~\cite{DBLP:journals/jlp/ThraneFL10}.

\subsection{Chapter~\ref{ch:wtscai}, ``A Quantitative Characterization
  of Weighted Kripke Structures in Temporal Logic''}

In Chapter~\ref{ch:wtscai} we consider the discounted point-wise and
accumulating distances and introduce corresponding semantics for
\emph{weighted CTL}.  In these semantics, the evaluation of a formula
in a state is not a Boolean \texttt{true} or \texttt{false}, but
instead a non-negative real number (or infinity) which, intuitively,
characterizes how well the state satisfies the formula.  Our syntax
for WCTL extends the one of CTL~\cite{DBLP:conf/lop/ClarkeE81} as
follows:
\begin{align*}    
  \Phi &::= p \mid \lnot p \mid \Phi_1 \land \Phi_2 \mid
         \Phi_1 \lor \Phi_2 \mid \sfE \Psi \mid \sfA \Psi\\
  \Psi &::= \sfX_c \Phi\mid \sfG_c \Phi \mid
         \sfF_c \Phi\mid [\Phi_1 \sfU_c \Phi_2]
\end{align*}
Here, as usual, $\Phi$ generates state formulae whereas $\Psi$
generates path formulae, $p$ is an atomic proposition, and
$c\in \Realnn$ is any non-negative real number.

Semantically, formulae are interpreted in states of a Kripke structure
with labels in $\KK$, and the result of such an interpretation is a
non-negative real number.  First, the semantics of state formulae is
given as follows:
\begin{gather*}
  \lsem{p}(s) =
  \begin{cases}
    0 &\text{if }p \in L(s) \\
    \infty & \text{otherwise}
  \end{cases}
  \qquad%
  \lsem{\lnot p}(s) =
  \begin{cases}
    0 &\text{if }p \in \Props \setminus L(s) \\
    \infty & \text{otherwise}
  \end{cases}
  \\
  \lsem{\varphi_1 \lor \varphi_2}(s) = \inf\big\{\lsem{\varphi_1}(s),
  \lsem{\varphi_2}(s)\big\} \qquad%
  \lsem{\varphi_1 \land \varphi_2}(s) = \sup\big\{\lsem{\varphi_1}(s),
  \lsem{\varphi_2}(s)\big\} \\[1ex]
  \lsem{\sfE \psi}(s) = \inf\big\{ \lsem{\psi}(\sigma) \mid {\sigma
    \in \tracesfrom{s}}\big\} \qquad%
  \lsem{\sfA \psi}(s) = \sup\big\{ \lsem{\psi}(\sigma) \mid {\sigma
    \in \tracesfrom{s}}\big\}
\end{gather*}
In the last two formulae, $\lsem{\psi}(\sigma)$ is the semantics of
the trace $\sigma$ with respect to $\psi$, which depends on whether
the point-wise or the accumulating distance is used.  For example, the
point-wise path semantics is given as follows:
\begin{align*}
    \lsem{\varphi}(\sigma) &{}={} \lsem{\varphi}(\sigma_0)\\
    \lsem{\sfX_c \varphi}(\sigma) &{}={} \max\big( d_\KK( \sigma_0, c),
    \lambda \lsem{\varphi}(\sigma^1)\Big\} \\
    \lsem{\sfF_c \varphi}(\sigma) &{}={}
    \inf_{ k\ge 0}\Big( \max\big( \max_{0\le j < k} \big( \lambda^j d_\KK(
    \sigma_j, c)\big),
    \lambda^k\lsem{\varphi}(\sigma^k)\big)\Big) \\
    \lsem{\sfG_c \varphi}(\sigma) &{}={}
    \sup_{ k\ge 0}\Big( \max\big( \max_{0\le j < k} \big( \lambda^j d_\KK(
    \sigma_j, c)\big),
    \lambda^k\lsem{\varphi}(\sigma^k)\big)\Big) \\
    \lsem{\varphi_1 \sfU_c \varphi_2}(\sigma) &{}={} 
    \inf_{ k\ge0} \Big( \max\big( \max_{0\le j < k} \big( \lambda^j d_\KK(
    \lsem{\varphi_1}(\sigma^j), c) \big),
    \lambda^k\lsem{\varphi_2}(\sigma^k)\big)\Big)
\end{align*}
Here $\sigma^k=( \sigma_k, \sigma_{k+1},\dotsc)$ denotes the $k$-shift
of the trace $\sigma=( \sigma_0, \sigma_1,\dotsc)$.

We then show in Theorems~\ref{th:wtscai.adeq-acc}
and~\ref{th:wtscai.adeq-pw} that with these semantics, WCTL is
\emph{adequate} for the corresponding bisimulation distances.  That
is, the bisimulation distance between two states is precisely the
supremum, over all WCTL formulae, of the absolute value of the
difference of the formula's evaluation in these two states.

We also show, in Theorems~\ref{wtscai.th:express}
and~\ref{wtscai.th:express_pt}, that with the corresponding semantics,
WCTL is \emph{expressive} for the discounted point-wise and
accumulating distances.  This means that given a
state in a Kripke structure, there exists a WCTL formula which
characterizes the state in the sense that the bisimulation distance to
any other state is precisely the evaluation of the formula in that
state.

Our notions of adequacy and expressiveness are standard quantitative
generalizations of Hennessy and Milner's definitions
from~\cite{DBLP:journals/jacm/HennessyM85}.

Chapter~\ref{ch:wtscai} is based on work by the author's PhD~student
Claus Thrane, Kim G.~Larsen, and the author, which has been presented
at the 5th Conference on Mathematical and Engineering Methods in
Computer Science \linebreak
(MEMICS; best paper award) \cite{FLT10-Kripke-MEMICS} and subsequently
published in the Journal of Computing and
Informatics~\cite{journals/cai/FahrenbergLT10}.

\subsection{Chapter~\ref{ch:simdistax}, ``Metrics for Weighted
  Transition Systems: Axiomatization''}

In Chapter~\ref{ch:simdistax} we develop sound and complete
axiomatizations of the point-wise and the discounted accumulating
distances for finite and for regular weighted processes.  In this
context, a finite weighted process is given using the grammar
\begin{equation*}
  E::= \nil\mid n.E\mid E+ E \mid \qquad n\in \KK\,,
\end{equation*}
where $\KK$ is a finite set of weights, with a metric $d_\KK$, and
$\nil$ denotes the empty process.  A regular weighted process is
given using the grammar
\begin{equation*}
  E::= \U \mid X\mid n.E\mid E+ E\mid \mu X. E \qquad n\in \KK,
  X\in V\,,
\end{equation*}
where $V$ is a set of variables, $\U$ is the universal process, and
$\mu X. E$ denotes a minimal fixed point.

\newcommand{\rulename}[1]{\LeftLabel{\small (#1)}}
\newcommand{\condition}[1]{\RightLabel{\small \quad #1}}
\begin{figure}[tb]
  \centering
  \AxiomC{}
  \rulename{A1}
  \condition{$0\bowtie r$}
  \UnaryInfC{$\vdash \dist{\nil,E}\bowtie r$}
  \DisplayProof\qquad
  \AxiomC{}
  \rulename{A2}
  \condition{$\infty\bowtie r$}
  \UnaryInfC{$\vdash \dist{n.E,\nil}\bowtie r$}
  \DisplayProof
  \begin{prooftree} 
    \AxiomC{$\vdash \dist{E,F} \bowtie r_1$} 
    \condition{$\max(d_\KK( n, m), \lambda r_1)\bowtie r$}
    \rulename{R1$^\bullet$}
    \UnaryInfC{$\vdash \dist{n.E,m.F} \bowtie r$} 
  \end{prooftree}
  \begin{prooftree} 
    \AxiomC{$\vdash \dist{E,F} \bowtie r_1$} 
    \condition{$d_\KK( n, m) +\lambda r_1 \bowtie r$}
    \rulename{R1$^+$}
    \UnaryInfC{$\vdash \dist{n.E,m.F} \bowtie r$} 
  \end{prooftree}
  \begin{prooftree} 
    \AxiomC{$\vdash \dist{E_1, F}\bowtie r_1$} 
    \AxiomC{$\vdash \dist{E_2, F}\bowtie r_2$} 
    \condition{$\max(r_1, r_2)\bowtie r$}
    \rulename{R2}
    \BinaryInfC{$\vdash \dist{E_1+ E_2, F}\bowtie r$} 
  \end{prooftree}
  \begin{prooftree} 
    \AxiomC{$\vdash \dist{n.E, F_1} \bowtie r_1$} 
    \AxiomC{$\vdash \dist{n.E, F_2} \bowtie r_2$} 
    \condition{$\min(r_1,r_2)\bowtie r$}
    \rulename{R3}\label{intro:simdistax.rule:R3}
    \BinaryInfC{$\vdash \dist{n.E, F_1+ F_2} \bowtie r$} 
  \end{prooftree}
  \caption{Axiomatization of point-wise and discounted accumulating
    distance for finite weighted processes}
  \label{fig:intro.F}
\end{figure}

We then give axiomatizations of the point-wise and the discounted
accumulating distances for finite weighted processes, as shown in
Figure~\ref{fig:intro.F}.  These differ only in one proof rule: for the
point-wise distance, rule (R1$^\bullet$) applies, for the discounted
accumulating distance, rule (R1$^+$).  We show the axiomatizations to
be sound and complete in Theorems~\ref{th:simdistax.sound-acc},
\ref{simdistax.th:finite-complete}
and~\ref{th:simdistax.sound-comp-pw}.

For regular weighted processes, we develop similar axiomatizations,
which we then show to be sound and \emph{$\epsilon$-complete} in
Theorems~\ref{simdistax.th:regular-sound}, \ref{simdistax.th:rec_comp}
and~\ref{th:simdistax.sound-comp-pw.reg}.  Here,
$\epsilon$-completeness means that a distance of $d$ can be proven
within an interval $[ d- \epsilon, d+ \epsilon]$, for any positive
real $\epsilon$.

Chapter~\ref{ch:simdistax} is based on work by the author's
PhD~student Claus Thrane, Kim G.~Larsen, and the author, which has
been published in Theoretical Computer
Science~\cite{DBLP:journals/tcs/LarsenFT11}.

\subsection{Chapter~\ref{ch:qltbt}, ``The Quantitative
  Linear-\!Time--Branching-\!Time Spectrum''}

Chapter~\ref{ch:qltbt} presents a generalization of the work in
Chapter~\ref{ch:wtsjlap} along several directions.  Instead of
developing theory separately for different trace distances, we treat
the trace distance as an input and develop a general theory of linear
and branching distances pertaining to a given, but unspecified, trace
distance.

Let again $\KK$ be a set of symbols, and denote by
$\KK^\infty= \KK^*\cup \KK^\omega$ the set of finite and infinite
traces in $\KK$.  A \emph{trace distance} is, then, a function
$d: \KK^\infty\times \KK^\infty\to \Realnn\cup\{ \infty\}$ which
satisfies $d( \sigma, \sigma)= 0$ and
$d( \sigma, \tau)+ d( \tau, \chi)\ge d( \sigma, \chi)$ for all
$\sigma, \tau, \chi\in \KK^\infty$, and additionally, $d( \sigma,
\tau)= \infty$ if $\sigma$ and $\tau$ have different length.

Given such a general trace distance $d$, we can define the linear
distance between two states $s$ and $t$ of a transition system by
\begin{equation*}
  \ntrace 1 d( s, t)= \adjustlimits \sup_{ \sigma\in \tracesfrom s}
  \inf_{ \tau\in \tracesfrom t} d( \sigma, \tau)
\end{equation*}
as before.  As this generalizes the standard trace inclusion preorder,
we now call this the ($1$-nested) \emph{trace inclusion distance}.

Using a \emph{quantitative Ehrenfeucht-Fra{\"\i}ss{\'e} game}, we can
then define a corresponding ($1$-nested) \emph{simulation distance}
$\nsim 1 d$ and show that $\ntrace 1 d( s, t)\le \nsim 1 d( s, t)$ for
all states $s$, $t$.  Similarly, we can define the ($1$-nested)
\emph{trace equivalence distance} between $s$ and $t$ by
\begin{equation*}
  \ntraceeq 1 d( s, t)= \max\Big( \adjustlimits \sup_{ \sigma\in
    \tracesfrom s} \inf_{ \tau\in \tracesfrom t} d( \sigma, \tau),
  \adjustlimits \sup_{ \tau\in \tracesfrom t} \inf_{ \sigma\in 
    \tracesfrom s} d( \sigma, \tau)\Big)
\end{equation*}
and use a different quantitative Ehrenfeucht-Fra{\"\i}ss{\'e} game to
define the \emph{bisimulation distance} $\bisim d$, with the property
that $\ntraceeq 1 d( s, t)\le \bisim d( s, t)$ for all $s$, $t$.

In Chapter~\ref{ch:qltbt}, we generalize these considerations to
define linear and branching distances for most of the preorders and
equivalences in van~Glabbeek's linear-time--branching-time
spectrum~\cite{inbook/hpa/Glabbeek01}.  Hence we can define nested simulation
distances, ready simulation distances, possible-futures distances,
readiness distances, and others, all parameterized by the
given-but-unspecified trace distance.  The resulting
\emph{quantitative linear-time--branching-time spectrum} is depicted
in Figure~\ref{intro.fi:spectrum}.

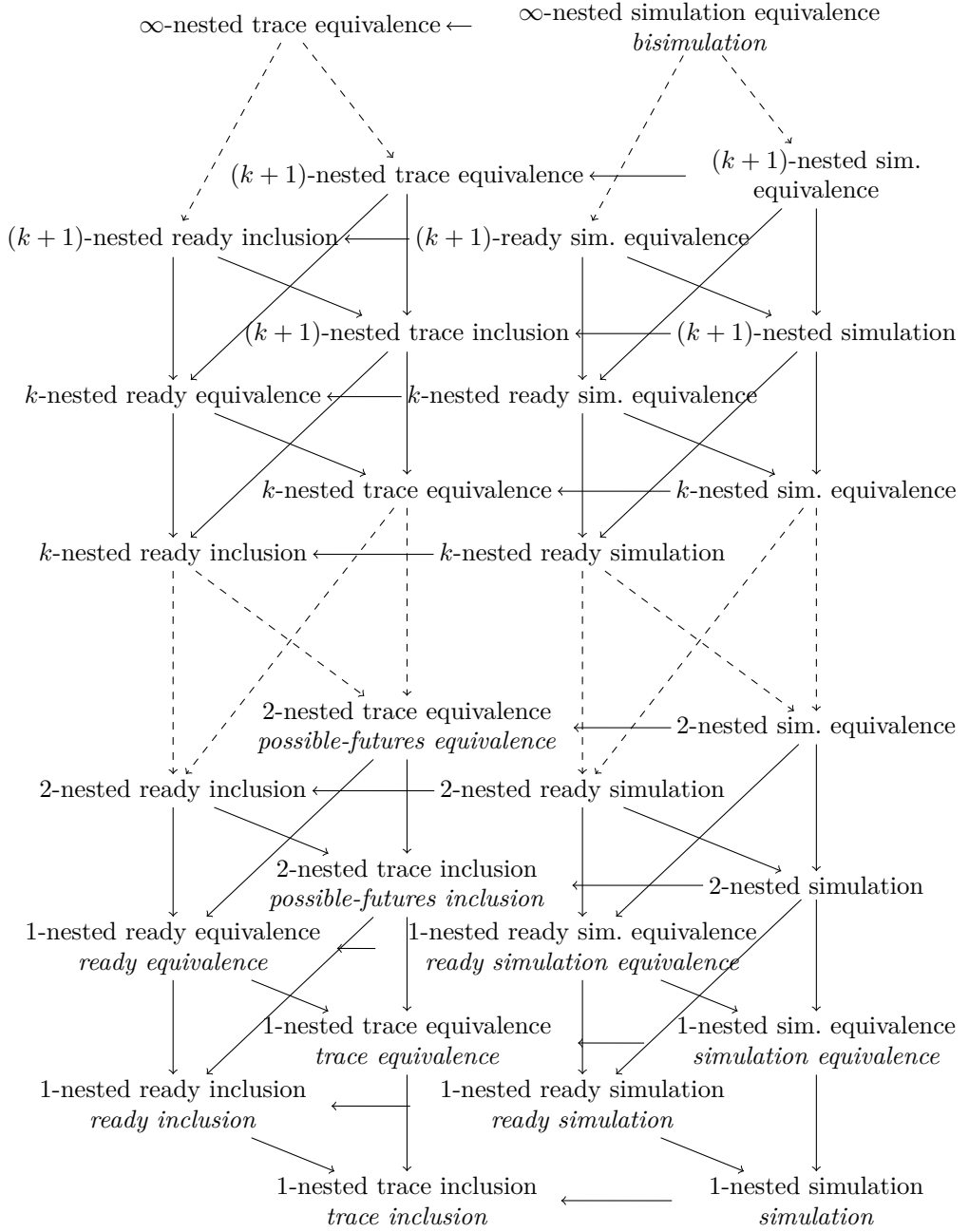
\begin{figure}[tbp]
  \centering
  \begin{tikzpicture}[->,xscale=.81,yscale=1.1]
    \tikzstyle{every node}=[font=\small,text badly centered, inner sep=2pt]
    \node (traceeq) at (0,.3) {$\infty$-nested trace equivalence};
    \node (k+1-r-trace) at (-2,-2.4) {$( k+ 1)$-nested ready
      inclusion};
    \node (k+1-traceeq) at (2,-1.6) {$( k+ 1)$-nested trace equivalence};
    \node (k-r-traceeq) at (-2,-4.4) {$k$-nested ready equivalence};
    \node (k+1-trace) at (2,-3.6) {$( k+ 1)$-nested trace inclusion};
    \node (k-r-trace) at (-2,-6.4) {$k$-nested ready inclusion};
    \node (k-traceeq) at (2,-5.6) {$k$-nested trace equivalence};
    \node (2-r-trace) at (-2,-9.4) {$2$-nested ready inclusion};
    \node [text width=11.6em] (2-traceeq) at (2,-8.6) {$2$-nested trace
      equivalence \\ \emph{possible-futures equivalence}};
    \node [text width=11.4em] (1-r-traceeq) at (-2,-11.4) {$1$-nested ready
      equivalence \\ \emph{ready equivalence}};
    \node [text width=11.5em] (2-trace) at (2,-10.6) {$2$-nested trace
      inclusion \\ \emph{possible-futures inclusion}};
    \node [text width=11em] (1-r-trace) at (-2,-13.4) {$1$-nested ready
      inclusion \\ \emph{ready inclusion}};
    \node [text width=11.9em] (1-traceeq) at (2,-12.6) {$1$-nested trace
      equivalence \\ \emph{trace equivalence}};
    \node [text width=10.7em] (1-trace) at (2,-14.6) {$1$-nested trace
      inclusion \\ \emph{trace inclusion}};
    \node [text width=16em] (bisim) at (7,.3) {$\infty$-nested
      simulation equivalence \\ \emph{bisimulation}};
    \node (k+1-r-sim) at (5,-2.4) {$( k+ 1)$-ready sim.~equivalence};
    \node [text width=9em] (k+1-simeq) at (9,-1.6) {$( k+ 1)$-nested
      sim. equivalence};
    \node (k-r-simeq) at (5,-4.4) {$k$-nested ready sim.~equivalence};
    \node (k+1-sim) at (9,-3.6) {$( k+ 1)$-nested simulation};
    \node (k-r-sim) at (5,-6.4) {$k$-nested ready simulation};
    \node (k-simeq) at (9,-5.6) {$k$-nested sim.~equivalence};
    \node (2-r-sim) at (5,-9.4) {$2$-nested ready simulation};
    \node (2-simeq) at (9,-8.6) {$2$-nested sim.~equivalence};
    \node [text width=14.5em] (1-r-simeq) at (5,-11.4) {$1$-nested ready
      sim.~equivalence \\ \emph{ready simulation equivalence}};
    \node (2-sim) at (9,-10.6) {$2$-nested simulation};
    \node [text width=12em] (1-r-sim) at (5,-13.4) {$1$-nested ready
      simulation \\ \emph{ready simulation}};
    \node [text width=12em] (1-simeq) at (9,-12.6) {$1$-nested
      sim.~equivalence \\ \emph{simulation equivalence}};
    \node [text width=10em] (1-sim) at (9,-14.6) {$1$-nested
      simulation \\ \emph{simulation}};
    %
    \path (bisim) edge (traceeq);
    \path (k+1-r-sim) edge (k+1-r-trace);
    \path (k+1-simeq) edge (k+1-traceeq);
    \path (k-r-simeq) edge (k-r-traceeq);
    \path (k+1-sim) edge (k+1-trace);
    \path (k-r-sim) edge (k-r-trace);
    \path (k-simeq) edge (k-traceeq);
    \path (2-simeq) edge (2-traceeq);
    \path (2-r-sim) edge (2-r-trace);
    \path (2-sim) edge (2-trace);
    \path (1-r-simeq) edge (1-r-traceeq);
    \path (1-simeq) edge (1-traceeq);
    \path (1-r-sim) edge (1-r-trace);
    \path (1-sim) edge (1-trace);
    \path [dashed] (traceeq) edge (k+1-r-trace);
    \path [dashed] (traceeq) edge (k+1-traceeq);
    \path (k+1-r-trace) edge (k-r-traceeq);
    \path (k+1-r-trace) edge (k+1-trace);
    \path (k+1-traceeq) edge (k-r-traceeq);
    \path (k+1-traceeq) edge (k+1-trace);
    \path (k-r-traceeq) edge (k-r-trace);
    \path (k-r-traceeq) edge (k-traceeq);
    \path (k+1-trace) edge (k-r-trace);
    \path (k+1-trace) edge (k-traceeq);
    \path [dashed] (k-r-trace) edge (2-r-trace);
    \path [dashed] (k-r-trace) edge (2-traceeq);
    \path [dashed] (k-traceeq) edge (2-r-trace);
    \path [dashed] (k-traceeq) edge (2-traceeq);
    \path (2-r-trace) edge (1-r-traceeq);
    \path (2-r-trace) edge (2-trace);
    \path (2-traceeq) edge (1-r-traceeq);
    \path (2-traceeq) edge (2-trace);
    \path (1-r-traceeq) edge (1-r-trace);
    \path (1-r-traceeq) edge (1-traceeq);
    \path (2-trace) edge (1-r-trace);
    \path (2-trace) edge (1-traceeq);
    \path (1-r-trace) edge (1-trace);
    \path (1-traceeq) edge (1-trace);
    \path [dashed] (bisim) edge (k+1-r-sim);
    \path [dashed] (bisim) edge (k+1-simeq);
    \path (k+1-r-sim) edge (k-r-simeq);
    \path (k+1-r-sim) edge (k+1-sim);
    \path (k+1-simeq) edge (k-r-simeq);
    \path (k+1-simeq) edge (k+1-sim);
    \path (k-r-simeq) edge (k-r-sim);
    \path (k-r-simeq) edge (k-simeq);
    \path (k+1-sim) edge (k-r-sim);
    \path (k+1-sim) edge (k-simeq);
    \path [dashed] (k-r-sim) edge (2-r-sim);
    \path [dashed] (k-r-sim) edge (2-simeq);
    \path [dashed] (k-simeq) edge (2-r-sim);
    \path [dashed] (k-simeq) edge (2-simeq);
    \path (2-r-sim) edge (1-r-simeq);
    \path (2-r-sim) edge (2-sim);
    \path (2-simeq) edge (1-r-simeq);
    \path (2-simeq) edge (2-sim);
    \path (1-r-simeq) edge (1-r-sim);
    \path (1-r-simeq) edge (1-simeq);
    \path (2-sim) edge (1-r-sim);
    \path (2-sim) edge (1-simeq);
    \path (1-r-sim) edge (1-sim);
    \path (1-simeq) edge (1-sim);
  \end{tikzpicture}
  \caption{%
    \label{intro.fi:spectrum}
    The quantitative linear-time--branching-time spectrum.  The nodes
    are different system distances, and an edge
    $d_1\longrightarrow d_2$ or $d_1\dashrightarrow d_2$ indicates
    that $d_1( s, t)\ge d_2( s, t)$ for all states $s$, $t$, and that
    $d_1$ and $d_2$ in general are topologically inequivalent.}
\end{figure}

We also show that if the trace distance has a recursive
characterization in a lattice $L$ above $\Realnn\cup\{ \infty\}$, then
all distances in the quantitative linear-time--branching-time spectrum
have a \emph{fixed-point characterization} over $L$.  As an example,
if $d: \KK^\infty\times \KK^\infty\to \Realnn\cup\{ \infty\}$ is the
point-wise distance, then
\begin{equation*}
  d( \sigma, \tau)= F\big( \sigma_0, \tau_0, d( \sigma^1, \tau^1)\big)
\end{equation*}
for all $\sigma, \tau\in \KK^\infty$ (recall that $\sigma_0$ denotes
the head of $\sigma$ and $\sigma^1$ its tail), where
$F: \KK\times \KK\times( \Realnn\cup\{ \infty\})\to \Realnn\cup\{
\infty\}$ is given by
$F( x, y, \alpha)= \max\big( d_\KK( x, y), \lambda \alpha\big)$.  (In
this case, the lattice $L= \Realnn\cup\{\infty\}$.)

The simulation distance is then the least fixed point to the equations
\begin{equation*}
  \nsim 1 d( s, t)= \adjustlimits \sup_{ s\tto x s'} \inf_{ t\tto y
    t'} F\big( x, y, \nsim 1 d( s', t')\big)\,.
\end{equation*}
If, instead, $d$ is the discounted accumulating distance, then the
above equations hold for $F$ replaced by
$F( x, y, \alpha)= d_\KK( x, y)+ \lambda \alpha$.

Chapter~\ref{ch:qltbt} is based on work by the author's PhD~student
Claus Thrane, Kim G.~Larsen, Axel Legay, and the author, which has
been presented at the 9th Workshop on Quantitative Aspects of
Programming Languages (QAPL) \cite{conf/qapl/FahrenbergTL11} and the
31st IARCS Conference on Foundations of Software Technology and
Theoretical Computer Science (FSTTCS)
\cite{DBLP:conf/fsttcs/FahrenbergLT11} and subsequently published in
Theoretical Computer Science~\cite{DBLP:journals/tcs/FahrenbergL14}.

\subsection{Chapter~\ref{ch:weightedmodal}, ``Weighted Modal
  Transition Systems''}
\label{se:intro.weightedmodal}

Chapter~\ref{ch:weightedmodal} presents a lifting of our work on
quantitative verification to \emph{quantitative specification
  theories}.  Fundamental to specification theories is the
\emph{refinement relation} which permits to successively refine
specifications until an implementation is reached.  Here
implementations are the models with which previous chapters were
concerned, \ie~transition systems.  In the context of quantitative
verification, we have in previous chapters replaced equivalence
relations and preorders between models by linear and branching
distances.  Similarly in spirit, we replace in this chapter the
refinement relation with a \emph{refinement distance}, in order to be
able to reason quantitatively about quantitative specifications.

In Chapter~\ref{ch:weightedmodal} we treat a special case of
quantitative specification theory, using models which are transition
systems whose transitions are labeled with symbols from a discrete
alphabet $\Sigma$ and with integers.  We also use one particular
distance, the discounted accumulating one.  In the following
Chapter~\ref{ch:wm2}, we generalize this setting to arbitrary models
and specifications and arbitrary distances.

In the specifications of Chapter~\ref{ch:weightedmodal}, integer
weights are relaxed to \emph{integer intervals} and, as usual in modal
specifications, transitions can be of type \must or of type \may.
Hence we define a \emph{weighted modal transition system} (WMTS) to be
a structure $\cal S=( S, s^0, \mmayto, \mmustto)$ consisting of a set
of states $S$ with an initial state $s^0\in S$ and transition
relations $\mmustto, \mmayto\subseteq S\times \Spec\times S$ such that
for every $(s,k,s') \in \mmustto$ there is $(s,\ell,s') \in \mmayto$
where $k \labpre \ell$.  Here
\begin{equation*}
  \Spec= \Sigma\times \big\{[ x, y]\bigmid x\in \Int\cup\{ -\infty\},
  y\in \Int\cup\{ \infty\}, x\le y\big\}
\end{equation*}
is the set of (weighted) \emph{specification labels}, and the partial
order $\labpre$ on $\Spec$ is defined by
$( a, I)\labpre( a', I')$ if $a= a'$ and $I\subseteq I'$.

A WMTS $\cal S$ as above is an \emph{implementation} if
$\mmustto= \mmayto\subseteq S\times \Impl \Spec\times S$, where
\begin{equation*}
  \Impl \Spec= \Sigma\times\big\{[ x, x]\bigmid x\in \Int\big\}\approx
  \Sigma\times \Int
\end{equation*}
is the set of (weighted) \emph{implementation labels} in $\Spec$: the
minimal elements of $\Spec$ with respect to $\labpre$.

Now in a standard \emph{modal refinement} $\cal S_1\mr \cal S_2$,
\must-transitions in $\cal S_2$ must be preserved in $\cal S_1$,
whereas \may-transitions in $\cal S_1$ must correspond to
\may-transitions in $\cal S_2$.  Using the accumulating distance with
a discounting factor $\lambda< 1$, and our work in
Chapter~\ref{ch:qltbt}, we extend this to a \emph{modal refinement
  distance} which is defined as follows.  First, a distance on
specification labels is introduced by
$d_\Spec(( a, I),( a', I'))= \infty$ if $a\ne a'$ and
\begin{align*}
  d_\Spec\big(( a,[ x, y]),( a,[ x', y'])\big)
  &= \adjustlimits \sup_{ z\in[ x, y]} \inf_{ z'\in[ x', y']}| z- z'|
  \\
  &= \max( x'- x, y- y', 0)\,.
\end{align*}
The modal refinement distance between the states of weighted modal
transition systems $\cal S_1=( S_1, s^0_1, \mmayto_1, \mmustto_1)$ and
$\cal S_2=( S_2, s^0_2, \mmayto_2, \mmustto_2)$ is then defined to be
the least fixed point of the equations
\begin{equation*}
  \md( s_1, s_2)= \max\left\{
    \begin{aligned}
      &\adjustlimits \sup_{ s_1\mayto{ k_1}_1 t_1} \inf_{ s_2\mayto{
          k_2}_2 t_2} d_\Spec( k_1, k_2)+ \lambda \md( t_1, t_2)\,, \\
      &\adjustlimits \sup_{ s_2\mustto{ k_2}_2 t_2} \inf_{
        s_1\mustto{ k_1}_1 t_1} d_\Spec( k_1, k_2)+ \lambda \md(
      t_1, t_2)\,,
    \end{aligned}
  \right.
\end{equation*}
and then $\md( \cal S_1, \cal S_2)= \md( s^0_1, s^0_2)$.

We show in Theorem~\ref{weightedmodal.th:dtledm} that the modal
refinement distance bounds the so-called \emph{thorough refinement
  distance}: for any implementation $\cal I_1\mr \cal S_1$, there is
an implementation $\cal I_2\mr \cal S_2$ such that
$d( \cal I_1, \cal I_2)\le \md( \cal S_1, \cal S_2)$.  Hence the modal
refinement distance between two specifications can serve as an
over-approximation of how far respective implementations can deviate
from each other.

Modal specifications come equipped with a logical operation of
\emph{conjunction} and with structural operations of
\emph{composition} and \emph{quotient}.  Conjunction is the greatest
lower bound in the modal refinement preorder.  We show that such a
conjunction exists in our formalism, but that it does \emph{not}
satisfy a natural quantitative generalization of the greatest lower
bound property; in fact, Theorem~\ref{weightedmodal.th:no-conj} shows
that \emph{there is no} operation $\land$ on WMTS which satisfies that
for any $\epsilon\ge 0$, there exist $\epsilon_1, \epsilon_2\ge 0$
such that whenever $\md( S, S_1)\le \epsilon_1$ and
$\md( S, S_2)\le \epsilon_2$ for some WMTS $S, S_1, S_2$, then
$\md( S, S_1\land S_2)\le \epsilon$.  Conjunction is thus, in this
sense, \emph{discontinuous}; we shall see in the following
Chapter~\ref{ch:wm2} that this is a fundamental problem with
\emph{any} quantitative specification theory.

For structural composition, we use CSP-style synchronization on labels
and addition of intervals.  That is, synchronization
$( a, I)\oplus( a', I')$ on $\Spec$ is undefined if $a\ne a'$, and
otherwise
\begin{equation*}
  ( a,[ x, y])\oplus( a,[ x', y'])=( a,[ x+ x', y+ y'])\,.
\end{equation*}
Using this label operation, we show in
Theorem~\ref{weightedmodal.th:indepimp} that there is a structural
composition operator $\|$ for WMTS which satisfies
$\md( \cal S_1\| \cal S_3, \cal S_2\| \cal S_4)\le \md( \cal S_1, \cal
S_2)+ \md( \cal S_3, \cal S_4)$ for all WMTS
$\cal S_1, \cal S_2, \cal S_3, \cal S_4$.  This property of
\emph{independent implementability} ensures that composition preserves
distances.  We also show in
Theorem~\ref{weightedmodal.th:soundmaxquot} that structural
composition admits a partial inverse, a quotient operation $/$ such
that
$\md( \cal S_3, \cal S_1/ \cal S_2)= \md( \cal S_2\| \cal S_3, \cal
S_1)$ for all WMTS $\cal S_1, \cal S_2, \cal S_3$ whenever $\cal S_2$
is \emph{deterministic}.  The quotient operation can hence be used to
synthesize partial specifications also in this quantitative context.

Chapter~\ref{ch:weightedmodal} is based on work by Sebastian S.~Bauer,
Line Juhl, Claus Thrane, Kim G.~Larsen, Axel Legay, and the author,
which has been presented at the 36th International Symposium on
Mathematical Foundations of Computer Science (MFCS)
\cite{DBLP:conf/mfcs/BauerFJLLT11} and subsequently published in
Formal Methods in System
Design~\cite{DBLP:journals/fmsd/BauerFJLLT13}.

\subsection{Chapter~\ref{ch:wm2}, ``General Quantitative Specification
  Theories with Modal Transition Systems''}
\label{se:intro.wm2}

In Chapter~\ref{ch:wm2} we develop a general setting for quantitative
specification theories.  Combining the work in Chapters~\ref{ch:qltbt}
and~\ref{ch:weightedmodal}, we work in a setting of modal transition
systems which are labeled with elements in a partially ordered set
$\Spec$ of specification labels.  The set of implementation labels is
then $\Imp=\{ k\in \Spec\mid k'\labpre k\limpl k'= k\}$, and
implementations are $\Imp$-labeled transition systems.

We assume given an abstract trace distance $d: \Spec^\infty\times
\Spec^\infty\to \LL$, where $\LL=( \Realnn\cup\{ \infty\})^M$, for an
arbitrary set $M$, is the lattice of functions from $M$ to
$\Realnn\cup\{ \infty\}$.  We also assume that there exists a
\emph{distance iterator} function $F: \Spec\times \Spec\times \LL\to
\LL$ such that $d( \sigma, \tau)= F( \sigma_0, \tau_0, d( \sigma^1,
\tau^1))$, similarly to the recursive characterization developed in
Chapter~\ref{ch:qltbt}.  

We can then introduce an abstract \emph{modal refinement distance}
between the states of two such \emph{structured} modal transition
systems (or SMTS) $\cal S=( S, s_0, \mmayto_S, \mmustto_S)$ and
$\cal T=( T, t_0, \mmayto_T, \mmustto_T)$ to be the least fixed point,
in $\LL$, to the equations
\begin{equation*}
  \md( s, t)= \max\left\{
    \begin{aligned}
      & \adjustlimits \sup_{ s\,\mayto{ k}_S\, s'\,} \inf_{ \,
        t\,\mayto{ \ell}_T\, t'} F( k, \ell, \md( s', t'))\,, \\
      & \adjustlimits \sup_{ t\,\mustto{ \ell}_T\, t'\,} \inf_{ \,
        s\,\mustto{ k}_S\, s'} F( k, \ell, \md( s', t'))\,,
    \end{aligned}
  \right.
\end{equation*}
and let $\md( \cal S, \cal T)= \md( s_0, t_0)$.

For conjunction of SMTS, we introduce a property of \emph{conjunctive
  boundedness} on labels in $\Spec$ which implies,
\cf~Theorem~\ref{wm2.th:conj}, that conjunction of SMTS is uniformly
bounded in the sense that the modal refinement distance from an SMTS
$\cal U$ to a conjunction $\cal S\land \cal T$ is bounded above by a
uniform function of the distances $\md( \cal U, \cal S)$ and
$\md( \cal U, \cal T)$.  Unfortunately, it turns out that common label
conjunction operators are \emph{not} conjunctively bounded, hence we
propose another property of \emph{relaxed conjunctive boundedness}
which does hold for common label operators, and show in
Theorem~\ref{wm2.th:relax_conj} that it implies a similar property for
SMTS conjunctions.

For structural composition, we generalize the work in
Chapter~\ref{ch:weightedmodal} by introducing an abstract notion of
(partial) label composition $\obar$ on $\Spec$.  Assuming that this
operator is recursively \emph{uniformly bounded} in the sense that
there exists a function $P: \LL\times \LL\to \LL$ such that
\begin{equation*}
  F( k\obar k', \ell\obar \ell', P( \alpha, \alpha'))\sqsubseteq_\LL P(
  F( k, \ell, \alpha), F( k', \ell', \alpha'))
\end{equation*}
for all $k, \ell, k', \ell'\in \Spec$ and $\alpha, \alpha'\in \LL$ for
which $k\obar k'$ and $\ell\obar \ell'$ are defined, we can then show
in Theorem~\ref{wm2.th:struct} that independent implementability holds,
\viz
\begin{equation*}
  \md( \cal S\| \cal S', \cal T\| \cal T')\sqsubseteq_\LL P( \md( \cal
  S, \cal T), \md( \cal S', \cal T'))
\end{equation*}
for all SMTS $\cal S$, $\cal T$, $\cal S'$, $\cal T'$.  In examples,
we expose several different label composition operators and show that
they are uniformly bounded.  We show that the quotient operator
$/$ from Chapter~\ref{ch:weightedmodal} has a similar
generalization to SMTS.

We also show in Chapter~\ref{ch:wm2} that quantitative refinement
admits a \emph{logical characterization}, generalizing the work in
Chapter~\ref{ch:wtscai}.  We use standard Hennessy-Milner logic, with
formulae generated by the syntax
\begin{equation*}
  \phi,\phi_1,\phi_2 := \ltrue \mid \lfalse
  \mid  \langle \ell\rangle \phi \mid [ \ell] \phi \mid
  \phi_1 \wedge \phi_2
  \mid \phi_1 \vee \phi_2 \qquad (\ell\in \Spec)
\end{equation*}
and with quantitative semantics $S\to \LL$, for an SMTS
$\cal S=( S, s_0, \mmayto_S, \mmustto_S)$ given as follows:
\begin{gather*}
  \begin{aligned}
    \wsem \ltrue s &= \bot &\qquad\qquad \wsem \lfalse s &= \top \\
    \wsem{( \phi_1\wedge \phi_2)} s &= \max( \wsem{ \phi_1} s, \wsem \phi_2
    s) &\qquad \wsem{( \phi_1\vee \phi_2)} s &= \min( \wsem{ \phi_1} s,
    \wsem{ \phi_2} s)
  \end{aligned}
  \\
  \begin{aligned}
    \wsem{ \langle \ell\rangle \phi} s &= \inf\{ F( k, \ell, \wsem \phi
    t)\mid s\mustto{ k} t, d( k, \ell)\ne \top_\LL\} \\
    \wsem{ [ \ell] \phi} s &= \sup\{ F( k, \ell, \wsem \phi t)\mid
    s\mayto{ k} t, d( k, \ell)\ne \top_\LL\}
  \end{aligned}
\end{gather*}

Writing $\wsem \phi \cal S= \wsem \phi s_0$, we can then show in
Theorem~\ref{wm2.th:l-sound} that this logic is \emph{quantitatively
  sound} for the modal refinement distance, in the sense that
$\wsem \phi \cal S\sqsubseteq_\LL \wsem \phi \cal T\oplus_\LL \md(
\cal S, \cal T)$
for all formulae $\phi$ and all SMTS $\cal S$, $\cal T$.  For
disjunction-free formulae, we can show a complementary completeness
result in Theorem~\ref{wm2.th:l-complete}, namely that
$\wsem \phi \cal S= \sup_{ \cal I\in \llbracket \cal S\rrbracket}
\wsem \phi \cal I$ for all disjunction-free $\phi$ and all SMTS $\cal
S$, where $\llbracket \cal S\rrbracket$ denotes the set of
implementations of $\cal S$.

Chapter~\ref{ch:wm2} is based on work by Sebastian S.~Bauer, Claus
Thrane, Axel Legay, and the author, which has been presented at the
7th International Computer Science Symposium in Russia (CSR)
\cite{DBLP:conf/csr/BauerFLT12} and subsequently published in Acta
Informatica~\cite{DBLP:journals/acta/FahrenbergL14}.

\subsection{Chapter~\ref{ch:dmts}, ``Logical vs.\ Behavioral
  Specifications''}
\label{se:intro.dmts}

Chapter~\ref{ch:dmts} departs from the quantitative setting of this
thesis in order to introduce a generalization of modal transition
systems which turns out to be somewhat more well-behaved both in a
qualitative setting and also in the quantitative setting of the
subsequent Chapter~\ref{ch:dmts2}.  Extending Larsen and
Xinxin~\cite{DBLP:conf/lics/LarsenX90}, we define a \emph{disjunctive
  modal transition system} (DMTS) to be a structure
$\mcalD=( S, S^0, \omay, \omust)$ consisting of finite sets
$S\supseteq S^0$ of states and initial states, a \emph{may}-transition
relation $\omay\subseteq S\times \Sigma\times S$, and a
\emph{disjunctive must}-transition relation
$\omust\subseteq S\times 2^{ \Sigma\times S}$.

\begin{figure}[tbp]
  \centering
  \begin{tikzpicture}[x=4cm,->,>=stealth',
    state/.style={shape=circle,draw,font=\scriptsize,inner sep=.5mm,outer
      sep=0.8mm,minimum size=0.3cm,initial text=,initial
      distance=3ex}]
    \node[state,initial] (X) {};
    \node[state] (Y) at (1,0) {};
    \path[densely dashed,->] (X) edge node[above]{$\req$} (Y);
    \path[densely dashed,->] (X) edge[loop above]
    node[above]{$\grant,\work,\idle$} (X);
    \path (Y) edge[loop,in=-115,out=-165,looseness=15] node[inner
    sep=0,outer sep=0,minimum size=0,pos=0.2,name=YY] {} node[below] {$\
      \ \work$} (Y);
    \path (YY) edge[bend left] node[below] {$\grant$} (X);
  \end{tikzpicture}
  \caption{%
    \label{fig:intrexample1}
    DMTS corresponding to the CTL property $\text{AG}(\req\Rightarrow
    \text{AX}(\work$ AW $\grant))$}
\end{figure}
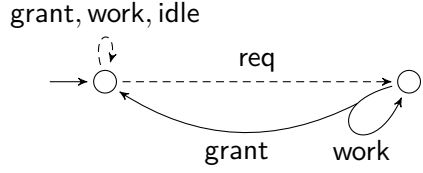

DMTS hence generalize MTS in that they allow for multiple (or zero)
initial states and permit \must transitions to branch to a disjunction
of destination states.  As an example, Figure~\ref{fig:intrexample1}
shows a DMTS expressing the CTL property
\begin{equation*}
  \text{AG}(\req\Rightarrow \text{AX}(\work \text{ AW } \grant))
\end{equation*}
(here ``AW'' denotes the weak-until operator): \emph{``at all time
  points after executing $\req$, no $\idle$ nor further requests but
  only $\work$ is allowed until $\grant$ is executed''}.  The same
property may be expressed as a recursive system of equations in
Hennessy-Milner logic~\cite{DBLP:journals/tcs/Larsen90} as
\begin{align*}
X &= [\grant,\idle,\work]X \wedge [\req] Y\\
Y &= (\langle\work\rangle Y\vee \langle\grant\rangle X)\wedge[\idle,\req]\lfalse
\end{align*}
where the solution is given by the maximal fixed point.

In Chapter~\ref{ch:dmts} we exhibit an equivalence between DMTS and
Hennessy-Milner logic with maximal fixed points (the \emph{modal
  $\nu$-calculus}) and also with a non-deterministic extension of the
\emph{acceptance automata} of~\cite{DBLP:journals/jacm/Hennessy85,
  DBLP:journals/entcs/Raclet08}.  This allows one to freely switch
between formalisms and, more importantly, to generalize the logical
and structural operations on specifications and expose their algebraic
properties.

We thus show in Theorem~\ref{th:condis} that DMTS (and hence also
acceptance automata and the modal $\nu$-calculus) admit notions of
conjunction and disjunction which are greatest lower, respectively
least upper bounds, in the modal refinement order.  That is, DMTS form
a \emph{bounded distributive lattice} up to modal equivalence.

We also generalize structural composition and quotient to DMTS and
further introduce quotients $\mcalS_1 / \mcalS_2$ also for the cases
where $\mcalS_2$ is not deterministic.  Theorem~\ref{th:quotient} then
shows that quotient is a \emph{residual} to structural composition,
with defining property
\begin{equation*}
  \mcalS_1\| \mcalS_2\mr \mcalS_3 \,\Longrightarrow\, \mcalS_2\mr
  \mcalS_3/ \mcalS_1
\end{equation*}
as before, but now without any restrictions on the involved
specifications.

Combining the four operations, DMTS form a \emph{commutative
  residuated lattice}~\cite{JipsenT02} up to modal equivalence.  As an
example, this immediately entails the following properties which may
be used in a calculus of specifications:
\begin{align*}
  \mcal S_1\|( \mcal S_2/ \mcal S_3) &\mr ( \mcal S_1\| \mcal
  S_2)/ \mcal S_3 &
  \mcal S_1/ \mcal S_2 &\mr ( \mcal S_1\| \mcal S_3)/( \mcal
  S_2\| \mcal S_3) \\
  ( \mcal S_1/ \mcal S_2)\|( \mcal S_2/ \mcal S_3) &\mr \mcal
  S_1/ \mcal S_3 &
  ( \mcal S_1/ \mcal S_2)/ \mcal S_3 &\mreq ( \mcal S_1/
  \mcal S_3)/ \mcal S_2 \\
  \mcal S_1/( \mcal S_2\| \mcal S_3) &\mreq ( \mcal S_1/ \mcal
  S_2)/ \mcal S_3 &
  \mcal S\|( \mcal S/ \mcal S) &\mreq \mcal S
\end{align*}

Chapter~\ref{ch:dmts} is based on work by Nikola Bene{\v s}, Jan K{\v
  r}et{\'\i}nsk{\'y}, Axel Legay, Louis-Marie Traonouez, and the
author, which has been presented at the 24th International Conference
on Concurrency Theory (CONCUR) \cite{DBLP:conf/concur/BenesDFKL13} and
at the 11th International Colloquium on Theoretical Aspects of
Computing (ICTAC) \cite{DBLP:conf/ictac/FahrenbergLT14} and
subsequently published in Information and
Computation~\cite{DBLP:journals/iandc/BenesFKLT20}.

\subsection{Chapter~\ref{ch:dmts2}, ``Compositionality for
  Quantitative Specifications''}
\label{se:intro.dmts2}

The final Chapter~\ref{ch:dmts2} combines the work of
Chapters~\ref{ch:wm2} and~\ref{ch:dmts}.  It introduces general
quantitative specification theories based on disjunctive modal
transition systems~\cite{DBLP:conf/lics/LarsenX90} and
(non-deterministic) acceptance
automata~\cite{DBLP:journals/jacm/Hennessy85,
  DBLP:journals/entcs/Raclet08} on the one hand and abstract trace
distances on the other hand.

As in Chapter~\ref{ch:wm2}, specification labels are partially ordered
by a label refinement relation $\preceq$, and implementation labels
are those specification labels which cannot be further refined.  We
also assume partial conjunction and synchronization operators on
labels and work with specification-labeled disjunctive modal
transition systems and acceptance automata.

Also as in Chapter~\ref{ch:wm2}, we assume given a recursively
specified distance on specification traces, which takes values in a
\emph{commutative quantale}: a complete lattice $\LL$ together with a
commutative operation $\oplus_\LL$ which distributes over arbitrary
suprema.  We then generalize the translations between DMTS, acceptance
automata, and the modal $\nu$-calculus from Chapter~\ref{ch:dmts} to
our general setting and show in Theorem~\ref{th:trans-moddist} that
they respect modal refinement distances: denoting the translations by
$\da$, $\ddh$ etc.,
\begin{align*}
  \md( \mcalD_1, \mcalD_2) &= \md( \da( \mcalD_1), \da( \mcalD_2)), \\
  \md( \mcalA_1, \mcalA_2) &= \md( \ad( \mcalA_1), \ad( \mcalA_2)), \\
  \md( \mcalD_1, \mcalD_2) &= \md( \ddh( \mcalD_1), \ddh( \mcalD_2)), \\
  \md( \mcalN_1, \mcalN_2) &= \md( \hd( \mcalN_1), \hd( \mcalN_2)).
\end{align*}

We then turn to the quantitative properties of the operations and show
in Theorem~\ref{th:condis-q} that disjunction is quantitatively sound
and complete in the sense that
$\md( \mcalS_1\lor \mcalS_2, \mcalS_3)= \max( \md( \mcalS_1,
\mcalS_3), \md( \mcalS_2, \mcalS_3))$ for all specifications
$\mcalS_1$, $\mcalS_2$ and $\mcalS_3$.  Conjunction on the other hand
is only quantitatively sound, for the same reasons as exposed in
Chapter~\ref{ch:weightedmodal}.  Assuming a uniform bound on label
synchronization, we again derive a quantitative version of independent
implementability in Theorem~\ref{th:indimp-q}.  We also show in
Theorem~\ref{th:quot-q} that with our new generalized definition of
DMTS quotient, it holds that
$\md( \mcalS_1\| \mcalS_2, \mcalS_3)= \md( \mcalS_2, \mcalS_3/
\mcalS_1)$ for all specifications $\mcalS_1$, $\mcalS_2$ and
$\mcalS_3$.

Chapter~\ref{ch:dmts2} is based on work by Jan K{\v
  r}et{\'\i}nsk{\'y}, Axel Legay, Louis-Marie Traonouez, and the
author, which has been presented at the 11th International Symposium
on Formal Aspects of Component Software (FACS)
\cite{DBLP:conf/facs2/FahrenbergKLT14} and subsequently published in
Soft Computing~\cite{DBLP:journals/soco/FahrenbergKLT18}.

\section{Applications}

Our theory of quantitative specification and verification has found
applications in robustness of real-time systems, feature interactions
in software product lines, compatibility of service interfaces, text
separation, and other areas.  We present four such applications here.

\subsection{A Robust Specification Theory for Modal Event-Clock
  Automata}
\label{se:mecs}

The paper~\cite{conf/fit/FahrenbergL12}, written by Axel Legay and the
author and presented at the Fourth Workshop on Foundations of
Interface Technologies, contains an application of the general
quantitative framework of this thesis in the area of real-time
specifications.  We define a notion of robustness for the modal
event-clock specifications (MECS)
of~\cite{DBLP:conf/icfem/BertrandLPR09,
  DBLP:journals/scp/BertrandLPR12}.

We propose a new version of refinement for MECS which is adequate to
reason on MECS in a robust manner.  We then proceed to exhibit the
properties of the standard operations of specification theories:
conjunction, structural composition and quotient, with respect to this
quantitative refinement.  We show that structural composition and
quotient have properties which are useful generalizations of their
standard Boolean properties, hence they can be employed for robust
reasoning on MECS.  Conjunction, on the other hand, is generally not
robust, but together with the new operator of quantitative widening
can be used in a robust manner.

MECS are modal transition systems in which \may- and \must-transitions
are labeled with symbols from a set $\Sigma$ and annotated with
constraints which are used to enable or disable transitions depending
on the values of real variables.  In the language of
Section~\ref{se:intro.wm2}, their semantics is given as SMTS over the
set
\begin{equation*}
  \Spec=( \Sigma\times\{[ 0, 0]\})\cup(\{ \delta\}\times \I)\subseteq(
  \Sigma\cup\{ \delta\})\times \I
\end{equation*}
of specification labels.  Here
$\I=\{[ x, y]\mid x\in \Realnn, y\in \Realnn\cup\{ \infty\}, x\le y\}$
is the set of closed extended non-negative real intervals, and
$\delta\notin \Sigma$ denotes a special symbol which signifies passage
of time.

The partial order on $\Spec$ is given by
$( a,[ l, r])\labpre( a',[ l', r'])$ iff $a= a'$, $l\ge l'$, and
$r\le r'$ (hence $[ l, r]\subseteq[ l', r']$).
Thus the implementation labels are
$\Imp= \Sigma\times\{ 0\}\cup\{ \delta\}\times \Realnn$, so that
implementations are usual timed transition systems with discrete
transitions $s\mustto{ a, 0} s'$ and delay transitions
$s\mustto{ \delta, d} s'$.

We use the maximum-lead distance to measure differences between timed
traces.  For structural composition, we employ CSP-style label
synchronization and intersection of timing intervals; hence in a
composition, the timing constraints are conjunctions of the
components' constraints.  We then show that structural composition is
bounded and that conjunction is relaxed bounded; the quotient operator
is similarly well-behaved.

\subsection{Measuring Global Similarity between Texts}

The paper~\cite{DBLP:conf/slsp/FahrenbergBCJKL14}, written by Fabrizio
Biondi, Kevin Corre, Cyrille J{\'e}gourel, Simon Kongsh{\o}j, Axel
Legay, and the author and presented at the Second International
Conference on Statistical Language and Speech Processing, contains an
application of some of the theory presented here to a problem in
statistical natural-language processing.  We introduce a new type of
distance between texts and show that it can be used to separate
different classes in corpuses of scientific papers.

We measure the similarity of two texts using a discounted accumulating
distance.  Given two texts $A=( a_1, a_2,\dots, a_{ N_A})$ and
$B=( b_1, b_2,\dots, b_{ N_B})$, seen as finite sequences of words
(and hence stripped of punctuation), we first define an indicator
function $\delta_{ i, j}$, for $i, j\ge 0$, by
\begin{equation*}
  \delta_{ i, j}=
  \begin{cases}
    0 &\text{if } i\le N_A, j\le N_B\text{ and } a_i= b_j\,,\\
    1 &\text{otherwise}\,,
  \end{cases}
\end{equation*}
and then
\begin{equation*}
  \dphrase( i, j, \lambda)= \sum_{ k= 0}^\infty \lambda^k \delta_{ i+
    k, j+ k}\,,
\end{equation*}
for a discounting factor $\lambda\in \Realnn$ with $\lambda< 1$.  This
measures how much the texts $A$ and $B$ ``look alike'' when starting
with the tokens $a_i$ in $A$ and $b_j$ in $B$.  This \emph{position
  match} distance is then summarized and symmetrized as follows:
\begin{gather*}
  d'( A, B, \lambda)= \frac1{ N_A} \smash[t]{\sum_{ i= 1}^{ N_A}}
  \min_{ j= 1,\dotsc, N_B} \dphrase( i, j, \lambda) \\
  d( A, B, \lambda)= \max( d'( A, B, \lambda), d'( B, A, \lambda))
\end{gather*}

We have implemented this computation and then used this implementation
to statistically separate different types of scientific papers.  In a
first experiment, we successfully separate $42$ scientific papers from
$8$ automatically generated ``fake'' scientific papers (using the tool
SCIgen\footnote{\url{http://pdos.csail.mit.edu/scigen/}}).  With very
high discounting, we also achieve a classification where papers which
share authors or are otherwise similar are classified as such.  In a
second experiment, we compare 97 scientific papers with 100 ``fake''
ones generated by different methods.  Also here we achieve a complete
classification.  For high discounting factors, our classifications are
better than those achieved by other work using bag-of-words distances.

\subsection{Measuring Behavior Interactions between Product-Line
  Features}

The paper~\cite{DBLP:conf/icse/AtleeFL15}, written by Joanne M.~Atlee,
Axel Legay and the author and presented at the 3rd IEEE/ACM FME
Workshop on Formal Methods in Software Engineering, suggests a new
method for measuring the degree to which features interact in software
product lines.

The paper first introduces a distance between labeled transition
systems which is similar to the (undiscounted) accumulating simulation
distance, except that every pair of states is only treated once.  That
is, the function computing $d( s, s')$ tries to match every transition
$s\tto a t$ in the first system $\cal S$ with a transition
$s'\tto a t'$ in the second system $\cal S'$.  If no such exists, a
missing behavior is detected and $1$ is added to the score; if there
are transitions $s'\tto a t'$, then distance is recursively computed
for the pair $t, t'$ with the best match.  Once a pair of states has
been checked for behavior mismatches in this way, it is added to a
\textit{Passed} list of states which need not be checked again.

We model software product lines using featured transition systems,
which are transition systems in which transitions are conditioned on
the presence or absence of distinct features.  A product is then
simply a set of features, and a product $p$ has a behavior interaction
with a feature $f$ in a given featured transition system $\cal S$ if
the projection onto $p$ of $\cal S$ and the projection onto $p$ of the
projection onto $p\cup\{ f\}$ of $\cal S$ are not bisimilar.

We then generalize this notion to a behavior interaction distance,
using the above distance between transition systems.  We give evidence
that this is a useful notion to assess the degree of feature
interactions and show that it can be efficiently computed on the given
featured transition system, without resorting to the projections.

\subsection{Compatibility Flooding: Measuring Interaction of
  Behavioral Models}

The paper~\cite{DBLP:conf/sac/OuederniFLS17}, written by Meriem
Ouederni, Axel Legay, Gwen Sala{\"u}n, and the author and presented at
the 32nd ACM SIGAPP Symposium on Applied Computing, deals with
compatibility verification of service interfaces, focusing on the
interaction protocol level.

Checking the compatibility of interaction protocols is a tedious and
hard task, even though it is of utmost importance to avoid run-time
errors, \eg~deadlock situations or unmatched messages.  Most of the
existing approaches return a ``True'' or ``False'' result to detect
whether services are compatible or not, but for many issues such a
Boolean answer is not very helpful.  In real world situations, there
will seldom be a perfect match, and when service protocols are not
compatible, it is useful to differentiate between services that are
slightly incompatible and those that are totally incompatible.  Our
paper aims at quantifying the compatibility degree of service
interfaces, taking a semantic point of view.

Incompatibilities are measured between transition systems modeling
service interfaces, using a version of discounted accumulating
bisimulation distance where differences are propagated both forward
and backwards.  The distance takes into account the compatibility of
parameters and labels and is defined for two different scenarios, one
in which all sent and received messages must be matched, and an
asymmetric one where one of the components may send and receive other
messages which are irrelevant for the composition.

\section{Conclusion and Perspectives}

We have developed a general theory of quantitative verification and
quantitative specification theories.  The theory is independent of how
precisely quantitative differences are measured and applicable to a
large class of distances used in practice.  The quantative spefication
formalism introduced in the last Chapter \ref{ch:dmts2} is also rather
robust, admitting translations between several different specification
formalisms, and has good algebraic and geometric properties.

On a theoretical level, the above is motivation to concern oneself
with the question what precisely \emph{is} a specification theory.
While there is some agreement to this at the qualitative / Boolean
level, it is not clear how to extend this to the quantitative world.
This question is important not only theoretically, but also in
applications, given that the algebraic properties of a formalism
determine how precisely it can be used in practice.

Somewhat related to the question above is the problem of how to treat
\emph{silent} or spontaneous transitions.  In applications it is
common to model uncertainty or ambiguity with silent transitions, and
these are rather well-understood in the qualitative setting; but
again it is unclear how to lift them to the quantitative world.

Further, and taking a more applied view, it is somewhat problematic
that all the formalisms treated here are based on discrete
\emph{transition systems}.  When considering applications in real-time
or hybrid systems, discreteness is not sufficient and some treatment
of continuous time is required.  There is some work on specification
theories for real-time systems, but for hybrid systems these are
lacking, and in any case it is unclear how to relate them to the
quantitative specification theories we have exposed here.

Below we treat the questions and problems above in some more detail
and try to show some avenues for further work on these subjects.

\subsection{Specification Theories}

The work presented here has led to more fundamental questions as to
what precisely \emph{is}, or \emph{should be}, a specification theory.
This is what we set out to answer, together with Axel Legay,
in~\cite{DBLP:conf/sofsem/FahrenbergL17}, presented at the 43rd
International Conference on Current Trends in Theory and Practice of
Computer Science (SOFSEM) and subsequently published in the Journal of
Logical and Algebraic Methods in
Programming~\cite{journals/jlamp/FahrenbergL20}, and
in~\cite{DBLP:conf/isola/FahrenbergL20}, to be presented at the 2021
ISoLA Symposium.

We propose here that a specification theory for a set of models
$\mfra M$ consists of the following ingredients:
\begin{itemize}
\item a set $\mfra S$ of specifications;
\item a mapping $\chi: \mfra M\to \mfra S$; and
\item a refinement preorder $\le$ on $\mfra S$ which is an equivalence
  relation on the image of $\chi$ in $\mfra S$.
\end{itemize}
It then follows that for all $M\in \mfra M$, $\chi( M)$ is the
\emph{characteristic formula}~\cite{DBLP:conf/icalp/Pnueli85} for $M$.

Logical operations on specifications are then obtained by asserting
that $\mfra S$ forms a \emph{bounded distributive lattice} up to
$\equiv$, the equivalence on $\mfra S$ defined by $S_1\equiv S_2$ iff
$S_1\le S_2$ and $S_2\le S_1$.  Structural composition and quotient
are defined by an extra operation $\|$ on specifications which turns
$\mfra S$ into a (bounded distributed) \emph{commutative residuated
  lattice}.  This puts specification theories into a well-understood
algebraic context, see for example~\cite{JipsenT02}, which also
appears in linear logic~\cite{DBLP:journals/tcs/Girard87} and other
areas.

It is an open question how to transfer this algebraic point of view to
the \emph{quantitative} setting.  It is clear that the refinement
order above should be replaced by a hemimetric $d$ on $\mfra S$, and
also that $d$ should be symmetric on the image of $\chi$ in $\mfra S$;
but we do not know how to correctly introduce characteristic formulae
into this setting.

\subsection{Silent Transitions}

Another open question is how to deal with \emph{silent transitions} in
the quantitative setting.  Van~Glabbeek defines a
linear-time--branching-time spectrum for ``processes with silent
moves'' in~\cite{DBLP:conf/concur/Glabbeek93}, but it is unclear how
to translate this into a game framework in order to replicate the work
contained in Chapter~\ref{ch:qltbt}.

Silent transitions are obtained whenever two processes synchronize on
internal actions, so they are important from an application point of
view.  Recent advances on coalgebraic approaches to silent
transitions~\cite{DBLP:journals/corr/Brengos13} and on codensity
games~\cite{DBLP:conf/lics/KomoridaKHKH19} appear to provide a way
forward.

\subsection{Applications}

Much work is to be done in order to apply our work to real-time,
hybrid, or embedded systems.  Specification theories for real-time and
probabilistic systems do exist, see below, but they all have problems
with robustness.  For hybrid systems, no work on compositional
specification theories seems to be available.  Generally speaking, the
problem with real-time and hybrid systems is that time itself provides
an implicit synchronization mechanism, so compositionality is
difficult to achieve for real-time and hybrid systems.

\subsubsection{Modal event-clock specifications}

We have already mentioned modal event-clock specifications (MECS) in
Section~\ref{se:mecs}.  Introduced
in~\cite{DBLP:conf/icfem/BertrandLPR09,
  DBLP:journals/scp/BertrandLPR12}, these form a specification theory
for event-clock automata~\cite{DBLP:journals/tcs/AlurFH99}, a
determinizable subclass of timed
automata~\cite{DBLP:journals/tcs/AlurD94}, under timed bisimilarity.
Models and specifications are assume to be deterministic, thus
$\mcal S_1\le \mcal S_2$ iff
$\Mod{ \mcal S_1}\subseteq \Mod{ \mcal S_2}$ in this case.

In~\cite{DBLP:journals/scp/BertrandLPR12} it is shown that MECS admit
a conjunction, thus forming a meet-semilattice up to~$\equiv$.  The
authors also introduce composition and quotient; but computation of
quotient incurs an exponential blow-up.  Using the maximum-lead
distance to measure differences between timed traces, we have shown
in~\cite{conf/fit/FahrenbergL12} how to develop a framework for robust
quantitative reasoning.

\subsubsection{Timed input/output automata}

\cite{DBLP:journals/sttt/DavidLLNTW15,
  DBLP:journals/sttt/DavidLLMNRSW12}~introduce a specification theory
based on a variant of the timed input/output automata (TIOA)
of~\cite{DBLP:series/synthesis/2010Kaynar,
  DBLP:conf/rtss/KaynarLSV03}.  Both models and specifications are
TIOA which are action-deterministic and input-enabled; but models are
further restricted using conditions of output urgency and independent
progress.  The equivalence on models being specified is timed
bisimilarity.

In~\cite{DBLP:journals/sttt/DavidLLNTW15} it is shown that TIOA
admit a conjunction.  The paper also introduces a composition
operation and a quotient, but the quotient is only shown to satisfy
the property that
\begin{equation*}
  \mcal S_1\| \mcal M\le \mcal S_3 \liff \mcal M\le \mcal S_3/ \mcal
  S_1
\end{equation*}
for all specifications $\mcal S_1, \mcal S_3$ and all \emph{models}
$\mcal M$.  No robust quantitative specification theories for TIOA are
available.

\subsubsection{Abstract probabilistic automata}

Abstract probabilistic automata (APA), introduced
in~\cite{DBLP:journals/iandc/DelahayeKLLPSW13,
  journals/lmcs/DelahayeFLL14}, form a specification theory for
probabilistic automata~\cite{DBLP:journals/njc/SegalaL95} under
probabilistic bisimilarity.  They build on earlier models of interval
Markov chains (IMC)~\cite{DBLP:journals/jlp/DelahayeLLPW12}, see
also~\cite{DBLP:journals/tcs/BartDFLMT18,
  DBLP:conf/vmcai/DelahayeLP16} for a related line of work.

In~\cite{DBLP:journals/iandc/DelahayeKLLPSW13} it is shown that APA
admit a conjunction, but that IMC do not.  Also a composition is
introduced in~\cite{DBLP:journals/iandc/DelahayeKLLPSW13}, and it is
shown that composing two APA with interval constraints (hence, IMCs)
may yield an APA with \emph{polynomial} constraints (not an IMC); but
APA with polynomial constraints are closed under composition.  No
robust quantitative specification theories for APA are available.

\section{About the Author}

Ulrich (Uli) Fahrenberg holds a PhD in mathematics from Aalborg
University, Denmark.  For his thesis, which he defended in 2005, he
worked in algebraic topology and its applications in concurrency
theory.  His work was supervised by Lisbeth Fajstrup and Martin
Raussen, and his thesis bore the title ``Higher-Dimensional Automata
from a Topological Viewpoint''.

After his PhD, Fahrenberg started a career in computer science as an
assistant professor at Aalborg University.  During this time, he
worked with Kim G.~Larsen and others on weighted timed automata and
quantitative verification.  The work reported in this thesis was
started together with Kim G.~Larsen and Fahrenberg's PhD student Claus
Thrane while Fahrenberg was at this position.  In 2010, Fahrenberg
passed his University Teacher Education for Assistant Professors, the
prerequisite for holding a permanent post at a Danish university.

From 2010 to 2016, Fahrenberg has worked as a postdoc at Inria Rennes,
France, in the group of Axel Legay.  During this time, he deepened his
work in quantitative analysis and verification and started his work in
quantitative specification theories.  Between 2016 and 2021 Fahrenberg
was a researcher at the computer science lab at {\'E}cole
polytechnique in Palaiseau, France, and since 2021 he is associate
professor at EPITA Rennes.  His current research centers on the theory
of concurrent, distributed, and hybrid systems.

Since 2001, Fahrenberg has published 96 scientific contributions,
among which 24 papers in peer-reviewed international journals and 47
papers in peer-reviewed international conference or workshop
proceedings.  He has been a member of numerous program committees, and
since 2016 he is a reviewer for AMS Mathematical Reviews.  He is a
member of the Steering Committee of the RAMiCS international
conferences and has been PC co-chair of RAMiCS-2020 and RAMiCS-2021.

Fahrenberg has been co-supervisor for one PhD student and three
Masters students.  He has supervised two PhD students' internships and
two Masters students' internships and also taught a number of courses
both in mathematics and computer science.

\section*{Bibliography}

\subsection*{Refereed Journal Papers}

\begin{bibunit}[plainyr-rev]
  \nocite{journals/jhrs/FahrenbergR07, DBLP:journals/jlp/ThraneFL10,
    DBLP:journals/cacm/BouyerFLM11, journals/cai/FahrenbergLT10,
    DBLP:journals/tcs/LarsenFT11, DBLP:journals/jlp/LuMMRFL12,
    DBLP:journals/fmsd/BauerFJLLT13, DBLP:journals/tcs/FahrenbergL14,
    DBLP:journals/acta/FahrenbergL14, journals/ijac/AllamigeonFGKL13,
    journals/lmcs/DelahayeFLL14, DBLP:journals/scp/LePFL16,
    DBLP:journals/tecs/LePFL16, DBLP:journals/actaC/EsikFLQ17,
    DBLP:journals/actaC/EsikFLQ17a, DBLP:journals/fac/BacciBFLMR21,
    DBLP:journals/soco/FahrenbergKLT18,
    DBLP:journals/lmcs/CacheraFL19, journals/jlamp/FahrenbergL20,
    DBLP:journals/tcs/FahrenbergLQ20, journals/sttt/FahrenbergL19,
    DBLP:journals/iandc/BenesFKLT20,
    DBLP:journals/lmcs/FahrenbergJTZ21,
    journals/actacyb/FahrenbergJTZ21, journals/mscs/FahrenbergJSZ21}

  \putbib[bib]
\end{bibunit}

\subsection*{Refereed Conference and Workshop Publications}

\begin{bibunit}[plainyr-rev]
  \nocite{fah02-getco, fah03-getco, fah05-hda,
    DBLP:journals/entcs/FahrenbergL09, DBLP:conf/formats/BouyerFLMS08,
    FLT10-Kripke-MEMICS, FL09-QAPL, FLT09-FSEN,
    DBLP:conf/hybrid/BouyerFLM10, conf/qapl/FahrenbergTL11,
    DBLP:conf/models/FahrenbergLW11, DBLP:conf/mfcs/BauerFJLLT11,
    DBLP:conf/ictac/FahrenbergJLS11, DBLP:conf/fsttcs/FahrenbergLT11,
    DBLP:conf/csr/BauerFLT12, conf/fit/FahrenbergL12,
    DBLP:conf/forte/DelahayeFHLN12, conf/mfps/FahrenbergL13,
    DBLP:conf/qest/DelahayeFLL13, conf/acsd/LePFL13,
    DBLP:conf/concur/BenesDFKL13, DBLP:conf/atva/EsikFLQ13,
    DBLP:conf/aplas/FahrenbergL13, DBLP:conf/fase/FahrenbergALW14,
    conf/foclasa/LePFL13b, DBLP:conf/facs2/FahrenbergKLT14,
    DBLP:conf/ictac/FahrenbergLT14, DBLP:conf/slsp/FahrenbergBCJKL14,
    fah04-getco-pre, DBLP:conf/models/FahrenbergL14,
    DBLP:conf/calco/FahrenbergL15, DBLP:conf/icse/AtleeFL15,
    DBLP:conf/models/AtleeBFL15, DBLP:conf/fsttcs/CacheraFL15,
    DBLP:journals/corr/EsikFL15a, DBLP:conf/dlt/EsikFL15,
    DBLP:conf/splc/OlaecheaFAL16, DBLP:conf/sofsem/FahrenbergL17,
    DBLP:conf/icse/FahrenbergL17, DBLP:conf/sac/OuederniFLS17,
    DBLP:conf/adhs/Fahrenberg18, conf/fm/BacciLMBFR18,
    conf/nik/FahrenbergL18, DBLP:conf/icse/OlaecheaALF18,
    DBLP:conf/ictac/FahrenbergLQ19, DBLP:conf/RelMiCS/FahrenbergJST20,
    DBLP:conf/isola/FahrenbergL20, DBLP:conf/tase/FahrenbergL21,
    DBLP:conf/RelMiCS/CalkFJSZ21}

  \putbib[bib]
\end{bibunit}

\subsection*{Conference Abstracts}

\begin{bibunit}[plainyr-rev]
  \nocite{fah01-nwpt, UF-08-ACCAT-Ext, conf/wata/FahrenbergLQ12,
    conf/nwpt/DyhrbergLMRF10, conf/wata/Fahrenberg16,
    conf/wata/FahrenbergLQ16, conf/wata/FahrenbergLQ14,
    conf/nwpt/Fahrenberg16, conf/nwpt/Fahrenberg18, UF-Thrane-08-WTS,
    conf/wata/FahrenbergJST21}

  \putbib[bib]
\end{bibunit}

\subsection*{Books}

\begin{bibunit}[plainyr-rev]
  \nocite{getco03proc, DBLP:conf/formats/2011, conf/qfm/2012,
    DBLP:conf/RelMiCS/2020, DBLP:conf/RelMiCS/2021}

  \putbib[bib]
\end{bibunit}

\subsection*{Book Chapters}

\begin{bibunit}[plainyr-rev]
  \nocite{FLT09-Markto, FLT11-Markto, book/FahrenbergLL13,
    series/natosec/FahrenbergLLT13, conf/sifakis/FahrenbergLLT14,
    conf/sifakis/FahrenbergLT14, DBLP:reference/mc/BouyerFLMO018,
    DBLP:series/natosec/LarsenFL17}
  \putbib[bib]
\end{bibunit}

\subsection*{Theses}

\begin{bibunit}[plainyr-rev]
  \nocite{thesis/Fahrenberg05, thesis/Fahrenberg02}
  \putbib[bib]
\end{bibunit}

\section{Acknowledgments}

The author would like to thank all coauthors involved in the papers
which form the basis for this thesis.  In alphabetical order, these are
\begin{inparablank}
\item Sebastian S.~Bauer, Munich, Germany;
\item Nikola Bene{\v s}, Brno, Czechia;
\item Line Juhl, Aalborg, Denmark;
\item Jan K{\v r}et{\'\i}nsk{\'y}, Munich, Germany;
\item Kim G.~Larsen, Aalborg, Denmark;
\item Axel Legay, Louvain-la-Neuve, Belgium;
\item Claus Thrane, Copenhagen, Denmark;
\item Louis-Marie Traonouez, Rennes, France.
\end{inparablank}

\chapter[Quantitative Analysis of Weighted Transition
Systems][Quantitative Analysis of Weighted Transition
Systems]{Quantitative Analysis of Weighted Transition
  Systems\footnote{This chapter is based on the journal
    paper~\cite{DBLP:journals/jlp/ThraneFL10} published in the Journal
    of Logic and Algebraic Programming.}}
\label{ch:wtsjlap}

This chapter introduces a notion of \emph{weighted transition system}
(WTS), essentially an extension of the standard concept of (labeled)
transition system~\cite{report/daimi/Plotkin81} which has been used to
introduce operational semantics for a wide range of systems.  It then
proceeds to define what is meant by \emph{linear} and \emph{branching
  distances} between such WTS, and introduces three examples of such
distances: the \emph{point-wise}, \emph{accumulated}, and
\emph{maximum-lead} distances.  Finally, it is shown that linear
distances are bounded by branching distances, and some results on
topological inequivalence are provided.

\section{Weighted transition systems}
\label{wtsjlap.sec:wts}

The intention of WTS is to describe a system's behavior as well as
quantitative properties in terms of \emph{transition weights}.  Recall
that a transition system is a quadruple $(S, s_{0}, \Sigma, R)$
consisting of a set $S$ of states with initial state $s_0\in S$, a
finite set $\Sigma$ of labels, and a set of transitions
$R\subseteq S\times \Sigma\times S$.

\begin{definition}
  \label{wtsjlap.def:wts}
  A \emph{weighted transition system} is a tuple $(S, s_{0}, \Sigma,
  R, w)$, where
  \begin{itemize}
  \item $( S, s_{0}, \Sigma, R)$ is a transition system, and
  \item $w : R \to \Realnn$ assigns weights to transitions.
  \end{itemize}
\end{definition}

We write $s \xrightarrow{\alpha, w} s'$ whenever $(s,\alpha,s') \in
R$ and $w(s, \alpha, s') = w$,
and $s \not \rightarrow$ if there is no transition $(s,\alpha,s')$ in
$R$ for any $\alpha$ and $s'$.

We lift the standard notions of path and trace to WTS:

\begin{definition}
  \label{wtsjlap.def:wts_traces}
  Let $\mathcal{S}=( S, s_0,\Sigma, R, w)$ be a WTS and
  $s\in S$.  A \emph{path from $s$} in $\mathcal{S}$ is a (possibly
  infinite) sequence $((s_0,\alpha_0,s_1), (s_1 \alpha_1,s_2), \dots)$ of
  transitions $(s_i,\alpha_i,s_{i+1}) \in R$ with $s_0= s$.  A
  \emph{(weighted) trace from $s$} is a sequence $(( \alpha_0,
  w_0),( \alpha_1, w_1), \dots)$ of pairs $( \alpha_i,
  w_i)\in \Sigma\times \Realnn$ for which there exists a path
  $((s_0,\alpha_0,s_1), (s_1 \alpha_1,s_2), \dots)$ from $s$ for which
  $w_i= w( s_i, \alpha_i, s_{ i+ 1})$.
\end{definition}

The set of traces from a state $s$ is denoted $\tracesfrom s$.  Given
a trace $\sigma$, we denote by $U( \sigma)\in \Sigma^\omega$ its
label sequence (\ie the associated unweighted trace), 
and by $\sigma_i$ its $i$'th label-weight pair.

\section{Quantitative Analysis}
\label{wtsjlap.sec:analysis}

In this section we introduce our quantitative analysis of WTS, both in
a linear and in a branching setting.  For ease of exposition we
concentrate on \emph{trace inclusion} and \emph{simulation} here and
defer treatment of both trace equivalence and bisimulation to other
work.  We shall introduce three different quantitative notions of
trace inclusion and of simulation, all filling in the gap between the
\emph{unweighted} and the \emph{weighted} relations, which we recall
below:

\begin{definition}
  \label{wtsjlap.de:simulation}
  Let $( S, s_{0}, \Sigma, R, w)$ be a WTS.  A relation
  $R \subseteq S \times S$ is
  \begin{itemize}
  \item an \emph{unweighted simulation} provided that for all $( s,
    t)\in R$ and $s\tto{ \alpha, c} s'$, also $t\tto{ \alpha, d} t'$ for
    some $d\in \Realnn$ and $( s', t')\in R$,
  \item a \emph{(weighted) simulation} provided that for all $( s,
    t)\in R$ and $s\tto{ \alpha, c} s'$, also $t\tto{ \alpha, c} t'$ for
    some $( s', t')\in R$.
  \end{itemize}
  We write
  \begin{itemize}
  \item $s\uwsim t$ if $( s, t)\in R$ for some unweighted simulation
    $R$,
  \item $s\wsim t$ if $( s, t)\in R$ for some weighted simulation $R$.
  \end{itemize}
  Also, we write
  \begin{itemize}
  \item $s\le^u t$ if $U( \tracesfrom s)\subseteq U( \tracesfrom t)$,
  \item $s\le t$ if $\tracesfrom s\subseteq \tracesfrom t$.
  \end{itemize}
\end{definition}

We shall fill in the gap between unweighted and weighted relations
using (asymmetric) \emph{distance functions}
$d: S\times S\to \Realnn\cup\{ \infty\}$.  Any of the distances
defined below will obey the properties given in the following
definition.

\begin{definition}
  \label{wtsjlap.def:dist-require}
  A hemimetric $d: S\times S\to \Realnn\cup\{ \infty\}$ defined on the
  states of a WTS $( S, s_{0}, \Sigma, R, w)$ is called
  \begin{itemize}
  \item a \emph{linear distance} if $s\le t$ implies $d( s, t)= 0$
    and $s\not\le^u t$ implies $d( s, t)= \infty$,
  \item a \emph{branching distance} if $s\wsim t$ implies
    $d( s, t)= 0$ and $s\not\uwsim t$ implies $d( s, t)= \infty$.
  \end{itemize}
\end{definition}

As usual, we can generalize distances between states of a single WTS
to distances between two different WTS by taking their disjoint
union.

Our distance functions are essentially based on three different
metrics on the set of sequences of real numbers.  Throughout this
work, these are referred to as
\emph{point-wise}~\eqref{wtsjlap.eq:sequence_pw},
\emph{accumulated}~\eqref{wtsjlap.eq:sequence_ac}, and
\emph{maximum-lead}~\eqref{wtsjlap.eq:sequence_pm} distances,
respectively.  For sequences $a=( a_i)$, $b=( b_i)$ these are defined
as follows:
\begin{align}
  d_\bullet( a, b) &= \sup_i\big\{ |a_i-
  b_i|\big\} \label{wtsjlap.eq:sequence_pw}\\
  d_+( a, b) &= \sum_i |a_i- b_i| \label{wtsjlap.eq:sequence_ac}\\
  d_\pm( a, b) &= \sup_i \Big\{ \Bigl| \sum^i_{j=0} a_j- \sum^i_{j=0}
  b_j\Bigr|\Big\} \label{wtsjlap.eq:sequence_pm}
\end{align}
The intuition behind these metrics is that $d_\bullet$ measures the
largest individual difference of sequence entries, $d_+$ measures the
accumulated sum of (the absolute values of) the entries' differences,
and $d_\pm$ measures the largest \emph{lead} of one sequence over the
other, \ie the maximum difference in accumulated values.  Hence the
maximum-lead distance of two sequences is the same as the point-wise
distance of their \emph{partial-sum} sequences.

Besides the above three, other metrics on sequences of reals are also
of interest, and we will see in Chapter~\ref{ch:qltbt} that linear and
branching distances of WTS based on these other metrics can be
developed similarly to the ones we introduce in this chapter.

In the following we will consider \textit{discounted} distances, where
the contribution of each step is decreased exponentially over
time.  To this end, we fix a discounting factor $\lambda\in[ 0, 1]$;
as extreme cases, $\lambda= 1$ means that the future is
undiscounted, and $\lambda= 0$ means that only the present is
considered.

Also, we fix a WTS $\mathcal{S}=( S, s_{0}, \Sigma, R, w)$.

\subsection{Linear distances}
\label{wtsjlap.sec:tracedistance}

We will now introduce our quantitative trace distances.

\begin{definition}
  \label{wtsjlap.de:dist-of-traces}
  For traces $\sigma$, $\tau$, the point-wise, accumulating, and
  maximum-lead \emph{trace distances} are given by
  $\pwtd{ \sigma, \tau}= \awtd{ \sigma, \tau}= \mltd{
    \sigma, \tau}= \infty$ if $U( \sigma)\ne U( \tau)$, and for
  $U( \sigma)= U( \tau)$,
  \begin{align*}
    \pwtd{\sigma,\tau} &= \sup_{i} \big\{ \lambda^i\,
    |\cost{\sigma_{i}} -\cost{\tau_{i}}|\big\}\,, \\
    \awtd{\sigma,\tau} &= \sum_{i} \lambda^i\,
    |\cost{\sigma_{i}} -\cost{\tau_{i}}|\,, \\
    \mltd{\sigma,\tau} &= \sup_{i} \Big\{
    \lambda^i\, \Bigl| \sum^{i}_{j=0}\cost{\sigma_{j}} -
    \sum^{i}_{j=0}\cost{\tau_{j}}\Bigr|\Big\}\,.
  \end{align*}
\end{definition}

Observe that the above distances on traces are symmetric; they are
indeed \emph{metrics} on the set of traces.  This is not the case when
lifted to states:

\begin{definition}
  \label{wtsjlap.def:trace_distances}
  For states $s, t\in S$, the point-wise, accumulating and
  maximum-lead \emph{linear distances} are given as follows:
  \begin{align*}
    \apwdist{s,t} &= \adjustlimits \sup_{\sigma \in \tracesfrom{s}}
    \inf_{\tau \in \tracesfrom{t}} \pwtd{\sigma,\tau}
    \\
    \aacdist{s,t} &= \adjustlimits \sup_{\sigma \in \tracesfrom{s}}
    \inf_{\tau \in \tracesfrom{t}} \awtd{\sigma,\tau} \\
    \apmdist{s,t} &= \adjustlimits \sup_{\sigma \in \tracesfrom{s}}
    \inf_{\tau \in \tracesfrom{t}} \mltd{\sigma,\tau}
  \end{align*}
\end{definition}

Note that this is precisely the Hausdorff-hemimetric construction,
hence it can be generalized to other distances between traces.  Also,
it is quite natural, \cf~Proposition~\ref{wtsjlap.prop:haus}.  It can
easily be shown that the distances defined above are indeed linear
distances in the sense of Definition~\ref{wtsjlap.def:dist-require}.

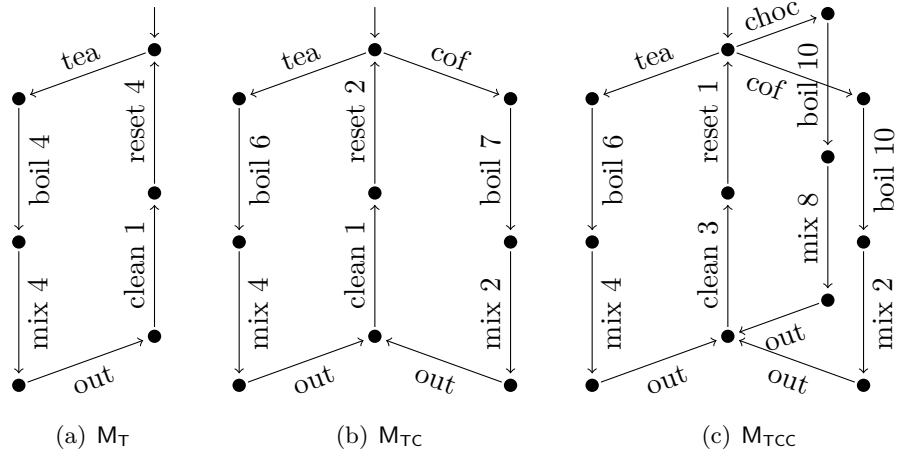
\begin{figure}[tbp]
  \centering
  \subbottom[$\T$]{%
    \begin{tikzpicture}[->,shorten >=1pt, auto, node distance=.75cm,
      initial text=]
      
      \begin{scope}[shape=circle, outer sep=1pt,minimum
        size=5pt,inner sep=0pt, node distance=1.9cm] 
        \node[fill] (t0) {};        
        \node[fill] (t1) at (200:1.9){};
        \node[fill] (t2) [below of=t1]{};
        \node[fill] (t3) [below of=t2] {};
        \node[fill] (t5) [below of=t0] {};
        \node[fill] (t4) [below of=t5] {};
      \end{scope}
      
      \node (start) at (90:.7){};
      \draw (start) -- (t0);
      
      \begin{scope}
        \draw (t0) -- node[above, rotate=20] {tea} (t1);
        \draw (t3) -- node[below, rotate=20] {out} (t4);
        \tikzstyle{every node}=[rotate=90]
        \draw (t1) -- node[below] {boil 4}  (t2);
        \draw (t2) -- node[below] {mix 4}   (t3);
        \draw (t4) -- node[above] {clean 1} (t5);
        \draw (t5) -- node[above] {reset 4} (t0);          
      \end{scope}
      
    \end{tikzpicture}
  }\qquad%
  \subbottom[$\TC$]{%
    \begin{tikzpicture}[->,shorten >=1pt, auto, node distance=.75cm,
      initial text=]
      
      \begin{scope}[shape=circle, outer sep=1pt,minimum
        size=5pt,inner sep=0pt, node distance=1.9cm] 
        \node[fill] (t0) {};        
        \node[fill] (t1) at (200:1.9){};
        \node[fill] (t2) [below of=t1]{};
        \node[fill] (t3) [below of=t2] {};
        \node[fill] (t5) [below of=t0] {};
        \node[fill] (t4) [below of=t5] {};
        \node[fill] (t6) at (340:1.9){};
        \node[fill] (t7) [below of=t6]{};
        \node[fill] (t8) [below of=t7]{};
      \end{scope}
      
      \node (start) at (90:.7){};
      \draw (start) -- (t0);

      \begin{scope}
        \draw (t0) -- node[above, rotate=20]  {tea} (t1);
        \draw (t3) -- node[below, rotate=20]  {out} (t4);
        \draw (t0) -- node[above, rotate=-20] {cof} (t6);
        \draw (t8) -- node[below, rotate=-20] {out} (t4);
        \tikzstyle{every node}=[rotate=90]
        \draw (t1) -- node[below] {boil 6}  (t2);
        \draw (t2) -- node[below] {mix 4}   (t3);
        \draw (t4) -- node[above] {clean 1} (t5);
        \draw (t5) -- node[above] {reset 2} (t0);
        
        \draw (t6) -- node[above] {boil 7} (t7);
        \draw (t7) -- node[above] {mix 2}  (t8);
      \end{scope}
      
    \end{tikzpicture}
  }\qquad%
  \subbottom[$\TCC$]{%
    
    \begin{tikzpicture}[->,shorten >=1pt, auto, node distance=.75cm,
      initial text=]
      
      \begin{scope}[shape=circle, outer sep=1pt,minimum
        size=5pt,inner sep=0pt, node distance=1.9cm] 
        \node[fill] (t0) {};
        \node[fill] (t1) at (200:1.9){};
        \node[fill] (t2) [below of=t1]{};
        \node[fill] (t3) [below of=t2] {};
        \node[fill] (t5) [below of=t0] {};
        \node[fill] (t4) [below of=t5] {};
        \node[fill] (t6) at (340:1.9){};
        \node[fill] (t7) [below of=t6]{};
        \node[fill] (t8) [below of=t7]{};
        \node[fill] (t9) at (20:1.4){};
        \node[fill] (t10) [below of=t9]{};
        \node[fill] (t11) [below of=t10]{};
      \end{scope}
      
      \node (start) at (90:.7){};
      \draw (start) -- (t0);
      
      \begin{scope}
        \tikzstyle{every node}=[rotate=20]
        \draw (t0) -- node[above]  {tea}  (t1);
        \draw (t3) -- node[below]  {out}  (t4);
        \draw (t0) -- node[above]  {choc} (t9);
        \draw (t11) -- node[below] {out}  (t4);
        \tikzstyle{every node}=[rotate=-20]
        \draw (t0) -- node[below, pos=.3]  {cof} (t6);
        \draw (t8) -- node[below]          {out} (t4);
        \tikzstyle{every node}=[rotate=90]
        \draw (t1) -- node[below] {boil 6}  (t2);
        \draw (t2) -- node[below] {mix 4}   (t3);
        \draw (t4) -- node[above] {clean 3} (t5);
        \draw (t5) -- node[above] {reset 1} (t0);
        
        \draw (t6) -- node[below] {boil 10}(t7);
        \draw (t7) -- node[below] {mix 2}   (t8);
        
        \draw (t9) -- node[above] {boil 10}  (t10);
        \draw (t10) -- node[above] {mix 8}  (t11);
      \end{scope}	            
    \end{tikzpicture}
  }%
  \caption{%
    \label{wtsjlap.fig:example2}
    Three beverage machines}
\end{figure}

\begin{example}
  To illustrate differences between the three linear distances
  introduced above, consider the three WTS models of beverage machines
  depicted in Figure~\ref{wtsjlap.fig:example2}; a Tea maker $\T$, a
  Tea and Coffee maker $\TC$ and a Tea, Coffee and Chocolate maker
  $\TCC$.  In the figure, edges without specified weight have weight
  $0$.

  The production of a beverage consists of six operations: Selecting
  the drink, boiling the water, mixing the beverage, outputting the
  finished product, self cleaning, and resetting. Each operation
  consumes a certain amount of power depending on its implementation
  by electrical components.  Weights thus model power consumption, and
  are given in such a way that in more powerful machines, some
  operations, for example boiling, require more power, whereas some
  other, for example resetting, require less.

  By design of the beverage machines, there are unweighted trace
  inclusions $\T \le^u \TC \le^u \TCC$; any behavior of a ``lesser''
  machine can be emulated qualitatively by a ``better'' one.  What is
  less obvious is how they compare in power consumption.

  Noting that any infinite behavior in the beverage machines consists
  of loops of width $6$, we can introduce some ad-hoc notation to
  simplify calculations.  Let $\apwdist{\T, \TC}^6$ denote point-wise
  distance from $\T$ to $\TC$ when only traces of length at most $6$
  are considered, and similarly for the other machines and distances.
  For a (realistic) discounting factor of $\lambda = .90$, the
  point-wise distances can be computed as follows:
  \begin{alignat*}{2}
    \apwdist{ \T, \TC} &= \sup_i \big\{ \apwdist{ \T, \TC}^6\cdot
    \lambda^{6i}\big\}= \apwdist{ \T, \TC}^6 & &= 1.80 \\
    \apwdist{ \T, \TCC} &= \sup_i \big\{ \apwdist{ \T, \TCC}^6\cdot
    \lambda^{6i}\big\}= \apwdist{ \T, \TCC}^6 & &= 1.80 \\
    \apwdist{ \TC, \TCC} &= \sup_i \big\{ \apwdist{ \TC, \TCC}^6\cdot
    \lambda^{6i}\big\}= \apwdist{ \TC, \TCC}^6 & &= 2.70
  \end{alignat*}
  For the accumulating distances:
  \begin{alignat*}{2}
    \aacdist{ \T, \TC} &= \sum_i \apwdist{ \T, \TC}^6\cdot
    \lambda^{6i} = \apwdist{ \T, \TC}^6 \frac1{1-\lambda^6} &
    &\approx 2.52\\
    \aacdist{ \T, \TCC} &= \sum_i \apwdist{ \T, \TCC}^6\cdot
    \lambda^{6i} = \apwdist{ \T, \TCC}^6 \frac1{1-\lambda^6} &
    &\approx 8.80\\
    \aacdist{ \TC, \TCC} &= \sum_i \apwdist{ \TC, \TCC}^6\cdot
    \lambda^{6i} = \apwdist{ \TC, \TCC}^6 \frac1{1-\lambda^6} &
    &\approx 7.41
  \end{alignat*}
  Similarly, the maximum-lead distances can be computed as follows:
  \begin{align*}
    \apmdist{\T,\TC} &\approx
    1.62\\
    \apmdist{\T,\TCC} &\approx
    2.62\\
    \apmdist{\TC,\TCC} &\approx 3.34
  \end{align*}
  \qedhere
\end{example}

The following lemma provides recursive bounds on linear distances and
will be useful as motivation for the definition of branching distance
below.  For the bound on the maximum-lead distance, we introduce a
generalization of $\apmdists$ by
\begin{align*}
  \apmdist{s,t}( \delta) &= \adjustlimits \sup_{\sigma \in \tracesfrom{s}}
  \inf_{\tau \in \tracesfrom{t}} \apmdist{\sigma,\tau}(
  \delta)\,, \\
  \apmdist{\sigma,\tau}( \delta) &= \sup_{i} \Big\{
  \lambda^i\, \Bigl| \delta+ \sum^{i}_{j=0}\cost{\sigma_{j}} -
  \sum^{i}_{j=0}\cost{\tau_{j}}\Bigr|\Big\}\,.
\end{align*}
Here $\delta\in \Real$ is the \emph{lead} which $\sigma$ has already
acquired over $\tau$; hence
$\apmdist{\sigma,\tau}( 0)= \apmdist{\sigma,\tau}$ and
$\apmdist{s,t}( 0)= \apmdist{s,t}$.

\begin{lemma}
  \label{wtsjlap.le:dist-fix}
  For states $s, t\in S$,
  \begin{align*}
    \apwdist{s,t} &\le \adjustlimits \sup_{s\tto{ \alpha, c} s'} \inf_{t\tto{
        \alpha, d} t'} \max\big(| c- d|, \lambda\, \apwdist{ s', t'}\big)\,,
    \\
    \aacdist{s,t} &\le \adjustlimits \sup_{s\tto{ \alpha, c} s'} \inf_{t\tto{
        \alpha, d} t'} |c- d|+ \lambda\,
    \aacdist{s', t'}\,,
    \\
    \apmdist{s,t}( \delta) &\le \adjustlimits \sup_{s\tto{ \alpha,
        c} s'} \inf_{t\tto{ \alpha, d} t'} \max\Big( | \delta|,
    \lambda\, \apmdist{ s', t'}\big( \tfrac{ \delta+ c-
        d}{\lambda}\big)\Big)\,.
  \end{align*}
\end{lemma}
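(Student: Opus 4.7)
The plan is to unfold each trace-distance definition along the first transition and then rearrange the resulting $\sup$/$\inf$ by the standard min--max inequality. For each of the three metrics I would first establish a one-step recursive identity on traces: if $\sigma = (\alpha,c)\cdot\sigma' \in \tracesfrom s$ (with $s\tto{\alpha,c} s'$ and $\sigma'\in\tracesfrom{s'}$) and $\tau = (\beta,d)\cdot\tau'\in\tracesfrom t$, then $\pwtd{\sigma,\tau}=\awtd{\sigma,\tau}=\mltd{\sigma,\tau}=\infty$ unless $\beta=\alpha$, and when $\beta=\alpha$ the defining sum/sup splits into the $i=0$ contribution and a $\lambda$-shift of the tail, giving
\begin{align*}
\pwtd{\sigma,\tau} &= \max\bigl(|c-d|,\;\lambda\,\pwtd{\sigma',\tau'}\bigr),\\
\awtd{\sigma,\tau} &= |c-d|+\lambda\,\awtd{\sigma',\tau'},
\end{align*}
and an analogous identity for $\apmdist{\sigma,\tau}(\delta)$ where the initial lead $\delta$ and the weight difference $c-d$ are combined to form the lead seen by the tail, with the outer $\lambda^i$ discount rescaling the recursive parameter.

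Next, for a fixed $\sigma = (\alpha,c)\cdot\sigma'$ I would compute $\inf_{\tau\in\tracesfrom t}$ by restricting (without loss) to $\tau = (\alpha,d)\cdot\tau'$ with $t\tto{\alpha,d} t'$ and $\tau'\in\tracesfrom{t'}$, then push $\inf_{\tau'}$ past $\max(\,|c-d|,\lambda\,\cdot)$ and $|c-d|+\lambda\,\cdot$, which is legitimate because both operations are continuous and monotone in their right argument. After this pushing, $\inf_{\tau'}\pwtd{\sigma',\tau'}$ appears, and similarly for the other two metrics. The result is that for each $\sigma = (\alpha,c)\cdot\sigma'$ and each transition $t\tto{\alpha,d} t'$,
\begin{equation*}
\inf_{\tau\in\tracesfrom t}\pwtd{\sigma,\tau} \le \max\bigl(|c-d|,\;\lambda\inf_{\tau'\in\tracesfrom{t'}}\pwtd{\sigma',\tau'}\bigr),
\end{equation*}
and taking $\inf$ over outgoing transitions $t\tto{\alpha,d} t'$ on the right preserves the inequality.

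The final step is to take $\sup$ over $\sigma$. Decomposing $\sup_{\sigma\in\tracesfrom s}$ as $\sup_{s\tto{\alpha,c} s'}\sup_{\sigma'\in\tracesfrom{s'}}$ and invoking the min--max inequality $\sup_{x}\inf_{y}F(x,y)\le\inf_{y}\sup_{x}F(x,y)$, I would swap $\sup_{\sigma'}$ past $\inf_{t\tto{\alpha,d} t'}$. Since the quantity being optimized is of the form $\max(|c-d|,\lambda\,\inf_{\tau'}\pwtd{\sigma',\tau'})$ (respectively $|c-d|+\lambda\,\inf_{\tau'}\awtd{\sigma',\tau'}$, etc.) and $(c,d,s',t')$ are independent of $\sigma'$, the inner $\sup_{\sigma'}$ then commutes with $\max$/$+$ and is absorbed into the recursive linear distance $\apwdist{s',t'}$ (resp.\ $\aacdist{s',t'}$, $\apmdist{s',t'}(\cdot)$). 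This yields the three claimed bounds in parallel.

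The bookkeeping for the point-wise and accumulating cases is straightforward from the recursive identities; the main obstacle is the maximum-lead case, where I must carefully propagate the lead parameter $\delta$ through the recursion, re-expressing the contribution of the first step and the discount of the tail so that the recursion closes back onto $\apmdist{s',t'}$ evaluated at the updated lead. Once that identity is in place, the sup/inf manipulation is identical to the other two cases.
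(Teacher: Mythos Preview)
Your proposal is correct and follows essentially the same approach as the paper: decompose each trace along its first transition, use the resulting one-step recursion, and bound $\inf_\tau$ for a fixed $\sigma$ before taking the outer $\sup$. The only cosmetic difference is that the paper uses an $\epsilon$-approximation to construct a near-optimal $\tau$ and directly invokes the trivial bound $\inf_{\tau'}\awtd{\sigma',\tau'}\le \aacdist{s',t'}$, whereas you keep $\inf_{\tau'}$ explicit and absorb it later via the min--max inequality; these are the same argument, and your flagging of the maximum-lead case as the one requiring careful bookkeeping of the lead parameter matches the paper's remark that ``the others are similar'' while leaving that case implicit.
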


\begin{proof}
  We only show the proof for accumulated distance; the others are
  similar.  If $\tracesfrom s= \emptyset$, then $\aacdist{ s, t}= 0$
  and we are done.  Otherwise, let $\sigma\in \tracesfrom s$; we need
  to show that
  \begin{equation*}
    \inf_{\tau\in \tracesfrom t} \aacdist{ \sigma,
      \tau}\le \adjustlimits \sup_{s\tto{ \alpha, c} s'}
    \inf_{t\tto{ \alpha, d} t'} |c- d| +
    \lambda\, \aacdist{s', t'}\,.
  \end{equation*}

  Let $\pi$ be a path from $s$ which realizes $\sigma$, write $\pi=
  s\tto{ \alpha_1, c_1} s_1\to \dots$, and let $\sigma_1$ be the trace
  generated by the suffix of $\pi$ starting in $s_1$.  If $t\not\tto{
    \alpha_1}$, then the infimum on the right hand side of the equation
  is $\infty$, and we are done. 

  Assume that the infimum is finite and let $\epsilon\in \Realp$.
  There exists $t\tto{ \alpha_1, d_1} t_1$ for which
  \begin{equation*}
    | c_1- d_1|+ \lambda\, \aacdist{
      s_1, t_1}\le \inf_{t\tto{ \alpha_1, d} t'} |c_1- d| +
    \lambda\, \aacdist{ s_1, t'}+
    \tfrac \epsilon 2\,.
  \end{equation*}
  Let $\tau_1\in \tracesfrom{ t_1}$ be such that
  $\aacdist{ \sigma_1, \tau_1}\le \aacdist{ s_1, t_1}+ \tfrac \epsilon
  2$.  Let $\tau=( \alpha_1, d_1). \tau_1$, the concatenation, then
  \begin{align}
    \aacdist{ \sigma, \tau} &= | c_1- d_1|+
    \lambda\, \aacdist{ \sigma_1, \tau_1} \notag\\
    &\le | c_1- d_1|+ \lambda\,
    \aacdist{
      s_1, t_1}+ \tfrac\epsilon2 \notag\\
    &\le \inf_{t\tto{ \alpha_1, d} t'} |c_1- d| +
    \lambda\, \aacdist{s_1, t'}+
    \epsilon\,. \label{wtsjlap.eq:sigma'}
  \end{align}

  We have shown that for all $\epsilon\in \Realp$, there exists
  $\tau\in \tracesfrom t$ for which
  Equation~\eqref{wtsjlap.eq:sigma'} holds, hence
  \begin{equation*}
    \inf_{\tau\in
      \tracesfrom t} \aacdist{ \sigma, \tau}\le \inf_{t\tto{ \alpha_1, d}
      t'} |c_1- d| + \lambda\,
    \aacdist{ s_1, t'}
  \end{equation*}
  and the claim follows. \qed
\end{proof}

\subsection{Simulation distances}
\label{wtsjlap.sec:simulationmetrics}

In the following we use parametrized families $\{ R_\epsilon\subseteq
S\times S\}$ and $\{ R_{ \epsilon, \delta}\subseteq S\times
S\}$, \ie~functions $\Realnn\to 2^{ S\times S}$ and
$\Realnn\times \Realnn\to 2^{ S\times S}$, respectively; we
shall show how these give rise to distances in
Section~\ref{wtsjlap.sec:branchingdistances}.

\begin{definition}
  \label{wtsjlap.def:simu}
  A family of relations $\cal R = \{ R_\epsilon \subseteq S \times
  S \mid \epsilon \geq 0\}$ is
  \begin{itemize}
  \item a \emph{point-wise simulation family} provided that for all
    $( s, t)\in R_\epsilon\in \cal R$ and $s\tto{ \alpha, c} s'$, also
    $t\tto{ \alpha, d} t'$ with $| c- d|\le \epsilon$ for some
    $d\in \Realnn$ and $( s', t')\in R'_\epsilon\in \cal R$ with
    $\epsilon' \le \tfrac \epsilon \lambda$,
  \item an \emph{accumulating simulation family} provided that for all
    $( s, t)\in R_\epsilon\in \cal R$ and $s\tto{ \alpha, c} s'$, also
    $t\tto{ \alpha, d} t'$ with $| c- d|\le \epsilon$ for
    some $d\in \Realnn$ and $( s', t')\in R'_\epsilon\in \cal R$ with
    $\epsilon' \le \frac{\epsilon-| c- d|}{\lambda}$.
  \end{itemize}
  A family of relations $\cal R = \{ R_{ \epsilon, \delta} \subseteq S \times
  S \mid \epsilon \geq 0, -\epsilon \le \delta \le
  \epsilon\}$ is
  \begin{itemize}
  \item a \emph{maximum-lead simulation family} provided that for all
    $( s, t)\in R_{ \epsilon, \delta}\in \cal R$ and
    $s\tto{ \alpha, c} s'$, also $t\tto{ \alpha, d} t'$ with
    $|\delta+ c- d|\le \epsilon$ for some $d\in \Realnn$ and
    $( s', t')\in R_{ \epsilon', \delta'}\in \cal R$ with
    $\epsilon' \le \frac{\epsilon}{\lambda}$ and
    $\delta'\le \frac{ \delta+ c- d}{ \lambda}$.
  \end{itemize}
  We write
  \begin{itemize}
  \item $s\pwsim t$ if $( s, t)\in R_\epsilon\in \cal R$ for some
    point-wise simulation family $\cal R$,
  \item $s\awsim t$ if $( s, t)\in R_\epsilon\in \cal R$ for some
    accumulating simulation family $\cal R$,
  \item $s\pmsim t$ if $( s, t)\in R_{ \epsilon, 0}\in \cal R$ for some
    maximum-lead simulation family $\cal R$.
  \end{itemize}
\end{definition}

Note that the relations defined in the last part above again can be
collected into families $\mathord{\pwsim[]}=\{ \pwsim\mid
\epsilon\ge 0\}$, $\mathord{\awsim[]}=\{ \awsim\mid \epsilon\ge
0\}$, and $\mathord{\pmsim[]}=\{ \pmsim[ \epsilon, \delta]\mid
\epsilon, \delta\ge 0\}$.

Some explanatory remarks regarding these definitions are in order.
For point-wise simulation, $( s, t)\in R_\epsilon$ means that any
computation from $s$ can be matched by one from $t$ with the same
labels and a point-wise weight difference of at most $\epsilon$.
Hence the requirement that $s\tto{ \alpha, c} s'$ imply
$t\tto{ \alpha, d} t'$ with weight difference $| c- d|\le \epsilon$,
and that computations from the target states $s'$, $t'$ be matched
with some (inversely) discounted point-wise distance
$\epsilon'\le \frac{ \epsilon}{ \lambda}$.

For accumulated simulation, $( s, t)\in R_\epsilon$ is interpreted so
that any computation from $s$ can be matched by one from $t$ with the
same labels and accumulated absolute-value weight difference at most
$\epsilon$.  Hence we again require that $| c- d|\le \epsilon$, but
now computations from the target states have to be matched by what is
left of $\epsilon$ after $| c- d|$ has been used (and inverse
discounting applied).

Maximum-lead simulation is slightly more complicated, because we need
to keep track of the lead $\delta$ which one computation has
accomplished over the other.  Hence $( s, t)\in R_{ \epsilon, \delta}$
is to mean that any computation from $s$ \emph{which starts with a
  lead of $\delta$ over $t$} can be matched by a computation from $t$
with accumulated weight difference at most $\epsilon$.  Thus we
require that lead plus weight difference, $\delta+ c- d$, be
in-between $-\epsilon$ and $\epsilon$, and the new lead for
computations from the target states is set to that value (again with
inverse discounting applied).

For later use we collect the following easy facts about the above
simulations:
\begin{lemma}
  \label{wtsjlap.fact}
  \mbox{}
  \begin{enumerate}
  \item The families $\pwsim[]$, $\awsim[]$ and $\pmsim[]$
    are the largest respective simulation families.
  \item For $\epsilon\le \epsilon'$ and
    $R_\epsilon, R'_\epsilon \in \cal R$ a point-wise or accumulating
    simulation family, $R_\epsilon\subseteq R'_\epsilon$.  For
    $\epsilon\le \epsilon'$, $-\epsilon\le \delta\le \epsilon$ and
    $R_{ \epsilon, \delta}, R_{ \epsilon', \delta}\in \cal R$ a
    maximum-lead simulation family,
    $R_{ \epsilon, \delta}\subseteq R_{ \epsilon', \delta}$.
  \item For states $s, t\in S$ and $\epsilon\le \epsilon'$,
    $s\pwsim t$ implies $s\pwsim[\epsilon'] t$, $s\awsim t$ implies
    $s\awsim[\epsilon'] t$, and $s\pmsim t$ implies
    $s\pmsim[\epsilon'] t$.
  \item For states $s, t\in S$, $s\wsim t$ implies $s\pwsim[0] t$,
    $s\awsim[0] t$, and $s\pmsim[0] t$.
  \item For states $s, t\in S$, $s\not\uwsim t$ implies
    $s\not\pwsim t$, $s\not\awsim t$, and $s\not\pmsim t$ for any
    $\epsilon$.
  \end{enumerate}
\end{lemma}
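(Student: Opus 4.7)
The plan is to handle the five parts in a largely uniform way, exploiting that all three notions of simulation family are defined by a coinductive closure condition so that the standard fixed-point reasoning applies. For Part~(1) I would take the union over all families: define $R^*_\epsilon=\bigcup_{\mathcal R}R_\epsilon$ ranging over all point-wise simulation families $\mathcal R$, and verify directly from Definition~\ref{wtsjlap.def:simu} that $\{R^*_\epsilon\}$ is again a point-wise simulation family, since any witness $t\tto{\alpha,d}t'$ provided by an individual $\mathcal R$ is a fortiori a witness for $\{R^*_\epsilon\}$, with successor pair $(s',t')\in R_{\epsilon'}\subseteq R^*_{\epsilon'}$. The arguments for accumulating and maximum-lead families are identical, keeping track of the extra parameter $\delta$ in the last case.

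For Part~(2), monotonicity will come from the downward-closed family $R'_\epsilon=\bigcup_{\delta\le\epsilon}R_\delta$ (fixing the lead parameter in the maximum-lead case). A direct check shows this is still a simulation family: any closure obligation for a pair in $R'_\epsilon$ is discharged by an obligation for some $R_\delta$ with $\delta\le\epsilon$, whose successor bound $\delta'\le\delta/\lambda$ is dominated by $\epsilon/\lambda$, so the successor pair lies in $R_{\delta'}\subseteq R'_{\epsilon'}$. Combined with Part~(1), this shows that the maximum families coincide with their monotone closures and are thus themselves monotone in $\epsilon$; Part~(3) then follows as an immediate corollary by unfolding the definitions of $\pwsim$, $\awsim$, $\pmsim$. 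The main subtlety will be the maximum-lead case, where the two-parameter indexing and the condition $\delta'\le(\delta+c-d)/\lambda$ must be threaded correctly, but unioning only over $\epsilon$ while fixing $\delta$ keeps the bookkeeping straightforward.

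For Part~(4), given a weighted simulation $R$ with $(s,t)\in R$, the constant family $R_\epsilon:=R$ for all $\epsilon$ (and $R_{\epsilon,\delta}:=R$ whenever $|\delta|\le\epsilon$ in the maximum-lead case) is a simulation family at parameter $0$: each transition $s\tto{\alpha,c}s'$ is matched by $t\tto{\alpha,c}t'$ with $|c-c|=0$ and successor pair again in $R=R_0$; the discount conditions such as $0/\lambda=0$ and $\delta/\lambda$ respect the indexing trivially. Finally, for Part~(5) I would argue contrapositively: from any simulation family $\mathcal R$ containing $(s,t)$, stripping the weight bookkeeping yields $U=\bigcup_\epsilon R_\epsilon$, which is an unweighted simulation, because the closure condition of Definition~\ref{wtsjlap.def:simu} already provides, for $(s,t)\in R_\epsilon$ and $s\tto{\alpha,c}s'$, some $t\tto{\alpha,d}t'$ with successor pair in $R_{\epsilon'}\subseteq U$, and forgetting the bound on $|c-d|$ recovers the clause of Definition~\ref{wtsjlap.de:simulation}. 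Hence $s\pwsim t$ (and its accumulating and maximum-lead analogues) implies $s\uwsim t$, which is exactly the contrapositive of the stated implication.
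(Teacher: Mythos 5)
Your proposal is correct. Note that the paper itself offers no proof of this lemma --- it is introduced with ``we collect the following easy facts'' and left unproven --- so there is no argument to compare against; your unions-of-families and monotone-closure constructions are exactly the standard fixed-point-style arguments one would expect here, and each of the five verifications you sketch goes through against Definition~\ref{wtsjlap.def:simu}.

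One remark worth making explicit: as literally stated, Part~(2) is false for an \emph{arbitrary} simulation family (e.g.\ on a state with no outgoing transitions, the family with $R_1=\{(s,s)\}$ and $R_\epsilon=\emptyset$ for $\epsilon\ne 1$ vacuously satisfies the defining conditions but is not monotone). Your proof quietly repairs this: by showing that the monotone closure $R'_\epsilon=\bigcup_{\delta\le\epsilon}R_\delta$ is again a simulation family and combining with Part~(1), you establish monotonicity of the \emph{largest} families, which is the reading the paper must intend and is all that Part~(3) and the later development actually use. The bookkeeping you flag in the maximum-lead case is handled correctly --- fixing the lead $\delta$ and unioning only over the first index keeps the constraint $-\epsilon\le\delta\le\epsilon$ and the successor bound $\delta'\le(\delta+c-d)/\lambda$ intact --- and your constant-family argument for Part~(4) and the forgetful union $U=\bigcup_\epsilon R_\epsilon$ for Part~(5) are both sound.
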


\subsection{Branching distances}
\label{wtsjlap.sec:branchingdistances}

We present an alternative characterization of the above simulation
relations in form of recursive equations; note that these closely
resemble the inequalities of Lemma~\ref{wtsjlap.le:dist-fix}:

\begin{definition}\label{wtsjlap.def:branchingdistances}
  For states $s, t\in S$, the \emph{point-wise}, \emph{accumulated},
  and \emph{maximum-lead branching distances} are the respective
  minimal fixed points to the following recursive equations:
  \begin{align*}
    \pwbd{s,t} &= \adjustlimits \sup_{s\tto{ \alpha, c} s'} \inf_{t\tto{
        \alpha, d} t'} \max\big(| c- d|, \lambda\, \pwbd{ s', t'}\big)
    \\
    \awbd{s,t} &= \adjustlimits \sup_{s\tto{ \alpha, c} s'} \inf_{t\tto{
        \alpha, d} t'} |c- d| + \lambda\, \awbd{s', t'}
    \\
    \mlbd{ s, t} &= \mlbd{ s, t}( 0) \\
    &\hspace*{1em}\text{with } \mlbd{s,t}( \delta)= \adjustlimits \sup_{s\tto{ \alpha,
        c} s'} \inf_{t\tto{ \alpha, d} t'} \max\Big( | \delta|,
    \lambda\, \mlbd{ s', t'}\big( \tfrac{ \delta+ c-
      d}{\lambda}\big)\Big)
  \end{align*}
\end{definition}

Again, some remarks regarding these definitions will be in order.
First note that sup and inf are taken over the complete lattice
$\Realnn\cup\{ \infty\}$ here, whence $\inf \emptyset= \infty$ and
$\sup \emptyset= 0$.  Thus $\pwbd{ s, t}= 0$ in case $s\not\to$ and
$\pwbd{ s, t}= \infty$ in case $s\tto{ \alpha, c}$ but $t\not\tto{
  \alpha}$ for some $\alpha$, and similarly for the other distances.

The functionals defined by the first two equations above are
endofunctions on the complete lattice of functions
$S\times S\to \Realnn\cup\{ \infty\}$; they are easily shown to be
monotone, hence the minimal fixed points exist.  For the last
equation, the functional is an endofunction on the complete lattice
$\Real\to\big( S\times S\to \Realnn\cup\{ \infty\}\big)$, mapping each
lead $\delta\in \Real$ to a function $\mlbd{\cdot,\cdot}( \delta)$.
Also this functional can be shown to be monotone and hence to have a
minimal fixed point.

It is not difficult to see that the distances defined above are
branching distances in the sense of
Definition~\ref{wtsjlap.def:dist-require}.  Below we show that they
are closely related to the simulations of
Definition~\ref{wtsjlap.def:simu}:

\begin{proposition}\label{wtsjlap.lem:characterization}
  For states $s, t\in S$ and $\epsilon\in \Realnn$, we have
  \begin{itemize}
  \item $s\pwsim t$ if and only if $\pwbd{ s, t}\le \epsilon$,
  \item $s\awsim t$ if and only if $\awbd{ s, t}\le \epsilon$,
  \item $s\pmsim t$ if and only if $\mlbd{ s, t}\le \epsilon$.
  \end{itemize}
\end{proposition}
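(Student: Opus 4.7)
The three equivalences share a common structure, so my plan is to prove them in parallel, detailing the point-wise case and indicating the modifications needed for the accumulating and maximum-lead variants. In both directions I would exploit that $\pwbd{\cdot,\cdot}$ is the least fixed point of a monotone functional $F$ on the complete lattice of functions $S \times S \to \Realnn \cup \{\infty\}$; by Knaster--Tarski it is hence also the least pre-fixed point of $F$.

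For the implication $s \pwsim t \Rightarrow \pwbd{s,t} \le \epsilon$, I would introduce the function $\mu(s,t) = \inf\{\epsilon' \mid s \pwsim[\epsilon'] t\}$, which is well-defined since by Lemma~\ref{wtsjlap.fact}(1) the family $\pwsim[]$ is itself a point-wise simulation family. The heart of the argument is to show that $\mu$ is a pre-fixed point of $F$; this yields $\pwbd{s,t} \le \mu(s,t) \le \epsilon$ whenever $s \pwsim t$. To verify $F(\mu)(s,t) \le \mu(s,t)$, fix any $\epsilon' > \mu(s,t)$, so $s \pwsim[\epsilon'] t$. For every $s \tto{\alpha,c} s'$, the simulation condition provides $t \tto{\alpha,d} t'$ with $|c-d| \le \epsilon'$ and $\mu(s',t') \le \epsilon'/\lambda$, hence $\max(|c-d|, \lambda\, \mu(s',t')) \le \epsilon'$. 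Taking infimum over matching $t$-transitions, supremum over $s$-transitions, and then letting $\epsilon' \to \mu(s,t)$ gives the desired inequality.

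For the converse $\pwbd{s,t} \le \epsilon \Rightarrow s \pwsim t$, I would show that the family $R_\epsilon = \{(s,t) \mid \pwbd{s,t} \le \epsilon\}$ is itself a point-wise simulation family. Given $(s,t) \in R_\epsilon$ and $s \tto{\alpha,c} s'$, the fixed-point equation gives $\inf_{t \tto{\alpha,d} t'} \max(|c-d|, \lambda\, \pwbd{s',t'}) \le \epsilon$, and a witness $t \tto{\alpha,d} t'$ attaining this bound satisfies $|c-d| \le \epsilon$ and $\pwbd{s',t'} \le \epsilon/\lambda$, placing $(s',t')$ in $R_{\epsilon/\lambda}$ as required. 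The accumulating case differs only in that the residual budget $(\epsilon - |c-d|)/\lambda$ replaces $\epsilon/\lambda$; the maximum-lead case carries the extra parameter $\delta$ and lives on the slightly larger lattice $\Real \to (S \times S \to \Realnn \cup \{\infty\})$, with the lead updating to $(\delta + c - d)/\lambda$ as dictated by its fixed-point equation. Monotonicity of the respective functionals and the Knaster--Tarski argument transfer unchanged.

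The principal technical obstacle is attainment of the infimum $\inf_{t \tto{\alpha,d} t'}$ in the backward direction: the simulation definition requires the \emph{existence} of a witness satisfying the weak inequality $|c-d| \le \epsilon$, which is automatic under image-finiteness of the WTS but needs care in general. I would address this either by an image-finiteness assumption, or by redefining $R_\epsilon$ with strict inequality together with a closure argument over $\epsilon' > \epsilon$, or by exploiting the $\max$ (resp.~sum) structure of the functional to transfer the bound to an actual matching transition. The forward direction is free of this issue, since it only needs the infimum to \emph{be bounded} by $\epsilon'$, not attained.
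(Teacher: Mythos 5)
Your proposal is correct and is, in essence, the detailed version of the argument the paper leaves implicit: the paper's proof of this proposition consists of the single remark that each of the six implications follows by ``standard structural-induction arguments,'' and the standard argument here is exactly your Knaster--Tarski unfolding --- the forward direction by showing that $\mu( s, t)= \inf\{ \epsilon'\mid s\pwsim[\epsilon'] t\}$ is a pre-fixed point of the defining functional, the backward direction by showing that the level sets $R_\epsilon=\{( s, t)\mid \pwbd{ s, t}\le \epsilon\}$ assemble into a simulation family.

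The obstacle you flag in the backward direction is genuine, and you are right not to wave it away: without some branching restriction the proposition as literally stated fails. Take $s\tto{ a, 0} s_0$ with $s_0$ deadlocked, and $t\tto{ a, 1+ 1/n} t_n$ for all $n\ge 1$ with each $t_n$ deadlocked. Then $\pwbd{ s, t}= \inf_n \max( 1+ 1/n, 0)= 1$, but no single transition from $t$ satisfies $| 0- d|\le 1$, so $s\not\pwsim[ 1] t$ (only $s\pwsim[ 1+ \delta] t$ for every $\delta> 0$). So one of the repairs you list is genuinely needed; finite (or compact) branching is the one consistent with the rest of the thesis, which imposes exactly such hypotheses in later chapters (\cf~Chapters~\ref{ch:qltbt} and~\ref{ch:weightedmodal}) precisely so that these infima are attained. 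With that assumption in place both directions go through as you describe, and the accumulating and maximum-lead adaptations are exactly the ones you indicate.
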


\begin{proof}
  Each of the six implications involved can be shown using standard
  structural-induction arguments.
\end{proof}

\section{Properties of distances}

In this section we present a number of properties of the six distances
introduced above.

\subsection{Branching versus linear distance}

For the qualitative relations, simulation implies trace inclusion, \ie
$s\uwsim t$ implies $s\le^u t$, and $s\wsim t$ implies $s\le t$.
Below we show a natural generalization of this to our quantitative
setting, where implications translate to inequalities; note that an
equivalent statement of the theorem is that for any $\epsilon$,
$\bdgeneric{ s, t}\le \epsilon$ implies $\distgeneric{ s, t}\le
\epsilon$ for all three distances considered.

\begin{theorem}
  \label{th:wtsjlap.linvsbra}
  For all states $s, t\in S$, we have
  \begin{equation*}
    \apwdist{ s, t}\le \pwbd{ s, t}\,, \qquad \aacdist{ s, t}\le \awbd{ s,
      t}\,, \qquad \apmdist{ s, t}\le \mlbd{ s, t}\,.
  \end{equation*}
\end{theorem}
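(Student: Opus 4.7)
My plan is to prove all three inequalities by a single template. Fix $s, t \in S$ with $\bdgeneric{s,t} < \infty$ (the infinite case is trivial). I will show that for every $\epsilon > 0$ and every trace $\sigma = ((a_0, c_0), (a_1, c_1), \dots) \in \tracesfrom{s}$, there exists $\tau \in \tracesfrom{t}$ with $\tdgeneric{\sigma, \tau} \le \bdgeneric{s,t} + \epsilon$ (where $\tdgeneric{}$ and $\bdgeneric{}$ denote the corresponding trace and branching distances). Taking the infimum over $\tau$, then the supremum over $\sigma$, and letting $\epsilon \to 0$ yields $\distgeneric{s,t} \le \bdgeneric{s,t}$.

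The trace $\tau$ is built one transition at a time, driven by the fixed-point equations of Definition wtsjlap.def:branchingdistances. Set $s^0 = s$, $t^0 = t$. At step $i$, with $\sigma$'s $i$-th transition being $s^i \tto{a_i, c_i} s^{i+1}$, the equation for $\bdgeneric{s^i, t^i}$ is an outer sup (over moves from $s^i$) of an inner inf (over matching moves from $t^i$). Since this infimum need not be attained, I pick $t^i \tto{a_i, d_i} t^{i+1}$ realising it up to slack $\epsilon_i > 0$. Choosing $\epsilon_i = \epsilon / 2^{i+1}$ will keep the accumulated errors below $\epsilon$.

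For the point-wise distance, the chosen transition satisfies $\max(|c_i - d_i|, \lambda \pwbd{s^{i+1}, t^{i+1}}) \le \pwbd{s^i, t^i} + \epsilon_i$. The second component gives $\lambda^{i+1} \pwbd{s^{i+1}, t^{i+1}} \le \lambda^i \pwbd{s^i, t^i} + \lambda^i \epsilon_i$, which iterates to $\lambda^i \pwbd{s^i, t^i} \le \pwbd{s,t} + \sum_{j<i} \lambda^j \epsilon_j$. Combined with the first component, this yields $\lambda^i |c_i - d_i| \le \pwbd{s,t} + \epsilon$ uniformly in $i$, so $\pwtd{\sigma, \tau} \le \pwbd{s,t} + \epsilon$. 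The accumulating case instead exploits $|c_i - d_i| + \lambda \awbd{s^{i+1}, t^{i+1}} \le \awbd{s^i, t^i} + \epsilon_i$; multiplying by $\lambda^i$ and summing telescopically gives, for every $N$, $\sum_{i=0}^N \lambda^i |c_i - d_i| \le \awbd{s,t} - \lambda^{N+1} \awbd{s^{N+1}, t^{N+1}} + \epsilon \le \awbd{s,t} + \epsilon$, whence $\awtd{\sigma, \tau} \le \awbd{s,t} + \epsilon$.

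The maximum-lead case requires additionally threading the lead parameter through the construction: start with $\delta_0 = 0$ and set $\delta_{i+1} = (\delta_i + c_i - d_i)/\lambda$, using the fixed-point equation for $\mlbd{s^i,t^i}(\delta_i)$ to pick the next transition so that $\max(|\delta_i|, \lambda \mlbd{s^{i+1},t^{i+1}}(\delta_{i+1})) \le \mlbd{s^i,t^i}(\delta_i) + \epsilon_i$. The same telescoping argument then yields $\lambda^i |\delta_i| \le \mlbd{s,t}(0) + \epsilon$. The main obstacle is precisely here: one must verify that unfolding the recurrence $\delta_{i+1} = (\delta_i + c_i - d_i)/\lambda$ from $\delta_0 = 0$ gives $\lambda^i \delta_i$ equal to the (possibly weighted) partial sum of $c_j - d_j$ that the definition of $\mltd{\sigma,\tau}$ requires, so that the uniform bound on $\lambda^i|\delta_i|$ translates into the bound on $\mltd{\sigma, \tau}$. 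Once this correspondence is in hand, the point-wise and accumulating arguments appear as simplifications of the same template in which no lead tracking is needed.
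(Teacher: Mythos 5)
Your overall strategy is sound, and for the point-wise and accumulating distances the argument is complete and correct: you build the response trace $\tau$ one transition at a time with per-step slack $\epsilon_i=\epsilon/2^{i+1}$, and the telescoping computations are right (one should perhaps add the remark that $\pwbd{s^i,t^i}<\infty$ is maintained inductively, which is what guarantees that $t^i$ has an $a_i$-labelled transition to choose from, and that the label sequences of $\sigma$ and $\tau$ agree by construction). This is essentially the same engine as the paper's proof, which first establishes the one-step inequalities of Lemma~\ref{wtsjlap.le:dist-fix} --- whose proof contains exactly your $\epsilon$-optimal-choice step --- and then appeals to a brief induction; your version inlines that induction into an explicit infinite construction. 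If anything, your route is the more careful one: it makes clear that only the fixed-point property of $\pwbd{s,t}$ is used to extract the next transition, whereas ``the linear distance is a post-fixed point of the defining functional'' does not by itself imply that it lies below the \emph{least} fixed point.

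The gap is exactly where you suspected, in the maximum-lead case, and with the chapter's literal definitions the correspondence you need in fact fails for $\lambda<1$. Unfolding $\delta_{i+1}=(\delta_i+c_i-d_i)/\lambda$ from $\delta_0=0$ gives $\lambda^i\delta_i=\sum_{j=0}^{i-1}\lambda^j(c_j-d_j)$, the \emph{discounted} partial sum, whereas $\mltd{\sigma,\tau}=\sup_i \lambda^i\bigl|\sum_{j=0}^i(c_j-d_j)\bigr|$ applies the discount factor to the \emph{undiscounted} partial sum. A uniform bound on the former does not bound the latter: take $\lambda=\tfrac12$ and $c_0-d_0=-1$, $c_1-d_1=0$, $c_2-d_2=8$, all later differences zero; then every discounted partial sum has absolute value $1$, so $\sup_i\lambda^i|\delta_i|=1$, yet $\lambda^2\bigl|\sum_{j\le 2}(c_j-d_j)\bigr|=\tfrac74$. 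In general one only gets $\mltd{\sigma,\tau}\le(2-\lambda^i)\sup_i\lambda^i|\delta_i|$, i.e.\ a bound worse by up to a factor of two. For $\lambda=1$ the two quantities coincide (up to an index shift) and your template closes; for $\lambda<1$ the same mismatch is already present in the third inequality of Lemma~\ref{wtsjlap.le:dist-fix} and in the relationship between $\apmdist{s,t}(\delta)$ and $\mlbd{s,t}(\delta)$, so this is a defect inherited from the source's definitions rather than of your construction --- but as written, your proof of the third inequality does not go through.
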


\begin{proof}
  This follows from Lemma~\ref{wtsjlap.le:dist-fix} by an easy
  structural-induction argument.
\end{proof}

Note that Example~\ref{wtsjlap.ex:distances} shows that indeed, all distances
in the equations above can be finite.  Other, standard examples show
however that WTS exist for which $s\not\wsim t$ and yet $s\le t$,
hence $\bdgeneric{ s, t}= \infty$ and $\distgeneric{ s, t}= 0$ for all
three distances, showing the following theorem:

\begin{theorem}
  \label{th:wtsjlap.topineq}
  The distances $\apwdists$ and $\pwbds$ are topologically
  inequivalent.  Similarly, $\aacdists$ and $\awbds$, and also
  $\apmdists$ and $\mlbds$, are topologically inequivalent.
\end{theorem}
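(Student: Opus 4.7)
The plan is to exhibit a single pair of WTS whose linear distance is $0$ while the branching distance is $\infty$, for all three kinds of distances simultaneously. By Definition~\ref{wtsjlap.def:dist-require}, a linear distance vanishes whenever weighted trace inclusion holds, while a branching distance is infinite whenever unweighted simulation fails. So the core task is to reuse the classical counterexample that separates trace inclusion from simulation.

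Concretely, I would use a variant of the nondeterministic choice example: let $\mcal S$ have states $\{s_0, s_1, s_2, s_3, s_4\}$ with transitions $s_0 \tto{a,0} s_1$, $s_0 \tto{a,0} s_2$, $s_1 \tto{b,0} s_3$, $s_2 \tto{c,0} s_4$; and let $\mcal T$ have states $\{t_0, t_1, t_3, t_4\}$ with $t_0 \tto{a,0} t_1$, $t_1 \tto{b,0} t_3$, $t_1 \tto{c,0} t_4$. All weights are $0$, so the choice of distance on $\KK$ is irrelevant for the argument. Taking the disjoint union of $\mcal S$ and $\mcal T$ gives a single WTS on which to compare $t_0$ and $s_0$.

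The three steps are then:
(i) check that $\tracesfrom{t_0} = \tracesfrom{s_0} = \{(a,0)(b,0)(\ldots),\,(a,0)(c,0)(\ldots)\}$ (with trailing stops), which by the definition of the Hausdorff-style linear distance gives $\apwdist{t_0,s_0} = \aacdist{t_0,s_0} = \apmdist{t_0,s_0} = 0$, and symmetrically also $\apwdist{s_0,t_0} = \aacdist{s_0,t_0} = \apmdist{s_0,t_0} = 0$;
(ii) verify that $t_0 \not\uwsim s_0$: the only $a$-successor of $t_0$ is $t_1$, which offers both $b$ and $c$, while both $a$-successors $s_1$ and $s_2$ of $s_0$ offer only one of these labels, hence no unweighted simulation from $t_0$ can be established;
(iii) invoke the second defining property in Definition~\ref{wtsjlap.def:dist-require} (which holds for our branching distances, as is immediate from their recursive definitions: when $t_0 \tto{a,0}$ but no $s_j$ can match the subsequent label, the inner $\inf$ ranges over an empty set and produces $\infty$, and then the outer $\sup$ lifts this to $\pwbd{t_0,s_0} = \awbd{t_0,s_0} = \mlbd{t_0,s_0} = \infty$).

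To conclude topological inequivalence in the sense of the definition given earlier in the chapter, I would take $x = t_0$, fix any $\epsilon > 0$ (say $\epsilon = 1$), and note that the symmetrizations satisfy $\overline{\distgeneric{t_0, s_0}} = 0$ for the linear distances and $\overline{\bdgeneric{t_0, s_0}} = \infty$ for the branching distances. Thus for every $\delta > 0$ we have $\overline{\distgeneric{t_0, s_0}} < \delta$ yet $\overline{\bdgeneric{t_0, s_0}} \not< \epsilon$, contradicting topological equivalence of the respective symmetrizations. There is no real obstacle here; the only subtlety is making sure the same unweighted example simultaneously witnesses inequivalence for the maximum-lead pair (which it does, since all arithmetic over $\KK$ collapses when the weights are zero).
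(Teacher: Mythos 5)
Your proposal is correct and follows essentially the same route as the paper: the paper's proof simply invokes ``standard examples'' of WTS with weighted trace inclusion but no unweighted simulation (forcing linear distance $0$ and branching distance $\infty$, hence topological inequivalence), and your nondeterministic-choice system with all-zero weights is precisely such an example, spelled out. The only cosmetic imprecision is that the empty infimum yielding $\infty$ arises one level down (at $t_1$ versus $s_1$ or $s_2$) and is then propagated up by the outer supremum, but your argument clearly accounts for this.
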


\subsection{Relationship between distances}

The theorems below sum up the relationship between our three linear
distances; note that the results depend heavily on whether or not
discounting is applied.  The following lemma is useful and easily
shown:

\begin{lemma}
  \label{wtsjlap.le:tracedist-ineq}
  For states $s, t\in S$, we have
  \begin{alignat*}{3}
    \apwdist{ s, t} &\le \aacdist{ s, t} &\qquad \apmdist{ s, t} &\le
    \aacdist{ s, t} &\qquad \apwdist{ s, t} &\le 2\, \apmdist{ s, t} \\
    \pwbd{ s, t} &\le \awbd{ s, t} & \mlbd{ s, t} &\le \awbd{ s, t} &
    \pwbd{ s, t} &\le 2\, \mlbd{ s, t}
  \end{alignat*}
\end{lemma}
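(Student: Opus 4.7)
My plan is to first establish each inequality at the trace level, lift it to the linear distances via the Hausdorff construction, and then handle the branching distances by a mix of fixed-point and simulation-family arguments. At the trace level, fix $\sigma, \tau$ with $U(\sigma) = U(\tau)$ (otherwise all three trace distances are $\infty$ and there is nothing to show): the inequality $\pwtd{\sigma,\tau} \le \awtd{\sigma,\tau}$ is immediate, as a supremum of non-negative discounted terms is bounded by their sum; for $\mltd{\sigma,\tau} \le \awtd{\sigma,\tau}$ I apply $|\sum_{j=0}^i x_j| \le \sum_{j=0}^i |x_j|$ together with $\lambda^i \le \lambda^j$ for $j \le i$, which absorbs the outer $\lambda^i$ factor into each summand; and for $\pwtd{\sigma,\tau} \le 2\mltd{\sigma,\tau}$ I write the $i$-th weight difference as a telescoping $S_i - S_{i-1}$ of partial sums, apply the triangle inequality, and use that both $\lambda^i |S_i|$ and $\lambda \cdot \lambda^{i-1}|S_{i-1}|$ are bounded by $\mltd{\sigma,\tau}$, yielding a factor $1+\lambda \le 2$.

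The linear-distance inequalities then follow formally: the Hausdorff expression $\distgeneric{s,t} = \sup_{\sigma \in \tracesfrom s} \inf_{\tau \in \tracesfrom t} \tdgeneric{\sigma,\tau}$ is monotone and positively homogeneous in its trace-distance argument, so the three trace-level inequalities pass to linear distances by taking inf over $\tau$ and sup over $\sigma$ on both sides.

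For $\pwbd \le \awbd$ and $\mlbd \le \awbd$ I would use Knaster--Tarski. Since $\max(|c-d|, \lambda x) \le |c-d| + \lambda x$ pointwise, substituting $x = \awbd{s',t'}$ and taking $\sup_{s \to s'} \inf_{t \to t'}$ shows that $\awbd$ is a pre-fixed-point of the functional defining $\pwbd$, hence the least fixed point $\pwbd$ lies below $\awbd$. For $\mlbd \le \awbd$ the same strategy works on the parameterized version: I would show that $(s,t,\delta) \mapsto |\delta| + \awbd{s,t}$ is a pre-fixed-point of the $\mlbd$-functional, using $|\delta + c - d| \le |\delta| + |c-d|$; specializing to $\delta = 0$ yields the claim.

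The main obstacle is $\pwbd \le 2\mlbd$, because the $\delta$-parameter of $\mlbd$ has no counterpart in $\pwbd$, so a direct Knaster--Tarski argument at $\delta = 0$ alone does not close. I would instead invoke the simulation characterization of Proposition~\ref{wtsjlap.lem:characterization}: given a maximum-lead simulation family $\{R_{\epsilon,\delta}\}$, define $R'_{\epsilon'} = \{(s,t) \mid \exists\, \epsilon, \delta \text{ with } (s,t) \in R_{\epsilon,\delta} \text{ and } \max(2\epsilon,\, \epsilon + |\delta|) \le \epsilon'\}$ and verify that $\{R'_{\epsilon'}\}$ is a point-wise simulation family. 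The step-wise checks reduce to $|c-d| \le |\delta + c - d| + |\delta| \le \epsilon + |\delta| \le \epsilon'$ for the new point-wise weight mismatch, and $\max(2\epsilon/\lambda,\, \epsilon/\lambda + |(\delta + c - d)/\lambda|) \le 2\epsilon/\lambda \le \epsilon'/\lambda$ (using $|\delta + c - d| \le \epsilon$) for the new simulation parameter. Starting from $(s,t) \in R_{\mlbd{s,t}(0),\,0}$ and taking $\delta = 0$ then yields $\pwbd{s,t} \le 2\mlbd{s,t}$.
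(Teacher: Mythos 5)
Your proof is correct; the paper states this lemma without proof (``useful and easily shown''), and your trace-level estimates, the monotone/homogeneous Hausdorff lifting, and the two pre-fixed-point arguments are exactly the routine parts the authors presumably had in mind. The only genuinely non-routine inequality is $\pwbd{ s, t}\le 2\, \mlbd{ s, t}$, and your detour through Proposition~\ref{wtsjlap.lem:characterization} is a sound way to handle the fact that the $\delta$-parameter has no counterpart on the point-wise side. One small point deserves to be made explicit: Definition~\ref{wtsjlap.def:simu} only requires the updated lead to satisfy $\delta'\le \frac{ \delta+ c- d}{ \lambda}$ (a one-sided bound, no absolute value), whereas your verification that $R'_{ \epsilon'}$ is a point-wise simulation family uses $| \delta'|\le \frac{| \delta+ c- d|}{ \lambda}$. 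This is harmless provided you instantiate the maximum-lead family with the exact choice $\delta'= \frac{ \delta+ c- d}{ \lambda}$ --- which is what the fixed-point characterization in Definition~\ref{wtsjlap.def:branchingdistances} implicitly does, and which is available because the witnessing family obtained from $\mlbd{ s, t}$ can always be taken in that tight form --- but you should say so rather than rely on the one-sided inequality from the definition.
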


The restrictions on traces mentioned below are understood to be
applied to the sets $\tracesfrom s$, $\tracesfrom t$ in
Definition~\ref{wtsjlap.def:trace_distances}.

\begin{theorem}
  \label{wtsjlap.th:trace-dist-eq-disc=1}
  Assume the discounting factor $\lambda= 1$.
  \begin{enumerate}
  \item When restricted to traces of bounded length, the three linear
    distances $\apwdists$, $\aacdists$ and $\apmdists$ are Lipschitz
    equivalent.
  \item For traces of unbounded length, the linear distances are
    mutually topologically inequivalent.
  \end{enumerate}
\end{theorem}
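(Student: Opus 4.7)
The plan for part~1 is to combine the bounds of Lemma~\ref{wtsjlap.le:tracedist-ineq} (which supply one direction of each Lipschitz inequality) with simple reverse bounds that follow from finiteness of the trace length. The plan for part~2 is to exhibit three concrete WTS pairs on which the three linear distances behave qualitatively differently.

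For part~1, I fix an upper bound $N$ on trace length. At the level of trace distances I would establish, for traces $\sigma,\tau$ of length at most $N$, the bounds
\[
  \awtd{\sigma,\tau}\le N\,\pwtd{\sigma,\tau}, \qquad
  \mltd{\sigma,\tau}\le N\,\pwtd{\sigma,\tau}, \qquad
  \awtd{\sigma,\tau}\le 2N\,\mltd{\sigma,\tau}.
\]
The first bound is immediate: $\awtd{\sigma,\tau}$ is a sum of at most $N$ non-negative terms, each bounded by $\pwtd{\sigma,\tau}$. The second follows from the same observation applied to each partial sum. The third uses that every individual weight difference is a difference of two consecutive partial sums of differences, and hence is bounded by $2\,\mltd{\sigma,\tau}$. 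Together with Lemma~\ref{wtsjlap.le:tracedist-ineq}, these bounds make the three trace distances pairwise Lipschitz equivalent. The lift to the state level would then be a routine monotonicity step, using that $d_1\le K d_2$ pointwise on traces implies $\sup_\sigma\inf_\tau d_1(\sigma,\tau)\le K\sup_\sigma\inf_\tau d_2(\sigma,\tau)$.

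For part~2, I would exhibit WTS pairs on which $\lambda=1$ exposes qualitative gaps. To distinguish $\apwdist{\cdot,\cdot}$ from both $\aacdist{\cdot,\cdot}$ and $\apmdist{\cdot,\cdot}$, I would take single-state WTS with self-loops of weights $1/n$ and $0$ respectively over a common label. The unique traces give $\apwdist{s_n,t}=1/n\to 0$, whereas $\aacdist{s_n,t}=\infty$ and $\apmdist{s_n,t}=\infty$ since both the sum and the partial-sum lead diverge. To distinguish $\apmdist{\cdot,\cdot}$ from $\aacdist{\cdot,\cdot}$, I would use two-state cyclic WTS producing traces with weights $(1/n,0,1/n,0,\dots)$ and $(0,1/n,0,1/n,\dots)$ respectively; the cumulative weight differences oscillate between $1/n$ and $0$, so $\apmdist=1/n\to 0$, while the sum of absolute differences is $\sum 1/n=\infty$.

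The main obstacle is the third separation: without care, max-lead and accumulating distances grow together on typical examples. The key device is cancellation in the signed cumulative differences, as in the alternating construction above, which keeps the maximum lead bounded while the sum of absolute differences diverges; the first two separations are essentially routine.
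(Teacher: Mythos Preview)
Your proposal is correct and follows essentially the same approach as the paper: for part~1, the paper likewise derives Lipschitz equivalence from Lemma~\ref{wtsjlap.le:tracedist-ineq} together with the single reverse bound $\aacdist{s,t}\le N\,\apwdist{s,t}$ (your second and third trace-level bounds are then redundant, as they follow from this one via the lemma); for part~2, the paper uses the same two constructions---a family of self-loops with weights $1/2^n$ versus $0$ to separate point-wise from the other two, and an alternating two-cycle with weights $(0,1)$ versus $(1/2^n,1-1/2^n)$ to separate maximum-lead from accumulating---differing from yours only in the concrete weight sequences chosen.
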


\begin{proof}
  If the length of traces is bounded above by $N\in \Nat$, then
  $\aacdist{ s, t}\le N\apwdist{ s, t}$ for all $s, t\in S$, and
  the result follows with Lemma~\ref{wtsjlap.le:tracedist-ineq}.

  For traces of unbounded length, topological inequivalence of
  $\apwdists$ and $\aacdists$, and of $\apwdists$ and $\apmdists$, can be
  shown by the following infinite WTS:
  \begin{equation*}
    \begin{tikzpicture}[shorten >=1pt, auto, initial text=]
      \begin{scope}[outer sep=1pt,minimum size=5pt,inner sep=2pt, node
        distance=1.4cm]
        \node (s0) {$s$}; 
        \node (s1) [right of=s0]{$s_1$}; 
        \node (s2) [right of=s1]{$s_2$}; 
        \node (dots) [right of=s2]{}; 
        \node (dots1) [right of=s2, yshift=-0.35cm,
        xshift=-.2cm]{$\cdots$};
        \node (sn) [right of=dots,
        xshift=-.1cm]{$s_n$}; 
        \node (dots2) [right of=sn, yshift=-0.35cm,
        xshift=-.2cm]{$\cdots$}; 
      \end{scope}
      \begin{scope}[->]
        \draw[] (s0) .. controls +(230:1cm) and +(310:1cm) ..
        node[above left, xshift=-.2cm] {$0$} (s0);
        \draw[] (s1) .. controls +(230:1cm) and +(310:1cm) ..
        node[above left, xshift=-.2cm, yshift=-.1cm] {$\frac12$} (s1);
        \draw[] (s2) .. controls +(230:1cm) and +(310:1cm) ..
        node[above left, xshift=-.2cm, yshift=-.1cm] {$\frac14$} (s2);
        \draw[] (sn) .. controls +(230:1cm) and +(310:1cm) ..
        node[above left, xshift=-.2cm, yshift=-.1cm] {$\frac1{2^n}$}
        (sn); 
      \end{scope}
    \end{tikzpicture} 
  \end{equation*} 
  
  Here we have $\aacdist{ s, s_n}= \apmdist{ s, s_n}= \infty$ for all
  $n$, but for any $\delta\in \Realp$ there is an $n$ for which
  $\apwdist{ s, s_n}< \delta$.  Similarly, topological inequivalence
  of $\aacdists$ and $\apmdists$ is shown by the infinite WTS below:
  \begin{equation*}
    \begin{tikzpicture}[shorten >=1pt, auto, initial text=]
      \begin{scope}[outer sep=0pt,minimum size=5pt,inner sep=2pt, node
        distance=2.7cm]
        \node (s0) {$s$}; 
        \node (s1) [right of=s0, xshift=-.5cm]{$s_1$};
        \node (s2) [right of=s1]{$s_2$}; 
        \node (dots) [right of=s2, xshift=-.5cm]{}; 
        \node (sn) [right of=dots, xshift=-1cm]{$s_n$};
        \node (dots2) [right of=sn, xshift=-.5cm]{}; 
      \end{scope}
      \begin{scope}[outer sep=0pt,minimum size=5pt,inner sep=2pt, node
        distance=1.6cm]
        \node (s') [below of=s0]{$s'$}; 
        \node (s1') [below of=s1]{$s_1'$}; 
        \node (s2') [below of=s2]{$s_2'$}; 
        \node (dots') [below of=dots, yshift=0.8cm]{$\cdots$}; 
        \node (sn') [below of=sn]{$s_n'$};  
        \node (dots2') [below of=dots2 , yshift=0.8cm]{$\cdots$}; 
      \end{scope}
      \begin{scope}[->]
        \draw[] (s0) .. controls +(230:.7cm) and +(130:.7cm) ..
        node[left] {$0$} (s');
        \draw[] (s1) .. controls +(230:.7cm) and +(130:.7cm) ..
        node[left] {$\frac12$} (s1');
        \draw[] (s2) .. controls +(230:.7cm) and +(130:.7cm) ..
        node[left] {$\frac14$} (s2');
        \draw[] (sn) .. controls +(230:.7cm) and +(130:.7cm) ..
        node[left] {$\frac1{2^n}$} (sn');

        \draw[] (s') .. controls +(50:.7cm) and +(310:.7cm) ..
        node[right] {$1$} (s0);
        \draw[] (s1') .. controls +(50:.7cm) and +(310:.7cm) ..
        node[right] {$1- \frac12$} (s1);
        \draw[] (s2') .. controls +(50:.7cm) and +(310:.7cm) ..
        node[right] {$1- \frac14$} (s2);
        \draw[] (sn') .. controls +(50:.7cm) and +(310:.7cm) ..
        node[right] {$1-\frac1{2^n}$} (sn);
      \end{scope}
    \end{tikzpicture} 
  \end{equation*}
\end{proof}

\begin{theorem}
  \label{wtsjlap.th:trace-dist-eq-disc<1}
  For discounting factor $\lambda< 1$, the three linear distances
  $\apwdists$, $\aacdists$ and $\apmdists$ are Lipschitz
  equivalent.
\end{theorem}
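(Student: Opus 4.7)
The plan is to combine the inequalities supplied by Lemma~\ref{wtsjlap.le:tracedist-ineq}, namely
$$\apwdist{s,t} \le \aacdist{s,t}, \quad \apmdist{s,t} \le \aacdist{s,t}, \quad \apwdist{s,t} \le 2\apmdist{s,t},$$
with matching reverse bounds of the form $\aacdist{s,t} \le C_1 \apwdist{s,t}$ and $\apmdist{s,t} \le C_2 \apwdist{s,t}$. The constants $C_1$, $C_2$ should depend only on $\lambda$ and on a uniform bound $K$ on individual weight differences in the WTS. The decisive tool is the convergence of the geometric series $\sum_i \lambda^i = \frac{1}{1-\lambda}$, available precisely because $\lambda < 1$.

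First I would establish the reverse bounds at the trace level. Fix $\sigma, \tau$ sharing a common label sequence, write $\sigma_i=(\alpha_i,c_i)$ and $\tau_i=(\alpha_i,d_i)$, and let $M = \pwtd{\sigma,\tau}$. By definition $\lambda^i |c_i - d_i| \le M$, while the WTS provides a uniform bound $|c_i - d_i| \le K$. Using the second bound on indices where it is sharper and the first elsewhere, one obtains after careful geometric summation
$$\awtd{\sigma,\tau} \le C_1(\lambda, K) \cdot \pwtd{\sigma,\tau}$$
for an explicit constant. A parallel calculation on the discounted partial sums $\lambda^i \bigl|\sum_{j \le i}(c_j - d_j)\bigr|$, using the triangle inequality followed by geometric summation of the individual contributions bounded via $\lambda^j|c_j-d_j|\le M$ and $|c_j-d_j|\le K$, yields
$$\mltd{\sigma,\tau} \le C_2(\lambda, K) \cdot \pwtd{\sigma,\tau}.$$

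The trace-level bounds lift routinely to the state level. Given any $\sigma \in \tracesfrom{s}$ and $\epsilon > 0$, pick $\tau \in \tracesfrom{t}$ with $\pwtd{\sigma,\tau} \le \apwdist{s,t} + \epsilon$; then the trace-level bound gives $\awtd{\sigma,\tau} \le C_1(\apwdist{s,t} + \epsilon)$, hence $\inf_{\tau'} \awtd{\sigma,\tau'} \le C_1(\apwdist{s,t} + \epsilon)$. Taking $\sup_\sigma$ and letting $\epsilon \to 0$ produces $\aacdist{s,t} \le C_1 \apwdist{s,t}$, and the analogous argument yields $\apmdist{s,t} \le C_2 \apwdist{s,t}$. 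Combined with Lemma~\ref{wtsjlap.le:tracedist-ineq}, this gives all three Lipschitz equivalences.

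The main obstacle is the trace-level step: the naive bound $\awtd{\sigma,\tau} \le \pwtd{\sigma,\tau} \cdot \sum_i 1$ diverges, so one genuinely needs both the pointwise discount bound $\lambda^i |c_i - d_i| \le \pwtd{\sigma,\tau}$ and the uniform bound $|c_i - d_i| \le K$ to cooperate. Splitting the sum at the index where the two bounds cross and summing the resulting geometric tails is the technical heart of the argument; the constants $C_1, C_2$ emerging from this splitting are the price one pays compared with the clean one-sided inequalities of Lemma~\ref{wtsjlap.le:tracedist-ineq}.
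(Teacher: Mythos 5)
Your overall strategy---the one-sided bounds from Lemma~\ref{wtsjlap.le:tracedist-ineq} combined with a reverse bound $\aacdist{s,t}\le C\,\apwdist{s,t}$, lifted from the trace level by the standard $\sup$--$\inf$ argument---is the same shape as the paper's, which asserts the single inequality $\aacdist{s,t}\le \frac1{1-\lambda}\apwdist{s,t}$ and then invokes the lemma. The gap is in your trace-level step, which is where all the content lies. You first assume a uniform bound $K$ on individual weight differences; no such bound is part of the hypotheses (the WTS of this chapter may be infinite, and weights range over all of $\Realnn$), so $C_1(\lambda,K)$ would not even be a legitimate Lipschitz constant for the theorem as stated.

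More seriously, even granting $K$, the splitting argument does not produce a constant. Write $M=\pwtd{\sigma,\tau}$ and $e_i=|c_i-d_i|$, so that $\lambda^i e_i\le\min(\lambda^i K, M)$. The crossover index is $i_0=\log_{1/\lambda}(K/M)$, and summing the two regimes gives
\begin{equation*}
  \awtd{\sigma,\tau}\le M\Bigl( i_0+\tfrac1{1-\lambda}\Bigr)
  = M\Bigl( \tfrac{\ln(K/M)}{\ln(1/\lambda)}+\tfrac1{1-\lambda}\Bigr)\,,
\end{equation*}
whose ratio to $M$ diverges as $M\to 0$. This is not an artifact of the estimate: take $\lambda=\tfrac12$, $K=1$ and $e_i=\epsilon\,2^i$ for $0\le i\le\log_2(1/\epsilon)$ (and $e_i=0$ afterwards); then $\pwtd{\sigma,\tau}=\epsilon$ while $\awtd{\sigma,\tau}=\epsilon\bigl(\log_2(1/\epsilon)+1\bigr)$, so no constant $C_1(\lambda,K)$ with $\awtd{\sigma,\tau}\le C_1\,\pwtd{\sigma,\tau}$ can exist. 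Hence the trace-level inequality your whole lifting rests on is false, and the "technical heart" you identify does not close; the reverse bound has to be obtained by some other means than a per-trace comparison of $\awtds$ against $\pwtds$ with bounded weights.
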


\begin{proof}
  This is similar to the first claim of the previous theorem: For all
  states $s, t\in S$, we have $\aacdist{ s, t}\le \frac1{1-
    \lambda} \apwdist{ s, t}$, and the result follows with
  Lemma~\ref{wtsjlap.le:tracedist-ineq}.
\end{proof}

\begin{theorem}
  \label{th:wtsjlap.topeqbra}
  For discounting factor $\lambda= 1$, the three branching distances
  $\pwbds$, $\awbds$ and $\mlbds$ are mutually topologically
  inequivalent.  For $\lambda< 1$, they are Lipschitz equivalent.
\end{theorem}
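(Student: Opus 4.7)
The plan is to handle the two regimes $\lambda=1$ and $\lambda<1$ separately, mirroring the structure of Theorem~\ref{wtsjlap.th:trace-dist-eq-disc=1}. For $\lambda=1$ (topological inequivalence), the key observation is that on \emph{deterministic} WTS---those in which each state has at most one outgoing transition per label---the $\inf$ in each recursive equation of Definition~\ref{wtsjlap.def:branchingdistances} is attained trivially, so the three branching distances $\pwbds$, $\awbds$, $\mlbds$ coincide with the corresponding linear distances $\apwdists$, $\aacdists$, $\apmdists$. I would therefore reuse the two infinite WTS families constructed in the proof of Theorem~\ref{wtsjlap.th:trace-dist-eq-disc=1}: both are deterministic, and both already witness topological inequivalence in the linear setting. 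The same families then serve as witnesses in the branching setting, yielding all three pairwise topological inequivalences at once.

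For $\lambda<1$ (Lipschitz equivalence), I would split the argument into ``easy'' lower bounds and ``hard'' upper bounds. The lower bounds $\pwbd{s,t}\le\awbd{s,t}$, $\mlbd{s,t}\le\awbd{s,t}$, and $\pwbd{s,t}\le 2\,\mlbd{s,t}$ follow from the pointwise inequalities $\max(a,b)\le a+b\le 2\max(a,b)$ applied to the recursive equations of Definition~\ref{wtsjlap.def:branchingdistances}: each shows that one distance is a sub-solution of the functional defining another, so by monotonicity of least fixed points the claimed bounds hold. For the reverse, upper bounds I would exploit the Banach contraction structure granted by $\lambda<1$: each of the three functionals from Definition~\ref{wtsjlap.def:branchingdistances} is a $\lambda$-contraction on the space of hemimetrics. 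I would introduce a candidate $f(s,t)=\frac{C}{1-\lambda}\pwbd{s,t}$ for a suitable constant $C$ and verify that it is a super-solution of the accumulating-distance equation, which forces $\awbd{s,t}\le f(s,t)$ by minimality of the least fixed point. A wholly analogous argument, treating the lead parameter $\delta$ of $\mlbd{s,t}(\delta)$ via a translation trick, yields $\mlbd{s,t}\le\frac{C'}{1-\lambda}\pwbd{s,t}$.

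The main obstacle is verifying the super-solution property in the reverse Lipschitz direction. The subtlety is that a strategy optimal for $\pwbd{s,t}$ need not be optimal for $\awbd{s,t}$, so naive unfolding along the pointwise-optimal strategy can yield a divergent geometric series: under that strategy the discounted differences $\lambda^k|c_k-d_k|$ may remain at size $\pwbd{s,t}$ at every step, and their sum diverges. The argument must therefore employ a pointwise-optimal response only at the top level and then invoke the induction hypothesis on $f$ at the next level via the fixed-point inequality, choosing $C$ large enough that the recurrence closes---typically under the mild hypothesis that weights in the WTS are bounded.
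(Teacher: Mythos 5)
Your handling of the first claim ($\lambda=1$) is exactly the paper's route: it reuses the two infinite WTS from the second part of the proof of Theorem~\ref{wtsjlap.th:trace-dist-eq-disc=1}, and your observation that those systems are deterministic---so the sup--inf recursions of Definition~\ref{wtsjlap.def:branchingdistances} collapse and the branching distances agree with the linear ones---is a clean way to make the transfer explicit. Likewise, your ``easy'' lower bounds for $\lambda<1$ are precisely the paper's appeal to Lemma~\ref{wtsjlap.le:tracedist-ineq}, and you are right that the whole Lipschitz claim reduces to the single missing bound $\awbd{s,t}\le K\,\pwbd{s,t}$.

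That remaining bound is where the gap is, and your proposed super-solution cannot be made to work. No function of the form $f(s,t)=K\,\pwbd{s,t}$ is a pre-fixed point of the accumulating functional: the needed inequality $|c-d|+\lambda K\,\pwbd{s',t'}\le K\max\bigl(|c-d|,\lambda\,\pwbd{s',t'}\bigr)$ fails for \emph{every} $K$ as soon as $0<|c-d|<\lambda\,\pwbd{s',t'}$, and bounding the weights does not repair this. Iterating the failed inequality shows that any super-solution $h$ depending only on $P=\pwbd{s,t}$ must satisfy $h(P)\ge nP$ for all $n$ with $P\lambda^{-n}\le D$; concretely, a chain whose successive weight differences are $\delta_k=P\lambda^{-k}$ for $k=0,\dots,k_0$ (all weights $\le 1$) has pointwise distance $P$ but accumulating distance $(k_0+1)P$, so the ratio is unbounded over any WTS containing all such chains, and already $k_0=10$, $\lambda=.9$ beats the constant $\tfrac1{1-\lambda}$ that the paper itself asserts without proof. (The source of the confusion is that $\sum_i\lambda^i\delta_i\le\tfrac1{1-\lambda}\sup_i\delta_i$ is true, but $\sup_i\delta_i$ is the \emph{undiscounted} pointwise distance, not $\pwbd{s,t}=\sup_i\lambda^i\delta_i$.) So your ``choose $C$ large enough that the recurrence closes'' step is the missing idea, and it cannot be supplied under bounded weights alone; for a fixed finite WTS one can salvage Lipschitz equivalence with a constant of the shape $D/\bigl((1-\lambda)\delta_{\min}\bigr)$, where $\delta_{\min}$ is the least positive weight difference, since the first nonzero difference along a path already forces $\pwbd{}$ to be within a factor $D/\delta_{\min}$ of the whole remaining discounted sum.
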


\begin{proof}
  The first claim can be shown using the same example WTS as for the
  second part of the proof of
  Theorem~\ref{wtsjlap.th:trace-dist-eq-disc=1}, and for the second
  claim we have $\awbd{ s, t}\le \frac1{1- \lambda} \pwbd{ s, t}$ and
  can apply Lemma~\ref{wtsjlap.le:tracedist-ineq}.
\end{proof}

\section{Conclusion}

We have argued above that our proposed extension of the qualitative
notion of trace inclusion and simulation to a quantitative setting is
reasonable.

For the three types of distances considered in this chapter, we have
seen that linear distances can easily be introduced, whereas
definition of branching distances requires more work and involves
fixed-point computations.  Our Lemma~\ref{wtsjlap.le:dist-fix}
remedies some of these difficulties, and we expect this remedy to also
be applicable for other interesting trace distances.  We will show in
Chapter~\ref{ch:qltbt} that a general procedure for obtaining
branching distances from linear distances is available.

We have shown that all our three linear distances are topologically
inequivalent to their corresponding branching distance, thus measure
inherently different properties.  Still, and analogously to the
qualitative setting, the branching distance can be used as an
over-approximation of the linear distance.  Also, and perhaps more
surprisingly, whether different linear or branching distances are
mutually equivalent depends on the usage of discounting.  We expect
most of these results to also hold for other kinds of trace distances,
see again Chapter~\ref{ch:qltbt}.

We have mentioned earlier that in this work we concentrate on trace
inclusion and simulation (asymmetric) distances, and of course similar
treatment should be given to trace equivalence and bisimulation
distances.  Symmetric linear distances are easily defined as
symmetrizations of the linear distances introduced here, but for the
branching distances there are subtle differences between symmetrized
simulation distances on the one hand and bisimulation distances on the
other hand.  The next chapter will be concerned with bisimulation
distances.

\chapter[A Quantitative Characterization of Weighted Kripke Structures
in Temporal Logic][Weighted Kripke Structures]{A Quantitative
  Characterization of Weighted Kripke Structures in Temporal
  Logic\footnote{This chapter is based on the journal
    paper~\cite{journals/cai/FahrenbergLT10} published in Computing
    and Informatics.}}
\label{ch:wtscai}

This chapter is concerned with \emph{weighted Kripke structures}
(WKS), which represent a straight-forward extension of Kripke
structures with a weighted transition relation labeling each
transition.  It then proceeds to define a weighted version of
Computation Tree Logic (WCTL) together with two different semantics
which mirror the point-wise and accumulating distances of the previous
chapter.  It is then shown that WCTL is \emph{adequate} and
\emph{expressive} for the corresponding bisimulation distances.

\section{Preliminaries}
\label{wtscai.sec:prim}

As in Chapter~\ref{ch:wtsjlap}, the results presented in this chapter
are based on metrics on sequences of real numbers. Let $a = (a_i)$ and
$b = (b_i)$ be such sequences, we then define for
$\lambda \in \mathopen]0,1\mathclose[$ the following basic distances:
\begin{align}
  d_+(a,b) &= \sum_i \lambda^i | a_i - b_i| \label{wtscai.eq:d+}\\
  d_\bullet(a,b) &= \sup_i\, \{\lambda^i|a_i - b_i|\} \label{wtscai.eq:d}
\end{align}
Throughout the chapter we will refer to \eqref{wtscai.eq:d+} and
\eqref{wtscai.eq:d}, as well other distances based on these, as an
\emph{accumulating distance} and as a \emph{point-wise distance},
respectively. For the remainder of this chapter we fix a discounting
factor $\lambda \in \mathopen]0,1\mathclose[$; note that contrary to
the previous chapter, we here assume $0<\lambda<1$.

We proceed to introduce WKS.  A natural interpretation is to view the
labellings as the cost of taking transitions in the structure. This
extension is similar to the one presented in Chapter~\ref{ch:wtsjlap}
for labeled transition systems, thus the results presented in the
previous chapter are transferable to the current setting.

\begin{definition}
  For a finite set $\Props$ of atomic propositions, a \emph{weighted
    Kripke structure} is a quadruple $M = ( S, T, L, w)$ where
  \begin{itemize}
  \item $S$ is a finite set of states,
  \item $T \subseteq S \times S$ is a transition
    relation
  \item $L:S \to 2^{\Props}$ is the proposition
    labeling, and
  \item $w:T \to \Realnn$ assigns positive real-valued weights to
    transitions.
  \end{itemize}
  We write $s \to s'$ instead of $(s,s')\in T$ and $s \tto{w}
  s'$ to indicate $w(s,s') = w$.
\end{definition}

A \emph{(weighted) path} in a WKS $M = ( S, T, L, w)$ is a
(possibly infinite) sequence $ \sigma = ((s_0,w_0), (s_1,w_1),
(s_2,w_2),\ldots)$ with $(s_i,w_i)\in S \times \Realnn$ and such that
$s_i \to s_{i+1}$ and $w_i = w(s_i,s_{i+1})$ for all $i$.  We denote
by $\tracesfrom{s}$ the set of paths in $M$ starting at state $s$,
and by $\tracesfrom{M}$ the set of all paths in $M$. Given a path
$\sigma$, we write $\sigma(i) = (\sigma(i)_s,\sigma(i)_w)$ for its
$i$'th state-weight pair, and $\sigma^i$ for the suffix starting at
$\sigma(i)$.

Notice that we have restricted ourselves to \emph{finite} weighted
Kripke structures here, \ie structures with a finite set of states and
finitely many atomic propositions.  Our characterization results in
Section~\ref{wtscai.se:char} only hold for such finite structures.

\begin{example}
  Figure~\ref{wtscai.fig:example} gives a model of a simple printer as
  a WKS which we shall come back to again later. Resource usage is
  modeled as atomic propositions, and transition weights model the
  combined cost of the operations. Turning on the machine, it moves
  from the state \textbf{Off} to \textbf{Ready}, from where it can
  \textbf{Suspend} and wake up at a much lower cost. Input is
  processed in the \textbf{Receiving} state, and the chosen output
  form incurs different costs related to resource usage, clean-up and
  reset.
\end{example}

\begin{figure}
  \centering
  \begin{tikzpicture}[->,>=stealth',node distance=2.7cm,state/.style={
    rectangle,
    rounded corners,
    draw=black, very thick,
    minimum height=2em,
    inner sep=2pt,
    text centered,
  }]
    \scriptsize
    \node[state] (OFF) 
    {
      \begin{tabular}{l} 
        \textbf{Off}\\
        \parbox{1.3cm}{Power/off}
      \end{tabular}
    };
    \node[state, right of=OFF, xshift=.6cm] (READY) 
    {
      \begin{tabular}{l} 
        \textbf{Ready}\\
        \parbox{1.3cm}{Power/on}
      \end{tabular}
    };
    \node[state, right of=READY, xshift=.6cm] (SUSPEND) 
    {
      \begin{tabular}{l} 
        \textbf{Suspended}\\
        \parbox{1.3cm}{ Power/on}
      \end{tabular}
    };
    \node[state, below of=READY, yshift=.7cm] (DATA) 
    {
      \begin{tabular}{l} 
        \textbf{Receiving}\\
        \parbox{2.6cm}{ Power/on \\Running PostScript}
      \end{tabular}
    };
    \node[state, below right of=DATA, yshift=-.1cm, xshift=-.5cm] (PRINT4) 
    {
      \begin{tabular}{l} 
        \textbf{Printing}\\
        \parbox{1.3cm}{ Power/on \\ A4}
      \end{tabular}
    };
    \node[state, left of=PRINT4] (COLOR) 
    {
      \begin{tabular}{l} 
        \textbf{Color}\\
        \parbox{1.3cm}{ Power/on \\ A4}
      \end{tabular}
    };
    \node[state, left of=COLOR] (FAX) 
    {
      \begin{tabular}{l} 
        \textbf{Faxing}\\
        \parbox{1.4cm}{ Power/on\\ Phone-line}
      \end{tabular}
    };
    \node[state, right of=PRINT4] (PRINT3) 
    {
      \begin{tabular}{l} 
        \textbf{Large}\\
        \parbox{1.3cm}{ Power/on \\ A3}
      \end{tabular}
    };
    \path 
    (OFF)  	edge[bend left=20]  node[anchor=south,above]{$100$} (READY)
    (READY)     	edge[bend left=20] node[anchor=south,above]{$0.5$}   (SUSPEND)
    (READY)  	edge  node[anchor=south,below]{$0$}                 (OFF)
    (SUSPEND)     	edge  node[anchor=south,below]{$25$}                (READY)
    (READY)       	edge                node[anchor=left,right]{$10$}   (DATA)
    (DATA)       	edge[bend right]   node[anchor=left,below]{$30$}     (FAX)
    (DATA)       	edge[bend right]   node[anchor=left,right]{$50$}     (COLOR)
    (DATA)  	edge[loop below]    node[anchor=north,below]{$4$}   (DATA)
    (DATA)  	edge[bend left]    node[anchor=left,right]{$35$}     (PRINT4)
    (DATA)  	edge[bend left]    node[anchor=left,below]{$40$}     (PRINT3)
    (PRINT3)       edge[bend right]   node[anchor=left,right]{$0.5$}     (READY)
    (PRINT4)       edge[bend right=50]   node[anchor=left,right]{$0.5$}  (READY)
    (COLOR)        edge[bend left=50]   node[anchor=left,left]{$0.7$}    (READY)
    (FAX)          edge[bend left]   node[anchor=right,left]{$0.2$}      (READY);
  \end{tikzpicture}
  \caption{The behavior and cost and resource usage of a simple
    printer.\label{wtscai.fig:example}}
\end{figure}
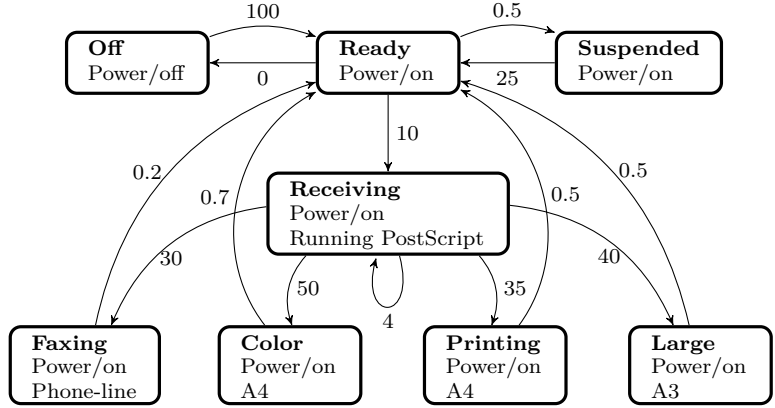

\section{Weighted CTL}
\label{wtscai.sec:ctl}

We now consider two interpretations of \emph{weighted CTL} (WCTL),
based on \eqref{wtscai.eq:d+} and \eqref{wtscai.eq:d}, which will
encompass quantitative information by two means. First, as with TCTL
and PCTL, a syntactic extension of path operators, by annotation with
real weights, models requirements on path weights (the exact meaning
of these are deferred to the choice of semantics). Second,
satisfaction of a formula by a system is no longer interpreted in the
Boolean domain $\{\top,\bot\}$, but rather assigns to a state a truth
value in the domain $\Realnn\cup\{ \infty\}$. We will interpret $0$ as
an exact match, whereas $\infty$ indicates an incompatibility between
the system and the specified atomic propositions of a formula.  Any
intermediate value is interpreted as a \emph{distance} from an exact
match. That is, a smaller distance means a closer (better) match of
the specified weights in the formula. We denote by
$\lsem{\varphi}(s) \in \Realnn\cup\{ \infty\}$ the value obtained by
evaluating formula $\varphi$ at state $s$.

From here on, we fix a set $\Props$ of atomic propositions and a WKS
$( S, T, L, w)$.  All definitions and results below will be given for
the states of one single WKS, but we note that to relate states of
different WKS, one can simply form the disjoint union.

\begin{definition}
  For $p\in \Props$, $\Phi$ generates the set of state formulae, and
  $\Psi$ the set of path formulae, annotated by weights $c \in
  \Realnn$, according to the following abstract syntax:
  \begin{align*}    
    \Phi &::= p \mid \lnot p \mid \Phi_1 \land \Phi_2 \mid
    \Phi_1 \lor \Phi_2 \mid \sfE \Psi \mid \sfA \Psi\\
    \Psi &::= \sfX_c \Phi\mid \sfG_c \Phi \mid
    \sfF_c \Phi\mid [\Phi_1 \sfU_c \Phi_2]
  \end{align*}
  The logic WCTL is the set of state formulae, written $\whml(
  \Props)$ or simply $\whml$.
\end{definition}

Before presenting the formal semantics, let us consider the usual
meaning of the CTL modalities, as well as how these may be generalized
to ensure adherence to bisimulation variants considered in the
following section:

Given CTL propositions of the form $M,s \models \sfE\psi$ and $M,s
\models \sfA\psi$, we may interpret these as infinite
\emph{existential}, respectively \emph{universal}, quantifications
over paths in $M$ from $s$ satisfying $\psi$. Similarly, $M,\sigma
\models \sfF\varphi$ and $M,\sigma \models \sfG\varphi$ may be interpreted
as an infinite \emph{disjunction}, respectively \emph{conjunction},
over propositions on the form $M,s_i \models \varphi$ for $i \geq 0$,
where $s_i$ is a state on $\sigma$.

This observation is in line with some arguments given in
\cite{DBLP:journals/fuin/KonikowkaP04}, and we expect that a generic
approach to defining quantitative (or multi-valued) semantics for WCTL
over the truth domain $\Realnn\cup\{ \infty\}$ is obtainable.  To this
end, the standard $\sup$ and $\inf$ operators are reasonable
generalization of $\sfE$, $\sfA$, $\sfF$ and $\sfG$ (interpreted as
disjunction and conjunction over the standard Boolean domain) to the
complete lattice $\Realnn\cup\{ \infty\}$.

Furthermore, this approach requires only modification to the
evaluation (\ie semantics) of path formulae.  Our semantics
specializes to the usual one in two different ways: either by mapping
to the designated set of truth values (\ie to $\top$), all $\epsilon<
\infty$ and $\infty$ to $\bot$, or by mapping only $0$ to $\top$ and
all $\epsilon> 0$ to $\bot$.

\subsection{Semantics}

In the following we present two \emph{discounted semantics}, derived
from the distances $d_+$ from \eqref{wtscai.eq:d+}, and $d_\bullet$
from \eqref{wtscai.eq:d} where weights of transition are accumulated
or considered point-wise, respectively.  Formally, the semantics of
$\varphi \in \whml$ defines a map from the set of states $S$ to the
set $\Realnn\cup\{ \infty\}$.  The first definition gives a general
weighted semantics to state formulae:

\begin{definition}[State semantics]
  \label{wtscai.def:state-sem}
  The semantics of state formulae is defined inductively as follows:
 \begin{align*}
    \lsem{p}(s) &{}={} \begin{cases}0 &\text{if }p \in L(s) \\
      \infty & \text{otherwise}\end{cases} &
    \lsem{\lnot p}(s) &{}={} \begin{cases}0 &\text{if }p \in \Props 
      \setminus L(s)\\
      \infty & \text{otherwise}\end{cases}\\
    \lsem{\varphi_1 \lor \varphi_2}(s) &{}={} \inf\big\{\lsem{\varphi_1}(s),
    \lsem{\varphi_2}(s)\big\}
    & \lsem{\varphi_1 \land \varphi_2}(s) &{}={} \sup\big\{\lsem{\varphi_1}(s),
    \lsem{\varphi_2}(s)\big\}\mbox{}\\[7pt]
    \lsem{\sfE \psi}(s) &{}={} \inf\big\{ \lsem{\psi}(\sigma) \mid {\sigma
      \in \tracesfrom{s}}\big\} & 
    \lsem{\sfA \psi}(s) &{}={} \sup\big\{ \lsem{\psi}(\sigma) \mid {\sigma
      \in \tracesfrom{s}}\big\}
  \end{align*}
  In the last two formulae, $\lsem{\psi}(\sigma)$ is the accumulating
  or point-wise semantics of $\sigma$ with respect to $\psi$ as
  appropriate, see below.
\end{definition}

In the next definition, we give the two different weighted semantics
to path formulae; an accumulated and a point-wise one.  Note that the
only difference between the two is an interchange of maximum and sum,
which supports the findings in~\cite{DBLP:journals/fuin/KonikowkaP04,
  DBLP:journals/tcs/Lluch-LafuenteM05} which advocate abstracting
away from concrete operators and interpreting the semantics over
general algebraic structures.

\begin{definition}[Path semantics] 
  \label{wtscai.def:pw-sem}
  The \emph{accumulating} semantics of path formulae is defined
  inductively as follows:
  \begin{align*}
    \lsem{\varphi}_+(\sigma) &{}={} \lsem{\varphi}(\sigma(0)_s)\\[10pt]
    \lsem{\sfX_c \varphi}_+(\sigma) &{}={} |\sigma(0)_w- c| +
    \lambda \lsem{\varphi}_+(\sigma^1) \\
    \lsem{\sfF_c \varphi}_+(\sigma) &{}={}
    \inf_k\bigg(\sum^{k-1}_{j=0} \lambda^j \Bigl|
    \sigma(j)_w -c\Bigr| + \lambda^k\lsem{\varphi}_+(\sigma^k)\bigg)\\
    \lsem{\sfG_c \varphi}_+(\sigma) &{}={}
    \sup_k\bigg(\sum^{k-1}_{j=0} \lambda^j \Bigl|
    \sigma(j)_w -c\Bigr| + \lambda^k\lsem{\varphi}_+(\sigma^k)\bigg)\\
    \lsem{\varphi_1 \sfU_c \varphi_2}_+(\sigma) &{}={} 
    \inf_k \bigg(
    \sum^{k-1}_{j=0} \lambda^j \Bigl| \lsem{\varphi_1}_+(\sigma^j) -c \Bigr|+
    \lambda^k\lsem{\varphi_2}_+(\sigma^k)\bigg)
  \end{align*}
  The \emph{point-wise} semantics of path formulae is defined
  inductively as follows:
  \begin{align*}
    \lsem{\varphi}_\bullet(\sigma) &{}={} \lsem{\varphi}(\sigma(0)_s)\\[10pt]
    \lsem{\sfX_c \varphi}_\bullet(\sigma) &{}={} \max\Big\{|
    \sigma(0)_w- c| ,\
    \lambda \lsem{\varphi}_\bullet(\sigma^1)\Big\} \\
    \lsem{\sfF_c \varphi}_\bullet(\sigma) &{}={}
    \inf_k\bigg(\max\Big\{ \max_{0\leq j < k}\big\{ \lambda^j |
    \sigma(j)_w -c|\big\} , \
    \lambda^k\lsem{\varphi}_\bullet(\sigma^k)\Big\}\bigg)\\
    \lsem{\sfG_c \varphi}_\bullet(\sigma) &{}={}
    \sup_k\bigg(\max\Big\{ \max_{0\leq j < k}\big\{ \lambda^j |
    \sigma(j)_w -c|\big\} , \
    \lambda^k\lsem{\varphi}_\bullet(\sigma^k)\Big\}\bigg)\\
    \lsem{\varphi_1 \sfU_c \varphi_2}_\bullet(\sigma) &{}={} 
    \inf_k \bigg(\max\Big\{ \max_{0\leq j < k}\big\{ \lambda^j \big|
    \lsem{\varphi_1}_\bullet(\sigma^j) -c \big| \big\} , \
    \lambda^k\lsem{\varphi_2}_\bullet(\sigma^k)\Big\}\bigg)
  \end{align*}
\end{definition}

Note that as usual, $\sfF_c$ can also be derived from $\sfU_c$ by $\sfF_c
\varphi \triangleq \ltrue \sfU_c \varphi$ (where $\ltrue$ is some
tautology).

Compared to, for example, TCTL, the annotated operators specify an
expected value, hence $\sfX_c\varphi$ evaluated on $\sigma$ means that
$c$ is expected of the first transition in $\sigma$. The difference is
then added to (or the maximum is taken of it and) the value of
$\varphi$ over the remaining path $\sigma^1$.

\begin{example}
  In the context of the example from Figure~\ref{wtscai.fig:example} we
  consider a useful property of printers, that of \emph{having
    received a job, the printer cannot suspend before completing the
    job}. The formula $\varphi = \sfA(\lnot \textbf{Suspended}
  \mathrel{\sfU_{10}} \textbf{Ready})$ formalizes this qualitative property
  and also states that we expect to reach the $\textbf{Ready}$ state
  using transitions with cost $10$.  With $\lambda = .9$, the
  point-wise interpretation $\lsem{\varphi}_\bullet(\textbf{Receiving}) =
  40$ is the cost (minus $10$) of the transition in the computation
  tree which is furthest from $10$.  In the accumulating
  interpretation, $\lsem{\varphi}_+(\textbf{Receiving}) = 48.37$ yields the
  sum of all such differences.
\end{example}

\section{Bisimulation}

We now consider extensions of \emph{strong bisimulation}
\cite{book/Milner89} over WKS, based on \eqref{wtscai.eq:d+} and
\eqref{wtscai.eq:d}.  These are filling the gap between
\emph{unweighted} and \emph{weighted} strong bisimulation as defined
below; \cf~also Def.~\ref{wtsjlap.de:simulation}.

\begin{definition}
  Let $( S, T, L, w)$ be a WKS on a set $\Props$ of atomic
  propositions. A relation $R \subseteq S \times S$ is
  \begin{itemize}
  \item an \emph{unweighted bisimulation} provided that for all $(s,t)
    \in R$, $L(s) = L(t)$ and
    \begin{itemize}
    \item if $s \to s'$, then also $t \to t'$ and $(s',t')\in R$
      for some $t' \in S'$,
    \item if $t \to t'$, then also $s \to s'$ and $(s',t')\in R$
      for some $s' \in S$;
    \end{itemize}
  \item a \emph{(weighted) bisimulation} provided that for all $(s,t)
    \in R$, $L(s) = L(t)$ and
    \begin{itemize}
    \item if $s \tto{c} s'$, then also $t \tto{c} t'$ and
      $(s',t')\in R$ for some $t' \in S'$,
    \item if $t \tto{c} t'$, then also $s \tto{c} s'$ and
      $(s',t')\in R$ for some $s' \in S$.
    \end{itemize}
  \end{itemize}
  We write $s \uwbisim t$ if $(s,t)\in R$ for some unweighted
  bisimulation $R$, and $s \wbisim t$ if $(s,t)\in R$ for some
  weighted bisimulation $R$.
\end{definition}

The motivation for the variants defined below is that, in order to
relate structures, we do not always need perfect matching of
transition weights; rather we would like to know how accurately
weights are matched.  As with the simulation distances of
Chapter~\ref{ch:wtsjlap}, we call a \emph{bisimulation distance} any
pseudometric on the states of a WKS which mediates between unweighted
and weighted bisimilarity:

\begin{definition}\label{wtscai.def:distance-mix}
  A \emph{bisimulation distance} on a WKS $( S, T, L, w)$ is a
  function $d: S \times S\to \Realnn\cup\{ \infty\}$ which satisfies the
  following for all $s_1, s_2, s_3\in S$:
  \begin{itemize}
  \item $d( s_1, s_1)= 0$,
  \item $d( s_1, s_2)+ d( s_2, s_3)\ge d( s_1, s_3)$,
  \item $d( s_1, s_2)= d( s_2, s_1)$,
  \item $s_1\wbisim s_2$ implies $d( s_1, s_2)= 0$ and
  \item $d( s_1, s_2)\ne \infty$ implies $s_1\uwbisim s_2$
  \end{itemize}
\end{definition}

Our distances are based on distances of (infinite) sequences of real
\linebreak
numbers, which is appropriate, as for ($s,t$) in $\uwbisim$ (or in
$\wbisim$), any path \linebreak
$(s, a, s_1 , a_1, s_2, \dots )\in \tracesfrom{s}$ must be matched by
an equal-length path \linebreak
$(t, b, t_1 , b_1, t_2, \dots )\in \tracesfrom{t}$ with $(s_i,t_i)$ in
$\uwbisim$ (respectively $\wbisim$).

By extending bisimulation with the $d_+$ and $d_\bullet$ distances, we
collect a family of relations $\{R_\epsilon \subseteq S \times
S\}$ (\ie a map $\Realnn \to 2^{S \times S}$)
since, due to discounting, for each step the distance between each
successor pair may grow:

\begin{definition}
  \label{wtscai.def:awbisim}
  A family of relations $\cal R = \{ R_\epsilon \subseteq S \times
  S \mid \epsilon > 0 \}$ is
  \begin{itemize}
  \item an \emph{accumulating bisimulation family} provided that for
    all $(s,t)\in R_\epsilon \in \cal R$, $L(s) = L(t)$ and
    \begin{itemize}
    \item if $s\tto{c}s'$, then also $t \tto{d} t'$ with
      $|c-d| \leq \epsilon$ for some $d\in \Realnn$ and
      $(s',t')\in R_{ \epsilon'} \in \cal R$ with
      $\epsilon' \lambda \leq \epsilon-|c-d|$, and
    \item if $t\tto{c}t'$, then also $s \tto{d} s'$ with
      $|c-d| \leq \epsilon$ for some $d\in \Realnn$ and
      $(s',t')\in R_{ \epsilon'} \in \cal R$ with
      $\epsilon' \lambda \leq \epsilon-|c-d|$.
    \end{itemize}
  \item a \emph{point-wise bisimulation family} provided that for all
    $(s,t)\in R_\epsilon \in \cal R$, $L(s) = L(t)$ and
    \begin{itemize}
    \item if $s\tto{c}s'$, then also $t \tto{d} t'$ with
      $|c-d| \leq \epsilon$ for some $d\in \Realnn$ and
      $(s',t')\in R_{ \epsilon'} \in \cal R$ with
      $\epsilon'\lambda \leq \epsilon$, and
    \item if $t\tto{c}t'$, then also $s \tto{d} s'$ with
      $|c-d| \leq \epsilon$ for some $d\in \Realnn$ and
      $(s',t')\in R_{ \epsilon'} \in \cal R$ with
      $\epsilon'\lambda \leq \epsilon$ .
    \end{itemize}
  \end{itemize}
  We write $s \awbisim t$ and $s \pwbisim t$, if
  $(s,t) \in R_\epsilon \in \cal R$ for an accumulating, respectively
  point-wise, bisimulation family $\cal R$.
\end{definition}

Both variants of bisimulation families give raise to a bisimulation
distance in the sense of Definition~\ref{wtscai.def:distance-mix} by $d_+( s,
t)= \inf\{ \epsilon\mid s\awbisim t\}$ and $d_\bullet( s, t)= \inf\{
\epsilon\mid s\pwbisim t\}$.  Observe the following easy facts:

\begin{lemma}
  \label{wtscai.lem:facts}
  \mbox{}
  \begin{enumerate}
  \item For $\epsilon \leq \epsilon'$ and members
    $R_\epsilon, R_{ \epsilon'}\in \cal R$ of an accumulating or point-wise
    bisimulation family, $R_\epsilon \subseteq R_{ \epsilon'}$.
  \item Given $s\awbisim t$, then every path $\sigma = (s_0, w_0, s_1
    , w_1 s_2, \dots )\in \tracesfrom{s}$ has a corresponding path
    $\sigma' = (t_0, w_0', t_1 , w'_1 t_2, \dots )\in \tracesfrom{t}$
    such that $\epsilon = \epsilon_0$ and $s_i
    \awbisim[\epsilon_i] t_i$ for all $i$, where
    $\epsilon_{i+1}\lambda = \epsilon_{i} - |w_i - w'_i|$.
  \item Given $s\pwbisim t$, then every path $\sigma = (s_0, w_0, s_1
    , w_1 s_2, \dots )\in \tracesfrom{s}$ has a corresponding path
    $\sigma' = (t_0, w_0', t_1 , w'_1 t_2, \dots )\in \tracesfrom{t}$
    such that $\epsilon = \epsilon_0$ and $s_i
    \pwbisim[\epsilon_i] t_i$ for all $i$, where
    $\epsilon_{i+1}\lambda = \epsilon_{i}$.
  \end{enumerate}
\end{lemma}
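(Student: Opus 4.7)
The plan is to prove (1) first, then leverage it to build the matching paths in (2) and (3) by induction.

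For (1), I would reduce the claim to a monotonicity statement for the canonical relations $\awbisim[\cdot]$ and $\pwbisim[\cdot]$: given any bisimulation family $\cal R$, the upwards-closed family $\tilde{\cal R}$ defined by $\tilde R_\epsilon = \bigcup_{\epsilon'' \leq \epsilon} R_{\epsilon''}$ is again a bisimulation family. The verification is direct. If $(s,t) \in \tilde R_\epsilon$, then $(s,t) \in R_{\epsilon''}$ for some $\epsilon'' \leq \epsilon$; any witnessing transition $t \tto{d} t'$ for a move $s \tto{c} s'$ satisfies $|c-d| \leq \epsilon'' \leq \epsilon$, and the successor $(s',t') \in R_{\epsilon'''}$ with $\epsilon'''\lambda \leq \epsilon'' - |c-d|$ lies in $\tilde R_{(\epsilon - |c-d|)/\lambda}$ (accumulating case), or analogously in $\tilde R_{\epsilon/\lambda}$ (point-wise case), because all defining inequalities are preserved when $\epsilon$ is increased. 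Since $\awbisim[\epsilon]$ (and $\pwbisim[\epsilon]$) is the union over all families, this yields the desired monotonicity.

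For (2), I would proceed by induction on $i$, constructing the matching path $\sigma'$ step by step. The base case $i=0$ sets $t_0 = t$ and $\epsilon_0 = \epsilon$, with $s_0 \awbisim[\epsilon_0] t_0$ by hypothesis. For the inductive step, assume we have built the prefix up to $(s_i, t_i)$ with $s_i \awbisim[\epsilon_i] t_i$ via some family $\cal R^{(i)}$. Applying the accumulating bisimulation condition to the transition $s_i \tto{w_i} s_{i+1}$ produces some $t_i \tto{w_i'} t_{i+1}$ with $|w_i - w_i'| \leq \epsilon_i$ and $(s_{i+1}, t_{i+1}) \in R^{(i)}_{\epsilon^\ast}$ for some $\epsilon^\ast$ with $\epsilon^\ast \lambda \leq \epsilon_i - |w_i - w_i'|$. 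Setting $\epsilon_{i+1} = (\epsilon_i - |w_i - w_i'|)/\lambda$ achieves the required equality, and part (1) lifts the membership $s_{i+1} \awbisim[\epsilon^\ast] t_{i+1}$ to $s_{i+1} \awbisim[\epsilon_{i+1}] t_{i+1}$, since $\epsilon^\ast \leq \epsilon_{i+1}$.

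Part (3) follows by exactly the same induction, with the point-wise condition giving $\epsilon^\ast \lambda \leq \epsilon_i$, so that setting $\epsilon_{i+1} = \epsilon_i/\lambda$ together with part (1) delivers $s_{i+1} \pwbisim[\epsilon_{i+1}] t_{i+1}$. The only subtle point in the whole argument is the closure construction in (1): one must check that taking the $\epsilon$-downward union of $R_{\epsilon''}$'s does not break the backward (symmetric) clause of the bisimulation definition, but this is immediate because both clauses are symmetric in $s$ and $t$ and the same inequality-preservation argument applies verbatim. Once (1) is in hand, the recursive unfolding required by (2) and (3) is routine.
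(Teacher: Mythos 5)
Your proof is correct. The paper offers no argument for this lemma at all---it is introduced with ``Observe the following easy facts''---so there is no official proof to compare against, and your write-up supplies exactly the missing details. The point most worth highlighting is your handling of (1): as literally stated, the claim fails for an arbitrary bisimulation family, since one can take any valid family and replace some $R_{\epsilon'}$ with $\epsilon'>\epsilon$ by the empty relation (the defining conditions are vacuous on empty members), leaving $R_\epsilon\not\subseteq R_{\epsilon'}$. What the paper actually needs, and what your upward-closure construction $\tilde R_\epsilon=\bigcup_{\epsilon''\le\epsilon}R_{\epsilon''}$ delivers, is monotonicity of the canonical relations $\awbisim[\epsilon]$ and $\pwbisim[\epsilon]$ in $\epsilon$; your check that the upward closure is again a bisimulation family is the essential step, and the bookkeeping $\epsilon'''\lambda\le\epsilon''-|c-d|\le\epsilon-|c-d|$ (resp.\ $\epsilon'''\lambda\le\epsilon''\le\epsilon$ in the point-wise case) is exactly right, as is your remark that the backward clause goes through verbatim by symmetry. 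Parts (2) and (3) are then the expected step-by-step constructions: the forward clause yields $t_i\tto{w_i'}t_{i+1}$ with $|w_i-w_i'|\le\epsilon_i$ and a successor pair at some level $\epsilon^*$ satisfying $\epsilon^*\lambda\le\epsilon_i-|w_i-w_i'|$ (resp.\ $\epsilon^*\lambda\le\epsilon_i$), so defining $\epsilon_{i+1}$ by the required equality and invoking the monotonicity from (1) closes the induction. No gaps.
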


Note that as we only consider finite WKS, all $R_\epsilon$
relations are finite. Also, we shall speak of \emph{corresponding
  paths} when referring to the second and third properties of the
above lemma.

\section{Characterization}
\label{wtscai.se:char}

In this section we show that the presented WCTL interpretations are
adequate and expressive with respect to the appropriate bisimilarity
variant.

\subsection{Adequacy}

The link between accumulating bisimilarity and our accumulating
semantics for WCTL is as follows:

\begin{theorem}
  \label{th:wtscai.adeq-acc}
  For $s, t\in S$, $s\awbisim t$ iff\/ $\forall
  \varphi\in \whml : \bigl|\lsem{\varphi}_+(s) -
  \lsem{\varphi}_+(t)\bigr|\le \epsilon$.
\end{theorem}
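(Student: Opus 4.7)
My plan is to handle the two directions separately, as the forward direction is a straightforward structural induction while the reverse direction requires a Hennessy--Milner-style distinguishing-formula construction that uses the finiteness of the WKS.

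For the forward direction ($\Rightarrow$), I would prove by simultaneous structural induction on state formulae $\varphi$ and path formulae $\psi$ the following two statements: (i) if $s \awbisim t$ then $|\lsem{\varphi}(s) - \lsem{\varphi}(t)| \le \epsilon$; and (ii) if $\sigma \in \tracesfrom{s}$ and $\sigma' \in \tracesfrom{t}$ are corresponding paths in the sense of Lemma~\ref{wtscai.lem:facts}(2), then $|\lsem{\psi}_+(\sigma) - \lsem{\psi}_+(\sigma')| \le \epsilon$. Atomic cases follow from $L(s)=L(t)$; Boolean cases from $|\sup(a,b)-\sup(c,d)|\le\max(|a-c|,|b-d|)$ and the dual for $\inf$; the $\sfE$/$\sfA$ cases follow because the path correspondence from Lemma~\ref{wtscai.lem:facts}(2) lets us match every $\sigma\in\tracesfrom{s}$ with some $\sigma'\in\tracesfrom{t}$ (and vice versa) whose inductive value is within $\epsilon$, so the $\inf$ or $\sup$ over all traces is preserved up to $\epsilon$.

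The technical core of the forward direction is the path-formula inductive step, which relies on unrolling the recurrence $\lambda\epsilon_{i+1}=\epsilon_i-|w_i-w_i'|$ from Lemma~\ref{wtscai.lem:facts}(2) to obtain the invariant $\lambda^k\epsilon_k = \epsilon-\sum_{j=0}^{k-1}\lambda^j|w_j-w_j'|$. For $\sfX_c\varphi$ the estimate is $||w_0-c|-|w_0'-c|| + \lambda|\lsem{\varphi}(s_1)-\lsem{\varphi}(t_1)| \le |w_0-w_0'| + \lambda\epsilon_1 = \epsilon$. For $\sfF_c\varphi$, $\sfG_c\varphi$, and $\varphi_1 \sfU_c \varphi_2$, I would bound the summand corresponding to each cutoff $k$ termwise by $\sum_{j=0}^{k-1}\lambda^j|w_j-w_j'| + \lambda^k\epsilon_k = \epsilon$ using the triangle inequality on the $|{\cdot}-c|$ accumulations and the inductive hypothesis at the cutoff state, and then commute the bound through the outer $\inf_k$ or $\sup_k$.

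For the reverse direction ($\Leftarrow$), my plan is to define the family $\cal{R}=\{R_\epsilon\}_{\epsilon>0}$ by $R_\epsilon=\{(s,t)\mid\forall\varphi\in\whml,\ |\lsem{\varphi}(s)-\lsem{\varphi}(t)|\le\epsilon\}$ and verify that it is an accumulating bisimulation family in the sense of Def.~\ref{wtscai.def:awbisim}. Equality of labels $L(s)=L(t)$ is immediate from the atomic formulae $p$ (otherwise the difference is $\infty$). The transfer property is the obstacle: given $(s,t)\in R_\epsilon$ and $s\tto{c}s'$, I must exhibit $t\tto{d}t'$ with $|c-d|\le\epsilon$ and $(s',t')\in R_{\epsilon'}$ for some $\epsilon'\le(\epsilon-|c-d|)/\lambda$. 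The plan is a contradiction argument exploiting that $t$ has \emph{finitely} many outgoing transitions $\{t\tto{d_j}t_j'\}_{j\in J}$: if no such match exists, then for each $j\in J$ there is a witness $\varphi_j$ with $|c-d_j|+\lambda|\lsem{\varphi_j}(s')-\lsem{\varphi_j}(t_j')|>\epsilon$. Combining these finitely many $\varphi_j$ by $\land$/$\lor$ (choosing orientations so that $\varphi_j$ is small at $s'$ and large at $t_j'$, or swapping roles via $\sfA$ for the symmetric clause) and prefixing with $\sfE\sfX_c(\cdot)$ yields a single formula whose value at $s$ is $\le\lambda\lsem{\varphi}(s')$ (witnessed by the transition $s\tto{c}s'$) while its value at $t$ is $\min_j(|c-d_j|+\lambda\lsem{\varphi}(t_j'))$, which by construction exceeds the former by more than $\epsilon$, contradicting $(s,t)\in R_\epsilon$. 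The hard part will be choosing the Boolean combination of the $\varphi_j$ so that it simultaneously keeps the $s'$-value low and drives \emph{every} $t_j'$-value high enough to push the minimum past threshold; handling the asymmetry between $s$ and $t$ then requires repeating the argument with $\sfA\sfX_c$ for the dual clause of the bisimulation definition.
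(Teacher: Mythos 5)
Your forward direction is essentially the paper's Lemma~\ref{wtscai.lem:1}: the same simultaneous induction over state and path formulae, using corresponding paths from Lemma~\ref{wtscai.lem:facts} and the unrolled recurrence $\epsilon_k\lambda^k\le\epsilon-\sum_{j<k}\lambda^j|w_j-w_j'|$ together with $\bigl||x-y|-|x-z|\bigr|\le|y-z|$ for the termwise bounds. That part is fine.

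The reverse direction is where you have a genuine gap. The paper also defines $R_\epsilon=\{(s,t)\mid\forall\varphi:|\lsem{\varphi}_+(s)-\lsem{\varphi}_+(t)|\le\epsilon\}$, but it discharges the transfer property by appealing to the expressivity theorem (Theorem~\ref{wtscai.th:express}), i.e.\ to \emph{characteristic formulae} $\varphi(s',n)$ normalized so that $\lsem{\varphi(s',n)}_+(s')=0$. Your plan instead combines the finitely many witnesses $\varphi_j$ by $\land$/$\lor$, and this does not go through in the quantitative, negation-free setting. First, orientation: $\whml$ has negation only on atomic propositions, so a witness with $\lsem{\varphi_j}_+(s')-\lsem{\varphi_j}_+(t_j')>\delta_j$ (rather than the other sign) cannot be flipped, and passing to the $\sfA\sfX_c$ clause changes which transitions are quantified over, not the orientation of an individual witness. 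Second, and more fundamentally, even when all witnesses have the right orientation they have arbitrary baselines at $s'$: conjunction evaluates to $\max_k\lsem{\varphi_k}_+(\cdot)$, so $\lsem{\bigwedge_k\varphi_k}_+(t_j')-\lsem{\bigwedge_k\varphi_k}_+(s')\ge\lsem{\varphi_j}_+(t_j')-\max_k\lsem{\varphi_k}_+(s')$, which can fall below $\delta_j$ whenever some other $\varphi_k$ is large at $s'$; dually, disjunction evaluates to a $\min$ that some wrong $\varphi_k$ can drag down at $t_j'$. The fix is exactly the paper's: replace the ad hoc witnesses by the single characteristic formula of $s'$ (value $0$ at $s'$, value approximating the bisimulation distance elsewhere, up to an arbitrarily small $\gamma$ absorbed by discounting), which serves as a uniform distinguishing formula for all $t_j'$ at once and makes both orientation and baseline problems disappear.
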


The proof follows from Lemmas \ref{wtscai.lem:1} and
\ref{wtscai.lem:2} below. Observe that this provides us with the
following corollary which is precisely the standard notion of
adequacy, see~\cite{DBLP:journals/jacm/HennessyM85}:

\begin{corollary}
  For $s, t\in S$, $s\awbisim[0] t$ iff\/
  $\lsem{\varphi}_+(s) = \lsem{\varphi}_+(t)$ for all
  $\varphi\in \whml$.
\end{corollary}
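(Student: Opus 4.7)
The plan is to observe that the corollary is simply the specialization of Theorem~\ref{th:wtscai.adeq-acc} to the case $\epsilon = 0$. Since the theorem holds for every $\epsilon \ge 0$ (recall that the indexing set for accumulating bisimulation families in Definition~\ref{wtscai.def:awbisim} is $\epsilon \ge 0$, and a bisimulation family relating $s$ and $t$ at level $0$ is by Lemma~\ref{wtscai.lem:facts}(1) still contained in every $R_{\epsilon'}$ for $\epsilon' \ge 0$), we may instantiate it at $\epsilon = 0$.

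For the forward direction, assume $s \awbisim[0] t$. By Theorem~\ref{th:wtscai.adeq-acc} with $\epsilon = 0$, we have $|\lsem{\varphi}_+(s) - \lsem{\varphi}_+(t)| \le 0$ for every $\varphi \in \whml$. Since the absolute value is non-negative, this forces $\lsem{\varphi}_+(s) = \lsem{\varphi}_+(t)$.

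For the reverse direction, suppose $\lsem{\varphi}_+(s) = \lsem{\varphi}_+(t)$ for all $\varphi \in \whml$. Then $|\lsem{\varphi}_+(s) - \lsem{\varphi}_+(t)| = 0 \le 0$ holds trivially for every $\varphi$, and another application of Theorem~\ref{th:wtscai.adeq-acc} at $\epsilon = 0$ yields $s \awbisim[0] t$.

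There is no real obstacle here: once the theorem is in hand, the corollary is a one-line specialization, and the two directions collapse because $|x - y| \le 0$ is equivalent to $x = y$ (even when the values lie in $\Realnn \cup \{\infty\}$, since equality to $\infty$ on both sides yields difference $0$ by convention, and otherwise the standard absolute value applies).
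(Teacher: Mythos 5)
Your proposal is correct and matches the paper exactly: the paper derives this corollary by the same one-line observation that instantiating Theorem~\ref{th:wtscai.adeq-acc} at $\epsilon = 0$ collapses $\bigl|\lsem{\varphi}_+(s) - \lsem{\varphi}_+(t)\bigr|\le 0$ to equality. Nothing is missing.
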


We obtain an equivalent result for the point-wise semantics:

\begin{theorem}
  \label{th:wtscai.adeq-pw}
  For $s, t\in S$, $s\pwbisim t$ iff\/
  $\forall \varphi\in \whml : \bigl|\lsem{\varphi}_\bullet(s) -
  \lsem{\varphi}_\bullet(t)\bigr|\le \epsilon$.
\end{theorem}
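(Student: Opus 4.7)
The plan is to follow the structure of the proof of Theorem~\ref{th:wtscai.adeq-acc}, splitting into a soundness lemma (bisimulation implies small formula gap) and a completeness lemma (small formula gap implies bisimulation), each adapted from the $+$-semantics to the $\bullet$-semantics by replacing summation with maximum.

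For the soundness direction I would induct on the structure of $\varphi$. Atomic and Boolean cases are immediate from $L(s)=L(t)$, which holds whenever $s\pwbisim t$, together with $|\max(a,b)-\max(a',b')|\le\max(|a-a'|,|b-b'|)$ and its dual. For $\sfE\psi$ and $\sfA\psi$ I would invoke Lemma~\ref{wtscai.lem:facts}(3): every path $\sigma\in\tracesfrom s$ has a corresponding path $\sigma'\in\tracesfrom t$ whose $i$-th pair of states lies in $\pwbisim[\epsilon_i]$ with $\epsilon_i=\epsilon/\lambda^i$, and whose $i$-th weights therefore differ by at most $\epsilon_i$. The key to closing the induction is that the explicit $\lambda^i$ appearing in every clause of the point-wise path semantics exactly cancels the $1/\lambda^i$ growth of $\epsilon_i$, so that $\lambda^i\,|\sigma(i)_w-\sigma'(i)_w|\le\lambda^i\epsilon_i=\epsilon$ at every depth. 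An inner induction on $\psi$ then propagates this bound through $\sfX_c,\sfF_c,\sfG_c$ and $\sfU_c$ using $|\max(\cdot)-\max(\cdot)|\le\max(|\cdot-\cdot|)$ together with the standard bounds $|\inf f-\inf g|,|\sup f-\sup g|\le\sup|f-g|$.

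For completeness I would define
\[
R_\epsilon=\{(s,t)\in S\times S\mid\forall\varphi\in\whml:\ |\lsem{\varphi}_\bullet(s)-\lsem{\varphi}_\bullet(t)|\le\epsilon\}
\]
and show that $\{R_\epsilon\}_{\epsilon\ge 0}$ is a point-wise bisimulation family, which yields $s\pwbisim t$. The labelling condition $L(s)=L(t)$ for $(s,t)\in R_\epsilon$ follows since any $p$ in the symmetric difference of $L(s)$ and $L(t)$ would make $|\lsem{p}(s)-\lsem{p}(t)|=\infty$. For the transfer condition, assume $s\tto c s'$ and suppose, for contradiction, that no $t\tto d t'$ satisfies both $|c-d|\le\epsilon$ and $(s',t')\in R_{\epsilon/\lambda}$. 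Using finiteness of the transition relation, enumerate the candidates $t\tto{d_i}t_i'$ with $|c-d_i|\le\epsilon$; for each such $i$ the induction hypothesis on $\epsilon/\lambda$ supplies a witness formula $\varphi_i$ with $|\lsem{\varphi_i}_\bullet(s')-\lsem{\varphi_i}_\bullet(t_i')|>\epsilon/\lambda$.

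The main obstacle will be combining these finitely many $\varphi_i$ into a single WCTL formula separating $s$ from $t$ by more than $\epsilon$, because WCTL has no general negation and the signs of $\lsem{\varphi_i}_\bullet(s')-\lsem{\varphi_i}_\bullet(t_i')$ may differ across $i$. I plan to partition the index set according to this sign and handle each part separately: for the positive part use $\sfA\sfX_c(\bigvee_i\varphi_i)$, for the negative part use $\sfE\sfX_c(\bigwedge_i\varphi_i)$, exploiting that under the point-wise semantics $\lsem{\bigwedge_i\varphi_i}_\bullet=\sup_i\lsem{\varphi_i}_\bullet$ and dually, so a single conjunction dominates all its conjuncts simultaneously at a chosen target state. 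At least one of the two combined formulae then witnesses a gap $>\epsilon$ at $(s,t)$, contradicting $(s,t)\in R_\epsilon$. The symmetric clause (some $t\tto c t'$ unmatched by $s$) is handled by swapping the roles of $s$ and $t$.
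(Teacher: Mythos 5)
Your decomposition into a soundness lemma and a completeness lemma, mirroring Lemmas~\ref{wtscai.lem:1} and~\ref{wtscai.lem:2}, is exactly what the paper intends (it only asserts that the point-wise case is ``similar'' to the accumulating one).  The soundness half is correct and genuinely routine: Lemma~\ref{wtscai.lem:facts}(3) gives $\epsilon_i=\epsilon/\lambda^i$ along corresponding paths, the explicit $\lambda^j$ in every clause of the point-wise path semantics cancels this growth, and $\bigl||x-y|-|x-z|\bigr|\le|y-z|$ together with the $\sup/\inf/\max$ contraction inequalities closes the induction.

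The completeness half has a genuine gap at the combination step, and the sign-partition does not repair it.  Your witnesses $\varphi_i$ control only the \emph{differences} $\lsem{\varphi_i}_\bullet(s')-\lsem{\varphi_i}_\bullet(t_i')$, not the absolute values, whereas $\lsem{\bigwedge_i \varphi_i}_\bullet=\sup_i\lsem{\varphi_i}_\bullet$ is taken pointwise at every state.  Nothing rules out, say, $\lsem{\varphi_1}_\bullet(s')=0$, $\lsem{\varphi_1}_\bullet(t'_1)=10$, $\lsem{\varphi_1}_\bullet(t'_2)=0$ and $\lsem{\varphi_2}_\bullet(s')=100$, $\lsem{\varphi_2}_\bullet(t'_2)=110$, $\lsem{\varphi_2}_\bullet(t'_1)=0$ (both separations in the same sign class): the conjunction then evaluates to $100$, $10$, $110$ at $s'$, $t'_1$, $t'_2$ and no longer separates $s'$ from $t'_1$, while the disjunction evaluates to $0$ everywhere.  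Two further problems: each of your two combined formulae constrains only the successors in its own sign class, but $\lsem{\sfE\sfX_c\phi}_\bullet(t)$ is an infimum over \emph{all} transitions of $t$, so when both classes are nonempty neither formula need witness any gap at $(s,t)$; and the transitions $t\tto{d}t'$ with $|c-d|>\epsilon$, which you exclude from the enumeration, still enter that infimum and contribute only $|c-d|$, which exceeds $\lambda\,\lsem{\phi}_\bullet(s')+\epsilon$ only when $\lsem{\phi}_\bullet(s')=0$.  All three difficulties have the same cure, which is the route the paper actually takes for the accumulating case: derive completeness from expressivity.  The characteristic formulae $\varphi(s',n)$ of Theorem~\ref{wtscai.th:express_pt} evaluate to $0$ at $s'$ and approximate the bisimulation distance elsewhere, so a single formula of the form $\sfE\sfX_c\,\varphi(s',n)$ (and its $\sfA\sfX$-counterpart for the symmetric clause) does what your ad hoc conjunctions and disjunctions of arbitrary separating formulae cannot.
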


\begin{example}
  We consider again the printer from Figure~\ref{wtscai.fig:example}.
  When ignoring \textbf{Color} and \textbf{Printing} as atomic
  propositions, we have $\textbf{Color} \awbisim[.2]
  \textbf{Printing}$, as the two initial transition are the only
  difference.  As a formula which realizes this bisimulation distance
  one can take $\varphi = \text{power/on} \land \text{A4} \land
  \sfA\sfX_{0.5}\text{Ready}$; then $\lsem{\varphi}_+(\textbf{Printing}) =
  0$ and $\lsem{\varphi}_+(\textbf{Color}) = .2$.
\end{example}

The proofs of adequacy, and also of expressivity below, for the
accumulating and point-wise cases are similar, hence we concentrate on
the accumulating case.  In the proof we will repeatedly make use of
the lesser-known little brother of the triangle inequality
\begin{equation*}
  \bigl||x-y|-|x-z|\bigr|\le\bigl|y-z\bigr|\,.
\end{equation*}

\begin{lemma}\label{wtscai.lem:1}
  Let $s, t\in S$ with $s\awbisim t$, and let $\sigma = (s,u,s_1, u_1,
  \ldots) \in \tracesfrom{s}$, $\tau = (t,v,t_1, v_1, \ldots) \in
  \tracesfrom{t}$ be corresponding paths.  Then
  $\bigl|\lsem{\varphi}_+(s) - \lsem{\varphi}_+(t)\bigr| \leq
  \epsilon$ for all state formulae $\varphi$, and
  $\bigl|\lsem{\varphi}_+(\sigma) - \lsem{\varphi}_+(\tau)\bigr| \le
  \epsilon$ for all path formulae $\varphi$.
\end{lemma}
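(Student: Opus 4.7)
The plan is to prove both claims simultaneously by mutual structural induction on state formulae and path formulae, since $\sfE$ and $\sfA$ on a state formula reduce to a path-formula claim, while $\mathsf{X}_c, \mathsf{F}_c, \mathsf{G}_c, \mathsf{U}_c$ applied to state subformulae reduce to state-formula claims on successor states along corresponding paths. The engine throughout is Lemma~\ref{wtscai.lem:facts}(2), which guarantees that the given pair $(\sigma,\tau)$ is part of a labelled coupling $(s_i,t_i)\in R_{\epsilon_i}$ with the budget recurrence $\lambda\,\epsilon_{i+1}=\epsilon_i-|w_i-w_i'|$, together with the reverse triangle inequality $\bigl||x-y|-|x-z|\bigr|\le|y-z|$ used to convert semantic differences inside absolute values.

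First I would dispatch the state cases. For $p$ and $\lnot p$ the defining condition $L(s)=L(t)$ of $\awbisim$ gives equality of the two semantic values, so the bound is trivially $0$. For $\varphi_1\land\varphi_2$ and $\varphi_1\lor\varphi_2$ one uses the elementary inequalities $|\sup(a,b)-\sup(c,d)|\le\max(|a-c|,|b-d|)$ and analogously for $\inf$, then applies the induction hypothesis on $\varphi_1,\varphi_2$ at the common state pair. For $\sfE\psi$ and $\sfA\psi$, given any $\sigma\in\tracesfrom s$, Lemma~\ref{wtscai.lem:facts}(2) supplies a corresponding $\tau\in\tracesfrom t$ to which the path-formula induction hypothesis applies, yielding $|\lsem\psi_+(\sigma)-\lsem\psi_+(\tau)|\le\epsilon$; a symmetric statement holds starting from $\tau$. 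Standard $\sup$/$\inf$-exchange then gives the required bound on $\lsem{\sfE\psi}(s),\lsem{\sfE\psi}(t)$ and likewise for $\sfA$.

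Next, for the path formulae, the key technical step is a per-index bound $|A_k(\sigma)-A_k(\tau)|\le\epsilon$ uniformly in $k$, from which both $\sup_k$ and $\inf_k$ versions inherit $\le\epsilon$. For $\mathsf{X}_c\varphi$ one writes $|\lsem{\mathsf{X}_c\varphi}_+(\sigma)-\lsem{\mathsf{X}_c\varphi}_+(\tau)|\le\bigl||u-c|-|v-c|\bigr|+\lambda\,|\lsem\varphi_+(\sigma^1)-\lsem\varphi_+(\tau^1)|\le|u-v|+\lambda\epsilon_1=\epsilon$, using reverse triangle for the first term and the recurrence for the second. For $\mathsf{F}_c\varphi$ and $\mathsf{G}_c\varphi$ the same idea scales: bound $|A_k(\sigma)-A_k(\tau)|$ by $\sum_{j=0}^{k-1}\lambda^j|w_j-w_j'|+\lambda^k\epsilon_k$, and observe the telescoping identity $\lambda^j|w_j-w_j'|=\lambda^j\epsilon_j-\lambda^{j+1}\epsilon_{j+1}$ delivered by Lemma~\ref{wtscai.lem:facts}(2); summing gives $\epsilon-\lambda^k\epsilon_k+\lambda^k\epsilon_k=\epsilon$.

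The hard part will be the until case $\varphi_1\mathsf{U}_c\varphi_2$. Here the summand is $\lambda^j|\lsem{\varphi_1}_+(\sigma^j)-c|$ rather than a raw weight, so reverse triangle only gives $\lambda^j|\lsem{\varphi_1}(s_j)-\lsem{\varphi_1}(t_j)|$, which the state-formula induction hypothesis merely bounds by $\lambda^j\epsilon_j$; naive summation does not telescope to $\epsilon$. My plan to patch this is to match $\inf$s rather than per-index values: given $k^\sigma$ realising the $\inf$ for $\sigma$ up to $\eta$, I would pick, step by step along $0\le j<k^\sigma$, the same index $k^\tau=k^\sigma$, then bound $A_{k^\sigma}(\tau)-A_{k^\sigma}(\sigma)$ from above using, at each $j$, the one-sided inequality $|\lsem{\varphi_1}(t_j)-c|\le|\lsem{\varphi_1}(s_j)-c|+|\lsem{\varphi_1}(s_j)-\lsem{\varphi_1}(t_j)|$ and then invoking the induction hypothesis for $\varphi_1$ in the form that witnesses the bound along the \emph{same} coupling already fixed, so that the accounting of the $\epsilon_j$ integrates against the telescoping identity rather than against a fresh summation. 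A symmetric argument handles the other direction, and the $\lambda^k\lsem{\varphi_2}$ tail is treated exactly as in the $\mathsf{F}_c$ case. This is the delicate bookkeeping step; the remaining cases are routine.
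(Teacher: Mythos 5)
Your treatment of every case except $\sfU_c$ is, up to phrasing, the paper's own proof: the same structural induction, the same appeal to Lemma~\ref{wtscai.lem:facts}(2) to produce corresponding paths in the $\sfE$/$\sfA$ cases, the same reverse-triangle-plus-budget-recurrence argument for $\sfX_c$, and the same telescoping $\sum_{j<k}\lambda^j|w_j-w_j'|+\lambda^k\epsilon_k\le\epsilon$ for $\sfF_c$ and $\sfG_c$ (the paper reaches the per-index bound by $\delta$-approximating the optimal $k$ rather than by your uniform-in-$k$ estimate, but these are interchangeable). So far, so good.

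For $\sfU_c$ you have correctly spotted a real problem that the paper hides behind ``similar to the previous case'', but your patch does not close it. All the induction hypothesis gives you about $\varphi_1$ at the pair $(s_j,t_j)\in R_{\epsilon_j}$ is $|\lsem{\varphi_1}_+(\sigma^j)-\lsem{\varphi_1}_+(\tau^j)|\le\epsilon_j$; there is no stronger ``form witnessed along the coupling'', because the quantities that telescope are the weight differences $|w_j-w_j'|$, and the formula-value differences at the $j$-th states are simply not tied to them. After your one-sided estimate you are therefore left with $\sum_{j<k}\lambda^j\epsilon_j$, which is not bounded by $\epsilon-\lambda^k\epsilon_k$: when all weights agree, $\epsilon_j=\epsilon/\lambda^j$ and the sum is $k\epsilon$. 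This loss is not an artifact of the bookkeeping. Take $\lambda=\tfrac12$, chains $s_0,\dots,s_K$ and $t_0,\dots,t_K$ linked by weight-$0$ transitions, give each $s_j$ an extra transition of weight $2^j$ and each $t_j$ one of weight $0$ to a common $0$-looping deadlock state, and let $q$ hold exactly at $s_K=t_K$. One checks that $s_0\awbisim[1] t_0$ with $\epsilon_j=2^j$, while $\varphi_1=\sfA\sfX_0(p\lor\lnot p)$ gives $\lsem{\varphi_1}_+(\sigma^j)=2^j$ and $\lsem{\varphi_1}_+(\tau^j)=0$ along the corresponding main-chain paths, so that $\lsem{\varphi_1\sfU_0\, q}_+(\sigma)=K$ and $\lsem{\varphi_1\sfU_0\, q}_+(\tau)=0$, violating the claimed bound for any $K\ge2$. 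So no amount of accounting will make the $\sfU_c$ case go through as stated; it needs a genuinely different argument or a restriction on the formulae.
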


\begin{proof*}
  We prove the lemma by structural induction in $\varphi$.  The
  induction base is clear, as $s\awbisim t$ implies that $p\in
  L( s)$ if and only if $p\in L( t)$, hence
  $\lsem{\varphi}_+(s)= \lsem{\varphi}_+(t)$ for $\varphi = p$ or
  $\varphi = \lnot p$.  For the inductive step, we examine each
  syntactic construction in turn:
  \begin{enumerate}
  \item $\varphi = \varphi_1 \lor \varphi_2$
      
    There are four cases to consider, corresponding to whether
    $\lsem{\varphi_1}_+( s)\le \lsem{\varphi_2}_+( s)$ or
    $\lsem{\varphi_1}_+( s)> \lsem{\varphi_2}_+( s)$ and similarly for
    $\lsem{\varphi_1}_+( t)$ and $\lsem{\varphi_2}_+( t)$.  We show the
    proof for one of the ``mixed'' cases; the other three are similar
    or easier:

    Assume $\lsem{\varphi_1}_+( s)\le \lsem{\varphi_2}_+( s)$ and
    $\lsem{\varphi_1}_+( t)> \lsem{\varphi_2}_+( t)$.  Then $\lsem{
      \varphi_1\lor \varphi_2}_+( s)- \lsem{ \varphi_1\lor \varphi_2}_+( t)=
    \lsem{\varphi_1}_+( s)- \lsem{ \varphi_2}_+( t)$, and
    $\lsem{\varphi_1}_+( s)- \lsem{ \varphi_1}_+( t)\le \lsem{\varphi_1}_+(
    s)- \lsem{ \varphi_2}_+( t)\le \lsem{\varphi_2}_+( s)- \lsem{
      \varphi_2}_+( t)$, and by induction hypothesis, $-\epsilon\le
    \lsem{\varphi_1}_+( s)- \lsem{ \varphi_1}_+( t)$ and
    $\lsem{\varphi_2}_+( s)- \lsem{ \varphi_2}_+( t)\le \epsilon$.

  \item $\varphi = \varphi_1 \land \varphi_2$.  This is similar to the
    previous case.

  \item $\varphi = \sfE \varphi_1$
      
    By definition of $\lsem{ \sfE \varphi_1}_+$ there is a path $\sigma
    \in \tracesfrom{s}$ for which $\lsem{\varphi_1}_+(\sigma) =
    \lsem{\varphi}_+(s)$.  By Lemma \ref{wtscai.lem:facts} there is a
    corresponding path $\tau \in \tracesfrom{t}$, and from the
    induction hypothesis we know that $|\lsem{\varphi_1}_+(\sigma) -
    \lsem{\varphi_1}_+(\tau)| \leq \epsilon$. Thus $|\lsem{\varphi}_+(s)
    - \lsem{\varphi}_+(t)| \leq \epsilon$.

  \item $\varphi = \sfA \varphi_1$.  This is similar to the previous case.

  \item $\varphi = \sfX_c \varphi_1$

    By definition, $\lsem{\varphi}_+(\sigma) =
    \lambda\lsem{\varphi_1}_+(\sigma^1) + |c - u|$ and
    $\lsem{\varphi}_+(\tau) = \lambda\lsem{\varphi_1}_+(\tau^1) + |c
    - v|$, where $\sigma= s \tto{u} \sigma^1$ and $\tau= t \tto{v}
    \tau^1$. Since $s\awbisim t$ and $\sigma$ and $\tau$ correspond,
    we have $\sigma(1) \awbisim[\epsilon'] \tau(1)$ with
    $\epsilon'\lambda \leq \epsilon - |u - v|$, and by
    induction hypothesis $|\lsem{\varphi_1}_+(\sigma^1) -
    \lsem{\varphi_1}_+(\tau^1)| \leq \epsilon'$.  Hence
    $\bigl|\lsem{ \varphi}_+( \sigma)- \lsem{ \varphi}_+(
    \tau)\bigr|\le \bigl|| c- u|-| c- v|\bigr|+ \lambda\bigl| \lsem{
      \varphi_1}_+( \sigma^1)- \lsem{ \varphi_2}_+( \tau^1)\bigr|\le|
    u- v|+ \epsilon-| u- v|= \epsilon$.

  \item $\varphi = \sfF_c \varphi_1$

    Pick any $\delta> 0$, then there is $k\in \Nat$ for which $S_k=
    \sum_{ j= 0}^{ k- 1} \lambda^j| \sigma( j)_w- c|+ \lambda^k \lsem{
      \phi}_+( \sigma^k)\le \lsem{ \phi}_+( \sigma)+ \delta$.  As the
    paths $\sigma$ and $\tau$ correspond, we also have $T_k= \sum_{ j=
      0}^{ k- 1} \lambda^j| \tau( j)_w- c|+ \lambda^k \lsem{ \phi}_+(
    \tau^k)\le \lsem{ \phi}_+( \tau)+ \delta$.  Repeated use of the
    definition of $\awbisim$ yields $\sigma( k)\awbisim[\epsilon']
    \tau(k)$ with $\epsilon'\lambda^k\le \epsilon- \sum_{ j=
      0}^{ k- 1} \lambda^j\bigl| \sigma( j)_w- \tau( j)_w\bigr|$,
    hence by induction hypothesis, $\bigl| \lsem{ \phi}_+( \sigma^k)-
    \lsem{ \phi}_+( \tau^k)\bigr|\le \epsilon'$.  Thus $\bigl| \lsem{
      \phi}_+( \sigma)- \lsem{ \phi}_+( \tau)\bigr|\le\bigl| S_k-
    T_k\bigr|+ \delta\le \sum_{ j= 0}^{ k- 1} \lambda^j\bigl|| \sigma(
    j)_w- c|-| \tau( j)_w- c|\bigr|+ \lambda^k\bigl| \lsem{ \phi}_+(
    \sigma^k)- \lsem{ \phi}_+( \tau^k)\bigr|+ \delta\le \epsilon+
    \delta$.  As these considerations hold for any $\delta> 0$, we
    must have $\bigl| \lsem{ \phi}_+( \sigma)- \lsem{ \phi}_+(
    \tau)\bigr|\le \epsilon$.

  \item $\varphi = \sfG_c \varphi_1$; $\varphi = \varphi_1 \sfU_c
    \varphi_2$.  These are similar to the previous
    case. \hfill$\square$

  \end{enumerate}
\end{proof*}

\begin{lemma}
  \label{wtscai.lem:2}
  Let $s, t\in S$ and assume that $\bigl|\lsem{\varphi}_+(s) -
  \lsem{\varphi}_+(t)\bigr| \leq \epsilon$ for all state formulae
  $\varphi \in \whml$.  Then $\smash{s\awbisim t}$.
\end{lemma}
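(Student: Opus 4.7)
The plan is to prove the contrapositive by constructing a suitable accumulating bisimulation family out of the logical distance. Define $R_\epsilon=\{(s,t)\in S\times S\mid |\lsem{\varphi}_+(s)-\lsem{\varphi}_+(t)|\le\epsilon\text{ for all }\varphi\in\whml\}$ and let $\cal R=\{R_\epsilon\mid\epsilon\ge 0\}$. The lemma follows if $\cal R$ is an accumulating bisimulation family, since then the hypothesis places $(s,t)$ in $R_\epsilon$ and therefore $s\awbisim t$.

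The label-preservation clause is the easy part: if $p\in L(s)\setminus L(t)$ (or the symmetric situation), then taking $\varphi=p$ yields $\lsem{\varphi}_+(s)=0$ while $\lsem{\varphi}_+(t)=\infty$, contradicting $(s,t)\in R_\epsilon$ for any finite $\epsilon$. For the transition clause, fix $(s,t)\in R_\epsilon$ and $s\tto{c}s'$ and suppose, toward a contradiction, that no $t\tto{d}t'$ simultaneously satisfies $|c-d|\le\epsilon$ and $(s',t')\in R_{\epsilon'}$ with $\lambda\epsilon'\le\epsilon-|c-d|$. Since the WKS is finite, enumerate the outgoing $t$-transitions $t\tto{d_j}t_j'$ for $j=1,\dots,n$ and partition them into case~(a), where $|c-d_j|>\epsilon$, and case~(b), where $|c-d_j|\le\epsilon$ but $(s',t_j')\notin R_{\epsilon_j'}$ with $\epsilon_j'=(\epsilon-|c-d_j|)/\lambda$. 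For each $j$ in case~(b), a witness formula $\psi_j\in\whml$ exists with $|\lsem{\psi_j}_+(s')-\lsem{\psi_j}_+(t_j')|>\epsilon_j'$.

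The plan is to combine these witnesses into a single state formula of the shape $\varphi=\sfE\sfX_c\Psi$, so that $\lsem{\varphi}_+(s)\le\lambda\lsem{\Psi}_+(s')$ (using the transition $s\tto{c}s'$) while $\lsem{\varphi}_+(t)=\min_j(|c-d_j|+\lambda\lsem{\Psi}_+(t_j'))$. It suffices to choose $\Psi$ so that $\lsem{\Psi}_+(t_j')>\lsem{\Psi}_+(s')+\epsilon_j'$ for every case~(b) index $j$, since case~(a) transitions already contribute more than $\epsilon$ regardless of $\Psi$; the resulting inequality $\lsem{\varphi}_+(t)-\lsem{\varphi}_+(s)>\epsilon$ contradicts $(s,t)\in R_\epsilon$. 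The symmetric clause (matching a $t$-transition from $s$) is established by the mirror argument, with roles of $s$ and $t$ swapped.

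The main obstacle is constructing $\Psi$ from the witnesses $\psi_j$, because the separations $|\lsem{\psi_j}_+(s')-\lsem{\psi_j}_+(t_j')|$ may have opposite signs for different $j$. The plan is to split case~(b) into the subclass $B^+$ where $\lsem{\psi_j}_+(t_j')-\lsem{\psi_j}_+(s')>\epsilon_j'$ and the subclass $B^-$ where the opposite holds, and to handle them separately: for indices in $B^+$ a conjunction $\bigwedge_{j\in B^+}\psi_j$ (whose semantics is the pointwise $\sup$ and therefore dominated at each state by the maximum contributing $\psi_j$) gives the right separation after suitable shifting by pre-pending $\sfX_{c'}$ constructors with carefully chosen constants $c'$; for indices in $B^-$ the argument is dualized, replacing $\sfE\sfX_c$ by $\sfA\sfX_c$ and conjunction by disjunction to reverse the direction. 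A final step merges the two resulting formulae into one, exploiting $\wedge$ as pointwise maximum and the freedom in choosing the real parameter $c$ in $\sfX_c$ to align the shifts. This combinatorial bookkeeping is where the bulk of the technical work sits; everything else is routine once one knows that $\cal R$ is a bisimulation family.
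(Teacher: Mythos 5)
Your overall strategy is exactly the one the paper itself records as its alternative argument: define $R_\epsilon=\{(s,t)\mid\forall\varphi\in\whml:|\lsem{\varphi}_+(s)-\lsem{\varphi}_+(t)|\le\epsilon\}$ and check that $\cal R=\{R_\epsilon\}$ is an accumulating bisimulation family; your treatment of the label clause via atomic propositions is also the standard one. The difference is that the paper leaves the transition clause as an unelaborated observation and primarily derives the lemma from the expressivity Theorem~\ref{wtscai.th:express}, whereas you try to carry the transition clause out explicitly — and it is precisely there that your argument has genuine gaps.

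Concretely: (i) for case~(a) transitions you only obtain $|c-d_j|+\lambda\,\lsem{\Psi}_+(t_j')>\epsilon$, but the contradiction requires the minimum over $j$ to exceed $\lsem{\varphi}_+(s)+\epsilon$, which may be as large as $\lambda\,\lsem{\Psi}_+(s')+\epsilon$; nothing controls $\lsem{\Psi}_+(s')$ once $\Psi$ is assembled from arbitrary witnesses. (ii) Conjunction is pointwise $\sup$, so with $\Psi=\bigwedge_{j\in B^+}\psi_j$ you do get $\lsem{\Psi}_+(t_j')\ge\lsem{\psi_j}_+(t_j')$, but $\lsem{\Psi}_+(s')=\sup_k\lsem{\psi_k}_+(s')$ may exceed $\lsem{\psi_j}_+(s')$ by more than the slack $\epsilon_j'$ you had, so the individual separations are not preserved by the maximum; prepending $\sfX_{c'}$ does not repair this, since it shifts a state's value by a transition-dependent amount $|w-c'|$ rather than by a constant you choose. (iii) The $B^-$ case is the most serious: $\whml$ has negation only on atomic propositions, so a witness with $\lsem{\psi_j}_+(s')>\lsem{\psi_j}_+(t_j')+\epsilon_j'$ cannot be flipped, and switching to $\sfA\sfX_c$ loses control of which outgoing path realizes the supremum, so the case~(a) transitions of $t$ now inflate $\lsem{\varphi}_+(t)$ as well and destroy the intended inequality. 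The clean way to close all three gaps — and the reason the paper routes the proof through Theorem~\ref{wtscai.th:express} — is to use the characteristic formulas $\varphi(s,n)$ of that theorem instead of ad hoc witnesses: they are normalized so that $\lsem{\varphi(s,n)}_+(s)=0$, and they already combine the $\sfE\sfX_w$ and $\sfA\sfX_w$ requirements in a single formula, so applying the hypothesis to $\varphi=\varphi^s_\gamma$ immediately gives $\lsem{\varphi^s_\gamma}_+(t)\le\epsilon$ and hence $s\awbisim[\epsilon+\gamma]t$ for every $\gamma>0$, from which $s\awbisim t$ follows. I would either restructure your proof around those characteristic formulas or substantially rework the witness-combination step.
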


\begin{proof}
  This follows directly from Theorem~\ref{wtscai.th:express} below,
  but one can also observe that the accumulating family $\cal R=\{
  R_\epsilon\}$ defined by
  \begin{equation*}
    R_\epsilon = \big\{(s,t) \mid s,t \in S, \forall \varphi \in
    \whml : \bigl|\lsem{\varphi}_+(s) - \lsem{\varphi}_+(t)\bigr| \leq
    \epsilon\big\}
  \end{equation*}
  is indeed an accumulating bisimulation in terms of Definition
  \ref{wtscai.def:awbisim}.
\end{proof}

\subsection{Expressivity}

We show that WCTL with accumulating semantics is expressive with
respect to accumulating bisimulation in the following sense:

\begin{theorem}
  \label{wtscai.th:express}
  For each $s\in S$ and every $\gamma\in \Realp$, there exists a
  state formula $\varphi^s_\gamma\in \whml$, interpreted over the
  accumulating semantics, which characterizes $s$ up to accumulating
  bisimulation and up to $\gamma$, \ie such that for all $s'\in
  S$, $s\awbisim s'$ if and only if $\lsem{ \varphi^s_\gamma}_+(
  s')\in[ \epsilon- \gamma, \epsilon+ \gamma]$ for all $\gamma$.
\end{theorem}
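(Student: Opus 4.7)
The approach is to approximate the fixed-point characterization of the accumulating bisimulation distance $d_+(s,s') := \inf\{\epsilon : s \awbisim s'\}$ at a finite depth $n$ chosen large enough (using $\lambda<1$) that its tail lies below $\gamma/2$, and then to mimic the $n$-step recursion by a $\whml$-formula within an additional error of $\gamma/2$.

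First I would set up the truncated fixed point. Since $S$ is finite, the transition weights are bounded by some $W$. Let $d^{(n)}_+$ denote the $n$-step unfolding of the recursion underlying Definition~\ref{wtscai.def:awbisim}: put $d^{(0)}_+(s,t)=0$ if $L(s)=L(t)$ and $\infty$ otherwise, and obtain $d^{(n+1)}_+(s,t)$ by taking one Bellman step in both directions $\max_{s\to}\min_{t\to}$ and $\max_{t\to}\min_{s\to}$ of $|c-d|+\lambda\, d^{(n)}_+(s',t')$. A standard contraction argument using $\lambda<1$ yields $|d_+(s,t)-d^{(n)}_+(s,t)|\le \lambda^n W/(1-\lambda)$ whenever $s\uwbisim t$; pick $n$ so this bound is at most $\gamma/2$, and set $\varphi^s_\gamma := \varphi^s_n$ with $\varphi^s_n$ built below.

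Next I would construct $\varphi^s_n\in\whml$ by induction on $n$, maintaining the invariant $|\lsem{\varphi^s_n}_+(t)-d^{(n)}_+(s,t)|\le \gamma/2$ for every $t\in S$. Base: $\varphi^s_0 = \bigwedge_{p\in L(s)} p \,\wedge\, \bigwedge_{p\notin L(s)} \lnot p$, matching $d^{(0)}_+$ exactly. Step: $\varphi^s_{n+1} = \varphi^s_0 \wedge \psi^{\to}_n \wedge \psi^{\leftarrow}_n$, where $\psi^{\to}_n = \bigwedge_{s\tto{c}s'} \sfE\sfX_c\,\varphi^{s'}_n$. Unfolding Definitions~\ref{wtscai.def:state-sem}--\ref{wtscai.def:pw-sem} shows $\lsem{\psi^{\to}_n}_+(t) = \max_{s\tto{c}s'} \inf_{t\tto{d}t'}(|c-d|+\lambda \lsem{\varphi^{s'}_n}_+(t'))$, which by the inductive hypothesis approximates the ``attacker-from-$s$'' direction $\max_{s\to}\min_{t\to}(|c-d|+\lambda\, d^{(n)}_+(s',t'))$ within $\lambda\gamma/2$. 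Combined with the truncation bound of $\gamma/2$ this yields $|\lsem{\varphi^s_\gamma}_+(s')-d_+(s,s')|\le\gamma$ and hence both directions of the claimed biconditional on setting $\epsilon := d_+(s,s')$.

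The main obstacle is the design of $\psi^{\leftarrow}_n$, which must encode the symmetric direction $\max_{t\tto{d}t'}\min_{s\tto{c}s'}(|d-c|+\lambda\, d^{(n)}_+(s',t'))$. The modality $\sfA\sfX_c$ fixes its weight $c$ syntactically, whereas the inner $\min$ would require choosing $c$ after inspection of $t$'s dynamic outgoing weight $d$; moreover the grammar admits disjunction only at the state level, so a naive $\bigvee_i \sfA\sfX_{c_i}\varphi^{s'_i}_n$ swaps outer $\sup$ with inner $\inf$ and only overapproximates the target value. My resolution exploits finiteness: discretize the range $[0,W]$ on a grid of mesh $\gamma(1-\lambda)/(4W)$, enumerate the finitely many response strategies $\sigma$ sending each grid cell to an outgoing transition of $s$, form for each $\sigma$ a state-level conjunction $\bigwedge_j \sfA\sfX_{c_{\sigma(j)}}\,\varphi^{s'_{\sigma(j)}}_n$ suitably gated to the $j$-th cell (by an auxiliary selector state formula differentiating according to the actual outgoing weight), and take the disjunction over $\sigma$. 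Finiteness of $S$ and of the grid keeps the formula finite; the gating restores the correct quantifier order; the discretization contributes at most $\gamma/4$ of additional error, absorbed within the $\gamma/2$ budget.
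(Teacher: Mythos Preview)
Your overall plan—approximate the fixed point by a depth-$n$ unfolding with $n$ chosen via the discounting tail bound, and then build $\varphi^s_n$ by recursion on $n$—is exactly the paper's route. The base case $\varphi^s_0$ and the forward component $\psi^{\to}_n = \bigwedge_{s\tto{c}s'} \sfE\sfX_c\,\varphi^{s'}_n$ match the paper's construction verbatim.

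Where you diverge is in $\psi^{\leftarrow}_n$. The paper writes this conjunct as
\[
  \bigwedge_{w:\,s\tto{w}s'} \sfA\sfX_w\Bigl(\bigvee_{s\tto{w}s'}\varphi(s',n)\Bigr),
\]
placing the disjunction over $s$'s successors \emph{inside} the $\sfA\sfX_w$ modality. Your remark that ``the grammar admits disjunction only at the state level'' is correct but not the obstacle you think: $\sfX_c\Phi$ takes a \emph{state} formula $\Phi$, and state formulas are closed under $\lor$. With the disjunction inside, the inner $\min$ over $s$'s successors is computed \emph{after} $t$'s move, which addresses the quantifier-order issue for the successor choice. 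The paper then simply asserts $\lsem{\varphi(s,n)}_+(s)=0$ and that discounting makes $\varphi(s,n(\gamma))$ work; no further verification is given.

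Your alternative rests on ``an auxiliary selector state formula differentiating according to the actual outgoing weight'' to gate the strategy conjuncts. This is the genuine gap in your proposal. In this real-valued logic there is no Boolean case-split: conjunction and disjunction are $\max$ and $\min$, and state formulas can observe outgoing weights only through $\sfE\sfX_c$ or $\sfA\sfX_c$, which contribute \emph{additive} terms rather than act as guards. You have not shown how such a selector can be built from the available connectives, and I do not see how to do it without either introducing new modalities or incurring unbounded extra cost. Before committing to the discretization-plus-strategy-enumeration machinery, you should first try the paper's inner-disjunction formula; if you still believe the outer $\bigwedge_w$ (which becomes $\max_w$ rather than $\min_w$) leaves the weight-choice quantifier in the wrong order, that is worth stating explicitly, since the paper's sketch does not argue this point either.
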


\begin{proof}
  We define characteristic formulae of unfoldings, as follows: For
  each $s\in S$ and $n\in \Nat$, denote $L(s) = \{p_1, \dots, p_k\}$
  and $\Props \setminus L(s) = \{q_1, \dots, q_\ell\}$ and let
  $\varphi( s, n)$ be the WCTL formula defined inductively as follows:
  \begin{align*}
    \varphi( s, 0) &= (p_1 \land \dots \land p_k ) \land (\lnot
    q_1 \land \dots \land \lnot q_\ell) \\
    \varphi( s, n+ 1) &= \bigwedge_{s \tto{ w} s'} \sfE\sfX_{
      w} \varphi( s', n)\;\land\!\!\! \bigwedge_{ w:
      s\tto{ w} s'} \sfA\sfX_w\Big( \bigvee_{ s\tto{ w} s'}
    \varphi( s', n)\Big)\land \varphi( s, 0)
  \end{align*}
  It is easy to see that $\lsem{\varphi( s, n)}_+(s)= 0$ for all $n$.

  To complete the proof, one observes that for each $\gamma> 0$, there
  is $n( \gamma)\in \Nat$ such that $\varphi( s, n( \gamma))$ can play
  the role of $\phi^s_\gamma$ in the theorem.  Intuitively this is due
  to discounting: The further the unfolding in $\varphi( s, n)$, the
  higher are the weights discounted, hence from some $n( \gamma)$ on,
  maximum weight difference is below $\gamma$.
\end{proof}

\begin{theorem}
  \label{wtscai.th:express_pt}
  For each $s\in S$ and every $\gamma\in \Realp$, there exists a state
  formula $\varphi^s_\gamma\in \whml$, interpreted over the point-wise
  semantics, which characterizes $s$ up to point-wise bisimulation and
  up to $\gamma$, \ie such that for all $s'\in S$, $s\pwbisim s'$ if
  and only if
  $\lsem{ \varphi^s_\gamma}_\bullet( s')\in[ \epsilon- \gamma,
  \epsilon+ \gamma]$ for all $\gamma$.
\end{theorem}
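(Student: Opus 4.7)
The plan is to mirror the proof of Theorem~\ref{wtscai.th:express}, using the same syntactic family of characteristic formulae $\varphi(s,n)$ but now interpreted under the point-wise semantics. Since the construction of $\varphi(s,n)$ is purely syntactic, the inductive definition carries over verbatim, and I would set $\varphi^s_\gamma = \varphi(s, n(\gamma))$ for an appropriately chosen unfolding depth $n(\gamma)$ depending on $\gamma$ and on the uniform bound on weights in the finite WKS.

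First, I would verify by structural induction on $n$ that $\lsem{\varphi(s,n)}_\bullet(s) = 0$: the base case $\varphi(s,0)$ is a conjunction of atomic propositions that all hold at $s$; for the inductive step, every $\sfE\sfX_w$-conjunct of $\varphi(s,n+1)$ is witnessed by the corresponding transition $s\tto{w} s'$ with zero weight difference and, inductively, $\lsem{\varphi(s',n)}_\bullet(s') = 0$, and every $\sfA\sfX_w$-conjunct holds because all outgoing $w$-transitions of $s$ lead to states appearing in the disjunction. For the ``only if'' direction of the equivalence, I would invoke adequacy (Theorem~\ref{th:wtscai.adeq-pw}): if $s\pwbisim s'$, then $\bigl|\lsem{\varphi(s,n)}_\bullet(s) - \lsem{\varphi(s,n)}_\bullet(s')\bigr| \le \epsilon$, which combined with $\lsem{\varphi(s,n)}_\bullet(s)=0$ yields $\lsem{\varphi(s,n)}_\bullet(s') \le \epsilon$ (and the formula-matching structure gives the lower bound $\epsilon - \gamma$).

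For the converse direction, I would define the family $\cal R = \{R_\epsilon\}$ by
\begin{equation*}
  R_\epsilon = \bigl\{(s,s') \in S\times S \bigm| \forall \gamma>0 : \lsem{\varphi^s_\gamma}_\bullet(s') \le \epsilon + \gamma\bigr\}
\end{equation*}
and verify that $\cal R$ is a point-wise bisimulation family in the sense of Definition~\ref{wtscai.def:awbisim}. Matching transitions with bounded weight difference are extracted from the $\sfE\sfX_w$-conjuncts of $\varphi(s,n+1)$, while the dual requirement (transitions from $s'$ matched by transitions from $s$) follows from the $\sfA\sfX_w$-conjuncts together with the fact that outside the finitely many weights appearing at $s$, any transition from $s'$ would force a conjunct to be arbitrarily large, contradicting $\lsem{\varphi^s_\gamma}_\bullet(s') \le \epsilon + \gamma$. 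The discounting factor $\lambda^{k+1} \le \lambda \cdot \lambda^k$ in the recursive call ensures that the new pair of successor states satisfies the tightened bound required for a point-wise family.

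The main obstacle is the correct choice of $n(\gamma)$. Unlike the accumulating case, where a geometric-series argument controls the tail, here the point-wise semantics aggregates via $\max$, so I must argue differently: since the WKS is finite, let $W$ be a uniform bound on weight differences along corresponding paths; the contribution from depth $k$ is at most $\lambda^k W$, and choosing $n(\gamma)$ with $\lambda^{n(\gamma)} W \le \gamma$ guarantees that no weight-matching beyond depth $n(\gamma)$ can affect the evaluation by more than $\gamma$. This is what forces $\lsem{\varphi^s_\gamma}_\bullet(s')$ to lie within $\gamma$ of the true point-wise bisimulation distance $\epsilon$, closing both directions of the equivalence.
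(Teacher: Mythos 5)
Your proposal is correct and follows essentially the same route as the paper, which explicitly defers this proof to the accumulating case (Theorem~\ref{wtscai.th:express}) by constructing the characteristic unfolding formulae $\varphi(s,n)$ and choosing the depth $n(\gamma)$ via the discounting argument; your adaptation of the tail bound to the $\max$-based aggregation ($\lambda^{n(\gamma)}W\le\gamma$) is exactly the adjustment the paper's ``the proof is similar'' remark presupposes. No gaps.
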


\section{Conclusion}
\label{wtscai.sec:conclusion}

We have shown in this chapter that weighted CTL with an accumulating
semantics is adequate and expressive for accumulating bisimulation for
weighted Kripke structures.  We have also seen that the same holds for
the point-wise semantics for WCTL with respect to point-wise
bisimulation.

We will see in Chapter~\ref{ch:qltbt} that these results can be lifted
to a common abstract framework which also encompasses other weighted
bisimulations such as the maximum-lead bisimulation
of~\cite{DBLP:conf/formats/HenzingerMP05} presented in the previous
chapter.

\chapter[Metrics for Weighted Transition Systems:
Axiomatization][Axiomatization]{Metrics for Weighted Transition
  Systems: Axiomatization\footnote{This chapter is based on the
    journal paper~\cite{DBLP:journals/tcs/LarsenFT11} published in
    Theoretical Computer Science.}}
\label{ch:simdistax}

In this chapter we turn to \emph{axiomatizations} of the point-wise
and accumulating simulation distances.  We first present
axiomatizations for finite processes and then for regular processes.
We then show that the axiomatizations for finite processes are sound
and complete, whereas the ones for regular processes are sound and
$\epsilon$-complete.

\section{Simulation distances}

Throughout this chapter we fix a \emph{finite} metric space $\KK$ of
\emph{weights} with a metric $d_\KK: \KK\times \KK\to \Real$.  We also
fix a \emph{discounting factor} $\lambda$ with $0\le \lambda< 1$,
which will be used in the definition of accumulating distance below.

\begin{definition}
  A \emph{weighted transition system} is a tuple $( S, T)$, where $S$ is a
  finite set of \emph{states} and $T\subseteq S\times \KK\times S$ is a set
  of (weighted) transitions.
\end{definition}

Note that all transition systems in this chapter are indeed assumed finite,
hence requiring finiteness of the metric space $\KK$ does not add extra
restrictions.

We fix a weighted transition system $( S, T)$ and introduce simulation
distance between states in $(S, T)$.  We concentrate on two types
here, accumulating and point-wise distance, but other kinds may indeed
be defined.

\subsection{Accumulating distance}

\begin{definition}
  For states $s, t\in S$, the \emph{accumulating simulation distance}
  from $s$ to $t$ is defined to be the least fixed point to the set of
  equations
  \begin{equation}
    \label{simdistax.eq:dB}
    \awbdg{s, t}= \adjustlimits \max_{ s\tto n s'} \min_{ t\tto m t'}
    \big( d_\KK( n, m)+ \lambda \awbdg{ s', t'}\big)\,.
  \end{equation}
\end{definition}

To justify this definition, we need to show that the equations~\eqref{simdistax.eq:dB}
indeed have a least solution.  To this end, write $S=\{ s_1,\dots, s_p\}$
and assume for the moment that the transition system $( S, T)$ is
\emph{non-blocking} such that every $s_i\in S$ has an outgoing transition
$s_i\tto n s_k$ for some $s_k\in S$.  Define a function $F: \Realnn^{
  p\times p}\to \Realnn^{ p\times p}$ by
\begin{equation*}
  F( x)_{ i, j}= \adjustlimits \max_{ s_i\tto n s_k} \min_{ s_j\tto
    m s_\ell}\big( d_\KK( n, m)+ \lambda\, x_{ k, \ell}\big)\,.
\end{equation*}
Here we are using the standard linear-algebra notation $\Realnn^{ p\times
  p}$ for $p\times p$-matrices with entries in $\Realnn$ and $x_{ k, \ell}$
for the entry in their $k$'th row and $\ell$'th column.

\begin{lemma}
  \label{simdistax.lem:acc:contraction}
  With metric on $\Realnn^{ p\times p}$ defined by $d( x, y)= \max_{ i, j=
    1}^p | x_{i, j}- y_{ i, j}|$, $F$ is a contraction with Lipschitz
  constant $\lambda$.
\end{lemma}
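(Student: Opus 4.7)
The plan is to verify the Lipschitz bound $d(F(x), F(y)) \le \lambda\, d(x,y)$ entry by entry. First I would fix indices $i, j\in\{1, \dots, p\}$ and compare $F(x)_{i,j}$ with $F(y)_{i,j}$. The crucial observation is that the outer $\max$ ranges over all transitions $s_i\tto n s_k$ and the inner $\min$ over all transitions $s_j\tto m s_\ell$, and both index sets are the \emph{same} in the two expressions; only the values $\lambda x_{k,\ell}$ and $\lambda y_{k,\ell}$ inside the brackets differ.

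The key tool is the elementary non-expansiveness of $\min$ and $\max$: for any two bounded real-valued functions $f, g$ on a common non-empty domain $A$,
\begin{equation*}
  \bigl|\min_{a\in A} f(a) - \min_{a\in A} g(a)\bigr| \le \sup_{a\in A} |f(a) - g(a)|,
\end{equation*}
and likewise for $\max$ (pick an argmin for the side with the larger value and evaluate the other function there). Iterating this for functions $h, h': A\times B\to \Real$ gives
\begin{equation*}
  \bigl|\max_{a\in A} \min_{b\in B} h(a,b) - \max_{a\in A} \min_{b\in B} h'(a,b)\bigr| \le \sup_{a\in A,\, b\in B} |h(a,b) - h'(a,b)|.
\end{equation*}

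Applying this with $h(k,\ell) = d_\KK(n,m) + \lambda x_{k,\ell}$ and $h'(k,\ell) = d_\KK(n,m) + \lambda y_{k,\ell}$ (the outer and inner index ranges being the pairs $(n,k)$, $(m,\ell)$ determined by transitions out of $s_i$, $s_j$), the $d_\KK(n,m)$ terms cancel and I obtain
\begin{equation*}
  \bigl|F(x)_{i,j} - F(y)_{i,j}\bigr| \le \sup_{k,\ell} \lambda\,|x_{k,\ell} - y_{k,\ell}| \le \lambda\, d(x, y).
\end{equation*}
Taking the maximum over $i$ and $j$ yields $d(F(x), F(y)) \le \lambda\, d(x,y)$, and since $\lambda<1$ this says $F$ is a contraction with Lipschitz constant at most $\lambda$.

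The only real obstacle is the bookkeeping behind the iterated non-expansiveness of $\max$--$\min$. It is standard but needs slight care because the index sets $\{(n,k): s_i\tto n s_k\}$ depend on $i$; the point that makes everything work is that these sets do not depend on $x$ or $y$, so the same witness transitions can be used on both sides. The non-blocking assumption is needed only to ensure that the inner $\min$ is taken over a non-empty set, so the expressions are finite and the inequalities make sense.
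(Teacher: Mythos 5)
Your proof is correct, and it takes a genuinely different and in fact more direct route than the paper's. The paper partitions $\Realnn^{p\times p}$ into finitely many closed polyhedral regions on which the $\max$--$\min$ expressions resolve to fixed transitions; within one region the bound $d(F(x),F(y))\le \lambda\, d(x,y)$ is immediate, but for $x$ and $y$ in different regions the paper has to walk along the straight line segment from $x$ to $y$, apply the triangle inequality at the finitely many boundary crossings, and then invoke collinearity of the intermediate points so that the per-segment distances sum to exactly $d(x,y)$ rather than to something larger. Your argument replaces all of this with the single observation that $\min$ and $\max$ over a fixed non-empty index set are $1$-Lipschitz with respect to the sup-norm of the function being optimized; since the index sets (the outgoing transitions of $s_i$ and $s_j$) do not depend on $x$ or $y$, and the $d_\KK(n,m)$ terms cancel, the factor $\lambda$ falls out uniformly with no case split. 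This is cleaner, avoids the somewhat delicate collinearity step, and would generalize more readily (e.g.\ to $\sup$/$\inf$ over infinite compactly branching systems, modulo attainment). What the paper's region decomposition buys in exchange is the structural fact that $F$ is piecewise affine, which is the viewpoint used in the cited Zwick--Paterson analysis of mean-payoff games, but that extra structure is not needed to establish the contraction property itself. Your remark about the role of the non-blocking assumption (keeping the inner $\min$ over a non-empty set so that values stay in $\Realnn$) matches the paper's treatment.
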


\begin{proof}
  (\textit{Cf.}~also the proof
  of~\cite[Thm.~5.1]{DBLP:journals/tcs/ZwickP96}.)  We can partition
  $\Realnn^{ p\times p}$ into finitely many (indeed at most
  $2^{ p^2 q^2}$ with $q=| \KK|$) closed polyhedral regions
  $R_{ i, j}$ (some of which may be unbounded) such that for
  $x, y\in R_{ i, j}$ in a common region, the $p^2$ max-min equations
  get resolved to the same transitions.  In more precise terms, there
  are mappings
  $n, m, k, \ell:\{ 1,\dots, p\}\times\{ 1,\dots, p\}\to\{ 1,\dots,
  p\}$ such that
  $F( x)_{ i, j}= d_\KK( n( i, j),m( i, j))+ \lambda x_{ k( i,
    j),\ell( i, j)}$ for all $x\in R_{ i, j}$.

  Now if $x, y\in R_{ i, j}$ are in a common region, then
  \begin{align*}
    d( F( x), F( y)) &\le \lambda \max_{ i, j}| x_{ k( i, j), \ell(
      i, j)}- y_{ k( i, j), \ell( i, j)}| \\
    &\le \lambda \max_{ i, j}| x_{ i, j}- y_{ i, j}|= \lambda d( x,
    y)\,.
  \end{align*}

  If $x\in R_{ i_1, j_1}, y\in R_{ i_2, j_2}$ are in different
  regions, a bit more work is needed.  The straight line segment
  between $x$ and $y$ admits finitely many intersection points with
  the regions $R_{ i, j}$; denote these $x= z_0,\dots, z_q= y$.  We
  have
  \begin{align*}
    d( F( x), F( y)) &\le d( F( z_0), F( z_1))+\dots+ d( F( z_{ q- 1},
    z_q)) \\
    &\le \lambda\big( d( z_0, z_1)+\dots+ d( z_{ q- 1}, z_q)\big)=
    \lambda d( x, y)\,.
  \end{align*}
  Note that the last equality only holds because all $z_i$ are on a
  straight line.
\end{proof}

Using the Banach fixed-point theorem and completeness of $\Realnn^{ p\times
  p}$ we can hence conclude that $F$ has a unique fixed point.  In the
general case, where $( S, T)$ may not be non-blocking, $F$ is a function $[
0, \infty]\to[ 0, \infty]$ with (extra) fixed point $[ \infty,\dots,
\infty]$.  Hence as a function $[ 0, \infty]\to[ 0, \infty]$, $F$ has at
most \emph{two} fixed points.  Now we can write the equation set from the
definition as
\begin{multline*}
  \begin{bmatrix}
    \awbdg{ s_1, s_1} & \awbdg{ s_1, s_2} & \cdots & \awbdg{ s_1, s_p}
    \\
    \awbdg{ s_2, s_1} & \awbdg{ s_2, s_2} & \cdots & \awbdg{ s_2, s_p}
    \\
    \vdots & \vdots & \ddots & \vdots \\
    \awbdg{ s_p, s_1} & \awbdg{ s_p, s_2} & \cdots & \awbdg{ s_p, s_p}
  \end{bmatrix}
  \\[1ex]
  = F
  \begin{bmatrix}
    \awbdg{ s_1, s_1} & \awbdg{ s_1, s_2} & \cdots & \awbdg{ s_1, s_p}
    \\
    \awbdg{ s_2, s_1} & \awbdg{ s_2, s_2} & \cdots & \awbdg{ s_2, s_p}
    \\
    \vdots & \vdots & \ddots & \vdots \\
    \awbdg{ s_p, s_1} & \awbdg{ s_p, s_2} & \cdots & \awbdg{ s_p, s_p}
  \end{bmatrix},
\end{multline*}
hence~\eqref{simdistax.eq:dB} has indeed a unique least fixed point.

\subsection{Point-wise distance}

For point-wise simulation distance we follow a lattice-theoretic rather than
a contraction approach.

\begin{definition}
  For states $s, t\in S$, the \emph{point-wise simulation distance}
  from $s$ to $t$ is defined to be the least fixed point to the set of
  equations
  \begin{equation*}
    \pwbdg{s, t}= \adjustlimits \max_{ s\tto n s'} \min_{ t\tto m t'}
    \max\big( d_\KK( n, m), \pwbdg{ s', t'}\big)\,.
  \end{equation*}
\end{definition}

Note that in this chapter, the point-wise distance is
\emph{undiscounted}.

Let $G: [ 0, \infty]^{ p\times p}\to[ 0, \infty]^{ p\times p}$ be the
function defined by
\begin{equation*}
  G( x)_{ i, j}= \adjustlimits \max_{ s_i\tto n s_k} \min_{ s_j\tto
    m s_\ell} \max\big( d_\KK( n, m), x_{ k, \ell}\big)\,.
\end{equation*}

\begin{lemma}
  With partial order on $[ 0, \infty]^{ p\times p}$ defined by $x\le
  y$ iff $x_{ i, j}\le y_{i, j}$ for all $i, j$, $G$ is (weakly)
  increasing.
\end{lemma}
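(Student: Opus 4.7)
The plan is to verify that each of the three operations appearing in the definition of $G$---taking maxima of the form $\max(d_\KK(n,m),\cdot)$, taking a minimum over outgoing transitions from $s_j$, and taking a maximum over outgoing transitions from $s_i$---preserves the pointwise order on $[0,\infty]^{p\times p}$. Composing three monotone operations gives a monotone map, which is exactly the claim.

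More concretely, I would proceed as follows. Suppose $x \le y$ in $[0,\infty]^{p\times p}$, i.e., $x_{k,\ell} \le y_{k,\ell}$ for every pair of indices $(k,\ell)$. First, I would fix arbitrary indices $i,j$ and arbitrary transitions $s_i \tto{n} s_k$ and $s_j \tto{m} s_\ell$; since $\alpha \mapsto \max(d_\KK(n,m),\alpha)$ is monotone in $\alpha$, we have
\begin{equation*}
  \max\big(d_\KK(n,m), x_{k,\ell}\big) \le \max\big(d_\KK(n,m), y_{k,\ell}\big).
\end{equation*}
Next, since this inequality holds for every choice of $s_j \tto{m} s_\ell$, and taking $\min$ over a common index set preserves pointwise inequality, I would obtain
\begin{equation*}
  \min_{s_j \tto{m} s_\ell} \max\big(d_\KK(n,m), x_{k,\ell}\big) \le \min_{s_j \tto{m} s_\ell} \max\big(d_\KK(n,m), y_{k,\ell}\big).
\end{equation*}
Finally, since this inequality holds for every choice of $s_i \tto{n} s_k$, and taking $\max$ over a common index set preserves pointwise inequality, I would conclude $G(x)_{i,j} \le G(y)_{i,j}$, as required.

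There is essentially no obstacle here: the argument is just three applications of the easy fact that if $f(z) \le g(z)$ for all $z$ in a common index set $Z$, then $\inf_z f(z) \le \inf_z g(z)$ and $\sup_z f(z) \le \sup_z g(z)$. The only minor subtlety worth mentioning is that the $\min$ is over transitions from $s_j$ while the $\max$ is over transitions from $s_i$, but since the index sets are the same on both sides of the inequality (they depend only on the transition system $(S,T)$, not on $x$ or $y$), the monotonicity of $\min$ and $\max$ applies directly. By Tarski's fixed-point theorem applied to $G$ on the complete lattice $[0,\infty]^{p \times p}$, this monotonicity guarantees the existence of a least fixed point, justifying the definition of $\pwbdg{s,t}$.
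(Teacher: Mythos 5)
Your argument is correct and is exactly the reasoning the paper considers self-evident: its own proof of this lemma is the single word ``Trivial.'' Spelling out the three monotonicity steps (max with the constant $d_\KK(n,m)$, then min over transitions from $s_j$, then max over transitions from $s_i$) is precisely the right justification, and the remark about Tarski's theorem matches the paper's subsequent use of the lemma.
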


\begin{proof}
  Trivial.
\end{proof}

Now the Tarski fixed-point theorem allows us to conclude that $G$ has a
unique least fixed point, hence the above definition is justified.

\subsection{Properties}

\begin{proposition}
  The functions $\awbdgs$ and $\pwbdgs$ are hemimetrics on $S$.
\end{proposition}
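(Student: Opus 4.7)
The plan is to exploit the fact that both distances are (unique, respectively least) fixed points of monotone/contractive functions $F$ and $G$ on the complete lattice $[0,\infty]^{p\times p}$, hence can be obtained as the limit of the iterates $F^n(\mathbf{0})$ or $G^n(\mathbf{0})$. If I can show that the zero-diagonal property and the triangle inequality are both preserved by $F$ and $G$, and trivially hold for the starting point $\mathbf{0}$, then they hold for every iterate and, by closedness of these properties under pointwise limits, also for the fixed point.

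For the zero-diagonal property $\awbdg{s,s}=0$ (resp.\ $\pwbdg{s,s}=0$), I note that if $x$ has $x_{i,i}=0$ for all $i$, then for each outer transition $s_i\tto n s_k$ appearing in $F(x)_{i,i}$, I can pick the \emph{same} transition $s_i\tto n s_k$ as witness for the inner $\min$, which yields $d_\KK(n,n)+\lambda x_{k,k}=0$ (and analogously $\max(d_\KK(n,n),x_{k,k})=0$ for $G$). Hence $F(x)_{i,i}\le 0$, so $F(x)_{i,i}=0$, and likewise for $G$.

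For the triangle inequality, I fix $i,j,k$ and assume $x_{a,c}\le x_{a,b}+x_{b,c}$ for all $a,b,c$. Let $s_i\tto{n} s_a$ be any outer-max witness contributing to $F(x)_{i,k}$; by definition of $F(x)_{i,j}$, there is an inner-min witness $s_j\tto{m^*} s_{b^*}$ with $F(x)_{i,j}\ge d_\KK(n,m^*)+\lambda x_{a,b^*}$, and by definition of $F(x)_{j,k}$ applied to this $s_j\tto{m^*} s_{b^*}$, there is $s_k\tto{p^*} s_{c^*}$ with $F(x)_{j,k}\ge d_\KK(m^*,p^*)+\lambda x_{b^*,c^*}$. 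Using this $(p^*,c^*)$ as a feasible (non-optimal) choice in the inner $\min$ of $F(x)_{i,k}$, together with the triangle inequalities for $d_\KK$ and $x$, gives
\begin{equation*}
F(x)_{i,k}\le d_\KK(n,p^*)+\lambda x_{a,c^*}\le \bigl(d_\KK(n,m^*)+\lambda x_{a,b^*}\bigr)+\bigl(d_\KK(m^*,p^*)+\lambda x_{b^*,c^*}\bigr)\le F(x)_{i,j}+F(x)_{j,k}.
\end{equation*}
Taking the max over the outer witness $s_i\tto{n}s_a$ preserves the inequality. For $G$, the same choice of witnesses works, using the elementary bound $\max(a+c,b+d)\le\max(a,b)+\max(c,d)$ to conclude $G(x)_{i,k}\le G(x)_{i,j}+G(x)_{j,k}$.

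The main subtlety I expect is the passage to the limit: for $\awbdgs$, Lemma~\ref{simdistax.lem:acc:contraction} and Banach give convergence of $F^n(\mathbf{0})$ in the sup-metric, which preserves the (non-strict) inequalities defining the zero-diagonal and triangle properties; for $\pwbdgs$, since $G$ is only monotone rather than contractive, I use Tarski/Kleene: the least fixed point is the supremum of the $G^n(\mathbf{0})$, and both properties are preserved by taking suprema of non-decreasing chains in $[0,\infty]^{p\times p}$. Note that symmetry is \emph{not} claimed (and indeed does not hold in general), consistent with the definition of hemimetric.
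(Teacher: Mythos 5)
Your proof is correct and follows essentially the same route as the paper's: the paper also establishes the zero-diagonal property by matching each transition with itself, and proves the triangle inequality by the identical max--min computation combining $d_\KK(n,m)+d_\KK(m,z)\ge d_\KK(n,z)$ with the inductive hypothesis on the successor triple. The only difference is that you make the paper's informal ``shown inductively'' rigorous by phrasing it as preservation of both properties under the iteration operators $F$ and $G$ from $\mathbf{0}$ followed by passage to the (Banach, resp.\ Kleene) limit, which is a welcome clarification rather than a new idea.
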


\begin{proof}
  To show that $\awbdg{ s, s}= \pwbdg{ s, s}= 0$ is trivial.  The
  triangle inequalities can be shown inductively; we prove the one for
  $\awbdgs$: For $s, t, u\in S$, we have
  \begin{align*}
    \awbdg{ s, t}+ \awbdg{ t, u} &= \adjustlimits \max_{ s\tto n s'} \min_{ t\tto m
      t'}\big( d_\KK( n, m)+ \lambda \awbdg{ s', t'}\big) \\
    &\qquad+ \adjustlimits \max_{ t\tto m t'} \min_{
      u\tto z u'}\big( d_\KK( m, z)+ \lambda \awbdg{ t', u'}\big) \\
    &\ge \multiadjustlimits{ \max_{ s\tto n s'} \min_{ t\tto m t'}
      \min_{ u\tto z u'}} \big(
    d_\KK( n, m)+ d_\KK( m, z) \\[-2ex]
    &\hspace*{12.5em} {}+ \lambda\big( \awbdg{ s', t'}+ \awbdg{ t',
      u'}\big)\big) \\
    &\ge \max_{ s\tto n s'} \min_{ u\tto z u'}\big( d_\KK( n, z)+
    \lambda \awbdg{ s', u'}\big)= \awbdg{ s, u}
  \end{align*}
  assuming the triangle inequality has been proven for the triple
  $(s', t', u')$.
\end{proof}

In the next proposition we take the standard liberty of comparing
different (weighted) transition systems by considering their disjoint
union.

\begin{proposition}
  \label{simdistax.prop:minmaxelem}
  The weighted transition systems $\nil$ and $\U$ given as
  $\nil=(\{ s_1\}, \emptyset)$ and
  $\U=(\{ s_1\}, \{( s_1, n, s_1)\mid n\in \KK\})$ are respectively
  minimal and maximal elements with respect to both
  $\awbdg{ \cdot, \cdot}$ and $\pwbdg{ \cdot, \cdot}$, that is,
  $\awbdg{ \nil, A}= \pwbdg{ \nil, A}= \awbdg{ A, \U}= \pwbdg{ A, \U}= 0$
  for any WTS~$A$.
\end{proposition}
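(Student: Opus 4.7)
The plan is to verify each of the four equalities directly from the fixed-point definitions of $\awbdgs$ and $\pwbdgs$, by exhibiting the constant zero function as a fixed point and then appealing to minimality. Throughout I will be working in the disjoint union of $A$ with $\nil$ or $\U$ as appropriate, using the standard convention that an empty maximum equals $0$ (the bottom of $[0,\infty]$), which is implicit in the least-fixed-point formulation.

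First I would dispose of the two $\nil$-side equalities. The unique state of $\nil$ has no outgoing transitions, so the outer $\max$ in both defining equations ranges over the empty set and evaluates to $0$. Hence the assignment $\awbdg{\nil,s}=\pwbdg{\nil,s}=0$ for every $s$ of the disjoint union satisfies the fixed-point equations in the rows corresponding to $\nil$, and since the least fixed point is bounded above by every fixed point, the value is $0$.

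Next, for the two $\U$-side equalities, the key observation is that the unique state $s_1$ of $\U$ carries a self-loop $s_1 \tto n s_1$ for every $n\in \KK$. So for any transition $s \tto n s'$ of the disjoint union, the inner $\min$ can choose the matching move $s_1 \tto n s_1$, contributing $d_\KK(n,n)=0$ to either defining equation. Tentatively setting $\awbdg{s,s_1}=0$ (respectively $\pwbdg{s,s_1}=0$) for all $s$, the right-hand side of each equation becomes
\begin{equation*}
  \adjustlimits \max_{ s\tto n s'} \min_{ s_1\tto m s_1}\bigl( d_\KK( n, m)+ \lambda\cdot 0\bigr)= 0
\end{equation*}
(and analogously with $\max$ in place of $+$ for $\pwbdgs$), which shows that the zero function is indeed a fixed point. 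By Lemma~\ref{simdistax.lem:acc:contraction} (uniqueness of the fixed point of $F$ in the non-blocking case, and minimality by Tarski otherwise) and by Tarski's theorem for the monotone operator $G$, this is the least fixed point and hence equals the distance.

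The only mild subtlety is the empty-max convention on the $\nil$ side, which I would state upfront; once that is in place, the result is a one-line computation in each case with no further fixed-point machinery required.
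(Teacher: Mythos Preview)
Your proposal is correct and follows essentially the same approach as the paper: the empty outer $\max$ for $\nil$, and exact label matching via the self-loops of $\U$. The paper's proof is simply terser, stating these two observations directly without the explicit fixed-point framing you add (which is fine but not needed, since for $\nil$ the equation yields $0$ unconditionally, and for $\U$ the relevant column is self-contained).
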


\begin{proof}
  For $\awbdg{ \nil, A}$ and $\pwbdg{ \nil, A}$, the maximum $\max_{
    \smash[t]{s_1\tto n s_1'}}$ is taken over the empty set and hence is $0$.
  For $\awbdg{ A, \U}$ and $\pwbdg{ A, \U}$, any transition $s\tto n s'$
  in $A$ can be matched by $s_1\tto n s_1$ in $\U$, hence the distance
  is again $0$.
\end{proof}

\section{Axiomatizations for Finite Weighted Processes}

We now turn to a setting where our weighted transition systems are
generated by finite or regular (weighted) process expressions.  We
construct a sound and complete axiomatization of simulation distance in
a setting without recursion first and show afterwards how this may be
extended to a setting with recursion.

Let $\mathcal P$ be the set of process expressions generated by the
following grammar:
\begin{equation*}
  E::= \nil\mid n.E\mid E+ E \mid \qquad n\in \KK
\end{equation*}
Here $\nil$ is used to denote the \emph{empty process},
\cf~Proposition~\ref{simdistax.prop:minmaxelem}.

The semantics of finite process expressions is a weighted transition
system generated by the following standard SOS rules:
\begin{equation*}
  \AxiomC{\vphantom{$E_1\tto{n} E_1'$}}
  \UnaryInfC{$n.E \tto{n} E$}
  \DisplayProof
  \qquad 
  \AxiomC{$E_1\tto{n} E_1'$}
  \UnaryInfC{$E_1 + E_2 \tto{n} E_1'$}
  \DisplayProof
  \qquad 
  \AxiomC{$E_2\tto{n} E_2'$}
  \UnaryInfC{$E_1 + E_2 \tto{n} E_2'$}
  \DisplayProof
\end{equation*}

We can immediately get the following equalities
\begin{align}
  \awbdg{E+\nil,E} ={}&0 \notag\\
  \awbdg{n.E,m.F} ={}& d_\KK(n,m) + \lambda\, \awbdg{E,F} \label{simdistax.eq:prefix}\\
  \awbdg{E_1+E_2,F} ={}& \max (\awbdg{E_1,F},\awbdg{E_2,F}) \notag\\
  \awbdg{n.E,F_1+F_2} ={}& \min (\awbdg{n.E,F_1},\awbdg{n.E,F_2})\label{simdistax.eq:min}\\
  \intertext{For the point-wise distance, we again need only exchange
    \eqref{simdistax.eq:prefix} with} \pwbdg{n.E,m.F} ={}& \max(d_\KK(n,m),
  \pwbdg{E,F}) \notag
\end{align}

In order to show for example~\eqref{simdistax.eq:min} we simply need
to apply the definitions:
\begin{align*}
  \awbdg{ n. E, F_1+ F_2} &= \inf_{ F_1+ F_2\tto m F'}
  d_\KK( n, m)+ \lambda\, \awbdg{ E, F'} \\
  &= \min
  \begin{cases}
    \inf_{ F_1\tto m F'} d_\KK(n,m)+ \lambda\, \awbdg{ E, F'} \\
    \inf_{ F_2\tto m F'} d_\KK(n,m)+ \lambda\, \awbdg{ E, F'}
  \end{cases} \\
  &= \min \big( \awbdg{ n. E, F_1}, \awbdg{ n. E, F_2}\big)
\end{align*}
For~\eqref{simdistax.eq:prefix}, the sup-inf expression ranges over
singleton sets, hence the result is easy; the remaining equalities may
shown in a similar way.

The inference system $\mk F$ as given in Figure~\ref{simdistax.fig:F}
axiomatizes accumulating simulation distance for finite processes, as
we shall prove below.  Its sentences are inequalities of the form
$\dist{E,F} \bowtie r$ where $\mathord{\bowtie} \in \{=,\leq,\geq\}$
and $0 \leq r \leq \infty$.  Whenever $\dist{E,F} \bowtie r$ may be
concluded from $\mk F$, we write
$\vdash_{ \mk F} \dist{E,F} \bowtie r$.

\begin{figure}[tb]
  \fbox{
    \begin{minipage}[h]{1.0\linewidth}\vspace{5mm}      
      \centering
      \AxiomC{}
      \rulename{A1}
      \condition{$0\bowtie r$}
      \UnaryInfC{$\vdash \dist{\nil,E}\bowtie r$}
      \DisplayProof\qquad
      \AxiomC{}
      \rulename{A2}
      \condition{$\infty\bowtie r$}
      \UnaryInfC{$\vdash \dist{n.E,\nil}\bowtie r$}
      \DisplayProof
      \begin{prooftree} 
        \AxiomC{$\vdash \dist{E,F} \bowtie r_1$} 
        \condition{$d_\KK( n, m) +\lambda r_1 \bowtie r$}
        \rulename{R1}
        \UnaryInfC{$\vdash \dist{n.E,m.F} \bowtie r$} 
      \end{prooftree}
      \begin{prooftree} 
        \AxiomC{$\vdash \dist{E_1, F}\bowtie r_1$} 
        \AxiomC{$\vdash \dist{E_2, F}\bowtie r_2$} 
        \condition{$\max(r_1, r_2)\bowtie r$}
        \rulename{R2}
        \BinaryInfC{$\vdash \dist{E_1+ E_2, F}\bowtie r$} 
      \end{prooftree}
      \begin{prooftree} 
        \AxiomC{$\vdash \dist{n.E, F_1} \bowtie r_1$} 
        \AxiomC{$\vdash \dist{n.E, F_2} \bowtie r_2$} 
        \condition{$\min(r_1,r_2)\bowtie r$}
        \rulename{R3}\label{simdistax.rule:R3}
        \BinaryInfC{$\vdash \dist{n.E, F_1+ F_2} \bowtie r$} 
      \end{prooftree}
    \end{minipage}
  }
  \caption{%
    \label{simdistax.fig:F}
    The $\mk F$ proof system.}
\end{figure}

In addition to reflexivity and transitivity, we will need the
following standard properties of $\bowtie$ in latter proofs of
soundness and completeness: Whenever $a\bowtie b$ then, for all $c$:
$a + c \bowtie b+c$, $a\cdot c \bowtie b\cdot c$, and
$\max\{a,c\}\bowtie \max\{b,c\}$.

We also remark that the left process indeed needs to be guarded in
rule (R3) above, \ie~the following proposed rule (R3$'$) leads to an
unsound inference system:
\begin{prooftree} 
  \AxiomC{$\vdash \dist{E, F_1} \bowtie r_1$} 
  \AxiomC{$\vdash \dist{E, F_2} \bowtie r_2$} 
  \condition{$\min(r_1,r_2)\bowtie r$}
  \rulename{R3$'$}
  \BinaryInfC{$\vdash \dist{E, F_1+ F_2} \bowtie r$} 
\end{prooftree}

Indeed, using this rule we can derive the following (incomplete) proof
tree with a contradictory conclusion; the reason behind is that with
$E= E_1+ E_2$ non-deterministic as below, both $F_1$ and $F_2$ may be
needed to answer the challenge posed by $E$:

\vspace{-2ex}
{\small \begin{prooftree}
  \AxiomC{$\vdash \dist{ 1. \nil, 1. \nil}\ge 0$}
  \AxiomC{$\vdash \dist{ 2. \nil, 1. \nil}\ge 1$}
  \BinaryInfC{$\vdash \dist{ 1. \nil+ 2. \nil, 1. \nil}\ge 1$}
  \AxiomC{$\vdash \dist{ 1. \nil, 2. \nil}\ge 1$}
  \AxiomC{$\vdash \dist{ 2. \nil, 2. \nil}\ge 0$}
  \BinaryInfC{$\vdash \dist{ 1. \nil+ 2. \nil, 2. \nil}\ge 1$}
  \BinaryInfC{$\vdash \dist{ 1. \nil+ 2. \nil, 1. \nil+ 2. \nil}\ge 1$}
\end{prooftree}}

\begin{theorem}[Soundness]
  \label{th:simdistax.sound-acc}
  If $\vdash_{ \mk F} \dist{E,F}\bowtie r$, then $\awbdg{E,F}\bowtie r$.
\end{theorem}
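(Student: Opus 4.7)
The plan is to prove soundness by structural induction on the derivation of $\vdash_{\mk F} \dist{E,F} \bowtie r$. The four equalities for $\awbdgs$ on process constructors stated just before the inference system, together with Proposition~\ref{simdistax.prop:minmaxelem}, do essentially all the semantic work; the induction just has to thread the monotonicity properties of $\bowtie$ through each rule.

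For the base cases, axiom (A1) is justified by $\awbdg{\nil, E} = 0$ from Proposition~\ref{simdistax.prop:minmaxelem}: if $0 \bowtie r$, then trivially $\awbdg{\nil,E} \bowtie r$. Axiom (A2) is justified by observing that $n.E$ has a transition $n.E \tto{n} E$ which $\nil$ cannot match (the $\min$ over the empty set is $\infty$), so $\awbdg{n.E, \nil} = \infty$, and $\infty \bowtie r$ gives the conclusion. For the inductive step, each of (R1), (R2), (R3) matches exactly one of the derived equalities for $\awbdgs$:
\begin{itemize}
\item For (R1), from the inductive hypothesis $\awbdg{E,F} \bowtie r_1$, monotonicity of $\bowtie$ under addition and scaling gives $d_\KK(n,m) + \lambda \awbdg{E,F} \bowtie d_\KK(n,m) + \lambda r_1$, and by transitivity with the side condition we get $\awbdg{n.E, m.F} = d_\KK(n,m) + \lambda \awbdg{E,F} \bowtie r$.
\item For (R2), the inductive hypotheses give $\awbdg{E_i, F} \bowtie r_i$; monotonicity of $\bowtie$ under $\max$ yields $\max(\awbdg{E_1,F}, \awbdg{E_2,F}) \bowtie \max(r_1,r_2) \bowtie r$, and by the equality $\awbdg{E_1 + E_2, F} = \max(\awbdg{E_1,F}, \awbdg{E_2,F})$ we conclude.
\item For (R3), symmetrically using $\awbdg{n.E, F_1+F_2} = \min(\awbdg{n.E, F_1}, \awbdg{n.E, F_2})$ and monotonicity of $\bowtie$ under $\min$.
\end{itemize}

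The only subtlety to watch is that the stated monotonicity properties are explicitly given only for $+$, $\cdot$, and $\max$; monotonicity under $\min$ must be noted separately (it is immediate for each of $=, \le, \ge$). One also has to be careful with $\bowtie \,=\, \ge$ in combination with $\infty$ in (A2) and in the recursive rules, but since all arithmetic operations extend naturally to $[0, \infty]$ and the monotonicity properties carry over, no real obstacle arises. There is no genuinely hard step: the proof is a straightforward induction once the derived semantic equalities for $\awbdgs$ on $\nil$, prefix, sum on the left, and guarded sum on the right are in hand, and the asymmetry between (R2) and (R3) (the guardedness requirement on the left of (R3)) is exactly what makes the semantic equality $\awbdg{n.E, F_1+F_2} = \min(\ldots)$ hold, so no unsoundness can slip in.
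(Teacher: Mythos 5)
Your proof is correct and follows essentially the same route as the paper: an induction on the derivation tree, discharging (A1) and (A2) via $\awbdg{\nil,E}=0$ and $\awbdg{n.E,\nil}=\infty$, and discharging (R1)--(R3) by combining the induction hypotheses with the derived semantic equalities for prefix, left sum, and guarded right sum, using the monotonicity of $\bowtie$. Your remarks on monotonicity under $\min$ and on arithmetic in $[0,\infty]$ are sensible housekeeping that the paper leaves implicit, but they do not change the argument.
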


\begin{proof}
  By an easy induction in the proof tree for $\vdash_{ \mk F} \dist{E,F}\bowtie r$,
  with a case analysis for the applied proof rule:
  \begin{itemize}
  \item[(A1)] follows from $\awbdg{\nil,E} = 0$.
  \item[(A2)] follows from $\awbdg{n.E,\nil} = \infty$ which is clear
    by the definition of $\awbdgs$.
  \item[(R1)] By induction hypothesis we have
    $\awbdg{E,F} \bowtie r_1$, and as
    $\awbdg{n.E,m.F} = d_\KK(n,m)+\lambda\awbdg{E,F}$, it follows that
    $\awbdg{n.E,m.F} = d_\KK(n,m)+\lambda r_1$.
  \item[(R2)] By induction hypothesis, $\awbdg{E_1,F} \bowtie r_1$ and
    $\awbdg{E_2,F} \bowtie r_2$, hence $\awbdg{E_1+E_2,F} =
    \max(\awbdg{E_1,F},\max\{E_2,F\}) \bowtie \max(r_1,r_2)$.
  \item[(R3)] By induction hypothesis, $\awbdg{ n.E, F_1}\bowtie r_1$ and
    $\awbdg{ n.E, F_2}\bowtie r_2$, hence $\awbdg{ n.E, F_1+ F_2}= \min(
    \awbdg{ n.E, F_1}, \awbdg{ n.E, F_2})\bowtie \min( r_1, r_2)$.
  \end{itemize}
\end{proof}

\begin{theorem}[Completeness]
  \label{simdistax.th:finite-complete}
  If $\awbdg{E,F}\bowtie r$, then $\vdash_{ \mk F} \dist{E,F}\bowtie r$.
\end{theorem}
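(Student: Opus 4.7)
The plan is to prove completeness by structural induction on the combined size $|E|+|F|$, with a case analysis on the syntactic forms of $E$ and $F$ that exactly mirrors the five equalities already established for $\awbdg{\cdot,\cdot}$ on finite processes, namely
\begin{align*}
\awbdg{\nil,F} &= 0, \qquad \awbdg{n.E,\nil}= \infty, \qquad \awbdg{n.E,m.F}= d_\KK(n,m)+ \lambda\awbdg{E,F}, \\
\awbdg{E_1{+}E_2,F} &= \max(\awbdg{E_1,F},\awbdg{E_2,F}), \qquad \awbdg{n.E,F_1{+}F_2} = \min(\awbdg{n.E,F_1},\awbdg{n.E,F_2}).
\end{align*}
The key observation is that these five identities are in one-to-one correspondence with the five rules (A1), (A2), (R1), (R2), (R3), so each inductive step is essentially transcribing an equality into an instance of a rule.

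First I would treat the base cases. If $E= \nil$, then the hypothesis $\awbdg{\nil,F}\bowtie r$ reduces to $0\bowtie r$, which is exactly the side condition of (A1). If $E= n.E'$ and $F= \nil$, the hypothesis reduces to $\infty\bowtie r$, matching the side condition of (A2). For the inductive step I would perform the case analysis in the order: $E= E_1{+}E_2$; then $E= n.E'$ with $F= F_1{+}F_2$; then $E= n.E'$ with $F= m.F'$; this ensures that the rule (R2), whose applicability does not depend on the form of $F$, is tried first, and that (R3) is only invoked when $E$ is a prefix as required. In each case I let $r_1$ (and $r_2$, when relevant) be the distance between the appropriate smaller subterms, apply the induction hypothesis with $\bowtie$ taken to be ``$=$'' to obtain $\vdash \dist{E',F'}= r_1$ (\etc), and then invoke the matching rule, whose side condition is satisfied exactly because the identity above rewrites $\awbdg{E,F}\bowtie r$ into the required form (e.g.\ $d_\KK(n,m)+ \lambda r_1\bowtie r$ for (R1), $\max(r_1,r_2)\bowtie r$ for (R2), $\min(r_1,r_2)\bowtie r$ for (R3)).

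The induction measure strictly decreases in every inductive case: in the $E_1{+}E_2$ case the premises involve $|E_i|+|F|<|E|+|F|$; in the $F_1{+}F_2$ case the premises involve $|E|+|F_i|<|E|+|F|$; and in the $n.E',m.F'$ case the premise involves $|E'|+|F'|<|E|+|F|$.

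I do not expect any real obstacle here, since the axiom system has been designed to mimic the recursive equations for $\awbdg{\cdot,\cdot}$ precisely. The only thing to be careful about is that the case analysis is exhaustive and non-overlapping in a way compatible with the rules, in particular that (R3) is only applied when the left-hand side is a prefix, which is automatic given the order above; and that when invoking the induction hypothesis one uses the relation $=$ for the sub-distances, converting the original $\bowtie$ only at the outermost step through the arithmetic properties of $\bowtie$ listed after the statement of the system.
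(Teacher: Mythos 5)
Your proposal is correct and follows essentially the same route as the paper: a structural induction (the paper phrases it as an outer induction on $E$ with an inner induction on $F$ in the prefix case, which is equivalent to your combined-size measure), with each case discharged by matching one of the five equalities for $\awbdg{\cdot,\cdot}$ against the corresponding rule (A1), (A2), (R1), (R2), (R3). The only cosmetic difference is that you instantiate the induction hypothesis with ``$=$'' and convert to $\bowtie$ at the outermost step, while the paper carries $\bowtie$ through by solving the side condition for the sub-distance (e.g.\ $r'= \lambda^{-1}( r- d_\KK( n, m))$ in the (R1) case); both work.
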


\begin{proof}
  By an easy structural induction on $E$:
  \begin{itemize}
  \item[($E=$]$\nil)$\quad We have $\awbdg{ \nil, F}= 0\bowtie r$.  By Axiom (A1),
    also $\vdash \dist{ \nil, F}= 0$.
  \item[($E=$]$n.E')$\quad We use an inner induction on $F$:%
    \paragraph{Case $F= \nil$:} Here $\awbdg{ E, F}= \awbdg{ n. E', \nil}=
    \infty\bowtie r$.  By Axiom (A2), also $\vdash \dist{ n. E', \nil}=
    \infty$.
  
    \paragraph{Case $F= m. F'$:} Here $\awbdg{ E, F}= \awbdg{ n. E', m. F'}=
    d_\KK( n, m)+ \lambda \awbdg{ E', F'}\bowtie r$, hence with $r'= \lambda^{
      -1}(r- d_\KK( ,n m))$, $\awbdg{ E', F'}\bowtie r'$.  By induction
    hypothesis it follows that $\vdash \dist{ E', F'}\bowtie r'$, and we can
    use Axiom (R1) to conclude that $\vdash \dist{ E, F}\bowtie r$.
  
    \paragraph{Case $F= F_1+ F_2$:}
    Using~\eqref{simdistax.eq:min}, we have
    $ \awbdg{ E, F} = \awbdg{ n. E', F_1+
      F_2}=
    \min\big( \awbdg{ n. E', F_1}, \awbdg{ n. E', F_2}\big)$.  Let
    $\awbdg{ n. E', F_1}\bowtie r_1$ and
    $\awbdg{ n. E', F_2}\bowtie r_2$.  By the previous case, we know
    $\vdash \dist{ n. E', F_1}\bowtie r_1$. As
    $\min\{r_1,r_2\}\bowtie r$ it follows using (R3) that
    $\vdash_{ \mk F}[n.E,F_1+F_2] \bowtie r$.
  \item[$(E=$] $E_1+ E_2)$\quad By an argument similar to the one in the
    preceding subcase, we have $\awbdg{ E, F}= \max\big( \awbdg{ E_1, F},
    \awbdg{ E_2, F}\big)$.  If $\awbdg{ E_1, F}\bowtie r_1$ and $\awbdg{ E_2,
      F}\bowtie r_2$ with $\max( r_1, r_2)\bowtie r$, we can use the
    induction hypothesis to conclude $\vdash \dist{ E_1, F}\bowtie r_1$ and
    $\vdash \dist{ E_1, F}\bowtie r_1$, whence $\vdash \dist{ E, F}\bowtie r$
    by Axiom (R2).
  \end{itemize}
\end{proof}

\subsection{Point-wise distance}

We can devise a sound and complete inference system $\mk F^\bullet$
for point-wise distance (instead of accumulating) by replacing
inference rule (R1) in System $\mk F$ by the rule
\begin{prooftree} 
  \AxiomC{$\vdash \dist{E,F} \bowtie r_1$} 
  \condition{$\max(d_\KK( n, m), \lambda r_1)\bowtie r$}
  \rulename{R1$^\bullet$}
  \UnaryInfC{$\vdash \dist{n.E,m.F} \bowtie r$} 
\end{prooftree}
As before, we write $\vdash_{\mk F^{\bullet}}[E,F]\bowtie r$ if
$[E,F]\bowtie r$ can be proven by $\mk F^\bullet$.

\begin{theorem}[Soundness \& Completeness]
  \label{th:simdistax.sound-comp-pw}
  $\vdash_{ \mk F^\bullet}[E,F]\bowtie r$ if and only if
  $\pwbdg{E,F}\bowtie r$
\end{theorem}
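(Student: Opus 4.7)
The plan is to adapt the proofs of Theorems~\ref{th:simdistax.sound-acc} and~\ref{simdistax.th:finite-complete} essentially verbatim, with the only substantive change being the replacement of the accumulating recurrence $\awbdg{n.E, m.F} = d_\KK(n,m) + \lambda \awbdg{E, F}$ by its point-wise analogue
\begin{equation*}
  \pwbdg{n.E, m.F} = \max(d_\KK(n,m),\, \lambda\, \pwbdg{E, F}),
\end{equation*}
which follows directly from the definition of $\pwbdgs$ since both $n.E$ and $m.F$ have a unique outgoing transition. The other three structural identities used in the accumulating proof (namely $\pwbdg{\nil, F} = 0$, $\pwbdg{n.E, \nil} = \infty$, $\pwbdg{E_1+E_2, F} = \max(\pwbdg{E_1,F}, \pwbdg{E_2,F})$, and $\pwbdg{n.E, F_1+F_2} = \min(\pwbdg{n.E, F_1}, \pwbdg{n.E, F_2})$) all hold for $\pwbdgs$ without modification, by the same argument as before.

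For soundness, I would do induction on the derivation tree. The cases for axioms (A1), (A2) and for rules (R2), (R3) are identical to those in the proof of Theorem~\ref{th:simdistax.sound-acc}, since these axioms/rules are shared between $\mk F$ and $\mk F^\bullet$ and the corresponding structural identities for $\pwbdgs$ mirror those for $\awbdgs$. The only new case is (R1$^\bullet$): given the inductive hypothesis $\pwbdg{E,F} \bowtie r_1$, the key equality above together with the monotonicity of $\max(d_\KK(n,m), \cdot)$ and of scalar multiplication by $\lambda$ with respect to $\bowtie$ yields $\pwbdg{n.E, m.F} = \max(d_\KK(n,m), \lambda\, \pwbdg{E,F}) \bowtie \max(d_\KK(n,m), \lambda r_1) \bowtie r$.

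For completeness, I would do structural induction on $E$, with an inner induction on $F$ when $E = n.E'$. All cases other than $(E,F) = (n.E', m.F')$ transfer verbatim from the proof of Theorem~\ref{simdistax.th:finite-complete}, since they only invoke axioms and rules that are common to both proof systems and structural identities which hold identically for $\pwbdgs$. In the remaining case, assume $\pwbdg{n.E', m.F'} \bowtie r$; I would set $r_1 = \pwbdg{E', F'}$, so that by the outer induction hypothesis $\vdash_{\mk F^\bullet} \dist{E', F'} = r_1$, which in particular gives $\vdash_{\mk F^\bullet} \dist{E', F'} \bowtie r_1$. The side condition $\max(d_\KK(n,m), \lambda r_1) \bowtie r$ then holds because $\max(d_\KK(n,m), \lambda r_1) = \pwbdg{n.E', m.F'}$ by the key equality, so rule (R1$^\bullet$) concludes $\vdash_{\mk F^\bullet} \dist{n.E', m.F'} \bowtie r$.

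The only real subtlety, compared to the accumulating case, is that $\max$ is not invertible in the way that $+$ is, so one cannot simply solve for $r_1$ in terms of $r$; the trick is to choose $r_1$ as the actual value $\pwbdg{E', F'}$ itself (which the equality form of the inductive hypothesis supplies) rather than trying to extract it from $r$. Beyond this minor bookkeeping point, the proof is mechanical and parallels the accumulating case line by line.
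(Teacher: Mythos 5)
Your proposal is correct and follows exactly the route the paper takes: the paper's own proof of Theorem~\ref{th:simdistax.sound-comp-pw} consists of the single remark that it is ``similar to the one for $\mk F$'', and your adaptation---including the genuinely necessary observation that, since $\max$ is not invertible the way $+$ is, one should instantiate $r_1$ with the exact value $\pwbdg{E',F'}$ via the equality form of the induction hypothesis rather than solving for $r_1$ in terms of $r$---is precisely the intended elaboration. One caveat: your ``key equality'' $\pwbdg{n.E,m.F}=\max(d_\KK(n,m),\lambda\,\pwbdg{E,F})$ carries a discount factor that matches the side condition of (R1$^\bullet$) but \emph{not} the chapter's definition of $\pwbdgs$, which is explicitly undiscounted and yields $\pwbdg{n.E,m.F}=\max(d_\KK(n,m),\pwbdg{E,F})$; this mismatch is already present in the paper itself (with the undiscounted definition taken literally, rule (R1$^\bullet$) as printed would not be sound for $\mathord\bowtie=\mathord\le$), so be aware that your proof silently resolves the discrepancy in favour of the rule rather than the definition.
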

\begin{proof}
  The proof is similar to the one for $\mk F$.
\end{proof}

\subsection{Simulation distance zero}

We show here that for distance zero, our inference system $\mk F$
specializes to a sound and complete inference system for simulation.
The inference system $\mk F_0$ is displayed in
Figure~\ref{simdistax.fig:sim}.

\begin{figure}[tb]
  \centering  
  \fbox{
    \begin{minipage}[h]{\linewidth}
      \centering  
      \vspace{1mm}
      
      \AxiomC{$\phantom{hep}$}
      \rulename{$A1_0$}
      \UnaryInfC{$\vdash \nil \wsim E$}
      \DisplayProof
      
      \begin{prooftree}
        \AxiomC{$\vdash E\wsim F$}
        \rulename{$R1_0$}
        \UnaryInfC{$\vdash n.E \wsim n.F$}        
      \end{prooftree}
      
      \begin{prooftree}
        \AxiomC{$\vdash E_1 \wsim F$} 
        \AxiomC{$\vdash E_2 \wsim F$} 
        \rulename{$R2_0$}
        \BinaryInfC{$\vdash E_1 + E_2 \wsim F$} 
      \end{prooftree}
      
      \AxiomC{$\vdash n.E\wsim F_1$}
      \rulename{$R3'_0$}
      \UnaryInfC{$\vdash n.E \wsim F_1 + F_2$}
      \DisplayProof\quad
      \AxiomC{$\vdash n.E\wsim F_2$}
      \rulename{$R3_0$}
      \UnaryInfC{$\vdash n.E \wsim F_1 + F_2$}
      \DisplayProof
      \vspace{1mm}
    \end{minipage}
  }
  \caption{%
    \label{simdistax.fig:sim}
    The $\mk F_0$ proof system}
\end{figure}

\begin{theorem}[Soundness \& Completeness] 
  $\vdash_{ \mk F_0}E \wsim F $ if and only if $E \wsim F$.
\end{theorem}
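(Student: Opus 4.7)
The plan is to prove both directions by induction, paralleling the proofs of Theorems~\ref{th:simdistax.sound-acc} and~\ref{simdistax.th:finite-complete} but working directly with the qualitative simulation relation. For soundness, I would proceed by induction on the derivation of $\vdash_{\mk F_0} E \wsim F$, verifying that each axiom and rule preserves the property of being a simulation: $A1_0$ holds because $\nil$ has no outgoing transitions; $R1_0$ holds because the unique challenge $n.E \tto{n} E$ is answered by $n.F \tto{n} F$ with $E \wsim F$ from the induction hypothesis; $R2_0$ holds because every challenge from $E_1 + E_2$ comes from one summand, each of which is simulated by $F$; and $R3_0$ and $R3'_0$ hold because any matching transition in $F_i$ remains a matching transition in $F_1 + F_2$.

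For completeness, I would use structural induction on $E$. The base case $E = \nil$ is discharged by $A1_0$. The case $E = E_1 + E_2$ follows from the observation that $E_1 + E_2 \wsim F$ forces $E_1 \wsim F$ and $E_2 \wsim F$ (every challenge from a summand is a challenge from the sum), to which the outer induction hypothesis applies, followed by $R2_0$. The case $E = n.E'$ requires an inner induction on the structure of $F$: if $F = \nil$, the hypothesis $n.E' \wsim \nil$ is impossible (the transition $n.E' \tto{n} E'$ has no response); if $F = m.F'$, then $n.E' \wsim m.F'$ forces $n = m$ and $E' \wsim F'$, and we conclude via the outer induction hypothesis and $R1_0$; if $F = F_1 + F_2$, we apply the key observation below, the inner induction hypothesis, and $R3'_0$ or $R3_0$.

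The main subtlety, and the one point where $\mk F_0$'s shape matters, is the case $E = n.E'$, $F = F_1 + F_2$ in completeness. Since $n.E'$ has a \emph{unique} outgoing transition, any simulation relating them must pick a single matching transition from $F_1 + F_2$, which by the semantics of $+$ comes from exactly one summand $F_i$; the restriction of the relation then witnesses $n.E' \wsim F_i$ directly. This is precisely what allows rule $R3_0$ (or $R3'_0$) to be used rather than the unsound variant $R3'$ discussed just before Theorem~\ref{th:simdistax.sound-acc}, and it is the reason the left-hand side must be guarded.

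As a sanity check, the system $\mk F_0$ is the specialization of $\mk F$ obtained by setting $\bowtie {}={} \mathord\le$ and $r = 0$ throughout (then $A2$ becomes vacuous, $R1$ forces $d_\KK(n,m) = 0$, hence $n = m$ by separation of $d_\KK$, and $r_1 = 0$, and similarly $R3$ splits into $R3_0$ and $R3'_0$). Hence an alternative route is to invoke Theorems~\ref{th:simdistax.sound-acc} and~\ref{simdistax.th:finite-complete} together with the easy equivalence $\awbdg{E,F} = 0 \iff E \wsim F$, which holds because $d_\KK$ is separating. I would mention this correspondence but present the direct inductive argument as the main proof, since it makes the role of each rule transparent.
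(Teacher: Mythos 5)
Your main argument is correct and is in substance the proof the paper intends, just written out in full: the paper dispatches soundness by appeal to the soundness of $\mk F$ (Theorem~\ref{th:simdistax.sound-acc}) and completeness by remarking that the inductive argument of Theorem~\ref{simdistax.th:finite-complete} ``specializes to distance zero'', which is exactly the structural induction on $E$ with an inner induction on $F$ that you carry out explicitly. Your treatment of the case $E=n.E'$, $F=F_1+F_2$ --- the unique transition of a guarded process forces the matching transition to come from a single summand, so $n.E'\wsim F_i$ for some $i$, after which the inner induction hypothesis and (R3$_0$) or (R3$'_0$) apply --- is the right observation, and spelling it out is arguably more transparent than the paper's one-line remark.

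Two side remarks in your write-up deserve a caveat, though neither affects the proof. First, you claim the guardedness of the left-hand process in (R3$_0$) and (R3$'_0$) is \emph{necessary}; the paper explicitly notes the opposite for $\mk F_0$: replacing $n.E$ by a plain $E$ does not invalidate these rules, since any simulation witnessing $E\wsim F_1$ already witnesses $E\wsim F_1+F_2$. The counterexample to (R3$'$) in the quantitative system exploits the \emph{binary} $\min(r_1,r_2)$ side condition, which has no analogue in the unary qualitative rules; guardedness is only what makes your inner induction on $F$ go through as stated, not a soundness requirement. Second, your proposed shortcut via the equivalence of $\awbdg{E,F}=0$ and $E\wsim F$ breaks down when $\lambda=0$, which this chapter permits: there $\awbdg{n.E,m.F}=d_\KK(n,m)$ ignores the continuations entirely, so distance zero no longer implies simulation. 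Since you present the direct induction as the main proof, this only affects the ``sanity check''.
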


\begin{proof}
  Soundness follows immediately from the soundness of Proof system
  $\mk F$, and for completeness we note that the arguments one uses in
  the inductive proof of Theorem~\ref{simdistax.th:finite-complete}
  all specialize to distance zero.
\end{proof}

We remark that, contrary to the situation for general distance above,
we may indeed replace the guarded process $n.E$ in $(R3'_0)$ and
$(R3_0)$ by a plain $E$ without invalidating the rules.  Note also
that $\mk F_0$ may similarly be obtained as a specialization
$\mk F_0^\bullet$ of the axiomatization $\mk F^\bullet$ of point-wise
distance above.

\section{Axiomatizations for Regular Weighted Processes}

Let $N= \max\{ d_\KK( n, m)\mid n, m\in \KK\}$; by finiteness of $\KK$,
$N\in \Real$.  Let $V$ be a fixed set of variables, then $\mathcal
P^R$ is the set of process expressions generated by the following
grammar:
\begin{equation*}
  E::= \U \mid X\mid n.E\mid E+ E\mid \mu X. E \qquad n\in \KK,
  X\in V
\end{equation*}
Here we use $\U$ to denote the \emph{universal process} recursively
offering any weight in $\KK$,
\cf~Proposition~\ref{simdistax.prop:minmaxelem}.  Note that we do not
incorporate the empty process $\nil$.  Semantically this will mean that
all processes in $\mathcal P^R$ are non-terminating, and that the
accumulating distance between any pair of processes is finite.  The
reason for the exchange of $\nil$ with $\U$ is precisely this last
property; specifically, completeness of our axiomatization
(Theorem~\ref{simdistax.th:rec_comp}) can only be shown if all
accumulating distances are finite.

The semantics of processes in $\mathcal P^R$ is given as weighted
transition systems which are generated by the following standard SOS
rules:    
\begin{gather*}
  \AxiomC{\vphantom{$E_1\tto{n} E_1'$}} \UnaryInfC{$n.E \tto{n} E$}
  \DisplayProof \qquad \AxiomC{\vphantom{$E_1\tto{n} E_1'$}}
  \UnaryInfC{$\U \tto{n} \U$} \DisplayProof
  \\[2ex]
  \AxiomC{$E_1\tto{n} E_1'$} \UnaryInfC{$E_1 + E_2 \tto{n} E_1'$}
  \DisplayProof \qquad \AxiomC{$E_2\tto{n} E_2'$} \UnaryInfC{$E_1 +
    E_2 \tto{n} E_2'$} \DisplayProof \qquad \AxiomC{$E[\mu X.E/ X]
    \tto{n} F$} \UnaryInfC{$\mu X. E \tto{n} F$} \DisplayProof
\end{gather*}

As usual we say that a variable $X$ is \emph{guarded} in an expression
$E$ if any occurrence of $X$ in $E$ is within a subexpression $n.E'$.
Formally, we define the \emph{guarding depth} $\gd( E, X)$ of variable
$X$ in expression $E$ recursively by
\begin{align*}
  \gd( \U, X) &= \infty \\ \gd( X, X) &= 0 \\ \gd( n. E, X) &= 1+
  \gd( E, X) \\
  \gd( E_1+ E_2, X) &= \min\big( \gd( E_1, X), \gd( E_2, X)\big) \\
  \gd( \mu X. E, Y) &=
  \begin{cases}
    \gd( E, X) &\text{if } X\ne Y \\
    \infty &\text{if } X= Y
  \end{cases}
\end{align*}
and we say that $X$ is guarded in $E$ if $\gd( E, X)\ge 1$.

Also as usual, we denote by $E[ F/ X]$ the expression derived from $E$
by substituting all free occurrences of variable $X$ in $E$ by $F$,
and given tuples $\bar F=( F_1,\dots, F_k)$, $\bar X=( X_1,\dots, X_k)
$, we write $E[ \bar F/ \bar X]= E[ F_1/ X_1,\dots, F_k/ X_k]$ for the
simultaneous substitution.

Our inference system for regular processes consists of the set of
rules $\mk R$ as shown in Figure~\ref{simdistax.fig:R}; whenever
$\dist{E,F} \bowtie r$ may be concluded from $\mk R$, we write
$\vdash_{ \mk R} \dist{E,F} \bowtie r$.

\begin{figure}[tb]
  \fbox{
    \begin{minipage}[h]{1.0\linewidth}\vspace{5mm}      
      \begin{prooftree}
        \AxiomC{}
        \rulename{A3}
        \condition{$\frac{N}{1-\lambda}\le r$}
        \UnaryInfC{$\vdash \dist{E,F} \le r$}
      \end{prooftree}

      \begin{prooftree}
        \AxiomC{}
        \rulename{A4}
        \UnaryInfC{$\vdash \dist{\U,\sum_{n\in \KK} n.\U} = 0$}
      \end{prooftree}

      \begin{prooftree}
        \AxiomC{}
        \rulename{A5}
        \UnaryInfC{$\vdash \dist{\sum_{n\in \KK} n.\U, \U} = 0$}
        \DisplayProof\quad
        \AxiomC{}
        \rulename{A6}
        \UnaryInfC{$\vdash \dist{\mu X. E, E[\mu X.E / X]} = 0$}
      \end{prooftree}

      \centering
      \AxiomC{}
      \rulename{A7}
      \UnaryInfC{$\vdash \dist{E[\mu X.E / X], \mu X. E} = 0$}
      \DisplayProof\quad
      \AxiomC{}
      \rulename{A8}
      \UnaryInfC{$\vdash \dist{ E, \U}= 0$}
      \DisplayProof

      \begin{prooftree} 
        \AxiomC{$\vdash \dist{E,F} \bowtie r_1$} 
        \condition{$d_\KK( n, m)+ \lambda r_1\bowtie r$}
        \rulename{R1}
        \UnaryInfC{$\vdash \dist{n.E,m.F} \bowtie r$} 
      \end{prooftree}
      \begin{prooftree} 
        \AxiomC{$\vdash \dist{E_1,F}\bowtie r_1$} 
        \AxiomC{$\vdash \dist{E_2,F}\bowtie r_2$} 
        \condition{$\max( r_1,r_2)\bowtie r$}
        \rulename{R2}
        \BinaryInfC{$\vdash \dist{E_1+E_2,F}\bowtie r$} 
      \end{prooftree}
      \begin{prooftree} 
        \AxiomC{$\vdash \dist{n.E,F_1} \bowtie r_1$} 
        \AxiomC{$\vdash \dist{n.E,F_2} \bowtie r_2$} 
        \condition{$\min( r_1,r_2)\bowtie r$}
        \rulename{R3}
        \BinaryInfC{$\vdash \dist{n.E,F_1+F_2} \bowtie r$} 
      \end{prooftree}
      \begin{prooftree}
        \AxiomC{$\vdash \dist{E,F} \le r_1$}
        \AxiomC{$\vdash \dist{F,G} \le r_2$}
        \condition{$r_1 + r_2\le r$}
        \rulename{R4}
        \BinaryInfC{$\vdash \dist{E,G}\le r$}
      \end{prooftree}

      \begin{prooftree}
        \AxiomC{$\vdash \dist{E,F} \le r$}
        \rulename{R5}
        \UnaryInfC{$\vdash \dist{E+G,F+G}\le r$}
        \DisplayProof\quad
        \AxiomC{$\vdash \dist{E,F} \le r$}
        \rulename{R6}
        \UnaryInfC{$\vdash \dist{G+ E, G+ F}\le r$}
      \end{prooftree}      
    \end{minipage}
  }
  \caption{The $\mk R$ proof system}
  \label{simdistax.fig:R}
\end{figure}

Compared to inference system $\mk F$ for finite processes, we note
that we have to include the triangle inequality (R4) as an inference
rule.  Also, the precongruence property of simulation distance is
expressed by rules (R1), (R5), and (R6).  We will need all those extra
rules in the proof of Lemma~\ref{simdistax.le:A} which again is
necessary for showing completeness.

\begin{theorem}[Soundness]
  \label{simdistax.th:regular-sound}
  For closed expressions $E, F\in \mathcal P^R$ we have that
  $\vdash_{ \mk R} \dist{E, F}\bowtie r$ implies
  $\awbdg{ E, F}\bowtie r$.
\end{theorem}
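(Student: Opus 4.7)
The plan is to proceed by induction on the height of the derivation tree $\vdash_{\mk R} \dist{E,F} \bowtie r$, verifying each axiom and each inference rule of $\mk R$ in turn. Many of the cases reduce to the corresponding arguments from Theorem~\ref{th:simdistax.sound-acc}: rules (R1), (R2), (R3) are soundness-preserving for essentially the same reasons as in the finite case, since the semantic equations derived from the definition of $\awbdgs$ hold regardless of whether the process is finite or regular.

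For the new ingredients, I would handle them as follows. For axiom (A3), observe that the SOS semantics of any $E\in \mathcal{P}^R$ yields only transitions with weights in $\KK$, so at any depth the individual weight difference is at most $N$; unfolding the definition of $\awbdgs$ gives $\awbdg{E,F} \le \sum_{i\ge 0} \lambda^i N = \frac{N}{1-\lambda}$. Axiom (A8) follows since every transition $E\tto{n} E'$ is matched by $\U\tto{n} \U$ with $d_\KK(n,n)=0$, so the fixed-point equation resolves to $\awbdg{E,\U}\le \lambda \awbdg{E',\U}$, which forces $\awbdg{E,\U}=0$. For axioms (A4)--(A7), the key observation is that in each case the two sides generate isomorphic (in particular strongly bisimilar) transition systems: $\U$ and $\sum_{n\in\KK} n.\U$ both have exactly one $n$-transition to $\U$ for each $n\in\KK$, while the SOS rule for $\mu X.E$ makes $\mu X.E$ and $E[\mu X.E/X]$ have identical outgoing transitions. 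Strong bisimilarity entails distance zero in both directions, so the equalities hold.

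The remaining cases are the triangle inequality (R4) and the congruence rules (R5), (R6). Rule (R4) is just the triangle inequality for $\awbdgs$, which was already established in the proof that $\awbdgs$ is a hemimetric. For (R5), I would argue that from $\awbdg{E,F}\le r$ one obtains $\awbdg{E+G,F+G}\le r$ by inspecting the fixed-point equation: every transition of $E+G$ comes either from $E$ (and is matched via $F$ using the assumption $\awbdg{E,F}\le r$) or from $G$ (and is matched identically by the corresponding $G$-transition in $F+G$, contributing $0$). Taking the max over both cases preserves the bound $r$. Rule (R6) is symmetric.

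The main obstacle will be making the (A6), (A7), (R5), (R6) arguments fully rigorous in the presence of recursion, since $\awbdgs$ is defined as a least fixed point over the (possibly infinite) reachable-state matrix. The cleanest way is to replay the contraction argument of Lemma~\ref{simdistax.lem:acc:contraction} on the reachable part of the disjoint union of the two sides, noting that this part is still finite because the SOS rules only generate finitely many syntactic derivatives of any closed $E\in \mathcal P^R$; then zero is the unique solution of the resulting system for axioms (A4)--(A7), and the congruence bounds propagate through the fixed-point iteration for (R5), (R6).
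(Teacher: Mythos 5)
Your proof is correct and takes essentially the same route as the paper's, which is a terse version of exactly this argument: induction on the derivation tree, with (A3) discharged by the bound $\awbdg{E,F}\le \sum_{i\ge 0}\lambda^i N= \frac N{1-\lambda}$ (valid because regular processes never terminate), the unfolding and universal-process axioms by bisimilarity and the maximality of $\U$, and (R4)--(R6) by the hemimetric and congruence properties of $\awbdgs$. Your closing remark about replaying the contraction/fixed-point argument on the finite set of reachable syntactic derivatives is a reasonable way to make the recursive cases fully rigorous, and is consistent with how the paper justifies well-definedness of the distance.
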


\begin{proof}
  By an easy induction in the proof tree for
  $\vdash_{ \mk R} \dist{E,F}\bowtie r$, using the definition of
  $\awbdg{ \cdot, \cdot}$.  In relation to Axiom (A3), we note that
  $N= \max\{ d_\KK( n, m)\mid n, m\in \KK\}$ implies
  $\awbdg{ E, F}\le \sum_{ i= 0}^\infty \lambda^i N= \frac N{ 1-
    \lambda}$.
\end{proof}

Our completeness result for regular processes will be based on the
following lemmas; here we call an expression $E\in \mathcal P^R$
\emph{non-recursive} if it does not contain any subexpressions $\mu
X. E'$:

\begin{lemma}
  \label{simdistax.le:A}
  For all $E\in \mathcal P^R$ and $k\in \Nat$ there exist a
  non-recursive expression $F$ and tuples $\bar E=( E_1,\dots, E_k)$,
  $\bar X=( X_1,\dots, X_k)$ for which $\gd( F, X_i)\ge k$ for all $i$
  and
  \begin{equation*}
    \vdash_{ \mk R} \dist{ E, F[ \bar E/ \bar X]}= 0\,, \qquad
    \vdash_{ \mk R} \dist{ F[ \bar E/ \bar X], E}= 0\,.
  \end{equation*}
\end{lemma}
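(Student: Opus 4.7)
The plan is to prove this by induction on the target guarding depth $k$, with a structural analysis of $E$ at each inductive step. The key engine is the unfolding identity $\vdash_{\mk R} \dist{\mu X.E_0, E_0[\mu X.E_0/X]} = 0$ (and its converse), supplied by axioms (A6) and (A7), which lets us push occurrences of $\mu$-bound recursion deeper and deeper into a context of prefixes and sums. The compatibility rules (R1), (R5), (R6), together with the triangle inequality (R4), ensure that zero distances established locally propagate through the surrounding expression context.

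For the base case $k=0$, the guarding condition $\gd(F, X_i)\ge 0$ is vacuous, so one takes $F = X_1$, $\bar X = (X_1)$, $\bar E = (E)$; then $F[\bar E/\bar X] = E$ literally, and the required zero distances reduce to reflexivity $\vdash_{\mk R}\dist{E,E}=0$, which is itself established by a preliminary easy structural induction using (R1), (R2), (R3) for the finite fragment, together with (A6)--(A7) and (R5), (R6) to reduce recursive expressions to their unfoldings.

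For the inductive step, assume the claim for $k$ and consider $E$ case-by-case. If $E = \U$, take $F = \U$ with the empty substitution; (A4)/(A5) take care of any required rewriting. If $E = n.E'$, apply the IH (at value $k$, not $k+1$) to $E'$ obtaining $F'$ with tuples $\bar E', \bar X'$, then set $F = n.F'$; each guarding depth in $F$ becomes $1+\gd(F',X_i)\ge k+1$, and the zero-distance conclusion follows from (R1) applied with $d_\KK(n,n)=0$. If $E = E_1 + E_2$, apply the IH to each summand and merge: use (R5) and (R6) to replace each summand in place, then (R4) to chain the two rewrites, taking care to rename variables to keep $\bar X_1$ and $\bar X_2$ disjoint. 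The crucial case is $E = \mu X. E_0$: first use (A6) to replace $E$ by $E_0[\mu X.E_0/X]$ with zero distance, then invoke the induction hypothesis (still at $k+1$) on this unfolded expression; soundness and symmetric application of (A7) give the reverse direction.

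The main obstacle is that the $\mu$-case is not immediately well-founded in the obvious structural measure on $E$, since $E_0[\mu X.E_0/X]$ still contains $E$ as a subterm. The standard resolution is to introduce a secondary well-founded measure — for example, the number of outermost $\mu$-binders that have not yet been ``pushed under a prefix'' in the current expression — and verify that one unfolding step strictly decreases this measure whenever $X$ is guarded in $E_0$; when $X$ is unguarded in $E_0$, a separate preliminary lemma is needed to show that $\mu X.E_0$ is provably equivalent either to $\U$ (using (A8)) or to a simpler non-divergent form obtained by finitely many unfoldings, reducing this subcase to the guarded one. Threading this refined measure through the case analysis is where the real work of the proof lies.
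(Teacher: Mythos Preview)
Your proposal is correct and follows essentially the same approach as the paper, whose proof is only a one-line sketch naming exactly the rules you invoke: repeated use of (A6), (A7), the congruence rules (R1), (R5), (R6) with $r=0$, and the triangle inequality (R4). Your case analysis and the secondary well-foundedness measure for the $\mu$-case go well beyond what the paper supplies; under the standard (implicit) guardedness assumption your measure of outermost $\mu$-binders not yet under a prefix works, and the unguarded subcase you worry about is simply out of scope here.
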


\begin{proof}
  Repeated use of the unfolding axioms (A6) and (A7), the congruence
  rules (R1), (R5), and (R6) with $r= 0$ and of the triangle
  inequality (R4).
\end{proof}

\begin{lemma}
  \label{simdistax.le:B}
  Let $F$ be a non-recursive expression and $\bar E=( E_1,\dots,
  E_k)$, $\bar X=( X_1,\dots, X_k)$ tuples for which $\gd( F, X_i)\ge
  k$ for all $i$.  Then
  \begin{equation*}
    \vdash_{ \mk R} \dist{ F[ \bar E/ \bar X], F[ \bar \U/ \bar X]}= 0\,, \qquad
    \vdash_{ \mk R} \dist{ F[ \bar \U/ \bar X], F[ \bar E/ \bar X]}= \lambda^k
    \tfrac N{ 1- \lambda}\,.
  \end{equation*}
\end{lemma}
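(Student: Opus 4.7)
The plan is to prove both claims by structural induction on the non-recursive expression $F$, using the guarding depth $k$ as an auxiliary parameter that decreases by one whenever we pass through a prefix. The two pivotal inferential tools are axiom (A8), $\dist{E,\U}=0$, for the first claim, and axiom (A3) combined with the discounting in rule (R1) for the second.

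For the first claim, $\vdash \dist{F[\bar E/\bar X], F[\bar\U/\bar X]}\le 0$, the base cases are immediate: $F=\U$ by reflexivity, $F=X_i$ by axiom (A8). For $F=n. F'$, rule (R1) with $d_\KK(n,n)=0$ together with the induction hypothesis on $F'$ delivers $0+\lambda\cdot 0 = 0$. For $F=F_1+F_2$, I would decompose the left-hand sum via (R2) and then apply the triangle inequality (R4) in the shape $\dist{F_j[\bar E/\bar X], F_1[\bar\U/\bar X]+F_2[\bar\U/\bar X]} \le \dist{F_j[\bar E/\bar X], F_j[\bar\U/\bar X]} + \dist{F_j[\bar\U/\bar X], F_1[\bar\U/\bar X]+F_2[\bar\U/\bar X]}$, where the first summand is bounded by $0$ by induction and the second by an auxiliary widening lemma $\vdash \dist{A, A + B} \le 0$.

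For the second claim, $\vdash \dist{F[\bar\U/\bar X], F[\bar E/\bar X]} \le \lambda^k \tfrac{N}{1-\lambda}$, the base case $F=X_i$ forces $k=0$ (since $\gd(X_i,X_i)=0$), and axiom (A3) directly gives $\dist{\U, E_i} \le \tfrac{N}{1-\lambda}$. For $F=n. F'$, the identity $\gd(n. F', X_i) = 1+\gd(F', X_i)$ guarantees $\gd(F', X_i)\ge k-1$, so the induction hypothesis yields $\dist{F'[\bar\U/\bar X], F'[\bar E/\bar X]} \le \lambda^{k-1}\tfrac{N}{1-\lambda}$, and (R1) with matching labels discounts this by $\lambda$ to the desired bound. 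For $F=F_1+F_2$, the definition $\gd(F_1+F_2, X_i) = \min(\gd(F_1,X_i),\gd(F_2,X_i))$ propagates the hypothesis to both summands, and we proceed as in the sum case above, now with $\lambda^k\tfrac{N}{1-\lambda}$ replacing $0$.

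The main obstacle is the auxiliary widening lemma $\vdash \dist{A, A+B} \le 0$, which I would establish by a subsidiary structural induction on $A$. The prefix case $A=n. A'$ is handled directly by rule (R3), selecting the $A$-summand on the right and using reflexivity (derived from (R1) with $d_\KK(n,n)=0$). The sum case $A=A_1+A_2$ uses (R2) together with the induction hypothesis, modulo commutativity and associativity of $+$ up to distance zero, which are themselves derivable from (R2) and (R3) and the congruence rules (R5), (R6). The case $A=\U$ requires first unfolding $\U$ into $\sum_{n\in\KK} n.\U$ by axioms (A4) and (A5) before applying the prefix argument. Once this widening lemma is in hand, the main inductive argument proceeds cleanly, with the factor $\lambda^k$ in the second claim arising naturally from $k$ compounding applications of the discounting factor $\lambda$ in (R1).
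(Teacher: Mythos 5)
Your proof is correct in substance, but it takes a genuinely different route through the sum case than the paper does. The paper's own proof is a one-line appeal to axioms (A3) and (A8) together with the congruence rules (R1), (R5) and (R6): derive $\vdash \dist{ E_i, \U}= 0$ by (A8), respectively $\vdash \dist{ \U, E_i}\le \tfrac N{ 1- \lambda}$ by (A3), and push this through the surrounding context, (R1) contributing a factor $\lambda$ per prefix and (R5)/(R6) handling sums, with the two halves of each sum glued by the triangle inequality (R4). You instead handle sums with (R2) on the left and (R3) plus your widening lemma on the right. The two routes are not interchangeable. For the first claim the paper's route is shorter and avoids the widening lemma entirely: $\dist{ F_1[ \bar E/ \bar X]+ F_2[ \bar E/ \bar X], F_1[ \bar \U/ \bar X]+ F_2[ \bar E/ \bar X]}\le 0$ by (R5), then (R6) for the other summand, and $0+ 0= 0$ under (R4). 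But for the second claim that same gluing yields $2 \lambda^k \tfrac N{ 1- \lambda}$ at every $+$-node, and since the number of sum nodes in the unfoldings produced by Lemma~\ref{simdistax.le:A} grows with $k$, an accumulating bound would not suffice for the $\epsilon$-completeness argument. Your (R2)/(R3) treatment is exactly what recovers the $\max$-style combination and the stated bound (which, as you implicitly and correctly read it, is a ``$\le$'' despite the ``$=$'' in the statement), so on this point your argument is more careful than the paper's sketch.

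The one soft spot is the auxiliary lemma $\vdash \dist{ A, A+ B}\le 0$. It is derivable, but your sketch underestimates the bookkeeping: reflexivity $\vdash \dist{ A, A}\le 0$ must itself be established for arbitrary closed $A$ (the sum case follows directly from (R5) applied to a summand, the case $A= \U$ from (A4), (A5) and (R4)); the case $A= \U$ of widening needs the unfolding (A4) before (R3) becomes applicable, since $\U$ is not of the guarded form $n.E$; and the sum case of widening needs a strengthened induction hypothesis (``$A$ is a nested summand of $C$'') to avoid an appeal to associativity that is itself another instance of widening. None of this is fatal, but it should be spelled out --- or sidestepped for the first claim by using (R5)/(R6) as above, reserving the widening lemma for the second claim, where it is only ever invoked with bound $0$ and therefore does not disturb the $\lambda^k \tfrac N{ 1- \lambda}$ estimate.
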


\begin{proof}
  Repeated use of Axioms (A3) and (A8) together with the congruence
  rules (R1), (R5), and (R6) with $r= 0$.
\end{proof}

\begin{lemma}
  \label{simdistax.le:limited-complete}
  For closed non-recursive expressions $E$, $F$, $\awbdg{ E, F}\bowtie
  r$ implies $\vdash_{ \mk R} \dist{ E, F}\bowtie r$.
\end{lemma}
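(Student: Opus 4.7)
The plan is to prove the lemma by structural induction on $E$ with an inner structural induction on $F$, following the template of the completeness proof for finite processes (Theorem~\ref{simdistax.th:finite-complete}) but with the universal process $\U$ replacing $\nil$ as the base case. Since the closed non-recursive fragment of $\mathcal{P}^R$ is generated from $\U$ by prefix and sum, this double induction covers every expression.

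The inner base case $F = \U$ is dispatched uniformly: Proposition~\ref{simdistax.prop:minmaxelem} gives $\awbdg{E, \U} = 0$, and Axiom (A8) yields $\vdash_{\mk R} \dist{E, \U} = 0$, from which every sentence $\dist{E, \U} \bowtie r$ with $0 \bowtie r$ is obtained. The outer cases $E = n.E'$ and $E = E_1 + E_2$ mirror the finite-processes proof essentially verbatim: the equations $\awbdg{n.E', m.F'} = d_\KK(n,m) + \lambda \awbdg{E', F'}$, $\awbdg{n.E', F_1 + F_2} = \min(\awbdg{n.E', F_1}, \awbdg{n.E', F_2})$, and $\awbdg{E_1 + E_2, F} = \max(\awbdg{E_1, F}, \awbdg{E_2, F})$ let me apply rules (R1), (R3), and (R2) respectively after invoking the appropriate induction hypothesis.

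The genuinely new case is the outer base $E = \U$, since no rule in $\mk R$ deconstructs $\U$ on the left of $\dist{\cdot,\cdot}$. The strategy is to use Axioms (A4) and (A5) to rewrite $\U$ as $\sum_{n\in \KK} n.\U$ up to distance zero, thereby reducing the question to a sum of prefixed expressions. For the case where $\bowtie$ is $\le$, I would first establish $\vdash_{\mk R} \dist{\sum_{n\in \KK} n.\U, F} \le r$ by an inner induction on $F$ using rules (R1), (R2), and (R3); each prefixed summand $n.\U$ recursively produces an obligation $\vdash_{\mk R} \dist{\U, F'} \le r'$ with $F' \prec F$, where the inner IH applies. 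I would then compose the resulting derivation with Axiom (A4) via the triangle inequality (R4) to conclude $\vdash_{\mk R} \dist{\U, F} \le r$.

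The main obstacle is the case where $\bowtie$ is $\ge$ (and consequently the equational case where $\bowtie$ is $=$) for $E = \U$, because (R4) transfers only upper bounds. To overcome this I would use both (A4) and (A5) symmetrically, combined with the congruence rules (R5), (R6): the plan is to argue that the derivation of $\vdash_{\mk R} \dist{\sum_{n\in \KK} n.\U, F} \ge r$ produced by the structural rules transfers to $\vdash_{\mk R} \dist{\U, F} \ge r$, exploiting the fact that both (A4) and (A5) assert distance zero in either direction between $\U$ and its canonical expansion. Making this transfer rigorous — in particular, showing that derivable lower bounds are preserved under the zero-distance equivalence between $\U$ and $\sum_{n\in \KK} n.\U$ — is where the main technical effort will go.
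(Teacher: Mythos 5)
Your overall plan---a structural induction mirroring the completeness proof for finite processes, with $F=\U$ dispatched by (A8) and the prefix and sum cases handled by (R1)--(R3)---is exactly what the paper intends: its entire proof of this lemma is the sentence ``by structural induction similar to the proof of Theorem~\ref{simdistax.th:finite-complete}''. Your treatment of the outer base case $E=\U$ for $\bowtie$ being $\le$ (expand $\U$ to $\sum_{n\in\KK} n.\U$ via (A4), derive the bound for the expansion by an induction on $F$ interleaving claims about $\dist{\U,F}$ and $\dist{n.\U,F}$, then compose with (R4)) correctly fills in a case the paper does not discuss, and the induction measure you need (on $F$ rather than $E$, since peeling a prefix off $n.\U$ reproduces $\U$ on the left) is set up properly.

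The gap is the one you flag yourself, and the repair you sketch cannot work. To transfer a lower bound from $\dist{\sum_{n\in\KK} n.\U, F}$ to $\dist{\U,F}$ you need an inference step whose \emph{conclusion} has the form $\dist{\U,F}\ge r$, and $\mk R$ has none: (R1)--(R3) require the left-hand process of the conclusion to be a prefix or a sum; the axioms (A4), (A5), (A8) conclude $=0$ only for their specific shapes; and (R4), (R5), (R6)---the triangle inequality and congruence rules you propose to use---are stated for $\le$ only and can never produce a $\ge$ judgment. Hence $\vdash_{\mk R}\dist{\U,F}\ge r$ is underivable for every $r>0$ unless $F$ is syntactically $\U$ or $\sum_{n\in\KK} n.\U$, even though $\awbdg{\U,F}$ can be strictly positive (take $F=n.\U$ with $|\KK|\ge 2$). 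Worse, since every closed non-recursive expression bottoms out in $\U$, the same obstruction resurfaces inside the inductive step for $\ge$ (and for $=$): rule (R1) with $\bowtie$ being $\ge$ eventually demands a premise $\vdash\dist{\U,F'}\ge r_1$, and even the innocuous-looking $\vdash\dist{E,F}\ge 0$ needed to fire (R2) has no derivation in $\mk R$. So the case $E=\U$, $\bowtie\,\in\{\ge,=\}$ cannot be closed by your strategy; it would require adding $\ge$-versions of (R4)--(R6) (or an axiom $\vdash\dist{E,F}\ge 0$), or restricting the lemma to upper bounds. The paper's one-line proof silently skips this, so you have correctly located the real difficulty, but your proposal as written does not resolve it.
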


\begin{proof}
  By structural induction similar to the proof of
  Theorem~\ref{simdistax.th:finite-complete}.
\end{proof}

We are now in a position to state our completeness result which enables
arbitrary $\epsilon$-close proofs in the sense below.  The proof uses
unfoldings of recursive expressions as in Lemma~\ref{simdistax.le:A}, and as these
unfoldings are \emph{finite} non-recursive processes, we cannot expect exact
completeness.

\begin{theorem}[Completeness up to $\epsilon$]
  \label{simdistax.th:rec_comp}
  Let $E$ and $F$ be closed expressions of $\mathcal P^R$ and
  $\epsilon> 0$.  Then $\awbdg{ E, F}= r$ implies $\vdash_{ \mk R} \dist{ E,
    F}\le r+ \epsilon$ and $\vdash_{ \mk R} \dist{ E, F}\ge r- \epsilon$.
\end{theorem}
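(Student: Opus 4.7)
The plan is to reduce the recursive case to the non-recursive (hence effectively finite) case by truncating every recursion to a depth beyond which discounting makes the remaining contribution negligible. Given $\epsilon > 0$, I would first fix $k \in \Nat$ large enough that $\lambda^k \frac{N}{1-\lambda} \le \epsilon/3$; such $k$ exists because $\lambda < 1$.

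The first step is to invoke Lemma~\ref{simdistax.le:A} to replace $E$ and $F$ provably (distance $0$ in both directions) by unfoldings $F_E[\bar E/\bar X]$ and $F_F[\bar F/\bar Y]$, where $F_E, F_F$ are non-recursive skeletons that guard every variable to depth at least $k$. Then Lemma~\ref{simdistax.le:B} lets me substitute the universal process $\U$ in place of each $E_i$ and $F_j$, producing closed non-recursive expressions $F_E[\bar \U/\bar X]$ and $F_F[\bar \U/\bar Y]$ with provable distance $0$ in one direction and at most $\lambda^k \frac{N}{1-\lambda}$ in the other. By soundness (Theorem~\ref{simdistax.th:regular-sound}) and the semantic triangle inequality, the semantic distance $r'' = \awbdg{F_E[\bar \U/\bar X], F_F[\bar \U/\bar Y]}$ differs from $r$ by at most $\lambda^k \frac{N}{1-\lambda} \le \epsilon/3$. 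Applying Lemma~\ref{simdistax.le:limited-complete} to this closed non-recursive pair then yields $\vdash_{\mk R} \dist{F_E[\bar \U/\bar X], F_F[\bar \U/\bar Y]} \bowtie r''$ for every $\bowtie \in \{\le, \ge, =\}$.

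For the $\le$ direction of the theorem the conclusion now follows by chaining the above provable bounds with the triangle rule (R4), obtaining $\vdash_{\mk R} \dist{E, F} \le 0 + 0 + r'' + \lambda^k \frac{N}{1-\lambda} + 0 \le r + 2\epsilon/3 < r + \epsilon$, as desired.

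The hard part will be the $\ge$ direction, since (R4) is only formulated for $\le$ and therefore cannot transport a lower bound across the approximations. My plan is to reopen the structural induction underlying Lemma~\ref{simdistax.le:limited-complete}, now applied to the skeleton pair $F_E, F_F$: rules (R1)--(R3) are valid for arbitrary $\bowtie$ and in particular for $\ge$, so the induction proceeds unchanged until it reaches a ``leaf'' where a substituted $E_i$ or $\U$ appears. Because $F_E$ and $F_F$ guard every variable to depth at least $k$, any leaf contribution is reached only after at least $k$ applications of the prefix rule (R1), hence enters the accumulated bound prefixed by $\lambda^k$. Replacing each $\U$-leaf (which could only justify ``$\ge 0$'') by the genuine $\ge$-bound for the corresponding $E_i$ or $F_j$ then introduces a total error of at most $\lambda^k \frac{N}{1-\lambda} \le \epsilon/3$, giving $\vdash_{\mk R} \dist{E, F} \ge r'' - \lambda^k \frac{N}{1-\lambda} \ge r - 2\epsilon/3 > r - \epsilon$. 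Combined with Lemma~\ref{simdistax.le:A}'s provable equivalences between $E, F$ and their unfoldings, this completes the $\ge$ direction.
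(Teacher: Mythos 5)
Your treatment of the $\le$ half is essentially the paper's own argument: unfold via Lemma~\ref{simdistax.le:A}, pass to $\U$-substituted skeletons via Lemma~\ref{simdistax.le:B}, control the semantic distance of the approximants using soundness (Theorem~\ref{simdistax.th:regular-sound}) and the semantic triangle inequality, apply Lemma~\ref{simdistax.le:limited-complete}, and chain everything with (R4). (The paper splits $\epsilon$ into halves rather than thirds, but the structure is identical.) You are also right that the $\ge$ half is the delicate part---the paper dismisses it with ``similar arguments''---but your proposed repair does not close the gap, for two concrete reasons.

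First, the base case of your reopened induction is missing. When the structural induction behind Lemma~\ref{simdistax.le:limited-complete} is rerun on $F_E[\bar E/\bar X]$ against $F_F[\bar F/\bar Y]$, its leaves are pairs $(E_i, G)$ in which $E_i$ is an arbitrary closed recursive ($\mu$-)expression, and no rule of $\mk R$ derives even $\vdash_{\mk R} \dist{E_i, G}\ge 0$ for such a pair: (A1) and (A2) belong to $\mk F$, not $\mk R$; (A3), (R4), (R5), (R6) produce only $\le$-conclusions; and (A8) covers only $\U$ on the right. So the $\ge$-induction has nothing to stand on when it reaches a substituted subexpression. Second, and more fundamentally, your closing step ``combined with Lemma~\ref{simdistax.le:A}'s provable equivalences'' tacitly invokes a reverse triangle inference: from $\vdash \dist{A,B}\ge s$ together with provable distance $0$ between $E$ and $A$ and between $B$ and $F$, conclude $\vdash \dist{E,F}\ge s$. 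This is semantically valid, but it is not an instance of (R4)---whose conclusion is always a $\le$-statement---nor of any other rule of $\mk R$; upper bounds can be transported across the unfoldings by (R4), lower bounds cannot. Your instinct to push $\ge$ only through (R1)--(R3), which do admit arbitrary $\bowtie$, is the right one, but to make it work you would have to arrange the derivation so that it proceeds from $E$ and $F$ themselves down to leaves for which a $\ge$-fact is actually derivable (for instance pairs with $\U$ on the right, via (A8)), without ever needing to transfer a lower bound across a provably-distance-zero substitution.
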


\begin{proof}
  Assume $\awbdg{ E, F}= r$, and choose $k\in \Nat$ such that $2
  \lambda^k \frac N{ 1- \lambda}\le \epsilon$.  By Lemma~\ref{simdistax.le:A} we
  have non-recursive expressions $E'$, $F'$ and tuples $\bar E$, $\bar
  F$, $\bar X$, and $\bar Y$ for which $\gd( E', X_i)\ge k$ and $\gd(
  F', Y_i)\ge k$ for all $i$, and such that
  \begin{align*}
    &\vdash_{ \mk R} \dist{ E, E'[ \bar E/ \bar X]}= 0\,, &
    &\vdash_{ \mk R} \dist{ E'[ \bar E/ \bar X], E}= 0\,, \\
    &\vdash_{ \mk R} \dist{ F, F'[ \bar F/ \bar Y]}= 0\,, &
    &\vdash_{ \mk R} \dist{ F'[ \bar F/ \bar Y], F}= 0\,.
  \end{align*}
  
  From Lemma~\ref{simdistax.le:B} it follows that
  \begin{align*}
    &\vdash_{ \mk R} \dist{ E'[ \bar E/ \bar X], E'[ \bar \U/ \bar X]}= 0\,, \\
    &\vdash_{ \mk R} \dist{ E'[ \bar \U/ \bar X], E'[ \bar E/ \bar X]}\le
    \lambda^k \tfrac N{ 1- \lambda}= \tfrac \epsilon 2\,, \\
    &\vdash_{ \mk R} \dist{ F'[ \bar F/ \bar Y], F'[ \bar \U/ \bar Y]}= 0\,, \\
    &\vdash_{ \mk R} \dist{ F'[ \bar \U/ \bar Y], F'[ \bar F/ \bar Y]}\le
    \lambda^k \tfrac N{ 1- \lambda}= \tfrac \epsilon 2\,.
  \end{align*}

  Using the triangle inequality and
  Theorem~\ref{simdistax.th:regular-sound} we now have
  \begin{align*}
    \awbdg{ E'[\bar \U/ \bar X], F'[ \bar U/ \bar X]} &\le \awbdg{ E'[
      \bar \U/ \bar X], E'[ \bar E/ \bar X]}+ \awbdg{ E'[ \bar E/ \bar
      X], E} \\
    &\quad+ \awbdg{ E, F}+ \awbdg{ F, F'[ \bar F/ \bar Y]} \\
    &\quad+ \awbdg{ F'[
      \bar F/ \bar Y], F'[ \bar \U/ \bar Y]} \\
    &\le \tfrac \epsilon 2+ 0+ r+ 0+ 0= r+ \tfrac \epsilon 2\,.
  \end{align*}
  Only non-recursive expressions are involved here, so that we can
  invoke Lemma~\ref{simdistax.le:limited-complete} to conclude
  \begin{equation*}
    \vdash_{ \mk R} \dist{ E'[\bar \U/ \bar X], F'[ \bar U/ \bar X]}\le r+
    \tfrac \epsilon 2\,.
  \end{equation*}

  Now we can use the triangle inequality axiom (R4) together with the
  eight equations above to arrive at
  \begin{equation*}
    \vdash_{ \mk R} \dist{ E, F}\le r+ \epsilon\,.
  \end{equation*}
  Similar arguments show that also $\vdash_{ \mk R} \dist{ E, F}\ge r-
  \epsilon$,
\end{proof}

\subsection{Point-wise distance}

Again we can easily convert our proof system $\mk R$ into one for
point-wise (instead of accumulating) distance.  In this case, we
obtain $\mk R^\bullet$ by replacing inference rule (R1) by (R1$^\bullet$)
as we did for Proof system $\mk F$, and (A3) needs to be replaced by
\begin{prooftree} 
  \AxiomC{}
  \rulename{A3$^\bullet$}
  \UnaryInfC{$\vdash \dist{E,F} \le N$}
\end{prooftree}
With these replacements we have a sound and $\epsilon$-complete
axiomatization of point-wise simulation distance for recursive
weighted processes:

\begin{theorem}[Soundness \& Completeness up to $\epsilon$]
  \label{th:simdistax.sound-comp-pw.reg}
  Let $E$ and $F$ be closed expressions of $\mathcal P^R$, then
  $\vdash_{\mk R^\bullet} \dist{E, F}\bowtie r$ implies
  $\pwbdg{ E, F}\bowtie r$, and $\pwbdg{ E, F}= r$ implies
  $\vdash_{\mk R^\bullet} \dist{ E, F}\le r+ \epsilon$ and
  $\vdash_{\mk R^\bullet} \dist{ E, F}\ge r- \epsilon$ for any
  $\epsilon> 0$.
\end{theorem}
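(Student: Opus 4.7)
The plan is to mirror the structure of the soundness proof (Theorem~\ref{simdistax.th:regular-sound}) and the $\epsilon$-completeness proof (Theorem~\ref{simdistax.th:rec_comp}) for the accumulating case, replacing the additive structure by the $\max$-structure appropriate to $\pwbdgs$. The axiomatization $\mk R^\bullet$ differs from $\mk R$ only in the rules (A3$^\bullet$) and (R1$^\bullet$), so the only points that require new work are those where these rules enter the argument; everything else (the unfolding axioms, the triangle inequality, the congruence rules, rules (R2) and (R3), and the substitution lemma) goes through verbatim.

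For soundness I would do a routine induction on the derivation. The non-trivial new cases are (A3$^\bullet$), which holds because $\pwbdg{E,F}=\sup_i \lambda^i d_\KK(\sigma_i,\tau_i)\le N$ (since $\lambda\le 1$ and $d_\KK\le N$ on $\KK$), and (R1$^\bullet$), which holds by the defining equation $\pwbdg{n.E,m.F}=\max(d_\KK(n,m),\lambda\,\pwbdg{E,F})$. The remaining cases reuse the fact that $\pwbdg{\cdot,\cdot}$ is a hemimetric and that $\pwbdg{E_1+E_2,F}=\max(\pwbdg{E_1,F},\pwbdg{E_2,F})$ and $\pwbdg{n.E,F_1+F_2}=\min(\pwbdg{n.E,F_1},\pwbdg{n.E,F_2})$, which are direct analogues of the accumulating identities and are derived exactly as in the proof of Theorem~\ref{th:simdistax.sound-acc}.

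For $\epsilon$-completeness I would re-establish the analogues of Lemmas~\ref{simdistax.le:A} and~\ref{simdistax.le:B} for $\mk R^\bullet$ and $\pwbdgs$. Lemma~\ref{simdistax.le:A} goes through unchanged, since it only uses (A6), (A7), (R4), (R5), (R6) and equalities $\dist{\cdot,\cdot}=0$, all of which are present in $\mk R^\bullet$. For the point-wise analogue of Lemma~\ref{simdistax.le:B} I would show, by induction on the non-recursive $F$, that $\vdash_{\mk R^\bullet}\dist{F[\bar E/\bar X],F[\bar\U/\bar X]}=0$ (using Axiom (A8) inside the guards together with (R5), (R6), (R1$^\bullet$) with $r_1=0$) and $\vdash_{\mk R^\bullet}\dist{F[\bar\U/\bar X],F[\bar E/\bar X]}\le \lambda^k N$, where the latter bound replaces the accumulating bound $\lambda^k N/(1-\lambda)$. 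The analogue of Lemma~\ref{simdistax.le:limited-complete}---completeness on non-recursive expressions---follows by the same structural induction as in the finite case, i.e.\ by Theorem~\ref{th:simdistax.sound-comp-pw} reread inside the larger system $\mk R^\bullet$. Given these three ingredients, one picks $k$ with $2\lambda^k N\le\epsilon$ and assembles the two inequalities $\vdash_{\mk R^\bullet}\dist{E,F}\le r+\epsilon$ and $\vdash_{\mk R^\bullet}\dist{E,F}\ge r-\epsilon$ by five applications of the triangle inequality (R4), exactly as in the proof of Theorem~\ref{simdistax.th:rec_comp}.

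The main obstacle will be the bound $\lambda^k N$ in the point-wise analogue of Lemma~\ref{simdistax.le:B}. In the accumulating case, the semantic reason for the bound $\lambda^k N/(1-\lambda)$ is a geometric series, which is syntactically realized by (R1) adding $d_\KK$-contributions scaled by $\lambda$ and (A3) capping the tail. For $\pwbdgs$ the corresponding semantic reason is simply that any label discrepancy occurring at depth $\ge k$ contributes at most $\lambda^k N$ to the sup, and any discrepancy at depth $<k$ is impossible because all $X_i$ are guarded to depth $k$ in $F$. Translating this into a syntactic derivation requires threading (A3$^\bullet$) through a nested application of (R1$^\bullet$), (R2), (R3), (R5) and (R6) over the guarding prefixes; the key point is that each pass through a prefix rule (R1$^\bullet$) multiplies the inductive bound by $\lambda$, so $k$ such passes yield exactly the factor $\lambda^k$ applied to the base bound $N$ produced by (A3$^\bullet$) at the leaves.
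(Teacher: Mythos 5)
Your proposal is correct and follows essentially the same route as the paper, whose entire proof of this theorem is the one-line remark that it is ``similar to that for accumulated distance'': your elaboration---soundness by induction on derivations with new cases for (A3$^\bullet$) and (R1$^\bullet$), and $\epsilon$-completeness via point-wise analogues of Lemmas~\ref{simdistax.le:A}, \ref{simdistax.le:B} and~\ref{simdistax.le:limited-complete} with the bound $\lambda^k N$ replacing $\lambda^k N/(1-\lambda)$---is exactly what that remark leaves to the reader. One small observation: you write the recursion for $\pwbdg{n.E,m.F}$ with the discount $\lambda$, matching rule (R1$^\bullet$) rather than the undiscounted definition stated earlier in the chapter, and since your (correct) bound $\lambda^k N$ in the Lemma~\ref{simdistax.le:B} analogue needs that discount for $2\lambda^k N\le\epsilon$ to be achievable, this discounted reading is the one under which the theorem actually goes through, so your proof is consistent with the proof system as given.
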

\begin{proof}
  The proof is similar to that for accumulated distance.
\end{proof}

\chapter[The Quantitative Linear-\!Time--Branching-\!Time
Spectrum][The Quantitative Linear-\!Time--Branching-\!Time
Spectrum]{The Quantitative Linear-\!Time--Branching-\!Time
  Spectrum\footnote{This chapter is based on the journal
    paper~\cite{DBLP:journals/tcs/FahrenbergL14} published in
    Theoretical Computer Science.}}
\label{ch:qltbt}

This chapter generalizes the work presented so far in several ways and
develops a general theory of linear and branching distances depending
on a given, but unspecified, trace distance.  It introduces
quantitative Ehrenfeucht-Fra{\"\i}ss{\'e} games as a central tool for
this generalization and then proceeds to define a spectrum of linear
and branching distances which generalizes the one of van
Glabbeek~\cite{inbook/hpa/Glabbeek01}.

\section{Traces, Trace Distances, and Transition Systems}
\label{qltbt.se:prelim}

For a finite non-empty sequence $a=( a_0,\dots, a_n)$, we write
$\last( a)= a_n$ and $\len( a)= n+ 1$ for the length of $a$; for an
infinite sequence $a$ we let $\len( a)= \infty$.  Concatenation of
finite sequences $a$ and $b$ is denoted $a\cdot b$.  We denote by
$a^k=( a_k, a_{ k+ 1},\dots)$ the $k$-shift and by $a_i$ the
$( i+ 1)$st element of a (finite or infinite) sequence, and by
$\emptyseq$ the empty sequence.

Throughout this chapter we fix a set $\KK$ of labels, and we let
$\KK^\infty= \KK^*\cup \KK^\omega$ denote the set of finite and
infinite traces (\ie~sequences) in $\KK$.  A hemimetric
$\trace d: \KK^\infty\times \KK^\infty\to \Realnn\cup\{ \infty\}$ is
called a \emph{trace distance} if $\len( \sigma)\ne \len( \tau)$
implies $\trace d( \sigma, \tau)= \infty$.

A \emph{labeled transition system} (LTS) is a pair $( S, T)$
consisting of states $S$ and transitions
$T\subseteq S\times \KK\times S$.  We often write $s\tto{ x} t$ to
signify that $( s, x, t)\in T$.  Given $e=( s, x, t)\in T$, we write
$\src( e)= s$ and $\tgt( e)= t$ for the source and target of $e$.  A
\emph{path} in $( S, T)$ is a finite or infinite sequence
$\pi=(( s_0, x_0, t_0),( s_1, x_1, t_1),\dots)$ of transitions
$( s_j, x_j, t_j)\in T$ which satisfy $t_j= s_{ j+ 1}$ for all $j$.
We denote by $\tr \pi=( x_0, x_1,\dots)$ the trace induced by such a
path $\pi$.  For $s\in S$ we denote by $\pathsfrom s$ the set of
(finite or infinite) paths from $s$ and by
$\tracesfrom s=\{ \tr \pi\mid \pi\in \pathsfrom s\}$ the set of traces
from $s$.

\section{Examples of Trace Distances}
\label{qltbt.se:example_distances}

We give a systematic treatment of trace distances with which our
quantitative framework can be instantiated.  Some of them have
appeared in previous chapters; some others are new, but have been used
elsewhere in the literature.

Most of the trace distances one finds in the literature are defined by
giving a hemimetric $d$ on $\KK$ and a method to combine the
so-defined distances on individual symbols to a distance on traces.
Three general methods are used for this combination:
\begin{itemize}
\item The \emph{point-wise} trace distance: $\PWDIS d( \sigma, \tau)=
  \sup_j \lambda^j d( \sigma_j, \tau_j)$;
\item the \emph{accumulating} trace distance: $\ACCDIS d( \sigma, \tau)=
  \sum_j \lambda^j d( \sigma_j, \tau_j)$;
\item The \emph{limit-average} trace distance: $\ACCAVG d(
  \sigma, \tau)= \liminf_j \frac 1{ j+ 1} \smash{\sum_{ i= 0}^j} d(
  \sigma_i, \tau_i)$.
\end{itemize}
Note that the trace distances are parametrized by the label distance $d:
\KK\times \KK\to \Realnn\cup\{ \infty\}$.  Also, $\lambda$ is a
\emph{discounting} factor with $0< \lambda\le 1$, and we assume that the
involved traces have equal length; otherwise any trace distance has
value $\infty$.  The point-wise distance thus measures the (discounted)
greatest individual symbol distance in the traces, whereas accumulating
and limit-average distance accumulate these individual distances along
the traces.

If the distance on $\KK$ is the \emph{discrete} distance given by $\disc
d( x, x)= 0$ and $\disc d( x, y)= \infty$ for $x\ne y$, then all trace
distances above agree, for any $\lambda$.  This defines the
\emph{discrete trace distance} $\disc{ \trace d}= \PWDIS{ \disc d}=
\ACCDIS{ \disc d}= \ACCAVG{ \disc d}$ given by $\disc{ \trace d}(
\sigma, \tau)= 0$ if $\sigma= \tau$ and $\infty$ otherwise.  We will
show below that for the discrete trace distance, our quantitative
linear-time--branching-time spectrum specializes to the qualitative one
of~\cite{inbook/hpa/Glabbeek01}.

If one lets $d( x,x)= 0$ and $d( x, y)= 1$ for $x\ne y$ instead, then
$\ACCDIS[ 1] d$ is \emph{Hamming
  distance}~\cite{journals/bell/Hamming50} for finite traces, and
$\ACCDIS d$ with $\lambda< 1$ and $\ACCAVG d$ are two sensible ways to
define Hamming distance also for infinite traces.  $\PWDIS[ 1] d$ is
topologically equivalent to the discrete distance; indeed, $\PWDIS[ 1]
d( \sigma, \tau)= 1$ iff $\disc{\trace d}( \sigma, \tau)=
\infty$.

A generalization of the above distances may be obtained by equipping
$\KK$ with a preorder $\mathord\preceq\subseteq \KK\times \KK$
indicating that a label $x\in \KK$ may be replaced by any $y\in \KK$
with $x\preceq y$, as for example in~\cite{thesis/auc/Thomsen87}.
If we define $d( x, y)= 0$ if $x\preceq y$ and $d( x, y)= \infty$
otherwise (note that this is a hemimetric which is not necessarily
symmetric), then again $\PWDIS d= \ACCDIS d= \ACCAVG d$ for any
$\lambda$.

Point-wise and accumulating distances have been studied in a number of
papers~\cite{DBLP:journals/tse/AlfaroFS09,
  DBLP:journals/tcs/CernyHR12, DBLP:journals/tocl/ChatterjeeDH10,
  DBLP:conf/concur/Breugel05} and in previous chapters.
$\PWDIS[ 1] d$ is the point-wise distance
from~\cite{DBLP:journals/tse/AlfaroFS09,
  DBLP:conf/qest/DesharnaisLT08}, and $\PWDIS d$ for $\lambda< 1$ is
the discounted distance from~\cite{DBLP:journals/tse/AlfaroFS09,
  DBLP:conf/icalp/AlfaroHM03}.  Accumulating distance $\ACCDIS d$ has
been studied in~\cite{DBLP:journals/tse/AlfaroFS09}, and $\ACCAVG d$
in~\cite{DBLP:journals/tocl/ChatterjeeDH10,
  DBLP:journals/tcs/CernyHR12}.  Both $\ACCDIS d$ and $\ACCAVG d$ are
well-known from the theory of discounted and mean-payoff
games~\cite{journals/gameth/EhrenfeuchtM79,
  DBLP:journals/tcs/ZwickP96}.

All distances above were obtained from distances on individual symbols
in $\KK$.  A trace distance for which this is \emph{not} the case is
the \emph{maximum-lead} distance
from~\cite{DBLP:conf/formats/HenzingerMP05} defined for
$\KK\subseteq \Sigma\times \Real$, where $\Sigma$ is an alphabet.
Writing $x\in \KK$ as $x=( x^\ell, x^w)$, it is given by
\begin{equation*}
  \trace d_\pm( \sigma, \tau)=
  \begin{cases}
    \sup_j \bigl| \sum_{i= 0}^j \sigma^w_i - \sum_{i= 0}^j
    \tau^w_i\bigr| &\text{if }
    \sigma^\ell_j= \tau^\ell_j \text{ for all } j, \\
    \infty &\text{otherwise}.
  \end{cases}  
\end{equation*}
As this measures differences of accumulated labels along runs, it is
especially useful for real-time systems,
\cf~\cite{DBLP:conf/formats/HenzingerMP05, conf/fit/FahrenbergL12}.

As a last example of a trace distance we mention the \emph{Cantor}
distance given by $\trace d_\textup{C}( \sigma, \tau)=( 1+ \inf\{ j\mid
\sigma_j\ne \tau_j\})^{ -1}$.  Cantor distance hence measures the
(inverse of the) length of the common prefix of the sequences and has
been used for verification \eg~in~\cite{DBLP:conf/acsd/DoyenHLN10}.
Both Hamming and Cantor distance have applications in information theory
and pattern matching.

We will return to our example trace distances in
Section~\ref{qltbt.se:examples_rec} to show how our framework may be applied
to yield concrete formulations of distances in the
linear-time--branching-time spectrum relative to these.

\section{Quantitative Ehrenfeucht-Fra{\"\i}ss{\'e} Games}
\label{qltbt.se:game}

To lift the linear-time--branching-time spectrum to the quantitative
setting, we define below a quantitative Ehrenfeucht-Fra{\"\i}ss{\'e}
game~\cite{journals/fundmat/Ehrenfeucht61, journals/alger/Fraisse54}
which is similar to the well-known bisimulation game
of~\cite{DBLP:conf/banff/Stirling95}.

Let $( S, T)$ be a LTS and
$\trace d: \KK^\infty\times \KK^\infty\to \Realnn\cup\{ \infty\}$ a
trace distance.

The intuition of the game is as follows: The two players, with
Player~1 starting the game, alternate to choose transitions, or
\emph{moves}, in $T$, starting with transitions from given start
states $s$ and $t$ and continuing their choices from the targets of
the transitions chosen in the previous step.  At each of his turns,
Player~1 also makes a choice whether to choose a transition from the
target of his own previous choice, or from the target of his
opponent's previous choice (to ``switch paths'').  We use a
\emph{switch counter} to keep track of how often Player~1 has chosen
to switch paths.  Player~2 has then to respond with a transition from
the remaining target.  This game is played for an infinite number of
rounds, or until one player runs out of choices, thus building two
finite or infinite paths.  The value of the game is then the trace
distance of the traces of these two paths.

We proceed to formalize the above intuition.  A Player-1
\emph{configuration} of the game is a tuple $( \pi, \rho, m)\in
T^n\times T^n\times \Nat$, for $n\in \Nat$, such that for all $i\in\{
0,\dots, n- 2\}$, either $\src( \pi_{ i+ 1})= \tgt( \pi_i)$ and $\src(
\rho_{ i+ 1})= \tgt( \rho_i)$, or $\src( \pi_{ i+ 1})= \tgt( \rho_i)$
and $\src( \rho_{ i+ 1})= \tgt( \pi_i)$.  Similarly, a Player-2
configuration is a tuple $( \pi, \rho, m)\in T^{ n+ 1}\times T^n\times
\Nat$ such that for all $i\in\{ 0,\dots, n- 2\}$, either $\src( \pi_{ i+
  1})= \tgt( \pi_i)$ and $\src( \rho_{ i+ 1})= \tgt( \rho_i)$, or $\src(
\pi_{ i+ 1})= \tgt( \rho_i)$ and $\src( \rho_{ i+ 1})= \tgt( \pi_i)$;
and $\src( \pi_n)= \tgt( \pi_{ n- 1})$ or $\src( \pi_n)= \tgt( \rho_{ n-
  1})$.  The set of all Player-$i$ configurations is denoted $\Conf_i$.

Intuitively, the configuration $( \pi, \rho, m)$ keeps track of the
history of the game; $\pi$ stores the choices of Player~1, $\rho$ the
choices of Player~2, and $m$ is the switch counter.  Hence $\pi$ and
$\rho$ are sequences of transitions in $T$ which can be arranged by
suitable swapping to form two paths ($\bar\pi$,$\bar\rho$).  How exactly
these sequences are constructed is determined by a pair of
\emph{strategies} which specify for each player which edge to play from
any configuration.

A Player-1 strategy is hence a partial mapping $\theta_1: \Conf_1\parto
T\times \Nat$ such that for all $( \pi, \rho, m)\in \Conf_1$ for which
$\theta_1( \pi, \rho, m)=( e', m')$ is defined,
\begin{itemize}
\item $\src( e')= \tgt( \last( \pi))$ and $m'= m$ or $m'= m+ 1$, or
\item $\src( e')= \tgt( \last( \rho))$ and $m'= m+ 1$.
\end{itemize}
A Player-2 strategy is a partial mapping $\theta_2: \Conf_2\parto T\times
\Nat$ such that for all $( \pi\cdot e, \rho, m)\in \Conf_2$ for which
$\theta_2( \pi\cdot e, \rho, m)=( e', m')$ is defined, $m'= m$, and
$\src( e')= \tgt( \last( \rho))$ if $\src( e)= \tgt( \last( \pi))$,
$\src( e')= \tgt( \last( \pi))$ if $\src( e)= \tgt( \last( \rho))$.  The
sets of Player-1 and Player-2 strategies are denoted $\Theta_1$ and
$\Theta_2$.

Note that if Player~1 chooses a transition from the end of the previous
choice of Player~2 (case $\src( e')= \tgt( \last( \rho))$ above), then
the switch counter is increased; but Player~1 may also choose to
increase the switch counter without switching paths.  Player~2 does not
touch the switch counter.

We can now define what it means to \emph{update} a configuration
according to a strategy: For $\theta_1\in \Theta_1$ and $( \pi, \rho,
m)\in \Conf_1$, $\upd[ \theta_1]( \pi, \rho, m)$ is defined if
$\theta_1( \pi, \rho, m)=( e', m')$ is defined, and then $\upd[
\theta_1]( \pi, \rho, m)= ( \pi\cdot e', \rho, m')$.  Similarly, for
$\theta_2\in \Theta_2$ and $( \pi\cdot e, \rho, m)\in \Conf_2$, $\upd[
\theta_2]( \pi\cdot e, \rho, m)$ is defined if $\theta_2( \pi\cdot e,
\rho, m)=( e', m')$ is defined, and then $\upd[ \theta_2]( \pi\cdot e,
\rho, m)= ( \pi\cdot e, \rho\cdot e', m')$.

For any pair of states $( s, t)\in S\times S$, a pair of strategies $(
\theta_1, \theta_2)\in \Theta_1\times \Theta_2$ inductively determines a
sequence $( \pi^j, \rho^j, m^j)$ of configurations, by
\begin{align*}
  ( \pi^0, \rho^0, m^0) &= ( s, t, 0); \\
  ( \pi^{ 2j+ 1}, \rho^{ 2j+ 1}, m^{ 2j+ 1}) &=
  \begin{cases}
    \text{undef.} \qquad\text{if } \upd[ \theta_1]( \pi^{ 2j}, \rho^{
      2j}, m^{ 2j}) \text{ is undefined}, \\
    \upd[ \theta_1]( \pi^{ 2j}, \rho^{ 2j}, m^{ 2j}) \qquad\text{otherwise};
  \end{cases} \\
  ( \pi^{ 2j}, \rho^{ 2j}, m^{ 2j}) &=
  \begin{cases}
    \text{undef.} \qquad
    \begin{aligned}[t]
      &\text{if } \upd[ \theta_2]( \pi^{ 2j- 1}, \rho^{ 2j-1 }, m^{
        2j- 1}) \\
      &\hspace*{9.5em}\text{is undefined}, 
    \end{aligned} \\
    \upd[ \theta_2]( \pi^{ 2j- 1}, \rho^{ 2j- 1}, m^{ 2j- 1})
    \qquad\text{otherwise}.
  \end{cases}
\end{align*}
Note that indeed, we are updating configurations by alternating
between the two strategies $\theta_1$, $\theta_2$.

The configurations in this sequence satisfy $\pi^j\le_p \pi^{ j+ 1}$
and $\rho^j\le_p \rho^{ j+ 1}$ for all $j$, where $\le_p$ denotes
prefix ordering, hence the direct limits $\pi= \varinjlim \pi^j$,
$\rho= \varinjlim \rho^j$ exist (as finite or infinite paths).  By our
conditions on configurations, the pair $( \pi, \rho)$ in turn
determines a pair $( \bar \pi, \bar \rho)$ of \emph{paths} in $S$, as
follows:
\begin{align*}
  ( \bar \pi_1, \bar \rho_1) &=
  \begin{cases}
    ( \pi_1, \rho_1) &\text{if } \src( \pi_1)= s
    \\
    ( \rho_1, \pi_1) &\text{if } \src( \pi_1)= t
  \end{cases} \\
  ( \bar \pi_j, \bar \rho_j) &=
  \begin{cases}
    ( \pi_j, \rho_j) &\text{if } \src( \pi_j)=
    \tgt( \bar \pi_{ j- 1}) \\
    ( \rho_j, \pi_j) &\text{if } \src( \pi_j)=
    \tgt( \bar \rho_{ j- 1})
  \end{cases}
\end{align*}

The \emph{outcome} of the game when played from $( s, t)$ according to a
strategy pair $( \theta_1, \theta_2)$ is defined to be $\outcome(
\theta_1, \theta_2)( s, t)=( \bar \pi, \bar \rho)$, and its
\emph{utility} is defined by $\util( \theta_1, \theta_2)( s, t)= \trace
d( \tr{ \outcome( \theta_1, \theta_2)( s, t)})= \trace d( \tr{ \bar
  \pi}, \tr{ \bar \rho})$.

Recall that $\trace d$ is given as a parameter to the game; if we want
to make explicit the parametrization on the trace distance $\trace d$
on which utility depends, we write
$\util[ \trace d]( \theta_1, \theta_2)( s, t)$.

Note that $\util( \theta_1, \theta_2)( s, t)$ is defined both in case
the paths $\bar \pi$ and $\bar \rho$ are finite and in case they are
infinite (the case where one is finite and the other is infinite
cannot occur).  Also, if the paths are finite because
$\theta_1( \pi^j, \rho^j, m^j)$ was undefined for some configuration
$( \pi^j, \rho^j, m^j)$ in the sequence, then
$\len( \bar \pi)= \len( \bar \rho)$; if on the other hand the reason
is that $\theta_2( \pi^j, \rho^j, m^j)$ was undefined, then
$\len( \bar \pi)= \len( \bar \rho)\pm 1$, and
$\util( \theta_1, \theta_2)( s, t)= \infty$.  Hence if the game
reaches a configuration in which Player~2 has no moves available, the
utility is $\infty$.

The objective of Player~1 in the game is to maximize utility, whereas
Player~2 wants to minimize it.  Hence we define the \emph{value} of the
game from $( s, t)$ to be
\begin{equation*}
  v( s, t)= \adjustlimits \sup_{ \theta_1\in \Theta_1} \inf_{
    \theta_2\in \Theta_2} \util( \theta_1, \theta_2)( s, t)\,.
\end{equation*}
For a given subset $\Theta_1'\subseteq \Theta_1$ we will write
\begin{equation*}
  v( \Theta_1')( s, t)= \adjustlimits \sup_{ \theta_1\in
    \Theta_1'} \inf_{ \theta_2\in \Theta_2} \util( \theta_1, \theta_2)( s,
  t)\,,
\end{equation*}
and if we need to emphasize dependency of the value on the given trace
distance, we write $v( \trace d, \Theta_1')$.  The following lemma
states the immediate fact that if Player~1 has fewer strategies
available, the game value decreases.

\begin{lemma}
  \label{qltbt.le:strat-restrict}
  For all $\Theta_1'\subseteq \Theta_1''\subseteq \Theta_1$ and all $s, t\in
  S$, $v( \Theta_1')( s, t)\le v( \Theta_1'')( s, t)$.
\end{lemma}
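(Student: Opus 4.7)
The plan is to observe that this lemma is just the monotonicity of the supremum under enlargement of its indexing set, applied to the outer $\sup$ in the definition of the game value. Specifically, the inner infimum $\inf_{\theta_2\in\Theta_2}\util(\theta_1,\theta_2)(s,t)$ depends only on the fixed $\theta_1$ (and on $s,t$), so it defines a function $f:\Theta_1\to \Realnn\cup\{\infty\}$ by $f(\theta_1)= \inf_{\theta_2\in\Theta_2}\util(\theta_1,\theta_2)(s,t)$, and by definition $v(\Theta_1')(s,t)=\sup_{\theta_1\in\Theta_1'}f(\theta_1)$ and $v(\Theta_1'')(s,t)=\sup_{\theta_1\in\Theta_1''}f(\theta_1)$.

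First I would fix $s,t\in S$ and introduce the auxiliary function $f$ as above. Then I would argue that since $\Theta_1'\subseteq\Theta_1''$, every $\theta_1\in\Theta_1'$ also lies in $\Theta_1''$, so $f(\theta_1)\le \sup_{\theta_1'\in\Theta_1''}f(\theta_1')= v(\Theta_1'')(s,t)$ for every $\theta_1\in\Theta_1'$. Taking the supremum over $\theta_1\in\Theta_1'$ on the left yields $v(\Theta_1')(s,t)\le v(\Theta_1'')(s,t)$, which is the claim.

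There is no real obstacle here; the only thing to be careful about is that the inequality makes sense in the extended non-negative reals $\Realnn\cup\{\infty\}$, but the usual conventions $\sup\emptyset=0$ and $\inf\emptyset=\infty$ cause no trouble since $\sup$ is monotone in its indexing set in the complete lattice $\Realnn\cup\{\infty\}$ regardless. One could alternatively phrase the entire argument as one line, invoking the general fact that $A\subseteq B$ implies $\sup_{x\in A}g(x)\le \sup_{x\in B}g(x)$ for any function $g:B\to \Realnn\cup\{\infty\}$, applied to $g=f$ and $A=\Theta_1'$, $B=\Theta_1''$.
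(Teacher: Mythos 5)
Your proof is correct and matches the paper's intent exactly: the paper states this lemma as an ``immediate fact'' with no written proof, and the justification is precisely the monotonicity of the outer supremum under enlargement of its indexing set, as you argue. Nothing further is needed.
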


The above definition of strategies is slightly too general in that
whether or not a strategy is defined in a given configuration should
only depend on the actual part of the configuration on which the
strategy has an effect.  We hence define a notion of \emph{uniformity}
which we will assume from now:

\begin{definition}
  A strategy $\theta_1\in \Theta_1$ is \emph{uniform} if it holds for
  all configurations
  $( \pi, \rho, m),( \pi, \tilde \rho, m),( \tilde \pi, \rho, m)\in
  \Conf_1$ that whenever $\theta_1( \pi, \rho, m)=( e', m')$ is
  defined,
  \begin{itemize}
  \item if $\src( e')= \tgt( \last( \pi))$, then also
    $\theta_1( \pi, \tilde \rho, m)$ is defined, and
  \item if $\src( e')= \tgt( \last( \rho))$, then also
    $\theta_1( \tilde \pi, \rho, m)$ is defined.
  \end{itemize}
  A strategy $\theta_2\in \Theta_2$ is uniform if it holds for all
  configurations
  $( \pi\cdot e, \rho, m),( \tilde \pi\cdot \tilde e, \rho, m),(
  \pi\cdot e, \tilde \rho, m)\in \Conf_2$ that whenever
  $\theta_2( \pi\cdot e, \rho, m)=( e', m')$ is defined,
  \begin{itemize}
  \item if $\src( e')= \tgt( \last( \rho))$, then also $\theta_2(
    \tilde \pi\cdot \tilde e, \rho, m)$ is defined, and
  \item if $\src( e')= \tgt( \last( \pi))$, then also $\theta_2(
    \pi\cdot e, \tilde \rho, m)$ is defined.
  \end{itemize}
\end{definition}

A subset $\Theta_1'\subseteq \Theta_1$ is uniform if all strategies in
$\Theta_1'$ are uniform.  Uniformity of strategies is used to combine
paths built from different starting states in the proof of
Proposition~\ref{qltbt.pr:hemi} below, and it allows us to show a
minimax theorem for our setting.

The concrete strategy subsets we will consider in later sections will
all be uniform, so from now on we only consider the subsets of
$\Theta_1$ and $\Theta_2$ consisting of uniform strategies.  Abusing
notation, we will also denote these by $\Theta_1$ and~$\Theta_2$.

\begin{lemma}
  \label{le:minimax}
  For any uniform $\Theta_1'\subseteq \Theta_1$ and all $s, t\in S$,
  \begin{equation*}
    \adjustlimits \sup_{ \theta_1\in \Theta_1'} \inf_{ \theta_2\in
      \Theta_2} \util( \theta_1, \theta_2)( s, t)= \adjustlimits \inf_{
      \theta_2\in \Theta_2} \sup_{ \theta_1\in \Theta_1'} \util( \theta_1,
    \theta_2)( s, t)\,.
  \end{equation*}
\end{lemma}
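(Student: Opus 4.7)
The weak inequality $\sup_{\theta_1} \inf_{\theta_2} \le \inf_{\theta_2} \sup_{\theta_1}$ is a standard one which holds for any real-valued function of two arguments, requiring no uniformity assumption. The plan is therefore to establish the reverse inequality by constructing, for every $\epsilon > 0$, a single uniform strategy $\theta_2^* \in \Theta_2$ such that
\begin{equation*}
  \util( \theta_1, \theta_2^*)( s, t) \le \sup_{ \theta_1\in \Theta_1'} \inf_{ \theta_2\in \Theta_2} \util( \theta_1, \theta_2)( s, t) + \epsilon
\end{equation*}
simultaneously for every $\theta_1 \in \Theta_1'$. Taking $\sup_{\theta_1}$ and then $\inf_{\theta_2}$ on the left-hand side, and letting $\epsilon\to 0$, yields the desired reverse inequality.

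The construction of $\theta_2^*$ is carried out configuration by configuration. The crucial conceptual point is that Player~2 always moves \emph{after} Player~1 in each round: her response at a configuration $c=(\pi\cdot e, \rho, m)\in \Conf_2$ depends only on the information encoded in $c$, not on whichever $\theta_1\in \Theta_1'$ happened to produce the history leading to $c$. The uniformity condition on $\Theta_2$ is exactly what makes such a ``local'' definition a legal global strategy: it guarantees that $\theta_2^*$ may be specified by the response it chooses from the last Player-1 move and the relevant tails of $\pi$ and $\rho$, independently of the rest. For every Player-2 configuration $c$ that is reachable by \emph{some} $\theta_1\in \Theta_1'$ under the partial responses already chosen, I would set $\theta_2^*(c)$ to be a transition out of the appropriate state that is $(\epsilon/2^{n(c)})$-optimal against the worst-case continuation by Player~1, where $n(c)$ is the depth of $c$ in the game tree. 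The geometric weights $\epsilon/2^{n(c)}$ guarantee that the accumulated local errors telescope to at most $\epsilon$.

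The key quantity to analyze is the value-from-configuration function
\begin{equation*}
  V( c) = \adjustlimits \inf_{ \theta_2\in \Theta_2} \sup_{ \theta_1\in \Theta_1'( c)} \util( \theta_1, \theta_2)( s, t)\,,
\end{equation*}
where $\Theta_1'(c)$ is the set of $\theta_1\in \Theta_1'$ consistent with reaching $c$. A Bellman-style identity expressing $V(c)$ in terms of values at immediate successors of $c$ then shows that the strategy $\theta_2^*$ defined above is $\epsilon$-optimal at the root configuration $(s,t,0)$, which gives the claim. The main obstacle will be that $\util$ is a function of the \emph{infinite} traces $\tr{\bar \pi}$ and $\tr{\bar \rho}$, so local $\epsilon$-optimal choices have to aggregate into global $\epsilon$-optimality; this is precisely where the weighting $\epsilon/2^{n(c)}$ is needed, together with the monotonicity of $\trace d$ along prefixes inherited from it being a hemimetric on $\KK^\infty$. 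Degenerate cases in which Player~2 eventually runs out of moves (forcing $\util = \infty$) are subsumed by the construction, since in such configurations the Bellman step assigns $V(c)=\infty$ and any choice is trivially optimal.
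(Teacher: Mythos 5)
Your overall plan---keep the trivial inequality $\sup\inf\le\inf\sup$ and build a single $\theta_2^*$ that is $\epsilon$-good against every $\theta_1\in\Theta_1'$ simultaneously---is a reasonable line of attack and is genuinely different from the paper's argument, which is a two-line observation: uniformity means that the set of configurations at which either player has a move available does not depend on the opponent's half of the history, so the players' choices are decoupled and it is immaterial who commits to a strategy first. But your execution has a gap that I do not think can be repaired at the level of generality of the lemma. The telescoping of the local $\epsilon/2^{n(c)}$ errors into a global bound of $\epsilon$ requires the utility---that is, $\trace d$ evaluated on the limit traces $\tr{\bar\pi}$, $\tr{\bar\rho}$---to be controlled by its behaviour on finite prefixes. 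You justify this by ``monotonicity of $\trace d$ along prefixes inherited from it being a hemimetric,'' but a hemimetric gives only $\trace d(\sigma,\sigma)=0$ and the triangle inequality; it says nothing about how the distance of an infinite trace relates to that of its prefixes. The whole point of this chapter is that $\trace d$ is an arbitrary parameter with no recursive or topological structure assumed (such structure is introduced only in Section~\ref{qltbt.se:recurse}, and only for distances that happen to admit it). For a $\limsup$-type payoff, greedily near-optimal local choices can be globally arbitrarily bad, so the aggregation step fails.

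There are two further problems in the middle of the argument. First, a Bellman identity for $V(c)$ in an infinite-horizon game cannot simply be asserted and then used to ``read off'' $\epsilon$-optimality: with no well-founded induction available the recursion does not determine $V$, and you have in any case written $V$ as the \emph{upper} value $\inf\sup$, whereas to conclude $\inf_{\theta_2}\sup_{\theta_1}\util(\theta_1,\theta_2)(s,t)\le\sup_{\theta_1}\inf_{\theta_2}\util(\theta_1,\theta_2)(s,t)+\epsilon$ your local choices must be near-optimal with respect to the \emph{lower} value; as written the step is circular. Second, you invoke only uniformity of $\Theta_2$, but the hypothesis of the lemma is uniformity of $\Theta_1'$, and that is the assumption the paper actually leans on: it guarantees that whether a Player-1 strategy has a move at a configuration is independent of Player~2's part of that configuration, so that a response fixed by $\theta_2^*$ at a configuration remains legal and meaningful for every $\theta_1$ able to reach it. Without bringing that hypothesis into play you have not used the assumption that makes the statement true.
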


\begin{proof}
  By uniformity, neither of the two players has any possibility to
  influence the configurations reachable by the other's strategies.
  Hence it is immaterial which player gets to choose strategy first.
\end{proof}

\section{General Properties}
\label{qltbt.se:properties}

We show here that under the uniformity condition, the game value is
indeed a distance, and that results concerning inequalities in the
qualitative dimension can be transfered to topological inequivalences
in the quantitative setting.  Say that a Player-1 strategy
$\theta_1\in \Theta_1$ is \emph{non-switching} if it holds for all
$( \pi, \rho, m)$ for which $\theta_1( \pi, \rho, m)=( e', m')$ is
defined that $m= m'$, and let $\Theta_1^0$ be the set of non-switching
Player-1 strategies.  We first show a lemma which shows that any pair
of traces can be generated by a non-switching strategy:

\begin{lemma}
  \label{qltbt.le:construct_paths}
  For all $s, t\in S$ and all $\sigma\in \tracesfrom s$,
  $\tau\in \tracesfrom t$ there exist $\theta_1\in \Theta_1^0$ and
  $\theta_2\in \Theta_2$ for which
  $\util( \theta_1, \theta_2)( s, t)= \trace d( \sigma, \tau).$
\end{lemma}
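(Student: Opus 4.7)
The plan is to script both players along fixed realizations of the two given traces. Since $\sigma \in \tracesfrom s$ and $\tau \in \tracesfrom t$, fix paths $\pi_\sigma \in \pathsfrom s$ and $\pi_\tau \in \pathsfrom t$ with $\tr \pi_\sigma = \sigma$ and $\tr \pi_\tau = \tau$.

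Define $\theta_1 \in \Theta_1^0$ so that on any configuration $( \pi, \rho, m) \in \Conf_1$ whose Player-1 component $\pi$ is a proper prefix $( e_0, \dots, e_{ k- 1})$ of $\pi_\sigma$, it returns $( e_k, m)$, the next transition of $\pi_\sigma$ with the switch counter preserved; $\theta_1$ is undefined on all other configurations. By construction $\theta_1$ is non-switching, and it is uniform because its output depends only on $\pi$ rather than on $\rho$. Dually, define $\theta_2$ to respond, whenever Player~1 has just extended his own path (in a configuration $( \pi\cdot e, \rho, m) \in \Conf_2$ with $\src( e) = \tgt( \last( \pi))$) and $\rho$ is a proper prefix of $\pi_\tau$, with the next transition of $\pi_\tau$ after $\rho$. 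Uniformity of $\theta_2$ follows analogously.

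A straightforward induction on $j$ shows that the sequence $( \pi^j, \rho^j, m^j)$ generated by $( \theta_1, \theta_2)$ from the initial configuration at $( s, t)$ keeps $\pi^j$ as a prefix of $\pi_\sigma$ and $\rho^j$ as a prefix of $\pi_\tau$, with $m^j = 0$ throughout. Since Player~1 never switches, the reconstruction $( \bar \pi, \bar \rho)$ yields $\bar \pi = \pi_\sigma$ and $\bar \rho = \pi_\tau$ up to the first step at which one strategy becomes undefined. A case analysis on the lengths then gives the desired equality: when both paths are infinite, or when $\len( \sigma) = \len( \tau) < \infty$, the full paths $\pi_\sigma$ and $\pi_\tau$ are produced and the utility equals $\trace d( \sigma, \tau)$ by definition; when $\len( \sigma) > \len( \tau)$, Player~2 is the first to become undefined (since $\pi_\tau$ is exhausted while Player~1 still has a move along $\pi_\sigma$), and the utility is $\infty = \trace d( \sigma, \tau)$ by the length-mismatch convention.

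The main obstacle lies in the remaining case $\len( \sigma) < \len( \tau)$: the strictly non-switching $\theta_1$ above becomes undefined first, producing utility equal to $\trace d$ of equal-length prefixes rather than the required $\infty$. To handle this, I would strengthen $\theta_1$ to attempt one additional transition from $\tgt( \last( \pi_\sigma))$ whose label cannot be matched from Player~2's current state on $\pi_\tau$, which forces Player~2 to be the stuck party. Apart from this edge case, the proof reduces to routine bookkeeping on configuration sequences and straightforward verification of uniformity of the proposed strategies.
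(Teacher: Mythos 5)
Your construction---scripting both players along fixed paths realizing $\sigma$ and $\tau$---is exactly the paper's proof, and your treatment of the cases $\len( \sigma)= \len( \tau)$ and $\len( \sigma)> \len( \tau)$ matches it. You have also put your finger on a genuine weak spot: for $\len( \sigma)< \len( \tau)$ the paper simply asserts that the truncated outcome satisfies $\trace d( \tr{ \bar \pi}, \tr{ \bar \rho})= \trace d( \sigma, \tau)$, which is false in general, since the two outcome traces have \emph{equal} length (so their distance can be finite) while $\trace d( \sigma, \tau)= \infty$ by the length-mismatch convention.

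However, your proposed patch for that case does not work. Whether Player~2 is stuck has nothing to do with whether she can \emph{match} Player~1's label: $\theta_2$ can be defined at a Player-2 configuration as soon as $\tgt( \last( \rho))$ has \emph{any} outgoing transition, and since $\len( \tau)> \len( \sigma)$, Player~2's state on $\pi_\tau$ after $\len( \sigma)$ steps does have one, so she is never forced to be the stuck party; moreover, $\tgt( \last( \pi_\sigma))$ may itself be a deadlock, in which case Player~1 has no extra move to offer. The repair is much simpler and exploits the fact that Player~2's strategy is also yours to choose: when $\len( \sigma)\ne \len( \tau)$ you only need utility $\infty$, so let $\theta_1$ play the first transition of $\pi_\sigma$ (or, if $\sigma= \emptyseq$, any transition from $s$) and leave $\theta_2$ undefined at the resulting Player-2 configuration; the game then ends with $\len( \bar \pi)= \len( \bar \rho)+ 1$ and $\util( \theta_1, \theta_2)( s, t)= \infty$ by definition. (The one irreparable situation is $\sigma= \emptyseq$ with $s$ deadlocked, where every strategy pair yields utility $0$; this degenerate defect is present in the paper's own argument as well.)
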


\begin{proof}
  Let $( \pi, \rho, 0)\in \Conf_1$ for finite paths $\pi$, $\rho$ with
  $\len( \pi)= \len( \rho)= k\ge 0$ and $\tr \pi= \sigma_0\dots
  \sigma_{ k- 1}$, $\tr \rho= \tau_0\dots \tau_{ k- 1}$.  If $\len(
  \sigma)\ge k$, then there is $e=( \last( \pi), \sigma_k, s')\in T$,
  and we define $\theta_1( \pi, \rho, 0)=( e, 0)$.  If also $\len(
  \tau)\ge k$, then there is $e'=( \last( \rho), \tau_k, t')\in T$,
  and we let $\theta_2( \pi\cdot e, \rho, 0)=( e', 0)$.

  Let $(\bar \pi, \bar \rho)= \outcome( \theta_1, \theta_2)( s, t)$.
  If both $\sigma$ and $\tau$ are infinite traces, then
  $\tr{ \bar \pi}= \sigma$ and $\tr{ \bar \rho}= \tau$; otherwise,
  $\tr{ \bar \pi}$ and $\tr{ \bar \rho}$ will be finite prefixes of
  $\sigma$ and $\tau$ for which
  $\trace d( \tr{ \bar \pi}, \tr{ \bar \rho})= \trace d( \sigma,
  \tau)$. \qed
\end{proof}

The following proposition shows that the distance defined by our
quantitative game is a hemimetric.  Note that the proof of the
triangle inequality uses uniformity.

\begin{proposition}
  \label{qltbt.pr:hemi}
  For all $\Theta_1'\subseteq \Theta_1$ with
  $\Theta_1^0\subseteq \Theta_1'$, $v( \Theta_1')$ is a hemimetric on
  $S$.
\end{proposition}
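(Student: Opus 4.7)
The plan is to verify both hemimetric obligations, namely $v(\Theta_1')(s,s)=0$ and the triangle inequality $v(\Theta_1')(s,u)\le v(\Theta_1')(s,t)+v(\Theta_1')(t,u)$, with the bulk of the work concentrated on the triangle inequality; extensivity/symmetry is not required since a hemimetric is asymmetric. Throughout, I will freely use Lemma~\ref{le:minimax} to swap $\sup$ and $\inf$, which is licensed by uniformity of all strategies under consideration.

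For reflexivity, I will exhibit, for every Player-1 strategy $\theta_1\in\Theta_1'$, a \emph{mirroring} Player-2 strategy $\theta_2^*$ playing from the initial configuration $(s,s)$. The idea is that whenever Player 1 has just moved, say via $e\in T$ from the end of one of the two ongoing paths, Player 2 copies the same transition $e$ from the corresponding position of the other path; this is well-defined because both paths start at the same state $s$ and the copying step preserves the invariant that the two paths are identical as sequences of transitions. The outcome $(\bar\pi,\bar\rho)$ then satisfies $\tr{\bar\pi}=\tr{\bar\rho}$, so $\util(\theta_1,\theta_2^*)(s,s)=\trace d(\tr{\bar\pi},\tr{\bar\rho})=0$. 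Taking the infimum over $\theta_2$ and then the supremum over $\theta_1\in\Theta_1'$ gives $v(\Theta_1')(s,s)=0$.

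For the triangle inequality, fix $s,t,u\in S$ and $\varepsilon>0$, and choose (using Lemma~\ref{le:minimax}) strategies $\theta_2^{st},\theta_2^{tu}\in\Theta_2$ that are $\varepsilon$-optimal responses uniformly against all of $\Theta_1'$ in the games on $(s,t)$ and $(t,u)$, respectively. Given an arbitrary $\theta_1\in\Theta_1'$ in the $(s,u)$-game, I will construct a Player-2 response $\theta_2^{su}$ by running two virtual sub-games in parallel: whenever Player 1 moves from the $s$-side path, I feed this move into the virtual $(s,t)$-game, obtain a response from $\theta_2^{st}$ on the $t$-side, then feed that response as a Player-1 move into the virtual $(t,u)$-game, and finally play $\theta_2^{tu}$'s answer from the $u$-side as $\theta_2^{su}$'s actual move; the symmetric construction handles Player-1 moves originating from the $u$-path. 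The outcome of $(\theta_1,\theta_2^{su})$ then witnesses three traces $\sigma^s\in\tracesfrom s$, $\sigma^t\in\tracesfrom t$, $\sigma^u\in\tracesfrom u$ with $\trace d(\sigma^s,\sigma^t)\le v(\Theta_1')(s,t)+\varepsilon$ and $\trace d(\sigma^t,\sigma^u)\le v(\Theta_1')(t,u)+\varepsilon$, and the triangle inequality of $\trace d$ yields $\util(\theta_1,\theta_2^{su})(s,u)\le v(\Theta_1')(s,t)+v(\Theta_1')(t,u)+2\varepsilon$; letting $\varepsilon\to 0$ concludes.

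The main obstacle is the coherent simulation of the two virtual sub-games when Player 1 exercises the switch option, because each switch reshuffles which end-state of which virtual path should supply the next Player-2 response. This is precisely where the uniformity hypothesis earns its keep: it guarantees that $\theta_2^{st}$ and $\theta_2^{tu}$ remain defined regardless of which of the other-side path fragments we present them with, so the virtual games can be advanced independently of the order and laterality of Player 1's choices. The role of the hypothesis $\Theta_1^0\subseteq\Theta_1'$ is more subtle: it ensures that the virtual Player-1 moves induced by the construction (which mimic the $(s,u)$-side moves into the $(s,t)$- and $(t,u)$-games) are themselves strategies available against $\theta_2^{st}$ and $\theta_2^{tu}$, so that the $\varepsilon$-optimality bounds chosen at the outset actually apply.
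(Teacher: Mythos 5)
Your proposal is correct and follows essentially the same route as the paper's proof: the mirroring strategy for $v(s,s)=0$, the composite Player-2 strategy built by relaying moves through two virtual sub-games (made coherent by uniformity), and the appeal to $\Theta_1^0\subseteq\Theta_1'$ so that the realized intermediate trace pairs are outcomes of strategies in $\Theta_1'$ (via Lemma~\ref{qltbt.le:construct_paths}), which is exactly what makes the $\varepsilon$-optimality bounds and the triangle inequality of $\trace d$ combine as you describe.
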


\begin{proof}
  We write $v= v( \Theta_1')$ during this proof.  It is clear that
  $v( s, s)= 0$ for all $s\in S$: if the players are making their
  choices from the same state, Player~2 can always answer by choosing
  exactly the same transition as Player~1.  For proving the triangle
  inequality $v( s, u)\le v( s, t)+ v( t, u)$, let $\varepsilon> 0$
  and use Lemma~\ref{le:minimax} to choose Player-2 strategies
  $\theta_2^{ s, t}, \theta_2^{ t, u}\in \Theta_2$ for which
  \begin{equation}
    \label{qltbt.eq:suptheta}
    \begin{aligned}
      \sup_{ \theta_1\in \Theta_1'} \util( \theta_1, \theta_2^{ s, t})(
      s, t) &< v( s, t)+ \tfrac \varepsilon 2\,, \\
      \sup_{ \theta_1\in \Theta_1'} \util( \theta_1, \theta_2^{ t, u})(
      t, u) &< v( t, u)+ \tfrac \varepsilon 2\,.
    \end{aligned}
  \end{equation}
  We define a strategy $\theta_2^{ s, u}\in \Theta_2$ which uses three paths
  and two configurations in $S$ as extra memory.  This is only for
  convenience, as these can be reconstructed by Player~2 at any time; hence
  we do not extend the capabilities of Player~2:
  \begin{multline*}
    \theta_2^{ s, u}( \pi\cdot e, \chi, m; \bar \pi, \bar \rho', \bar \chi,
    \pi', \rho_1', \rho_2', \chi')= \\
    \begin{cases}
      \begin{aligned}[b]
        \Big(%
        & \theta_2^{ t, u}\big( \rho_2'\cdot \theta_{ 2, 1}^{ s, t}(
        \pi'\cdot e, \rho_1', m), \chi', m\big); \\
        & \bar \pi\cdot e, \\
        & \bar \rho'\cdot \theta_{ 2, 1}^{ s, t}( \pi'\cdot e, \rho_1', m), \\
        & \bar \chi\cdot \theta_{ 2, 1}^{ t, u}\big( \rho_2'\cdot \theta_{
          2, 1}^{ s, t}( \pi'\cdot e, \rho_1', m)\big), \\
        & \pi'\cdot e, \\
        & \rho_1'\cdot \theta_{ 2, 1}^{ s, t}( \pi'\cdot e, \rho_1', m), \\
        & \rho_2'\cdot \theta_{ 2, 1}^{ s, t}( \pi'\cdot e, \rho_1', m), \\
        & \chi'\cdot \theta_{ 2, 1}^{ t, u}\big( \rho_2'\cdot \theta_{ 2, 1}^{
          s, t}( \pi'\cdot e, \rho_1', m)\big)%
        \Big)
      \end{aligned}
      &\text{if } \src( e)= \tgt( \last( \bar \pi)), \\[2ex]
      \begin{aligned}[b]
        \Big(%
        & \theta_2^{ s, t}\big( \pi'\cdot \theta_{ 2, 1}^{ t, u}(
        \rho_2'\cdot e, \chi', m), \rho_1', m\big); \\
        & \bar \pi\cdot \theta_{ 2, 1}^{ s, t}\big( \pi'\cdot \theta_{ 2, 1}^{
          t, u}( \rho_2'\cdot e, \chi', m)\big), \\
        & \bar \rho'\cdot \theta_{ 2, 1}^{ t, u}( \rho_2'\cdot e, \chi', m), \\
        & \bar \chi\cdot e, \\
        & \pi'\cdot \theta_{ 2, 1}^{ t, u}( \rho_2'\cdot e, \chi', m), \\
        & \rho_1'\cdot \theta_{ 2, 1}^{ s, t}\big( \pi'\cdot \theta_{ 2, 1}^{ t,
          u}( \rho_2'\cdot e, \chi', m)\big) \\
        & \rho_2'\cdot e, \\
        & \chi'\cdot \theta_{ 2, 1}^{ t, u}( \rho_2'\cdot e, \chi', m)%
        \Big)
      \end{aligned}
      &\text{if } \src( e)= \tgt( \last( \bar \chi)).
    \end{cases}
  \end{multline*}
  In the beginning of the game, all memory paths are initialized to be
  empty.

  In the expression above, the strategy $\theta_2^{ s, u}$ is
  constructed from the strategies $\theta_2^{ s, t}$ and $\theta_2^{ t,
    u}$ by using the answer to the move of Player~1 in one of the games
  as an emulated Player-1 move in the other.  The paths $\bar \pi$,
  $\bar \chi$ are constructed from the configuration $( \pi, \chi)$ of
  the $( s, u)$-game and are only kept in memory so that we can see
  whether Player~1 is playing an edge prolonging $\bar \pi$ or $\bar
  \chi$.  The pair $( \pi', \rho_1')$ is the configuration in the $( s,
  t)$-game we are emulating, and $( \rho_2', \chi')$ is the $( t,
  u)$-configuration.  The path $\bar \rho'= \bar \rho_1'= \bar \rho_2'$
  is common for the paths $( \bar \pi', \bar \rho_1')$, $( \bar \rho_2',
  \bar \chi')$ constructed from $( \pi', \rho_1')$ and $( \rho_2',
  \chi')$.

  If Player~1 has played an edge $e$ prolonging $\bar \pi$ (first case
  above), we compute an answer move $( e', m)= \theta_2^{ s, t}(
  \pi'\cdot e, \rho_1', m)$ to this in the $( s, t)$-game.  This answer
  is then used to emulate a Player-1 move in the $( t, u)$-game, and the
  answer $\theta_2^{ t, u}( \rho_2'\cdot e', \chi', m)$ to this is what
  Player~2 plays in the $( s, u)$-game.  The memory is updated
  accordingly.  If on the other hand, Player~1 has played an edge $e$
  prolonging $\bar \chi$, we play in the $( t, u)$-game first and use
  the answer $( e', m)= \theta_2^{ t, u}( \rho_2'\cdot e, \chi', m)$ in
  the $( s, t)$-game to compute $\theta_2^{ s, t}( \pi'\cdot e',
  \rho_1', m)$.  Figure~\ref{qltbt.fi:triangle_scramble} gives an illustration
  of how the configurations are updated during the game; note that
  uniformity of $\Theta_1'$ is necessary for being able to emulate
  Player-1 moves from one game in another.

  \begin{figure}[tpb]
    \centering
    \begin{tikzpicture}[scale=.9]
      \tikzstyle{every node}=[font=\small]
      \node at (0,.2) {$s$};
      \node at (3,.2) {$t$};
      \node at (6,.2) {$u$};
      \begin{scope}[draw=white,every node/.style=left]
        \draw (0,0) -- node {$\pi$} (-.2,-1);
        \draw (5.8,-1) -- node {$\pi$} (6,-2);
        \draw (6,-2) -- node {$\pi$} (5.8,-3);
        \draw (-.2,-3) -- node {$\pi$} (0,-4);
      \end{scope}
      \begin{scope}[draw=white,every node/.style=right]
        \draw (6,0) -- node {$\chi$} (5.8,-1);
        \draw (-.2,-1) -- node {$\chi$} (0,-2);
        \draw (0,-2) -- node {$\chi$} (-.2,-3);
        \draw (5.8,-3) -- node {$\chi$} (6,-4);
      \end{scope}
      \begin{scope}[draw=yellow,every node/.style=right]
        \draw[thick] (0,0) -- node {$\pi'$} (-.2,-1);
        \draw[thick] (2.4,-1) -- node {$\pi'$} (2.2,-2);
        \draw[thick] (2.2,-2) -- node {$\pi'$} (2.4,-3);
        \draw[thick] (-.2,-3) -- node {$\pi'$} (0,-4);
      \end{scope}
      \begin{scope}[draw=yellow!50!red,every node/.style=left]
        \draw[thick] (2.2,0) -- node {$\rho_1'$} (2.4,-1);
        \draw[thick] (-.2,-1) -- node {$\rho_1'$} (0,-2);
        \draw[thick] (0,-2) -- node {$\rho_1'$} (-.2,-3);
        \draw[thick] (2.4,-3) -- node {$\rho_1'$} (2.2,-4);
      \end{scope}
      \begin{scope}[draw=red!70!white,every node/.style=right]
        \draw[thick] (3.8,0) -- node {$\rho_2'$} (4,-1);
        \draw[thick] (5.8,-1) -- node {$\rho_2'$} (6,-2);
        \draw[thick] (6,-2) -- node {$\rho_2'$} (5.8,-3);
        \draw[thick] (4,-3) -- node {$\rho_2'$} (3.8,-4);
      \end{scope}
      \begin{scope}[draw=red!30!black,every node/.style=left]
        \draw[thick] (6,0) -- node {$\chi'$} (5.8,-1);
        \draw[thick] (4,-1) -- node {$\chi'$} (3.8,-2);
        \draw[thick] (3.8,-2) -- node {$\chi'$} (4,-3);
        \draw[thick] (5.8,-3) -- node {$\chi'$} (6,-4);
      \end{scope}
    \end{tikzpicture}
    \caption{\label{qltbt.fi:triangle_scramble}%
      Configuration update in the game used for showing the triangle
      inequality}
  \end{figure}
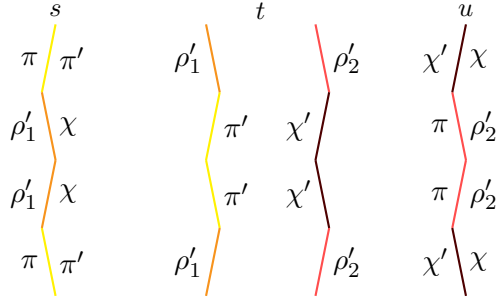

  Take now any $\theta_1^{ s, u}\in \Theta_1'$, let $(\bar \pi, \bar
  \chi)= \outcome( \theta_1^{ s, u}, \theta_2^{ s, u})( s, u)$, and let
  $\bar \rho'$ be the corresponding memory path.  By
  Lemma~\ref{qltbt.le:construct_paths} there exist $\theta_1^{ s, t},
  \theta_1^{ t, u}\in \Theta_1'$ for which $\trace d( \tr{ \bar
    \pi}, \tr{ \bar \rho'})= \util( \theta_1^{ s, t}, \theta_2^{ s, t})(
  s, t)$ and $\trace d( \tr{ \bar \rho'}, \tr{ \bar \chi})= \util(
  \theta_1^{ t, u}, \theta_2^{ t, u})( t, u)$.  Using
  Equation~\eqref{qltbt.eq:suptheta} we have
  \begin{align*}
    \util( \theta_1^{ s, u}, \theta_2^{ s, u})( s, u) &= \trace d( \tr{
      \bar \pi}, \tr{ \bar \chi}) \\
    &\le \trace d( \tr{ \bar \pi}, \tr{ \bar \rho})+ \trace d( \tr{ \bar
      \rho}, \tr{ \bar \chi}) \\
    &< v( s, t)+ v( t, u)+ \varepsilon
  \end{align*}
  and hence also $\inf_{ \theta_2\in \Theta_2} \util( \theta_1^{ s, u},
  \theta_2)( s, u)< v( s, t)+ v( t, u)+ \varepsilon$.  As the choice of
  $\theta_1^{ s, u}$ was arbitrary, this implies
  \begin{equation*}
    \sup_{ \theta_1\in
      \Theta_1'} \inf_{ \theta_2\in \Theta_2} \util( \theta_1, \theta_2)(
    s, u)\le v( s, t)+ v( t, u)+ \varepsilon\,,
  \end{equation*}
  and as also $\varepsilon$ was chosen arbitrarily, we have $v( s, u)\le v(
  s, t)+ v( t, u)$. \qed
\end{proof}

Next we show a \emph{transfer principle} which allows us to generalize
counterexamples regarding the equivalences in the qualitative
linear-time--branching-time spectrum~\cite{inbook/hpa/Glabbeek01} to the
qualitative setting.  We will make use of this principle later to show
that all distances we introduce are topologically inequivalent.

\begin{lemma}
  \label{qltbt.th:transfer_princ}
  Let $\Theta_1', \Theta_1''\subseteq \Theta_1$, and assume $\trace d$
  to be separating.  If there exist states $s, t\in S$ for which
  $v( \disc{ \trace d}, \Theta_1')( s, t)= 0$ and
  $v( \disc{ \trace d}, \Theta_1'')( s, t)= \infty$, then
  $v( \trace d, \Theta_1')$ and $v( \trace d, \Theta_1'')$ are
  topologically inequivalent.
\end{lemma}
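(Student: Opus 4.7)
The plan is to use the witness pair $(s,t)$ directly to exhibit the topological inequivalence, by establishing $v(\trace d,\Theta_1')(s,t)=0$ while $v(\trace d,\Theta_1'')(s,t)>0$: topological equivalence of the symmetrized hemimetrics would then fail at this pair, since for any positive $\epsilon$ strictly smaller than $v(\trace d,\Theta_1'')(s,t)$, no $\delta>0$ can satisfy the defining implication (as $v(\trace d,\Theta_1')(s,t)=0<\delta$).

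First I would verify $v(\trace d,\Theta_1')(s,t)=0$. Since $\disc{\trace d}$ takes only the values $0$ and $\infty$, the hypothesis $v(\disc{\trace d},\Theta_1')(s,t)=0$ means that for every $\theta_1\in\Theta_1'$ there exists $\theta_2\in\Theta_2$ with $\util[\disc{\trace d}](\theta_1,\theta_2)(s,t)=0$, \ie~the two traces produced by $(\bar\pi,\bar\chi)=\outcome(\theta_1,\theta_2)(s,t)$ coincide; such a $\theta_2$ is a fortiori a witness for $\util[\trace d](\theta_1,\theta_2)(s,t)=\trace d(\tr{\bar\pi},\tr{\bar\chi})=0$. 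Hence the inner infimum vanishes for every $\theta_1\in\Theta_1'$, yielding $v(\trace d,\Theta_1')(s,t)=0$.

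For $v(\trace d,\Theta_1'')(s,t)>0$ I would use the dual hypothesis: there is $\theta_1^*\in\Theta_1''$ such that for every $\theta_2\in\Theta_2$ the discrete utility is infinite. For each such $\theta_2$, either Player~$2$ is forced to run out of moves (yielding paths of unequal length and hence $\trace d=\infty$) or the produced traces have equal length but are distinct, in which case separability of $\trace d$ gives $\trace d(\tr{\bar\pi},\tr{\bar\chi})>0$. The main obstacle is precisely that pointwise positivity for every $\theta_2$ does not automatically yield $\inf_{\theta_2}\util[\trace d](\theta_1^*,\theta_2)(s,t)>0$; to handle this, I would argue that in the concrete applications (finite LTS, as in the counterexamples from van~Glabbeek's spectrum) the winning strategy $\theta_1^*$ can be chosen so as to force the discrepancy at a uniformly bounded depth $N$ depending only on the state space, reducing the infimum to a minimum over finitely many distinct trace-prefix pairs, each of which contributes a positive lower bound via separability.
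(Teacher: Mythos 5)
Your first step coincides exactly with the paper's: since $\disc{\trace d}$ only takes the values $0$ and $\infty$, the hypothesis $v( \disc{ \trace d}, \Theta_1')( s, t)= 0$ forces, for each $\theta_1\in \Theta_1'$, some $\theta_2$ producing \emph{identical} traces, and identical traces have $\trace d$-distance $0$; hence $v( \trace d, \Theta_1')( s, t)= 0$. Your reduction of topological inequivalence to ``one distance vanishes at $( s, t)$, the other does not'' is also the paper's intended mechanism (topological equivalence preserves zero-distance pairs).

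Where you diverge is the second half, and this is where the gap sits. The paper argues contrapositively: if $v( \trace d, \Theta_1'')( s, t)$ were $0$, then by separation $v( \disc{ \trace d}, \Theta_1'')( s, t)$ would be $0$ as well, contradicting the hypothesis that it is $\infty$; no finiteness is invoked. You instead try to prove $v( \trace d, \Theta_1'')( s, t)> 0$ directly, correctly identify that pointwise positivity of $\util[ \trace d]( \theta_1^*, \theta_2)( s, t)$ over all $\theta_2$ does not bound the infimum away from $0$, and then repair this by assuming the LTS is finite and that the winning strategy forces a discrepancy at a uniformly bounded depth. That repair imports hypotheses the lemma does not have --- the ambient LTS $( S, T)$ is arbitrary, not assumed finite or even finitely branching, and the uniform-depth claim is itself a K\H{o}nig-type argument that needs finite branching to go through --- so as written you prove a strictly weaker statement. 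To your credit, the obstacle you flag is real and is in fact elided by the paper's own one-line contrapositive, which tacitly needs ``$\inf_{ \theta_2} \util[ \trace d]( \theta_1, \theta_2)( s, t)= 0$'' to yield a $\theta_2$ attaining distance exactly $0$ (separation only converts \emph{attained} zero distance into equality of traces). The clean fix is to phrase your argument the paper's way: do not try to lower-bound the infimum; instead show that $v( \trace d, \Theta_1'')( s, t)= 0$ would transfer back to the discrete value and contradict the hypothesis, which is where the assumption $v( \disc{ \trace d}, \Theta_1'')( s, t)= \infty$ is actually consumed.
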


\begin{proof}
  By $v( \disc{ \trace d}, \Theta_1')( s, t)= 0$, we know that for any
  $\theta_1\in \Theta_1'$ there exists $\theta_2\in \Theta_2$ for
  which $( \bar \pi, \bar \rho)= \outcome( \theta_1, \theta_2)( s, t)$
  satisfy $\tr{ \bar \pi}= \tr{ \bar \rho}$, hence also
  $v( \trace d, \Theta_1')( s, t)= 0$.  Conversely, as $\trace d$ is
  separating, $v( \trace d, \Theta_1'')( s, t)= 0$ would imply that
  also $v( \disc{ \trace d}, \Theta_1'')( s, t)= 0$, hence we must
  have $v( \trace d, \Theta_1'')( s, t)\ne 0$, entailing topological
  inequivalence. \qed
\end{proof}

\section{The Distance Spectrum}
\label{qltbt.se:spectrum}

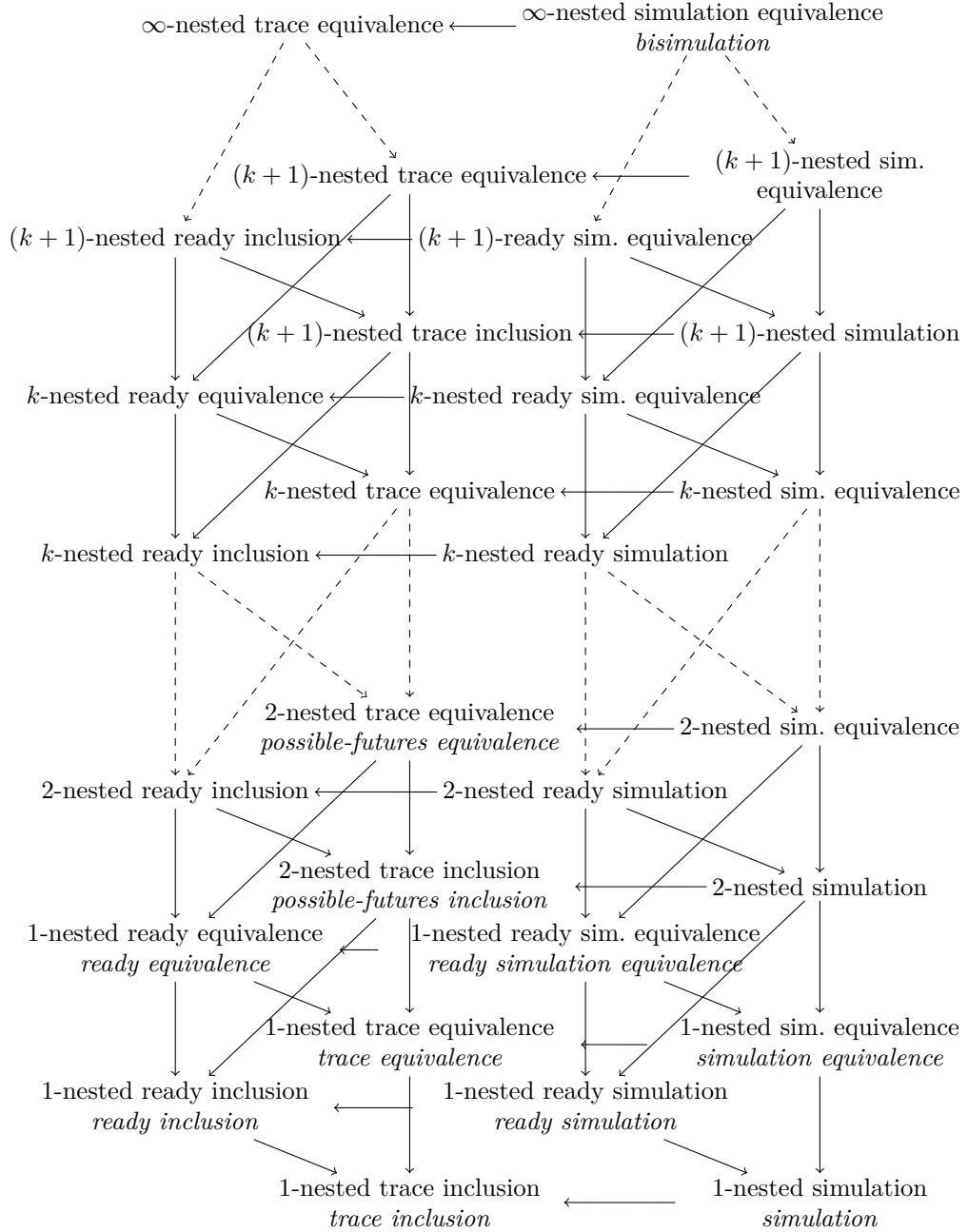
\begin{figure}[tbp]
  \centering
  \begin{tikzpicture}[->,xscale=.81,yscale=1.1]
    \tikzstyle{every node}=[font=\small,text badly centered, inner sep=2pt]
    \node (traceeq) at (0,.3) {$\infty$-nested trace equivalence};
    \node (k+1-r-trace) at (-2,-2.4) {$( k+ 1)$-nested ready
      inclusion};
    \node (k+1-traceeq) at (2,-1.6) {$( k+ 1)$-nested trace equivalence};
    \node (k-r-traceeq) at (-2,-4.4) {$k$-nested ready equivalence};
    \node (k+1-trace) at (2,-3.6) {$( k+ 1)$-nested trace inclusion};
    \node (k-r-trace) at (-2,-6.4) {$k$-nested ready inclusion};
    \node (k-traceeq) at (2,-5.6) {$k$-nested trace equivalence};
    \node (2-r-trace) at (-2,-9.4) {$2$-nested ready inclusion};
    \node [text width=11.6em] (2-traceeq) at (2,-8.6) {$2$-nested trace
      equivalence \\ \emph{possible-futures equivalence}};
    \node [text width=11.4em] (1-r-traceeq) at (-2,-11.4) {$1$-nested ready
      equivalence \\ \emph{ready equivalence}};
    \node [text width=11.5em] (2-trace) at (2,-10.6) {$2$-nested trace
      inclusion \\ \emph{possible-futures inclusion}};
    \node [text width=11em] (1-r-trace) at (-2,-13.4) {$1$-nested ready
      inclusion \\ \emph{ready inclusion}};
    \node [text width=11.9em] (1-traceeq) at (2,-12.6) {$1$-nested trace
      equivalence \\ \emph{trace equivalence}};
    \node [text width=10.7em] (1-trace) at (2,-14.6) {$1$-nested trace
      inclusion \\ \emph{trace inclusion}};
    \node [text width=13em] (bisim) at (7,.3) {$\infty$-nested
      simulation equivalence \\ \emph{bisimulation}};
    \node (k+1-r-sim) at (5,-2.4) {$( k+ 1)$-ready sim.~equivalence};
    \node [text width=9em] (k+1-simeq) at (9,-1.6) {$( k+ 1)$-nested
      sim. equivalence};
    \node (k-r-simeq) at (5,-4.4) {$k$-nested ready sim.~equivalence};
    \node (k+1-sim) at (9,-3.6) {$( k+ 1)$-nested simulation};
    \node (k-r-sim) at (5,-6.4) {$k$-nested ready simulation};
    \node (k-simeq) at (9,-5.6) {$k$-nested sim.~equivalence};
    \node (2-r-sim) at (5,-9.4) {$2$-nested ready simulation};
    \node (2-simeq) at (9,-8.6) {$2$-nested sim.~equivalence};
    \node [text width=14.5em] (1-r-simeq) at (5,-11.4) {$1$-nested ready
      sim.~equivalence \\ \emph{ready simulation equivalence}};
    \node (2-sim) at (9,-10.6) {$2$-nested simulation};
    \node [text width=12em] (1-r-sim) at (5,-13.4) {$1$-nested ready
      simulation \\ \emph{ready simulation}};
    \node [text width=12em] (1-simeq) at (9,-12.6) {$1$-nested
      sim.~equivalence \\ \emph{simulation equivalence}};
    \node [text width=10em] (1-sim) at (9,-14.6) {$1$-nested
      simulation \\ \emph{simulation}};
    %
    \path (bisim) edge (traceeq);
    \path (k+1-r-sim) edge (k+1-r-trace);
    \path (k+1-simeq) edge (k+1-traceeq);
    \path (k-r-simeq) edge (k-r-traceeq);
    \path (k+1-sim) edge (k+1-trace);
    \path (k-r-sim) edge (k-r-trace);
    \path (k-simeq) edge (k-traceeq);
    \path (2-simeq) edge (2-traceeq);
    \path (2-r-sim) edge (2-r-trace);
    \path (2-sim) edge (2-trace);
    \path (1-r-simeq) edge (1-r-traceeq);
    \path (1-simeq) edge (1-traceeq);
    \path (1-r-sim) edge (1-r-trace);
    \path (1-sim) edge (1-trace);
    \path [dashed] (traceeq) edge (k+1-r-trace);
    \path [dashed] (traceeq) edge (k+1-traceeq);
    \path (k+1-r-trace) edge (k-r-traceeq);
    \path (k+1-r-trace) edge (k+1-trace);
    \path (k+1-traceeq) edge (k-r-traceeq);
    \path (k+1-traceeq) edge (k+1-trace);
    \path (k-r-traceeq) edge (k-r-trace);
    \path (k-r-traceeq) edge (k-traceeq);
    \path (k+1-trace) edge (k-r-trace);
    \path (k+1-trace) edge (k-traceeq);
    \path [dashed] (k-r-trace) edge (2-r-trace);
    \path [dashed] (k-r-trace) edge (2-traceeq);
    \path [dashed] (k-traceeq) edge (2-r-trace);
    \path [dashed] (k-traceeq) edge (2-traceeq);
    \path (2-r-trace) edge (1-r-traceeq);
    \path (2-r-trace) edge (2-trace);
    \path (2-traceeq) edge (1-r-traceeq);
    \path (2-traceeq) edge (2-trace);
    \path (1-r-traceeq) edge (1-r-trace);
    \path (1-r-traceeq) edge (1-traceeq);
    \path (2-trace) edge (1-r-trace);
    \path (2-trace) edge (1-traceeq);
    \path (1-r-trace) edge (1-trace);
    \path (1-traceeq) edge (1-trace);
    \path [dashed] (bisim) edge (k+1-r-sim);
    \path [dashed] (bisim) edge (k+1-simeq);
    \path (k+1-r-sim) edge (k-r-simeq);
    \path (k+1-r-sim) edge (k+1-sim);
    \path (k+1-simeq) edge (k-r-simeq);
    \path (k+1-simeq) edge (k+1-sim);
    \path (k-r-simeq) edge (k-r-sim);
    \path (k-r-simeq) edge (k-simeq);
    \path (k+1-sim) edge (k-r-sim);
    \path (k+1-sim) edge (k-simeq);
    \path [dashed] (k-r-sim) edge (2-r-sim);
    \path [dashed] (k-r-sim) edge (2-simeq);
    \path [dashed] (k-simeq) edge (2-r-sim);
    \path [dashed] (k-simeq) edge (2-simeq);
    \path (2-r-sim) edge (1-r-simeq);
    \path (2-r-sim) edge (2-sim);
    \path (2-simeq) edge (1-r-simeq);
    \path (2-simeq) edge (2-sim);
    \path (1-r-simeq) edge (1-r-sim);
    \path (1-r-simeq) edge (1-simeq);
    \path (2-sim) edge (1-r-sim);
    \path (2-sim) edge (1-simeq);
    \path (1-r-sim) edge (1-sim);
    \path (1-simeq) edge (1-sim);
  \end{tikzpicture}
  \caption{\label{qltbt.fi:spectrum}%
    The quantitative linear-time--branching-time spectrum.  The nodes
    are the different system distances introduced in this chapter, and
    an edge $d_1\longrightarrow d_2$ or $d_1\dashrightarrow d_2$
    indicates that $d_1( s, t)\ge d_2( s, t)$ for all states $s$, $t$,
    and that $d_1$ and $d_2$ in general are topologically
    inequivalent.}
\end{figure}

In this section we introduce the distances depicted in
Figure~\ref{qltbt.fi:spectrum} and show their relationship.  Note
again that the results obtained here are independent of the particular
trace distance considered.  Throughout this section, we fix a LTS
$( S, T)$ and a trace distance
$\trace d: \KK^\infty\times \KK^\infty\to \Realnn\cup\{ \infty\}$.

\subsection{Branching Distances}

If the switching counter in the game introduced in Section~\ref{qltbt.se:game}
is unbounded, Player~1 can choose at any move whether to prolong the
previous choice or to switch paths, hence this resembles the
bisimulation game~\cite{DBLP:conf/banff/Stirling95}.

\begin{definition}
  The \emph{bisimulation distance} between $s$ and $t$ is $\bisim d( s,
  t)= v( s, t)$.
\end{definition}

\begin{theorem}
  \label{qltbt.th:bisim}
  For $\trace d= \disc{ \trace d}$ the discrete trace distance,
  $\disc {\bisim d}( s, t)= 0$ iff $s$ and $t$ are bisimilar.
\end{theorem}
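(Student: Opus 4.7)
The plan is to prove both implications by exploiting the special structure of the discrete trace distance: since $\disc{\trace d}(\sigma,\tau)\in\{0,\infty\}$, the utility of any play is $0$ precisely when the two traces built by the players are letter-for-letter equal, and $\infty$ otherwise. Combined with the minimax equality (Lemma~\ref{le:minimax}), $\disc{\bisim d}(s,t)=0$ is equivalent to the existence of a single uniform Player-2 strategy that, against every Player-1 strategy, produces a pair of paths with identical label sequences. This is precisely the Player-2 winning condition of the usual bisimulation game of~\cite{DBLP:conf/banff/Stirling95}, since the unbounded switch counter lets Player~1 choose at every round whether to prolong the current branch or jump to the opposite one.

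For the direction \emph{bisimilar $\Rightarrow$ distance $0$}, I would start from a bisimulation $R\subseteq S\times S$ with $(s,t)\in R$ and construct a uniform Player-2 strategy $\theta_2$ that maintains the invariant ``the pair of current endpoints of the two paths $\bar\pi,\bar\rho$ lies in $R$''. By the defining clauses of bisimulation, any Player-1 move on either branch can be answered on the opposite branch by a transition with the same label and a target pair still in $R$; this is precisely the move prescribed by $\theta_2$. Uniformity is immediate because the reply depends only on the currently active endpoint pair, not on the past moves on the other branch. An induction on the length of $\outcome(\theta_1,\theta_2)(s,t)$ then shows $\tr{\bar\pi}=\tr{\bar\rho}$, so $\util(\theta_1,\theta_2)(s,t)=0$ for every $\theta_1$, yielding $v(s,t)=0$.

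For the converse \emph{distance $0 \Rightarrow$ bisimilar}, I define
\[
  R=\{(s',t')\in S\times S \mid \disc{\bisim d}(s',t')=0\}
\]
and verify that $R$ is a bisimulation. Let $(s',t')\in R$. By Lemma~\ref{le:minimax} applied to $v(s',t')=0$ and the $\{0,\infty\}$-valuedness of $\disc{\trace d}$, there is a uniform $\theta_2^{s',t'}$ achieving utility exactly $0$ against every Player-1 strategy. For the ``forth'' clause of bisimulation, given $s'\tto{x}s''$, consider the Player-1 strategy whose first move is this transition; then $\theta_2^{s',t'}$ must answer with some $t'\tto{y}t''$, and $y=x$ is forced because any other label would make the first symbols of $\tr{\bar\pi},\tr{\bar\rho}$ disagree and hence give utility $\infty$. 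Moreover, by uniformity, the ``tail'' of $\theta_2^{s',t'}$ after this first exchange is itself a uniform Player-2 strategy from $(s'',t'')$ that defeats every Player-1 continuation, so $\disc{\bisim d}(s'',t'')=0$, i.e.\ $(s'',t'')\in R$. The ``back'' clause is handled symmetrically by exploiting the switch counter: Player~1's first move may equally well be a transition $t'\tto{x}t''$ from the $t$-branch (which increments the switch counter), and the same argument forces a matching response $s'\tto{x}s''$ with $(s'',t'')\in R$.

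The main obstacle is the last step in the converse, namely extracting a well-defined ``subgame'' strategy from $(s'',t'')$ out of $\theta_2^{s',t'}$. Strategies are defined on configurations carrying the full history, not just the current state pair, so one must argue that shifting the history window does not break Player-2's guarantee. Uniformity was built in exactly for this purpose: it ensures that whether a reply is defined and what it is depends only on the active branch and the switch count, so the natural restriction of $\theta_2^{s',t'}$ to histories starting after the initial exchange is indeed a uniform strategy and inherits the utility-$0$ property. Once this point is handled carefully, everything else reduces to bookkeeping on configurations.
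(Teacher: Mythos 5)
Your proof is correct and follows essentially the same route as the paper's: both directions reduce to the observation that the discrete trace distance is $\{0,\infty\}$-valued, so the game value is $0$ exactly when Player~2 can match Player~1's labels letter for letter, which is the classical bisimulation game. The paper's own proof is just a terser version of your converse direction (working with ``for all $\theta_1$ there exists $\theta_2$'' rather than extracting a single uniform $\theta_2$ via Lemma~\ref{le:minimax}) and dismisses the forward direction as trivial; your only slight imprecision is describing uniformity as fixing \emph{what} the reply is, whereas the paper's definition only constrains \emph{definedness} --- but this does not affect the tail-strategy extraction, which works by prepending the initial exchange to configurations.
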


\begin{proof}
  By discreteness of $\disc{ \trace d}$, we have $\disc{ \bisim
    d}( s, t)= 0$ iff it holds that for all $\theta_1\in
  \Theta_1$ there exists $\theta_2\in \Theta_2$ for which $\util(
  \theta_1, \theta_2)( s, t)= 0$.  Hence for each reachable Player-1
  configuration $( \pi, \rho, m)$ with $\theta_1( \pi, \rho, m)=( e',
  m')$, we have $\theta_2( \pi\cdot e', \rho, m')= (e'', m')$ with $\tr{
    e'}= \tr{ e''}$, \ie~Player~2 matches the labels chosen by Player~1
  precisely, implying that $s$ and $t$ are bisimilar.  The proof of the
  other direction is trivial. \qed
\end{proof}

We can restrict the strategies available to Player~1 by allowing only a
pre-defined finite number of switches:
\begin{equation*}
  \nsim k \Theta_1=  \{ \theta_1\in \Theta_1 \mid \text{if }
  \theta_1( \pi, \rho, m)=( e', m') \text{ is defined, then } m'\le k-
  1\}
\end{equation*}
In the so-defined $k$-nested simulation game, Player~1 is
only allowed to switch paths $k-1$ times during the game.  Note that
$\nsim 1 \Theta_1= \Theta_1^0$ is the set of non-switching strategies.

\begin{definition}
  The \emph{$k$-nested simulation distance} from $s$ to $t$, for $k\in
  \Natp$, is $\nsim k d( s, t)= v( \nsim k \Theta_1)( s, t)$.  The
  \emph{$k$-nested simulation equivalence distance} between $s$ and $t$ is
  $\nsimeq k d( s, t)= \max( v( \nsim k \Theta_1)( s, t), v( \nsim k
  \Theta_1)( t, s))$.
\end{definition}

\begin{theorem}
  \label{qltbt.th:nestedsim}
  For $\trace d= \disc{ \trace d}$ the discrete trace distance,
  \begin{itemize}
  \item $\disc{ \nsim k d}( s, t)= 0$ iff there is a
    $k$-nested simulation from $s$ to $t$,
  \item $\disc{ \nsimeq k d}( s, t)= 0$ iff there is a
    $k$-nested simulation equivalence between $s$ and $t$.
  \end{itemize}
\end{theorem}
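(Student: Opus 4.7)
The plan is to mirror the proof of Theorem~\ref{qltbt.th:bisim}. Because the discrete trace distance $\disc{\trace d}$ only takes values $0$ and $\infty$, the equality $\disc{\nsim k d}(s,t)=0$ is equivalent to the assertion that for every $\theta_1\in \nsim k \Theta_1$ there exists $\theta_2\in \Theta_2$ with $\util(\theta_1,\theta_2)(s,t)=0$; equivalently, Player~2 can match, step for step, every label Player~1 chooses, and never runs out of moves. I therefore need to translate this between ``game with $\le k-1$ switches'' and the inductive definition of $k$-nested simulation.

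First I would analyze the structure of a play driven by $\theta_1\in \nsim k \Theta_1$: the play decomposes into at most $k$ consecutive phases separated by the $\le k-1$ switches, and within each single phase Player~1 plays from one fixed side while Player~2 responds from the other. Each switch exchanges these roles. This gives a clean correspondence between the switch counter $m\in\{0,\dots,k-1\}$ and the nesting index of a $k$-nested simulation, whose defining chain $R_k\subseteq R_{k-1}\subseteq\cdots\subseteq R_1$ consists of simulations whose direction alternates with each level.

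For the implication ``$k$-nested simulation implies $\disc{\nsim k d}(s,t)=0$'' I would proceed by induction on $k$. Given a witnessing chain $R_k\subseteq\cdots\subseteq R_1$ with $(s,t)\in R_k$, I would define a uniform Player-2 reply that, whenever the current configuration has switch counter $m$, picks a matching transition using the simulation $R_{k-m}$ oriented according to the parity of $m$. Such a matching transition exists precisely by the simulation property of $R_{k-m}$, and the fact that $R_{k-m}\subseteq R_{k-m-1}$ ensures that reaching a switch (which increments $m$) preserves the invariant required for the next phase. By induction in $k$, Player~2 never runs out of moves, so $\util=0$.

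For the converse, assuming $\disc{\nsim k d}(s,t)=0$, I would fix a Player-2 strategy $\theta_2$ that answers every $\theta_1\in \nsim k \Theta_1$ with zero utility (using Lemma~\ref{le:minimax} to pull the quantifier in), and then, for each $m=0,\dots,k-1$, define $R_{k-m}$ to be the set of all pairs of ``current heads'' of the two played paths arising in some reachable configuration with switch counter $m$ under $\theta_2$, oriented according to the parity of $m$. Uniformity of strategies, precisely as exploited in Proposition~\ref{qltbt.pr:hemi} and Lemma~\ref{le:minimax}, decouples the relation $R_j$ from the prior history of the play, which is what lets us verify that each $R_j$ is a bona fide simulation in the correct direction and that the $R_j$ nest as required. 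The pair $(s,t)$ lies in $R_k$ by construction, completing the argument. The statement for $\nsimeq k d$ then follows at once from $\nsimeq k d(s,t)=\max(\nsim k d(s,t),\nsim k d(t,s))$: apply the first bullet to both directions and take the union of the two resulting nested chains.

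The main obstacle will be the bookkeeping around the parity-dependent direction swap at each switch, together with the verification that uniformity indeed produces relations depending only on the pair of current states (and not on the played history); this is the point at which the proof genuinely uses the uniformity condition from Section~\ref{qltbt.se:game} rather than being a straight transcription of the qualitative proof of Theorem~\ref{qltbt.th:bisim}.
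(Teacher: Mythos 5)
Your proposal is correct and follows essentially the same route as the paper, whose proof is just a two-line sketch reducing, via discreteness of $\disc{\trace d}$, the statement to the observation that value zero means Player~2 can always match labels exactly in the game with at most $k-1$ switches, which is the standard $k$-nested simulation game. The one imprecision to watch in your bookkeeping is that the orientation of the current phase is not determined by the parity of the switch counter $m$ (Player~1 may increment $m$ without actually switching paths), so the relations should be indexed by the pair of $m$ and the side Player~1 is currently extending, exactly as in the $(m,p)$-indexing used in Theorem~\ref{qltbt.th:iter-1}.
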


Especially, $\disc{ \nsim 1 d}$ corresponds to the usual
\emph{simulation} preorder, and $\disc{ \nsim 2 d}$ to
\emph{two-nested simulation}.  Similarly, $\disc{ \nsimeq 1 d}$ is
\emph{similarity}, and $\disc{ \nsimeq 2 d}$ is \emph{two-nested
  simulation equivalence}.  We refer
to~\cite{DBLP:journals/iandc/GrooteV92,DBLP:journals/jacm/HennessyM85}
for definitions and discussion of two-nested and $k$-nested
simulation.

\begin{proof}
  This is similar to the proof of Theorem~\ref{qltbt.th:bisim}: If $\disc{
    \nsim k d}( s, t)= 0$, then any $\theta_1\in \nsim k \Theta_1$ has a
  counter-strategy $\theta_2\in \Theta_2$ which matches the labels
  chosen by Player~1 precisely, implying $k$-nested simulation from $s$
  to $t$.  The other direction is again trivial. \qed
\end{proof}

\begin{theorem}
  \label{qltbt.th:spectrum-1}
  For all $k, \ell\in \Natp$ with $k< \ell$ and all $s, t\in S$,
  \begin{equation*}
    \nsimeq k d( s, t)\le \nsim \ell d( s, t)\le \nsimeq \ell
    d( s, t)\le \bisim d( s, t).
  \end{equation*}
  If the trace distance $\trace d$ is separating, then all distances
  above are topologically inequivalent.
\end{theorem}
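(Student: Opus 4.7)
The plan is to derive the inequality chain from strategy-set inclusions together with a ``switch-first'' simulation between games, and then to derive topological inequivalence by lifting classical counterexamples from the qualitative linear-time--branching-time spectrum via the transfer principle.

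First, the chain decomposes into three links, two of which are easy. The inclusions $\nsim k \Theta_1 \subseteq \nsim \ell \Theta_1 \subseteq \Theta_1$ (for $k \le \ell$) hold by definition, since the counter bound $m' \le k - 1$ is tighter than $m' \le \ell - 1$. Lemma~\ref{qltbt.le:strat-restrict} then yields $\nsim k d(s,t) \le \nsim \ell d(s,t)$ and $\nsim \ell d(s,t) \le \bisim d(s,t)$, and similarly with $(s,t)$ replaced by $(t,s)$. The middle inequality $\nsim \ell d(s,t) \le \nsimeq \ell d(s,t)$ is immediate from the definition $\nsimeq \ell d(s,t) = \max(\nsim \ell d(s,t), \nsim \ell d(t,s))$.

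The core ingredient is then to bound $\nsim k d(t,s)$ by a game value based at $(s,t)$, which is what is needed both for the first and the third inequalities. I would produce a ``switch-first'' translation of strategies: given $\theta_1^{(t,s)} \in \nsim k \Theta_1$ playing from $(t,s)$, define $\theta_1^{(s,t)}$ that begins by picking an edge from the $t$-side in the configuration $(s,t,0)$ (hence advancing the switch counter by one) and thereafter mimics $\theta_1^{(t,s)}$ with $\pi$- and $\rho$-roles permuted. Since the paths $\bar\pi,\bar\rho$ are always reassembled as ``path from $s$'' and ``path from $t$'', the simulated play yields the same pair of traces as the original game, hence the same utility. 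The resulting counter is at most $k \le \ell - 1$, so $\theta_1^{(s,t)} \in \nsim \ell \Theta_1$, giving $\nsim k d(t,s) \le \nsim \ell d(s,t)$; for the bisimulation inequality, no counter bound is needed, so the analogous construction gives $\nsim \ell d(t,s) \le \bisim d(s,t)$. Taking the max with the trivial direction concludes both $\nsimeq k d(s,t) \le \nsim \ell d(s,t)$ and $\nsimeq \ell d(s,t) \le \bisim d(s,t)$.

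For the topological inequivalence statements, I would invoke Lemma~\ref{qltbt.th:transfer_princ}: for each consecutive pair of distances in the chain, it suffices to exhibit an LTS and two states $s,t$ such that under $\disc{\trace d}$, the smaller distance evaluates to $0$ while the larger evaluates to $\infty$. Theorems~\ref{qltbt.th:bisim} and~\ref{qltbt.th:nestedsim} identify these discrete values with $k$-nested simulation, $k$-nested simulation equivalence, and bisimulation respectively. Classical witnesses from van~Glabbeek's spectrum (e.g.\ processes distinguishing $k$-nested from $(k+1)$-nested simulation, and from bisimulation) therefore transfer to topological inequivalence for any separating $\trace d$.

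The main obstacle I expect is the bookkeeping in the switch-first simulation: the precise definitions of Player-$1$ and Player-$2$ configurations, together with the reconstruction of $(\bar\pi,\bar\rho)$ from $(\pi,\rho)$, have to be carefully tracked so that one verifies that the $(s,t)$-play truly realizes the same trace pair as the emulated $(t,s)$-play. Uniformity of strategies is what makes this translation well-defined and consistent across all possible opponent responses.
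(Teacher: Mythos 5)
Your proposal follows the paper's own route exactly: the paper proves the chain from the strategy-set inclusions you describe together with Lemma~\ref{qltbt.le:strat-restrict}, and obtains topological inequivalence from Lemma~\ref{qltbt.th:transfer_princ} plus the strictness of the corresponding discrete relations in van~Glabbeek's spectrum. In fact you supply strictly more detail than the paper, whose proof is two sentences long and silently invokes an inclusion $\nsimeq k \Theta_1\subseteq \nsim \ell \Theta_1$ for a strategy set $\nsimeq k \Theta_1$ that is never defined; your switch-first emulation is the natural way to make that inclusion precise.

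The one step I would not sign off on as written is the claim that the emulated play ``yields the same pair of traces as the original game, hence the same utility.'' The unordered pair of paths is indeed the same, but the outcome of the game from $( s, t)$ is reassembled so that $\bar \pi$ is always the path rooted at $s$ and $\bar \rho$ the path rooted at $t$, and the utility is $\trace d( \tr{ \bar \pi}, \tr{ \bar \rho})$ in that fixed order. The original $( t, s)$-play therefore has utility $\trace d( \sigma_t, \sigma_s)$ while your emulated $( s, t)$-play has utility $\trace d( \sigma_s, \sigma_t)$, and since $\trace d$ is only a hemimetric these need not agree. Concretely, take the point-wise trace distance over the label hemimetric with $d_\KK( a, b)= 0$ and $d_\KK( b, a)= \infty$, and systems $s\tto{ a} s'$, $t\tto{ b} t'$ with $s'$, $t'$ deadlocked: then $\nsim 1 d( t, s)= \infty$, so $\nsimeq 1 d( s, t)= \infty$, yet every play of the $( s, t)$-game has utility $d_\KK( a, b)= 0$, so $\nsim 2 d( s, t)= 0$. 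Thus the inequality $\nsim k d( t, s)\le \nsim \ell d( s, t)$ --- which is what the first and third links of the chain actually require --- follows from your argument only when $\trace d$ is symmetric (or the reassembly convention is changed). To be fair, the paper's one-line proof glosses over exactly the same point, so your write-up is faithful to it; but an airtight version needs either a symmetry hypothesis on $\trace d$ or a different justification for this step.
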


\begin{proof}
  The first part of the theorem follows from the inclusions
  $\nsimeq k \Theta_1\subseteq \nsim \ell \Theta_1\subseteq \nsimeq
  \ell \Theta_1\subseteq \Theta_1$ and
  Lemma~\ref{qltbt.le:strat-restrict}.  Topological inequivalence
  follows from Lemma~\ref{qltbt.th:transfer_princ} and the fact that
  for the discrete relations corresponding to the distances above
  (obtained by letting $\trace d= \disc{ \trace d}$), the inequalities
  are strict~\cite{inbook/hpa/Glabbeek01}. \qed
\end{proof}

As a variation of $k$-nested simulation, we can consider strategies
which allow Player~1 to switch paths $k$ times during the game, but at
the last switch, he may only pose \emph{one} transition as a challenge,
to which Player~2 must answer, and then the game finishes:
\begin{equation*}
  \nrsim k \Theta_1=  \{ \theta_1\in \Theta_1 \mid \text{if }
  \theta_1( \pi, \rho, m) \text{ is defined, then } m\le k- 1\}
\end{equation*}
Hence after his $k$'th switch, Player~1 has no more moves available, and
the game finishes after the answer move of Player~2.  Again, we allow
Player~1 to increase the switch counter without actually switching
paths.

\begin{definition}
  The \emph{$k$-nested ready simulation distance} from $s$ to $t$, for $k\in
  \Natp$, is $\nrsim k d( s, t)= v( \nrsim k \Theta_1)( s, t)$.  The
  \emph{$k$-nested ready simulation equivalence distance} between $s$ and
  $t$ is $\nrsimeq k d( s, t)= \max( v( \nrsim k \Theta_1)( s, t),
  v( \nrsim k \Theta_1)( t, s))$.
\end{definition}

For the discrete case, it seems only $k= 1$ has been considered; the
proof is similar to the one of Theorem~\ref{qltbt.th:bisim}.

\begin{theorem}
  For $\trace d= \disc{ \trace d}$ the discrete trace distance,
  \begin{itemize}
  \item $\disc{ \nrsim 1 d}( s, t)= 0$ iff there is a ready
    simulation from $s$ to $t$,
  \item $\disc{ \nrsimeq 1 d}( s, t)= 0$ iff $s$ and $t$
    are ready simulation equivalent.
  \end{itemize}
\end{theorem}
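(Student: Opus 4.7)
The plan is to follow the blueprint of Theorem~\ref{qltbt.th:bisim} and Theorem~\ref{qltbt.th:nestedsim}, showing that the $k=1$ ready-simulation strategy restriction corresponds exactly to the ready condition on top of ordinary simulation. The key observation about the game semantics is that a strategy in $\nrsim 1 \Theta_1$ lets Player~1 build a path from $s$ (with Player~2 matching from $t$) for arbitrarily many non-switching rounds, and then optionally perform a single switch, after which Player~2 answers once and the game ends. When $\trace d= \disc{\trace d}$, utility~$0$ means that Player~2's matching labels all agree with Player~1's.

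For the ``if'' direction of the first bullet, suppose $R$ is a ready simulation with $(s,t)\in R$. I would define a uniform Player-$2$ strategy $\theta_2$ which maintains, after each round, the invariant that the current pair $(s',t')$ formed by the tgt of the last moves on each side lies in $R$. On a non-switching move $s'\tto a s''$ by Player~1, the simulation clause provides $t'\tto a t''$ with $(s'',t'')\in R$, which $\theta_2$ plays. On the single switching move $t'\tto a t''$ by Player~1, the ready clause $\init(s')=\init(t')$ guarantees an edge $s'\tto a s'''$, which $\theta_2$ plays (after which the game ends). Against any $\theta_1\in\nrsim 1\Theta_1$ the induced paths carry identical label sequences, hence $\util(\theta_1,\theta_2)(s,t)=0$, and therefore $\disc{\nrsim 1 d}(s,t)=0$.

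For the ``only if'' direction, assume $\disc{\nrsim 1 d}(s,t)=0$. By Lemma~\ref{le:minimax} one can fix a single uniform $\theta_2\in\Theta_2$ that achieves utility $0$ against \emph{every} $\theta_1\in\nrsim 1 \Theta_1$. Define
\begin{equation*}
  R=\{(s',t')\in S\times S\mid (s',t')\text{ arises as a reachable configuration of some } \outcome(\theta_1,\theta_2)(s,t),\ \theta_1\in \nsim 1\Theta_1\subseteq \nrsim 1\Theta_1\}\,.
\end{equation*}
Clearly $(s,t)\in R$. To see $R$ is a simulation, take $(s',t')\in R$ and $s'\tto a s''$: a non-switching $\theta_1$ extending the history by this edge forces $\theta_2$ to answer with some $t'\tto a t''$ (else the utility would be positive), and $(s'',t'')\in R$ by definition. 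To verify the ready condition, fix $(s',t')\in R$ and any $t'\tto a t''$; choose $\theta_1\in \nrsim 1\Theta_1$ which plays as before until reaching $(s',t')$ and then performs its one permitted switch, posing the edge $t'\tto a t''$. Since $\util(\theta_1,\theta_2)(s,t)=0$, Player~2 must answer with some $s'\tto a s'''$; hence $\init(t')\subseteq \init(s')$. The reverse inclusion $\init(s')\subseteq\init(t')$ follows from the simulation clause already established. Thus $R$ is a ready simulation from $s$ to~$t$.

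The second bullet follows immediately from the first by noting that $\nrsimeq 1 d(s,t)=\max(v(\nrsim 1\Theta_1)(s,t),v(\nrsim 1\Theta_1)(t,s))$ is zero precisely when there is a ready simulation in each direction, i.e.\ a ready simulation equivalence. I do not foresee any real obstacle: uniformity of $\nrsim 1\Theta_1$ is needed (as in Proposition~\ref{qltbt.pr:hemi}) to justify the single-strategy reformulation via Lemma~\ref{le:minimax}, and care is needed to ensure the strategy constructed in the ``only if'' direction is itself in $\nrsim 1\Theta_1$ (it is, since it performs at most one switch and then terminates), but both are routine.
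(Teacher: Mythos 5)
Your proof is correct and follows essentially the same route the paper takes: the paper disposes of this theorem with the single remark that ``the proof is similar to the one of Theorem~\ref{qltbt.th:bisim}'', and your argument is precisely that similarity spelled out — translating ``game value $0$ under $\nrsim 1 \Theta_1$'' into the two clauses of a ready simulation, with the one permitted switch accounting for the readiness condition $\init(t')\subseteq\init(s')$. The details you flag (using the discrete distance's $\{0,\infty\}$ range to extract a single optimal $\theta_2$ from Lemma~\ref{le:minimax}, and checking the switching strategy stays in $\nrsim 1\Theta_1$) are exactly the right ones and are handled correctly.
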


The next theorem finishes our work on the right half of
Figure~\ref{qltbt.fi:spectrum}.

\begin{theorem}
  \label{qltbt.th:spectrum-2}
  For all $k, \ell\in \Natp$ with $k< \ell$ and all $s, t\in S$,
  \begin{gather*}
    \nsim k d( s, t)\le \nrsim k d( s, t)\le \nsim \ell
    d( s, t)\,, \\
    \nsimeq k d( s, t)\le \nrsimeq k d( s, t)\le \nsimeq \ell d( s,
    t)\,.
  \end{gather*}
  Additionally, $\nrsim k d$ and $\nsimeq k d$ are incomparable, and
  also $\nrsimeq k d$ and $\nsim{( k+ 1)} d$ are incomparable.  If the
  trace distance $\trace d$ is separating, then all distances above
  are topologically inequivalent.
\end{theorem}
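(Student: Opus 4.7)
The plan is to establish the theorem in three stages: first the chains of inequalities by strategy-subset arguments, then the two incomparability claims, and finally topological inequivalence via the transfer principle (Lemma~\ref{qltbt.th:transfer_princ}).

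First I would prove the inequalities by showing the strategy subset inclusions
\begin{equation*}
  \nsim k \Theta_1 \;\subseteq\; \nrsim k \Theta_1 \;\subseteq\; \nsim{(k+1)} \Theta_1 \;\subseteq\; \nsim \ell \Theta_1
\end{equation*}
for $\ell > k$, and then invoke Lemma~\ref{qltbt.le:strat-restrict}. For the first inclusion, if $\theta_1\in \nsim k \Theta_1$ and $\theta_1(\pi,\rho,m)=(e',m')$ is defined, then $m' \le k-1$; since any move satisfies $m \le m'$, this gives $m \le k-1$, so $\theta_1\in \nrsim k \Theta_1$. For the second, if $\theta_1\in \nrsim k \Theta_1$ and the move is defined, then $m \le k-1$, hence $m' \le m+1 \le k$, placing $\theta_1$ in $\nsim{(k+1)}\Theta_1$. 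The last inclusion is obvious from $k+1 \le \ell$. The equivalence variants follow by taking $\max$ of the two directional game values.

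Next I would tackle the incomparability claims by exhibiting small LTSs for which the discrete versions already diverge in the required directions, then transport the gap to the general setting. For $\nrsim k d$ versus $\nsimeq k d$: in the qualitative spectrum, ready simulation (one direction) and $k$-nested simulation equivalence are known to be mutually incomparable. Concretely, one takes the standard examples of van~Glabbeek~\cite{inbook/hpa/Glabbeek01}: a pair $(s,t)$ such that $s$ is $k$-nested simulation equivalent to $t$ but not ready-similar to $t$ (witnessed by a ``ready-set'' mismatch after some trace), and another pair where $s$ is ready-similar to $t$ but they are not even similarity-equivalent. Plugging these into Lemma~\ref{qltbt.th:transfer_princ} (applied in both directions) gives topological inequivalence, and in particular non-comparability of the numeric distances. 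The incomparability of $\nrsimeq k d$ with $\nsim{(k+1)} d$ is handled analogously, using that ready simulation equivalence at level $k$ and $(k+1)$-nested simulation are known to be incomparable in the qualitative spectrum.

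Finally, topological inequivalence of all the distances in the statement follows from exactly the same mechanism: for every pair $(d_1, d_2)$ with $d_1 \le d_2$ in the chains, van~Glabbeek's spectrum provides a pair of states $(s,t)$ with $d_1(s,t) = 0$ in the discrete interpretation but $d_2(s,t) = \infty$ there, and Lemma~\ref{qltbt.th:transfer_princ} applied to the separating $\trace d$ transports this gap to the quantitative setting.

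The main obstacle I expect is verifying carefully that the strategy subsets $\nsim k \Theta_1, \nrsim k \Theta_1$ are \emph{uniform}, so that the hemimetric properties of Proposition~\ref{qltbt.pr:hemi} and Lemma~\ref{qltbt.le:strat-restrict} actually apply; this amounts to checking that the bound on the switch counter is a property of the abstract history data that uniformity isolates, which is immediate from the definitions. A secondary, more tedious obstacle is making the incomparability examples robust: the qualitative counterexamples need to be presented so that the bisimulation game value $v(\disc{\trace d}, \cdot)$ evaluates correctly to $0$ or $\infty$, which requires care with the switch-counter bookkeeping for the ``ready'' variants where Player~1's final move is a bare challenge with no Player~2 response counted in the trace distance.
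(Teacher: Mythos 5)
Your proposal is correct and follows essentially the same route as the paper: the inequalities come from the strategy-set inclusions $\nsim k \Theta_1\subseteq \nrsim k \Theta_1\subseteq \nsim{(k+1)}\Theta_1\subseteq \nsim \ell \Theta_1$ together with Lemma~\ref{qltbt.le:strat-restrict}, and both the incomparability and topological-inequivalence claims are obtained from the corresponding discrete (qualitative) counterexamples via the transfer principle of Lemma~\ref{qltbt.th:transfer_princ}. The paper's proof is just a terse pointer to the argument of Theorem~\ref{qltbt.th:spectrum-1}; you have merely spelled out, correctly, the switch-counter bookkeeping that it leaves implicit.
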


\begin{proof}
  Like in the proof of Theorem~\ref{qltbt.th:spectrum-1}, the inequalities follow
  from strategy set inclusions and topological inequivalence from
  Lemma~\ref{qltbt.th:transfer_princ}.  The incomparability results follow
  from the corresponding results for $\disc{ \trace d}$ and
  Lemma~\ref{qltbt.th:transfer_princ}. \qed
\end{proof}

\subsection{Linear Distances}

Above we have introduced the distances in the right half of the
quantitative linear-time--branching-time spectrum in
Figure~\ref{qltbt.fi:spectrum} and shown the relations claimed in the diagram.
To develop the left half, we need the notion of \emph{blind} strategies.
For any subset $\Theta_1'\subseteq \Theta_1$ we define the set of blind
$\Theta_1'$-strategies by
\begin{multline*}
  \blind \Theta_1'=\{ \theta_1\in \Theta_1'\mid
  \forall \pi, \rho, \rho', m: \theta_1( \pi, \rho, m)= \theta_1( \pi,
  \rho', m), \\ \text{ or } \theta_{ 1}( \pi, \rho, m)=( e, m+ 1) \text{
    and } \tgt( \last( \rho))\ne \tgt( \last( \rho'))\}.
\end{multline*}
Hence in such a blind strategy, either the edge chosen by Player~1 does not
depend on the choices of Player~2, or the switch counter is increased, in
which case the Player-1 choice only depends on the target of the last choice
of Player~2 (note that this dependency is necessary if Player~1 wants to
switch paths).

Now we can define, for $s, t\in S$ and $k\in \Natp$,
\begin{itemize}
\item the \emph{$\infty$-nested trace equivalence distance}: $\ntraceeq
  \infty d( s, t)= v( \blind \Theta_1)( s, t)$,
\item the \emph{$k$-nested trace distance}: $\ntrace k d( s, t)=
  v( \nsim k{ \blind \Theta_1})( s, t)$,
\item the \emph{$k$-nested trace equivalence distance}:\\\mbox{} \hfill
  $\ntraceeq k d( s, t)= \max( v( \nsim k{ \blind \Theta_1})( s,
  t), v( \nsim k{ \blind \Theta_1})( t, s))$,
\item the \emph{$k$-nested ready distance}: $\nrtrace k d( s, t)=
  v( \nrsim k{ \blind \Theta_1})( s, t)$, and
\item the \emph{$k$-nested ready equivalence distance}:\\\mbox{} \hfill
  $\nrtraceeq k d( s, t)= \max( v( \nrsim k{ \blind \Theta_1})( s,
  t), v( \nrsim k{ \blind \Theta_1})( t, s))$.
\end{itemize}

Our approach is justified by the following lemma which shows that the
($1$-nested) trace distance from $s$ to $t$ is precisely the Hausdorff
distance between the sets of traces available from $s$ and $t$,
respectively.

\begin{lemma}
  \label{qltbt.le:trace_is_trace}
  For $s, t\in S$, $\ntrace 1 d( s, t)= \sup_{ \sigma\in \tracesfrom s} \inf_{
    \tau\in \tracesfrom t} \trace d( \sigma, \tau)$.
\end{lemma}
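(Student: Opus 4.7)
The plan is to show the two inequalities separately, exploiting the fact that in the $1$-nested trace game Player~1 plays a blind, non-switching strategy, which amounts to fixing a trace $\sigma\in\tracesfrom s$, and Player~2's optimal response amounts to choosing a trace $\tau\in\tracesfrom t$ that minimizes $\trace d(\sigma,\tau)$.

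First, the upper bound $\ntrace 1 d(s,t)\le \sup_\sigma \inf_\tau \trace d(\sigma,\tau)$. Let $\theta_1\in \nsim 1{\blind \Theta_1}$ be arbitrary. Because $\theta_1$ is non-switching, every configuration reached has Player~1 prolonging~$\bar\pi$ (the path from~$s$), and because $\theta_1$ is blind, the edges chosen by $\theta_1$ at $\bar\pi$ do not depend on Player~2's moves. Thus $\theta_1$ determines a fixed trace $\sigma_{\theta_1}\in\tracesfrom s$ (its length possibly finite if $\theta_1$ becomes undefined). Fix $\varepsilon>0$, pick $\tau\in\tracesfrom t$ with $\trace d(\sigma_{\theta_1},\tau)\le \inf_{\tau'\in\tracesfrom t}\trace d(\sigma_{\theta_1},\tau')+\varepsilon$, and invoke Lemma~\ref{qltbt.le:construct_paths} to produce a strategy~$\theta_2\in\Theta_2$ whose answers along $\bar\rho$ realize exactly $\tau$ (or a suitable prefix of it). Then $\util(\theta_1,\theta_2)(s,t)=\trace d(\sigma_{\theta_1},\tau)$, and taking $\inf$ over $\theta_2$ and then $\sup$ over~$\theta_1$ yields the desired bound; the case in which $\theta_2$ runs out of matching moves is harmless because the trace distance is then $\infty$, consistent with the $\inf$ over the (empty or too-short) matching traces.

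Second, the lower bound $\ntrace 1 d(s,t)\ge \sup_\sigma \inf_\tau \trace d(\sigma,\tau)$. Given any $\sigma\in\tracesfrom s$, define the blind, non-switching strategy $\theta_1^\sigma$ that at configuration $(\pi,\rho,0)$ with $\len(\pi)=j$ plays the $(j+1)$-st edge of a fixed path realizing $\sigma$ (undefined when $\sigma$ is exhausted). This is clearly uniform, blind and in $\nsim 1{\blind \Theta_1}$. For any $\theta_2\in\Theta_2$, the resulting Player-2 trace $\tau_{\theta_2}\in\tracesfrom t$ satisfies $\util(\theta_1^\sigma,\theta_2)(s,t)=\trace d(\sigma,\tau_{\theta_2})\ge \inf_{\tau\in\tracesfrom t}\trace d(\sigma,\tau)$; note that if $\theta_2$ becomes undefined first, then $\util=\infty$, and $\inf_\tau\trace d(\sigma,\tau)=\infty$ as well since no $\tau\in\tracesfrom t$ has the requisite length. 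Hence $\inf_{\theta_2}\util(\theta_1^\sigma,\theta_2)(s,t)\ge\inf_\tau\trace d(\sigma,\tau)$, and taking $\sup$ over $\sigma$ gives the bound.

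The delicate point, and likely the only nontrivial step, is the first direction: extracting a well-defined trace $\sigma_{\theta_1}$ from a blind non-switching $\theta_1$ and then producing a $\theta_2$ that actually attains (up to~$\varepsilon$) the infimum over $\tracesfrom t$. Blindness ensures the extracted trace is independent of Player~2, and Lemma~\ref{qltbt.le:construct_paths} provides the matching Player-2 strategy; once these are in place, the estimates above combine to give the equality.
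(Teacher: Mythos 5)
Your proposal is correct and follows essentially the same route as the paper's proof: both directions rest on the observation that blind non-switching Player-1 strategies correspond exactly to traces in $\tracesfrom s$ and that, against such a strategy, Player-2 strategies correspond to traces in $\tracesfrom t$, with Lemma~\ref{qltbt.le:construct_paths} supplying both correspondences. The paper states this as a single two-way correspondence rather than as two separate inequalities, but the content is identical; your explicit treatment of the case where one player runs out of moves is a small point the paper leaves implicit.
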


\begin{proof}
  We have $\ntrace 1 d( s, t)= v( \blind \Theta_1^0)( s, t)$, with
  $\blind \Theta_1^0=\{ \theta_1\in \Theta_1^0\mid \forall \pi, \rho,
  \rho', m: \theta_1( \pi, \rho, m)= \theta_1( \pi, \rho', m)\}$.
  Hence, and as strategies in $\Theta_1^0$ are non-switching, every
  strategy $\theta_1\in \blind \Theta_1^0$ gives rise to precisely one
  trace $\sigma= \sigma( \theta_1)\in \tracesfrom s$ independently of Player-2
  strategy $\theta_2\in \Theta_2$.  Conversely, by
  Lemma~\ref{qltbt.le:construct_paths} (noticing that indeed, we have
  constructed a blind Player-1 strategy in the proof of that lemma),
  every trace $\sigma\in \tracesfrom s$ is generated by a strategy
  $\theta_1\in \blind \Theta_1^0$ with $\sigma= \sigma( \theta_1)$.

  We can finish the proof by showing that for all $\theta_1\in \blind
  \Theta_1^0$,
  \begin{equation*}
    \inf_{ \theta_2\in \Theta_2} \trace d( \sigma( \theta_1), \tr{ \bar
      \rho( \theta_1, \theta_2)})= \inf_{ \tau\in \tracesfrom t} \trace d(
    \sigma( \theta_1), \tau)\,.
  \end{equation*}
  But again using Lemma~\ref{qltbt.le:construct_paths}, we see that
  any $\tau\in \tracesfrom t$ is generated by a strategy
  $\theta_2\in \Theta_2$, hence this is clear. \qed
\end{proof}

Using the discrete trace distance, we recover the following standard
relations~\cite{inbook/hpa/Glabbeek01}.  The theorem follows by
Lemma~\ref{qltbt.le:trace_is_trace} and arguments similar to the ones used in
the proofs of the corresponding theorems in the preceding
section.  We refer
to~\cite{DBLP:conf/focs/RoundsB81,DBLP:journals/jacm/HennessyM85} for
definitions and discussion of possible-futures inclusion and
equivalence.

\begin{theorem}
  For $\trace d= \disc{ \trace d}$ the discrete trace
  distance and $s, t\in S$ we have
  \begin{itemize}
  \item $\disc{ \ntrace 1 d}( s, t)= 0$ iff there is a
    trace inclusion from $s$ to $t$,
  \item $\disc{ \ntraceeq 1 d}( s, t)= 0$ iff $s$ and $t$
    are trace equivalent,
  \item $\disc{ \ntrace 2 d}( s, t)= 0$ iff there is a
    possible-futures inclusion from $s$ to $t$,
  \item $\disc{ \ntraceeq 2 d}( s, t)= 0$ iff $s$ and $t$
    are possible-futures equivalent,
  \item $\disc{ \nrtrace 1 d}( s, t)= 0$ iff there is a
    readiness inclusion from $s$ to $t$,
  \item $\disc{ \nrtraceeq 1 d}( s, t)= 0$ iff $s$ and $t$
    are ready equivalent.
  \end{itemize}
\end{theorem}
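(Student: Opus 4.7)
The plan is to reduce each of the six biconditionals to a game-theoretic analysis of the appropriate strategy class, in the same spirit as the proofs of Theorems~\ref{qltbt.th:bisim} and~\ref{qltbt.th:nestedsim}. First I would observe that the three equivalence items (2, 4, 6) follow immediately from the corresponding inclusion items (1, 3, 5): by definition, each of $\ntraceeq 1 d$, $\ntraceeq 2 d$ and $\nrtraceeq 1 d$ is the symmetrization of its asymmetric counterpart, so $\disc{\ntraceeq k d}(s,t)=0$ iff $\disc{\ntrace k d}(s,t)=0$ and $\disc{\ntrace k d}(t,s)=0$, which under the game-theoretic characterization of the asymmetric items gives precisely the matching two-sided qualitative relation. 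Hence it suffices to establish items~1, 3, and~5.

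For item~1, Lemma~\ref{qltbt.le:trace_is_trace} gives
\begin{equation*}
  \disc{\ntrace 1 d}(s,t) = \adjustlimits \sup_{\sigma\in\tracesfrom s} \inf_{\tau\in\tracesfrom t} \disc{\trace d}(\sigma,\tau).
\end{equation*}
Since $\disc{\trace d}$ takes values only in $\{0,\infty\}$, with $0$ occurring exactly when $\sigma=\tau$, this sup-inf is $0$ iff every $\sigma\in\tracesfrom s$ admits a $\tau\in\tracesfrom t$ with $\sigma=\tau$, i.e.\ $\tracesfrom s\subseteq\tracesfrom t$. This is trace inclusion from $s$ to $t$.

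For items~3 and~5 I would unfold the respective game. Because the utility takes values in $\{0,\infty\}$, game value $0$ is equivalent to the existence, for every admissible Player-1 strategy, of a Player-2 response with utility $0$, i.e.\ one that forces the two reconstructed traces $\tr{\bar\pi}$ and $\tr{\bar\rho}$ to coincide. For item~3, a blind $2$-nested Player-1 strategy $\theta_1\in\nsim 2 \blind \Theta_1$ is parameterized by a pre-switch path $s\xrightarrow{w} s_k$ in $\tracesfrom s$ together with, for each possible Player-2 endpoint $t_k$ reached by $t\xrightarrow{w} t_k$, a post-switch trace $v(t_k)\in\tracesfrom{t_k}$ played blindly from $t_k$. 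Player 2 can respond with utility $0$ iff she can choose $t_k$ such that both $t\xrightarrow{w} t_k$ holds and $v(t_k)\in\tracesfrom{s_k}$. Quantifying universally over $\theta_1$ (hence over $w,s_k$ and the function $v$), value $0$ is equivalent to: for every $s\xrightarrow{w} s_k$ there exists $t\xrightarrow{w} t_k$ with $\tracesfrom{t_k}\subseteq\tracesfrom{s_k}$, which is precisely $2$-nested (possible-futures) trace inclusion from $s$ to $t$. Item~5 is an almost identical analysis: an $\nrsim 1$ blind strategy consists of a pre-switch trace followed by exactly one post-switch challenge transition, and game value $0$ translates into: for every $s\xrightarrow{w} s_k$ there is $t\xrightarrow{w} t_k$ such that every enabled label at $t_k$ is also enabled at $s_k$, i.e.\ readiness inclusion.

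The only non-routine step is the Skolemization in item~3: one must justify that \emph{for every Player-1 choice of $v$, some $t_k$ satisfies $v(t_k)\in\tracesfrom{s_k}$} is equivalent to \emph{there exists $t_k$ with $\tracesfrom{t_k}\subseteq\tracesfrom{s_k}$}. The contrapositive direction uses the axiom of choice: if for every $t_k$ one has $\tracesfrom{t_k}\not\subseteq\tracesfrom{s_k}$, then one may pick, for each $t_k$, a witness $v(t_k)\in\tracesfrom{t_k}\setminus\tracesfrom{s_k}$, yielding a Player-1 strategy against which no Player-2 response achieves utility $0$. Once this observation is in place, both game characterizations collapse to the qualitative definitions exactly as in the proof of Theorem~\ref{qltbt.th:nestedsim}.
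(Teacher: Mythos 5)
Your proposal is correct and follows essentially the same route as the paper, which proves this theorem in one line by invoking Lemma~\ref{qltbt.le:trace_is_trace} for the $1$-nested case and ``arguments similar to'' the game-unfolding proofs of Theorems~\ref{qltbt.th:bisim} and~\ref{qltbt.th:nestedsim} for the rest, with the equivalence items following from the symmetrized definitions. You in fact supply more detail than the paper does, in particular the Skolemization step identifying ``for every blind continuation $v$ some $t_k$ works'' with ``some $t_k$ has $\tracesfrom{t_k}\subseteq\tracesfrom{s_k}$.''
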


The following theorem entails all relations in the left side of
Figure~\ref{qltbt.fi:spectrum}; the right-to-left arrows follow from
the strategy set inclusions $\blind \Theta_1'\subseteq \Theta_1'$ for
any $\Theta_1'\subseteq \Theta_1$ and
Lemma~\ref{qltbt.le:strat-restrict}.  As with
Theorems~\ref{qltbt.th:spectrum-1} and~\ref{qltbt.th:spectrum-2}, the
theorem follows by strategy set inclusion,
Lemma~\ref{qltbt.th:transfer_princ}, and corresponding results for the
discrete relations.

\begin{theorem}
  For all $k, \ell\in \Natp$ with $k< \ell$ and $s, t\in S$,
  \begin{align*}
    & \ntraceeq k d( s, t)\le \ntrace \ell d( s, t)\le
    \ntraceeq \ell d( s, t)\le \ntraceeq \infty d( s, t), \\
    & \ntrace k d( s, t)\le \nrtrace k d( s, t)\le \ntrace
    \ell d( s, t), \\
    & \ntraceeq k d( s, t)\le \nrtraceeq k d( s, t)\le \ntraceeq
    \ell d( s, t).
  \end{align*}
  Additionally, $\nrtrace k d$ and $\ntraceeq k d$ are incomparable,
  and also $\nrtraceeq k d$ and $\ntrace{( k+ 1)} d$ are incomparable.
  If the trace distance $\trace d$ is separating, then all distances
  above are topologically inequivalent.
\end{theorem}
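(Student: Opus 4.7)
The plan is to follow exactly the template of Theorems~\ref{qltbt.th:spectrum-1} and~\ref{qltbt.th:spectrum-2}: derive the pointwise inequalities from strategy-set inclusions via Lemma~\ref{qltbt.le:strat-restrict}, and obtain the topological inequivalence and incomparability results by combining the transfer principle of Lemma~\ref{qltbt.th:transfer_princ} with van~Glabbeek's classical separation examples for the discrete spectrum~\cite{inbook/hpa/Glabbeek01}. The ``horizontal'' inequalities $\ntrace \ell d\le \ntraceeq \ell d$ and $\nrtrace k d\le \nrtraceeq k d$ are immediate from the max in the definition of the equivalence distances. The ``vertical'' inequalities $\ntrace k d\le \nrtrace k d\le \ntrace \ell d$ and $\ntraceeq k d\le \nrtraceeq k d\le \ntraceeq \ell d$ come from the chain of strategy-set inclusions $\nsim k{ \blind \Theta_1}\subseteq \nrsim k{ \blind \Theta_1}\subseteq \nsim{( k+ 1)}{ \blind \Theta_1}\subseteq \nsim \ell{ \blind \Theta_1}$, where the middle inclusion is obtained by observing that a ready strategy whose last, terminating move is issued at switch-count $k-1$ may be recast as a $(k+1)$-nested ordinary strategy by padding with one formal increment of the switch counter that Player~1 never actually uses to change paths.

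The main obstacle lies in the two remaining inequalities whose left-hand sides are equivalence distances bounded above by a one-sided distance, namely $\ntraceeq k d( s, t)\le \ntrace \ell d( s, t)$ and $\ntraceeq \ell d( s, t)\le \ntraceeq \infty d( s, t)$. Here the half $v( \nsim k{ \blind \Theta_1})( s, t)\le v( \nsim \ell{ \blind \Theta_1})( s, t)$ (respectively $v( \blind \Theta_1)( s, t)$) is routine by Lemma~\ref{qltbt.le:strat-restrict}, but the reverse-direction half $v( \nsim k{ \blind \Theta_1})( t, s)\le v( \nsim \ell{ \blind \Theta_1})( s, t)$ requires embedding the game played from $(t,s)$ into a game played from $(s,t)$ by having Player~1 spend his very first move on switching paths, thereby choosing a transition out of $t$ rather than out of $s$ and effectively exchanging the roles of the two initial states. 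This embedding costs exactly one switch, which is why the hypothesis $\ell\ge k+1$ is needed. For the $\infty$-nested case the same construction establishes symmetry $v( \blind \Theta_1)( s, t)= v( \blind \Theta_1)( t, s)$, whence $\ntraceeq \ell d( s, t)\le v( \blind \Theta_1)( s, t)= \ntraceeq \infty d( s, t)$. A small but important check is that the dummy-switch construction produces a strategy which is still \emph{blind} in the sense of Section~\ref{qltbt.se:spectrum}; this holds because after the initial switch Player~1's choices depend only on the $t$-side history, exactly as required by the second clause in the definition of $\blind \Theta_1'$.

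For the topological-inequivalence statement, I assume $\trace d$ separating, pick pairs of distances $d_1, d_2$ adjacent in Figure~\ref{qltbt.fi:spectrum}, and apply Lemma~\ref{qltbt.th:transfer_princ}: it suffices to exhibit, for each such pair, two states $s, t\in S$ with $v(\disc{\trace d},\Theta_1')(s,t)=0$ and $v(\disc{\trace d},\Theta_1'')(s,t)=\infty$, where $\Theta_1'$ and $\Theta_1''$ are the strategy sets underlying $d_1$ and $d_2$. Each such counterexample is obtained directly by unfolding the classical separating systems from~\cite{inbook/hpa/Glabbeek01} (the same examples that witness strictness in the right half of the spectrum carry over to the blind setting). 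The incomparability of $\nrtrace k d$ with $\ntraceeq k d$ and of $\nrtraceeq k d$ with $\ntrace{(k+1)} d$ is treated identically, by transferring, via Lemma~\ref{qltbt.th:transfer_princ}, the two classical pairs of states witnessing both directions of non-containment in the discrete spectrum. No new ideas are required beyond those used in the branching half.
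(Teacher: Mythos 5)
Your overall route coincides with the paper's: the paper dispatches this theorem in two sentences, appealing to strategy-set inclusions together with Lemma~\ref{qltbt.le:strat-restrict}, to the transfer principle of Lemma~\ref{qltbt.th:transfer_princ}, and to the corresponding results for the discrete relations from~\cite{inbook/hpa/Glabbeek01}. Your handling of the purely inclusion-based inequalities (including the chain $\nsim k{ \blind \Theta_1}\subseteq \nrsim k{ \blind \Theta_1}\subseteq \nsim{( k+ 1)}{ \blind \Theta_1}$, which is correct), of the incomparability claims, and of topological inequivalence is the same as the paper's.

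The one place where you go beyond the paper's terse argument --- making explicit why $\ntraceeq k d( s, t)\le \ntrace \ell d( s, t)$ and $\ntraceeq \ell d( s, t)\le \ntraceeq \infty d( s, t)$ --- is where your argument has a concrete problem. The utility of the game played from $( s, t)$ is \emph{always} $\trace d( \tr{ \bar \pi}, \tr{ \bar \rho})$ with $\bar \pi$ the path rooted at $s$ and $\bar \rho$ the path rooted at $t$, irrespective of which of the two paths Player~1 is currently extending. Consequently, the strategy that spends its first move switching to the $t$-side only shows that $v( \nsim \ell{ \blind \Theta_1})( s, t)$ dominates a quantity of the form $\sup_{ \tau} \inf_{ \sigma} \trace d( \sigma, \tau)$ (with $\sigma$ ranging over traces from $s$ and $\tau$ over traces from $t$), whereas what you need to dominate is $v( \nsim k{ \blind \Theta_1})( t, s)= \sup_{ \tau} \inf_{ \sigma} \trace d( \tau, \sigma)$ --- the arguments of $\trace d$ occur in the opposite order. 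The switch exchanges who \emph{leads}, not which path sits in which argument position of the utility. For the same reason the asserted identity $v( \blind \Theta_1)( s, t)= v( \blind \Theta_1)( t, s)$ fails whenever $\trace d$ is not symmetric: take $s$ and $t$ each carrying a single self-loop labelled $a$ resp.\ $b$ and the preorder-induced distance of Section~\ref{qltbt.se:example_distances} with $a\preceq b$ but $b\not\preceq a$; then the left-hand side is $0$ while the right-hand side is $\infty$. Your construction is sound exactly when $\trace d$ is symmetric, but the chapter explicitly admits asymmetric trace distances, so as written this step does not go through. (The paper's own proof leaves this step entirely implicit, so you were right to single it out; but the first-move dummy switch is not a valid way to close it in the asymmetric case.) Your remaining check --- that the embedded strategy is still blind --- is fine.
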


\section{Recursive Characterizations}
\label{qltbt.se:recurse}

We now turn our attention to an important special case in which the
given trace distance has a specific recursive characterization; we show
that in this case, all distances in the spectrum can be characterized as
least fixed points.  We will see in Section~\ref{qltbt.se:examples_rec} that
this can be applied to all examples of trace distances mentioned in
Section~\ref{qltbt.se:example_distances}.

Note that all theorems require the LTS in question to be finitely
branching; this is a standard assumption which goes back
to~\cite{DBLP:conf/banff/Stirling95}.  In most cases it may be relaxed
to \emph{compact branching} in the sense
of~\cite{journals/anyas/Breugel96}, but to keep things simple, we do
not do this here.

\subsection{Fixed-Point Characterizations}
\label{qltbt.se:fixedp}

Let $L$ be a complete lattice with order $\sqsubseteq$ and bottom and
top elements $\bot$, $\top$.  Let
$f: \KK^\infty\times \KK^\infty\to L$,
$g: L\to \Realnn\cup\{ \infty\}$ and $F: \KK\times \KK\times L\to L$
such that $\trace d= g\circ f$, $g$ is monotone,
$F( x, y, \cdot): L\to L$ is monotone for all $x, y\in \KK$, and
\begin{equation}
  \label{qltbt.eq:trace_dist_rec}
  f( \sigma, \tau)=
  \begin{cases}
    F( \sigma_0, \tau_0, f( \sigma^1, \tau^1)) &\text{if } \sigma, \tau\ne
    \emptyseq, \\
    \top &\text{if } \sigma= \emptyseq, \tau\ne \emptyseq \text{ or }
    \sigma\ne \emptyseq, \tau= \emptyseq, \\
    \bot &\text{if } \sigma= \tau= \emptyseq
  \end{cases}
\end{equation}
for all $\sigma, \tau\in \KK^\infty$.

We hence assume that $\trace d$ has a recursive characterization
(using $F$) on top of a complete (but otherwise arbitrary) lattice $L$
which we introduce between $\KK^\infty$ and $\Realnn\cup\{ \infty\}$
to serve as a \emph{memory}.  Below we will work with different
endofunctions $I$ on the set of mappings
$( \Natp\cup\{ \infty\})\times\{ 1, 2\}\to L^{ S\times S}$ which are
parametrized by the number $m$ of switches in $\Natp\cup\{ \infty\}$
which Player~1 has left, and a value $p\in\{ 1, 2\}$ which keeps track
of whether Player~1 currently is building the left or the right path.

\begin{theorem}
  \label{qltbt.th:iter-1}
  The endofunction $I$ on $( \Natp\cup\{ \infty\})\times\{ 1, 2\}\to L^{
    S\times S}$ defined by
  \begin{align*}
    I( h_{ m, p})( s, t)=
    \begin{cases}
      \max
      \begin{cases}
        \adjustlimits \sup_{ s\tto{ x} s'} \inf_{ t\tto{ y} t'} F( x, y, h_{
        m, 1}( s', t')) \\
        \adjustlimits \sup_{ t\tto{ y} t'} \inf_{ s\tto{ x} s'} F( x, y, h_{
        m- 1, 2}( s', t'))
      \end{cases}
      &\text{if } m\ge 2, p= 1 \\
      \adjustlimits \sup_{ s\tto{ x} s'} \inf_{ t\tto{ y} t'} F( x, y, h_{
      m, 1}(s', t'))
      &\text{if } m= 1, p= 1 \\
      \max
      \begin{cases}
        \adjustlimits \sup_{ t\tto{ y} t'} \inf_{ s\tto{ x} s'} F( x, y, h_{
        m, 2}( s', t')) \\
        \adjustlimits \sup_{ s\tto{ x} s'} \inf_{ t\tto{ y} t'} F( x, y, h_{
        m- 1, 1}( s', t'))
      \end{cases}
      &\text{if } m\ge 2, p= 2 \\
      \adjustlimits \sup_{ t\tto{ y} t'} \inf_{st\tto{ x} s'} F( x, y, h_{
      m, 2}( s', t')) &\text{if } m= 1, p= 2
    \end{cases}
  \end{align*}
  has a least fixed point $h^*:( \Natp\cup\{ \infty\})\times\{ 1, 2\}\to
  L^{ S\times S}$, and if the LTS $( S, T)$ is finitely branching, then
  $\nsim k d= g\circ h^*_{ k, 1}$, $\nsimeq k d= g\circ \max( h^*_{ k,
    1}, h^*_{ k, 2})$ for all $k\in \Natp\cup\{ \infty\}$.
\end{theorem}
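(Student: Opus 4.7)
The plan falls naturally into two parts: establishing the least fixed point, and matching it with the game values defined in Section~\ref{qltbt.se:spectrum}.

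\textbf{Existence of the least fixed point.} The set $(\Natp\cup\{\infty\})\times\{1,2\}\to L^{S\times S}$, pointwise ordered by $\sqsubseteq$, is a complete lattice because $L$ is. I would verify monotonicity of $I$ by direct inspection: each $F(x,y,\cdot)$ is monotone by assumption, and the set-valued operators $\sup_{s\tto x s'}$, $\inf_{t\tto y t'}$, and the outer $\max$ preserve monotonicity. Knaster--Tarski then yields a least fixed point $h^*$. No finite-branching assumption is needed for this part.

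\textbf{Identifying $h^*_{m,1}$ with the game value.} The recursive equation mirrors the dynamics of the $k$-nested simulation game: the index $m$ records Player~1's remaining switch budget, and $p$ records which path Player~1 is currently extending. I would establish the identity $\nsim k d(s,t) = g(h^*_{k,1}(s,t))$ in the two usual directions. For the upper bound $\nsim k d \le g\circ h^*_{k,1}$, I construct, given any $\varepsilon > 0$, a Player-2 strategy $\theta_2$ that at each configuration $(\pi,\rho,m)$ with $p$ determined by Player~1's last choice picks a transition approximately realising the inner infimum in $I(h^*)_{m,p}$; this is possible by finite branching, which turns the infimum into a minimum. Then a structural induction on the play, using the recursive characterization \eqref{qltbt.eq:trace_dist_rec} to unfold $f$ along the generated traces and monotonicity of $g$, shows that the utility against any $\theta_1\in \nsim k \Theta_1$ is bounded by $g(h^*_{k,1}(s,t))+\varepsilon$. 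For the lower bound, I construct for each $\varepsilon>0$ a Player-1 strategy $\theta_1\in\nsim k\Theta_1$ that at every configuration selects a transition approximately realising the outer supremum; the switching clause in $I$ matches exactly the increment of the switch counter in $\theta_1$, and the case $m=1$ forbids further switches, consistent with the definition of $\nsim k \Theta_1$. A symmetric inductive argument gives $\util(\theta_1,\theta_2)(s,t)\ge g(h^*_{k,1}(s,t))-\varepsilon$ against any $\theta_2$. The uniformity requirement on strategies is easy to satisfy because the chosen moves depend only on the currently relevant endpoints, so the minimax Lemma~\ref{le:minimax} applies.

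\textbf{The case $p=2$ and the equivalence distance.} The component $h^*_{m,2}$ solves exactly the same fixed-point equations as $h^*_{m,1}$ with the roles of $s$ and $t$ interchanged; one reads off that $h^*_{m,2}(s,t) = h^*_{m,1}(t,s)$, either by direct inspection of $I$ or by a uniqueness-up-to-swap argument for the least fixed point. Combined with the equality for $p=1$, this yields $\nsimeq k d(s,t) = \max(\nsim k d(s,t),\nsim k d(t,s)) = g\circ \max(h^*_{k,1},h^*_{k,2})(s,t)$.

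\textbf{The case $k=\infty$.} This is where I expect the main obstacle. Here $\nsim\infty\Theta_1=\Theta_1$, and $h^*_{\infty,p}$ must coincide with the infinite-switch game value. The plan is to show that $h^*_{\infty,p} = \bigsqcup_{k<\infty} h^*_{k,p}$, which follows because $I$ is $\omega$-continuous on the $m$-coordinate under finite branching (finite suprema and infima commute with directed joins in $L$), and then to transfer both inequalities from the finite-$k$ case by a straightforward limit argument: any Player-1 strategy with unboundedly many switches in a finite prefix of the play still has only finitely many switches used after finitely many rounds, so the approximating $h^*_{k,1}$-strategies on longer and longer prefixes, combined with the recursive unfolding of $f$, produce the required bound on $\trace d$ of the limit paths. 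Monotonicity of $g$ and the form of \eqref{qltbt.eq:trace_dist_rec} guarantee that limits on the $L$-side pass to $\Realnn\cup\{\infty\}$ as desired.
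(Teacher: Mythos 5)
Your existence argument (Knaster--Tarski, monotonicity of $F$ in the third coordinate) is exactly the paper's, and your upper bound $\nsim k d\le g\circ h^*_{ k, 1}$ --- extracting a Player-2 strategy from the attained infima under finite branching --- is essentially the paper's proof that the game value is below \emph{every} fixed point of $I$. The genuine gap is in your lower bound. You propose to build an $\epsilon$-optimal Player-1 strategy from the equation $h^*= I( h^*)$ and then run a ``symmetric inductive argument'' along the play. But the play is infinite, so this is a coinductive argument, and coinduction establishes bounds against \emph{greatest} fixed points; for the \emph{least} fixed point $h^*$ the inequality $h^*\sqsubseteq w$ (where $w$ is the game value pulled back along $g$, \ie computed with $f$) must come from showing that $w$ is itself a (pre-)fixed point of $I$, which is what the paper does: it unfolds one round of the game, splits $\nsim k \Theta_1$ into the strategies that do and do not switch at the first move, uses uniformity to exchange the relevant $\sup$ and $\inf$, and concludes $I( w)= w$, whence $h^*\sqsubseteq w$ by leastness. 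Your route additionally leaves the accumulation of the per-round $\epsilon$-losses over an infinite play uncontrolled: propagating $F( x, y, \alpha- \epsilon)\ge F( x, y, \alpha)- \delta( \epsilon)$ through infinitely many nested applications of $F$ requires continuity assumptions on $F$ and on the relation between $f$ on infinite traces and its finite unfoldings that are not among the hypotheses (only monotonicity of $F( x, y, \cdot)$ is assumed).

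The separate treatment of $k= \infty$ is also problematic and, as the paper shows, unnecessary. The identity $h^*_{ \infty, p}= \bigsqcup_{ k< \infty} h^*_{ k, p}$ needs $I$ to commute with directed suprema, hence continuity of $F$, again not assumed; and on the game side, the claim that the unbounded-switch value is the supremum of the bounded-switch values does not follow from the observation that any finite prefix uses finitely many switches --- a single strategy in $\Theta_1$ may switch unboundedly often over the whole play, and approximating it by strategies in $\nsim k \Theta_1$ is precisely what would need proof. The paper's argument applies verbatim to $m= \infty$, since the case distinction in $I$ is only $m\ge 2$ versus $m= 1$, $\infty- 1= \infty$, and the first-move split of the strategy set is insensitive to whether the switch budget is finite.
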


Hence $I$ iterates the function $h$ over the branching structure of $(
S, T)$, computing all nested branching distances at the same time.  Note
the specialization of this to simulation and bisimulation distance,
where we have the following fixed-point equations, using $h_{ 1, 1}^*=
\nsim 1 h$ and $h_{ \infty, 1}^*= \bisim h$:
\begin{align*}
  \nsim 1 h( s, t) &= \adjustlimits \sup_{ s\tto{ x} s'} \inf_{ t\tto{ y} t'} F( x,
  y, \nsim 1 h( s', t')) 
  \\
  \bisim h( s, t) &= \max
  \begin{cases}
    \adjustlimits \sup_{ s\tto{ x} s'} \inf_{ t\tto{ y} t'} F( x,
    y, \bisim h( s', t')) \\
    \adjustlimits \sup_{ t\tto{ y} t'} \inf_{ s\tto{ x} s'} F( x,
    y, \bisim h( s', t'))
  \end{cases}
\end{align*}

\begin{proof}
  The lattice of mappings $( \Natp\cup\{ \infty\})\times\{ 1, 2\}\to L^{
    S\times S}$ with the point-wise partial order is complete, and $I$
  is monotone because $F$ is, so by Tarski's fixed-point theorem, $I$
  has indeed a least fixed point $h^*$.  To show that $\nsim k d= g\circ
  h^*_{ k, 1}$ for all $k$, we pull back $\nsim k d$ along $g$: Define
  $w: ( \Natp\cup\{ \infty\})\times\{ 1, 2\}\to L^{ S\times S}$ by
  \begin{align*}
    w_{ k, 1}( s, t) &= \adjustlimits \sup_{ \theta_1\in \nsim k \Theta_1}
    \inf_{ \theta_2\in \Theta_2} f( \tr{ \outcome( \theta_1, \theta_2)( s,
      t)}) \\
    w_{ k, 2}( s, t) &= \adjustlimits \sup_{ \theta_1\in \nsim k \Theta_1}
    \inf_{ \theta_2\in \Theta_2} f( \tr{ \outcome( \theta_1, \theta_2)( t,
      s)})
  \end{align*}
  then $\nsim k d= g\circ f( k, 1)$ for all $k$ by monotonicity of $g$.
  We will be done once we can show that $w= h^*$.

  We first show that $w$ is a fixed point for $I$.  Let $s, t\in S$, then
  (assuming $k\ge 2$)
  \begin{align*}
    I( w_{ k, 1}&)( s, t) \\[-2ex]
    &= \max
    \begin{cases}
      \adjustlimits \sup_{ s\tto{ x} s'} \inf_{ t\tto{ y} t'} F( x, y, w_{
        k, 1}( s', t')) \\
      \adjustlimits \sup_{ t\tto{ y} t'} \inf_{ s\tto{ x} s'} F( x, y, w_{
        k- 1, 2}( s', t'))
    \end{cases} \\
    &= \max
    \begin{cases}
      \adjustlimits \sup_{ s\tto{ x} s'} \inf_{ t\tto{ y} t'} F( x, y,
      \adjustlimits \sup_{ \theta_1\in \nsim k \Theta_1} \inf_{ \theta_2\in
        \Theta_2} f( \tr{ \outcome( \theta_1, \theta_2)( s',
        t')})) \\
      \adjustlimits \sup_{ t\tto{ y} t'} \inf_{ s\tto{ x} s'} F( x, y,
      \adjustlimits \sup_{ \theta_1\in \nsim{( k- 1)} \Theta_1} \inf_{
        \theta_2\in \Theta_2} f( \tr{ \outcome( \theta_1, \theta_2)( t',
        s')}))
    \end{cases} \\
    &= \max
    \begin{cases}
      \adjustlimits \sup_{ s\tto{ x} s'} \inf_{ t\tto{ y} t'} \adjustlimits
      \sup_{ \theta_1\in \nsim k \Theta_1} \inf_{ \theta_2\in \Theta_2} F(
      x, y, f( \tr{ \outcome( \theta_1, \theta_2)( s',
        t')})) \\
      \adjustlimits \sup_{ t\tto{ y} t'} \inf_{ s\tto{ x} s'} \adjustlimits
      \sup_{ \theta_1\in \nsim{( k- 1)} \Theta_1} \inf_{ \theta_2\in
        \Theta_2} F( x, y, f( \tr{ \outcome( \theta_1, \theta_2)( t',
        s')}))
    \end{cases} \\
    &= \max
    \begin{cases}
      \begin{aligned}
        & \adjustlimits \sup_{ s\tto{ x} s'} \inf_{ t\tto{ y} t'}
        \adjustlimits \sup_{ \theta_1\in \nsim k \Theta_1} \inf_{
          \theta_2\in \Theta_2} \\
        &\qquad\qquad f( x\cdot \tr{ \outcome[ 1]( \theta_1, \theta_2)( s',
          t')}, y\cdot \tr{ \outcome[ 2]( \theta_1, \theta_2)( s', t')})
      \end{aligned}
      \\
      \begin{aligned}
        &\adjustlimits \sup_{ t\tto{ y} t'} \inf_{ s\tto{ x} s'}
        \adjustlimits \sup_{ \theta_1\in \nsim{( k- 1)} \Theta_1} \inf_{
          \theta_2\in \Theta_2} \\
        &\qquad\qquad f( x\cdot \tr{ \outcome[ 1]( \theta_1, \theta_2)( t',
          s')}, y\cdot \tr{ \outcome[ 2]( \theta_1, \theta_2)(
          t', s')})\,,
      \end{aligned}
    \end{cases}
  \end{align*}
  the next-to-last step by monotonicity of $F$.  By uniformity, the
  choices of $t\tto{ y} t'$ and $\theta_1\in \nsim k \Theta_1$ do not
  depend on each other, so the corresponding inf and sup can be
  exchanged, whence
  \begin{align*}
    I( w_{ k, 1})( s, t) &= \max
    \begin{cases}
      \begin{aligned}
        & \displaystyle \multiadjustlimits{ \sup_{ s\tto{ x} s'}
          \sup_{ \theta_1\in \nsim k
            \Theta_1} \inf_{ t\tto{ y} t'} \inf_{ \theta_2\in \Theta_2}} \\
        &\qquad f( x\cdot \tr{ \outcome[ 1]( \theta_1,
          \theta_2)( s', t')}, y\cdot \tr{ \outcome[ 2]( \theta_1,
          \theta_2)( s', t')})
      \end{aligned}
      \\
      \begin{aligned}
        &\displaystyle \multiadjustlimits{ \sup_{ t\tto{ y} t'} \sup_{
            \theta_1\in \nsim{( k-
              1)} \Theta_1} \inf_{ s\tto{ x} s'} \inf_{ \theta_2\in \Theta_2}} \\
        &\qquad f( x\cdot \tr{ \outcome[ 1]( \theta_1,
          \theta_2)( t', s')}, y\cdot \tr{ \outcome[ 2]( \theta_1,
          \theta_2)( t', s')})
      \end{aligned}
    \end{cases} \\
    &= \max
    \begin{cases}
      \adjustlimits \sup_{ \theta_1\in \nsim k \Theta_{ 1, \textup{ns}}}
      \inf_{ \theta_2\in \Theta_2} f( \tr{ \outcome( \theta_1, \theta_2)( s,
        t)}) \\
      \adjustlimits \sup_{ \theta_1\in \nsim k \Theta_{ 1, \textup{s}}}
      \inf_{ \theta_2\in \Theta_2} f( \tr{ \outcome( \theta_1, \theta_2)( s,
        t)})
    \end{cases} \\
    &= w_{ k, 1}( s, t).
  \end{align*}
  In the last max expression,
  $\nsim k \Theta_{ 1, \textup{ns}}\subseteq \nsim k \Theta_1$ is the
  subset of Player-1 strategies $\theta_1$ which do not switch from
  the configuration $( s, t, 0)$, \ie~for which
  $\src( \theta_{ 1, 1}( s, t, 0))= s$, and
  $\nsim k \Theta_{ 1, \textup{s}}= \nsim k \Theta_1\setminus \nsim k
  \Theta_{ 1, \textup{ns}}$ consists of the strategies which do switch
  from $( s, t, 0)$.  The other cases in the definition of
  $I$---$I( w_{ 1, 1})$, $I( w_{ 1, 2})$, and $I( w_{ k, 2})$ for
  $k\ge 2$---can be shown similarly, and we can conclude that
  $I( w_{ k, p})= w_{ k, p}$ for all $k\in \Natp\cup\{ \infty\}$,
  $p\in\{ 1, 2\}$.

  To show that $w$ is the least fixed point for $I$, let $\bar h:(
  \Natp\cup\{ \infty\})\times\{ 1, 2\}\to L^{ S\times S}$ be such that
  $I( \bar h)= \bar h$.  We prove that $w\le \bar h$, and again we show
  only the case $w_{ k, 1}\le \bar h_{ k, 1}$ for $k\ge 2$.  Note first
  that as the LTS $( S, T)$ is finitely branching, we can use the
  equation for $I( \bar h_{ k, 1})( s, t)$ to conclude that for all $s,
  t\in S$,
  \begin{align}
    & \text{for any $s\tto{ x} s'$ there is $t\tto{ y} t'$ such that $F( x,
      y, \bar h_{ k, 1}( s', t'))\le I( \bar h_{ k, 1})( s,
      t)$,} \label{qltbt.eq:strat-exist:1} \\
    & \text{for any $t\tto{ y} t'$ there is $s\tto{ x} s'$ such that $F( x,
      y, \bar h_{ k- 1, 2}( s', t'))\le I( \bar h_{ k, 1})( s,
      t)$.} \label{qltbt.eq:strat-exist:2}
  \end{align}

  Now let $\theta_1\in \nsim k \Theta_1$; the proof will be finished
  once we can find $\theta_2\in \Theta_2$ for which
  $f( \tr{ \outcome( \theta_1, \theta_2)( s, t)})\le \bar h_{ k, 1}(
  s, t)$.  Let $( \pi\cdot e, \rho, m)\in \Conf_2$ and write
  $s= \tgt( \last( \pi))$, $t= \tgt( \last( \rho))$.  Assume first
  that $e=( s, x, s')$, let $t= \tgt( \last( \rho))$ and
  $e=( t, y, t')$ an edge which satisfies the inequality
  of~\eqref{qltbt.eq:strat-exist:1}, and define
  $\theta_2( \pi\cdot e, \rho, m)=( e', m)$.  For the so-defined
  Player-2 strategy $\theta_2$ we have
  $f( \tr{ \outcome( \theta_1, \theta_2)( s, t)})\le \sup_{ s\tto{ x}
    s'} \inf_{ t\tto{ y} t'} F( x, y, \bar h_{ k, 1}( s', t'))\le I(
  \bar h_{ k, 1})( s, t)= \bar h_{ k, 1}( s, t)$ for all $s, t\in S$.
  The case $e=( t, y, t')$ is shown similarly,
  using~\eqref{qltbt.eq:strat-exist:2} instead.  \qed
\end{proof}

The fixed-point characterization for the ready simulation distances is
similar (and so is its proof, which we hence omit):

\begin{theorem}
  \label{qltbt.th:iter-2}
  The endofunction $I$ on $( \Natp\cup\{ \infty\})\times\{ 1, 2\}\to L^{
    S\times S}$ defined by
  \begin{align*}
    I( h_{ m, p})( s, t)=
    \begin{cases}
      \max
      \begin{cases}
        \adjustlimits \sup_{ s\tto{ x} s'} \inf_{ t\tto{ y} t'} F( x, y, h_{
        m, 1}( s', t')) \\
        \adjustlimits \sup_{ t\tto{ y} t'} \inf_{ s\tto{ x} s'} F( x, y, h_{
        m- 1, 2}( s', t'))
      \end{cases}
      &\text{if } m\ge 2, p= 1 \\
      \max
      \begin{cases}
        \adjustlimits \sup_{ s\tto{ x} s'} \inf_{ t\tto{ y} t'} F( x, y, h_{
        m, 1}(s', t')) \\
        \adjustlimits \sup_{ t\tto{ y} t'} \inf_{ s\tto{ x} s'} f( x, y)
      \end{cases}
      &\text{if } m= 1, p= 1 \\
      \max
      \begin{cases}
        \adjustlimits \sup_{ t\tto{ y} t'} \inf_{ s\tto{ x} s'} F( x, y, h_{
        m, 2}( s', t')) \\
        \adjustlimits \sup_{ s\tto{ x} s'} \inf_{ t\tto{ y} t'} F( x, y, h_{
        m- 1, 1}( s', t'))
      \end{cases}
      &\text{if } m\ge 2, p= 2 \\
      \max
      \begin{cases}
        \adjustlimits \sup_{ t\tto{ y} t'} \inf_{st\tto{ x} s'} F( x, y, h_{
        m, 2}( s', t')) \\
        \adjustlimits \sup_{ s\tto{ x} s'} \inf_{ t\tto{ y} t'} f( x, y)
      \end{cases}
      &\text{if } m= 1, p= 2
    \end{cases}
  \end{align*}
  has a least fixed point $h^*:( \Natp\cup\{ \infty\})\times\{ 1, 2\}\to
  L^{ S\times S}$, and if the LTS $( S, T)$ is finitely branching, then
  $\nrsim k d= g\circ h^*_{ k, 1}$, $\nrsimeq k d= g\circ \max( h^*_{ k,
    1}, h^*_{ k, 2})$ for all $k\in \Natp\cup\{ \infty\}$.
\end{theorem}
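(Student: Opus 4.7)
My plan is to follow the template of the proof of Theorem~\ref{qltbt.th:iter-1} verbatim, with modifications only for the new terminal-challenge clause that captures the ready variant. Existence of $h^*$ is immediate: the lattice of functions $(\Natp\cup\{\infty\})\times\{1,2\}\to L^{S\times S}$ ordered pointwise is complete, and $I$ is monotone because $F(x,y,\cdot)$, $\sup$, $\inf$, and $\max$ all preserve the order, so Tarski's fixed-point theorem applies. I would then define the candidate
\begin{align*}
  w_{k,1}(s,t) &= \adjustlimits\sup_{\theta_1\in \nrsim k \Theta_1}\inf_{\theta_2\in\Theta_2} f(\tr{\outcome(\theta_1,\theta_2)(s,t)})\,, \\
  w_{k,2}(s,t) &= \adjustlimits\sup_{\theta_1\in \nrsim k \Theta_1}\inf_{\theta_2\in\Theta_2} f(\tr{\outcome(\theta_1,\theta_2)(t,s)})\,,
\end{align*}
and reduce the claim to $w = h^*$ via monotonicity of $g$, since $\nrsim k d = g\circ w_{k,1}$ and $\nrsimeq k d = g\circ \max(w_{k,1}, w_{k,2})$ by definition of the game value.

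Showing $w$ is a fixed point of $I$ is then a case analysis. For $m\ge 2$ the argument is essentially the one in Theorem~\ref{qltbt.th:iter-1}: distribute $F(x,y,\cdot)$ through the inner $\sup$/$\inf$ using its monotonicity together with \eqref{qltbt.eq:trace_dist_rec}, and use uniformity of $\nrsim k \Theta_1$ to swap $\sup_{\theta_1}$ with $\inf_{t\to t'}$ and $\sup_{t\to t'}$, splitting strategies into non-switching and switching first-moves just as before. The only new situation is $m=1$: by definition of $\nrsim k \Theta_1$, a Player~1 strategy is allowed to be defined whenever the current switch count is at most $k-1$, so the final move may either continue without switching (giving the first line of the max, whose fixed-point expression uses $h_{1,1}$ again) or it may invoke the last switch (raising the counter to $k$); in the latter case, Player~2 answers the single challenge $t\tto{y}t'$ with some $s\tto{x}s'$, after which $\theta_1$ is undefined and the outcome consists of one extra step on each side. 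By \eqref{qltbt.eq:trace_dist_rec} the $L$-value of the pair of length-one traces $((x),(y))$ is $F(x,y,\bot)$, which is what I read the shorthand $f(x,y)$ in the theorem statement as denoting, and this matches the second line of the max in the $m=1$ case of $I$.

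For minimality, let $\bar h$ be another fixed point. As in Theorem~\ref{qltbt.th:iter-1}, finite branching of $(S,T)$ ensures that the inner $\inf$ in the formula for $I(\bar h)$ is witnessed by concrete edges, from which one constructs a Player~2 strategy $\theta_2$ that, for each $\theta_1\in \nrsim k \Theta_1$ and each reachable configuration, chooses a response realising the bound in the corresponding case of $I$; the uniformity of $\theta_2$ is automatic by construction. The only extra wrinkle is the $m=1$ terminal clause, where Player~2's answer to a final switch is picked so that $F(x,y,\bot)\le I(\bar h_{1,1})(s,t) = \bar h_{1,1}(s,t)$, after which no further moves occur and the outcome value agrees with $F(x,y,\bot)$. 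I expect the main obstacle to be checking that this $m=1$ clause is genuinely well-defined within the framework of \eqref{qltbt.eq:trace_dist_rec} and that the notational shorthand $f(x,y)$ matches the actual contribution of the final pair of transitions to $f(\tr{\bar\pi},\tr{\bar\rho})$; once this bookkeeping is settled, the algebraic manipulations are a direct transcription of the earlier proof, with $\nrsim k \Theta_1$ systematically replacing $\nsim k \Theta_1$.
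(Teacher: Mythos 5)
Your proposal is correct and matches the paper's intent exactly: the paper omits the proof of Theorem~\ref{qltbt.th:iter-2}, stating only that it is ``similar'' to that of Theorem~\ref{qltbt.th:iter-1}, and your transcription — Tarski for existence, the pullback $w$ of the game value along $g$, the fixed-point and minimality arguments with $\nrsim k \Theta_1$ in place of $\nsim k \Theta_1$ — is precisely that adaptation. Your reading of the shorthand $f(x,y)$ as the value $F(x,y,\bot)$ of the length-one trace pair contributed by the terminal switch-and-answer exchange is the right resolution of the only genuinely new point.
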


For the \emph{linear} distances, we extend $F$ to a function $\KK^n\times
\KK^n\times L\to L$, for $n\in \Nat$, by
\begin{equation*}
  F( \emptyseq, \emptyseq, \alpha)= \alpha, \qquad F( x\cdot \sigma, y\cdot
  \tau, \alpha)= F( x, y, F( \sigma, \tau, \alpha)).
\end{equation*}
We also extend the $\tto{ x}$ relation to finite traces so we can write
$s\tto{ \sigma} s'$ below, by letting $s\tto{ \emptyseq} s$ for all
$s\in S$ and $s\tto{ x\cdot \sigma} s'$ iff $s\tto{ x}
s''\tto{ \sigma} s'$ for some $s''\in S$.  We write $s\tto{ \sigma}$ if
there is a (finite or infinite) trace $\sigma$ from $s$.  The proofs of
the below theorems are similar to the one of Theorem~\ref{qltbt.th:iter-1}.

\begin{theorem}
  The endofunction $I$ on $( \Natp\cup\{ \infty\})\times\{ 1, 2\}\to L^{
    S\times S}$ defined by
  \begin{align*}
    I( h_{ m, p})( s, t)=
    \begin{cases}
      \max
      \begin{cases}
        \adjustlimits \sup_{ s\tto{ \sigma}\;} \inf_{ t\tto{ \tau}\;} f( \sigma,
        \tau) \\
        \adjustlimits \sup_{ s\tto{ \sigma} s'} \inf_{ t\tto{ \tau} t'} F(
        \sigma, \tau, h_{ m- 1, 1}( s', t')) \\
        \adjustlimits \sup_{ s\tto{ \sigma} s'} \inf_{ t\tto{ \tau} t'} F(
        \sigma, \tau, h_{ m- 1, 2}( s', t'))
      \end{cases}
      &\text{if } m\ge 2, p= 1 \\
      \adjustlimits \sup_{ s\tto{ \sigma}\;} \inf_{ t\tto{ \tau}\;} f( \sigma,
      \tau) &\text{if } m= 1, p= 1 \\
      \max
      \begin{cases}
        \adjustlimits \sup_{ t\tto{ \tau}\;} \inf_{s\tto{ \sigma}\;} f( \sigma,
        \tau) \\
        \adjustlimits \sup_{ t\tto{ \tau} t'} \inf_{ s\tto{ \sigma} s'} F(
        \sigma, \tau, h_{ m- 1, 2}( s', t')) \\
        \adjustlimits \sup_{ t\tto{ \tau} t'} \inf_{ s\tto{ \sigma} s'} F(
        \sigma, \tau, h_{ m- 1, 1}( s', t'))
      \end{cases}
      &\text{if } m\ge 2, p= 2 \\
      \adjustlimits \sup_{ t\tto{ \tau}\;} \inf_{s\tto{ \sigma}\;} f( \sigma,
      \tau) &\text{if } m= 1, p= 2
    \end{cases}
  \end{align*}
  has a least fixed point $h^*:( \Natp\cup\{ \infty\})\times\{ 1, 2\}\to
  L^{ S\times S}$, and if the LTS $( S, T)$ is finitely branching, then
  $\ntrace k d= g\circ h^*_{ k, 1}$, $\ntraceeq k d= g\circ \max(h^*_{
    k, 1}, h^*_{ k, 2})$ for all $k\in \Natp\cup\{ \infty\}$.
\end{theorem}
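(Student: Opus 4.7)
My plan is to mirror the proof of Theorem~\ref{qltbt.th:iter-1}, replacing general strategies by blind ones throughout. First I would verify that $I$ is a monotone endofunction on the complete lattice of maps $( \Natp\cup\{ \infty\})\times\{ 1, 2\}\to L^{ S\times S}$ ordered pointwise: monotonicity of $F$ in its third argument together with order-preservation of $\sup$ and $\inf$ make this immediate, and Tarski's fixed-point theorem then yields the least fixed point $h^*$. Second, I would pull the actual linear distances back along $g$ by setting
\begin{align*}
  w_{ k, 1}( s, t) &= \adjustlimits \sup_{ \theta_1\in \nsim k{ \blind \Theta_1}} \inf_{ \theta_2\in \Theta_2} f\bigl( \tr{ \outcome( \theta_1, \theta_2)( s, t)}\bigr), \\
  w_{ k, 2}( s, t) &= \adjustlimits \sup_{ \theta_1\in \nsim k{ \blind \Theta_1}} \inf_{ \theta_2\in \Theta_2} f\bigl( \tr{ \outcome( \theta_1, \theta_2)( t, s)}\bigr).
\end{align*}
By monotonicity of $g$ and $\trace d= g\circ f$, this gives $\ntrace k d= g\circ w_{ k, 1}$ and $\ntraceeq k d= g\circ \max( w_{ k, 1}, w_{ k, 2})$, so the theorem reduces to $w= h^*$.

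To show $I( w)= w$, I would decompose an arbitrary blind strategy $\theta_1\in \nsim k{ \blind \Theta_1}$ from $( s, t)$ by its switch pattern: either (a) Player~1 commits to an entire trace $\sigma\in \tracesfrom s$ without ever switching, contributing $f( \sigma, \tau)$ for any Player-2 response $\tau\in \tracesfrom t$; or (b) he plays a finite blind prefix $s\tto{ \sigma} s'$, increments the switch counter \emph{without} changing sides (the blindness condition forces this whenever the subsequent choice is to depend on Player~2's target), and continues blindly from $( s', t')$ in mode $( m- 1, 1)$; or (c) he plays a finite blind prefix $s\tto{ \sigma} s'$, increments the counter, and continues blindly from $( s', t')$ on the opposite side in mode $( m- 1, 2)$. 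Lemma~\ref{qltbt.le:construct_paths} ensures every such trace is realizable by a blind strategy, uniformity of $\blind \Theta_1$ allows $\sup$ and $\inf$ to commute over the independent choices of $\theta_1$ and of $t\tto{ \tau} t'$, and iterating~\eqref{qltbt.eq:trace_dist_rec} gives $f( \sigma\cdot \sigma', \tau\cdot \tau')= F( \sigma, \tau, f( \sigma', \tau'))$. These three ingredients match the three options of $I$ (for $m\ge 2$) one-to-one with the classes (a)--(c), while the degenerate cases $m= 1$ collapse to Lemma~\ref{qltbt.le:trace_is_trace}; the $p= 2$ side is symmetric.

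For leastness, given any fixed point $\bar h$ of $I$, finite branching of $( S, T)$ lets me select, for each finite $s\tto{ \sigma} s'$ that $\theta_1$ might play, a response $t\tto{ \tau} t'$ attaining (or $\varepsilon$-approximating) the relevant $\inf$ in the formula for $I( \bar h)_{ k, 1}$. Assembling these into a Player-2 strategy as in the proof of Theorem~\ref{qltbt.th:iter-1}, and absorbing $\varepsilon$ by diagonalization, produces utility at most $\bar h_{ k, 1}( s, t)$ against any $\theta_1$, so $w_{ k, 1}\le \bar h_{ k, 1}$. The main obstacle will be option (b): the ``wasted-switch'' decomposition is genuinely needed because blindness prohibits Player~1's choice from depending on Player~2's target \emph{unless} a switch is paid, so even staying on the same side Player~1 may profitably spend a switch to peek at that target. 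Showing that options (b) and (c) together exhaust all switching behavior of blind strategies---and that neither is redundant---requires a careful case analysis of the blindness predicate at each configuration immediately following a switch.
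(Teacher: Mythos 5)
Your proposal is correct and takes essentially the same route as the paper, which proves this theorem precisely by adapting the argument for Theorem~\ref{qltbt.th:iter-1} to blind strategies. Your decomposition of a blind strategy into (a) a committed full trace, (b) a blind finite prefix followed by a counter increment on the same side, and (c) a blind finite prefix followed by a genuine side switch matches the three branches of $I$ exactly, and your observation that the blindness predicate forces the \emph{decision} to switch to be independent of Player~2's history (so that (a)--(c) are exhaustive and the prefix choice remains blind) is the key point that makes the adaptation go through.
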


\begin{theorem}
  \label{qltbt.th:iter-4}
  The endofunction $I$ on $( \Natp\cup\{ \infty\})\times\{ 1, 2\}\to L^{
    S\times S}$ defined by
  \begin{align*}
    I( h_{ m, p})( s, t)=
    \begin{cases}
      \max
      \begin{cases}
        \adjustlimits \sup_{ s\tto{ \sigma}\;} \inf_{ t\tto{ \tau}\;} f(
        \sigma, \tau) \\
        \adjustlimits \sup_{ s\tto{ \sigma} s'} \inf_{ t\tto{ \tau} t'} F(
        \sigma, \tau, h_{ m- 1, 1}( s', t')) \\
        \adjustlimits \sup_{ s\tto{ \sigma} s'} \inf_{ t\tto{ \tau} t'} F(
        \sigma, \tau, h_{ m- 1, 2}( s', t'))
      \end{cases}
      &\text{if } m\ge 2, p= 1 \\
      \max
      \begin{cases}
        \adjustlimits \sup_{ s\tto{ \sigma}\;} \inf_{ t\tto{ \tau}\;} f(
        \sigma, \tau) \\
        \adjustlimits \sup_{ s\tto{ \sigma} s'} \inf_{ t\tto{ \tau} t'}
        \adjustlimits \sup_{ s'\tto{ x} s''} \inf_{ t'\tto{ y}  t''} f(
        \sigma\cdot x, \tau\cdot y) \\
        \adjustlimits \sup_{ s\tto{ \sigma} s'} \inf_{ t\tto{ \tau} t'}
        \adjustlimits \sup_{ t'\tto{ y} t''} \inf_{ s'\tto{ x}  s''} f(
        \sigma\cdot x, \tau\cdot y)
      \end{cases}
      &\text{if } m= 1, p= 1 \\
      \max
      \begin{cases}
        \adjustlimits \sup_{ t\tto{ \tau}\;} \inf_{ s\tto{ \sigma}\;} f(
        \sigma, \tau) \\
        \adjustlimits \sup_{ t\tto{ \tau} t'} \inf_{ s\tto{ \sigma} s'} F(
        \sigma, \tau, h_{ m- 1, 2}( s', t')) \\
        \adjustlimits \sup_{ t\tto{ \tau} t'} \inf_{ s\tto{ \sigma} s'} F(
        \sigma, \tau, h_{ m- 1, 1}( s', t'))
      \end{cases}
      &\text{if } m\ge 2, p= 2 \\
      \max
      \begin{cases}
        \adjustlimits \sup_{ t\tto{ \tau}\;} \inf_{s\tto{ \sigma}\;} f(
        \sigma, \tau) \\
        \adjustlimits \sup_{ t\tto{ \tau} t'} \inf_{ s\tto{ \sigma} s'}
        \adjustlimits \sup_{ t'\tto{ y} t''} \inf_{ s'\tto{ x}  s''} f(
        \sigma\cdot x, \tau\cdot y) \\
        \adjustlimits \sup_{ t\tto{ \tau} t'} \inf_{ s\tto{ \sigma} s'}
        \adjustlimits \sup_{ s'\tto{ x} s''} \inf_{ t'\tto{ y}  t''} f(
        \sigma\cdot x, \tau\cdot y)
      \end{cases}
      &\text{if } m= 1, p= 2
    \end{cases}
  \end{align*}
  has a least fixed point $h^*:( \Natp\cup\{ \infty\})\times\{ 1, 2\}\to
  L^{ S\times S}$, and if the LTS $( S, T)$ is finitely branching, then
  $\nrtrace k d= g\circ h^*_{ k, 1}$, $\nrtraceeq k d= g\circ \max(
  h^*_{ k, 1}, h^*_{ k, 2})$ for all $k\in \Natp\cup\{ \infty\}$.
\end{theorem}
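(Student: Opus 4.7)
The plan is to mirror closely the proof of Theorem~\ref{qltbt.th:iter-1}, adapting it to blind strategies and to the \emph{ready} challenge allowed at the last switch.  First I would note that the mapping lattice $( \Natp\cup\{ \infty\})\times\{ 1, 2\}\to L^{ S\times S}$ is complete under the pointwise order inherited from $L$, and that $I$ is monotone because $F( x, y, \cdot)$ is monotone by assumption and $\sup$, $\inf$, $\max$ preserve monotonicity; Tarski's theorem then yields the least fixed point $h^*$.

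To identify $h^*$ with the nested ready trace and ready equivalence distances, I would pull back along $g$ by defining
\begin{align*}
  w_{ k, 1}( s, t) &= \adjustlimits \sup_{ \theta_1\in \nrsim k{ \blind \Theta_1}} \inf_{ \theta_2\in \Theta_2} f\big( \tr{ \outcome( \theta_1, \theta_2)( s, t)}\big), \\
  w_{ k, 2}( s, t) &= \adjustlimits \sup_{ \theta_1\in \nrsim k{ \blind \Theta_1}} \inf_{ \theta_2\in \Theta_2} f\big( \tr{ \outcome( \theta_1, \theta_2)( t, s)}\big).
\end{align*}
By monotonicity of $g$ this gives $\nrtrace k d= g\circ w_{ k, 1}$ and $\nrtraceeq k d= g\circ \max( w_{ k, 1}, w_{ k, 2})$, so it suffices to prove $w= h^*$.

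To show that $w$ is a fixed point, I would do a case analysis on $( m, p)$.  For $m\ge 2$ the argument essentially repeats the derivation of $I( w_{ k, 1})( s, t)= w_{ k, 1}( s, t)$ from Theorem~\ref{qltbt.th:iter-1}, with the extra feature that blindness of $\theta_1\in \nrsim k{ \blind \Theta_1}$ forces Player~1 to commit to an initial finite trace $\sigma$ from $s$ before possibly switching.  This factors the outer supremum as $\sup_{ s\tto{ \sigma} s'}$ followed by suprema over blind continuations; unfolding $F( \sigma, \tau, \cdot)$ into iterated $F( x, y, \cdot)$ and exchanging sup and inf via uniformity matches the three arguments of the outer max (no switch, switch from left path, switch from right path).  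The genuinely new case is $m= 1$: here $\nrsim 1{ \blind \Theta_1}$ allows Player~1 to (a)~play a blind trace $\sigma$ from $s$ and stop, (b)~play $\sigma$ and then pose one non-switching challenge $s'\tto{ x} s''$ from the endpoint $s'$, or (c)~play $\sigma$ and then switch, posing one challenge $t'\tto{ y} t''$ from $t'= \tgt( \last( \tau))$; option~(c) is admissible by the second clause in the definition of $\blind$ because the switch increments the counter.  These three possibilities align exactly with the three arguments of the outer max in the $m= 1$ rows of $I$.

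Finally, to show $w$ is the \emph{least} fixed point, I would follow Theorem~\ref{qltbt.th:iter-1}: for any fixed point $\bar h$, finite branching of $( S, T)$ guarantees that the infima in $I( \bar h)= \bar h$ are attained by concrete witnesses $t\tto{ y} t'$ (resp.~$s\tto{ x} s'$), from which a Player-2 strategy can be assembled inductively so that $f( \tr{ \outcome( \theta_1, \theta_2)( s, t)})\le \bar h_{ k, 1}( s, t)$ for every $\theta_1\in \nrsim k{ \blind \Theta_1}$.  The hard part will be the bookkeeping for case~(c) of $m= 1$, where the final switching challenge depends on the target of Player~2's last move only; here blindness has to be invoked in its nuanced ``switching allows dependency on $\tgt( \last( \rho))$'' form, and one must check that $\sup_{ t'\tto{ y} t''} \inf_{ s'\tto{ x} s''} f( \sigma\cdot x, \tau\cdot y)$ really captures the correct order of moves at the ready step.
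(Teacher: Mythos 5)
Your proposal is correct and follows essentially the same route as the paper, which itself only states that the proof is ``similar to the one of Theorem~\ref{qltbt.th:iter-1}'': Tarski for existence, pulling the distance back along $g$ to a game value over $\nrsim k{\blind \Theta_1}$, a case analysis matching the three arguments of each outer $\max$ to the strategy options (no counter increment, increment without switching, increment with switch), and finite branching to extract Player-2 witnesses for minimality. Your identification of the $m=1$ rows with the single ready challenge posed after the final counter increment is exactly the intended adaptation.
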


The fixed-point characterizations above immediately lead to iterative
\emph{semi-algorithms} for computing the respective distances: to
compute for example simulation distance, we can initialize
$\nsim 1 h( s, t)= 0$ for all states $s, t\in S$ and then iteratively
apply the above equality.  This assumes the LTS $( S, T)$ to be
finitely branching and uses Kleene's fixed-point theorem and
continuity of $F$.  However, this computation is only guaranteed to
converge to simulation distance in finitely many steps in case the
lattice $L^{ S\times S}$ is \emph{finite}; otherwise, the procedure
might not terminate.

\subsection{Relation Families}

Below we show that both simulation and bisimulation distance admit a
relational characterization akin to the one of the standard Boolean
notions.  Using switching counters like we did in the previous section,
this can easily be generalized to give relational characterizations to
all distances in this chapter.

\begin{theorem}
  \label{qltbt.th:relation_fam}
  If the LTS $( S, T)$ is finitely branching, then
  $\nsim 1 d( s, t)\le \varepsilon$ iff there exists a relation family
  $\cal R=\{ R_\alpha\subseteq S\times S\mid \alpha\in L\}$ for which
  $( s, t)\in R_\beta\in R$ for some $\beta$ with
  $g( \beta)\le \varepsilon$, and such that for any $\alpha\in L$ and
  for all $( s', t')\in R_\alpha\in \cal R$,
  \begin{itemize}
  \item for all $s'\tto{ x} s''$, there exists $t'\tto{ y} t''$ such
    that $( s'', t'')\in R_{ \alpha'}\in \cal R$ for some
    $\alpha'\in L$ with $F( x, y, \alpha')\sqsubseteq \alpha$.
  \end{itemize}
  Similarly, $\bisim d( s, t)\le \varepsilon$ iff there exists a
  relation family $\cal R=\{ R_\alpha\subseteq S\times S\mid \alpha\in L\}$
  for which $( s, t)\in R_\beta\in \cal R$ for some $\beta$ with $g(
  \beta)\le \varepsilon$, and such that for any $\alpha\in L$ and for all
  $( s', t')\in R_\alpha\in \cal R$,
  \begin{itemize}
  \item for all $s'\tto{ x} s''$, there exists $t'\tto{ y} t''$ such
    that $( s'', t'')\in R_{ \alpha'}\in \cal R$ for some $\alpha'\in L$ with
    $F( x, y, \alpha')\sqsubseteq \alpha$;
  \item for all $t'\tto{ y} t''$, there exists $s'\tto{ x} s''$ such
    that $( s'', t'')\in R_{ \alpha'}\in \cal R$ for some
    $\alpha'\in L$ with $F( x, y, \alpha')\sqsubseteq \alpha$.
  \end{itemize}
\end{theorem}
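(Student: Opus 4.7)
The plan is to prove both parts by exploiting the fixed-point characterizations from Theorem~\ref{qltbt.th:iter-1}. Recall that, for finitely branching LTS, $\nsim 1 d = g \circ h^*_{1,1}$ where $h^*_{1,1}$ is the least fixed point of the monotone operator
\begin{equation*}
  I(h)(s,t) = \adjustlimits \sup_{s \tto{x} s'} \inf_{t \tto{y} t'} F(x, y, h(s', t'))\,,
\end{equation*}
and analogously $\bisim d = g \circ \bisim h$, where $\bisim h$ is the least fixed point of the operator $J$ obtained by taking the max of the simulation functional and its symmetric counterpart. Both proofs follow the same two-direction template.

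For the ($\Rightarrow$) direction of the simulation statement, I would assume $\nsim 1 d(s,t) \le \varepsilon$ and simply define the canonical family $R_\alpha = \{(u,v) \in S \times S \mid h^*_{1,1}(u,v) \sqsubseteq \alpha\}$. Setting $\beta = h^*_{1,1}(s,t)$ yields $(s,t) \in R_\beta$ and $g(\beta) = \nsim 1 d(s,t) \le \varepsilon$. To verify the transfer property, take $(u,v) \in R_\alpha$ and $u \tto{x} u''$. Since $h^*_{1,1}(u,v) = I(h^*_{1,1})(u,v)$, we have $\inf_{v \tto{y} v''} F(x, y, h^*_{1,1}(u'', v'')) \sqsubseteq h^*_{1,1}(u,v) \sqsubseteq \alpha$; here finite branching is essential to ensure the infimum is attained by some transition $v \tto{y} v''$. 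Choosing this $v''$ and $\alpha' = h^*_{1,1}(u'', v'')$ gives $(u'', v'') \in R_{\alpha'}$ and $F(x, y, \alpha') \sqsubseteq \alpha$, as required.

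For the ($\Leftarrow$) direction, I would define $h: S \times S \to L$ by $h(u,v) = \inf \{\alpha \in L \mid (u,v) \in R_\alpha\}$, well-defined since $L$ is complete. The key claim is that $I(h) \sqsubseteq h$. To see this, pick any $(u,v)$ and any $\alpha$ with $(u,v) \in R_\alpha$. For every transition $u \tto{x} u''$, the hypothesis on $\cal R$ provides $v \tto{y} v''$ and $\alpha' \in L$ such that $(u'', v'') \in R_{\alpha'}$ and $F(x, y, \alpha') \sqsubseteq \alpha$. By definition $h(u'', v'') \sqsubseteq \alpha'$, so monotonicity of $F(x,y,\cdot)$ gives $F(x, y, h(u'', v'')) \sqsubseteq \alpha$, hence $\inf_{v \tto{y} v''} F(x, y, h(u'', v'')) \sqsubseteq \alpha$, and then $I(h)(u,v) \sqsubseteq \alpha$ by taking the supremum. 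Since this holds for every such $\alpha$, we conclude $I(h)(u,v) \sqsubseteq h(u,v)$. Knaster--Tarski then yields $h^*_{1,1} \sqsubseteq h$, and monotonicity of $g$ together with $(s,t) \in R_\beta$ gives $\nsim 1 d(s,t) = g(h^*_{1,1}(s,t)) \le g(h(s,t)) \le g(\beta) \le \varepsilon$.

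The bisimulation case proceeds in exactly the same manner, replacing $I$ with the symmetric functional $J$ whose least fixed point is $\bisim h$; the family $R_\alpha = \{(u,v) \mid \bisim h(u,v) \sqsubseteq \alpha\}$ now inherits both the forward and backward transfer properties directly from the two branches of the $\max$ in $J$, and conversely the two-sided hypothesis on $\cal R$ is exactly what is needed to show $J(h) \sqsubseteq h$. The main subtlety I foresee is the reliance on finite branching to attain the infima in the transfer property; without it, one would only obtain approximate transfers parametrized by an $\epsilon > 0$, which does not fit the clean formulation of the theorem. A secondary point to handle with care is the monotonicity of $F$ in its lattice argument (given by hypothesis in Section~\ref{qltbt.se:fixedp}), which is used repeatedly and should be invoked explicitly in the verification.
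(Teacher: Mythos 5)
Your proof is correct and follows essentially the same route as the paper's: the forward direction builds the canonical family $R_\alpha=\{(u,v)\mid h^*_{1,1}(u,v)\sqsubseteq\alpha\}$ and uses finite branching to realize the infimum in the fixed-point equation, while the backward direction shows that $h(u,v)=\inf\{\alpha\mid(u,v)\in R_\alpha\}$ is a pre-fixed point of $I$ and invokes Tarski. If anything, your backward direction is slightly tidier than the paper's, which asserts $(s',t')\in R_{h(s',t')}$ without any stated closure assumption on the family, whereas you bound $I(h)(u,v)$ by every $\alpha$ with $(u,v)\in R_\alpha$ and only then pass to the infimum.
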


\begin{proof}
  We only show the proof for simulation distance; for bisimulation
  distance it is analogous.  Assume first that
  $\nsim 1 d( s, t)\le \varepsilon$, then we have $h: S\times S\to L$
  for which $g( h( s, t))\le \varepsilon$ and
  \begin{equation*}
    h( s', t')= \adjustlimits \sup_{ s'\tto{ x} s''} \inf_{ t'\tto{ y}
      t''} F( x, y, h( s'', t''))
  \end{equation*}
  for all $s', t'\in S$.  Let $\beta= h( s, t)$, and define a relation
  family $\cal R=\{ R_\alpha\mid \alpha\in L\}$ by
  $R_\alpha=\{( s', t')\mid h( s', t')\sqsubseteq \alpha\}$.  Let
  $\alpha\in L$ and $( s', t')\in R_\alpha$, then
  $\sup_{ s'\tto{ x} s''} \inf_{ t'\tto{ y} t''} F( x, y, h( s'',
  t''))= h( s', t')\sqsubseteq \alpha$, and as $( S, T)$ is finitely
  branching, this implies that for all $s'\tto{ x} s''$ there is
  $t'\tto{ y} t''$ and $\alpha'= h( s'', t'')$ such that
  $( s'', t'')\in R_{ \alpha'}$ and
  $F( x, y, \alpha')\sqsubseteq \alpha$.

  For the other direction, assume a relation family as in the theorem
  and define $h: S\times S\to L$ by $h( s', t')= \inf\{ \alpha\mid(
  s', t'\in R_\alpha\}$.  Then $( s, t)\in R_\beta$ implies that $h(
  s, t)\sqsubseteq \beta$ and hence $g( h( s, t))\le \varepsilon$.  Let
  $s', t'\in S$, then $( s', t')\in R_{ h( s', t')}$, hence for all
  $s'\tto{ x} s''$ there is $t'\tto{ y} t''$ and $\alpha'\in L$ for
  which $F( x, y, \alpha')\sqsubseteq h( s', t')$ and $( s'', t'')\in
  R_{ \alpha'}$, implying $h( s'', t'')\sqsubseteq \alpha'$ and hence
  $F( x, y, h( s'', t''))\sqsubseteq h( s', t')$.  Collecting the
  pieces, we get $I( h)( s', t')= \sup_{ s'\tto{ x} s''} \inf_{
    t'\tto{ y} t''} F( x, y, h( s'', t''))\sqsubseteq h( s', t')$,
  hence $h$ is a pre-fixed point for $I$.  But then $h^*\sqsubseteq
  h$, hence $\nsim 1 d( s, t)= g( h^*( s, t))\le g( h( s, t))\le
  \varepsilon$. \qed
\end{proof}

\section{Recursive Characterizations for Example Distances}
\label{qltbt.se:examples_rec}

We show that the considerations in Section~\ref{qltbt.se:recurse} apply to all
the example distances we have introduced in
Section~\ref{qltbt.se:example_distances}.  We apply Theorem~\ref{qltbt.th:iter-1} to
derive fixed-point formulae for corresponding simulation distances, but
of course all other distances in the quantitative
linear-time--branching-time spectrum have similar characterizations.

Let $d$ be a hemimetric on $\KK$, then for all $\sigma, \tau\in \KK^\infty$
and $0< \lambda\le 1$,
\begin{align*}
  \PWDIS d( \sigma, \tau) &=
  \begin{cases}
    \max( d( \sigma_0, \tau_0), \lambda\, \PWDIS d( \sigma^1, \tau^1))
    &\text{if } \sigma, \tau\ne \emptyseq, \\
    \infty &\hspace*{-6.2em}\text{if } \sigma= \emptyseq, \tau\ne
    \emptyseq \text{ or }
    \sigma\ne \emptyseq, \tau= \emptyseq, \\
    0 &\hspace*{-6.2em}\text{if } \sigma= \tau= \emptyseq,
  \end{cases} \\
  \ACCDIS d( \sigma, \tau) &=
  \begin{cases}
    d( \sigma_0, \tau_0)+ \lambda\, \ACCDIS d( \sigma^1, \tau^1)
    &\text{if } \sigma, \tau\ne \emptyseq, \\
    \infty &\hspace*{-4em}\text{if } \sigma= \emptyseq, \tau\ne
    \emptyseq \text{ or }
    \sigma\ne \emptyseq, \tau= \emptyseq, \\
    0 &\hspace*{-4em}\text{if } \sigma= \tau= \emptyseq,
  \end{cases}
\end{align*}
hence we can apply the iteration theorems with lattice
$L= \Realnn\cup\{ \infty\}$, $g= \id$ the identity function, and the
recursion function $F$ given like the formulae above.  Using
Theorem~\ref{qltbt.th:iter-1} we can derive the following fixed-point
expressions for simulation distance:
\begin{align*}
  \nsim 1{ \PWDIS d}( s, t) &= \adjustlimits \sup_{ s\tto{ x} s'} \inf_{
    t\tto{ y} t'} \max( d( x, y), \lambda\, \nsim 1{ \PWDIS d}( s',
  t')) \\
  \nsim 1{ \ACCDIS d}( s, t) &= \adjustlimits \sup_{ s\tto{ x} s'}
  \inf_{ t\tto{ y} t'} ( d( x, y)+ \lambda\, \nsim 1{ \ACCDIS d}( s',
  t'))
\end{align*}
Incidentally, these are exactly the expressions introduced
in~\cite{DBLP:journals/tse/AlfaroFS09} and in previous chapters.

Also note that if $S$ is finite with $| S|= n$, then undiscounted
point-wise distance $\PWDIS[ 1] d$ can only take on the finitely many
values $\{ d( x, y)\mid( s, x, s'),( t, y, t')\in T\}$, hence the
fixed-point algorithm given by Kleene's theorem converges in at most
$n^2$ steps.  This algorithm is used
in~\cite{DBLP:journals/tse/AlfaroFS09, DBLP:conf/qest/DesharnaisLT08,
  DBLP:journals/tcs/LarsenFT11}.  For undiscounted accumulating distance
$\ACCDIS[ 1] d$, it can be shown~\cite{DBLP:journals/tcs/LarsenFT11} that with
$D= \max\{ d( x, y)\mid( s, x, s'),( t, y, t')\in T\}$, distance is
either infinite or bounded above by $2n^2D$, hence the $\ACCDIS[ 1] d$
algorithm either converges in at most $2n^2D$ steps or diverges.

For the limit-average distance $\ACCAVG d$, we let $L=( \Realnn\cup\{
\infty\})^\Nat$, $g( h)= \liminf_j h( j)$, and $f( \sigma, \tau)( j)=
\frac1{ j+ 1} \sum_{ i= 0}^j d( \sigma_i, \tau_i)$ the $j$'th average.
The intuition is that $L$ is used for ``remembering'' how long in the
traces we have progressed with the computation.  With $F$ given by $F(
x, y, h)( n)= \frac1{ n+ 1} d( x, y)+ \frac{ n}{ n+ 1} h( n-1)$ it can
be shown that~\eqref{qltbt.eq:trace_dist_rec} holds, giving the following
fixed-point expression for limit-average simulation distance (which to
the best of our knowledge is new):
\begin{equation*}
  \nsim 1 h_n( s, t)= \adjustlimits \sup_{ s\tto{ x} s'} \inf_{ t\tto{ y}
    t'} \big( \tfrac1{ n+ 1} d( x, y)+ \tfrac n{ n+ 1} \nsim 1 h_{ n-
    1}( s', t')\big)
\end{equation*}

For the maximum-lead distance, we let
$L=( \Realnn\cup\{ \infty\})^\Real$, the lattice of mappings from
leads to maximum leads.  Using the notation from
Section~\ref{qltbt.se:example_distances}, we let $g( h)= h( 0)$ and
$f( \sigma, \tau)( \delta)= \max(| \delta|, \sup_j| \delta+ \sum_{ i=
  0}^j \sigma^w_i- \sum_{ i= 0}^j \tau^w_j|)$ the maximum-lead
distance between $\sigma$ and $\tau$ assuming that $\sigma$ already
has a lead of $\delta$ over $\tau$.  With
$F( x, y, h)( \delta)= \max(| \delta|, h( \delta+ x- y))$ it can be
shown that~\eqref{qltbt.eq:trace_dist_rec} holds, and then the
fixed-point expression for maximum-lead simulation distance becomes
the one given in Chapter~\ref{ch:wtsjlap}:
\begin{equation*}
  \nsim 1 h( \delta)( s, t)= \adjustlimits \sup_{ s\tto{ x} s'} \inf_{
    t\tto{ y} t'} \max(| \delta|, \nsim 1 h( s', t')( \delta+ x- y))
\end{equation*}
Again it can be shown~\cite{DBLP:conf/formats/HenzingerMP05} that for $S$
finite with $| S|= n$ and $D= \max\{ d( x, y)\mid( s, x, s'),( t, y,
t')\in T\}$, the iterative algorithm for computing maximum-lead distance
either converges in at most $2n^2D$ steps or diverges.

Regarding Cantor distance, a useful recursive formulation is
\begin{equation*}
  f( \sigma, \tau)( n)=
  \begin{cases}
    f( \sigma^1, \tau^1)( n+ 1) &\text{if } \sigma_0= \tau_0, \\
    n &\text{otherwise},
  \end{cases}
\end{equation*}
which iteratively counts the number of matching symbols in $\sigma$ and
$\tau$.  Here we use $L=( \Realnn\cup\{ \infty\})^\Nat$ and $g( h)=
\frac1{ h( 0)}$; note that the order on $L$ has to be reversed for $g$
to be monotone.  The fixed-point expression for Cantor simulation
distance becomes
\begin{equation*}
  \nsim 1 h_n( s, t)= \max( n, \adjustlimits \sup_{ s\tto{ x} s'} \inf_{
    t\tto{ x} t'} \nsim 1 h_{ n+ 1}( s', t'))
\end{equation*}
but as the order on $L$ is reversed, the sup now means that Player~1 is
trying to \emph{minimize} this expression, and Player~2 tries to
maximize it.  Hence Player~2 tries to find maximal matching
\emph{subtrees}; the corresponding Cantor simulation equivalence
distance between $s$ and $t$ hence is the inverse of the maximum depth
of matching subtrees under $s$ and $t$.  The Cantor bisimulation
distance in turn is the same as the inverse of \emph{bisimulation
  depth}~\cite{DBLP:journals/jacm/HennessyM85}.

\chapter[Weighted Modal Transition Systems][Weighted Modal Transition
Systems]{Weighted Modal Transition Systems\footnote{This chapter is
    based on the journal paper~\cite{DBLP:journals/fmsd/BauerFJLLT13}
    published in Formal Methods in System Design.}}
\label{ch:weightedmodal}

In this chapter we lift the accumulating distance to \emph{modal
  specifications}, a specification formalism which permits incremental
and compositional design.  To this end, we replace the
\emph{refinement} relation of standard modal specifications by a
refinement \emph{distance}.  We then show that our quantitative
generalization does \emph{not} admit any notions of determinization or
conjunction, but that structural composition and quotient do satisfy
the expected quantitative properties.

\section{Weighted Modal Transition Systems}
\label{weightedmodal.se:wmts}

In this section we present the formalism we use for implementations
and specifications.  As implementations we choose the model of
\emph{weighted transition systems}, \ie~labeled transition systems
with integer weights at transitions.  Specifications both have a
\emph{modal} dimension, specifying discrete behavior which \emph{must}
be implemented and behavior which \emph{may} be present in
implementations, and a \emph{quantitative} dimension, specifying
intervals of weights on each transition within are permissible for an
implementation.

Let $\II=\big\{[ x, y]\bigmid x\in \Int\cup\{ -\infty\}, y\in \Int\cup\{
\infty\}, x\le y\big\}$ be the set of closed extended-integer intervals
and let $\Sigma$ be a finite set of actions.  Our set of
\emph{specification labels} is $\Spec= \Sigma\times \II$, pairs of actions
and intervals.  The set of \emph{implementation labels} is defined as
$\Impl \Spec= \Sigma\times\big\{[ x, x]\bigmid x\in \Int\big\}\approx
\Sigma\times \Int$. Hence a specification imposes labels and integer
intervals which constrain the possible weights of an implementation.

We define a partial order on $\II$ (representing inclusion of intervals)
by $[ x, y] \labpre [ x', y']$ if $x'\le x$ and $y\le y'$, and we
extend this order to specification labels by $( a, I)\labpre( a',
I')$ if $a= a'$ and $I\labpre I'$.  The partial order on $\Spec$ is
hence a \emph{refinement} order; if $k_1\labpre k_2$ for $k_1,
k_2\in \Spec$, then no more implementation labels are contained in $k_1$
than in $k_2$.

Specifications and implementations are defined as follows:

\begin{definition}
  A \emph{weighted modal transition system} (WMTS) is a quadruple
  $( S, s^0, \mmayto, \mmustto)$ consisting of a set of states $S$
  with an initial state $s^0\in S$ and \textit{must} ($\mmustto$) and
  \textit{may} ($\mmayto$) transition relations
  $\mmustto, \mmayto\subseteq S\times \Spec\times S$ such that for every
  $(s,k,s') \in \mmustto$ there is $(s,\ell,s') \in \mmayto$ where
  $k \labpre \ell$.  A WMTS is an \emph{implementation} if
  $\mmustto= \mmayto\subseteq S\times \Impl \Spec\times S$.
\end{definition}

Note the natural requirement that any required (\textit{must}) behavior
is also allowed (\textit{may}) above, and that implementations
correspond to standard integer-weighted transition systems, where all
optional behavior and positioning in the intervals has been decided on.

A WMTS is \emph{finite} if $S$ and $\mmayto$ (and hence also $\mmustto$)
are finite sets, and it is \emph{deterministic} if it holds that for all
$s\in S$, $a\in \Sigma$, $\big( s,( a, I_1), t_1\big),\big( s,( a,
I_2), t_2\big)\in \mmayto$ imply $I_1= I_2$ and $t_1= t_2$.  Hence a
deterministic specification allows at most one transition under each
discrete action from every state.  In the rest of the paper we will
write $s\mayto{ k} s'$ for $( s, k, s')\in \mmayto$ and similarly for
$\mmustto$, and we will always write $S=( S, s^0, \mmayto, \mmustto)$ or
$S_i=( S_i, s^0_i, \mmayto_i, \mmustto_i)$ for WMTS and $I=( I, i^0,
\mmustto)$ for implementations.  Note that an implementation is just a
usual integer-weighted transition system.

Our theory will work with infinite WMTS, though we will require them
to be \emph{compactly branching}.  This is a natural generalization of
the standard requirement on systems to be \emph{finitely branching}
which was first used in~\cite{journals/anyas/Breugel96}; see
Def.~\ref{weightedmodal.de:comp} below.

The implementation semantics of a specification is given through modal
refinement, as follows:

\begin{definition}
  A \emph{modal refinement} of WMTS $S_1$, $S_2$
  is a relation $R\subseteq S_1\times S_2$ such that for any $(s_1,
  s_2)\in R$
  \begin{itemize}
  \item whenever $s_1\mayto{ k_1}_1 t_1$ for some $k_1\in \Spec$, $t_1\in
    S_1$, then there exists $s_2\mayto{ k_2}_2 t_2$ for some $k_2\in
    \Spec$, $t_2\in S_2$, such that $k_1 \labpre k_2$ and $( t_1,
    t_2)\in R$,
  \item whenever $s_2\mustto{ k_2}_2 t_2$ for some $k_2\in \Spec$,
    $t_2\in S_2$, then there exists $s_1\mustto{ k_1}_1 t_1$ for some
    $k_1\in \Spec$, $t_1\in S_1$, such that $k_1\labpre k_2$ and $(
    t_1, t_2)\in R$.
  \end{itemize}
  We write $S_1\mr S_2$ if there is a modal refinement relation $R$
  for which $( s^0_1, s^0_2)\in R$.
\end{definition}

Hence in such a modal refinement, behavior which is required in $S_2$ is
also required in $S_1$, no more behavior is allowed in $S_1$ than in
$S_2$, and the quantitative requirements in $S_1$ are refinements of the
ones in $S_2$.  The implementation semantics of a specification can then
be defined as the set of all implementations which are also refinements:

\begin{definition}
  The \emph{implementation semantics} of a WMTS $S$ is the set
  $\llbracket S\rrbracket=\{ I \mid I \mr S ~\text{and}~ I ~\text{is
    an implementation}\}$.
\end{definition}

This conforms with the intuition developed above: if
$I\in\llbracket S\rrbracket$, then any (reachable) behavior
$i\mustto{ a, x} j$ in $I$ must be allowed by a matching transition
$s\mayto{ a,[ l, r]} t$ in $S$ with $l\le x\le r$; correspondingly,
any (reachable) required behavior $s\mustto{ a,[ l, r]} t$ in $S$ must
be implemented by a matching transition $i\mustto{ a, x} j$ in $I$
with $l\le x\le r$.

\section{Thorough and Modal Refinement Distances}
\label{weightedmodal.se:distances}

For the quantitative specification formalism we have introduced in the
last section, the standard Boolean notions of satisfaction and
refinement are too fragile.  To be able to reason not only whether a
given quantitative implementation satisfies a given quantitative
specification, but also to what extent, we introduce a notion of
\emph{distance} between both implementations and specifications.

We first define the distance between \emph{implementations}; for this we
introduce a distance on implementation labels by
\begin{equation}
  \label{weightedmodal.eq:d_imp}
  d_{ \Impl \Spec}\big(( a_1, x_1),( a_2, x_2)\big)=\left\{
  \begin{array}{cl}
    \infty &\quad\text{if } a_1\ne a_2, \\
    | x_1- x_2| &\quad\text{if } a_1= a_2.
  \end{array}
  \right.
\end{equation}
In the rest of the chapter, let $\lambda\in \Real$ with
$0 < \lambda< 1$ be a \emph{discounting factor}.

\begin{definition}
  \label{weightedmodal.de:acc.dist}
  The \emph{implementation distance} $d: I_1\times I_2\to
  \Realnn\cup\{ \infty\}$ between the states of implementations $I_1$
  and $I_2$ is the least fixed point of the equations
  \begin{equation*}
    d( i_1, i_2)= \max\left\{
      \begin{aligned}
        &\adjustlimits \sup_{ i_1\tto{ k_1}_1 j_1} \inf_{ i_2\tto{
          k_2}_2 j_2} d_{ \Impl \Spec}( k_1, k_2)+ \lambda d( j_1, j_2), \\
        &\adjustlimits \sup_{ i_2\tto{ k_2}_2 j_2} \inf_{ i_1\tto{
          k_1}_1 j_1} d_{ \Impl \Spec}( k_1, k_2)+ \lambda d( j_1, j_2).
      \end{aligned}
    \right.
  \end{equation*}
  We define $d(I_1,I_2) = d(i_1^0, i_2^0)$.
\end{definition}

\begin{lemma}
  \label{weightedmodal.le:impdistmet}
  The implementation distance is well-defined, and is a pseudometric.
\end{lemma}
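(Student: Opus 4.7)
The plan is to treat the equations in Definition~\ref{weightedmodal.de:acc.dist} as defining an operator $F$ on the complete lattice $L = (\Realnn\cup\{\infty\})^{I_1\times I_2}$ ordered pointwise, so that $F(d)(i_1,i_2)$ equals the right-hand side. Monotonicity of $F$ is immediate: sup, inf, max, addition by nonnegative constants, and multiplication by $\lambda$ all preserve the pointwise order. By Knaster--Tarski, $F$ has a least fixed point, which we take as $d$. This settles well-definedness. (In fact, since $0<\lambda<1$, $F$ is even a contraction on the natural sup-metric on the finite part of $L$, so the least fixed point coincides with the unique fixed point, but monotonicity alone suffices.)

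To prove $d(i,i)=0$ for all $i$ in the case $I_1=I_2=I$, I would exhibit the function $d_0\colon I\times I\to\Realnn\cup\{\infty\}$ given by $d_0(i,i)=0$ and $d_0(i,j)=\infty$ for $i\neq j$, and verify that $F(d_0)\le d_0$. On the diagonal, for every transition $i\mustto{k} j$ we can match it with the same transition, giving $d_\Spec(k,k)+\lambda d_0(j,j)=0$, so $F(d_0)(i,i)=0$. Off-diagonal the bound is trivial because $d_0(i_1,i_2)=\infty$. Hence $d\sqsubseteq d_0$ by leastness and $d(i,i)=0$ follows.

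For symmetry I would use that $d_\Spec$ on $\Impl \Spec$ is symmetric (equation~\eqref{weightedmodal.eq:d_imp}). Let $\tau(d)(i_1,i_2)= d(i_2,i_1)$ on the disjoint union; a direct inspection of the two clauses of the max shows $F\circ\tau = \tau\circ F$, because swapping the roles of $I_1$ and $I_2$ interchanges the two sup--inf expressions inside the max. Therefore $\tau(d)$ is also a fixed point of $F$, and since $\tau$ is an order-isomorphism of $L$, $\tau(d)$ is again the least fixed point, so $d=\tau(d)$.

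The triangle inequality is the main obstacle, and I would handle it on the disjoint union $I_1\sqcup I_2\sqcup I_3$ by showing that the set
\begin{equation*}
  \Delta=\{h\in L \mid h(i_1,i_3)\le h(i_1,i_2)+h(i_2,i_3) \text{ for all } i_1,i_2,i_3\}
\end{equation*}
is closed under $F$ and under suprema of chains, and contains the bottom element. Closure under $F$ is the technical step: given $h\in\Delta$, fix $i_1,i_2,i_3$ and $\varepsilon>0$, pick any challenger $i_1\mustto{k_1} j_1$, then use the first clause of $F(h)(i_1,i_2)$ to find a near-optimal answer $i_2\mustto{k_2} j_2$, and use the first clause of $F(h)(i_2,i_3)$ on this transition to find $i_3\mustto{k_3} j_3$, and combine via the triangle inequality for $d_\Spec$ and $h$ together with $\lambda>0$; the symmetric direction uses the second clauses. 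Taking $\varepsilon\to 0$ gives $F(h)(i_1,i_3)\le F(h)(i_1,i_2)+F(h)(i_2,i_3)$. Since the Kleene iterates $F^n(\bot)$ all lie in $\Delta$ and $d=\sup_n F^n(\bot)$ (by continuity of $F$, or by directly iterating to the least fixed point in $L$), we conclude $d\in\Delta$. Collecting the three properties, $d$ is a pseudometric.
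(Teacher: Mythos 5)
Your proof is correct, but it takes a genuinely different route from the paper. The paper disposes of this lemma by citation: it observes that the implementation distance is exactly the accumulating bisimulation distance of Chapter~\ref{ch:qltbt}, so that the hemimetric properties follow from Proposition~\ref{qltbt.pr:hemi} (whose triangle inequality is proved by an explicit composition of Player-2 strategies in the quantitative Ehrenfeucht--Fra\"iss\'e game), and well-definedness follows from the contraction argument of Lemma~\ref{simdistax.lem:acc:contraction} via Banach's theorem. You instead give a self-contained, purely order-theoretic argument: Knaster--Tarski for existence of the least fixed point, a pre-fixed point ($F(d_0)\sqsubseteq d_0$ for the discrete distance $d_0$) for reflexivity, commutation of $F$ with the swap isomorphism $\tau$ for symmetry, and invariance of the triangle-inequality predicate $\Delta$ under $F$ for subadditivity. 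What your approach buys is independence from the game machinery and an explicit proof of \emph{symmetry}, which the paper's citation leaves implicit (Proposition~\ref{qltbt.pr:hemi} only yields a hemimetric; symmetry of the bisimulation distance comes from the max of the two sup--inf clauses together with symmetry of $d_{\Impl\Spec}$, exactly as in your $\tau$-argument). What the paper's route buys is brevity and reuse of an already-established general framework. One small point of care in your last step: the identity $d=\sup_n F^n(\bot)$ requires $\omega$-continuity of $F$, which holds under the paper's compact-branching assumption but is not automatic when the $\inf$ ranges over infinitely many transitions; however, since your set $\Delta$ contains $\bot$ and is closed both under $F$ and under suprema of arbitrary chains, the standard \emph{transfinite} iteration to the least fixed point stays inside $\Delta$ with no continuity hypothesis at all, so the argument goes through as you indicate.
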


\begin{proof}
  This is precisely the accumulating bisimulation distance from
  Chapter~\ref{ch:qltbt}, so the statement follows from
  Proposition~\ref{qltbt.pr:hemi}.  See also the proof of
  Lemma~\ref{simdistax.lem:acc:contraction}.
\end{proof}

We remark that besides this accumulating distance, other interesting
system distances may be defined depending on the application at hand,
\cf~Chapter~\ref{ch:qltbt}.  We concentrate here on this distance and
leave a generalization to other distances for the next chapter.

\begin{figure}[tpb]
  \begin{center}
    \begin{tikzpicture}[->,>=stealth',shorten >=1pt,auto,node
      distance=2.0cm,initial text=,scale=0.9,transform shape]
      \tikzstyle{every node}=[font=\small] \tikzstyle{every
        state}=[fill=white,shape=circle,inner sep=.5mm,minimum size=6mm]
      \path[use as bounding box] (0,-0.5) rectangle (12,1.3); 
      \node[state,initial] (s) at (0,1) {$i_1$};
      \node[state] (s1) at (2,1) {$j_1$};
      \node[state] (s2) at (1,0) {$k_1$};
      \path (s) edge [solid] node [above,sloped] {$3$} (s1);
      \path (s) edge [solid] node [above,sloped] {$7$} (s2);
      \path (s2) edge [solid] node [above,sloped] {$6$} (s1);
      \path (s2) edge [solid,loop below] node [below,sloped] {$9$} (s2);
      
      \node[state,initial] (t) at (3,0) {$i_2$};
      \node[state] (t1) at (5,0) {$j_2$};
      \path (t) edge [solid] node [above,sloped] {$6$} (t1);
      \path (t) edge [solid,loop below] node [below,sloped] {$7$} (t);
      
      \node (text) at (9.4,0) 
      {\begin{minipage}{6.8cm}
          $d(j_1,j_2) = 0$ \\
          $d(i_1,j_2) = \infty$ \\
          $d(j_1,i_2) = \infty$  \\
          $d(k_1,j_2) = \infty$  \\[-4mm]
          $d(k_1,i_2) = \max\{2+.9\, d(k_1,i_2),\ 
          .9\overbrace{d(j_1,j_2)}^0 \}$ \\
          $d(i_1,i_2) = \max\{3+.9\underbrace{d(j_1,j_2)}_0,\ 
          .9\, d(k_1,i_2) \}$ \\
        \end{minipage}
      };
    \end{tikzpicture}
  \end{center}
  \caption{Two weighted transition systems with branching distance
    $d( I_1, I_2)= 18$.}
  \label{weightedmodal.fig:distance}
\end{figure}
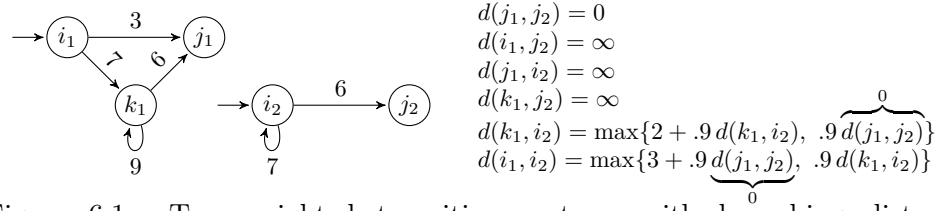

\begin{example}
  Consider the two implementations $I_1$ and $I_2$ in
  Figure~\ref{weightedmodal.fig:distance} with a single action (elided for simplicity)
  and with discounting factor $\lambda = .9$.  The equations in the
  illustration have already been simplified by removing all expressions
  that evaluate to $\infty$.  What remains to be done is to compute the
  least fixed point of the equation $d(k_1,i_2) = \max\big\{ 2 +
  .9\, d(k_1,i_2), 0\big\}$.  Clearly $0$ is not a fixed point, and
  solving the equation $d( k_1, i_2)= 2+ .9\,d( k_1, i_2)$
  gives $d(k_1,i_2) = 20$. Hence $d(i_1,i_2) = \max\{3,
  .9\cdot 20\} = 18$.
\end{example}

Note that the interpretation of the distance between two
implementations depends entirely on the application one has in mind;
but it can easily be shown that the distance between two
implementations is zero iff they are \emph{weighted bisimilar}.  The
intuition is then that the smaller the distance, the closer the
implementations are to being bisimilar.

To lift the implementation distance to specifications, we need first to
consider the distance between \emph{sets} of implementations.  Given
implementation sets $\mathcal I_1, \mathcal I_2\subseteq \Imp$, we define
\begin{equation*}
  d( \mathcal I_1, \mathcal I_2)= \adjustlimits \sup_{ I_1\in
    \mathcal I_1} \inf_{ I_2\in \mathcal I_2} d( I_1, I_2)
\end{equation*}
Note that in case $\mathcal I_2$ is finite, we have that for all
$\epsilon\ge 0$, $d( \mathcal I_1, \mathcal I_2)\le \epsilon$ if
and only if for each implementation $I_1\in \mathcal I_1$ there exists
$I_2\in \mathcal I_2$ for which $d( I_1, I_2)\le \epsilon$, hence
this is quite a natural notion of distance.  Especially, $d(
\mathcal I_1, \mathcal I_2)= 0$ if $\mathcal I_1$ is a subset of
$\mathcal I_2$ up to bisimilarity.  For infinite $\mathcal I_2$, we have
the slightly more complicated property that $d( \mathcal I_1,
\mathcal I_2)\le \epsilon$ iff for all $\delta> 0$ and any
$I_1\in \mathcal I_1$, there is $I_2\in \mathcal I_2$ for which $d(
I_1, I_2)\le \epsilon+ \delta$.

We lift this distance to specifications as follows:

\begin{definition}
  The \emph{thorough refinement distance} between WMTS $S_1$ and $S_2$
  is defined as $\thd( S_1, S_2)= d\big(\llbracket S_1\rrbracket,
  \llbracket S_2\rrbracket\big)$.  We write $S_1\thr^\epsilon S_2$ if
  $\thd( S_1, S_2)\le \epsilon$.
\end{definition}

\begin{lemma}
  The thorough refinement distance is a hemimetric.
\end{lemma}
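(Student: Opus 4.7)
The plan is to verify the two defining properties of a hemimetric from Definition~\ref{wtsjlap.def:dist-require}: reflexivity $\thd(S,S) = 0$ and the triangle inequality. The argument is the standard one showing that the asymmetric Hausdorff lift of a (pseudo)metric is a hemimetric; the only care needed is in handling the $\epsilon$-approximations, since the implementation sets $\llbracket S\rrbracket$ are in general infinite and the infima may not be attained.

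For reflexivity, I would observe that for any $I_1 \in \llbracket S\rrbracket$, picking $I_2 = I_1 \in \llbracket S\rrbracket$ gives $d(I_1, I_2) = 0$ by Lemma~\ref{weightedmodal.le:impdistmet} (the implementation distance is a pseudometric, hence vanishes on the diagonal). Therefore $\inf_{I_2 \in \llbracket S\rrbracket} d(I_1, I_2) = 0$ for each $I_1$, and taking the supremum over $I_1$ yields $\thd(S,S) = 0$. If $\llbracket S\rrbracket$ is empty, the supremum over the empty set is $0$ by convention, and reflexivity still holds.

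For the triangle inequality $\thd(S_1, S_3) \le \thd(S_1, S_2) + \thd(S_2, S_3)$, write $\epsilon_1 = \thd(S_1, S_2)$ and $\epsilon_2 = \thd(S_2, S_3)$. The case $\epsilon_1 = \infty$ or $\epsilon_2 = \infty$ is trivial, so assume both are finite; in particular $\llbracket S_2\rrbracket$ is non-empty (otherwise the relevant inf would be $\infty$ as soon as $\llbracket S_1\rrbracket \ne \emptyset$). Fix an arbitrary $I_1 \in \llbracket S_1\rrbracket$ and $\delta > 0$. By the definition of $\thd(S_1, S_2)$ as a sup-inf, we have $\inf_{I_2 \in \llbracket S_2\rrbracket} d(I_1, I_2) \le \epsilon_1$, so there is some $I_2 \in \llbracket S_2\rrbracket$ with $d(I_1, I_2) \le \epsilon_1 + \delta/2$. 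Similarly, there is $I_3 \in \llbracket S_3\rrbracket$ with $d(I_2, I_3) \le \epsilon_2 + \delta/2$. The triangle inequality for the implementation pseudometric (Lemma~\ref{weightedmodal.le:impdistmet}) then gives $d(I_1, I_3) \le \epsilon_1 + \epsilon_2 + \delta$.

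Hence $\inf_{I_3 \in \llbracket S_3\rrbracket} d(I_1, I_3) \le \epsilon_1 + \epsilon_2 + \delta$, and since $\delta > 0$ was arbitrary, the infimum is at most $\epsilon_1 + \epsilon_2$. Taking the supremum over $I_1 \in \llbracket S_1\rrbracket$ yields $\thd(S_1, S_3) \le \epsilon_1 + \epsilon_2$, as required. The only mild obstacle is the bookkeeping of the two successive $\delta/2$-approximations to go through a common witness $I_2$, but this is standard and goes through cleanly once one relies on the already-established fact that $d$ on implementations satisfies the triangle inequality.
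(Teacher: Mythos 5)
Your proof is correct and is essentially the argument the paper intends: the paper dismisses reflexivity as trivial and defers the triangle inequality to the standard Hausdorff-hemimetric argument of Aliprantis--Border (Lemma~3.72), which is exactly the $\delta/2$-approximation through a common witness in $\llbracket S_2\rrbracket$ that you spell out. Your explicit handling of empty implementation sets and non-attained infima is sound and just makes the cited argument self-contained.
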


\begin{proof}
  To show that $\thd( S, S)= 0$ is trivial, and the triangle
  inequality $\thd( S_1, S_2)+ \thd( S_2, S_3)\ge \thd( S_1, S_3)$
  follows like in the proof of~\cite[Lemma~3.72]{book/AliprantisB07}.
\end{proof}

Indeed this permits us to measure incompatibility of specifications;
intuitively, if two specifications have thorough distance $\epsilon$,
then any implementation of the first specification can be matched by an
implementation of the second up to $\epsilon$.  Also observe the special
case where $S_1= I_1$ is an implementation: then $\thd( I_1, S_2)=
\inf_{ I_2\in \llbracket S_2\rrbracket} d( I_1, I_2)$, which
measures how close $I_1$ is to satisfy the specification $S_2$.

We now proceed to introduce \emph{modal} refinement distance as an
overapproximation of thorough refinement distance.

First we generalize the distance on implementation labels from
Equation~\eqref{weightedmodal.eq:d_imp} to specification labels, again using a
Hausdorff-type construction.  For $k, \ell \in \Spec$ we define
\begin{equation*}
  d_\Spec( k, \ell)= \adjustlimits \sup_{ k'\labpre k, k'\in \Impl{}\,}
  \inf_{\, \ell'\labpre \ell, \ell'\in \Impl{}} d_{ \Impl \Spec}( k', \ell').
\end{equation*}
Note that $d_\Spec$ is asymmetric, and that $d_\Spec( k, \ell)= 0$ if
and only if $k\labpre \ell$.  Also, $d_\Spec( k, \ell)= d_{ \Impl
  \Spec}( k, \ell)$ for all $k, \ell\in \Impl \Spec$.  In more elementary
terms, we can express $d_\Spec$ as follows:
\begin{align*}
  d_\Spec\big(( a_1, I_1),( a_2, I_2)\big) &= \infty \quad \text{if }
  a_1\ne a_2 \\
  d_\Spec\big(( a,[ x_1, y_1]),( a,[ x_2, y_2])\big) &= \max(
  x_2 - x_1, y_1 - y_2, 0)
\end{align*}

\begin{definition}
  \label{weightedmodal.de:acc.mo.dist}
  Let $S_1$, $S_2$ be WMTS.  The \emph{modal refinement distance}
  $\md: S_1\times S_2\to \Realnn\cup\{ \infty\}$ from states of
  $S_1$ to states of $S_2$ is the least fixed point of the equations
  \begin{equation*}
    \md( s_1, s_2)= \max\left\{
      \begin{aligned}
        &\adjustlimits \sup_{ s_1\mayto{ k_1}_1 t_1} \inf_{ s_2\mayto{
          k_2}_2 t_2} d_\Spec( k_1, k_2)+ \lambda \md( t_1, t_2)\,, \\
        &\adjustlimits \sup_{ s_2\mustto{ k_2}_2 t_2} \inf_{
          s_1\mustto{ k_1}_1 t_1} d_\Spec( k_1, k_2)+ \lambda \md(
        t_1, t_2)\,.
      \end{aligned}
    \right.
  \end{equation*}
  We define $\md(S_1,S_2) = \md(s_1^0, s_2^0)$, and we write
  $S_1\mr^\epsilon S_2$ if $\md( S_1, S_2)\le \epsilon$.
\end{definition}

\begin{lemma}
  The modal refinement distance is well-defined, and is a hemimetric.
\end{lemma}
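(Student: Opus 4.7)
The plan has three pieces: well-definedness, reflexivity $\md(s,s)=0$, and the triangle inequality. For well-definedness, observe that the right-hand side of the defining equation is an endofunction $F$ on the complete lattice $L = (\Realnn\cup\{\infty\})^{S_1\times S_2}$ equipped with the pointwise order. Since $\sup$, $\inf$, addition with a fixed nonnegative term, scaling by $\lambda \geq 0$, and $\max$ all preserve pointwise inequalities, $F$ is monotone, and Tarski's fixed-point theorem yields the existence of a least fixed point. (One could alternatively mimic Lemma~\ref{simdistax.lem:acc:contraction} and exploit the compact branching assumption together with $\lambda<1$ to exhibit $F$ as a contraction on the complete metric subspace of bounded functions, which would additionally give uniqueness on the finite part; but for a hemimetric the Tarski argument suffices.)

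For reflexivity, I would first establish the elementary fact $d_\Spec(k,k) = 0$ (immediate from the formula after Definition~\ref{weightedmodal.de:acc.mo.dist}'s preamble). Consider the auxiliary function $h_0 : (S_1 \sqcup S_2) \times (S_1 \sqcup S_2) \to \Realnn\cup\{\infty\}$ that takes the value $0$ on the diagonal and $\md$ off the diagonal. For $s_1 = s_2 = s$, the two sup-inf terms defining $F(h_0)(s,s)$ can each be bounded by matching any transition $s \mayto{k} t$ (resp.\ $s \mustto{k} t$) with itself, yielding $d_\Spec(k,k) + \lambda h_0(t,t) = 0$. Thus $F(h_0)(s,s) \leq 0$ on the diagonal, so $h_0$ is a prefix-point of $F$ when considered on the single-WMTS case, and by the least-fixed-point property $\md(s,s) = 0$. (In the setting of comparing two distinct WMTS, reflexivity is interpreted via the disjoint-union construction mentioned below Definition~\ref{wtsjlap.def:dist-require}.)

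For the triangle inequality, given WMTS $S_1, S_2, S_3$ the plan is to show that the function
\[
h(s_1, s_3) \;=\; \inf_{s_2 \in S_2}\bigl(\md(s_1, s_2) + \md(s_2, s_3)\bigr)
\]
is a prefix-point of the defining functional $F$ on $S_1\times S_3$; by the least-fixed-point property this immediately yields $\md(s_1, s_3) \leq h(s_1, s_3)$, which is exactly the triangle inequality. To verify $F(h) \leq h$ on an arbitrary pair $(s_1, s_3)$, I would fix $\varepsilon > 0$ and use compact branching to choose a near-optimal mediator $s_2$. For the may-clause, take any $s_1 \mayto{k_1}_1 t_1$; use the defining equation for $\md(s_1,s_2)$ to pick $s_2 \mayto{k_2}_2 t_2$ with $d_\Spec(k_1,k_2) + \lambda \md(t_1,t_2) \leq \md(s_1,s_2) + \varepsilon$, and then use $\md(s_2,s_3)$ on this same $k_2$-transition to pick $s_3 \mayto{k_3}_3 t_3$ with $d_\Spec(k_2,k_3) + \lambda \md(t_2,t_3) \leq \md(s_2,s_3) + \varepsilon$. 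Combining via the triangle inequality on $d_\Spec$ (which is a hemimetric by a Hausdorff-style argument analogous to Proposition~\ref{wtsjlap.prop:haus}) and the inductive bound $\md(t_1,t_3) \leq \md(t_1,t_2) + \md(t_2,t_3)$ gives the desired estimate up to $(1+\lambda)\varepsilon$. The must-clause is handled dually, starting from $s_3 \mustto{k_3}_3 t_3$, traversing $s_2$, then $s_1$. Letting $\varepsilon \to 0$ closes the argument.

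The main obstacle is the triangle inequality, and more specifically the fact that the intermediate witnesses $s_2$ chosen for the may-clause and for different branches need not coincide with those chosen for the must-clause; this is why one works with an infimum over $s_2$ inside $h$, uses compactness to nearly attain the sup-inf quantifiers, and chains the $d_\Spec$ triangle inequality with the inductive bound on $\md(t_\bullet, t_\bullet)$ after discounting. The uniformity across branches is exactly what makes the prefix-point calculation go through and is where the full strength of the definition via a single least fixed point (rather than separate fixed points per pair of WMTS) pays off.
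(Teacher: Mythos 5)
Your proposal is correct and follows essentially the same route as the paper's (very terse) proof: a fixed-point existence argument for well-definedness, a self-matching argument for $\md( s, s)= 0$, and a prefix-point (``inductive'') argument for the triangle inequality, with Tarski in place of the paper's contraction argument for existence --- a substitution the paper itself makes elsewhere and which is if anything cleaner for a \emph{least} fixed point. The only wording to tidy is in the triangle-inequality step: inside the prefix-point computation the bound you need at the successor states is $h( t_1, t_3)\le \md( t_1, t_2)+ \md( t_2, t_3)$, which holds by the very definition of $h$ as an infimum over mediators, rather than an ``inductive bound'' on $\md$ itself (which is what is being proved and cannot be assumed on cyclic systems); your framework already supplies exactly this, so there is no gap.
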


\begin{proof}
  Like in the proof of Lemma~\ref{weightedmodal.le:impdistmet}, the argument for
  existence of a unique least fixed point to the defining equations is
  that they define a contraction.  The triangle inequality can again be
  shown inductively, and the property $\md( s, s)= 0$ is clear. \qed
\end{proof}

We can now give a precise definition of compact branching:

\begin{definition}
  \label{weightedmodal.de:comp}
  A WMTS $S$ is said to be \emph{compactly branching} if the sets $\{(
  s', k)\mid s\mayto{ k} s'\}, \{( s', k)\mid s\mustto{ k} s'\}\subseteq
  S\times \Spec$ are compact under the symmetrized product distance $\bar
  d_\textup{\textsf{m}}\times \bar d_\Spec$ for every $s\in S$.
\end{definition}

The notion of compact branching was first introduced, for a formalism
of \emph{metric transition systems},
in~\cite{journals/anyas/Breugel96}.  It is a natural generalization of
the standard requirement on transition systems to be \emph{finitely
  branching} to a distance setting; we will need it for the property
that continuous functions defined on the sets $\{( s', k)\mid s\mayto{
  k} s'\}, \{( s', k)\mid s\mustto{ k} s'\}\subseteq S\times \Spec$, for
some $s\in S$, attain their infimum and supremum, see
Lemma~\ref{weightedmodal.le:family} and its proof below.

Thus, we shall henceforth assume all our WMTS to be compactly
branching.  The following lemma sets up some sufficient conditions for
this to be the case.

\begin{lemma}
  Let $S$ be a WMTS and define the sets $L_i( s, a)$, $U_i( s, a)$ for
  all $s\in S$, $a\in \Sigma$ and $i\in\{ 1, 2\}$ by
  \begin{align*}
    L_1( s, a) &= \{ l\mid s\mayto{ a,[ l, r]} s'\}, &
    L_2( s, a) &= \{ l\mid s\mustto{ a,[ l, r]} s'\}, \\
    U_1( s, a) &= \{ r\mid s\mayto{ a,[ l, r]} s'\}, 
    & U_2( s, a) &= \{ r\mid s\mustto{ a,[ l, r]} s'\}.
  \end{align*}
  Then $S$ is compactly branching if
  \begin{itemize}
  \item for all $s\in S$, any Cauchy sequence $( s'_n)_{ n\in \Nat}$
    in $\{ s'\mid s\mayto{} s'\}$ (with pseudometric
    $\bar d_\textup{\textsf{m}}$) has
    $\lim_{ n\to \infty} s_n\in\{ s'\mid s\mayto{} s'\}$, and
    likewise, any Cauchy sequence $( s'_n)_{ n\in \Nat}$ in
    $\{ s'\mid s\mustto{} s'\}$ has
    $\lim_{ n\to \infty} s_n\in\{ s'\mid s\mustto{} s'\}$, and
  \item for all $s\in S$, $a\in \Sigma$ and $i\in\{ 1, 2\}$, $L_i( s, a)$ is
    finite or $-\infty\in L_i( s, a)$, and $U_i( s, a)$ is finite or
    $\infty\in U_i( s, a)$.
  \end{itemize}
\end{lemma}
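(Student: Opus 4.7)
The plan is to establish sequential compactness of the sets $M_s := \{(s', k) \mid s \mayto{k} s'\}$ and $N_s := \{(s', k) \mid s \mustto{k} s'\}$, from which compactness under the symmetrised product pseudometric $\bar d_\textup{\textsf{m}} \times \bar d_\Spec$ follows by standard arguments in pseudometric spaces. I will detail the argument for $M_s$, using the $i = 1$ instances of the hypotheses; the treatment of $N_s$ is analogous with $i = 2$.

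Given a sequence $((s'_n, k_n))_{n \in \mathbb{N}}$ in $M_s$, I would extract a convergent subsequence in stages. Writing $k_n = (a_n, [l_n, r_n])$, finiteness of $\Sigma$ first yields a subsequence with $a_n = a$ constant. On this subsequence the left endpoints $l_n$ lie in $L_1(s, a)$, and I apply the second hypothesis: if $L_1(s, a)$ is finite, pigeonhole produces a further subsequence with $l_n = l^*$ constant; if $L_1(s, a)$ is infinite, then $-\infty \in L_1(s, a)$, and I split the indices into those with $l_n = -\infty$ and those with $l_n \in \mathbb{Z}$, retaining whichever part is infinite and iterating on the integer part. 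The same reduction is applied to the right endpoints $r_n \in U_1(s, a)$ using $\infty$ in place of $-\infty$. After these two reductions the label component is constant at some $k^* = (a, [l^*, r^*])$, so the $\bar d_\Spec$-contribution to the product pseudometric vanishes, and convergence reduces to convergence of $(s'_n)$ under $\bar d_\textup{\textsf{m}}$.

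To conclude, I would apply the first hypothesis to $(s'_n) \subseteq \{s'' \mid s \mayto{} s''\}$: after extracting a Cauchy subsequence, the stated closure-under-Cauchy-limits condition delivers a limit $s'^* \in \{s'' \mid s \mayto{} s''\}$. Since every remaining term satisfies $s \mayto{k^*} s'_n$ with the common label $k^*$, a direct verification using the same closure condition places the limit transition $s \mayto{k^*} s'^*$ in the may-relation, so $(s'^*, k^*) \in M_s$.

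The main obstacle is the label-stabilisation step: a priori, a subsequence $(l_n)$ of pairwise distinct finite integers neither stabilises nor becomes Cauchy in $\bar d_\Spec$, since $\bar d_\Spec\bigl((a,[-\infty, r^*]),(a,[l,r^*])\bigr) = \infty$ for every finite $l$. The crux of the proof is therefore to show that such a subsequence cannot possibly contribute to a convergent subsequence of the original sequence in the product pseudometric, and may hence be discarded in favour of the $l_n = -\infty$ or eventually-constant cases. Making this dichotomy precise — and the symmetric statement for right endpoints — is the technical heart of the argument, but it falls out from the second hypothesis combined with the structural form of $\bar d_\Spec$ on $\II$.
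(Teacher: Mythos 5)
Your overall architecture---pass to sequential compactness, stabilise the action using finiteness of $\Sigma$, stabilise the interval endpoints using the second hypothesis, then invoke the first hypothesis for the state component---runs parallel to the paper's proof, which instead invokes Tychonoff's theorem and reduces compactness of the label set, per action, to compactness of the four endpoint sets $L_i(s,a)$, $U_i(s,a)$. The difficulty is exactly the step you flag as the technical heart: it does not ``fall out'', and the dichotomy you appeal to does not exist. When $L_1(s,a)$ is infinite, the hypothesis only forces $-\infty\in L_1(s,a)$; it still permits infinitely many pairwise distinct \emph{finite} lower endpoints $l_n$. The labels $(a,[l_n,r])$ are then pairwise at symmetrised distance at least $1$, and each sits at distance $\infty$ from every label whose lower endpoint is $-\infty$, since $d_\Spec\big((a,[-\infty,r]),(a,[l,r])\big)=\max\big(l-(-\infty),\, 0\big)=\infty$ for finite $l$. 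Such a sequence has no Cauchy subsequence at all, and ``discarding'' it is not a move available to you: sequential compactness demands a convergent subsequence of \emph{every} sequence, so a single sequence without one is already a witness of non-compactness, not a case that may be set aside. Concretely, the one-state WMTS with may-loops $s\mayto{ a,[ n, n]} s$ for all $n\in \Int$ together with $s\mayto{ a,[ -\infty, \infty]} s$ satisfies both hypotheses, yet its set $\{( s', k)\mid s\mayto{ k} s'\}$ has all pairwise distances at least $1$ and is infinite, hence is not compact.

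For context, the paper's own proof makes the same leap at the corresponding point: it asserts that a sequence of extended-integer endpoints converges iff it is eventually stable or tends to $-\infty$ or $\infty$, and that an infinite endpoint set is compact once it contains the appropriate infinite limit point. That is convergence in the order topology on $\Int\cup\{ -\infty, \infty\}$, not in the topology induced by $\bar d_\Spec$, under which an interval with an infinite endpoint is at distance $\infty$ from every interval whose corresponding endpoint is finite. So your write-up is structurally faithful to the intended argument and commendably honest about where the difficulty lies, but the step you defer cannot be completed as stated; repairing it would require strengthening the second hypothesis (for instance to outright finiteness of the sets $L_i(s,a)$ and $U_i(s,a)$) or changing the distance on labels.
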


Note that the first property mimicks (and generalizes) standard
properties of finite branching and \emph{saturation},
\cf~\cite[Sect.~3.3]{DBLP:journals/toplas/Sangiorgi09}.  The intuition
is that if $s$ has (either \textit{may} or \textit{must}) transitions to
a converging sequence of states, then it also has a transition to the
limit.

\begin{proof}
  The first condition implies that the sets
  $\{ s'\in S\mid s\mayto{} s'\}$ and $\{ s'\in S\mid s\mustto{} s'\}$
  are compact in the pseudometric $\bar d_\textup{\textsf{m}}$ for all
  $s\in S$.  By Tychonoff's theorem, products of compact sets are
  compact, so we need only show that the second condition implies that
  the sets $\{ k\in \Spec\mid s\mayto{ k} s'\}$ and
  $\{ k\in \Spec\mid s\mustto{ k} s'\}$ are compact in the
  pseudometric $\bar d_\Spec$ for every $s\in S$.

  Let $s\in S$.  By definition of $d_\Spec$, the sets $\{ k\mid s\mayto{
  k} s'\}$, $\{ k\mid s\mustto{ k} s'\}$ fall into connected components
  $\{ I\mid s\mayto{ a, I} s'\}$, $\{ I\mid s\mustto{ a, I} s'\}$ for
  all $a\in \Sigma$, hence the former are compact iff all the
  latter are.  These in turn are compact iff the four sets
  $L_i$, $U_i$ in the lemma, collecting lower and upper bounds of
  intervals, are compact.  Now interval bounds are extended integers, so
  a sequence in $L_i$ or $U_i$ converges iff it is eventually
  stable or goes towards $-\infty$ or $\infty$.  If the sets are finite,
  eventual stability is the only option; if they are infinite, they need
  to include the limit points $-\infty$ (for the lower interval bounds
  in $L_i$) or $\infty$ (for the upper interval bounds in $U_i$). \qed
\end{proof}

We extend the notion of relation families from revious chapters to
modal refinement distance.  We define a \emph{modal refinement family}
as an $\Realnn$-indexed family of relations
$\cal R=\{ R_\epsilon\subseteq S_1\times S_2\mid \epsilon\ge 0\}$ such
that for any $\epsilon$ and any $(s_1, s_2)\in R_\epsilon\in \cal R$,
\begin{itemize}
\item whenever $s_1\mayto{ k_1} t_1$ for some $k_1 \in \Spec$,
  $t_1 \in S_1$, then there exists $s_2\mayto{ k_2} t_2$ for some
  $k_2 \in \Spec$, $t_2 \in S_2$, such that
  $d_\Spec( k_1, k_2)\le \epsilon$ and
  $( t_1, t_2)\in R_{ \epsilon'}\in \cal R$ for some
  $\epsilon'\le \lambda^{ -1}\big( \epsilon- d_\Spec( k_1, k_2)\big)$,
\item whenever $s_2\mustto{ k_2} t_2$ for some $k_2 \in \Spec$,
  $t_2 \in S_2$, then there exists $s_1\mustto{ k_1} t_1$ for some
  $k_1 \in \Spec$, $t_1 \in S_1$, such that
  $d_\Spec( k_1, k_2)\le \epsilon$ and
  $( t_1, t_2)\in R_{ \epsilon'}\in \cal R$ for some
  $\epsilon'\le \lambda^{ -1}\big( \epsilon- d_\Spec( k_1, k_2)\big)$.
\end{itemize}
Note that modal refinement families are
\begin{itemize}
\item \emph{upward closed} in the sense that $( s_1, s_2)\in R_\epsilon$
  implies that $( s_1, s_2)\in R_{ \epsilon'}$ for all $\epsilon'\ge
  \epsilon$, and
\item \emph{downward compact} in the sense that for any set $E\subseteq
  \Realnn$, if $(s_1,s_2) \in R_\epsilon$ for all $\epsilon\in E$, then
  also $(s_1,s_2) \in R_{\inf E}$.  This property follows from the
  assumption that our WMTS are compactly branching.
\end{itemize}

Following the proof strategy developed in previous chapters for
implementations, we can show the following characterization of modal
refinement distance by modal refinement families:

\begin{lemma}
  \label{weightedmodal.le:family}
  $S_1\mr^\epsilon S_2$ iff there is a modal refinement family $\cal R$
  with $( s_1^0, s_2^0)\in R_\epsilon \in \cal R$.
\end{lemma}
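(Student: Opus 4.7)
The plan is to prove the two directions of the iff using the fixed-point characterization from Definition~\ref{weightedmodal.de:acc.mo.dist}, following the same template as Proposition~\ref{wtsjlap.lem:characterization} and the relation-family characterizations of Lemma~\ref{wtscai.lem:facts} and Theorem~\ref{qltbt.th:relation_fam}.

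For the forward direction ($\Rightarrow$), assume $\md(S_1, S_2) \le \epsilon$ and define the family $\cal R = \{R_\alpha \mid \alpha \ge 0\}$ canonically by $R_\alpha = \{(s_1, s_2) \in S_1 \times S_2 \mid \md(s_1, s_2) \le \alpha\}$. Then $(s_1^0, s_2^0) \in R_\epsilon$ by assumption. To verify the may-condition, pick any $(s_1, s_2) \in R_\alpha$ and any $s_1 \mayto{k_1} t_1$. The fixed-point equation yields
\begin{equation*}
  \inf_{s_2 \mayto{k_2} t_2} \bigl[d_\Spec(k_1, k_2) + \lambda \md(t_1, t_2)\bigr] \le \md(s_1, s_2) \le \alpha.
\end{equation*}
Compact branching (Def.~\ref{weightedmodal.de:comp}) ensures that this infimum is attained: the set $\{(t_2, k_2) \mid s_2 \mayto{k_2} t_2\}$ is compact, and the function $(t_2, k_2) \mapsto d_\Spec(k_1, k_2) + \lambda \md(t_1, t_2)$ is continuous (as $\md$ is a hemimetric whose symmetrization is continuous in the product topology). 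Hence there exists $s_2 \mayto{k_2} t_2$ with $d_\Spec(k_1, k_2) + \lambda \md(t_1, t_2) \le \alpha$, giving both $d_\Spec(k_1, k_2) \le \alpha$ and $(t_1, t_2) \in R_{\alpha'}$ for $\alpha' = \md(t_1, t_2) \le \lambda^{-1}(\alpha - d_\Spec(k_1, k_2))$. The must-condition is symmetric, matching a must-transition in $S_2$ by one in $S_1$.

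For the reverse direction ($\Leftarrow$), suppose $\cal R = \{R_\alpha\}$ is a modal refinement family with $(s_1^0, s_2^0) \in R_\epsilon$. Define $h : S_1 \times S_2 \to \Realnn \cup \{\infty\}$ by $h(s_1, s_2) = \inf\{\alpha \ge 0 \mid (s_1, s_2) \in R_\alpha\}$; downward compactness of $\cal R$ gives $(s_1, s_2) \in R_{h(s_1, s_2)}$. Applying the may- and must-properties of $\cal R$ at level $h(s_1, s_2)$, one obtains that for every $s_1 \mayto{k_1} t_1$ there is $s_2 \mayto{k_2} t_2$ with $d_\Spec(k_1, k_2) + \lambda h(t_1, t_2) \le h(s_1, s_2)$, and dually for must-transitions. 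Hence $h$ is a prefixed point of the monotone contraction $\Phi$ whose least fixed point is $\md$, so $\md \le h$ pointwise by the Banach (or Knaster–Tarski) argument used to justify Def.~\ref{weightedmodal.de:acc.mo.dist}. In particular, $\md(S_1, S_2) = \md(s_1^0, s_2^0) \le h(s_1^0, s_2^0) \le \epsilon$.

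The main obstacle is the use of compact branching in the forward direction: the fixed-point equation gives only an infimum bounded by $\alpha$, whereas producing an explicit matching transition requires that this infimum be attained; this is exactly what compact branching buys us, and it is the reason the assumption is introduced in Definition~\ref{weightedmodal.de:comp}. A secondary subtlety is the downward compactness step in the reverse direction, which ensures that the defining property of the family is available at precisely the level $h(s_1, s_2)$ rather than only strictly above it—without this, $h$ would only be a prefixed point of a $\delta$-slack version of $\Phi$, and one would have to pass to the limit $\delta \to 0$ to close the argument.
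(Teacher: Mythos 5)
Your proof is correct and follows essentially the same route as the paper's: the forward direction uses the canonical family $R_\delta=\{(s_1,s_2)\mid \md(s_1,s_2)\le\delta\}$ with compact branching to realize the infimum, and the reverse direction defines $h(s_1,s_2)=\inf\{\delta\mid(s_1,s_2)\in R_\delta\}$, invokes downward compactness, and shows $h$ is a pre-fixed point of the defining functional so that $\md\le h$ by Tarski. The only cosmetic difference is that you spell out the continuity argument behind the attained infimum and mention Banach as an alternative to Knaster--Tarski, neither of which changes the substance.
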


\begin{proof}
  First, assume that $S_1\mr^\epsilon S_2$,
  \ie~$\md(s^0_1,s^0_2) \le \epsilon$, and define a relation family
  $\cal R = \{ R_\delta \mid \delta \ge 0 \}$ by
  $R_\delta = \{ (s_1,s_2) \in S_1 \times S_2 \mid \md(s_1,s_2) \le
  \delta \}$ for all $\delta \ge 0$, then
  $(s^0_1,s^0_2) \in R_\epsilon$ holds by assumption.  We show that
  $\cal R$ is a modal refinement family.  Let $(s_1,s_2) \in R_\delta$
  for some $\delta \ge 0$, then by definition we know that
  $\md(s_1,s_2) \le \delta$. Assume $s_1 \mayto{k_1}_1 t_1$. From
  $\md(s_1,s_2) \le \delta$ we can infer that
  \begin{equation*}
    \inf_{ s_2\mayto{ k_2}_2 t_2} d_\Spec( k_1, k_2)+ \lambda
    \md( t_1, t_2) \le \delta\,.
  \end{equation*}
  Hence, because $S_2$ is compactly branching, there exists a
  may-transition $s_2 \mayto{k_2} t_2$ such that
  $d_\Spec( k_1, k_2) \le \delta$ and
  $\md(t_1,t_2) \le \lambda^{-1}(\delta - d_\Spec(k_1,k_2))$. The latter
  implies that $(t_1,t_2) \in R_{\delta'}$ for some
  $\delta' \le \lambda^{-1}(\delta - d_\Spec(k_1,k_2))$ which was to be
  shown. The argument for the other assertion for must-transitions is
  symmetric. This proves that there is a modal refinement family
  $\cal R$ such that $(s^0_1,s^0_2) \in R_\epsilon \in \cal R$.

  For the reverse direction, assume that $( s_1^0, s_2^0)\in R_\epsilon
  \in \cal R$ for some modal refinement family $\cal R = \{R_\epsilon \mid
  \epsilon \ge 0 \}$. We prove that $(s_1,s_2) \in R_\delta$, for some
  $\delta \ge 0$, implies $\md(s_1,s_2) \le \delta$. The claim $S_1
  \mr^\epsilon S_2$ then follows from the assumption $(s^0_1,s^0_2)
  \in R_\epsilon$.

  To this end, observe that the space of functions $\Delta = [S_1 \times
  S_2 \to \Realnn \cup \{\infty\}]$ forms a complete lattice, when the
  partial order $\le_\Delta$ is defined such that for $f, f' \in \Delta$, $f
  \le_\Delta f'$ iff $f(s_1,s_2) \le f'(s_1,s_2)$ for all $s_1\in S_1$, $s_2\in
  S_2$. Moreover, since $\max, \sup, \inf$ and $+$ are monotone, the
  function $D$ defined for all $f \in \Delta$ by
  \begin{equation*}
    D(f) = \max
      \begin{cases}
        &\adjustlimits \sup_{ s_1\mayto{ k_1}_1 t_1} \inf_{ s_2\mayto{
          k_2}_2 t_2} d_\Spec( k_1, k_2)+ \lambda f( t_1, t_2)\,, \\
        &\adjustlimits \sup_{ s_2\mustto{ k_2}_2 t_2} \inf_{
          s_1\mustto{ k_1}_1 t_1} d_\Spec( k_1, k_2)+ \lambda f(
        t_1, t_2)
      \end{cases}
  \end{equation*}
  is a monotone endofunction on $\Delta$, hence by Tarski's fixed
  point theorem, $D$ has a least fixed point. Now define
  $h(s_1,s_2) = \inf\{\delta \mid (s_1,s_2) \in R_\delta \in \cal
  R\}$; since $R_\delta$ is downward compact, we have
  $(s_1,s_2)\in R_{h(s_1,s_2)}$. By showing that $h$ is a pre-fixed
  point of $D$, \ie~that $D(h) \le_\Delta h$, we get that
  $(s_1,s_2)\in R_\delta$ implies that $\md(s_1,s_2) \le \delta$,
  since $h(s_1,s_2) \le \delta$ and $\md(s_1,s_2) \le h(s_1,s_2)$.

  Since $(s_1,s_2)\in R_{h(s_1,s_2)}$, every $s_1\mayto{k_1}s'_1$ can be
  matched by some $s_2\mayto{k_2} s'_2$ such that $d_\Spec(k_1,k_2) +
  \lambda \delta' \le h(s_1,s_2)$ for some $\delta'$ where $(s'_1,s'_2)
  \in R_{\delta'}$, implying $h(s'_1,s'_2) \le \delta'$, but then also
  $d_\Spec(k_1,k_2) + \lambda h(s'_1,s'_2) \le h(s_1,s_2)$. Similarly,
  every $s_2\mustto{k_2} s'_2$ has a match $s_1 \mustto{k_1} s'_1$ such
  that $d_\Spec(k_1,k_2) + \lambda h(s'_1,s'_2) \le h(s_1,s_2)$. Hence we
  have $D(h) \le_\Delta h$ which was to be shown. \qed
\end{proof}

The next theorems show that modal refinement distance indeed
overapproximates thorough refinement distance, and that it is exact
for deterministic WMTS.  Note that nothing general can be said about
the precision of the overapproximation in the nondeterministic case;
the standard counterexample given for the Boolean case
in~\cite{DBLP:journals/tcs/BenesKLS09} shows that there exist WMTS for
which $\thd(S_1,S_2) = 0$ but $\md(S_1,S_2) = \infty$.

\begin{theorem}
  \label{weightedmodal.th:dtledm}
  For WMTS $S_1$, $S_2$ we have $\thd( S_1, S_2)\le \md( S_1,
  S_2)$.
\end{theorem}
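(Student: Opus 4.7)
The plan is to show that for every $I_1 \in \llbracket S_1\rrbracket$ there exists $I_2 \in \llbracket S_2\rrbracket$ with $d(I_1, I_2) \le \md(S_1, S_2)$; taking supremum over $I_1$ then gives $\thd(S_1, S_2) \le \md(S_1, S_2)$. Write $\epsilon = \md(S_1, S_2)$; if $\epsilon = \infty$ there is nothing to prove. Since $I_1 \mr S_1$ means $\md(I_1, S_1) = 0$, the triangle inequality for the modal refinement distance (already established in the lemma that $\md$ is a hemimetric) yields $\md(I_1, S_2) \le \md(I_1, S_1) + \md(S_1, S_2) = \epsilon$. So the task reduces to: given an implementation $I_1$ with $\md(I_1, S_2) \le \epsilon$, produce an implementation $I_2 \in \llbracket S_2\rrbracket$ with $d(I_1, I_2) \le \epsilon$.

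By Lemma~\ref{weightedmodal.le:family}, fix a modal refinement family $\cal R = \{R_\delta\}$ with $(i_1^0, s_2^0) \in R_\epsilon$. I construct $I_2$ coinductively: its state space is the subset of $I_1 \times S_2$ reachable from the initial pair $(i_1^0, s_2^0)$ through the transitions described below, and transitions out of a pair $(i, s) \in R_\delta$ are added in two passes. \emph{(a)} For each $i \mustto{a, x} i'$ in $I_1$, the may-clause of $\cal R$ yields a may-transition $s \mayto{a, I} s'$ in $S_2$ with $d_\Spec((a, x), (a, I)) \le \delta$ and $(i', s') \in R_{\delta'}$ for some $\delta'$ satisfying $d_\Spec((a, x), (a, I)) + \lambda \delta' \le \delta$; pick $x' \in I$ minimizing $|x - x'|$ (so that $|x - x'| = d_\Spec((a, x), (a, I))$) and add $(i, s) \mustto{a, x'} (i', s')$ to $I_2$. \emph{(b)} Symmetrically, for each must-transition $s \mustto{a, I} s'$ of $S_2$, the must-clause of $\cal R$ yields $i \mustto{a, x} i'$ in $I_1$ with analogous bounds; again pick $x' \in I$ closest to $x$ and add $(i, s) \mustto{a, x'} (i', s')$ to $I_2$. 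Compact branching of $I_1$ and $S_2$ is what guarantees that the infima in the defining equations of $\md$ are realized by actual transitions.

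It remains to verify two things. First, $I_2 \mr S_2$: every transition $(i, s) \mustto{a, x'} (i', s')$ in $I_2$ was added via (a) or (b), and in both cases it came paired with a may-transition $s \mayto{a, I} s'$ of $S_2$ with $x' \in I$, so the may-clause of modal refinement holds; the must-clause holds by construction step (b). Second, $d(I_1, I_2) \le \epsilon$: define a relation family $\cal D = \{D_\delta\}$ on $I_1 \times I_2$ by $(i, (i, s)) \in D_\delta$ iff $(i, s) \in R_\delta$, so that $(i_1^0, (i_1^0, s_2^0)) \in D_\epsilon$. A direct check, using the bounds $|x - x'| \le d_\Spec((a, x), (a, I))$ and the inequality $d_\Spec((a, x), (a, I)) + \lambda \delta' \le \delta$ from the family $\cal R$, shows that $\cal D$ satisfies the analogue of Lemma~\ref{weightedmodal.le:family} for the implementation distance (which, being a symmetric bisimulation-type distance, has the same relation-family characterization as $\md$); clause (a) of the construction controls the sup over $I_1$-transitions in Definition~\ref{weightedmodal.de:acc.dist} and clause (b) controls the sup over $I_2$-transitions. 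The main obstacle will be orchestrating (a) and (b) coherently: I must ensure that the transitions added via (b) do not violate the distance bound $d(I_1, I_2) \le \epsilon$, which is precisely where the must-clause of the modal refinement family is invoked to exhibit, for every added $I_2$-transition, a corresponding $I_1$-transition with controlled label distance.
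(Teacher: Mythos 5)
Your proposal is correct and follows essentially the same route as the paper's proof: obtain a refinement family relating $I_1$ to $S_2$ (you do this via the triangle inequality and $\md(I_1,S_1)=0$, the paper by composing the witness for $I_1\mr S_1$ with the family for $\md(S_1,S_2)$ — the same thing), then build $I_2$ by snapping each interval to the point nearest the corresponding $I_1$-weight, and check $I_2\mr S_2$ and $d(I_1,I_2)\le\md(S_1,S_2)$ with the induced family. The only cosmetic difference is that you carve the state space of $I_2$ out of $I_1\times S_2$ whereas the paper reuses $S_2$ itself; your choice is if anything slightly cleaner, since it keeps the pairing with $I_1$ explicit when matching $I_2$-transitions back to $I_1$-transitions.
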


\begin{proof}
  If $\md( S_1, S_2)=\infty$, we have nothing to prove.  Otherwise,
  let $\cal R=\{ R_\epsilon\subseteq S_1\times S_2\mid \epsilon\ge 0\}$ be a
  modal refinement family which witnesses $\md( S_1, S_2)$, \ie~such
  that $( s_1^0, s_2^0)\in R_{ \md( S_1, S_2)}$, and let
  $I_1\in\llbracket S_1\rrbracket$.  We have to expose $I_2\in\llbracket
  S_2\rrbracket$ for which $d( I_1, I_2)\le \md( S_1, S_2)$.

  Let $\tilde R\subseteq I_1\times S_1$ be a witness for $I_1\mr S_1$,
  define
  $R'_\epsilon= \tilde R\circ R_\epsilon\subseteq I_1\times S_2$ for
  all $\epsilon\ge 0$, and let
  $\cal R'=\{ R'_\epsilon\mid \epsilon\ge 0\}$.  The states of
  $I_2=( I_2, i_2^0, \Impl \Spec, \mto_{ I_2})$ are $I_2= S_2$ with
  $i_2^0= s_2^0$, and the transitions we define as follows:

  For any $i_1\tto{ k_1'}_{ I_1} j_1$ and any $s_2\in S_2$ for which $(
  i_1, s_2)\in R'_\epsilon\in \cal R'$ for some $\epsilon$, we have
  $s_2\mayto{ k_2}_2 t_2$ in $S_2$ with $d_\Spec( k_1', k_2)\le \epsilon$
  and $( j_1, t_2)\in R'_{ \epsilon'}\in \cal R'$ for some $\epsilon'\le
  \lambda^{ -1}\big( \epsilon- d_\Spec( k_1', k_2)\big)$.  Write $k_1'=(
  a_1', x_1')$ and $k_2=\big( a_2,[ x_2, y_2]\big)$, then we must have
  $a_1'= a_2$.  Let
  \begin{equation}
    x_2'=
    \begin{cases}
      x_2 &\text{if } x_1'< x_2, \\
      x_1' &\text{if } x_2\le x_1'\le y_2, \\
      y_2 &\text{if } x_1'> y_2
    \end{cases}
    \label{weightedmodal.eq:x_2'_app}
  \end{equation}
  and $k_2'=( a_2, x_2')$, and put $s_2\tto{ k_2'}_{ I_2} t_2$ in $I_2$.
  Note that
  \begin{equation}
    d_\Spec( k_1', k_2')= d_\Spec( k_1', k_2).
    \label{weightedmodal.eq:dk1'k2'=dk1'k2_app}
  \end{equation}

  Similarly, for any $s_2\mustto{ k_2}_2 t_2$ in $S_2$ and any
  $i_1\in I_1$ with $( i_1, s_2)\in R'_\epsilon\in \cal R'$ for some
  $\epsilon$, we have $i_1\tto{ k_1'}_{ I_1} j_1$ with
  $d_\Spec( k_1', k_2)\le \epsilon$ and
  $( j_1, t_2)\in R'_{ \epsilon'}\in \cal R'$ for some
  $\epsilon'\le \lambda^{ -1}\big( \epsilon- d_\Spec( k_1', k_2)\big)$.
  Write $k_1'=( a_1', x_1')$ and $k_2=( a_2,[ x_2, y_2])$, define
  $x_2'$ as in~\eqref{weightedmodal.eq:x_2'_app} and
  $k_2'=( a_2, x_2')$, and put $s_2\tto{ k_2'}_{ I_2} t_2$ in $I_2$.

  We show that the identity relation $\id_{ S_2}=\{( s_2, s_2)\mid
  s_2\in S_2\}\subseteq S_2\times S_2$ witnesses $I_2\mr S_2$.  Let
  first $s_2\tto{ k_2'}_{ I_2} t_2$; we must have used one of the two
  constructions above for creating this transition.  In the first
  case, we have $s_2\mayto{ k_2}_2 t_2$ with $k_2'\labpre k_2$,
  and in the second case, we have $s_2\mustto{ k_2}_2 t_2$, hence also
  $s_2\mayto{ k_2}_2 t_2$, with the same property.  For a transition
  $s_2\mustto{ k_2}_2 t_2$ on the other hand, we have introduced
  $s_2\tto{ k_2'}_{ I_2} t_2$ in the second construction above, with
  $k_2'\labpre k_2$.

  We also want to show that the family $\cal R'$ is a witness for
  $d( I_1, I_2)\le \md( S_1, S_2)$.  We have
  $( i_1^0, s_2^0)\in R'_{ \md( S_1, S_2)}= \tilde R\circ R_{ \md(
    S_1, S_2)}$, so let $( i_1, s_2)\in R'_\epsilon\in \cal R'$ for some
  $\epsilon\ge 0$.  For any $i_1\tto{ k_1'}_{ I_1} j_1$ we have
  $s_2\mayto{ k_2}_2 t_2$ and $s_2\tto{ k_2'}_{ I_2} t_2$ by the first
  part of our construction above, with
  $d_\Spec( k_1', k_2')= d_\Spec( k_1', k_2)\le \epsilon$ because
  of~\eqref{weightedmodal.eq:dk1'k2'=dk1'k2_app}, and also
  $( j_1, t_2)\in R'_{ \epsilon'}\in \cal R'$ for some
  $\epsilon'\le \lambda^{ -1}\big( \epsilon- d_\Spec( k_1', k_2)\big)$.
  For any $s_2\tto{ k_2'}_{ I_2} t_2$, we must have used one of the
  constructions above to introduce this transition, and both give us
  $i_1\tto{ k_1'}_{ I_1} j_1$ with $d_\Spec( k_1', k_2')\le \epsilon$
  and $( j_1, t_2)\in R'_{ \epsilon'}\in \cal R'$ for some
  $\epsilon'\le \lambda^{ -1}\big( \epsilon- d_\Spec( k_1',
  k_2)\big)$. \qed
\end{proof}

The fact that modal refinement only equals thorough refinement for
deterministic specifications is well-known from the theory of modal
transition systems~\cite{DBLP:conf/avmfss/Larsen89}, and the special
case of $S_2$ deterministic is important, as it can be
argued~\cite{DBLP:conf/avmfss/Larsen89} that deterministic
specifications are sufficient for applications.

\begin{theorem}
  \label{weightedmodal.th:det-dteqdm}
  If $S_2$ is deterministic, then $\thd( S_1, S_2)= \md( S_1, S_2)$.
\end{theorem}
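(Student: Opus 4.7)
By Theorem~\ref{weightedmodal.th:dtledm}, only the direction $\md(S_1,S_2) \le \thd(S_1,S_2)$ remains. Write $\epsilon = \thd(S_1,S_2)$; if $\epsilon = \infty$ the claim is trivial, so assume $\epsilon < \infty$. For $s_i \in S_i$ let $S_i^{s_i}$ denote $S_i$ with initial state replaced by $s_i$, and define the family $\cal R = \{R_\delta \mid \delta \ge 0\}$ by
$$R_\delta = \{(s_1,s_2) \in S_1 \times S_2 \mid \thd(S_1^{s_1}, S_2^{s_2}) \le \delta\}.$$
Then $(s_1^0, s_2^0) \in R_\epsilon$, so by Lemma~\ref{weightedmodal.le:family} it suffices to show that $\cal R$ is a modal refinement family. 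The recurring use of determinism of $S_2$ will be that every $a$-(must-)transition of any $I_2 \in \llbracket S_2^{s_2} \rrbracket$ refines the unique $a$-(must-)transition of $s_2$, so its target is an implementation of the uniquely determined successor state in $S_2$.

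For the may clause, fix $(s_1,s_2) \in R_\delta$ and $s_1 \mayto{k_1} t_1$ with $k_1 = (a,[l_1,r_1])$. First, $s_2$ admits at least one $a$-may-transition (otherwise any $I_1 \in \llbracket S_1^{s_1} \rrbracket$ built so as to exercise $k_1$ lies at infinite distance from every $I_2 \in \llbracket S_2^{s_2} \rrbracket$), and by determinism this transition is unique: call it $s_2 \mayto{k_2} t_2$ with $k_2 = (a,[l_2,r_2])$. Given any integer $x \in [l_1,r_1]$ and any $J_1 \in \llbracket S_1^{t_1} \rrbracket$, construct an implementation $I_1(x,J_1) \in \llbracket S_1^{s_1} \rrbracket$ whose initial state has a must-transition $(a,x)$ leading into $J_1$. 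For any $\eta > 0$, pick $I_2 \in \llbracket S_2^{s_2} \rrbracket$ with $d(I_1(x,J_1), I_2) \le \delta + \eta$; determinism then produces a transition $(a,y) \to j_2$ in $I_2$ with $y \in [l_2,r_2]$, the sub-implementation $J_2 \in \llbracket S_2^{t_2} \rrbracket$ rooted at $j_2$, and $|x-y| + \lambda\, d(J_1,J_2) \le \delta + 2\eta$. Letting $\eta \to 0$ and taking the appropriate sups/infs yields $d_\Spec(k_1,k_2) \le \delta$; in addition, choosing $x$ that approximately realises the sup defining $d_\Spec(k_1,k_2)$ gives $\thd(S_1^{t_1}, S_2^{t_2}) \le \lambda^{-1}(\delta - d_\Spec(k_1,k_2))$, i.e.\ $(t_1,t_2) \in R_{\delta'}$ with admissible $\delta'$.

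The main obstacle is the must clause, since $s_1$ may admit several must-transitions on the same action $a$ and one cannot a priori single out which one matches $s_2 \mustto{k_2} t_2$. Enumerate the $a$-must-transitions of $s_1$ as $\{s_1 \mustto{k_1^i} t_1^i\}_i$; this set is non-empty by the same reasoning as in the may case, because every $I_2 \in \llbracket S_2^{s_2} \rrbracket$ inherits an $a$-must-transition from $s_2 \mustto{k_2} t_2$ and any $I_1$ at finite distance from such an $I_2$ must contain a matching $a$-must-transition, which refines to some $s_1 \mustto{k_1^i} t_1^i$. Set $\beta_i = d_\Spec(k_1^i,k_2) + \lambda\, \thd(S_1^{t_1^i}, S_2^{t_2})$ and suppose, for contradiction, $\min_i \beta_i > \delta$, choosing $\zeta > 0$ with $\beta_i \ge \delta + \zeta$ for all $i$. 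Pick $\eta$ with $(1+\lambda)\eta < \zeta$ and, for each $i$, select an integer $x^i \in [l_1^i,r_1^i]$ and $J_1^i \in \llbracket S_1^{t_1^i} \rrbracket$ realising the suprema in $d_\Spec(k_1^i,k_2)$ and $\thd(S_1^{t_1^i},S_2^{t_2})$ up to $\eta$. Assemble a single $I_1 \in \llbracket S_1^{s_1} \rrbracket$ whose $a$-must-transitions are exactly the $(a,x^i) \to J_1^i$ (together with arbitrary realisations of the remaining obligations). For any $I_2 \in \llbracket S_2^{s_2} \rrbracket$, determinism forces every $a$-must-transition of $I_2$ to have the form $(a,y) \to K_2$ with $y \in [l_2,r_2]$ and $K_2 \in \llbracket S_2^{t_2} \rrbracket$, whence $|x^i - y| + \lambda\, d(J_1^i, K_2) \ge \beta_i - (1+\lambda)\eta > \delta$ for every $i$ and every admissible $(y,K_2)$. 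Evaluating the first sup-inf in Definition~\ref{weightedmodal.de:acc.dist} yields $d(I_1,I_2) > \delta$ uniformly in $I_2$, so $\inf_{I_2} d(I_1,I_2) > \delta$, contradicting $\thd(S_1^{s_1}, S_2^{s_2}) \le \delta$. Hence some $i$ satisfies $\beta_i \le \delta$, providing the required matching must-transition and completing the proof that $\cal R$ is a modal refinement family.
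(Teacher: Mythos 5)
Your overall strategy --- defining $R_\delta$ directly as the set of pairs $(s_1,s_2)$ with $\thd(S_1^{s_1},S_2^{s_2})\le\delta$ and checking the refinement-family conditions via Lemma~\ref{weightedmodal.le:family} --- is essentially the paper's (which builds the family by forward closure and then proves inductively that its members satisfy the same thorough-distance bound), and your may-clause goes through. The must clause, however, has a genuine gap. You assert that determinism forces \emph{every} $a$-transition $(a,y)\to K_2$ of an implementation $I_2\in\llbracket S_2^{s_2}\rrbracket$ to have $y$ in the interval of the \emph{must}-transition $s_2\mustto{k_2} t_2$, and you then derive the contradiction from the \emph{first} sup--inf of Definition~\ref{weightedmodal.de:acc.dist}. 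Neither step is sound: determinism only gives a unique $a$-\emph{may}-transition of $s_2$, whose interval may strictly contain the must-interval, and implementations of $s_2$ may add $a$-transitions anywhere in that may-interval. Concretely, take $s_1\mustto{a,[0,10]} t_1$ against $s_2\mustto{a,[0,0]} t_2$ with $s_2\mayto{a,[0,10]} t_2$ (targets without transitions): then $\beta_1=10$ and $x^1=10$, but the implementation of $s_2$ with transitions $(a,0)$ and $(a,10)$ has an $a$-transition with $y=10\notin[0,0]$, and the first sup--inf evaluated at $(a,10)$ is $0$, not $\ge\beta_1-(1+\lambda)\eta$. So the inequality you invoke is false and ``$d(I_1,I_2)>\delta$ uniformly'' does not follow from the first sup--inf.

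The contradiction is recoverable, but it lives in the \emph{second} sup--inf: every $I_2\in\llbracket S_2^{s_2}\rrbracket$ must contain at least one transition $(a,y)\to K_2$ that \emph{realises} $s_2\mustto{k_2} t_2$, and for that particular transition one does have $y$ in the must-interval and $K_2\in\llbracket S_2^{t_2}\rrbracket$; matching it against the transitions of your assembled $I_1$, whose only $a$-transitions are the $(a,x^i)\to J_1^i$, gives $\inf_i\big(|x^i-y|+\lambda\, d(J_1^i,K_2)\big)\ge\min_i\beta_i-(1+\lambda)\eta>\delta$, which is the bound you need --- and is how the paper argues. A related slip occurs in your non-emptiness argument: an $a$-transition of an arbitrary $I_1\in\llbracket S_1^{s_1}\rrbracket$ is only guaranteed to be matched by a \emph{may}-transition of $s_1$, so it does not ``refine to some $s_1\mustto{k_1^i} t_1^i$''; to conclude that $s_1$ has an $a$-must-transition you should take $I_1$ to be a minimal implementation realising only the musts of $S_1^{s_1}$ and again use the second sup--inf.
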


\begin{proof}
  If $\thd( S_1, S_2)= \infty$, we are done by
  Theorem~\ref{weightedmodal.th:dtledm}.  Otherwise, let
  $\cal R=\{ R_\epsilon\mid \epsilon\ge 0\}$ be the smallest relation
  family for which
  \begin{itemize}
  \item $( s_1^0, s_2^0)\in R_{ \thd( S_1, S_2)}$ and
  \item whenever we have $( s_1, s_2)\in R_\epsilon\in \cal R$, $s_1\mayto{
    a, I_1}_1 t_1$, and $s_2\mayto{ a, I_2}_2 t_2$, then $( t_1, t_2)\in
    R_{ \lambda^{ -1}( \epsilon- d_\Spec(( a, I_1),( a, I_2)))}\in \cal R$.
  \end{itemize}
  We show below that $\cal R$ is well-defined (also that
  $\epsilon- d_\Spec\big(( a, I_1),( a, I_2)\big)\ge 0$ in all cases)
  and a modal refinement family.  We will use the convenient notation
  $( s_1, S_1)$ for the WMTS $S_1$ with initial state $s_1^0$ replaced
  by $s_1$, similarly for $( s_2, S_2)$.

  We first show inductively that for any pair of states $( s_1,
  s_2)\in R_\epsilon\in \cal R$ we have $\thd\big(( s_1, S_1),( s_2,
  S_2)\big)\le \epsilon$.  This is obviously the case for $s_1= s_1^0$
  and $s_1= s_2^0$, so assume now that $( s_1, s_2)\in R_\epsilon\in
  \cal R$ is such that $\thd\big(( s_1, S_1),( s_2, S_2)\big)\le
  \epsilon$ and let $s_1\mayto{ a, I_1}_1 t_1$, $s_2\mayto{ a, I_2}_2
  t_2$.
  Let $( q_1', P_1')\in\llbracket( t_1, S_1)\rrbracket$ and $x_1\in
  I_1$.

  There is an implementation $( p_1, P_1)\in\llbracket( s_1,
  S_1)\rrbracket$ for which $p_1\tto{ a, x_1} q_1$ and such that $( q_1,
  P_1)\mr( q_1', P_1')$.  Now
  \begin{equation*}
    \thd\big(( p_1, P_1),( s_2, S_2)\big)\le \thd\big(( p_1, P_1),(
    s_1, S_1)\big)+ \thd\big(( s_1, S_1),( s_2, S_2)\big)\le \epsilon,
  \end{equation*}
  hence we must have $s_2\mayto{ a_2', I_2'}_2 t_2'$ with $d_\Spec\big((
  a, x_1),( a_2', I_2')\big)\le \epsilon$.  But then $a_2'= a$, hence
  by determinism of $S_2$, $I_2= I_2'$ and $t_2= t_2'$.

  The above considerations hold for any $x_1\in I_1$, hence
  $d_\Spec\big(( a, I_1),( a, I_2)\big)\le \epsilon$.  Thus
  $\epsilon- d_\Spec\big(( a, I_1),( a, I_2)\big)\ge 0$, and the
  definition of $\cal R$ above is justified.  Now let $x_2\in I_2$
  such that
  $d_\Spec\big(( a, x_1),( a, x_2)\big)= d_\Spec\big(( a, x_1),( a,
  I_2)\big)$, then there is an implementation
  $( p_2, P_2)\in\llbracket( s_2, S_2)\rrbracket$ for which
  $p_2\tto{ a, x_2} q_2$, and
  \begin{align*}
    d\big(( q_1', P_1'),( q_2, P_2)\big) &\le \lambda^{ -1}\big(
    \epsilon- d_\Spec(( a, x_1),( a, x_2))\big) \\
    &= \lambda^{ -1}\big( \epsilon- d_\Spec(( a, I_1),( a, I_2))\big)\,,
  \end{align*}
  which, as $( q_1', P_1')\in\llbracket( t_1, S_1)\rrbracket$ was
  chosen arbitrarily, entails
  \begin{equation*}
    \thd\big(( s_1, S_1),( s_2,
    S_2)\big)\le \lambda^{ -1}\big( \epsilon- d_\Spec(( a, I_1),( a,
    I_2))\big)\,.
  \end{equation*}

  We are ready to show that $\cal R$ is a modal refinement family.
  Let $( s_1, s_2)\in R_\epsilon\in \cal R$ for some $\epsilon$, and
  assume $s_1\mayto{ a,I_1}_1 t_1$.  Let $x\in I_1$, then there is
  $( p, P^x)\in\llbracket( s_1, S_1)\rrbracket$ with a transition
  $p\tto{ m} q$.  Now
  $\thd\big(( p, P^x),( s_2, S_2)\big)\le \epsilon$, hence we have a
  transition $s_2\mayto{ a, I_2^x}_2 t_2^x$ with
  $d_\Spec\big(( a, x),( a, I_2^x)\big)\le \epsilon$.  Also for any
  other $x'\in I_1$ we have a transition
  $s_2\mayto{ a, I_2^{ x'}}_2 t_2^{ x'}$ with
  $d_\Spec\big(( a, x'),( a, I_2^{ x'})\big)\le \epsilon$, hence by
  determinism of $S_2$, $I_2^x= I_2^{ x'}$ and $t_2^x= t_2^{ x'}$.  It
  follows that there is a unique transition $s_2\mayto{ a, I_2} t_2$,
  and as $d_\Spec\big(( a, x),( a, I_2)\big)\le \epsilon$ for all
  $x\in I_1$, we have
  $d_\Spec\big(( a, I_1),( a, I_2)\big)\le \epsilon$, and
  $( t_1, t_2)\in R_{ \lambda^{ -1}( \epsilon- d_\Spec(( a, I_1),( a,
    I_2)))}\in \cal R$ by definition.

  Now assume $s_2\mustto{ a,I_2}_2 t_2$. Let $( p_1,
  P_1)\in\llbracket( s_1, S_1)\rrbracket$, then we have $( p_2,
  P_2)\in\llbracket( s_2, S_2)\rrbracket$ with $d\big(( p_1, P_1),(
  p_2, P_2)\big)\le \epsilon$.  Now any $( p_2, P_2)\in\llbracket(
  s_2, S_2)\rrbracket$ has $p_2\tto{ a, x_2} q_2$ with $x_2\in I_2$,
  thus there is also $p_1\tto{ a, x_1} q_1$ with $d_\Spec\big(( a, x_1),(
  a, x_2)\big)\le \epsilon$ and $d\big(( q_1, P_1),( q_2,
  P_2)\big)\le \lambda^{ -1}\big( \epsilon- d_\Spec(( a, x_1),( a,
  x_2))\big)$.  This in turn implies that $s_1\mustto{ a, I_1}_1 t_1$
  for some $x_1\in I_1$.  We will be done once we can show $d_\Spec\big((
  a, I_1),( a, I_2)\big)\le \epsilon$, so assume to the contrary
  that there is $x_1'\in I_1$ with $d_\Spec\big(( a, x_1'),( a,
  I_2)\big)> \epsilon$.  Then there must be an implementation $( p_1',
  P_1')\in\llbracket( s_1, S_1)\rrbracket$ with $p_1'\tto{ a, x_1'}
  q_1'$, hence a transition $s_2\mayto{ a, I_2'}_2 t_2'$ with
  $d_\Spec\big(( a, x_1'),( a, I_2')\big)\le \epsilon$.  But $I_2'=
  I_2$ by determinism of $S_2$, a contradiction. \qed
\end{proof}

\section{Relaxation}
\label{weightedmodal.se:relax}

We introduce here a notion of \emph{relaxation} which is specific to the
quantitative setting.  Intuitively, relaxing a specification means to
weaken the quantitative constraints, while the discrete demands on which
transitions may or must be present in implementations are kept.  A
similar notion of \emph{strengthening} may be defined, but we do not use
this here.

\begin{definition}
  \label{weightedmodal.def:relax}
  For WMTS $S$, $S'$ and $\epsilon\ge 0$, $S'$ is an
  \emph{$\epsilon$-relaxation} of $S$ if $S\mr S'$ and
  $S'\mr^\epsilon S$.
\end{definition}

Hence the quantitative constraints in $S'$ may be more permissive than
the ones in $S$, but no new discrete behavior may be introduced.  Also
note that any implementation of $S$ is also an implementation of $S'$,
and no implementation of $S'$ is further than $\epsilon$ away from an
implementation of $S$.  The following proposition relates specifications
to relaxed specifications:

\begin{proposition}
  \label{weightedmodal.pr:wide-two}
  If $S_1'$ and $S_2'$ are $\epsilon$-relaxations of $S_1$ and $S_2$,
  respectively, then $\md( S_1, S_2)- \epsilon\le \md( S_1,
  S_2')\le \md( S_1, S_2)$ and $\md( S_1, S_2)\le \md( S_1',
  S_2)\le \md( S_1, S_2)+ \epsilon$.
\end{proposition}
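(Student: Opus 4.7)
The plan is straightforward: this is essentially a direct consequence of the fact, established earlier in the chapter, that the modal refinement distance $\md$ is a hemimetric, together with the definition of $\epsilon$-relaxation. I would begin by unpacking Definition~\ref{weightedmodal.def:relax}: the assumption that $S_i'$ is an $\epsilon$-relaxation of $S_i$ gives us $\md(S_i, S_i') = 0$ (from $S_i \mr S_i'$, using that $\mr$ is precisely the zero-distance relation for $\md$) and $\md(S_i', S_i) \le \epsilon$ (from $S_i' \mr^\epsilon S_i$), for $i \in \{1, 2\}$.

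The first inequality, $\md(S_1, S_2) - \epsilon \le \md(S_1, S_2') \le \md(S_1, S_2)$, is obtained by two applications of the triangle inequality. For the upper bound, one writes
\begin{equation*}
  \md(S_1, S_2') \le \md(S_1, S_2) + \md(S_2, S_2') = \md(S_1, S_2) + 0\,,
\end{equation*}
and for the lower bound,
\begin{equation*}
  \md(S_1, S_2) \le \md(S_1, S_2') + \md(S_2', S_2) \le \md(S_1, S_2') + \epsilon\,,
\end{equation*}
which rearranges to $\md(S_1, S_2) - \epsilon \le \md(S_1, S_2')$.

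The second inequality, $\md(S_1, S_2) \le \md(S_1', S_2) \le \md(S_1, S_2) + \epsilon$, is entirely symmetric: applying the triangle inequality on the left gives
\begin{equation*}
  \md(S_1, S_2) \le \md(S_1, S_1') + \md(S_1', S_2) = \md(S_1', S_2)\,,
\end{equation*}
while applying it on the right gives
\begin{equation*}
  \md(S_1', S_2) \le \md(S_1', S_1) + \md(S_1, S_2) \le \epsilon + \md(S_1, S_2)\,.
\end{equation*}

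There is no real obstacle here; the only subtlety worth noting is that $\md$ is a \emph{hemi}metric, not a metric, so one must be careful to use the triangle inequality with arguments in the correct order (\ie~$\md(x, z) \le \md(x, y) + \md(y, z)$ and not the symmetric version). Checking the four displayed inequalities above, each uses only the asymmetric triangle inequality with the proper orientation of arguments.
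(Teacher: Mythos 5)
Your proof is correct and is essentially identical to the paper's: the paper's proof consists of exactly the same four applications of the triangle inequality, merely leaving implicit the facts $\md(S_i, S_i') = 0$ and $\md(S_i', S_i) \le \epsilon$ that you spell out from Definition~\ref{weightedmodal.def:relax}. Your remark about respecting the orientation of the hemimetric triangle inequality is apt but does not change the argument.
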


\begin{proof}
  By the triangle inequality we have
  \begin{align*}
    \md( S_1, S_2') &\le \md( S_1, S_2)+ \md( S_2, S_2'), \\
    \md( S_1, S_2) &\le \md( S_1, S_2')+ \md( S_2', S_2), \\
    \md( S_1, S_2) &\le \md( S_1, S_1')+ \md( S_1', S_2), \\
    \md( S_1', S_2) &\le \md( S_1', S_1)+ \md( S_1,
    S_2). \qed
  \end{align*}
\end{proof}

On the syntactic level, we can introduce the following \emph{widening}
operator which relaxes all quantitative constraints in a systematic
manner.  We write $I\pm \delta=[ x- \delta, y+ \delta]$ for an interval
$I=[ x, y]$ and $\delta\in \Nat$.

\begin{definition}\label{weightedmodal.def:widening}
  Given $\delta\in \Nat$, the \emph{$\delta$-widening} of a WMTS $S$
  is the WMTS $S^{ +\delta}$ with transitions $s\mayto{ a, I\pm
  \delta} t$ in $S^{ +\delta}$ for all $s\mayto{ a, I} t$ in $S$, and
  $s\mustto{ a, I\pm \delta} t$ in $S^{ +\delta}$ for all $s\mustto{
  a, I} t$ in $S$.
\end{definition}

Widening and relaxation are related as follows; note also that as
widening is a global operation whereas relaxation may be achieved
entirely locally, not all relaxations may be obtained as widenings.

\begin{proposition}
  \label{weightedmodal.pr:wide-propt}
  The $\delta$-widening of any WMTS $S$ is a
  $\frac \delta{ 1- \lambda}$-relaxation.
\end{proposition}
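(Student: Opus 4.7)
The plan is to verify the two conditions of Definition~\ref{weightedmodal.def:relax}: $S \mr S^{+\delta}$ and $\md(S^{+\delta}, S) \le \tfrac{\delta}{1-\lambda}$. The first is immediate by taking the identity relation on states as a witness: every $s \mayto{a, I}_S t$ is matched by $s \mayto{a, I\pm\delta}_{S^{+\delta}} t$, and $(a, I) \labpre (a, I\pm\delta)$ because $I \subseteq I \pm \delta$; symmetrically, every $s \mustto{a, I\pm\delta}_{S^{+\delta}} t$ arises from a unique $s \mustto{a, I}_S t$ with the same label relation.

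For the second inequality I first compute the label-level cost of widening: for $(a, [x, y]) \in \Spec$, the explicit formula for $d_\Spec$ gives
\begin{equation*}
  d_\Spec\bigl((a, [x-\delta, y+\delta]), (a, [x, y])\bigr) = \max(\delta, \delta, 0) = \delta\,.
\end{equation*}
I then invoke Lemma~\ref{weightedmodal.le:family} with the candidate family $\cal R = \{R_\epsilon \mid \epsilon \ge 0\}$ defined by $R_\epsilon = \{(s, s) \mid s \in S\}$ when $\epsilon \ge \tfrac{\delta}{1-\lambda}$, and $R_\epsilon = \emptyset$ otherwise, identifying the state sets of $S^{+\delta}$ and $S$.

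To check that $\cal R$ is a modal refinement family, fix $(s, s) \in R_\epsilon$ with $\epsilon \ge \tfrac{\delta}{1-\lambda}$. A may-transition $s \mayto{a, I\pm\delta}_{S^{+\delta}} t$ is matched by $s \mayto{a, I}_S t$ with label distance $\delta \le \epsilon$, and the required successor index $\epsilon' = \lambda^{-1}(\epsilon - \delta)$ satisfies $\epsilon' \ge \tfrac{\delta}{1-\lambda}$ exactly when $\epsilon \ge \tfrac{\delta}{1-\lambda}$, which is our hypothesis; hence $(t, t) \in R_{\epsilon'}$. A must-transition $s \mustto{a, I}_S t$ of $S$ is matched by $s \mustto{a, I\pm\delta}_{S^{+\delta}} t$ through exactly the same computation. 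Since $(s_0, s_0) \in R_{\delta/(1-\lambda)}$, Lemma~\ref{weightedmodal.le:family} yields $\md(S^{+\delta}, S) \le \tfrac{\delta}{1-\lambda}$.

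There is no substantial obstacle: the constant $\tfrac{\delta}{1-\lambda}$ is just the sum of the geometric series $\delta + \lambda \delta + \lambda^2 \delta + \cdots$, reflecting the fact that at each level of the unfolding every matched transition pair contributes exactly $\delta$ to the discounted accumulating refinement distance. The only care needed is to keep the direction of refinement straight so that may-transitions of $S^{+\delta}$ are answered in $S$ while must-transitions of $S$ are answered in $S^{+\delta}$; once this is done, the verification reduces to the single algebraic identity above.
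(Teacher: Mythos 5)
Your proof is correct and follows essentially the same route as the paper: the identity relation witnesses $S\mr S^{+\delta}$, and the two-valued family $R_\epsilon=\id_S$ for $\epsilon\ge \delta/(1-\lambda)$ (empty otherwise) witnesses $\md(S^{+\delta},S)\le \delta/(1-\lambda)$, with the key step being the equivalence $\lambda^{-1}(\epsilon-\delta)\ge \delta/(1-\lambda)\iff \epsilon\ge \delta/(1-\lambda)$. Your explicit computation of $d_\Spec((a,I\pm\delta),(a,I))=\delta$ and the clean statement of that equivalence are, if anything, slightly tidier than the paper's version of the same argument.
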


\begin{proof}
  For the first claim, the identity relation $\id_S=\{( s, s)\mid s\in
  S\}\subseteq S\times S$ is a witness for $S\mr S^{ +\delta}$: if
  $s\mayto{ k} t$, then by construction $s\mayto{ k_2}_{ +\delta} t$
  with $k\labpre k_2$, and if $s\mustto{ k_2}_{ +\delta} t$, then
  again by construction $s\mustto{ k} t$ for some $k\labpre k_2$.

  Now to prove $\md( S^{ +\delta}, S)\le( 1- \lambda)^{ -1}
  \delta$, we define a family of relations $\cal R=\{ R_\epsilon\mid
  \epsilon\ge 0\}$ by $R_\epsilon= \emptyset$ for $\epsilon<( 1-
  \lambda)^{ -1} \delta$ and $R_\epsilon= \id_S$ for $\epsilon\ge {(
    1- \lambda)^{ -1} \delta}$.  We show that $\cal R$ is a modal
  refinement family.

  Let $( s, s)\in R_\epsilon$ for some $\epsilon\ge {( 1- \lambda)^{
      -1} \delta}$, and assume $s\mayto{ k_2}_{ +\delta} t$.  By
  construction there is a transition $s\mayto{ k} t$ with $d_\Spec( k_2,
  k)\le \delta\le \epsilon$.  Now
  \begin{equation*}
    \frac1\lambda\Big( \epsilon- d_\Spec( k_2, k)\Big)\ge \frac1\lambda\Big(
    \frac \delta{ 1- \lambda}- \delta\Big)= \frac \delta{ 1-
      \lambda}\ge \epsilon
  \end{equation*}
  and $( t, t)\in R_\epsilon$, which settles this part of the proof.
  The other direction, starting with a transition $s\mustto{ k} t$, is
  similar.  \qed
\end{proof}

There is also an implementation-level notion which corresponds to
relaxation:

\begin{definition}
  The \emph{$\epsilon$-extended implementation semantics}, for
  $\epsilon\ge 0$, of a WMTS $S$ is $\llbracket S\rrbracket^{
    +\epsilon}=\big\{ I\bigmid I\mr^\epsilon S, I \text{
    implementation}\big\}$.
\end{definition}

\begin{proposition}
  \label{weightedmodal.pr:semwide}
  If $S'$ is an $\epsilon$-relaxation of $S$, then $\llbracket
  S'\rrbracket\subseteq \llbracket S\rrbracket^{ +\epsilon}$.
\end{proposition}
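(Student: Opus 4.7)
The plan is to reduce the statement to a direct application of the triangle inequality for the modal refinement distance $\md$, which was shown to be a hemimetric just after Definition~\ref{weightedmodal.de:acc.mo.dist}.

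First I would take an arbitrary $I \in \llbracket S'\rrbracket$ and unfold the definitions. By definition of the implementation semantics, $I$ is an implementation and $I \mr S'$ holds. The key observation is that a standard (Boolean) modal refinement witness $R \subseteq I \times S'$ can be viewed as a modal refinement family by setting $R_0 = R$ and $R_\varepsilon = R$ for every $\varepsilon \geq 0$: in a refinement, every matched label pair $(k_1, k_2)$ satisfies $k_1 \labpre k_2$, which is exactly $d_\Spec(k_1, k_2) = 0$, so the conditions of a modal refinement family are met with $\varepsilon = \varepsilon' = 0$. By Lemma~\ref{weightedmodal.le:family}, this yields $\md(I, S') \le 0$, hence $\md(I, S') = 0$.

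Next I would invoke the second half of the relaxation hypothesis: $S' \mr^\epsilon S$ means $\md(S', S) \le \epsilon$. Since $\md$ is a hemimetric, the triangle inequality gives
\begin{equation*}
  \md(I, S) \le \md(I, S') + \md(S', S) \le 0 + \epsilon = \epsilon\,,
\end{equation*}
so $I \mr^\epsilon S$, and therefore $I \in \llbracket S\rrbracket^{+\epsilon}$ by definition of the extended implementation semantics.

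There is no genuine obstacle here; the only small point worth being explicit about is the passage from the relational definition of $\mr$ to the numerical statement $\md(I, S') = 0$, which is why I would cite Lemma~\ref{weightedmodal.le:family}. Everything else is immediate from the triangle inequality established for $\md$.
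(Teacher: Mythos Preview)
Your proof is correct and essentially identical to the paper's. The paper writes it even more tersely: from $I\in\llbracket S'\rrbracket$ it asserts $\md(I,S')=0$, then invokes Proposition~\ref{weightedmodal.pr:wide-two} (whose proof is precisely the triangle inequality you spell out) to get $\md(I,S)\le\epsilon$. Your explicit justification of $\md(I,S')=0$ via Lemma~\ref{weightedmodal.le:family} and your direct use of the triangle inequality amount to the same argument unpacked one level.
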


\begin{proof}
  If $I\in \llbracket S'\rrbracket$, then $\md( I, S')= 0$, hence
  $\md( I, S)\le \epsilon$ by Proposition~\ref{weightedmodal.pr:wide-two}, which in
  turn implies that $I\in \llbracket S\rrbracket^{+\epsilon}$. \qed
\end{proof}

The example in Figure~\ref{weightedmodal.fi:sem-wide-counterex} shows that there are
WMTS $S$, $S'$ such that $S'$ is an $\epsilon$-relaxation of $S$ but the
inclusion $\llbracket S'\rrbracket\subseteq \llbracket S\rrbracket^{
  +\epsilon}$ is strict.  Indeed, for $\delta= 1$ and $\lambda= .9$, we
have $I\in \llbracket S\rrbracket^{ +( 1- \lambda)^{ -1} \delta}$, but
$I\notin \llbracket S^{ +\delta}\rrbracket$.

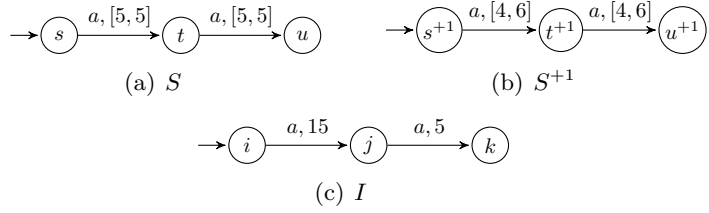
\begin{figure}[tpb]
  \centering
  \subbottom[$S$]{
    \begin{tikzpicture}[->,>=stealth',shorten >=1pt,auto,node
      distance=2.0cm,initial text=,scale=0.8,transform shape]
      \tikzstyle{every node}=[font=\small] \tikzstyle{every
        state}=[fill=white,shape=circle,inner sep=.5mm,minimum size=6mm]
      \node[initial,state] (i) at (0,0) {$s$};
      \node[state] (j) at (2,0) {$t$};
      \node[state] (k) at (4,0) {$u$};
      \path (i) edge node [above] {$a,[ 5, 5]$} (j);
      \path (j) edge node [above] {$a,[ 5, 5]$} (k);
    \end{tikzpicture}}
  \quad
  \subbottom[$S^{ +1}$]{
    \begin{tikzpicture}[->,>=stealth',shorten >=1pt,auto,node
      distance=2.0cm,initial text=,scale=0.8,transform shape]
      \tikzstyle{every node}=[font=\small] \tikzstyle{every
        state}=[fill=white,shape=circle,inner sep=.5mm,minimum size=6mm]
      \node[initial,state] (i) at (0,0) {$s^{ +1}$};
      \node[state] (j) at (2,0) {$t^{ +1}$};
      \node[state] (k) at (4,0) {$u^{ +1}$};
      \path (i) edge node [above] {$a,[ 4, 6]$} (j);
      \path (j) edge node [above] {$a,[ 4, 6]$} (k);
    \end{tikzpicture}}
  \quad
  \subbottom[$I$]{
    \begin{tikzpicture}[->,>=stealth',shorten >=1pt,auto,node
      distance=2.0cm,initial text=,scale=0.8,transform shape]
      \tikzstyle{every node}=[font=\small] \tikzstyle{every
        state}=[fill=white,shape=circle,inner sep=.5mm,minimum size=6mm]
      \node[initial,state] (i) at (0,0) {$i$};
      \node[state] (j) at (2,0) {$j$};
      \node[state] (k) at (4,0) {$k$};
      \path (i) edge node [above] {$a, 15$} (j);
      \path (j) edge node [above] {$a, 5$} (k);
    \end{tikzpicture}}
  \caption{WMTS $S$ and implementation $I$ for which $I\in \llbracket
    S\rrbracket^{ +( 1- \lambda)^{ -1} \delta}$, for $\delta= 1$ and
    $\lambda= .9$ (thus $( 1- \lambda)^{ -1} \delta= 10$), but $I\notin
    \llbracket S^{ +\delta}\rrbracket$, so that $\llbracket S^{
      +\delta}\rrbracket\subsetneq \llbracket
    S\rrbracket^{ +( 1- \lambda)^{ -1} \delta}$, even though $S^{
      +\delta}$ is a $( 1- \lambda)^{ -1} \delta$-relaxation of $S$.}
  \label{weightedmodal.fi:sem-wide-counterex}
\end{figure}

\section{Limitations of the Quantitative Approach}
\label{weightedmodal.se:hull}

In this section we turn our attention towards some of the standard
operators for specification theories; determinization and logical
conjunction. In the standard Boolean setting, there is indeed a
determinization operator which derives the smallest deterministic
overapproximation of a specification, which is useful because it
enables checking thorough refinement,
\cf~Theorem~\ref{weightedmodal.th:det-dteqdm}. Quite surprisingly, we show that in
the quantitative setting, there are problems with these notions which
do not appear in the Boolean theory.  More specifically, we show that
there is no determinization operator which always yields a
smallest deterministic overapproximation, and there is no
conjunction operator which acts as a greatest lower bound.

\begin{theorem}
  \label{weightedmodal.th:no-dethull}
  There is no unary operator $\mathcal D$ on WMTS for which it holds
  that
  \begin{enumerate}[$(\ref{weightedmodal.th:no-dethull}.1)$]
  \item \label{weightedmodal.en:no-dethull:det}
    $\mathcal D( S)$ is deterministic for any WMTS $S$,
  \item
    \label{weightedmodal.en:no-dethull:ub}
    $S\mr \mathcal D( S)$ for any WMTS $S$,
  \item
    \label{weightedmodal.en:no-dethull:lub}
    $S\mr^\epsilon D$ implies $\mathcal D( S)\mr^\epsilon D$ for
    any WMTS $S$, any deterministic WMTS $D$, and any $\epsilon\ge 0$.
  \end{enumerate}
\end{theorem}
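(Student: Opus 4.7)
The plan is to prove the theorem by contradiction: I will construct a specific nondeterministic WMTS $S$ together with a specific deterministic WMTS $D$ such that no operator satisfying $(\ref{weightedmodal.th:no-dethull}.1)$ and $(\ref{weightedmodal.th:no-dethull}.2)$ can satisfy $(\ref{weightedmodal.th:no-dethull}.3)$ on the pair $(S,D)$.

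For $S$, take states $\{s,t_1,t_2,u\}$ with initial state $s$, may-transitions $s\mayto{a,[0,0]} t_1$, $s\mayto{a,[10,10]} t_2$, $t_1\mayto{b,[0,0]} u$, $t_2\mayto{b,[100,100]} u$, and no must-transitions ($u$ a sink). The first step is to observe that $(\ref{weightedmodal.th:no-dethull}.1)$ and $(\ref{weightedmodal.th:no-dethull}.2)$ together force any candidate $\mathcal D(S)$ to contain, up to bisimilarity, a single $a$-transition $\hat d\mayto{a,I}t'$ from its initial state with $I\supseteq[0,10]$, followed by a single $b$-transition $t'\mayto{b,J}u'$ with $J\supseteq[0,100]$: determinism permits at most one may-transition per action from each state, and the may-clause of $S\mr \mathcal D(S)$ forces both $a$-transitions of $S$ to match the unique $a$-transition of $\mathcal D(S)$ (hence $I$ contains both $[0,0]$ and $[10,10]$) and likewise for the $b$-level.

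Next take the deterministic WMTS $D$: $d\mayto{a,[0,0]}t_D'\mayto{b,[100,100]}u_D$. Applying Definition~\ref{weightedmodal.de:acc.mo.dist}, the $t_1$-branch of $S$ contributes $d_\Spec((a,[0,0]),(a,[0,0]))+\lambda\cdot d_\Spec((b,[0,0]),(b,[100,100]))=100\lambda$ to $\md(S,D)$, while the $t_2$-branch contributes $d_\Spec((a,[10,10]),(a,[0,0]))+0=10$, so the outer supremum over $S$'s may-transitions gives $\md(S,D)=\max(100\lambda,10)$. By contrast, the forced widening in $\mathcal D(S)$ gives
\[
  \md(\mathcal D(S),D)\;\ge\;d_\Spec\!\big((a,I),(a,[0,0])\big)+\lambda\,d_\Spec\!\big((b,J),(b,[100,100])\big)\;\ge\;10+100\lambda,
\]
using the closed-form expression for $d_\Spec$ together with $I\supseteq[0,10]$ and $J\supseteq[0,100]$ to bound the two summands below by $10$ and $100$ respectively. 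For every $\lambda\in(0,1)$ one has $10+100\lambda>\max(100\lambda,10)$, so setting $\epsilon=\md(S,D)$ gives $S\mr^\epsilon D$ yet $\md(\mathcal D(S),D)>\epsilon$, contradicting $(\ref{weightedmodal.th:no-dethull}.3)$.

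The conceptual point behind the counterexample is that branching nondeterminism in $S$ can align a small $a$-deviation in one branch with a small $b$-deviation in a \emph{different} branch, so $\md(S,D)$ is governed only by the worse branch via the outer supremum. A deterministic over-approximation has to collapse both branches into a single pair of transitions, widening both intervals simultaneously; the accumulating distance then pays for both deviations additively. The only delicate step is the structural claim about $\mathcal D(S)$ in the first paragraph; once that is in hand the interval computations are routine.
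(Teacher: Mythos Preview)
Your proof is correct and takes a genuinely different route from the paper's.

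The paper proceeds indirectly: it first constructs an explicit ``canonical'' determinization operator $\mathcal D'$ (the powerset construction with interval hull), shows that $\mathcal D'$ satisfies properties $(\ref{weightedmodal.th:no-dethull}.1)$, $(\ref{weightedmodal.th:no-dethull}.2)$ and the Boolean special case of $(\ref{weightedmodal.th:no-dethull}.3)$ (with $\epsilon=0$), and then argues that any hypothetical $\mathcal D$ satisfying all three properties must be modally equivalent to $\mathcal D'$.  A concrete counterexample (their Figure~\ref{weightedmodal.fig:cex}) is then exhibited only for $\mathcal D'$, and the contradiction transfers to $\mathcal D$ via the equivalence.

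Your argument is more direct: you bypass $\mathcal D'$ entirely and instead observe that determinism together with the may-clause of $S\mr\mathcal D(S)$ already pins down enough of the local structure of $\mathcal D(S)$ (the interval widenings $I\supseteq[0,10]$ and $J\supseteq[0,100]$) to force the additive lower bound $\md(\mathcal D(S),D)\ge 10+100\lambda$.  This is more elementary, since it does not require defining or verifying properties of an auxiliary operator.  On the other hand, the paper's route yields the canonical $\mathcal D'$ as a by-product, which is of some independent interest.  Your counterexample and the paper's are morally the same phenomenon (branch-wise deviations that a deterministic merge must pay for simultaneously); only the chosen weights differ.

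One cosmetic remark: the phrase ``up to bisimilarity'' in your first step is unnecessary and slightly misleading, since you are not claiming $\mathcal D(S)$ is unique, only that its initial state carries exactly one $a$-may-transition whose interval contains $[0,10]$, and similarly one level down.  That weaker structural claim is what you actually use, and it follows directly as you indicate.
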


\begin{proof}
  There is a determinization operator $\mathcal D'$ on WMTS which
  satisfies Properties~$(\ref{weightedmodal.th:no-dethull}.\ref{weightedmodal.en:no-dethull:det})$
  and $(\ref{weightedmodal.th:no-dethull}.\ref{weightedmodal.en:no-dethull:ub})$ above and a weaker
  version of Property~$(\ref{weightedmodal.th:no-dethull}.\ref{weightedmodal.en:no-dethull:lub})$
  with $\epsilon= 0$:
  \begin{enumerate}[$(\ref{weightedmodal.th:no-dethull}.1')$]
  \item[$(\ref{weightedmodal.th:no-dethull}.\ref{weightedmodal.en:no-dethull:lub}')$] $S\mr D$
    implies $\mathcal D'( S)\mr D$ for any WMTS $S$ and any
    deterministic WMTS $D$.
  \end{enumerate}
  This $\mathcal D'$ can be defined as follows: For a WMTS $S =
  (S,s_0,\mmayto,\mmustto)$,
  \begin{equation*}
    \mathcal D'(S) = \big( 2^S \setminus
    \{\emptyset\},\{s_0\},\mmayto_d,\mmustto_d\big),
  \end{equation*}
  where $2^S$ is the power set of $S$ and the transition relations
  $\mmayto_d$ and $\mmustto_d$ are defined as follows: Let $\mathcal T
  \in( 2^S\setminus \{\emptyset\})$ be a state in $\mathcal
  D'(S)$. For every maximal, nonempty set $L_a \subseteq \{ I \mid
  \exists s \in \mathcal T : s \mayto{ a, I}\}$ for some $a \in
  \Sigma$, we have $\mathcal T \mayto{ a, \bigcup L_a}_d \mathcal T_a$
  where $\mathcal T_a = \{ s' \in S \mid \exists s \in \mathcal T, I
  \in L_a : s\mayto{ a, I} s' \}$ and $\bigcup L_a$ is the smallest
  interval containing all intervals from $L_a$. If, moreover, for each
  $s \in \mathcal T$ we have $s \mustto{ a, I} s'$ for some $s' \in
  \mathcal T_a$ and some $I \in L_a$, then $\mathcal T \mustto{ a,
    \bigcup L_a}_d \mathcal T_a$. It is straightforward to prove that
  $\mathcal D'$ satisfies the expected properties.

  Assume now that there is an operator $\mathcal D$ as in the theorem.
  Then for any WMTS $S$, $S\mr \mathcal D'( S)$ and thus
  $\mathcal D( S)\mr \mathcal D'( S)$
  by~$(\ref{weightedmodal.th:no-dethull}.\ref{weightedmodal.en:no-dethull:lub})$,
  and $S\mr \mathcal D( S)$ and hence
  $\mathcal D'( S)\mr \mathcal D( S)$
  by~$(\ref{weightedmodal.th:no-dethull}.\ref{weightedmodal.en:no-dethull:lub}')$.
  We finish the proof by showing that the operator $\mathcal D'$ does
  not
  satisfy~$(\ref{weightedmodal.th:no-dethull}.\ref{weightedmodal.en:no-dethull:lub})$.
  The example in Figure~\ref{weightedmodal.fig:cex} shows a WMTS $S$
  and a deterministic WMTS $D$ for which
  $\md\big( \mathcal D'( S), D\big)= 3+ 3\lambda$ and
  $\md( S, D)= \max( 3, 3\lambda)= 3$, hence
  $\md\big( \mathcal D'( S), D\big)\not\le \md( S, D)$. \qed
\end{proof}

\begin{figure}[btp]
  \centering 
  \subbottom[$S$]{
    \begin{tikzpicture}[->,>=stealth',shorten >=1pt,auto,node
      distance=2.0cm,initial text=,scale=0.8,transform shape]
      \tikzstyle{every node}=[font=\small] \tikzstyle{every
        state}=[fill=white,shape=circle,inner sep=.5mm,minimum size=6mm]
      \node[initial,state] (i) at (0,0) {$s_0$};
      \node[state] (s) at (2,1) {$s_1$};
      \node[state] (t) at (2,-1) {$s_2$};
      \node[state] (spost) at (4,1) {$s_3$};
      \node[state] (tpost) at (4,-1) {$s_4$};
      \path (i) edge [densely dashed] node [above,sloped] {$a, [3,3]$} (s);
      \path (i) edge [densely dashed] node [above,sloped] {$a, [5,6]$} (t);
      \path (t) edge [densely dashed] node [above,sloped] {$a,
        [0,0]$} (tpost); 
      \path (s) edge [densely dashed] node [above,sloped] {$a,
        [3,3]$} (spost); 
    \end{tikzpicture}}
  \hspace{2mm}
  \subbottom[$\mathcal D'(S)$]{
    \raisebox{8mm}{
      \begin{tikzpicture}[->,>=stealth',shorten >=1pt,auto,node
        distance=2.0cm,initial text=,scale=0.8,transform shape]
        \tikzstyle{every node}=[font=\small] \tikzstyle{every
          state}=[fill=white,shape=circle,inner sep=.5mm,minimum size=6mm]
        \node[initial,state,shape=rectangle,rounded corners] (i) at
        (0,0) {$\{s_0\}$}; 
        \node[state,shape=rectangle,rounded corners] (s) at (3,0)
        {$\{s_1,s_2\}$}; 
        \node[state,shape=rectangle,rounded corners] (t) at (6,0)
        {$\{s_3,s_4\}$}; 
        \path (i) edge [densely dashed] node [above,sloped] {$a,
          [3,6]$} (s); 
        \path (s) edge [densely dashed] node [above,sloped] {$a,
          [0,3]$} (t); 
      \end{tikzpicture}}}
  \hspace{2mm}
  \subbottom[$D$]{
    \begin{tikzpicture}[->,>=stealth',shorten >=1pt,auto,node
      distance=2.0cm,initial text=,scale=0.8,transform shape]
      \tikzstyle{every node}=[font=\small] \tikzstyle{every
        state}=[fill=white,shape=circle,inner sep=.5mm,minimum size=6mm]
      \node[initial,state] (i) at (0,0) {$d_0$};
      \node[state] (s) at (2,0) {$d_1$};
      \node[state] (t) at (4,0) {$d_2$};
      \path (i) edge [densely dashed] node [above,sloped] {$a, [2,3]$} (s);
      \path (s) edge [densely dashed] node [above,sloped] {$a, [0,0]$} (t);
    \end{tikzpicture}
  }
  \caption{Counter-example for Theorem~\ref{weightedmodal.th:no-dethull}: $\md\big(
    \mathcal D'( S), D\big)= 3+ 3\lambda$ and $\md( S, D)= \max( 3,
    3\lambda)= 3$, hence $\md\big( \mathcal D'( S), D\big)\not\le
    \md( S, D)$.}
  \label{weightedmodal.fig:cex}
\end{figure}

Likewise, the greatest-lower-bound property of logical conjunction in
the Boolean setting ensures that the set of implementations of a
conjunction of specifications is precisely the intersection of the
implementation sets of the two specifications.  Conjoining two WMTS
naturally involves a partial label conjunction operator $\owedge$. We let
$( a_1, I_1)\owedge( a_2, I_2)$ be undefined if $a_1\ne a_2 $, and
otherwise
\begin{align*}
  \big( a,[ x_1, y_1]\big)\owedge\big( a,[ x_2, y_2]\big) &= 
  \begin{cases}
    \big( a,[ \max(x_1,x_2), \min(y_1, y_2)]\big) \\
    &\hspace*{-7em}\text{if }
    \max(x_1,x_2) \le \min(y_1, y_2),\\
    \text{undefined} &\hspace*{-5em}\text{otherwise}.
\end{cases}
\end{align*}

Before we show that such a conjunction operator for WMTS does not
exist in general, we need to define a \emph{pruning operator} which
removes inconsistent states that naturally arise when conjoining two
WMTS.  The intuition is that if a WMTS $S_1$ requires a behavior $s_1
\mustto{ k_1}_1$ for which there is no may transition $s_2 \mayto{
  k_2}_2$ such that $k_1 \owedge k_2$ is defined, then the state
$(s_1,s_2)$ in the conjunction is \emph{inconsistent} and will have to
be pruned away, together with all \textit{must} transitions leading to
it.  In the definition below, $\pre^*$ denotes the reflexive,
transitive closure of $\pre$.

\begin{definition}
  \label{weightedmodal.de:prune}
  For a WMTS $S$, let $\pre: 2^S\to 2^S$ be given by $\pre( B)=\{ s\in
  S\mid s\mustto{ k} t\in B \text{ for some } k\}$. Let $B
  \subseteq S$ be a set of \emph{inconsistent} states. If $s^0\notin
  \pre^*( B )$, then the \emph{pruning of $S$
    w.r.t.~$B$} is defined by $\rho^B( S)=( S_\rho,
  s^0, \mmayto_\rho, \mmustto_\rho)$ where $S_\rho= S\setminus \pre^*(
  B )$, $\mmayto_\rho= \mmayto\cap\big( S_\rho\times \Spec \times
  S_\rho\big)$ and $\mmustto_\rho= \mmustto\cap\big( S_\rho\times \Spec
  \times S_\rho\big)$.
\end{definition}

\begin{theorem}
  \label{weightedmodal.th:no-conj}
  There is no partial binary operator $\wedge$ on WMTS for which it
  holds that, for all WMTS $S$, $S_1$, $S_2$ such that $S_1$ and $S_2$
  are deterministic,
  \begin{enumerate}[$(\ref{weightedmodal.th:no-conj}.1)$]
  \item
    \label{weightedmodal.en:no-conj:lb}
    whenever $S_1 \wedge S_2$ is defined, then
    $S_1\wedge S_2\mr S_1$ and $S_1\wedge S_2\mr S_2$,
  \item 
    \label{weightedmodal.en:no-conj:def}
    whenever $S\mr S_1$ and
    $S\mr S_2$, then $S_1 \wedge S_2$ is defined and $S\mr S_1\wedge S_2$,
  \item
    \label{weightedmodal.en:no-conj:glb}
    for any $\epsilon\ge 0$, there exist $\epsilon_1\ge 0$ and
    $\epsilon_2\ge 0$ such that if $S_1 \wedge S_2$ is defined,
    $S\mr^{ \epsilon_1} S_1$ and $S\mr^{ \epsilon_2} S_2$, then
    $S\mr^\epsilon S_1\wedge S_2$.
  \end{enumerate}
\end{theorem}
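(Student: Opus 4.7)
The plan is to proceed by contradiction: assume a partial binary operator $\wedge$ satisfying (1)--(3) exists, fix $\epsilon = 1$ in property (3), and obtain $\epsilon_1, \epsilon_2$. I will then exhibit, for each $k \in \Nat$, a triple of WMTS---deterministic $S_1^k, S_2^k$ and an auxiliary $S^k$---whose distances to each component shrink like $\lambda^k$, while the distance from $S^k$ to $S_1^k \wedge S_2^k$ is forced to be $\infty$. Choosing $k$ large will then violate (3).

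The building blocks are all chains of the same length. Each of $S_1^k, S_2^k$, and $S^k$ consists of a linear chain of $k$ may-transitions labeled $(a, [0, 0])$ followed by a single terminal $b$-may-transition with interval $[0, 0]$ in $S_1^k$, $[100, 100]$ in $S_2^k$, and $[50, 50]$ in $S^k$. No must-transitions are present, which makes determinism of $S_1^k$ and $S_2^k$ automatic and sidesteps any obligation in the must-direction of refinement. Unfolding Definition~\ref{weightedmodal.de:acc.mo.dist}, the $a$-chain contributes zero at every level so
\begin{equation*}
\md(S^k, S_i^k) \;=\; \lambda^k \cdot d_\Spec\!\big([50, 50],[c_i, c_i]\big) \;=\; 50\lambda^k,
\end{equation*}
and I choose $k$ large enough that $50\lambda^k \le \min(\epsilon_1, \epsilon_2)$.

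Next I verify that $S_1^k \wedge S_2^k$ is indeed defined, so that (3) is non-vacuously applicable. Let $S'$ be the sub-WMTS obtained from $S_1^k$ by removing the terminal $b$-transition; then $S' \mr S_1^k$ and $S' \mr S_2^k$, since all of its may-transitions are matched exactly by identical may-transitions, and there are no must-transitions to worry about. By (2), $S_1^k \wedge S_2^k$ is therefore defined. By (1), every may-transition of $S_1^k \wedge S_2^k$ is matched by a may of $S_1^k$ and a may of $S_2^k$ with label $\labpre$; in particular, any $b$-may with interval $I$ would need $I \subseteq [0, 0]$ \emph{and} $I \subseteq [100, 100]$, which is impossible. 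Hence $S_1^k \wedge S_2^k$ contains no $b$-may-transitions whatsoever, regardless of how the operator constructs it internally. In the recursive evaluation of $\md(S^k, S_1^k \wedge S_2^k)$, after traversing the $k$-long $a$-chain the terminal state of $S^k$ still offers $(b, [50, 50])$ while its counterpart in $S_1^k \wedge S_2^k$ offers no matching may; the infimum in Definition~\ref{weightedmodal.de:acc.mo.dist} is therefore over the empty set and equals $\infty$. This propagates back up the chain---multiplied by $\lambda^k > 0$, still $\infty$---to yield $\md(S^k, S_1^k \wedge S_2^k) = \infty > 1 = \epsilon$, contradicting (3).

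The main obstacle is the bookkeeping of the three properties in concert: one must be certain that (2) forces $S_1^k \wedge S_2^k$ to be defined (so (3) is non-vacuously engaged), that the structural argument via (1) excludes every putative $b$-may regardless of how the conjunction is internally organized, and that the unmatched may-transition---rather than any must-obligation---is what produces the $\infty$ entry in the recursive distance. Discounting is what makes the construction viable: it lets us scale the component distances $\md(S^k, S_i^k)$ below any prescribed threshold, while the purely structural incompatibility at depth $k$ remains an absolute obstruction that no choice of $\wedge$ can discount away.
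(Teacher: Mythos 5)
Your proof is correct in substance but takes a genuinely different route from the paper's. The paper first exhibits a concrete conjunction operator $\wedge'$ (a pruned product using intersection of intervals) satisfying $(\ref{weightedmodal.th:no-conj}.\ref{weightedmodal.en:no-conj:lb})$ and $(\ref{weightedmodal.th:no-conj}.\ref{weightedmodal.en:no-conj:def})$, argues that any operator satisfying these two properties must be modally equivalent to $\wedge'$, and then refutes $(\ref{weightedmodal.th:no-conj}.\ref{weightedmodal.en:no-conj:glb})$ with a single one-transition example ($S$ with label $a,[1,2]$; $S_1$ with $a,[0,1]$; $S_2$ with $a,[2,3]$) in which $\md(S,S_1)=\md(S,S_2)=1$ yet $\md(S,S_1\wedge S_2)=\infty$. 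You instead argue directly from $(\ref{weightedmodal.th:no-conj}.\ref{weightedmodal.en:no-conj:lb})$ and $(\ref{weightedmodal.th:no-conj}.\ref{weightedmodal.en:no-conj:def})$ on a parametric family, pushing the incompatible labels to depth $k$ so that discounting drives $\md(S^k,S_i^k)=50\lambda^k$ below any prescribed positive threshold while the structural clash at depth $k$ keeps $\md(S^k,S_1^k\wedge S_2^k)=\infty$. This buys you something the paper's literal argument does not: the paper's fixed example only contradicts $(\ref{weightedmodal.th:no-conj}.\ref{weightedmodal.en:no-conj:glb})$ if the chosen $\epsilon_1,\epsilon_2$ happen to be at least $1$, whereas your family defeats every choice with $\epsilon_1,\epsilon_2>0$ (the paper's example could also be rescaled, but the proof does not say so).

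Two points to tighten. First, neither your argument nor the paper's covers the degenerate choice $\epsilon_1=\epsilon_2=0$: your selection of $k$ with $50\lambda^k\le\min(\epsilon_1,\epsilon_2)$ is impossible then, and indeed, since $\md(S,S_i)=0$ is equivalent to $S\mr S_i$ for these finite systems, property $(\ref{weightedmodal.th:no-conj}.\ref{weightedmodal.en:no-conj:glb})$ with zero thresholds already follows from $(\ref{weightedmodal.th:no-conj}.\ref{weightedmodal.en:no-conj:def})$. The statement is thus only refutable under the (intended) reading that $\epsilon_1,\epsilon_2$ be positive; you should state explicitly that this is the reading you adopt. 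Second, your claim that $S_1^k\wedge S_2^k$ ``contains no $b$-may-transitions whatsoever'' is slightly too strong, since $(\ref{weightedmodal.th:no-conj}.\ref{weightedmodal.en:no-conj:lb})$ only constrains may-transitions from states related by the witnessing refinement relations. What you actually need, and what does hold because $S_1^k$ and $S_2^k$ are deterministic linear chains, is that any state of $S_1^k\wedge S_2^k$ reached from its initial state by a may-path labelled with $k$ occurrences of $a$ is related, by the two refinement relations tracked inductively down the chains, to the last chain states of $S_1^k$ and of $S_2^k$, and hence can carry no $b$-may; that is exactly what makes the infimum in the computation of $\md(S^k,S_1^k\wedge S_2^k)$ range over the empty set.
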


\begin{figure}[tpb]
  \centering
  \subbottom[$S$]{
    \begin{tikzpicture}[->,>=stealth',shorten >=1pt,auto,node
      distance=2.0cm,initial text=,scale=0.8,transform shape]
      \tikzstyle{every node}=[font=\small] \tikzstyle{every
        state}=[fill=white,shape=circle,inner sep=.5mm,minimum size=6mm]
      \node[initial,state] (s) at (0,0) {$s$};
      \node[state] (t) at (2,0) {$t$};
      \path (s) edge [densely dashed] node [above] {$a,[ 1, 2]$} (t);
    \end{tikzpicture}}
  \qquad
  \subbottom[$S_1$]{
    \begin{tikzpicture}[->,>=stealth',shorten >=1pt,auto,node
      distance=2.0cm,initial text=,scale=0.8,transform shape]
      \tikzstyle{every node}=[font=\small] \tikzstyle{every
        state}=[fill=white,shape=circle,inner sep=.5mm,minimum size=6mm]
      \node[initial,state] (s) at (0,0) {$s_1$};
      \node[state] (t) at (2,0) {$t_1$};
      \path (s) edge [densely dashed] node [above] {$a,[ 0, 1]$} (t);
    \end{tikzpicture}}
  \\
  \subbottom[$S_2$]{
    \begin{tikzpicture}[->,>=stealth',shorten >=1pt,auto,node
      distance=2.0cm,initial text=,scale=0.8,transform shape]
      \tikzstyle{every node}=[font=\small] \tikzstyle{every
        state}=[fill=white,shape=circle,inner sep=.5mm,minimum size=6mm]
      \node[initial,state] (s) at (0,0) {$s_2$};
      \node[state] (t) at (2,0) {$t_2$};
      \path (s) edge [densely dashed] node [above] {$a,[ 2, 3]$} (t);
    \end{tikzpicture}}
  \qquad
  \subbottom[$S_1\wedge S_2$]{
    \begin{tikzpicture}[->,>=stealth',shorten >=1pt,auto,node
      distance=2.0cm,initial text=,scale=0.8,transform shape]
      \tikzstyle{every node}=[font=\small] \tikzstyle{every
        state}=[fill=white,shape=circle,inner sep=.5mm,minimum size=6mm]
      \node[initial,state,shape=rectangle,rounded corners] (s) at
      (0,0) {$( s_1, s_2)$};
    \end{tikzpicture}}
  \caption{Counter-example for Theorem~\ref{weightedmodal.th:no-conj}: $\md( S,
    S_1)= \md( S, S_2)= 1$, but $\md( S, S_1\wedge S_2)= \infty$.}
  \label{weightedmodal.fi:conjunct-counterex}
\end{figure}
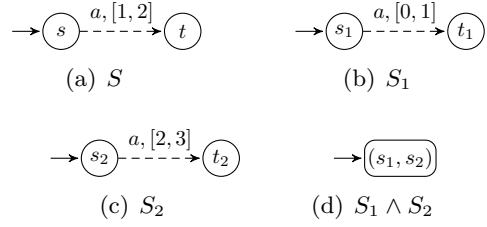

\begin{proof}
  We follow the same strategy as in the proof of
  Theorem~\ref{weightedmodal.th:no-dethull}. One can define a partial conjunction
  operator $\wedge'$ defined for WMTS which satisfies
  Properties~$(\ref{weightedmodal.th:no-conj}.\ref{weightedmodal.en:no-conj:lb})$
  and~$(\ref{weightedmodal.th:no-conj}.\ref{weightedmodal.en:no-conj:def})$ as follows: For
  deterministic WMTS $S_1$ and $S_2$, $S_1 \wedge' S_2 =
  \rho^B(S_1 \times S_2,(s_1^0,s_2^0),\mmayto,\mmustto)$ where
  the transition relations $\mmayto$ and $\mmustto$ and the set
  $B \subseteq S_1 \times S_2$ of inconsistent states are
  defined by the following rules:
  \begin{gather*}
    \frac{s_1 \mustto{k_1} s_1'\quad s_2 \mayto{k_2} s_2' \quad k_1
      \owedge k_2 \text{ defined}}{(s,t)\mustto{k_1 \owedge k_2}
      (s_1',s_2')}%
    \qquad \frac{s_1 \mayto{ k_1 } s_1'\quad s_2 \mustto{k_2 } s_2'
      \quad k_1 \owedge k_2 \text{
        defined}}{(s_1,s_2)\mustto{ k_1 \owedge k_2}(s_1',s_2')} \\
    \frac{s_1 \mayto{ k_1 } s_1'\quad s_2 \mayto{ k_2} s_2' \quad k_1
      \owedge k_2 \text{ defined}}{(s_1,s_2)\mayto{ k_1\owedge
      k_2}(s_1',s_2')} \\
    \frac{s_1 \mustto{k_1} \quad \big(k_1\owedge k_2 \text{ undefined for
        any $k_2$ such that } s_2\mayto{k_2}\big)} {(s_1,s_2)\in
      B} \\
    \frac{s_2\mustto{k_2} \quad \big(k_1\owedge k_2 \text{ undefined for
        any $k_1$ such that } s_1\mayto{k_1} \big)}{(s_1,s_2) \in
      B}
  \end{gather*}
  
  Using these properties, one can see that for all deterministic WMTS
  $S_1$ and $S_2$, $S_1\wedge S_2\mr S_1\wedge' S_2$ and $S_1\wedge'
  S_2\mr S_1\wedge S_2$.  The WMTS depicted in
  Figure~\ref{weightedmodal.fi:conjunct-counterex} then show that
  Property~$(\ref{weightedmodal.th:no-conj}.\ref{weightedmodal.en:no-conj:glb})$ cannot hold: here,
  $\md( S, S_1)= \md( S, S_2)= 1$, but $\md( S, S_1\wedge S_2)=
  \infty$. \qed
\end{proof}

The counterexamples used in the proofs of Theorems~\ref{weightedmodal.th:no-dethull}
and~\ref{weightedmodal.th:no-conj} are quite general and apply to a large class of
distances, rather than only to the accumulating distance discussed in
this paper.  Hence it can be argued that what we have exposed here is a
fundamental limitation of any quantitative approach to modal
specifications.

\section{Structural Composition and Quotient}
\label{weightedmodal.se:parcomp}

In this section we show that in our quantitative setting, notions of
structural composition and quotient can be defined which obey the
properties expected of such operations.  In particular, structural
composition satisfies independent
implementability~\cite{inbook/natosec/AlfaroH95}, hence the refinement
distance between structural composites can be bounded by the distances
between their respective components.

First we define partial synchronization operators $\oplus$ and $\ominus$
on specification labels which will be used for synchronizing
transitions.  We let $( a_1, I_1)\oplus( a_2, I_2)$ and $( a_1,
I_1)\ominus( a_2, I_2)$ be undefined if $a_1\ne a_2 $, and otherwise
\begin{align*}
  \big( a,[ x_1, y_1]\big)\oplus\big( a,[ x_2, y_2]\big) &= \big( a,[
  x_1+ x_2, y_1+ y_2]\big)\,, \\
  \big( a,[ x_1, y_1]\big)\ominus\big( a,[ x_2, y_2]\big) &=
    \begin{cases}
      \text{undefined} &\quad\text{if } x_1- x_2> y_1- y_2\,, \\
      \big( a,[ x_1- x_2, y_1- y_2]\big) &\quad\text{if } x_1- x_2\le
      y_1- y_2\,.
    \end{cases}
\end{align*}
Note that we use CSP-style synchronization, but other types of
synchronization can easily be defined.  Also, defining $\oplus$ to add
intervals (and $\ominus$ to subtract them) is only one particular
choice; depending on the application, one can also \eg~let $\oplus$ be
intersection of intervals or some other operation.  It is not difficult
to see that these alternative synchronization operators would lead to
properties similar to those we show here.

\begin{definition}
  \label{weightedmodal.de:comp-quot}
  Let $S_1$ and $S_2$ be WMTS.  The \emph{structural composition} of $S_1$
  and $S_2$ is $S_1\| S_2=\big( S_1\times S_2,( s_1^0, s_2^0), \Spec,
  \mmayto, \mmustto\big)$ with transitions given as follows:
    \begin{gather*}
      \dfrac{ s_1\mayto{ k_1}_1 t_1 \quad s_2\mayto{ k_2}_2 t_2 \quad
        k_1\oplus k_2 \text{ def.}}{( s_1, s_2)\mayto{ k_1\oplus
        k_2}( t_1, t_2)}%
      \qquad \dfrac{ s_1\mustto{ k_1}_1 t_1 \quad s_2\mustto{ k_2}_2 t_2
        \quad k_1\oplus k_2 \text{ def.}} {( s_1, s_2)\mustto{
        k_1\oplus k_2}( t_1, t_2)}
    \end{gather*}
    The \emph{quotient} of $S_1$ by $S_2$ is $S_1/ S_2=
    \rho^{B}\big( S_1\times S_2\cup\{ u\},( s_1^0, s_2^0), \Spec,
    \mmayto, \mmustto\big)$ with transitions and the set of inconsistent
    states given as follows:
    \begin{gather*}
    \dfrac{%
      s_1\mayto{ k_1}_1 t_1 \quad s_2\mayto{ k_2}_2 t_2 \quad
      k_1\ominus k_2 \text{ def.}}{%
      ( s_1, s_2)\mayto{ k_1\ominus k_2}( t_1, t_2)} \qquad
    \dfrac{%
      s_1\mustto{ k_1}_1 t_1 \quad s_2\mustto{ k_2}_2 t_2 \quad
      k_1\ominus k_2 \text{ def.}}{%
      ( s_1, s_2)\mustto{ k_1\ominus k_2}( t_1, t_2)} \\
    \dfrac{%
      s_1\mustto{ k_1}_1 t_1 \quad \forall s_2\mustto{ k_2}_2 t_2:
      k_1\ominus k_2 \text{ undefined}}{%
      ( s_1, s_2) \in B} \\
    \dfrac{%
      k\in \Spec \quad \forall s_2\mayto{ k_2}_2 t_2: k\oplus k_2
      \text{ undefined}}{%
      ( s_1, s_2)\mayto{ k} u} \qquad \dfrac{%
      k\in \Spec}{%
      u\mayto{ k} u}
  \end{gather*}
\end{definition}

Note that during the quotient construction inconsistent states can
arise which are then recursively removed using the pruning operator
$\rho$, see Definition~\ref{weightedmodal.de:prune}.  After a
technical lemma, the next theorem shows that structural composition is
well-behaved with respect to modal refinement distance in the sense
that the distance between the composed systems is bounded by the
distances of the individual systems.  Note also the special case in
the theorem of $S_1\mr S_2$ and $S_3\mr S_4$ implying
$S_1\| S_3\mr S_2\| S_4$.

\begin{lemma}
  \label{weightedmodal.le:synch-cont}
  For $k_1, k_2, k_3, k_4\in \Spec$ with $k_1\oplus k_3$ and $k_2\oplus
  k_4$ defined, we have $d_\Spec( k_1\oplus k_3, k_2\oplus k_4)\le
  d_\Spec( k_1, k_2)+ d_\Spec( k_3, k_4)$.
\end{lemma}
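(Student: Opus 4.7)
The plan is to first dispose of the case where actions differ: if $k_1 = (a_1, I_1)$, $k_2 = (a_2, I_2)$, $k_3 = (a_3, I_3)$, $k_4 = (a_4, I_4)$, then $k_1 \oplus k_3$ being defined forces $a_1 = a_3$, and $k_2 \oplus k_4$ being defined forces $a_2 = a_4$. If $a_1 \ne a_2$, then $a_1 \ne a_2$ or $a_3 \ne a_4$, so $d_\Spec(k_1,k_2) + d_\Spec(k_3,k_4) = \infty$ and there is nothing to prove. Hence I reduce to the case where all four labels share a common action $a$.

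In that case I write $k_i = (a, [x_i, y_i])$ for $i = 1,\dots,4$, so that $k_1\oplus k_3 = (a, [x_1+x_3, y_1+y_3])$ and $k_2\oplus k_4 = (a, [x_2+x_4, y_2+y_4])$. Using the closed-form expression
\begin{equation*}
  d_\Spec\big((a,[x,y]),(a,[x',y'])\big) = \max(x'-x,\, y-y',\, 0)
\end{equation*}
recalled just before the lemma, the inequality to prove becomes the purely arithmetic statement
\begin{equation*}
  \max\big((x_2{+}x_4)-(x_1{+}x_3),\, (y_1{+}y_3)-(y_2{+}y_4),\, 0\big) \le \max(x_2{-}x_1, y_1{-}y_2, 0) + \max(x_4{-}x_3, y_3{-}y_4, 0).
\end{equation*}

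This is then settled term-by-term on the left-hand max. For the first term, $(x_2{+}x_4)-(x_1{+}x_3) = (x_2{-}x_1) + (x_4{-}x_3)$, and each summand is bounded by the corresponding max on the right, which is nonnegative. The same telescoping works for the second term with the $y$'s. The third term $0$ is trivially bounded since each max on the right is nonnegative. Combining gives the desired inequality.

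There is no real obstacle here; the lemma is a routine monotonicity/subadditivity statement for interval addition under the Hausdorff-type hemimetric $d_\Spec$. The only minor care needed is for interval endpoints in $\Int \cup \{-\infty, \infty\}$, but since $k_1\oplus k_3$ and $k_2\oplus k_4$ are assumed defined and $\oplus$ respects the $x \le y$ ordering of endpoints, all the additions and max-comparisons above are unambiguous in $\Int\cup\{-\infty,\infty\}$ with the usual conventions.
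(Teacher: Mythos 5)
Your proof is correct and follows essentially the same route as the paper: reduce to the closed-form expression $d_\Spec((a,[x,y]),(a,[x',y']))=\max(x'-x,y-y',0)$ and verify the superadditivity inequality $\max(A,B,0)+\max(C,D,0)\ge\max(A+C,B+D,0)$ term by term. Your explicit handling of the mismatched-action case (where the right-hand side is $\infty$) is a small point the paper leaves implicit, but the core argument is identical.
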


\begin{proof}
  Let 
  $k_i=\big( a,[ x_i, y_i]\big)$ for all $i$.  We have
  \begin{align*}
    d_\Spec( k_1, k_2)+{} &d_\Spec( k_3, k_4) \\
    &= \max( x_2 - x_1, y_1 -
    y_2,0)+ \max( x_4 - x_3, y_3 - y_4,0) \\
    &\ge \max\big(( x_2 - x_1)+( x_4 - x_3),(
    y_1 - y_2)+( y_3 - y_4), 0\big)\\
    &= \max\big(( x_2+ x_4)-(
    x_1+ x_3),( y_1+ y_3)-( y_2+ y_4),0\big) \\
    &= d_\Spec( k_1\oplus k_3, k_2\oplus k_4). \quad \qed
  \end{align*}
\end{proof}

\begin{theorem}[Independent implementability]
  \label{weightedmodal.th:indepimp}
  For WMTS $S_1$, $S_2$, $S_3$, $S_4$ we have $\md\big( S_1\| S_3,
  S_2\| S_4\big)\le \md( S_1, S_2)+ \md( S_3, S_4)$.
\end{theorem}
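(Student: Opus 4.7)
The plan is to build a modal refinement family for the composition directly out of modal refinement families for the two pairs of components, using Lemma~\ref{weightedmodal.le:family} as the bridge. If either $\md(S_1,S_2)$ or $\md(S_3,S_4)$ is infinite, there is nothing to prove, so assume both are finite and fix witnessing families $\cal R^{13}=\{R^{13}_\epsilon\}$ and $\cal R^{24}=\{R^{24}_\epsilon\}$ satisfying $(s_1^0,s_2^0)\in R^{13}_{\md(S_1,S_2)}$ and $(s_3^0,s_4^0)\in R^{24}_{\md(S_3,S_4)}$. Define the candidate family $\cal R=\{R_\epsilon\}$ on $(S_1\times S_3)\times(S_2\times S_4)$ by
\begin{equation*}
  R_\epsilon=\big\{\big((s_1,s_3),(s_2,s_4)\big)\bigmid \exists \epsilon_1,\epsilon_2\ge 0:\ \epsilon_1+\epsilon_2\le \epsilon,\ (s_1,s_2)\in R^{13}_{\epsilon_1},\ (s_3,s_4)\in R^{24}_{\epsilon_2}\big\}\,.
\end{equation*}
Clearly $\big((s_1^0,s_3^0),(s_2^0,s_4^0)\big)\in R_{\md(S_1,S_2)+\md(S_3,S_4)}$, so the theorem follows at once from Lemma~\ref{weightedmodal.le:family} once we verify that $\cal R$ is a modal refinement family.

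The main step is the may-case. Suppose $\big((s_1,s_3),(s_2,s_4)\big)\in R_\epsilon$ with witnesses $\epsilon_1,\epsilon_2$, and take a transition $(s_1,s_3)\mayto{k}(t_1,t_3)$ in $S_1\| S_3$. By Definition~\ref{weightedmodal.de:comp-quot} this arises from $s_1\mayto{k_1}_1 t_1$ and $s_3\mayto{k_3}_3 t_3$ with $k=k_1\oplus k_3$ defined. Applying the modal refinement family property to each component yields $s_2\mayto{k_2}_2 t_2$ and $s_4\mayto{k_4}_4 t_4$ with $d_\Spec(k_i,k_{i+1})\le \epsilon_j$ and $(t_1,t_2)\in R^{13}_{\epsilon_1'}$, $(t_3,t_4)\in R^{24}_{\epsilon_2'}$ for some $\epsilon_1'\le \lambda^{-1}(\epsilon_1-d_\Spec(k_1,k_2))$ and $\epsilon_2'\le \lambda^{-1}(\epsilon_2-d_\Spec(k_3,k_4))$. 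Finiteness of $d_\Spec(k_1,k_2)$ and $d_\Spec(k_3,k_4)$ forces the actions of $k_2$ and $k_4$ to agree with those of $k_1$ and $k_3$ respectively, so $k_2\oplus k_4$ is defined and $(s_2,s_4)\mayto{k_2\oplus k_4}(t_2,t_4)$. Lemma~\ref{weightedmodal.le:synch-cont} gives
\begin{equation*}
  d_\Spec(k_1\oplus k_3,k_2\oplus k_4)\le d_\Spec(k_1,k_2)+d_\Spec(k_3,k_4)\le \epsilon_1+\epsilon_2\le \epsilon\,,
\end{equation*}
and taking $\epsilon':=\epsilon_1'+\epsilon_2'$ we compute
\begin{equation*}
  \epsilon'\le \lambda^{-1}\big(\epsilon_1+\epsilon_2-d_\Spec(k_1,k_2)-d_\Spec(k_3,k_4)\big)\le \lambda^{-1}\big(\epsilon-d_\Spec(k_1\oplus k_3,k_2\oplus k_4)\big)\,,
\end{equation*}
whence $\big((t_1,t_3),(t_2,t_4)\big)\in R_{\epsilon'}$ as required.

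The must-case is entirely symmetric: a transition $(s_2,s_4)\mustto{k_2\oplus k_4}(t_2,t_4)$ decomposes into must-transitions in $S_2$ and $S_4$, which are answered by must-transitions in $S_1$ and $S_3$ with actions necessarily matching, so $k_1\oplus k_3$ is defined and yields the desired $(s_1,s_3)\mustto{k_1\oplus k_3}(t_1,t_3)$; the same inequalities close the argument. I expect the only subtle point to be ensuring that the synchronized labels are actually defined on the answering side. This is precisely where we use that $d_\Spec$ returns $\infty$ on mismatched actions, so finite distance on each coordinate already forces action agreement and hence definedness of the synchronization.
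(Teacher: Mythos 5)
Your proof is correct and follows essentially the same route as the paper's: the same product refinement family $R_\epsilon$ indexed by sums $\epsilon_1+\epsilon_2\le\epsilon$, the same use of Lemma~\ref{weightedmodal.le:synch-cont} to bound $d_\Spec(k_1\oplus k_3,k_2\oplus k_4)$, and the same bookkeeping for the residual $\epsilon'=\epsilon_1'+\epsilon_2'$. Your explicit remark that finiteness of the component label distances forces action agreement (and hence definedness of $k_2\oplus k_4$) correctly fills in a step the paper leaves implicit.
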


\begin{proof}
  If $\md( S_1, S_2)= \infty$ or $\md( S_3, S_4)= \infty$, we have
  nothing to prove.  Otherwise, let
  $\cal R^1=\{ R^1_\epsilon\subseteq S_1\times S_2\mid \epsilon\ge
  0\}$,
  $\cal R^2=\{ R^2_\epsilon\subseteq S_3\times S_4\mid \epsilon\ge
  0\}$ be witnesses for $\md( S_1, S_2)$ and $\md( S_3, S_4)$,
  respectively; hence
  $( s_1^0, s_2^0)\in R^1_{ \md( S_1, S_2)}\in \cal R^1$ and
  $( s_3^0, s_4^0)\in R^2_{ \md( S_3, S_4)}\in \cal R^2$.  Define
  \begin{multline*}
    R_\epsilon=\big\{\big(( s_1, s_3),( s_2, s_4)\big)\in S_1\times
    S_3\times S_2\times S_4\bigmid \\
    ( s_1, s_2)\in R^1_{ \epsilon_1}\in R^1,( s_3, s_4)\in R^2_{
      \epsilon_2}\in R^2, \epsilon_1+ \epsilon_2\le\epsilon \big\}
  \end{multline*}
  for all $\epsilon\ge 0$ and let
  $\cal R=\{ R_\epsilon\mid \epsilon\ge 0\}$.  We show that $\cal R$
  witnesses
  $\md\big( S_1\| S_3, S_2\| S_4\big)\le \md( S_1, S_2)+ \md( S_3,
  S_4).$

  We have
  $\big(( s_1^0, s_3^0),( s_2^0, s_4^0)\big)\in R_{ \md( S_1, S_2)+
    \md( S_3, S_4)}\in \cal R$.  Now let
  \begin{equation*}
    \big(( s_1, s_3),( s_2, s_4)\big)\in R_\epsilon\in \cal R
  \end{equation*}
  for some $\epsilon$, then $( s_1, s_2)\in R^1_{ \epsilon_1}\in \cal R^1$
  and $( s_3, s_4)\in R^2_{ \epsilon_2}\in \cal R^2$ for some
  $\epsilon_1+ \epsilon_2\le \epsilon$.

  Assume $( s_1, s_3)\mayto{ k_1\oplus k_3}( t_1, t_3)$, then
  $s_1\mayto{ k_1}_1 t_1$ and $s_3\mayto{ k_3}_3 t_3$.  By
  $( s_1, s_2)\in R^1_{ \epsilon_1}\in \cal R^1$, we have
  $s_2\mayto{ k_2}_2 t_2$ with $d_\Spec( k_1, k_2)\le \epsilon_1$ and
  $( t_1, t_2)\in R^1_{ \epsilon_1'}\in \cal R^1$ for some
  $\epsilon_1'\le \lambda^{ -1}\big( \epsilon_1- d_\Spec( k_1,
  k_2)\big)$; similarly, $s_4\mayto{ k_4}_4 t_4$ with
  $d_\Spec( k_3, k_4)\le \epsilon_2$ and
  $( t_3, t_4)\in R^2_{ \epsilon_2'}\in \cal R^2$ for some
  $\epsilon_2'\le \lambda^{ -1}\big( \epsilon_2- d_\Spec( k_3,
  k_4)\big)$.  Let $\epsilon'= \epsilon_1'+ \epsilon_2'$, then the sum
  $k_2\oplus k_4$ is defined, and
  \begin{align*}
    \epsilon' &\le \lambda^{ -1}\big( \epsilon_1+ \epsilon_2-( d_\Spec( k_1,
    k_2)+ d_\Spec( k_3, k_4))\big) \\
    &\le \lambda^{ -1}\big( \epsilon- d_\Spec( k_1\oplus k_3, k_2\oplus
    k_4)\big)
  \end{align*}
  by Lemma~\ref{weightedmodal.le:synch-cont}.  We have
  $( s_2, s_4)\mayto{ k_2\oplus k_4}( t_2, t_4)$,
  $d_\Spec( k_1\oplus k_3, k_2\oplus k_4)\le \epsilon_1+ \epsilon_2\le
  \epsilon$ again by Lemma~\ref{weightedmodal.le:synch-cont}, and
  $\big(( t_1, t_3),( t_2, t_4)\big)\in R_{ \epsilon'}\in \cal R$.
  The reverse direction, starting with a transition
  $( s_2, s_4)\mustto{ k_2\oplus k_4}( t_2, t_4)$, is similar. \qed
\end{proof}

Again after a technical lemma, the next theorem expresses the fact that
quotient is a partial inverse to structural composition.  Intuitively,
the theorem shows that the quotient $S_1/ S_2$ is maximal among
all WMTS $S_3$ with respect to any distance $S_2\| S_3\mr^\epsilon
S_1$; note the special case of $S_3\mr S_1/ S_2$ iff
$S_2\| S_3\mr S_1$.

\begin{lemma}
  \label{weightedmodal.le:plusminus}
  If $k_1, k_2, k_3\in \Spec$ are such that $k_1\ominus k_2$ and $k_2\oplus
  k_3$ are defined, then $d_\Spec( k_3, k_1\ominus k_2)= d_\Spec(
  k_2\oplus k_3, k_1)$.
\end{lemma}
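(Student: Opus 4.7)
My plan is to prove the lemma by a direct symbolic computation, unfolding the definitions of $\oplus$, $\ominus$, and $d_\Spec$ on concrete interval representatives.

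First, I observe that by the definitions of $\oplus$ and $\ominus$, the three labels $k_1, k_2, k_3$ must share a common action symbol $a \in \Sigma$ for both $k_1 \ominus k_2$ and $k_2 \oplus k_3$ to be defined. So I would write $k_1 = (a, [x_1, y_1])$, $k_2 = (a, [x_2, y_2])$, $k_3 = (a, [x_3, y_3])$, with $x_i \le y_i$ for $i = 1,2,3$, and with the additional hypothesis $x_1 - x_2 \le y_1 - y_2$ guaranteeing that $k_1 \ominus k_2$ is defined. Note that $k_2 \oplus k_3 = (a, [x_2 + x_3, y_2 + y_3])$ is automatically defined.

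Next I would compute both sides explicitly. Using the closed-form expression
\begin{equation*}
  d_\Spec\big((a, [u_1, v_1]), (a, [u_2, v_2])\big) = \max(u_2 - u_1,\ v_1 - v_2,\ 0)
\end{equation*}
stated just after the definition of $d_\Spec$, I would obtain
\begin{align*}
  d_\Spec(k_3, k_1 \ominus k_2) &= \max\big((x_1 - x_2) - x_3,\ y_3 - (y_1 - y_2),\ 0\big), \\
  d_\Spec(k_2 \oplus k_3, k_1) &= \max\big(x_1 - (x_2 + x_3),\ (y_2 + y_3) - y_1,\ 0\big).
\end{align*}
Both right-hand sides simplify to $\max(x_1 - x_2 - x_3,\ y_2 + y_3 - y_1,\ 0)$, so the two quantities agree.

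There is no real obstacle here: the lemma is essentially a bookkeeping identity reflecting the fact that subtraction of intervals is the adjoint of addition, and the two distance expressions are literally the same max after expanding. The only thing to be careful about is that the first arguments of $d_\Spec$ on each side are swapped relative to one another, so one has to check that the two ``candidate gaps'' ($u_2 - u_1$ versus $v_1 - v_2$) in the $\max$ get exchanged consistently; but this is exactly what happens in the computation above.
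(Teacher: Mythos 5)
Your proof is correct and follows essentially the same route as the paper: write all three labels as intervals over a common action, expand both sides via the closed-form $\max(x'-x,\,y-y',\,0)$ expression for $d_\Spec$, and observe that the two maxima coincide. If anything, your version is slightly cleaner, since the paper additionally unfolds each maximum into a three-way case analysis that is not needed once one notices the two expressions are literally identical.
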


\begin{proof}
  We can write $k_i=\big( a,[ x_i, y_i]\big)$ for some $a\in \Sigma$.
  Then
  \begin{align*}
    d_\Spec( k_3, k_1\ominus k_2) &= \max\big(( x_1- x_2) - x_3,
    y_3 - ( y_1- y_2), 0\big) \\
    &= \left\{
      \begin{array}{cl}
        x_1- x_2- x_3 &\quad\text{if}\quad
        \begin{aligned}[t]
          x_1- x_2- x_3 &\ge 0, \\
          x_1- x_2- x_3 &\ge y_3- y_1+ y_2;
        \end{aligned} \\
        y_3- y_1+ y_2 &\quad\text{if}\quad
        \begin{aligned}[t]
          y_3- y_1+ y_2 &\ge 0, \\
          y_3- y_1+ y_2 &\ge x_1- x_2- x_3;
        \end{aligned} \\
        0 &\quad\text{if}\quad
        \begin{aligned}[t]
          x_1- x_2- x_3 &\le 0, \\
          y_3- y_1+ y_2 &\le 0.
        \end{aligned}
      \end{array}
    \right.
  \end{align*}
  Similarly,
  \begin{align*}
    d_\Spec( k_2\oplus k_3, k_1) &= \max\big( x_1 - ( x_2+ x_3),( y_2+
    y_3)- y_1, 0\big) \\
    &= \left\{
      \begin{array}{cl}
        x_1- x_2- x_3 &\quad\text{if}\quad
        \begin{aligned}[t]
          x_1- x_2- x_3 &\ge 0, \\
          x_1- x_2- x_3 &\ge y_2+ y_3- y_1 ;
        \end{aligned} \\
        y_2+ y_3- y_1 &\quad\text{if}\quad
        \begin{aligned}[t]
          y_2+ y_3- y_1 &\ge 0, \\
          y_2+ y_3- y_1 &\ge x_1- x_2- x_3;
        \end{aligned} \\
        0 &\quad\text{if}\quad
        \begin{aligned}[t]
          x_1- x_2- x_3 &\le 0, \\
          y_2+ y_3- y_1 &\le 0.
        \end{aligned}
      \end{array}
    \right.
  \end{align*}
  \mbox{}
\end{proof}

\begin{theorem}[Soundness and maximality of quotient]
  \label{weightedmodal.th:soundmaxquot}
  Let $S_1$, $S_2$ and $S_3$ be locally consistent WMTS such that
  $S_2$ is deterministic and $S_1/ S_2$ is defined.  If
  $\md( S_3, S_1/ S_2)< \infty$, then $\md( S_3,
  S_1/ S_2)= \md( S_2\| S_3, S_1)$.
\end{theorem}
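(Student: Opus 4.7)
The plan is to establish the equality $\md(S_3, S_1/S_2) = \md(S_2\|S_3, S_1)$ by proving both inequalities separately using the modal refinement family characterization of Lemma~\ref{weightedmodal.le:family}. The algebraic identity $d_\Spec(k_3, k_1\ominus k_2) = d_\Spec(k_2\oplus k_3, k_1)$ from Lemma~\ref{weightedmodal.le:plusminus} serves as the bridge between the quotient and composition sides throughout.

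For soundness, $\md(S_2\|S_3, S_1) \le \md(S_3, S_1/S_2)$, I would take a witnessing family $\cal R = \{R_\epsilon\}$ for the right-hand side and lift it to
\[
  R'_\epsilon = \{((s_2, s_3), s_1) : (s_3, (s_1, s_2)) \in R_\epsilon\}.
\]
For the may-case, a composed transition $(s_2, s_3)\mayto{k_2\oplus k_3}(t_2, t_3)$ decomposes into $s_2\mayto{k_2}_2 t_2$ and $s_3\mayto{k_3}_3 t_3$. The may-condition of $\cal R$ applied to the second produces a quotient transition $(s_1, s_2^\ast)\mayto{k_1\ominus k_2^\ast}(t_1, t_2^\ast)$; finiteness of the distance bound forces $k_3$, $k_1$, $k_2^\ast$ to share an action, hence so do $k_2$ and $k_2^\ast$, and determinism of $S_2$ then gives $k_2^\ast = k_2$ and $t_2^\ast = t_2$. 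Lemma~\ref{weightedmodal.le:plusminus} converts the bound on $d_\Spec(k_3, k_1\ominus k_2)$ into one on $d_\Spec(k_2\oplus k_3, k_1)$. The must-direction is dual: a transition $s_1\mustto{k_1}_1 t_1$ lifts to a quotient must-transition $(s_1, s_2)\mustto{k_1\ominus k_2}(t_1, t_2)$ because $(s_1, s_2)$ is reachable in $S_1/S_2$ and hence not in $\pre^\ast(B)$, which forces a compatible $s_2\mustto{k_2}_2 t_2$; the must-condition of $\cal R$ supplies the matching $s_3\mustto{k_3}_3 t_3$, which composes with $s_2\mustto{k_2}_2 t_2$ to yield the required must-transition in $S_2\|S_3$.

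For maximality, $\md(S_3, S_1/S_2) \le \md(S_2\|S_3, S_1)$, I would symmetrically convert a witness $\cal R' = \{R'_\epsilon\}$ into a family
\[
  R_\epsilon = \{(s_3, (s_1, s_2)) : ((s_2, s_3), s_1) \in R'_\epsilon\} \cup \{(s_3, u) : s_3 \in S_3\},
\]
where $u$ is the universal state in the quotient. A challenge $s_3\mayto{k_3}_3 t_3$ splits by cases: if some $s_2\mayto{k_2}_2 t_2$ makes $k_2\oplus k_3$ defined, the composed transition in $S_2\|S_3$ yields a matching $s_1\mayto{k_1}_1 t_1$ via $\cal R'$, lifting to $(s_1, s_2)\mayto{k_1\ominus k_2}(t_1, t_2)$ with the distance bound translated by Lemma~\ref{weightedmodal.le:plusminus}; if no such $s_2$ exists, the universal edge $(s_1, s_2)\mayto{k_3} u$ matches at distance zero. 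The must-direction takes a quotient transition $(s_1, s_2)\mustto{k_1\ominus k_2}(t_1, t_2)$, feeds $s_1\mustto{k_1}_1 t_1$ into the must-condition of $\cal R'$ to obtain $(s_2, s_3)\mustto{k_2^\ast\oplus k_3}(t_2^\ast, t_3)$, then uses determinism of $S_2$ to identify $k_2^\ast = k_2$ and $t_2^\ast = t_2$ before translating the bound.

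The main obstacle will be the bookkeeping around pruning: verifying that every target state $(s_1, s_2)$ appearing in the constructions is genuinely a state of $S_1/S_2$, i.e.~$(s_1, s_2) \notin \pre^\ast(B)$, and that the label operation $k_1\ominus k_2$ is defined whenever it is invoked. This is where the local consistency hypothesis on the three WMTS and the finiteness assumption $\md(S_3, S_1/S_2) < \infty$ enter: together they guarantee that every must-transition from $s_1$ encountered in the lifted family is matchable by a $\ominus$-compatible must-transition from $s_2$, so pruning does not eliminate the states we rely on. Determinism of $S_2$ is essential in the soundness direction to make the lifting unambiguous, and the universal edges to $u$ are essential in the maximality direction to handle may-challenges that do not synchronize with any $s_2$-transition.
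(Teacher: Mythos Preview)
Your proposal is correct and follows essentially the same approach as the paper: both directions are established by transporting modal refinement families between $S_3 \times (S_1/S_2)$ and $(S_2\|S_3)\times S_1$ via the same bijection of state pairs (augmented by the universal state $u$ in the maximality direction), with Lemma~\ref{weightedmodal.le:plusminus} translating the label-distance bounds and determinism of $S_2$ resolving the ambiguity in identifying $k_2^\ast, t_2^\ast$. Your identification of the pruning bookkeeping as the delicate point is also accurate; the paper handles it in exactly the way you describe.
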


\begin{proof}
  To avoid confusion, we write $\mmayto_/$ and
  $\mmustto_/$ for transitions in $S_1/ S_2$ and
  $\mmayto_\|$ and $\mmustto_\|$ for transitions in $S_2\| S_3$.  The
  inequality $\md( S_3, S_1/ S_2)\ge \md( S_2\| S_3, S_1)$
  is trivial if $\md( S_2\| S_3, S_1)= \infty$, so assume the
  opposite and let $\cal R^1=\big\{ R^1_\epsilon\subseteq S_3\times\big(
  S_1\times S_2\cup\{ u\}\big)\bigmid \epsilon\ge 0\big\}$ be a
  witness for $\md( S_3, S_1/ S_2)$.  Define
  $R^2_\epsilon=\big\{\big(( s_2, s_3), s_1\big)\bigmid\big(s_3,( s_1,
  s_2)\big)\in R^1_\epsilon\big\}\subseteq S_2\times S_3\times S_1$
  for all $\epsilon\ge 0$, and let $\cal R^2=\{ R^2_\epsilon\mid
  \epsilon\ge 0\}$.  Certainly $\big(( s_2^0, s_3^0), s_1^0\big)\in
  R^2_{ \md( S_3, S_1/ S_2)}\in \cal R^2$, so let now $\big(( s_2,
  s_3), s_1\big)\in R^2_\epsilon\in \cal R^2$ for some $\epsilon\ge 0$.

  Assume $( s_2, s_3)\mayto{ k_2\oplus k_3}_\|( t_2, t_3)$, then also
  $s_2\mayto{ k_2}_2 t_2$ and $s_3\mayto{ k_3}_3 t_3$.  We have
  $\big( s_3,( s_1, s_2)\big)\in R^1_\epsilon$, so there is
  $( s_1, s_2)\mayto{ k_1\ominus k_2'}_/( t_1, t_2')$ for which
  $d_\Spec( k_3, k_1\ominus k_2')= d_\Spec( k_2'\oplus k_3, k_1)\le
  \epsilon$ and such that
  $\big( t_3,( t_1, t_2')\big)\in R^1_{ \epsilon'}\in \cal R^1$, hence
  $\big(( t_2', t_3), t_1\big)\in R^2_{ \epsilon'}\in \cal R^2$, for some
  $\epsilon'\le \lambda^{ -1}\big( \epsilon- d_\Spec( k_2'\oplus k_3,
  k_1)\big)$.  By definition of quotient we must have
  $s_1\mayto{ k_1}_1 t_1$ and $s_2\mayto{ k_2'}_2 t_2'$, and by
  determinism of $S_2$, $k_2'= k_2$ and $t_2'= t_2$.

  Assume $s_1\mustto{ k_1}_1 t_1$.  We must have a transition
  $s_2\mustto{ k_2}_2 t_2$ for which $k_1\ominus k_2$ is defined.
  Hence $( s_1, s_2)\mustto{ k_1\ominus k_2}_/( t_1, t_2)$.
  This in turn implies that there is $s_3\mustto{ k_3}_3 t_3$ for
  which
  $d_\Spec( k_3, k_1\ominus k_2)= d_\Spec( k_2\oplus k_3, k_1)\le
  \epsilon$ and such that
  $\big( t_3,( t_1, t_2)\big)\in R^1_{ \epsilon'}\in \cal R^1$, hence
  $\big(( t_2, t_3), t_1\big)\in R^2_{ \epsilon'}\in \cal R^2$, for some
  $\epsilon'\le \lambda^{ -1}\big( \epsilon- d_\Spec( k_2\oplus k_3,
  k_1)\big)$, and by definition of parallel composition,
  $( s_2, s_3)\mustto{ k_2\oplus k_3}_\|( t_2, t_3)$.

  To show that $\md( S_3, S_1/ S_2)\le \md( S_2\| S_3, S_1)$,
  let
  $\cal R^2=\{ R^2_\epsilon\subseteq S_2\times S_3\times S_1\mid
  \epsilon\ge 0\}$ be a witness for $\md( S_2\| S_3, S_1)$, define
  $R^1_\epsilon=\big\{\big( s_3,( s_1, s_2)\big)\bigmid\big(( s_2,
  s_3), s_1\big)\in R^2_\epsilon\big\}\cup\big\{( s_3, u)\bigmid
  s_3\in S_3\big\}$ for all $\epsilon\ge 0$, and let
  $\cal R^1=\{ R^1_\epsilon\mid \epsilon\ge 0\}$, then
  $\big( s_3^0,( s_1^0, s_2^0)\big)\in R^1_{ \md( S_2\| S_3, S_1)}\in
  \cal R^1$.

  For any $( s_3, u)\in R^1_\epsilon$ for some $\epsilon\ge 0$, any
  transition $s_3\mayto{ k_3}_3 t_3$ can be matched by
  $u\mayto{ k_3}_/ u$, and then $( t_3, u)\in R^1_0$.  Let now
  $\big( s_3,( s_1, s_2)\big)\in R^1_\epsilon\in \cal R^1$ for some
  $\epsilon\ge 0$, and assume $s_3\mayto{ k_3}_3 t_3$.  If
  $k_2\oplus k_3$ is undefined for all transitions
  $s_2\mayto{ k_2}_2 t_2$, then by definition
  $( s_1, s_2)\mayto{ k_3} u$, and again $( t_3, u)\in R^1_0$.  If
  there is a transition $s_2\mayto{ k_2}_2 t_2$ such that
  $k_2\oplus k_3$ is defined, then also
  $( s_2, s_3)\mayto{ k_2\oplus k_3}_\|( t_2, t_3)$.  Hence we have
  $s_1\mayto{ k_1}_1 t_1$ with
  $d_\Spec( k_2\oplus k_3, k_1)\le \epsilon$, implying that
  $( s_1, s_2)\mayto{ k_1\ominus k_2}_/( t_1, t_2)$. Hence
  $d_\Spec( k_3, k_1\ominus k_2)= d_\Spec( k_2\oplus k_3, k_1)\le
  \epsilon$.  Also,
  $\big(( t_2, t_3), t_1\big)\in R^2_{ \epsilon'}\in \cal R^2$, hence
  $\big( t_3,( t_1, t_2)\big)\in R^1_{ \epsilon'}\in \cal R^1$, for some
  $\epsilon'\le \lambda^{ -1}\big( \epsilon- d_\Spec( k_3, k_1\ominus
  k_2)\big)$.

  Assume $( s_1, s_2)\mustto{ k_1\ominus k_2}_/( t_1, t_2)$,
  hence we have $s_1\mustto{ k_1}_1 t_1$ and $s_2\mustto{ k_2}_2
  t_2$.  It follows that $( s_2, s_3)\mustto{ k_2'\oplus k_3}_\|(
  t_2', t_3)$ with $d_\Spec( k_2'\oplus k_3, k_1)= d_\Spec( k_3,
  k_1\ominus k_2')\le \epsilon$ and such that $\big(( t_2', t_3),
  t_1\big)\in R^2_{ \epsilon'}\in R^2$, hence $\big( t_3,( t_1,
  t_2')\big)\in R^1_{ \epsilon'}\in R^1$, for some $\epsilon'\le
  \lambda^{ -1}\big( \epsilon- d_\Spec( k_3, k_1\ominus k_2')\big)$.  By
  definition of parallel composition we must have $s_2\mustto{ k_2'}_2
  t_2'$ and $s_3\mustto{ k_3}_3 t_3$, and by determinism of $S_2$,
  $k_2'= k_2$ and $t_2'= t_2$. \qed
\end{proof}

\begin{figure}[tpb]
  \centering
  \subbottom[$S_1$]{
    \begin{tikzpicture}[->,>=stealth',shorten >=1pt,auto,node
      distance=2.0cm,initial text=,scale=0.8,transform shape]
      \tikzstyle{every node}=[font=\small] \tikzstyle{every
        state}=[fill=white,shape=circle,inner sep=.5mm,minimum size=6mm]
      \node[initial,state] (s1) at (0,0) {$s_1$};
      \node[state] (s2) at (2,0) {$t_1$};
      \path (s1) edge [densely dashed] node [above] {$a,[ 0, 0]$} (s2);
    \end{tikzpicture}}
  \hspace{2mm}
  \subbottom[$S_2$]{
    \begin{tikzpicture}[->,>=stealth',shorten >=1pt,auto,node
      distance=2.0cm,initial text=,scale=0.8,transform shape]
      \tikzstyle{every node}=[font=\small] \tikzstyle{every
        state}=[fill=white,shape=circle,inner sep=.5mm,minimum size=6mm]
      \node[initial,state] (s1) at (0,0) {$s_2$};
      \node[state] (s2) at (2,0) {$t_2$};
      \path (s1) edge [densely dashed] node [above] {$a,[ 0, 1]$} (s2);
    \end{tikzpicture}}
  \hspace{2mm}
  \subbottom[$S_3$]{
    \begin{tikzpicture}[->,>=stealth',shorten >=1pt,auto,node
      distance=2.0cm,initial text=,scale=0.8,transform shape]
      \tikzstyle{every node}=[font=\small] \tikzstyle{every
        state}=[fill=white,shape=circle,inner sep=.5mm,minimum size=6mm]
      \node[initial,state] (s1) at (0,0) {$s_3$};
      \node[state] (s2) at (2,0) {$t_3$};
      \path (s1) edge [densely dashed] node [above] {$a,[ 0, 0]$} (s2);
    \end{tikzpicture}}
  \hspace{2mm}
  \subbottom[$S_2\| S_3$]{
    \begin{tikzpicture}[->,>=stealth',shorten >=1pt,auto,node
      distance=2.0cm,initial text=,scale=0.8,transform shape]
      \tikzstyle{every node}=[font=\small] \tikzstyle{every
        state}=[fill=white,shape=circle,inner sep=.5mm,minimum size=6mm]
      \node[initial,state,shape=rectangle,rounded corners] (s1) at
      (0,0) {$( s_2, s_3)$}; 
      \node[state,shape=rectangle,rounded corners] (s2) at (3,0)
      {$( t_2, t_3)$};
      \path (s1) edge [densely dashed] node [above] {$a,[ 0, 1]$} (s2);
    \end{tikzpicture}}
  \hspace{2mm}
  \subbottom[$S_1/ S_2$]{
    \begin{tikzpicture}[->,>=stealth',shorten >=1pt,auto,node
      distance=2.0cm,initial text=,scale=0.8,transform shape]
      \tikzstyle{every node}=[font=\small] \tikzstyle{every
        state}=[fill=white,shape=circle,inner sep=.5mm,minimum size=6mm]
      \node[initial,state,shape=rectangle,rounded corners] (s1) at
      (0,0) {$( s_1, s_2)$};
    \end{tikzpicture}}
  \caption{WMTS for which $\md( S_2\| S_3, S_1)\ne \md( S_3,
    S_1/ S_2)= \infty$.}
  \label{weightedmodal.fi:quotient-counterex}
\end{figure}

The example of Figure~\ref{weightedmodal.fi:quotient-counterex} shows
that the condition $\md( S_3, S_1/ S_2)< \infty$ in
Theorem~\ref{weightedmodal.th:soundmaxquot} is necessary.  Here
$\md( S_2\| S_3, S_1)= 1$, but $\md( S_3, S_1/ S_2)= \infty$ because
of inconsistency between the transitions $s_1\mayto{ a,[ 0, 0]}_1 t_1$
and $s_2\mayto{ a,[ 0, 1]}_2 t_2$ for which $k_1\ominus k_2$ is
defined.

\medskip As a practical application, we notice that \emph{relaxation} as
defined in Section~\ref{weightedmodal.se:relax} can be useful when computing
quotients.  The quotient construction in Definition~\ref{weightedmodal.de:comp-quot}
introduces inconsistent states (which afterwards are pruned) whenever
there is a \textit{must} transition $s_1\mustto{ k_1}_1 s_1'$ such that
$k_1\ominus k_2$ is undefined for all transitions $s_2\mustto{ k_2}_2
s_2'$.  Looking at the definition of $\ominus$, we see that this is the
case if $k_1=( a_1,[ x_1, y_1])$ and $k_2=( a_2,[ x_2, y_2])$ are such
that $a_1\ne a_2$ or $x_1- x_2> y_1- y_2$.  In the first case, the
inconsistency is of a \emph{structural} nature and cannot be dealt with;
but in the second case, it may be avoided by \emph{enlarging} $k_1$:
decreasing $x_1$ or increasing $y_1$ so that now, $x_1- x_2\le y_1-
y_2$.

Enlarging quantitative constraints is exactly the intuition of
relaxation, thus in practical cases where we get a quotient
$S_1/ S_2$ which is ``too inconsistent'', we may be able to
solve this problem by constructing a suitable $\epsilon$-relaxation
$S_1'$ of $S_1$.  Theorems~\ref{weightedmodal.th:indepimp} and~\ref{weightedmodal.th:soundmaxquot}
can then be used to ensure that also $S_1'/ S_2$ is a
relaxation of $S_1/ S_2$.

\section{Conclusion}

We have shown in this chapter that within the quantitative
specification framework of weighted modal transition systems,
refinement and implementation distances provide a useful tool for
robust compositional reasoning.  Note that these distances permit us
not only to reason about differences between implementations and from
implementations to specifications, but they also provide a means by
which we can compare specifications directly at the abstract level.

We have shown that for some of the ingredients of our specification
theory, namely structural composition and quotient, our formalism is a
conservative extension of the standard Boolean notions.  We have also
noted however, that for determinization and logical conjunction, the
properties of the Boolean notions are not preserved, and that this
seems to be a fundamental limitation of any reasonable quantitative
specification theory.  We will have more to say about this in the next
chapter.

\chapter[General Quantitative Specification Theories with Modal
Transition Systems][General Quantitative Specification
Theories]{General Quantitative Specification Theories with Modal
  Transition Systems\footnote{This chapter is based on the journal
    paper~\cite{DBLP:journals/acta/FahrenbergL14} published in Acta
    Informatica.}}
\label{ch:wm2}

This chapter combines the work of the two previous chapters.  It uses
the general theory of linear and branching distances developed in
Chapter~\ref{ch:qltbt} to introduce general refinement distances
between \emph{structured modal transition systems}.  It then proceeds
to consider quantitative properties of structural composition,
quotient, and conjunction, and finishes with a logical
characterization of quantitative refinement using Hennessy-Milner
logic.

\section{Structured Modal Transition Systems}
\label{wm2.se:smts}

We work with a poset $\Spec$ of \emph{specification labels} with a
partial order $\labpre$
and denote by $\Spec^\infty= \Spec^*\cup \Spec^\omega$ the set of finite and
infinite traces over $\Spec$.  In applications, $\Spec$ may be used to
model data about the behavior of a system; for specifications this may
be considered as legal parameters of operation, whereas for
implementations it may be thought of as observed
information.

The partial order $\labpre$ is meant to model
\emph{refinement} of data; if $k\labpre \ell$, then $k$ is
more refined (leaves fewer choices) than $\ell$.
The set $\Imp= \{ k\in \Spec\mid
k'\labpre k\Longrightarrow k'= k\}$ is called the set of
\emph{implementation labels}; these are the data which cannot be refined
further.  We let $\llbracket k\rrbracket=\{ k'\in \Imp\mid
k'\labpre k\}$ and assume that
$\llbracket k\rrbracket\ne \emptyset$ for all $k\in \Spec$.

When $k\not\labpre \ell$, we want to be able to quantify the
impact of this difference in data on the systems in question, thus
circumventing the fragility of the theory.  To this end, we introduce a
general notion of distance on sequences of data following
the approach laid out in Chapter~\ref{ch:qltbt}.

\subsection{Trace distances}

In order to build a framework for specification distances which is
general enough to cover the distances commonly used, we introduce a
notion of abstract trace distance which factors through a lattice on
which it has a recursive characterization.  We will show in
Section~\ref{wm2.se:examples} that this indeed covers the common
scenarios; see also Section~\ref{qltbt.se:examples_rec}.

Let $M$ be an arbitrary set and $\LL=( \Realnn\cup\{ \infty\})^M$ the
set of functions from $M$ to the extended non-negative real line.  Then
$\LL$ is a complete lattice with partial order
$\mathord{\sqsubseteq_\LL}$ given by $\alpha\sqsubseteq_\LL \beta$ if
and only if $\alpha( x)\le \beta( x)$ for all $x\in M$, and with an
addition $\oplus_\LL$ given by $( \alpha\oplus_\LL \beta)( x)= \alpha(
x)+ \beta( x)$.  The bottom element of $\LL$ is also the zero of
$\oplus_\LL$ and given by $\bot_\LL( x)= 0$, and the top element is
$\top_\LL( x)= \infty$.  We also define a metric on $\LL$ by $d_\LL(
\alpha, \beta)= \sup_{ x\in M}| \alpha( x)- \beta( x)|$.

Intuitively, the lattice $\LL$ serves as a memory for more elaborate
trace distances such as for example the limit-average distance, see
Section~\ref{wm2.se:examples}.  For simpler distances, it will suffice
to let $M=\{ *\}$ be the one-point set and thus
$\LL= \Realnn\cup\{ \infty\}$.  We extend the notions of hemimetrics,
pseudometrics and metrics to mappings $d: X\times X\to \LL$, by
replacing in their defining properties $0$ by $\bot_\LL$ and $+$ by
$\oplus_\LL$.

Let $d: \Imp\times \Imp\to \LL$ be a hemimetric on implementation
labels.  We extend $d$ to $\Spec$ by $d( k, \ell)= \sup_{ m\in
  \llbracket k\rrbracket} \inf_{ n\in \llbracket \ell\rrbracket} d( m,
n)$.  Hence also this distance is \emph{asymmetric}; the intuition is
that any label in $\llbracket k\rrbracket$ has to be matched as good as
possible in $\llbracket \ell\rrbracket$.  Note that this is the
Hausdorff hemimetric associated with $d$ on implementation labels.

We will assume given an abstract \emph{trace distance} $\tdl:
\Spec^\infty\times \Spec^\infty\to \LL$ which is a hemimetric and has a
recursive expression using a \emph{distance iterator} function $F:
\Imp\times \Imp\times \LL\to \LL$, see below.  This will allow us to
recover many of the system distances found in the literature, while
preserving key results.  We will need to assume that $F$ satisfies the
following properties:
\begin{enumerate}[(1)]
\item $F$ is \emph{continuous} in the first two coordinates: $F( \cdot,
  n, \alpha)$ and $F( m, \cdot, \alpha)$ are continuous functions
  $\Imp\to \LL$ for all $\alpha\in \LL$.
\item $F$ is \emph{monotone} in the third coordinate: $F( m, n, \cdot):
  \LL\to \LL$ is monotone for all $m, n\in \Imp$.
\item $F$ extends $d$: for all $m, n\in \Imp$, $F( m, n, \bot_\LL)= d(
  m, n)$.
\item \label{wm2.en:F.indid} Indiscernibility of identicals: $F( m, m,
  \alpha)= \alpha$ for all $m\in \Imp$.
\item \label{wm2.en:F.triangle} An extended triangle inequality: for
  all $m, n, o\in \Imp$ and $\alpha, \beta\in \LL$,
  $F( m, n, \alpha)\oplus_\LL F( n, o, \beta)\sqsupseteq_\LL F( m, o,
  \alpha\oplus_\LL \beta)$.
\end{enumerate}
Note how the last two axioms are a generalization of the standard
axioms for hemimetrics.

We extend $F$ to specification labels by defining
\begin{equation*}
  F( k, \ell, \alpha)= \adjustlimits \sup_{ m\in \llbracket k\rrbracket}
  \inf_{ n\in \llbracket \ell\rrbracket} F( m, n, \alpha)\,.
\end{equation*}
Then also the extended $F: \Spec\times \Spec\times \LL\to \LL$ is
continuous in the first two and monotone in the third coordinates.
Additionally, we assume that sets of implementation labels are
\emph{closed} with respect to $F$ in the sense that for all $k, \ell\in
\Spec$ and $\alpha\in \LL$ with $F( k, \ell, \alpha)\ne \top_\LL$, there
are $m\in \llbracket k\rrbracket$, $n\in \llbracket \ell\rrbracket$ with
$F( m, \ell, \alpha)= F( k, n, \alpha)= F( k, \ell, \alpha)$.  Note that
this implies that the sets $\llbracket k\rrbracket$ are closed under the
hemimetric $d$ on $\Spec$.

Axioms~\eqref{wm2.en:F.indid} and~\eqref{wm2.en:F.triangle} for $F$
above now imply that for the extension, the following hold:
\begin{enumerate}[(1$'$)]
  \setcounter{enumi}3
\item \label{wm2.en:F.indid'} For all $k, \ell\in \Spec$ with
  $k\labpre \ell$ and all $\alpha\in \LL$, $F( k, \ell,
  \alpha)= \alpha$.
\item \label{wm2.eq:triangleF} For all $k, \ell, m\in \Spec$ and
  $\alpha, \beta\in \LL$,
  \begin{equation*}
    F( k, \ell, \alpha)\oplus_\LL F( \ell, m, \beta)\sqsupseteq_\LL F(
    k, m, \alpha\oplus_\LL \beta)\,.
  \end{equation*}
\end{enumerate}

Let $\emptyseq\in \Spec^\infty$ denote the empty sequence, and for any
sequence $\sigma\in \Spec^\infty$, denote by $\sigma_0$ its first element
and by $\sigma^1$ the tail of the sequence with the first element
removed.  We assume that $\tdl$ has a recursive characterization, using
$F$, as follows:
\begin{equation}
  \label{wm2.eq:trdist}
  \tdl( \sigma, \tau)=
  \begin{cases}
    F( \sigma_0, \tau_0, \tdl( \sigma^1, \tau^1)) &\text{if }
    \sigma, \tau\ne \emptyseq, \\
    \top_\LL &\text{if } \sigma= \emptyseq, \tau\ne \emptyseq \text{ or }
    \sigma\ne \emptyseq, \tau= \emptyseq, \\
    \bot_\LL &\text{if } \sigma= \tau= \emptyseq.
  \end{cases}
\end{equation}

We remark that a recursive characterization such as the one above is
quite natural.  Not only does it cover all commonly used trace distances
(see the examples in the next section), but recursion is central to
computing, and any trace distance without a recursive characterization
would strike us as being quite artificial.  It is precisely this
recursive characterization which allows us to lift the trace distance to
\emph{states} of specifications in Definition~\ref{wm2.de:modrefdist} below,
see also Chapter~\ref{ch:qltbt}.

In applications (see below), the lattice $\LL$ comes equipped with a
homomorphism $\eval: \LL\to \Realnn\cup\{ \infty\}$ for which
$\eval( \bot_\LL)= 0$.  The actual trace distance of interest is then the
composition $\td= \eval\circ \tdl$.  The triangle inequality for $F$
implies the usual triangle inequality for $\td$:
$\td( \sigma, \tau)+ \td( \tau, \chi)\le \td(
\sigma, \chi)$ for all $\sigma, \tau, \chi\in \Spec^\infty$, hence
$\td$ is a hemimetric on $\Spec^\infty$.

We need to work with distances which factor through $\LL$, instead of
plainly taking values in $\Realnn\cup\{ \infty\}$, because some
distances which are useful in practice, as the ones in
Examples~\ref{wm2.ex:limavgdist} and~\ref{wm2.ex:maxleaddist} below, have no
recursive characterization using $\LL= \Realnn\cup\{ \infty\}$.  Whether
the theory works for more general intermediate lattices than $\LL=(
\Realnn\cup\{ \infty\})^M$ is an open question; we have had no occasion
to use more general lattices in practice.

\subsection{Examples}
\label{wm2.se:examples}

To give an application to the framework laid out above, we show here a
few examples of specification labels and trace distances and how they
fit into the framework.  For a much more comprehensive application of
the theory see Chapter~\ref{ch:qltbt}.

\begin{example}
  \label{wm2.ex:mfcs}
  \label{wm2.ex:accdist}
  A good example of a set of specification labels is given by
  $\Spec= \Sigma\times \II$, where $\Sigma$ is a finite set of
  discrete labels and
  $\II=\{[ l, r]\mid l\in \Int\cup\{ -\infty\}, r\in \Int\cup\{
  \infty\}, l\le r\}$ is the set of extended-integer intervals.  The
  partial order is defined by $( a,[ l, r])\labpre( a',[ l', r'])$ iff
  $a= a'$, $l'\le l$ and $r'\ge r$.  Hence refinement is given by
  restricting intervals, so that
  $\Imp= \Sigma\times\{[ x, x]\mid x\in \Int\}\approx \Sigma\times
  \Int$.

  The implementation label distance is given by
  \begin{equation*}
    d(( a, x),( a', x'))=
    \begin{cases}
      | x- x'| &\text{if } a= a', \\
      \infty &\text{otherwise},
    \end{cases}
  \end{equation*}
  so that for specification labels $( a,[ l, r])$, $( a',[ l', r'])$,
  \begin{align*}
    d(( a,[ l, r]),( a',[ l', r'])) &= \adjustlimits \sup_{
      m\in\llbracket( a,[ l, r])\rrbracket} \inf_{ n\in\llbracket( a',[
      l', r'])\rrbracket} d( m, n) \\
    &=
    \begin{cases}
      \max( l'- l, r- r', 0) &\text{if } a= a', \\
      \infty &\text{otherwise}.
    \end{cases}
  \end{align*}

  Now let $\LL= \Realnn\cup\{ \infty\}$, $\eval= \id$, and
  \begin{equation*}
    F( m, n, \alpha)= d( m, n)+ \lambda \alpha
  \end{equation*}
  for some fixed \emph{discounting factor} $\lambda\in \Real$ with
  $0< \lambda< 1$, then
  $\td( \sigma, \tau)= \sum_j \lambda^j d( \sigma_j, \tau_j)$ for
  implementation traces $\sigma$, $\tau$ of equal length.  This is the
  accumulating distance which we have used in
  Chapter~\ref{ch:weightedmodal} to develop a specification theory; we
  will continue this example below to show how it fits in our present
  context.
\end{example}

\begin{example}
  \label{wm2.ex:pwdist}
  Using the same setting as above, with $\Spec= \Sigma\times \II$,
  $( a,[ l, r])\labpre( a',[ l', r'])$ iff $a= a'$, $l'\le l$ and
  $r'\ge r$, and $d(( a, x),( a', x'))=| x- x'|$ if $a= a'$ and
  $\infty$ otherwise, we can instantiate $F$ to a \emph{point-wise}
  instead of accumulating distance.

  Let again $\LL= \Realnn\cup\{ \infty\}$ and $\eval= \id$, but now
  \begin{equation*}
    F( m, n, \alpha)= \max( d( m, n), \alpha)\,.
  \end{equation*}
  Then $\td( \sigma, \tau)= \sup_j d( \sigma_j, \tau_j)$ for
  implementation traces $\sigma$, $\tau$ of equal length, hence
  measuring the biggest individual difference between the traces'
  symbols.  We will also continue this example below to show how to
  develop a specification theory based on the point-wise distance.
\end{example}

\begin{example}
  \label{wm2.ex:limavgdist}
  Again with the same instantiations of $\Imp$ and $\Spec$ as above, we
  can introduce \emph{limit-average} distance.  Here we let $\LL=(
  \Realnn\cup\{ \infty\})^\Nat$, $\eval: \LL\to \Realnn\cup\{ \infty\}$
  given by $\eval( \alpha)= \liminf_j \alpha( j)$, and
  \begin{equation*}
    F( m, n, \alpha)(
    j)= \frac1{ j+ 1} d( m, n)+ \frac j{ j+ 1} \alpha( j- 1)\,,
  \end{equation*}
  then
  $\td( \sigma, \tau)= \eval( \tdl( \sigma, \tau))= \liminf_j \frac1{
    j+ 1}\sum_{ i= 0}^j d( \sigma_j, \tau_j)$ for traces of equal
  length.  We show below how this distance, in the framework of the
  present paper, gives a limit-average specification theory.
\end{example}

\begin{example}
  Examples~\ref{wm2.ex:mfcs} to~\ref{wm2.ex:limavgdist} above are in a
  sense agnostic to the precise structure of implementation and
  specification labels.  Indeed, the definitions only use the label
  distance $d: \Imp\times \Imp\to \Realnn\cup\{ \infty\}$, hence
  $\Imp$ (and $\Spec$) can be any set.  In particular, the theory put
  forward here works equally well in a \emph{multi-weighted} setting
  as for example in~\cite{DBLP:conf/ictac/FahrenbergJLS11}, where
  $\Imp= \Sigma\times \Int^k$ and $\Spec= \Sigma\times \II^k$ for some
  $k\in \Nat$.
\end{example}

\begin{example}
  \label{wm2.ex:maxleaddist}
  With the same instantiations of $\Imp$ and $\Spec$ as in
  Examples~\ref{wm2.ex:mfcs} to~\ref{wm2.ex:limavgdist}, we can
  introduce a distance which, instead of accumulating individual label
  differences, measures the long-run difference between
  \emph{accumulated labels}.  This \emph{maximum-lead} distance is
  especially useful for real-time systems and has been considered
  in~\cite{DBLP:conf/formats/HenzingerMP05,
    DBLP:journals/jlp/ThraneFL10}, see also Chapter~\ref{ch:wtsjlap}.
  Unlike Examples~\ref{wm2.ex:mfcs} to~\ref{wm2.ex:limavgdist}, it
  does not use the distance $d$ on implementation labels in the
  definition of the trace distance; rather it accumulates the labels
  itself before taking the distance.

  Let $\LL=( \Realnn\cup\{ \infty\})^\Real$ and
  define $F: \Imp\times \Imp\times \LL\to \LL$ by
  \begin{equation*}
    F(( a, x),( a', x'), \alpha)( \delta)=
    \begin{cases}
      \infty &\text{if } a\ne a', \\
      \max(| \delta+ x- x'|, \alpha( \delta+ x- x')) &\text{if } a= a'.
    \end{cases}
  \end{equation*}
  Define $\eval: \LL\to \Realnn\cup\{ \infty\}$ by $\eval( \alpha)= \alpha( 0)$;
  the maximum-lead distance assuming the lead is zero.  It can then be
  shown that for implementation traces $\sigma=(( a_0, x_0),( a_1,
  x_1),\dots)$, $\tau=(( a_0, y_0),( a_1, y_1),\dots)$,
  \begin{equation*}
    \td(
    \sigma, \tau)= \eval( \tdl( \sigma, \tau))= \sup_m\Big| \sum_{ i= 0}^m x_i-
    \sum_{ i= 0}^m y_i\Big|
  \end{equation*}
  is precisely the maximum-lead distance.
\end{example}

\begin{example}
  \label{wm2.ex:clock}
  Specification labels different from the ones above can for example
  be \emph{clock constraints}, or
  \emph{zones}~\cite{DBLP:journals/tcs/AlurD94}.  For a finite set
  $\Sigma$, let $\Spec= \Phi( \Sigma)$ be the set of closed clock
  constraints over $\Sigma$ given by
  \begin{equation*}
    \Phi( \Sigma) \ni \phi ::= a\le k\mid a\ge k\mid \phi_1\wedge \phi_2
    \qquad( a\in \Sigma, k\in \Nat, \phi_1, \phi_2\in \Phi( \Sigma)).
  \end{equation*}
  Clock constraints have a natural partial order given by 
  $\phi\labpre \phi'$ iff $\phi\Longrightarrow \phi'$.
  Implementation labels are then clock constraints which impose a
  precise value for each $a\in \Sigma$, which can be seen as functions $u:
  \Sigma\to \Nat$.  The natural distance between such \emph{discrete clock
    valuations} is $d( u, u')= \max_{ a\in \Sigma}| u( a)- u'( a)|$, and on
  top of this, any interesting trace distance can be imposed using our
  framework.
\end{example}

\subsection{Structured Modal Transition Systems}

\begin{definition}
  \qquad A \emph{structured modal transition system} (SMTS) is a tuple
  $( S, s_0, \mmayto_S, \mmustto_S)$ consisting of a set $S$ of
  states, an initial state $s_0\in S$, and \must and \may transitions
  $\mmustto_S, \mmayto_S\subseteq S\times \Spec\times S$ for which it
  holds that for all $s\mustto{ k}_S s'$ there is
  $s\mayto{ \ell}_S s'$ with $k\labpre \ell$.
\end{definition}

The last condition is one of \emph{consistency}: everything which is
required is also allowed.  If no confusion can arise, we will omit the
subscripts $S$ on the \must and \may transitions; we will also sometimes
identify an SMTS $( S, s_0, \mmayto_S, \mmustto_S)$ with its state set
$S$.

Intuitively, a \may transition $s\mayto{ k} s'$ specifies that an
implementation $I$ of $S$ is \emph{permitted} to have a corresponding
transition $i\mustto{ m} i'$, for any $m\in \llbracket k\rrbracket$,
whereas a \must transition $s\mustto{ \ell} s'$ postulates that $I$ is
\emph{required} to implement at least one corresponding transition
$i\mustto{ n} i'$ for some $n\in \llbracket \ell\rrbracket$.  We will
make this precise below.

An SMTS $S$ is an \emph{implementation} if $\mmustto_S=
\mmayto_S\subseteq S\times \Imp\times S$; hence in an implementation,
all optional behavior has been resolved, and all data has been refined
to implementation labels.

\begin{definition}
  \label{wm2.de:deterministic}
  An SMTS $( S, s_0, \mmayto_S, \mmustto_S)$ is
  \emph{$\LL$-deterministic}, for a given lattice $\LL$, if it holds for
  all $s\in S$, $s\mayto{ k_1} s_1$, $s\mayto{ k_2} s_2$ for which there
  is $k\in \Spec$ with $d( k, k_1)\ne \top_\LL$ and $d( k, k_2)\ne
  \top_\LL$ that $k_1= k_2$ and $s_1= s_2$.
\end{definition}

Note that for the Boolean label distance given by $d( k, k')= \bot_\LL$
if $k= k'$ and $\top_\LL$ otherwise, the above definition reduces to the
property that if $k_1= k_2$, then also $s_1= s_2$, hence
$\LL$-determinism is a generalization of usual determinism.  In our
quantitative case, we need to be more restrictive: not only do we not
allow distinct transitions from $s$ with the same label, but we forbid
distinct transitions with labels which have a common quantitative
refinement.  Despite of this, we will generally omit the $\LL$ and say
deterministic instead of $\LL$-deterministic.

\begin{example}
  For the label distance $d(( a, x),( a', x'))=| x- x'|$ if $a= a'$ and
  $\infty$ otherwise of Examples~\ref{wm2.ex:accdist} to~\ref{wm2.ex:limavgdist}
  and~\ref{wm2.ex:maxleaddist}, the above condition that there exist $k\in
  \Spec$ with $d( k, k_1)\ne \top_\LL$ and $d( k, k_2)\ne \top_\LL$ is
  equivalent, with $k_1=( a_1, I_1)$ and $k_2=( a_2, I_2)$, to saying
  that $a_1= a_2$, hence our notion of determinism agrees with the one
  of the previous chapter.
\end{example}

A \emph{modal refinement} of SMTS $S$, $T$ is a relation $R\subseteq
S\times T$ such that for any $( s, t)\in R$,
\begin{itemize}
\item whenever $s\mayto{ k}_S s'$, then also $t\mayto{ \ell}_T t'$ for
  some $k\labpre \ell$ and $( s', t')\in R$,
\item whenever $t\mustto{ \ell}_T t'$, then also $s\mustto{ k}_S s'$ for
  some $k\labpre \ell$ and $( s', t')\in R$.
\end{itemize}
Thus any behavior which is permitted in $S$ is also permitted in $T$,
and any behavior required in $T$ is also required in $S$.  We write
$S\mr T$ if there is a modal refinement $R\subseteq S\times T$ with $(
s_0, t_0)\in R$.

The \emph{implementation semantics} of a SMTS $S$ is the set $\llbracket
S\rrbracket=\{ I\mr S\mid I\text{ is an implementation}\}$, and we
write $S\thr T$ if $\llbracket S\rrbracket\subseteq \llbracket
T\rrbracket$, saying that $S$ \emph{thoroughly refines} $T$.  It follows
by reflexivity of $\mr$ that $S\mr T$ implies $S\thr T$, hence
modal refinement is a \emph{syntactic over-approximation} of thorough
refinement.

It can be shown for standard modal transition systems that $S\thr T$
does not imply $S\mr T$ unless $T$ is deterministic,
see~\cite{DBLP:journals/tcs/BenesKLS09} and
Theorem~\ref{weightedmodal.th:det-dteqdm} in the previous chapter.  We
shall provide a quantitative generalization of this result in
Theorem~\ref{wm2.th:thorough_vs_modal} below.  Also, modal refinement
for MTS can be decided in polynomial time, whereas deciding thorough
refinement is EXPTIME-complete~\cite{DBLP:journals/tcs/BenesKLS09}.
Intuitively, thorough refinement---inclusion of implementation
sets---is the relation one really is interested in, but modal
refinement provides a useful over-approximation.

\section{Refinement Distances}
\label{wm2.se:refdist}

We define two distances between SMTS, one at the syntactic and one at
the semantic level.

\subsection{Modal and thorough refinement distance}

\begin{definition}
  \label{wm2.de:modrefdist}
  The \emph{modal refinement distance} $\md: S\times T\to \LL$ between
  the states of SMTS $S$, $T$ is defined to be the least fixed point to
  the equations
  \begin{equation*}
    \md( s, t)= \max\left\{
      \begin{aligned}
        & \adjustlimits \sup_{ s\,\mayto{ k}_S\, s'\,} \inf_{ \, t\,\mayto{
          \ell}_T\, t'} F( k, \ell, \md( s', t'))\,, \\
        & \adjustlimits \sup_{ t\,\mustto{ \ell}_T\, t'\,} \inf_{ \, s\,\mustto{
          k}_S\, s'} F( k, \ell, \md( s', t'))\,.
      \end{aligned}
    \right.
  \end{equation*}
  We let $\md(S, T) = \md(s_0, t_0)$, and we write $S\mr^\alpha T$ if
  $\md( S, T)\sqsubseteq_\LL \alpha$.
\end{definition}

\begin{lemma}
  The modal refinement distance is well-defined and a hemimetric.  Also,
  $S\mr T$ implies $\md( S, T)= \bot_\LL$.
\end{lemma}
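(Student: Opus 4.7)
The plan has four components: existence of a least fixed point defining $\md$, the identity $\md(s,s) = \bot_\LL$, the triangle inequality, and the implication $S \mr T \Rightarrow \md(S,T) = \bot_\LL$. The unifying technique is to view the right-hand side of Definition~\ref{wm2.de:modrefdist} as an endofunction $I$ on the complete lattice $\LL^{S \times T}$ with pointwise order, and to exhibit explicit pre-fixed points of $I$; by Tarski, any such pre-fixed point upper-bounds the least fixed point $\md$.

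Well-definedness is immediate: $F$ is monotone in its third argument by hypothesis, and $\sup, \inf, \max$ preserve pointwise order, so $I$ is monotone; Tarski then supplies its least fixed point. For reflexivity, I would consider $h : S \times S \to \LL$ defined by $h(s,s) = \bot_\LL$ and $h(s,s') = \top_\LL$ for $s \ne s'$. On the diagonal, every \may- or \must-transition from $s$ matches itself, so using the axiom $F(k,k,\alpha)=\alpha$ (property~(4$'$) of the extended iterator) one finds $I(h)(s,s) = \bot_\LL$, and the off-diagonal bound is trivial. Hence $\md(s,s) \sqsubseteq_\LL \bot_\LL$. The soundness implication is entirely analogous: given a modal refinement $R \subseteq S \times T$ with $(s_0,t_0) \in R$, define $h(s,t) = \bot_\LL$ on $R$ and $\top_\LL$ elsewhere. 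The defining conditions of modal refinement produce, for every \may-challenge $s \mayto{k}_S s'$ from $(s,t) \in R$, a matching $t \mayto{\ell}_T t'$ with $k \labpre \ell$ and $(s',t') \in R$, and symmetrically for \must-challenges; axiom~(4$'$) then gives $F(k,\ell,\bot_\LL) = \bot_\LL$, so $h$ is a pre-fixed point of $I$ and $\md(s_0,t_0) = \bot_\LL$.

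The only delicate part is the triangle inequality $\md(s,u) \sqsubseteq_\LL \md(s,t) \oplus_\LL \md(t,u)$ for $s \in S, t \in T, u \in U$. Here I would introduce $h : S \times U \to \LL$ by $h(s',u') = \inf_{t' \in T}[\md(s',t') \oplus_\LL \md(t',u')]$ and show that $h$ is a pre-fixed point of $I$ on the pair $(S,U)$. Fix $t \in T$; for a \may-transition $s \mayto{k}_S s'$, the fixed-point equation for $\md(s,t)$ yields $t \mayto{\ell}_T t'$ with $F(k,\ell,\md(s',t')) \sqsubseteq_\LL \md(s,t)$, and that same $t \mayto{\ell}_T t'$ together with the fixed-point equation for $\md(t,u)$ yields $u \mayto{m}_U u'$ with $F(\ell,m,\md(t',u')) \sqsubseteq_\LL \md(t,u)$. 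Monotonicity of $F$ in its third slot combined with the extended triangle inequality~(5$'$) then gives
\[
  F(k,m,h(s',u')) \sqsubseteq_\LL F(k,\ell,\md(s',t')) \oplus_\LL F(\ell,m,\md(t',u')) \sqsubseteq_\LL \md(s,t) \oplus_\LL \md(t,u).
\]
A dual chain, starting from a \must-transition $u \mustto{m}_U u'$ and unfolding first $\md(t,u)$ then $\md(s,t)$, handles the second half of $I(h)(s,u)$. Taking infimum over $t$ yields $I(h) \sqsubseteq_\LL h$, and Tarski concludes. The main obstacle is that the $\inf$s over transitions need not be attained in the general lattice-valued setting, so this argument must either invoke a compact-branching hypothesis (as in Lemma~\ref{weightedmodal.le:family} of the previous chapter) or be executed with $\varepsilon$-approximations in the metric $d_\LL$ followed by a limit argument using the continuity of $F$ in its first two arguments.
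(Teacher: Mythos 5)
Your proof is correct and follows essentially the same route as the paper: well-definedness via Tarski on the monotone endofunction $I$ over $\LL^{S\times T}$, and the remaining claims by exhibiting explicit pre-fixed points of $I$ (the paper merely asserts that reflexivity is ``clear'', that the triangle inequality ``can be shown inductively'', and derives the last claim by simplifying the fixed-point equations via axiom~(4$'$), so you are filling in detail it omits). Your closing caveat about infima over transitions not being attained is a genuine subtlety that the paper glosses over here and only addresses later via the compact-branching assumption, so flagging it is appropriate.
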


\begin{proof}
  Let $I: \LL^{ S\times T}\to \LL^{ S\times T}$ be the endofunction
  defined by
  \begin{equation*}
    I( h)( s, t)= \max\left\{
      \begin{aligned}
        & \adjustlimits \sup_{ s\,\mayto{ k}_S\, s'\,} \inf_{ \, t\,\mayto{
          \ell}_T\, t'} F( k, \ell, h( s', t')), \\
        & \adjustlimits \sup_{ t\,\mustto{ \ell}_T\, t'\,} \inf_{ \, s\,\mustto{
          k}_S\, s'} F( k, \ell, h( s', t')).
      \end{aligned}
    \right.
  \end{equation*}
  The lattice $\LL^{ S\times T}$ is complete because $\LL$ is, and $I$
  is monotone because $F( k, \ell,\cdot): \LL\to \LL$ is.  By an
  application of Tarski's fixed point
  theorem~\cite{journals/pjmath/Tarski55}, $I$ has a unique least fixed
  point which hence defines $\md$.

  The property that $\md( S, S)= 0$ for all SMTS $S$ is clear, and the
  triangle inequality $\md( S, T)\oplus_\LL \md( T, U)\sqsupseteq_\LL
  \md( S, U)$ can be shown inductively.

  To show the last claim, assume $s\mr t$.  Then for any $s\mayto{ k}
  s'$ there is $t\mayto{ \ell} t'$ for which $k\labpre \ell$,
  hence $F( k, \ell, \alpha)= \alpha$ for all $\alpha\in \LL$ by
  Axiom~$(\ref{wm2.en:F.indid'}')$.  Similarly for \must transitions, so the
  fixed point equations simplify to
  \begin{equation*}
    \md( s, t)= \max\Big( \adjustlimits \sup_{ s\mayto{} s'} \inf_{
      t\mayto{} t'} \md( s', t'), \adjustlimits \sup_{ t\mustto{} t'} \inf_{
      s\mustto{} s'} \md( s', t')\Big),
  \end{equation*}
  the least fixed point of which is $\md( s, t)= \bot_\LL$. \qed
\end{proof}

One can also define a \emph{linear distance} between states, analogous
to trace inclusion.  This is given by
\begin{equation*}
  \tdl( s, t)= \max\Big( \adjustlimits \sup_{ \sigma\in \tracesfrom{
      s}} \inf_{ \tau\in \tracesfrom{ t}} \tdl( \sigma, \tau),
  \adjustlimits \sup_{ \tau\in \tracesfrom{ t}} \inf_{ \sigma\in
    \tracesfrom{ s}} \tdl( \sigma, \tau)\Big),
\end{equation*}
where $\tracesfrom{ s}$ denotes the set of (\may or \must) traces
emanating from $s$.  It can then be shown that
$\tdl( s, t)\sqsubseteq_\LL \md( s, t)$ for all $s, t\in S$, see Chapter~\ref{ch:qltbt}.

\begin{definition}
  The \emph{thorough refinement distance} from an SMTS $S$ to an
  SMTS $T$ is
  \begin{equation*}
    \thd( S, T)= \adjustlimits \sup_{ I\in \llbracket S\rrbracket} \inf_{
      J\in \llbracket T\rrbracket} \md( I, J),
  \end{equation*}
  and we write $S\thr^\alpha T$ if $\thd( S, T)\sqsubseteq_\LL \alpha$.
\end{definition}

\begin{lemma}
  The thorough refinement distance is a hemimetric, and $S\thr T$
  implies $\thd( S, T)= \bot_\LL$.
\end{lemma}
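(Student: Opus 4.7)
The plan is to dispatch the two easier pieces---the reflexivity $\thd(S,S) = \bot_\LL$ and the final implication $S \thr T \Rightarrow \thd(S,T) = \bot_\LL$---by the same trivial observation. In both cases, for each $I \in \llbracket S \rrbracket$ I would choose $J = I$ in the inner infimum. For $\thd(S,S)$ this is legal because $I \in \llbracket S \rrbracket$, and for the implication it is legal because $\llbracket S \rrbracket \subseteq \llbracket T \rrbracket$ by hypothesis; in either case $\md(I,I) = \bot_\LL$ (since $\md$ is a hemimetric on states, so in particular on implementations), so $\inf_{J} \md(I,J) \sqsubseteq_\LL \bot_\LL$ and therefore the outer supremum over $I$ is also $\bot_\LL$.

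For the triangle inequality $\thd(S,T) \oplus_\LL \thd(T,U) \sqsupseteq_\LL \thd(S,U)$, I would reduce to a coordinatewise argument. Since $\LL = (\Realnn \cup \{\infty\})^M$ has pointwise order, pointwise addition, and pointwise suprema and infima, it suffices to prove the scalar inequality
\begin{equation*}
 \thd(S,T)(x) + \thd(T,U)(x) \ge \thd(S,U)(x)
\end{equation*}
for every $x \in M$. This is the standard Hausdorff-hemimetric triangle inequality. The case where either summand on the left is $\infty$ is immediate. Otherwise, I fix $I \in \llbracket S \rrbracket$ and $\epsilon > 0$; using that $\inf_{J' \in \llbracket T \rrbracket} \md(I,J')(x) \le \thd(S,T)(x)$ by definition of the outer supremum, I pick $J \in \llbracket T \rrbracket$ with $\md(I,J)(x) \le \thd(S,T)(x) + \epsilon$, and similarly pick $K \in \llbracket U \rrbracket$ with $\md(J,K)(x) \le \thd(T,U)(x) + \epsilon$. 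Applying the triangle inequality for $\md$ at coordinate $x$ yields $\md(I,K)(x) \le \thd(S,T)(x) + \thd(T,U)(x) + 2\epsilon$; taking the infimum over $K \in \llbracket U \rrbracket$, then the supremum over $I \in \llbracket S \rrbracket$, and finally letting $\epsilon \to 0$ completes the step.

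The only subtlety, and arguably the hardest part to get right, is the $\epsilon$-approximation: in a general complete lattice one cannot always approximate an infimum by a single element of the indexing set. Working coordinatewise sidesteps this issue entirely, because in each coordinate the infimum lives in $\Realnn \cup \{\infty\}$, where approximation up to $\epsilon$ is elementary. The witnesses $J$ and $K$ chosen will in general depend on both $I$ and $x$, but this is harmless since the target inequality is itself stated coordinatewise.
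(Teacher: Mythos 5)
Your proof is correct and follows essentially the same route as the paper, which simply notes that $\thd( S, S)= \bot_\LL$ is clear, that $\llbracket S\rrbracket\subseteq \llbracket T\rrbracket$ yields the final claim, and that the triangle inequality follows as for the Hausdorff metric (citing Aliprantis--Border, Lemma~3.72). Your coordinatewise reduction to $\Realnn\cup\{ \infty\}$ is exactly the right way to legitimize the $\epsilon$-approximation of the infimum in the lattice $\LL$, a detail the paper delegates to the reference.
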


\begin{proof}
  The equality $\thd( S, S)= \bot_\LL$ is clear, and the triangle
  inequality $\thd( S, T)+ \thd( T, U)\ge \thd( S, U)$ follows like in the
  proof of~\cite[Lemma~3.72]{book/AliprantisB07}.  If $S\thr T$,
  then $\llbracket S\rrbracket\subseteq \llbracket T\rrbracket$ implies
  $\thd( S, T)= \bot_\LL$. \qed
\end{proof}

\subsection{Refinement families}

As is the case for ordinary
(bi)simulation~\cite{DBLP:conf/tcs/Park81}, there is a dual
\emph{relational} notion of refinement distance which is useful.
Before we can introduce this, we need a notion similar to the
\emph{finite branching} assumption one needs to make for the case of
bisimulation, \cf~\cite{book/Milner89}, see also
Chapter~\ref{ch:weightedmodal}.

\begin{definition}
  \label{wm2.de:comp}
  A SMTS $S$ is said to be \emph{compactly branching} if the sets
  $\{( s', k)\mid s\mayto{ k} s'\}, \{( s', k)\mid s\mustto{ k}
  s'\}\subseteq S\times \Spec$ are compact under the symmetrized
  product distance $\bar d_\textup{\textsf{m}}\times \bar d$ for every
  $s\in S$.
\end{definition}

Recall that the pseudometric $\bar d_\textup{\textsf{m}}\times \bar d$
is given by
$\bar d_\textup{\textsf{m}}\times \bar d(( s, k),( s', k'))= \bar
d_\textup{\textsf{m}}( s, s')+ \bar d( k, k')= \max( \md( s, s'), \md(
s', s))+ \max( d( k, k'), d( k', k))$.  As in
Chapter~\ref{ch:weightedmodal}, we will need compactness of the sets
$\{( s', k)\mid s\mayto{ k} s'\}, \{( s', k)\mid s\mustto{ k}
s'\}\subseteq S\times \Spec$ for the property that continuous
functions defined on them attain their infimum and supremum, see
Lemma~\ref{wm2.le:family} and its proof below.

The notion of compact branching was first introduced, for a formalism
of \emph{metric transition systems},
in~\cite{journals/anyas/Breugel96}.  It is a natural generalization of
finite branching to a distance setting; we shall henceforth assume all
our SMTS to be compactly branching.

\begin{definition}
  A \emph{modal refinement family} from $S$ to $T$, for SMTS $S$, $T$,
  is an $\LL$-indexed family of relations $\cal R=\{ R_\alpha\subseteq
  S\times T\mid \alpha\in \LL\}$ with the property that for all
  $\alpha\in \LL$ and all $( s, t)\in R_\alpha$,
  \begin{itemize}
  \item whenever $s\mayto{ k}_S s'$, then there is $\beta\in \LL$ and $(
    s', t')\in R_\beta$ for which $t\mayto{ \ell}_T t'$ and $F( k, \ell,
    \beta)\sqsubseteq_\LL \alpha$,
  \item whenever $t\mustto{ \ell}_T t'$, then there is $\beta\in \LL$
    and $( s', t')\in R_\beta$ for which $s\mustto{ k}_S s'$ and $F( k,
    \ell, \beta)\sqsubseteq_\LL \alpha$.
  \end{itemize}
\end{definition}

Compact branching implies that refinement families $\cal R$ are
\emph{closed} in the sense that for all $s\in S$, $t\in T$,
$( s, t)\in R_{ \inf\{ \alpha\mid( s, t)\in R_\alpha\in \cal R\}}\in
\cal R$.  Also note how this definition is a common refinement of the
notions of relaton families from Chapters~\ref{ch:qltbt}
and~\ref{ch:weightedmodal}.

\begin{lemma}
  \label{wm2.le:family}
  For all SMTS $S$, $T$ and $\alpha\in \LL$, $S\mr^\alpha T$ if and
  only if there is a modal refinement family $\cal R$ from $S$ to $T$
  with $( s_0, t_0)\in R_\alpha$.
\end{lemma}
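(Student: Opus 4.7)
The plan is to adapt the proof of Lemma~\ref{weightedmodal.le:family} to the present abstract lattice setting, exploiting the compact-branching hypothesis exactly as there to guarantee that the infima in the defining fixed-point equations of $\md$ are attained and that refinement families are closed.

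For the forward direction, suppose $S\mr^\alpha T$, \ie~$\md(s_0,t_0)\sqsubseteq_\LL \alpha$. I would define the family $\cal R=\{R_\beta\mid \beta\in \LL\}$ by setting $R_\beta=\{(s,t)\in S\times T\mid \md(s,t)\sqsubseteq_\LL \beta\}$ so that $(s_0,t_0)\in R_\alpha$ by assumption. To verify that $\cal R$ is a modal refinement family, take any $(s,t)\in R_\beta$ and any $s\mayto{k}_S s'$. By the fixed-point equation of Definition~\ref{wm2.de:modrefdist},
\begin{equation*}
  \inf_{t\,\mayto{\ell}_T\, t'} F(k,\ell,\md(s',t'))\sqsubseteq_\LL \md(s,t)\sqsubseteq_\LL \beta\,.
\end{equation*}
Because $T$ is compactly branching and $F$ is continuous in its second argument (and $\md(s',\cdot)$ is continuous as a consequence of compact branching), this infimum is attained by some $t\mayto{\ell}_T t'$. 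Setting $\gamma=\md(s',t')$ gives $(s',t')\in R_\gamma$ together with $F(k,\ell,\gamma)\sqsubseteq_\LL \beta$ as required. The symmetric argument for \must-transitions is identical, using compact branching of~$S$.

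For the reverse direction, suppose a modal refinement family $\cal R$ with $(s_0,t_0)\in R_\alpha$ is given. Following the strategy of Lemma~\ref{weightedmodal.le:family}, I would define $h:S\times T\to \LL$ by $h(s,t)=\inf\{\beta\in \LL\mid (s,t)\in R_\beta\in \cal R\}$, where the infimum is taken in the complete lattice $\LL$. Compact branching of $S$ and $T$ ensures that $\cal R$ is \emph{closed}, \ie~$(s,t)\in R_{h(s,t)}$ for every pair $(s,t)$. I would then show that $h$ is a pre-fixed point of the monotone endofunction $I$ on $\LL^{S\times T}$ from the proof accompanying Definition~\ref{wm2.de:modrefdist}, \ie~$I(h)\sqsubseteq_\LL h$ pointwise: for each $(s,t)$ one picks, for every $s\mayto{k}_S s'$, a matching $t\mayto{\ell}_T t'$ with $(s',t')\in R_\gamma$ and $F(k,\ell,\gamma)\sqsubseteq_\LL h(s,t)$, yielding $h(s',t')\sqsubseteq_\LL \gamma$ and hence, by monotonicity of $F$ in its third argument, $F(k,\ell,h(s',t'))\sqsubseteq_\LL h(s,t)$; the must-side is symmetric. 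By Tarski's theorem $\md$ is the least fixed point of $I$, so $\md\sqsubseteq_\LL h$ pointwise, whence $\md(s_0,t_0)\sqsubseteq_\LL h(s_0,t_0)\sqsubseteq_\LL \alpha$.

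The main obstacle is the correct use of compact branching to obtain both attainment of the infima in the forward direction and closedness of $\cal R$ in the reverse direction; once these are in place, everything else proceeds by the same Tarski/pre-fixed-point bookkeeping as in Chapter~\ref{ch:weightedmodal}, with $\max$, $\inf$, $+$ replaced by their $\LL$-valued counterparts and monotonicity of $F(k,\ell,\cdot)$ providing the needed compatibility.
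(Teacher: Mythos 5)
Your proposal is correct and follows essentially the same route as the paper's own proof: the forward direction defines $R_\beta$ via sublevel sets of $\md$ and uses compact branching plus continuity of $F$ to extract a witnessing transition, and the reverse direction takes the pointwise infimum $h$ over the (closed) family and shows it is a pre-fixed point of the defining functional, so that the least fixed point $\md$ lies below it.
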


We say that a modal refinement family as in the lemma \emph{witnesses}
$S\mr^\alpha T$; this is of course the same as saying that it
witnesses $\md( S, T)\sqsubseteq_\LL \alpha$, which we sometimes shorten
to say that it witnesses $\md( S, T)$.

\begin{proof}
  Assume first that $S\mr^\alpha T$, thus we know that
  $\md(S,T) \sqsubseteq_\LL \alpha$. 
  We have to show that there is a modal refinement family $\cal R$
  from $S$ to $T$ with $(s_0,t_0) \in R_\alpha$.
  Define a family $\cal R=\{ R_{ \alpha'}\subseteq S\times T\mid \alpha'\in
  \LL\}$ by
  \begin{equation*}
    R_{ \alpha'}=\{( s, t)\mid \md( s, t)\sqsubseteq_\LL \alpha'\}
  \end{equation*}
  for every $\alpha' \in \LL$; note that $\cal R$ is closed in the sense
  above.  Now let $\beta\in \LL$ and $( s, t)\in R_\beta$.
  \begin{itemize}
  \item Assume $s\mayto{ k} s'$. By $\md( s, t)\sqsubseteq_\LL \beta$ and
    the definition of $\md(s,t)$ it follows that $\inf_{ \smash{ t
        \mayto{\ell} t'}} F(k,\ell,\md(s',t')) \sqsubseteq_\LL \beta$.
    As $T$ is compactly branching and $F$ continuous, the set $\{
    F(k,\ell,\md(s',t'))\mid t \mayto{\ell} t'\}$ is compact, hence
    there exists a transition $t \mayto{\ell} t'$ such that
    $F(k,\ell,\md(s',t')) \sqsubseteq_\LL \beta$.
  \item Assume $t \mustto{\ell} t'$. By $\md( s, t)\sqsubseteq_\LL \beta$
    and the definition of $\md(s,t)$ it follows that $\inf_{ \smash{ s
        \mustto{k} s'}} F( k,\ell,\md(s',t')) \sqsubseteq_\LL \beta$.
    Again $\{ F(k,\ell,\md(s',t'))\mid s \mustto{k} s'\}$ is a compact
    set, whence there exists a transition $s \mustto{k} s'$ such that
    $F(k,\ell,\md(s',t')) \sqsubseteq_\LL \beta$.
  \end{itemize}

  For the other direction, assume a refinement family $\cal R$ from
  $S$ to $T$ with $(s_0,t_0) \in R_\alpha$.  Define
  $h: S\times T\to \LL$ by
  $h( s, t)= \inf\{ \alpha\mid( s, t)\in R_\alpha\}$.  Then
  $( s, t)\in R_\beta$ implies that $h( s, t)\sqsubseteq_\LL \beta$.
  Let $s\in S$ and $t\in T$, then $( s, t)\in R_{ h( s, t)}$ because
  $\cal R$ is closed, hence for all $s\mayto{ k} s'$ there is
  $t\mayto{ \ell} t'$ and $\alpha'\in \LL$ for which
  $F( k, \ell, \alpha')\sqsubseteq_\LL h( s, t)$ and
  $( s', t')\in R_{ \alpha'}$, implying
  $h( s', t')\sqsubseteq_\LL \alpha'$ and hence
  $F( k, \ell, h( s', t'))\sqsubseteq_\LL h( s, t)$ by monotonicity
  and transitivity.  Similarly, for all $t\mustto{ \ell} t'$ there is
  $s\mustto{ k} s'$ with
  $F( k, \ell, h( s', t'))\sqsubseteq_\LL h( s, t)$.  Hence $h$ is a
  pre-fixed point for the equations in the definition of $\md$,
  implying that $\md( s, t)\sqsubseteq_\LL h( s, t)$ for all $s\in S$,
  $t\in T$, thus especially $\md( s_0, t_0)\sqsubseteq_\LL \alpha$,
  because $(s_0,t_0) \in R_\alpha$ implies
  $h(s_0,t_0) \sqsubseteq_\LL \alpha$ and
  $\md(s_0,t_0) \sqsubseteq_\LL h(s_0,t_0)$. \qed
\end{proof}

\subsection{Modal distance bounds thorough distance}

The next theorem shows that the modal refinement distance
overapproximates the thorough one, and that it is exact for
deterministic SMTS.  This is similar to the situation for standard
modal transition systems~\cite{DBLP:conf/avmfss/Larsen89};
note~\cite{DBLP:conf/avmfss/Larsen89} that deterministic
specifications generally suffice for applications.

\begin{theorem}
  \label{wm2.th:thorough_vs_modal}
  For all SMTS $S$, $T$, $\thd( S, T)\sqsubseteq_\LL \md( S, T)$.  If
  $T$ is deterministic, then $\thd( S, T)= \md( S, T)$.
\end{theorem}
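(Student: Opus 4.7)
The plan is to split the proof into two parts matching the theorem's two claims. The first inequality will be proven by an implementation-transport argument generalizing the proof of Theorem~\ref{weightedmodal.th:dtledm}; the second (determinism case) will use the same strategy as Theorem~\ref{weightedmodal.th:det-dteqdm}, building a modal refinement family out of thorough distances between sub-systems and exploiting $\LL$-determinism (Def.~\ref{wm2.de:deterministic}) to force uniqueness of matching transitions in $T$.

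For the first inequality, I assume $\md(S,T) \ne \top_\LL$ (otherwise done) and use Lemma~\ref{wm2.le:family} to obtain a modal refinement family $\cal R = \{R_\alpha\}$ witnessing $\md(S,T)$. Given any $I \in \llbracket S\rrbracket$ via a witness $\tilde R \subseteq I \times S$, I would construct an implementation $J$ with state set $T$ and initial state $t_0$ as follows. Along the composed relation family $R'_\alpha = \tilde R \circ R_\alpha$, for each $i \tto{m}_I j$ with $(i,t) \in R'_\alpha$ I pick the matching $s \mayto{k}_S s'$ from $I \mr S$ (so $m \labpre k$) and the matching $t \mayto{\ell}_T t'$ from $\cal R$ (so $F(k,\ell,\beta) \sqsubseteq_\LL \alpha$ for some $(s',t') \in R_\beta$), then select an implementation label $n \in \llbracket \ell\rrbracket$ minimizing $F(m,n,\beta)$; the closure assumption on $\llbracket \ell\rrbracket$ under $F$ makes this infimum attained whenever it is not $\top_\LL$. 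I add $t \tto{n}_J t'$, and handle $t \mustto{\ell}_T t'$ symmetrically. Using Axiom~$(\ref{wm2.en:F.indid'}')$ and monotonicity of $F$ in its third argument, the inequality $F(m,n,\beta) \sqsubseteq_\LL F(m,\ell,\beta) \sqsubseteq_\LL F(k,\ell,\beta) \sqsubseteq_\LL \alpha$ yields that $R'$ witnesses $\md(I,J) \sqsubseteq_\LL \md(S,T)$, while the identity on $T$-states witnesses $J \mr T$ since by construction every label of $J$ refines a \may-label of $T$ and every \must-transition of $T$ is matched in $J$.

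For the second part, assuming $T$ is deterministic and $\thd(S,T) \ne \top_\LL$, I would define $\cal R$ as the smallest $\LL$-indexed family containing $(s_0,t_0)$ at level $\thd(S,T)$ and closed under the following closure: whenever $(s,t) \in R_\alpha$ and $s \mayto{k}_S s_1$ is matched (in a sense made precise below) by $t \mayto{\ell}_T t_1$, then $(s_1,t_1)$ belongs to $R_\beta$ for the appropriate $\beta$ such that $F(k,\ell,\beta) \sqsubseteq_\LL \alpha$; and similarly for \must-transitions. The heart of the argument is to prove inductively that every $(s,t) \in R_\alpha$ satisfies $\thd((s,S),(t,T)) \sqsubseteq_\LL \alpha$, and to use this together with $\LL$-determinism of $T$ to show that for each $s \mayto{k}_S s_1$ there is a \emph{unique} $t \mayto{\ell}_T t_1$ with $d(k,\ell) \ne \top_\LL$, which is then the mandatory matching transition. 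For \must-transitions $t \mustto{\ell}_T t_1$, I would use the closure condition on $\llbracket t\rrbracket$ to build implementations realizing the required \must-behavior and obtain a matching $s \mustto{k}_S s_1$ via the thorough distance bound. Finally I verify that $\cal R$ is a modal refinement family and invoke Lemma~\ref{wm2.le:family} to conclude $\md(S,T) \sqsubseteq_\LL \thd(S,T)$.

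The main obstacle is the determinism step: I need to show that if $\thd((s,S),(t,T)) \sqsubseteq_\LL \alpha$ and $s \mayto{k}_S s_1$, then the \may-transition of $T$ that must exist to answer this challenge is unique and carries a label $\ell$ with a tight bound on $F(k,\ell,\cdot)$. The delicate aspect is that quantitative $\LL$-determinism (Def.~\ref{wm2.de:deterministic}) only forbids two \may-successors whose labels have a common refinement at finite distance, so the argument must pick, for each $m \in \llbracket k\rrbracket$, an implementation of $(s,S)$ firing $m$ and a nearby implementation of $(t,T)$ firing some $n$, then use $\LL$-determinism to collapse all the resulting candidate $T$-transitions to a single $\ell$; the bound $d(k,\ell) \sqsubseteq_\LL \alpha$, and the recursive bound $F(k,\ell,\thd((s_1,S),(t_1,T))) \sqsubseteq_\LL \alpha$, are then obtained by taking appropriate suprema and infima over implementation extensions and applying Axiom~(\ref{wm2.eq:triangleF}).
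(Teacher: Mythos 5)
Your proposal follows essentially the same route as the paper's proof: the first inequality is obtained by transporting each implementation of $S$ to an implementation built on the state set of $T$ along the composed refinement family $\tilde R\circ R_\alpha$, using closedness of $\llbracket \ell\rrbracket$ under $F$ to choose concrete labels, and the determinism case constructs a modal refinement family from the thorough distances $\thd(( s, S),( t, T))$ exactly as the paper does, using $\LL$-determinism via a common implementation label $m$ to collapse the candidate matching transitions of $T$. The steps you flag as delicate are resolved in the paper precisely as you describe, so there is no gap.
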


The counterexample for the Boolean version of the second result given
in~\cite{DBLP:journals/tcs/BenesKLS09} also works in our setting, to
show that there exist (necessarily nondeterministic) SMTS $S$, $T$ for
which $\thd( S, T)= \bot_\LL$, but $\md( S, T)= \top_\LL$.

\begin{proof}
  For the first claim, if $\md( S, T)= \top_\LL$, we have nothing to
  prove.  Otherwise, let
  $\cal R=\{ R_\alpha\subseteq S\times T\mid \alpha\in \LL\}$ be a
  modal refinement family which witnesses $\md( S, T)$, then
  $( s_0, t_0)\in R_{ \md( S, T)}$.  Let
  $I\in \llbracket S\rrbracket$; we will expose
  $J\in \llbracket T\rrbracket$ for which
  $\md( I, J)\sqsubseteq_\LL \md( S, T)$.

  Let $\tilde R\subseteq I\times S$ be a witness for $I\mr S$, define
  $R'_\alpha= \tilde R\circ R_\alpha\subseteq I\times T$ for all
  $\alpha\in \LL$, and let $\cal R'=\{ R'_\alpha\mid \alpha\in \LL\}$.
  We let the states of $J$ be $J= T$, with $j_0= t_0$, and define
  $\mmayto_J= \mmustto_J$ as follows:

  For any $i\mustto{ m}_I i'$ and any $t\in T$ for which
  $( i, t)\in R'_\alpha\in \cal R'$ for some $\alpha\in \LL$,
  $\alpha\ne \top_\LL$, we have $t\mayto{ \ell}_T t'$ with
  $( i', t')\in R'_\beta\in \cal R'$ for some $\beta\in \LL$ with
  $F( m, \ell, \beta)\sqsubseteq_\LL \alpha$.  As
  $\llbracket \ell\rrbracket$ is closed under $F$, there is
  $n\in \llbracket \ell\rrbracket$ for which
  $F( m, n, \beta)= F( m, \ell, \beta)$, and we add a transition
  $t\mustto{ n}_J t'$ to $J$.

  Similarly, for any $t\mustto{ \ell}_T t'$ and any $i\in I$ for which
  $( i, t)\in R'_\alpha\in \cal R'$ for some $\alpha\in \LL$, $\alpha\ne
  \top_\LL$, we have $i\mustto{ m}_I i'$ with $( i', t')\in R'_\beta$ for
  some $\beta\in \LL$ with $F( m, \ell, \beta)\sqsubseteq_\LL \alpha$.
  Using again closedness of $\llbracket \ell\rrbracket$, we find $n\in
  \llbracket \ell\rrbracket$ for which $F( m, n, \beta)= F( m, \ell,
  \beta)$ and add a transition $t\mustto{ n}_J t'$ to $J$.

  We show that the identity relation $\{( t, t)\mid t\in T\}\subseteq
  J\times T$ witnesses $J\mr T$.  Let first $t\mustto{ n}_J t'$; we
  must have used one of the two constructions above for creating this
  transition.  In the first case, there is $t\mayto{ \ell}_T t'$ with
  $n\in \llbracket \ell\rrbracket$, and in the second case, there is
  $t\mustto{ \ell}_T t'$, hence also $t\mayto{ \ell'}_T t'$ with
  $\ell\labpre \ell'$, thus $n\in \llbracket
  \ell\rrbracket\subseteq \llbracket \ell'\rrbracket$.  Now let
  $t\mustto{ \ell}_T t'$, then the second construction above has
  introduced $t\mustto{ n}_J t'$ with $n\in \llbracket \ell\rrbracket$.

  To finish the proof, we show that the family $\cal R'$ is a witness
  for $\md( I, J)\sqsubseteq_\LL \md( S, T)$.  First,
  $( i_0, s_0)\in \tilde R$ and $( s_0, t_0)\in R_{ \md( S, T)}$ imply
  $( i_0, t_0)\in R'_{ \md( S, T)}$.  Let
  $( i, t)\in R'_\alpha\in \cal R'$ for some $\alpha\in \LL$,
  $\alpha\ne \top_\LL$, and assume first $i\mustto{ m}_I i'$.  Then
  $t\mayto{ \ell}_T t'$ and $t\mustto{ n}_J t'$ by the first part of
  our above construction, and $( i', t')\in R'_\beta$ with
  $F( m, n, \beta)\sqsubseteq_\LL F( m, \ell, \beta)\sqsubseteq_\LL
  \alpha$.  For the converse, a transition $t\mustto{ n}_J t'$ must
  have been introduced above, and in both cases, $i\mustto{ m}_I i'$
  with $( i', t')\in R'_\beta$ and
  $F( m, n, \beta)\sqsubseteq_\LL F( m, \ell, \beta)\sqsubseteq_\LL
  \alpha$.

  Now to the proof of the second assertion of the theorem.  If
  $\thd( S, T)= \top_\LL$, we are done.  Otherwise we inductively
  construct a relation family
  $\cal R=\{ R_\alpha\subseteq S\times T\mid \alpha\in \LL\}$ which
  satisfies $\thd(( s, S),( t, T))\sqsubseteq \alpha$ for any
  $( s, t)\in R_\alpha$, as follows: Begin by letting
  $R_\alpha=\{( s_0, t_0)\}$ for all
  $\alpha\sqsupseteq_\LL \thd( S, T)$, and let now
  $( s, t)\in R_\alpha$ with
  $\thd(( s, S),( t, T))\sqsubseteq \alpha\ne \top_\LL$.

  Let $s\mayto{ k}_S s'$ and $t\mayto{ \ell}_T t'$ such that $d( k,
  \ell)\ne \top_\LL$.  Let $( i', I')\in \llbracket( s', S)\rrbracket$
  and $m\in \llbracket k\rrbracket$, then there is $( i, I)\in
  \llbracket( s, S)\rrbracket$ for which $i\mustto{ m}_I i''$ and $(
  i'', I)\mr( i', I')$.  By the triangle inequality we have $\thd(( i,
  I),( t, T))\sqsubseteq_\LL \thd(( i, I),( s, S))\oplus_\LL \thd(( s, S),(
  t, T))\sqsubseteq_\LL \alpha$, hence there is $t\mayto{ \ell'} t''$ for
  which $d( m, \ell')\sqsubseteq_\LL \alpha$.  But we also have $d( m,
  \ell)\sqsubseteq_\LL d( m, k)\oplus_\LL d( k, \ell)= d( k, \ell)\ne
  \top_\LL$, so by determinism of $T$ it follows that $\ell= \ell'$ and
  $t'= t''$.

  As $m\in \llbracket k\rrbracket$ was chosen arbitrarily above, we have
  $d( m, \ell)\sqsubseteq_\LL \alpha$ for all $m\in \llbracket
  k\rrbracket$, hence $d( k, \ell)= F( k, \ell, \bot_\LL)\sqsubseteq
  \alpha$.  Let $B=\{ \beta'\in \LL\mid F( k, \ell, \beta')\sqsubseteq_\LL
  \alpha\}$ and $\beta= \sup B$, then $F( k, \ell, \beta)\sqsubseteq_\LL
  \alpha$ as $\bot_\LL\in S$.  Add $( s', t')$ to $R_\gamma$ for all
  $\gamma\sqsupseteq_\LL \beta$.

  We miss to show that $\thd(( s', S),( t', T))\sqsubseteq_\LL \beta$.
  By $\thd(( s, S),( t, T))\sqsubseteq_\LL \alpha$ we must have
  $( j, J)\in \llbracket( t, T)\rrbracket$, $j\mustto{ n}_J j'$, and
  an element $\beta'\in \LL$ for which
  $\md(( i', I'),( j', J))\sqsubseteq_\LL \beta'$ and
  $F( m, n, \beta')\sqsubseteq_\LL \alpha$.  Then
  \begin{equation*}
    F( k, \ell, \beta')= F( m, \ell, \beta')\sqsubseteq_\LL F( m, n,
    \beta')\sqsubseteq_\LL \alpha\,,
  \end{equation*}
  hence $\beta'\in B$, implying that
  $\thd(( s', S),( t', T))\sqsubseteq_\LL \beta'\sqsubseteq_\LL
  \beta$.

  We show that $\cal R$ is a refinement family which witnesses
  $\md( S, T)$.  Let $( s, t)\in R_\alpha\in \cal R$ for some
  $\alpha\in \LL$ and assume $s\mayto{ k}_S s'$.  Let
  $m\in \llbracket k\rrbracket$, then there is
  $( i, I)\in \llbracket( s, S)\rrbracket$ with $i\mustto{ m}_I i'$.
  As $\thd(( i, I),( t, T))\sqsubseteq_\LL \alpha$, this implies that
  there is $t\mayto{ \ell}_T t'$ with
  $d( m, \ell)\sqsubseteq_\LL \alpha$.  Also for any other
  $m'\in \llbracket k\rrbracket$ we have $t\mayto{ \ell'}_T t''$ with
  $d( m, \ell')\sqsubseteq_\LL \alpha$, hence $\ell= \ell'$ and
  $t'= t''$ by determinism.  As $m$ was chosen arbitrarily, we have
  $d( m, \ell)\sqsubseteq \alpha$ for all
  $m\in \llbracket k\rrbracket$, hence
  $d( k, \ell)= F( k, \ell, \bot_\LL)\sqsubseteq \alpha$.  By
  construction of $\cal R$, $( s', t')\in R_\beta$ for
  $\beta= \sup\{ \beta'\in \LL\mid F( k, \ell, \beta')\sqsubseteq_\LL
  \alpha\}$.

  Now assume $t\mustto{ \ell}_T t'$.  Let $( i, I)\in \llbracket( s,
  S)\rrbracket$, then we have $( j, J)\in \llbracket( t, T)\rrbracket$
  with $\md(( i, I),( j, J))\sqsubseteq_\LL \alpha$.  We must have
  $j\mustto{ n}_J j'$ with $n\in \llbracket \ell\rrbracket$, hence there
  are $i\mustto{ m}_I i'$ and $\beta'\in \LL$ with $\md(( i', I),( j',
  J))\sqsubseteq_\LL \beta'$ and $F( m, n, \beta')\sqsubseteq_\LL \alpha$.

  The above considerations hold for all
  $( i, I)\in \llbracket( s, S)\rrbracket$, hence there is $k\in \LL$
  with $m\in \llbracket k\rrbracket$, $s\mustto{ k}_S s'$, and
  $F( k, \ell, \beta')= F( m, \ell, \beta')$.  But then
  $F( k, \ell, \beta')\sqsubseteq_\LL F( m, n, \beta')\sqsubseteq_\LL
  \alpha$, hence by construction of $\cal R$, $( s', t')\in R_\beta$
  for
  $\beta= \sup\{ \beta'\in \LL\mid F( k, \ell, \beta')\sqsubseteq_\LL
  \alpha\}$. \qed
\end{proof}

\subsection{Quantitative relaxation}

In a quantitative framework, it can be useful to be able to \emph{relax}
and \emph{strengthen} specifications during the development process.
Which precise relaxations and strengthenings one wishes to apply will
depend on the actual application, but we can here show three general
relaxations which differ from each other in the \emph{level} of the
theory at which they are applied.  For $\alpha\in \LL$ and SMTS $S$, $T$,
\begin{itemize}
\item $T$ is an \emph{$\alpha$-widening} of $S$ if there is a relation
  $R\subseteq S\times T$ for which $( s_0, t_0)\in R$ and such that for
  all $( s, t)\in R$, $s\mayto{ k}_S s'$ if and only if $t\mayto{
  \ell}_T t'$, and $s\mustto{ k}_S s'$ if and only if $t\mustto{ \ell}_T
  t'$, for $k\labpre \ell$, $d( \ell, k)\sqsubseteq_\LL
  \alpha$, and $( s', t')\in R$;
\item $T$ is an \emph{$\alpha$-relaxation} of $S$ if $S\mr T$ and
  $T\mr^\alpha S$;
\item the \emph{$\alpha$-extended implementation semantics} of $S$ is
  \begin{equation*}
    \llbracket S\rrbracket^{ +\alpha}=\{ I\mr^\alpha S\mid I\text{
      implementation}\}.
  \end{equation*}
\end{itemize}

All three notions have also been introduced for the special case of
integer weights in Section~\ref{weightedmodal.se:relax}; but note that
the notion of widening presented here is more synthetic than the one
of the previous chapter.

The notion of $\alpha$-widening is entirely \emph{syntactic}: up to
unweighted bisimulation, $T$ is the same as $S$, but transition labels
in $T$ can be $\alpha$ ``wider'' than in $S$ (hence also $S\mr T$).
The second notion, $\alpha$-relaxation, works at the level of
semantics of specifications, whereas the last notion is at
implementation level.  A priori, there is no relation between the
syntactic and semantic notions, even though one can be established in
some special cases.

\begin{example}
  For the accumulated distance with discounting factor $\lambda$, any
  $\alpha$-widening is also an $\frac{\alpha}{1-\lambda}$-relaxation,
  see Proposition~\ref{weightedmodal.pr:wide-propt}.  This is due to
  the fact that for traces $\sigma, \tau\in \Spec^\infty$ with
  $d( \sigma_j, \tau_j)\le \alpha$ for all $j$, we have
  $\sum_j \lambda^j d( \sigma_j, \tau_j)\le \sum_j \lambda^j \alpha\le
  ( 1- \lambda)^{ -1} \alpha$ by convergence of the geometric series.

  For the point-wise distance, it is easy to see that any
  $\alpha$-widening is also an $\alpha$-relaxation, and the same holds
  for the limit-average distance:
  \begin{equation*}
    \liminf_j \tfrac1{ j+ 1}
    \sum_{ i= 0}^j d( \sigma_i, \tau_i)\le \liminf_j \tfrac1{ j+ 1}\, j
    \alpha= \alpha
  \end{equation*}
\end{example}

\begin{example}
  For the maximum-lead distance on the other hand, it is easy to expose
  cases of $\alpha$-widenings which are \emph{not} $\beta$-relaxations
  for any $\beta$.  One example consists of two one-state SMTS $S$, $T$
  with loops $s_0\mustto{a, 1} s_0$ and $t_0\mustto{a, [0,2]} t_0$;
  then $T$ is an $\alpha$-widening of $S$ for $\alpha( \delta)=| \delta+
  1|$, but $\md( T, S)= \top_\LL$.
\end{example}

\begin{proposition}
  If $T$ is an $\alpha$-relaxation of $S$, then $\llbracket
  T\rrbracket\subseteq \llbracket S\rrbracket^{ +\alpha}$. \qedhere
\end{proposition}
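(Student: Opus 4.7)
The plan is to derive the inclusion directly via the triangle inequality for the modal refinement distance. First I would unfold the definitions on both sides: an implementation $I \in \llbracket T\rrbracket$ is an implementation satisfying $I \mr T$, and the target claim $I \in \llbracket S\rrbracket^{+\alpha}$ means $I \mr^\alpha S$, i.e.\ $\md(I, S) \sqsubseteq_\LL \alpha$. So the content reduces to showing $\md(I,S) \sqsubseteq_\LL \alpha$ whenever $I \mr T$ and $T$ is an $\alpha$-relaxation of $S$.

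Next I would pull in the two ingredients that the hypothesis supplies. Since $T$ is an $\alpha$-relaxation of $S$, by definition $T \mr^\alpha S$, which spells out as $\md(T,S) \sqsubseteq_\LL \alpha$. (The other half of the relaxation definition, $S \mr T$, is not used here; it is the part that ensures the inclusion goes the right way.) From $I \mr T$ and the lemma established right after Definition~\ref{wm2.de:modrefdist} — that $S \mr T$ implies $\md(S,T) = \bot_\LL$ — I get $\md(I,T) = \bot_\LL$.

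The final step is to apply the triangle inequality for the hemimetric $\md$, which has already been verified in the same lemma:
\begin{equation*}
  \md(I, S) \;\sqsubseteq_\LL\; \md(I, T) \oplus_\LL \md(T, S) \;\sqsubseteq_\LL\; \bot_\LL \oplus_\LL \alpha \;=\; \alpha,
\end{equation*}
where the last equality uses that $\bot_\LL$ is the identity of $\oplus_\LL$ in the quantale $\LL$. This gives $I \mr^\alpha S$, i.e.\ $I \in \llbracket S\rrbracket^{+\alpha}$, which is the desired inclusion.

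There is essentially no obstacle: the proof is a one-line application of the triangle inequality once the definitions are unpacked. The only thing to be mindful of is that $\md$ is a hemimetric (not symmetric), so the direction of the triangle inequality matters — one must combine $\md(I,T)$ and $\md(T,S)$ in that order, which is exactly what the hypotheses provide. No use is made of compact branching, determinism, or any quantitative properties of $F$ beyond what is already built into the triangle inequality for $\md$.
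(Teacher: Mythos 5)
Your proof is correct and is exactly the argument the paper gives: the text following the proposition states that one only needs $\md(I,S)\sqsubseteq_\LL \md(I,T)\oplus_\LL \md(T,S)\sqsubseteq_\LL \alpha$ for all $I\in\llbracket T\rrbracket$, which is precisely your triangle-inequality step combined with $\md(I,T)=\bot_\LL$ and $\md(T,S)\sqsubseteq_\LL\alpha$. Your added observation that the $S\mr T$ half of the relaxation definition goes unused is also accurate.
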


It can be shown for special cases that the inclusion in the
proposition is strict, see Section~\ref{weightedmodal.se:relax}; for
its proof one only needs the fact that
$\md( I, S)\sqsubseteq_\LL \md( I, T)\oplus_\LL \md( T,
S)\sqsubseteq_\LL \alpha$ for all $I\in \llbracket T\rrbracket$.

Also of interest is the relation between relaxations of different
specifications.  An easy application of the triangle inequality for
$\md$ shows that the distance between relaxations is bounded by the sum
of the relaxation constants and the unrelaxed systems' distances:

\begin{proposition}
  \label{wm2.pr:relax}
  Let $T$ be an $\alpha$-relaxation of $S$ and $T'$ an
  $\alpha'$-relaxation of $S'$.  Then $\md( T, T')\sqsubseteq_\LL
  \alpha\oplus_\LL \md( S, S')$ and $\md( T', T)\sqsubseteq_\LL
  \alpha'\oplus_\LL \md( S', S)$. \qedhere
\end{proposition}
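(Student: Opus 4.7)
The plan is to apply the triangle inequality for $\md$ twice in each direction, exploiting that one half of the relaxation condition gives $\md(S, T) = \bot_\LL$ (from $S \mr T$) while the other gives $\md(T, S) \sqsubseteq_\LL \alpha$, and analogously for $S'$ and $T'$.

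First I would unpack the definitions: $T$ being an $\alpha$-relaxation of $S$ means both $S \mr T$ and $T \mr^\alpha S$, which via the lemma relating $\mr$ to $\md$ yields $\md(S, T) = \bot_\LL$ and $\md(T, S) \sqsubseteq_\LL \alpha$. Likewise $\md(S', T') = \bot_\LL$ and $\md(T', S') \sqsubseteq_\LL \alpha'$.

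For the first inequality, I would chain the triangle inequality as
\begin{equation*}
  \md(T, T') \sqsubseteq_\LL \md(T, S) \oplus_\LL \md(S, T') \sqsubseteq_\LL \md(T, S) \oplus_\LL \md(S, S') \oplus_\LL \md(S', T'),
\end{equation*}
and then substitute $\md(T, S) \sqsubseteq_\LL \alpha$ and $\md(S', T') = \bot_\LL$ (using monotonicity of $\oplus_\LL$ and that $\bot_\LL$ is its neutral element) to conclude $\md(T, T') \sqsubseteq_\LL \alpha \oplus_\LL \md(S, S')$. The symmetric argument, chaining through $S'$ and $S$ and using $\md(T', S') \sqsubseteq_\LL \alpha'$ together with $\md(S, T) = \bot_\LL$, gives the second inequality.

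There is no real obstacle here: the proposition is essentially immediate from the hemimetric property of $\md$ proved earlier and from the two-sided nature of the relaxation definition. The only thing worth being a bit careful about is that $\oplus_\LL$ is monotone (which is clear from its pointwise definition) so the substitutions inside the chain preserve $\sqsubseteq_\LL$, and that $\bot_\LL$ acts as zero for $\oplus_\LL$ so the extra summand disappears.
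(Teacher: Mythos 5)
Your proof is correct and takes exactly the approach the paper intends: the paper states the result follows by "an easy application of the triangle inequality for $\md$" (and proves the analogous Proposition~\ref{weightedmodal.pr:wide-two} in the same way, by chaining triangle inequalities and using the two halves of the relaxation definition). Your careful handling of monotonicity of $\oplus_\LL$ and of $\bot_\LL$ as its neutral element fills in precisely the details the paper leaves implicit.
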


\section{Structural Composition and Quotient}
\label{wm2.se:compquot}

We now introduce the different operations on SMTS which make up a
specification theory.  Firstly, we are interested in composing
specifications $S$, $S'$ into a specification $S\|S'$ by synchronizing on
shared actions.
Secondly, we need a quotient operator which solves equations of the form
$S\| X\equiv T$, that is, the quotient synthesizes the most general
specification $T \oslash S$ which describes all SMTS $X$ satisfying the
above equation.

\subsection{Structural composition}
\label{wm2.se:compo}

To structurally compose SMTS, we assume given a generic partial
\emph{label composition} operator $\obar: \Spec\times \Spec\parto
\Spec$ which specifies which labels can synchronize,
\cf~\cite{inbook/WinskelN95}.  We will need to assume the following
property:
\begin{itemize}
\item for all $\ell, \ell'\in \Spec$, $( \exists k\in \Spec: d( k,
  \ell)\ne \top_\LL, d( k, \ell')\ne \top_\LL)\liff( \exists m\in \Spec:
  \ell\obar m, \ell'\obar m\text{ are defined})$.
\end{itemize}
This operator permits to compose labels at transitions which are
executed in parallel; 
the property required relates composability to distances in such a way
that two labels have a common quantitative refinement if and only if
they have a common synchronization.  This is quite natural and holds for
all our examples, and is needed to relate determinism to composition in
the proof of Theorem~\ref{wm2.th:quotient} below.

Additionally, we must assume that there exists a function $P: \LL\times
\LL\to \LL$ which allows us to infer bounds on distances on synchronized
labels.  We assume that $P$ is monotone in both coordinates, has $P(
\bot_\LL, \bot_\LL)= \bot_\LL$, $P( \alpha, \top_\LL)= P( \top_\LL, \alpha)=
\top_\LL$ for all $\alpha\in \LL$, and that
\begin{equation*}
  F( k\obar k', \ell\obar \ell', P( \alpha, \alpha'))\sqsubseteq_\LL P(
  F( k, \ell, \alpha), F( k', \ell', \alpha'))
\end{equation*}
for all $k, \ell, k', \ell'\in \Spec$ and $\alpha, \alpha'\in \LL$ for
which $k\obar k'$ and $\ell\obar \ell'$ are defined.
Hence $d( k\obar k', \ell\obar \ell')\sqsubseteq_\LL P( d( k, \ell), d(
k', \ell'))$ for all such $k, \ell, k', \ell'\in \Spec$,
thus $P$ indeed bounds distances of synchronized labels.

Intuitively, $P$ gives us a \emph{uniform bound} on label composition:
distances between composed labels can be bounded above using $P$ and the
individual labels' distances.

\begin{definition}
  The \emph{structural composition} of two SMTS $S$ and $T$ is the SMTS
  $S\| T=( S\times T,( s_0, t_0), \mmayto_{ S\| T}, \mmustto_{ S\| T})$
  with transitions defined as follows:
  \begin{equation*}
    \dfrac{ s\mayto{ k}_S s' \quad t\mayto{ \ell}_T t' \quad k\obar \ell
      \text{ defined}} {( s, t)\mayto{ k\obar \ell}_{ S\| T}( s',
      t')}\qquad \dfrac{ s\mustto{ k}_S s' \quad t\mustto{ \ell}_T t'
      \quad k\obar \ell \text{ defined}} {( s, t)\mustto{ k\obar \ell}_{
        S\| T}( s', t')}
  \end{equation*}
\end{definition}

The next theorem shows that structural composition supports
\emph{quantitative independent implementability}: the distance between
structural compositions can bounded above using $P$ and the distances
between the individual components.

\begin{theorem}
  \label{wm2.th:struct}
  \quad For all SMTS $S$, $T$, $S'$ and $T'$ with $\md( S\| S', T\| T')\ne
  \top_\LL$, $\md( S\| S', T\| T')\sqsubseteq_\LL P( \md( S, T), \md(
  S', T'))$.
\end{theorem}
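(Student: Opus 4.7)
The plan is to lift modal refinement families for the pairs $(S,T)$ and $(S',T')$ to a single refinement family for $(S\|S',\,T\|T')$, applying Lemma~\ref{wm2.le:family} in both directions.

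First I would handle degenerate cases. If either $\md(S,T)=\top_\LL$ or $\md(S',T')=\top_\LL$, the stipulated absorbing behavior $P(\top_\LL,\cdot)=P(\cdot,\top_\LL)=\top_\LL$ makes the bound vacuous. Otherwise Lemma~\ref{wm2.le:family} furnishes refinement families $\mathcal R=\{R_\alpha\subseteq S\times T\}$ and $\mathcal R'=\{R'_{\alpha'}\subseteq S'\times T'\}$ with $(s_0,t_0)\in R_{\md(S,T)}$ and $(s'_0,t'_0)\in R'_{\md(S',T')}$. I would then define a candidate family $\mathcal Q=\{Q_\beta\subseteq(S\times S')\times(T\times T')\mid\beta\in\LL\}$ by
\begin{equation*}
  Q_\beta=\{((s,s'),(t,t'))\mid \exists\alpha,\alpha'\in\LL:\,(s,t)\in R_\alpha,\,(s',t')\in R'_{\alpha'},\,P(\alpha,\alpha')\sqsubseteq_\LL\beta\}.
\end{equation*}
Plainly $((s_0,s'_0),(t_0,t'_0))\in Q_{P(\md(S,T),\md(S',T'))}$.

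The core step is verifying that $\mathcal Q$ is a modal refinement family. Suppose $((s,s'),(t,t'))\in Q_\beta$ with witnesses $\alpha,\alpha'$, and consider a may-transition $(s,s')\mayto{k\obar k'}(s'',s''')$ arising from $s\mayto{k}s''$ in $S$ and $s'\mayto{k'}s'''$ in $S'$ with $k\obar k'$ defined. Using $\mathcal R$ and $\mathcal R'$ I would obtain matching transitions $t\mayto{\ell}t''$ and $t'\mayto{\ell'}t'''$, together with $\gamma,\gamma'\in\LL$ satisfying $(s'',t'')\in R_\gamma$, $(s''',t''')\in R'_{\gamma'}$, $F(k,\ell,\gamma)\sqsubseteq_\LL\alpha$, and $F(k',\ell',\gamma')\sqsubseteq_\LL\alpha'$. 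Since $\alpha,\alpha'\ne\top_\LL$ (lest $\beta=\top_\LL$), these $F$-values are proper, so in particular $d(k,\ell)\ne\top_\LL$ and $d(k',\ell')\ne\top_\LL$. Combined with $k\obar k'$ being defined, the postulated composability axiom for $\obar$ forces $\ell\obar\ell'$ to be defined, whence $(t,t')\mayto{\ell\obar\ell'}(t'',t''')$ in $T\|T'$. Then $((s'',s'''),(t'',t'''))\in Q_{P(\gamma,\gamma')}$, and the uniform-bound inequality together with monotonicity of $P$ gives
\begin{equation*}
  F(k\obar k',\ell\obar\ell',P(\gamma,\gamma'))\sqsubseteq_\LL P(F(k,\ell,\gamma),F(k',\ell',\gamma'))\sqsubseteq_\LL P(\alpha,\alpha')\sqsubseteq_\LL\beta,
\end{equation*}
as required. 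The must-case, starting from a must-transition in $T\|T'$ and matching it back via the must-clauses of $\mathcal R$ and $\mathcal R'$, is entirely symmetric. Applying Lemma~\ref{wm2.le:family} to $\mathcal Q$ then yields the desired bound $\md(S\|S',T\|T')\sqsubseteq_\LL P(\md(S,T),\md(S',T'))$.

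The main obstacle is the $\obar$-definedness argument: extracting from the abstract composability axiom that $\ell\obar\ell'$ is defined whenever $k\obar k'$ is and $d(k,\ell),d(k',\ell')$ are proper. Once this is established, everything else reduces to a routine combination of the refinement families through the uniform bound $P$ and monotonicity of $\sqsubseteq_\LL$.
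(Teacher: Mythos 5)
Your construction is the same as the paper's: the product family
$Q_\beta=\{(( s, s'),( t, t'))\mid \exists \alpha, \alpha':( s, t)\in R_\alpha,( s', t')\in R'_{\alpha'}, P( \alpha, \alpha')\sqsubseteq_\LL \beta\}$,
verified against Lemma~\ref{wm2.le:family}, with the uniform bound on $F$ and monotonicity of $P$ doing the arithmetic. That part is fine. The problem is exactly the step you flag as ``the main obstacle'' and then claim to have discharged: the definedness of $\ell\obar \ell'$. The composability axiom assumed in Section~\ref{wm2.se:compo} says, for a \emph{pair} of labels, that they admit a common quantitative refinement iff they admit a common synchronization \emph{partner} $m$. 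It does not say that definedness of $\obar$ is transported along pairs at finite distance, i.e.\ it does not yield ``$k\obar k'$ defined, $d( k, \ell)\ne \top_\LL$, $d( k', \ell')\ne \top_\LL$ $\Rightarrow$ $\ell\obar \ell'$ defined.'' One can build label structures satisfying the stated axiom in which $k\obar k'$ is defined, $k$ is close to $\ell$ and $k'$ to $\ell'$, yet $\ell\obar \ell'$ is undefined (the axiom only constrains, for each pair, the \emph{existence} of some common partner, which can be witnessed by labels other than the ones in play). The transport property you are invoking is precisely what is \emph{added as an explicit assumption} in the later general setting (Section~\ref{se:structlabels} of Chapter~\ref{ch:dmts2}); it is not available here.

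The paper closes this hole differently, and this is where the hypothesis $\md( S\| S', T\| T')\ne \top_\LL$ — which your argument never actually uses at the point where it matters — comes in: if $\ell\obar \ell'$ is undefined, the paper argues that $\md( S\| S', T\| T')= \top_\LL$, contradicting the hypothesis, so one may assume $\ell\obar \ell'$ is defined and proceed. To repair your proof you should either invoke that hypothesis in the same way (in both the may- and the must-case), or explicitly add the definedness-transport property as an assumption on $\obar$; as written, the appeal to the composability axiom does not go through.
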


\begin{proof}
  Let $\cal R=\{ R_\alpha\subseteq S\times T\mid \alpha\in \LL\}$,
  $\cal R'=\{ R'_\alpha\subseteq S'\times T'\mid \alpha\in \LL\}$ be
  witnesses for $\md( S, T)$ and $\md( S', T')$, respectively, and
  define
  \begin{multline*}
    R^\|_\beta= \{(( s, s'),( t, t'))\in S\times S'\times T\times
    T'\mid
    \\
    \exists \alpha, \alpha'\in \LL:( s, t)\in R_\alpha\in \cal R,( s',
    t')\in R'_{ \alpha'}\in \cal R', P( \alpha,
    \alpha')\sqsubseteq_\LL \beta\}
  \end{multline*}
  for all $\beta\in \LL$.  We show that
  $\cal R^\|=\{ R^\|_\beta\mid \beta\in L\}$ witnesses
  $\md( S\| S', T\| T')\sqsubseteq_\LL P( \md( S, T), \md( S', T'))$.

  First,
  $(( s_0, s'_0),( t_0, t'_0))\in R^\|_{ P( \md( S, T), \md( S',
    T'))}$.  Let now $\beta\in \LL\setminus\{ \top_\LL\}$ and
  $(( s, s'),( t, t'))\in R^\|_\beta\in \cal R^\|$, then we have
  $\alpha, \alpha'\in \LL\setminus\{ \top_\LL\}$ with
  $( s, t)\in R_\alpha\in \cal R$,
  $( s', t')\in R'_{ \alpha'}\in \cal R'$, and
  $P( \alpha, \alpha')\sqsubseteq_\LL \beta$.

  Let $(s, s')\mayto{ k\obar k'}_{ S\| S'}( \bar s, \bar s')$, then
  $s\mayto{ k}_S \bar s$ and $s'\mayto{ k'}_{ S'} \bar s'$.  As
  $( s, t)\in R_\alpha\in \cal R$, we have $t\mayto{ \ell}_T \bar t$
  and $\bar \alpha\in \LL$ with
  $( \bar s, \bar t)\in R_{ \bar \alpha}\in \cal R$ and
  $F( k, \ell, \bar \alpha)\sqsubseteq_\LL \alpha$.  Similarly,
  $( s', t')\in R'_{ \alpha'}\in \cal R'$ implies that there is
  $t'\mayto{ \ell'}_{ T'} \bar t'$ and $\bar \alpha'\in \LL$ with
  $( \bar s', \bar t')\in R'_{ \bar \alpha'}\in \cal R'$ and
  $F( k', \ell', \bar \alpha')\sqsubseteq_\LL \alpha'$.

  Now if the composition $\ell\obar \ell'$ is undefined, then $\md( S\|
  S', T\| T')= \top_\LL$.  If it is defined, then
  we have $( t, t')\mayto{ \ell\obar \ell'}_{ T\| T'}( \bar t, \bar t')$
  by definition of $S\| S'$.  Also, $( \bar t, \bar t')\in R^\|_{ P(
    \bar \alpha, \bar \alpha')}\in \cal R^\|$ and
  \begin{equation*}
    F( k\obar k', \ell\obar
    \ell', P( \bar \alpha, \bar \alpha'))\sqsubseteq_\LL P( F( k, \ell,
    \bar \alpha), F( k', \ell', \bar \alpha'))\sqsubseteq_\LL P( \alpha,
    \alpha')\,.
  \end{equation*}
  The reverse direction, assuming a transition $( t, t')\mustto{
  \ell\obar \ell'}_{ T\| T'}( \bar t, \bar t')$, is \mbox{similar}. \qed
\end{proof}

\begin{example}
  One popular label synchronization operator for the set
  $\Spec= \Sigma\times \II$ from our examples, also used in
  Chapter~\ref{ch:weightedmodal}, is given by adding interval
  boundaries, \viz
  \begin{equation*}
    ( a,[ l, r])\obar( a',[ l', r'])=
    \begin{cases}
      ( a,[ l+ l', r+ r']) &\text{if } a= a', \\
      \text{undefined} &\text{otherwise}.
    \end{cases}
  \end{equation*}
  It can then be shown that
  \begin{equation}
    \label{wm2.eq:lemaddadd}
    d( k\obar k', \ell\obar \ell')\le d( k, \ell)+ d( k', \ell')
  \end{equation}
  for all $k, \ell, k', \ell'\in \Spec$ for which $k\obar k'$ and
  $\ell\obar \ell'$ are defined.

  For the accumulating distance, \eqref{wm2.eq:lemaddadd}~implies that
  $\obar$ is bounded above by $P( \alpha, \alpha')= \alpha+ \alpha'$:
  \begin{align*}
    F( k\obar k', \ell\obar \ell', \alpha+ \alpha') %
    &= d( k\obar k', \ell\obar \ell')+ \lambda( \alpha+ \alpha') \\
    &\le d( k, \ell)+ \lambda \alpha+ d( k', \ell')+ \lambda \alpha' \\
    &= F( k, \ell, \alpha)+ F( k', \ell', \alpha')
  \end{align*}
  Theorem~\ref{wm2.th:struct} thus specializes to
  Theorem~\ref{weightedmodal.th:indepimp}:
  $\md( S\| S', T\| T')\le \md( S, T)+ \md( S',
  T')$ for all SMTS $S, T, S', T'$.
\end{example}

\begin{example}
  Also for the point-wise distance, a bound is given by $P( \alpha,
  \alpha')= \alpha+ \alpha'$:
  \begin{align*}
    F( k\obar k', \ell\obar \ell', \alpha+ \alpha') %
    &= \max( d( k\obar k', \ell\obar \ell'), \alpha+ \alpha') \\
    &\le \max( d( k, \ell)+ d( k', \ell'), \alpha+ \alpha') \\
    &\le \max( d( k, \ell), \alpha)+ \max( d( k', \ell'), \alpha') \\
    &= F( k, \ell, \alpha)+ F( k', \ell', \alpha')\,,
  \end{align*}
  the last inequality because of distributivity of addition over
  maximum.  Thus also here, $\md( S\| S', T\| T')\le \md(
  S, T)+ \md( S', T')$ for all SMTS $S, T, S', T'$.
\end{example}

\begin{example}
  For the limit-average distance, a similar bound $P( \alpha, \alpha')=
  \alpha\oplus_\LL \alpha'$ works: For all $j\in \Nat$,
  \begin{align*}
    F( k\obar k', \ell\obar \ell', \alpha\oplus_\LL \alpha')( j)
    \hspace*{-5em}& \\
    &= \tfrac1{ j+ 1} d( k\obar k', \ell\obar \ell')+ \tfrac j{ j+ 1}(
    \alpha( j- 1)+ \alpha'( j- 1)) \\
    &\le \tfrac1{ j+ 1} d( k, \ell)+ \tfrac j{ j+ 1} \alpha( j- 1)+
    \tfrac1{ j+ 1} d( k', \ell')+ \tfrac j{ j+ 1} \alpha'( j- 1) \\
    &= F( k, \ell, \alpha)( j)+ F( k', \ell', \alpha')( j)\,.
  \end{align*}
  Hence also for the limit-average distance, we have
  $\md( S\| S', T\| T')\le \md( S, T)+ \md( S',
  T')$ for all SMTS $S, T, S', T'$.
\end{example}

\begin{example}
  In a real-time setting, a label synchronization operator which uses
  intersection of intervals instead of addition has been
  used~\cite{conf/fit/FahrenbergL12, DBLP:conf/icfem/BertrandLPR09}.
  That is,
  \begin{equation*}
    ( a,[ l, r])\obar( a',[ l', r'])=
    \begin{cases}
      ( a,[ \max( l, l'), \min( r, r')]) \hspace*{-5em} & \\
      &\text{if } a= a', \max( l, l')\le \min(
      r, r'), \\
      \text{undefined} &\text{otherwise}.
    \end{cases}
  \end{equation*}
  We show that for the maximum-lead distance, $\obar$ is bounded above
  by $P( \alpha, \alpha')= \max( \alpha, \alpha')$, that is,
  \begin{equation*}
    F( k\obar k', \ell\obar \ell', \max( \alpha, \alpha'))( d)\le \max(
    F( k, \ell, \alpha)( d), F( k', \ell', \alpha')( d))\,.
  \end{equation*}
  Applying the definition of $F$, we see that this is equivalent to
  \begin{multline*}
    \adjustlimits \sup_{ p\in \llbracket k\obar k'\rrbracket} \inf_{
      q\in \llbracket \ell\obar \ell'\rrbracket} \max(| d+ p- q|, \max(
    \alpha( d+ p- q), \alpha'( d+ p- q))) \\
    \le \max \left\{
      \begin{aligned}
        & \adjustlimits \sup_{ m\in \llbracket k\rrbracket} \inf_{ n\in
          \llbracket \ell\rrbracket} \max(| d+ m-
        n|, \alpha( d+ m- n)) \\
        & \adjustlimits \sup_{ m'\in \llbracket k'\rrbracket} \inf_{ n'\in
          \llbracket \ell'\rrbracket} \max(| d+ m'- n'|, \alpha'( d+ m'-
        n'))\,;
      \end{aligned}
    \right.
  \end{multline*}
  note that we are abusing notation by identifying $p=( a, x)$ with
  $x$ etc.  This inequality in turn is equivalent to
  \begin{equation*}
    \max \left\{
      \begin{aligned}
        & \adjustlimits \sup_{ p\in \llbracket k\obar k'\rrbracket}
        \inf_{
          q\in \llbracket \ell\obar \ell'\rrbracket}| d+ p- q| \\
        & \adjustlimits \sup_{ p\in \llbracket k\obar k'\rrbracket}
        \inf_{
          q\in \llbracket \ell\obar \ell'\rrbracket} \alpha( d+ p- q) \\
        & \adjustlimits \sup_{ p\in \llbracket k\obar k'\rrbracket}
        \inf_{ q\in \llbracket \ell\obar \ell'\rrbracket} \alpha'( d+ p-
        q)
      \end{aligned}
    \right.
        \le \max \left\{
      \begin{aligned}
        & \adjustlimits \sup_{ m\in \llbracket k\rrbracket} \inf_{ n\in
          \llbracket \ell\rrbracket}| d+ m- n| \\
        & \adjustlimits \sup_{ m\in \llbracket k\rrbracket} \inf_{ n\in
          \llbracket \ell\rrbracket} \alpha( d+ m- n) \\
        & \adjustlimits \sup_{ m'\in \llbracket k'\rrbracket} \inf_{ n'\in
          \llbracket \ell'\rrbracket}| d+ m'- n'| \\
        & \adjustlimits \sup_{ m'\in \llbracket k'\rrbracket} \inf_{ n'\in
          \llbracket \ell'\rrbracket} \alpha'( d+ m'- n')\,.
      \end{aligned}
    \right.
  \end{equation*}
  In this expression, the first line on the left-hand side is bounded
  by the right-hand side's first line, the second line on the left by
  the second line on the right, and the left-hand side's last line by
  the last line of the right-hand side, so that altogether, it holds.
  Theorem~\ref{wm2.th:struct} then translates to
  $\md( S\| S', T\| T')\le \max\big( \md( S, T), \md( S', T')\big)$.
\end{example}

\subsection{Quotient}

For \emph{quotients} of SMTS, we need a partial label operator
$\oslash: \Spec\times \Spec\parto \Spec$ for which it holds that
\begin{itemize}
\item for all $k, \ell, m\in \Spec$, $\ell\oslash k$ is defined and
  $m\labpre \ell\oslash k$ if and only if $k\obar m$ is
  defined and $k\obar m\labpre \ell$;
\item for all $\ell, \ell'\in \Spec$, $( \exists k\in \Spec: d( k,
  \ell)\ne \top_\LL, d( k, \ell')\ne \top_\LL)\liff( \exists m\in \Spec:
  m\oslash \ell, m\oslash \ell'\text{ are defined})$.
\end{itemize}
The first condition ensures that $\oslash$ is adjoint to $\obar$, and
the second relates it to distances just as we did for $\obar$ above.
Extending the first condition, we say that
\begin{itemize}
\item $\oslash$ is \emph{quantitatively well-behaved} if it holds for
  all $k, \ell, m\in \Spec$ that $\ell\oslash k$ is defined and $d( m,
  \ell\oslash k)\ne \top_\LL$ if and only if $k\obar m$ is defined and
  $d( k\obar m, \ell)\ne \top_\LL$, and in that case, $F( m,
  \ell\oslash k, \alpha)\sqsupseteq_\LL F( k\obar m, \ell, \alpha)$ for
  all $\alpha\in \LL$, and that
\item $\oslash$ is \emph{quantitatively exact} if the inequality can be
  sharpened to $F( m, \ell\oslash k, \alpha)= F( k\obar m, \ell,
  \alpha)$.
\end{itemize}
Both of these are useful quantitative generalization of the adjunction
between $\oslash$ and $\obar$; we will see examples below of
quantitatively exact and quantitatively well-behaved label quotients.

In the definition of quotient below, we denote by $\rho_B( S)$ the
\emph{pruning} of a SMTS $S$ with respect to the states in $B\subseteq
S$, see Section~\ref{weightedmodal.se:hull}.

\begin{definition}
  For SMTS $S$, $T$, the \emph{quotient} of $T$ by $S$ is the SMTS
  $T/ S= \rho_B( T\times S\cup\{ u\},( t_0, s_0), \mmayto_{
    T/ S}, \mmustto_{ T/ S})$ given as follows (if it
  exists):
  \begin{gather*}
    \dfrac{%
      t\mayto{ \ell}_T t' \quad s\mayto{ k}_S s' \quad \ell\oslash k
      \text{ defined}}{%
      ( t, s)\mayto{ \ell\oslash k}_{ T/ S}( t', s')} \qquad
    \dfrac{%
      t\mustto{ \ell}_T t' \quad s\mustto{ k}_S s' \quad \ell\oslash k
      \text{ defined}}{%
      ( t, s)\mustto{ \ell\oslash k}_{ T/ S}( t', s')} \\
    \dfrac{%
      t\mustto{ \ell}_T t' \quad \forall s\mustto{ k}_S s': \ell\oslash
      k \text{ undefined}}{%
      ( t, s)\in B} \\
    \dfrac{%
      m\in \Spec \quad \forall s\mayto{ k}_S s': k\obar m \text{
        undefined}}{%
      ( t, s)\mayto{ m}_{ T/ S} u} \qquad \dfrac{%
      m\in \Spec}{%
      u\mayto{ m}_{ T/ S} u}
  \end{gather*}
\end{definition}

In the above definition, $u$ is a new \emph{universal} state from which
everything is allowed and nothing required (last SOS rule).  This state
is reached from a quotient state $( t, s)$ under label $m$ whenever
there is no \may~transition from $s$ with whose label $m$ can synchronize
(next-to-last SOS rule), because in that case, any transition in the
quotient will be canceled in the structural composition
(\cf~Theorem~\ref{wm2.th:quotient} below), and we need the quotient to be
maximal.  Similarly, if $t$ specifies a \must~transition under a label
$\ell$ which cannot be matched by any transition from $s$, then the
quotient state $( t, s)$ is \emph{inconsistent}; hence we add it to $B$
and remove it when pruning.

The next theorem shows that under certain standard conditions, quotient
is \emph{sound} and \emph{maximal} with respect to structural
composition.

\begin{theorem}
  \label{wm2.th:quotient}
  Let $S$, $T$, $X$ be
  SMTS such that $S$ is deterministic and $T/ S$ exists.  Then
  $X\mr T/ S$ if and only if $S\| X\mr T$.  Also,
  \begin{itemize}
  \item if $\oslash$ is quantitatively well-behaved, then $\md( X,
    T/ S)\sqsupseteq_\LL \md( S\| X, T)$;
  \item if $\oslash$ is quantitatively exact and $\md( X, T/
    S)\ne \top_\LL$, then $\md( X, T/ S)= \md( S\| X, T)$.
  \end{itemize}
\end{theorem}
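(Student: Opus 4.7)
The plan is to mirror the proof of Theorem~\ref{weightedmodal.th:soundmaxquot} from the previous chapter, transplanted into the general setting by using the characterization of modal refinement distance by modal refinement families (Lemma~\ref{wm2.le:family}). Concretely, I would prove the two inequalities $\md(X,T/S)\sqsupseteq_\LL \md(S\|X,T)$ and $\md(X,T/S)\sqsubseteq_\LL \md(S\|X,T)$ by translating refinement families across the adjunction $((s,x),t)\leftrightsquigarrow(x,(t,s))$, and then recover the Boolean equivalence $X\mr T/S \iff S\|X\mr T$ either by instantiating at $\alpha=\bot_\LL$ (since $S\mr T$ is witnessed by a family concentrated at $\bot_\LL$) or by specializing the two quantitative constructions below with the Boolean axioms in place of quantitative well-behavedness/exactness.

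For the soundness direction, starting from a family $\cal R=\{R_\alpha\}$ witnessing $\md(X,T/S)$, I would set $R'_\alpha=\{((s,x),t)\mid (x,(t,s))\in R_\alpha\}$ and check the two refinement conditions. A \may-transition $(s,x)\mayto{k\obar m}(s',x')$ decomposes as $s\mayto{k}s'$ and $x\mayto{m}x'$; the latter is matched in $T/S$ by some $(t,s)\mayto{p}v'$. The case $v'=u$ is ruled out because $k\obar m$ is defined, so $p=\ell\oslash k'$ for some $t\mayto{\ell}t'$ and $s\mayto{k'}s''$, and we match with $t\mayto{\ell}t'$ after using quantitative well-behavedness $F(k'\obar m,\ell,\beta)\sqsubseteq_\LL F(m,\ell\oslash k',\beta)$ and determinism of $S$ to identify $k'=k$, $s''=s'$. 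A \must-transition $t\mustto{\ell}t'$ yields a quotient transition $(t,s)\mustto{\ell\oslash k'}(t',s'')$ for some $s\mustto{k'}s''$ (such $k'$ exists because $(t,s)$ was not pruned), which in turn is matched by $x\mustto{m}x'$, and we synthesize $(s,x)\mustto{k'\obar m}(s'',x')$ as the needed match.

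For the maximality direction (under quantitative exactness and $\md(X,T/S)\neq\top_\LL$), given a family $\cal R'$ witnessing $\md(S\|X,T)$, I would define $R_\alpha=\{(x,(t,s))\mid ((s,x),t)\in R'_\alpha\}\cup\{(x,u)\mid x\in X\}$ for every $\alpha$. The pairs $(x,u)$ are trivial: $u$ has no must-transitions, and every $x\mayto{m}x'$ is matched by $u\mayto{m}u$ with $F(m,m,\bot_\LL)=\bot_\LL$ by Axiom~\eqref{wm2.en:F.indid}. For pairs $(x,(t,s))$, a \may-transition $x\mayto{m}x'$ is matched either through the composition side, when some $s\mayto{k}s'$ with $k\obar m$ defined exists (in which case quantitative exactness $F(m,\ell\oslash k,\gamma)=F(k\obar m,\ell,\gamma)$ lets us import the match in $T$ back to $T/S$), or through the universal transition $(t,s)\mayto{m}u$ with $(x',u)\in R_0$. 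A \must-transition $(t,s)\mustto{\ell\oslash k}(t',s')$ of $T/S$ requires $t\mustto{\ell}t'$ and $s\mustto{k}s'$, and the former is matched in $S\|X$ by some $(s,x)\mustto{k'\obar m}(s'',x')$, whence quantitative exactness and determinism identify $k=k'$, $s'=s''$.

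The hard part will be the repeated use of $\LL$-determinism of $S$ to identify the two labels $k$ and $k'$ coming from the quotient transition and the composition match. Since $\LL$-determinism requires a common $k_0$ with $d(k_0,k),d(k_0,k')\neq\top_\LL$, which by the assumed linking property for $\obar$ is equivalent to the existence of a common synchronization partner $m^*$ with $k\obar m^*$ and $k'\obar m^*$ both defined, I will need to exhibit such an $m^*$ in each application. In the soundness direction this $m^*$ is the label $m$ coming from $X$'s transition, which synchronizes both with $k$ (as $k\obar m$ is given) and with $k'$ (since $\ell\oslash k'$ is defined, hence by adjunction $k'\obar m^*$ is defined for $m^*\preceq\ell\oslash k'$, and the distance-finiteness required to connect this with $m$ follows from the assumption $\md(X,T/S)\neq\top_\LL$ via quantitative well-behavedness/exactness). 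The maximality direction is symmetric, using that the quotient's must-transition arose from an $\ell\oslash k$ which by adjunction produces a witness $m^*$ compatible with the composition side.
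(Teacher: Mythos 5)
Your proposal reconstructs the paper's own argument: the paper likewise proves the two quantitative inequalities by transporting a refinement family across the correspondence $((s,x),t)\leftrightarrow(x,(t,s))$ (adding the pairs $(x,u)$ for the universal state in the maximality direction), uses quantitative well-behavedness to push $F(k\obar m,\ell,\beta)\sqsubseteq_\LL F(m,\ell\oslash k,\beta)$ in one direction and exactness in the other, handles pruning via the observation that an unpruned $(t,s)$ must have a matching \must-transition, and invokes $\LL$-determinism of $S$ exactly where you do, via a common synchronization/refinement witness for $k$ and $k'$ (the only cosmetic difference is that the paper imports the Boolean equivalence from prior work rather than re-deriving it). Your plan is correct and essentially identical.
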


The (Boolean) property that $X\mr T/ S$ iff $S\| X\mr T$
implies \emph{uniqueness} of
quotient~\cite{DBLP:conf/models/FahrenbergLW11}.  For the quantitative
generalizations, the property induced by a well-behaved $\oslash$
means that distances to the quotient bound distances of structural
compositions, which can be useful in further calculations; similarly
for exact $\oslash$.  Note that uniqueness implies that if a certain
instantiation of our framework admits a quotient which is not
quantitatively well-behaved, there is no hope that one can find
another one which is.

\begin{proof}
  The proof that $X\mr T/ S$ if and only if $S\| X\mr T$ is
  in~\cite{DBLP:journals/mscs/BauerJLLS12}.  For the other properties,
  assume first $\oslash$ to be quantitatively well-behaved; we show
  that $\md( S\| X, T)\sqsubseteq_\LL \md( X, T/ S)$.  If
  $\md( X, T/ S)= \top_\LL$, there is nothing to prove, so
  assume $\md( X, T/ S)\ne \top_\LL$ and let
  $\cal R=\{ R_\alpha\subseteq X\times( T\times S\cup\{ u\})\}$ be a
  witness for $\md( X, T/ S)$.  Define
  $R'_\alpha=\{(( s, x), t)\mid( x,( t, s))\in R_\alpha\}\subseteq
  S\times X\times T$ for all $\alpha\in \LL$ and collect these to a
  family $\cal R'=\{ R'_\alpha\mid \alpha\in \LL\}$.  We show that
  $\cal R'$ is a witness for
  $\md( S\| X, T)\sqsubseteq_\LL \md( X, T/ S)$.

  We have
  $(( s_0, x_0), t_0)\in R'_{ \md( X, T/ S)}\in \cal R'$, so
  let $\alpha\in \LL$ and $(( s, x), t)\in R'_\alpha\in \cal R'$, and
  assume first that $( s, x)\mayto{ k\obar m}_{ S\| X}( s', x')$.
  Then $s\mayto{ k}_S s'$ and $x\mayto{ m}_X x'$ by definition of
  $S\| X$.  Now $( x,( t, s))\in R_\alpha\in \cal R$ implies that there is
  $( t, s)\mayto{ \ell\oslash k'}_{ T/ S}( t', s'')$ and
  $\alpha'\in \LL$ for which
  $F( m, \ell\oslash k', \alpha')\sqsubseteq_\LL \alpha$ and
  $( x',( t', s''))\in R_{ \alpha'}\in \cal R$.  But then also
  $(( s'', x'), t')\in R'_{ \alpha}\in \cal R'$, hence $k'\obar m$ is
  defined and
  $F( k'\obar m, \ell, \alpha')\sqsubseteq_\LL F( m, \ell\oslash k',
  \alpha')\sqsubseteq_\LL \alpha$.

  Now $k\obar m$ and $k'\obar m$ being defined implies that there is
  $k''$ for which $d( k'', k)\ne \top_\LL$ and
  $d( k'', k')\ne \top_\LL$, and by definition of $T/ S$,
  $s\mayto{ k'}_S s''$.  As $S$ is deterministic, this implies $k= k'$
  and $s'= s''$.  Hence $(( s', x'), t')\in R'_{ \alpha'}\in \cal R'$
  and $F( k\obar m, \ell, \alpha')\sqsubseteq_\LL \alpha$.

  Assume now that $t\mustto{ \ell}_T t'$.  We must have
  $s\mustto{ k}_S s'$ for which $\ell\oslash k$ is defined, for
  otherwise $( t, s)\in B$ and hence $( t, s)$ would have been pruned
  in $T/ S$.  Thus
  $( t, s)\mustto{ \ell\oslash k}_{ T/ S}( t', s')$, which by
  $( x,( t, s))\in R_\alpha\in \cal R$ implies that there is
  $x\mustto{ m}_X x'$ and $\alpha'\in \LL$ for which
  $F( m, \ell\oslash k, \alpha')\sqsubseteq_\LL \alpha$ and
  $( x',( t', s'))\in R_{ \alpha'}\in \cal R$, hence
  $(( s', x'), t')\in R'_{ \alpha'}\in \cal R'$.  But then $k\obar m$
  is defined and
  $F( k\obar m, \ell, \alpha')\sqsubseteq_\LL F( m, \ell\oslash k,
  \alpha')\sqsubseteq_\LL\alpha$, and
  $( s, x)\mustto{ k\obar m}_{ S\| X}( s', x')$.

  In order to prove the theorem's last claim, let $\oslash$ be
  quantitatively exact.  To show that
  $\md( X, T/ S)\sqsubseteq_\LL \md( S\| X, T)$, assume that
  $\md( S\| X, T)\ne \top_\LL$ (otherwise there is nothing to prove),
  let
  $\cal R=\{ R_\alpha\subseteq S\times X\times T\mid \alpha\in \LL\}$
  be a witness for $\md( S\| X, T)$, and define
  $R'_\alpha=\{( x,( t, s))\mid(( s, x), t)\in R_\alpha\}\cup\{( x,
  u)\mid x\in X\}\subseteq X\times( T\times S\cup\{ u\})$ for all
  $\alpha\in \LL$.  We show that
  $\cal R'=\{ R'_\alpha\mid \alpha\in \LL\}$ is a witness for
  $\md( X, T/ S)\sqsubseteq_\LL \md( S\| X, T)$.

  We have $( x_0,( t_0, s_0))\in R'_{ \md( S\| X, T)}\in \cal R'$.
  Let $\alpha\in \LL$, $( x, u)\in R'_\alpha\in \cal R'$ and
  $x\mayto{ m}_X x'$, then also $u\mayto{ m}_{ T/ S} u$,
  $F( m, m, \bot_\LL)\sqsubseteq \alpha$, and
  $( x', u)\in R'_{ \bot_\LL}\in \cal R'$.  Now let
  $( x,( t, s))\in R'_\alpha\in \cal R'$ and $x\mayto{ m}_X x'$.  If
  $k\obar m$ is undefined for all $s\mayto{ k}_S s'$, then by
  definition of $T/ S$, $( t, s)\mayto{ m}_{ T/ S} u$,
  $F( m, m, \bot_\LL)\sqsubseteq \alpha$, and
  $( x', u)\in R'_{ \bot_\LL}\in \cal R'$.

  If there is a transition $s\mayto{ k}_S s'$ for which $k\obar m$ is
  defined (by determinism there can be at most one), then also
  $( s, x)\mayto{ k\obar m}_{ S\| X}( s', x')$.  As
  $(( s, x), t)\in R_\alpha\in \cal R$, we must have
  $t\mayto{ \ell} t'$ and $\alpha'\in \LL$ with
  $F( k\obar m, \ell, \alpha')\sqsubseteq_\LL \alpha$ and
  $(( s', x'), t')\in R_{ \alpha'}\in \cal R$, hence
  $( x',( t', s'))\in R'_{ \alpha'}\in \cal R'$.  Then $\ell\oslash k$ is
  defined and $F( m, \ell\oslash k, \alpha')\sqsubseteq_\LL \alpha$,
  and by definition of $T/ S$,
  $( t, s)\mayto{ \ell\oslash k}_{ T/ S}( t', s')$.

  Now assume that $( t, s)\mustto{ \ell\oslash k}_{ T/ S}( t',
  s')$, then $t\mustto{ \ell}_T t'$ and $s\mustto{ k}_S s'$ by
  definition of $T/ S$.  By $(( s, x), t)\in R_\alpha\in \cal R$, we
  have $( s, x)\mustto{ k'\obar m}_{ S\| X}( s'', x')$ and $\alpha'\in
  \LL$ with $F( k'\obar m, \ell, \alpha')\sqsubseteq_\LL \alpha$ and $((
  s'', x'), t')\in R_{ \alpha'}\in \cal R$.  This in turn implies that
  $s\mustto{ k'}_S s''$ and $x\mustto{ m}_X s'$ by definition of $S\|
  X$.  We also see that $\ell\oslash k'$ is defined, which by
  determinism of $S$ entails $k= k'$ and $s'= s''$.  Hence $F( k\obar m,
  \ell, \alpha')\sqsubseteq_\LL \alpha$ and $( x',( t', s'))\in R'_{
    \alpha'}\in \cal R'$. \qed
\end{proof}

\begin{example}
  For the label synchronization operator for $\Spec= \Sigma\times \II$
  given by adding interval boundaries, a quotient can be defined by
  \begin{equation*}
    ( a',[ l', r'])\oslash( a,[ l, r])=
    \begin{cases}
      ( a,[ l'- l, r'- r]) &\text{if } a= a' \text{ and } l'- l\le r'-
      r, \\
      \text{undefined} &\text{otherwise}.
    \end{cases}
  \end{equation*}
  It can then be shown that
  $d( m, \ell\oslash k)= d( k\obar m, \ell)$ for all
  $k, \ell, m\in \Spec$ for which both $\ell\oslash k$ and $k\obar m$
  are defined, see Chapter~\ref{ch:weightedmodal}.  From this it
  easily follows that both for the accumulating, the point-wise, and
  the limit-average distance, $\oslash$ is quantitatively exact,
  hence for all three distances, Theorem~\ref{wm2.th:quotient}
  specializes to the theorem that
  $\md( X, T/ S)= \md( S\| X, T)$ for all SMTS
  $S$, $T$, $X$ for which $S$ is deterministic, $T/ S$ exists
  and $\md( X, T/ S)\ne \infty$.  For the accumulating
  distance, this is Theorem~\ref{weightedmodal.th:soundmaxquot}.
\end{example}

\begin{figure}[tbp]
  \centering
  \begin{tikzpicture}[scale=.7]
    \tikzstyle{every node}=[font=\small,anchor=base]
    \tikzset{rangebar/.style={color=gray,very thick}}

    \begin{scope}
      \node (l) at (0,0) {$l$};
    \node (l') at (1,0) {$l'$};
    \node (r) at (2,0) {$r$};
    \node (r') at (3,0) {$r'$};
    \path[rangebar] (l.south west) edge (r.south east);
    \path[rangebar] ($(l'.south west)+(0,-1.5mm)$) edge ($(r'.south)+(0,-1.5mm)$);
    \path[rangebar] ($(l'.south west)+(0,-4mm)$) edge ($(r'.south)+(0mm,-4mm)$)
    edge [dotted] ($(r'.south west)+(10mm,-4mm)$);
    \end{scope}
    
    \begin{scope}[xshift=6cm]
      \node (l) at (0,0) {$l$};
    \node (l') at (1,0) {$l'$};
    \node (r') at (2,0) {$r'$};
    \node (r) at (3,0) {$r$};
    \path[rangebar] (l.south west) edge (r.south east);
    \path[rangebar] ($(l'.south west)+(0,-1.5mm)$) edge ($(r'.south)+(0,-1.5mm)$);
    \path[rangebar] ($(l'.south west)+(0,-4mm)$) edge ($(r'.south)+(0mm,-4mm)$);
    \end{scope}

    \begin{scope}[xshift=12cm]
      \node (l) at (0,0) {$l$};
    \node (r) at (1,0) {$r$};
    \node (l') at (2,0) {$l'$};
    \node (r') at (3,0) {$r'$};
    \path[rangebar] (l.south west) edge (r.south east);
    \path[rangebar] ($(l'.south west)+(0,-1.5mm)$) edge ($(r'.south)+(0,-1.5mm)$);
    \end{scope}

    \begin{scope}[yshift=-1.6cm]
      \begin{scope}
        \node (l') at (0,0) {$l'$};
        \node (l) at (1,0) {$l$};
        \node (r) at (2,0) {$r$};
        \node (r') at (3,0) {$r'$};
        \path[rangebar] (l.south west) edge (r.south east);
        \path[rangebar] ($(l'.south west)+(0,-1.5mm)$) edge
        ($(r'.south)+(0,-1.5mm)$);
        \path[rangebar] ($(l'.south
        west)+(-8mm,-4mm)$)  edge[dotted]  ($(l'.south
        west)+(0mm,-4mm)$);
        \path[rangebar]   ($(l'.south
        west)+(0mm,-4mm)$) edge ($(r'.south)+(0mm,-4mm)$) edge [dotted]
        ($(r'.south west)+(10mm,-4mm)$);
      \end{scope}
    
    \begin{scope}[xshift=6cm]
      \node (l') at (0,0) {$l'$}; \node (l) at (1,0) {$l$}; \node (r')
      at (2,0) {$r'$}; \node (r) at (3,0) {$r$}; \path[rangebar]
      (l.south west) edge (r.south east); \path[rangebar] ($(l'.south
      west)+(0,-1.5mm)$) edge ($(r'.south)+(0,-1.5mm)$);
      \path[rangebar] ($(l'.south west)+(-8mm,-4mm)$) edge[dotted] ($(l'.south west)+(0mm,-4mm)$); 
      \path[rangebar] ($(l'.south west)+(0mm,-4mm)$) edge
      ($(r'.south)+(0mm,-4mm)$);
    \end{scope}

    \begin{scope}[xshift=12cm]
      \node (l') at (0,0) {$l'$}; \node (r') at (1,0) {$r'$}; \node (l)
      at (2,0) {$l$}; \node (r) at (3,0) {$r$}; \path[rangebar]
      (l.south west) edge (r.south east); \path[rangebar] ($(l'.south
      west)+(0,-1.5mm)$) edge ($(r'.south)+(0,-1.5mm)$);
    \end{scope}
  \end{scope}
  \end{tikzpicture}
  \caption{%
    \label{fi:quotient-mecs}
    Quotient $[ l', r']\oslash[ l, r]$ of intervals in
    Example~\ref{ex:quot-mecs}, six cases.  Top bar: $[ l, r]$; middle
    bar: $[ l', r']$; bottom bar: quotient.  Note that for the two
    cases on the right, quotient is undefined.  }
\end{figure}
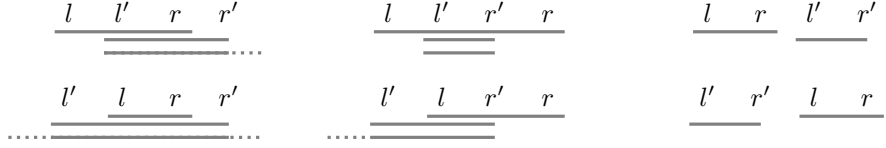

\begin{example}
  \label{ex:quot-mecs}
  For the variant of the operator $\obar$ which uses intersection of
  intervals instead of addition, a quotient can be defined as follows:
  \begin{equation*}
    ( a',[ l', r'])\oslash( a,[ l, r])=
    \begin{cases}
      \text{undefined} &\text{if } a\ne a'\,, \\
      ( a,[ l', \infty]) &\text{if } a= a'\text{ and } l< l'\le r\le
      r'\,,
      \\
      ( a,[ l', r']) &\text{if } a= a'\text{ and } l< l'\le r'< r\,, \\
      \text{undefined} &\text{if } a= a'\text{ and } l\le r< l'\le r'\,, \\
      ( a,[ 0, \infty]) &\text{if } a= a'\text{ and } l'\le l\le
      r\le r'\,, \\
      ( a,[ 0, r']) &\text{if } a= a'\text{ and } l'\le l\le r< r'\,,
      \\
      \text{undefined} &\text{if } a= a'\text{ and } l'\le r'< l\le
      r\,.
    \end{cases}
  \end{equation*}
  The intuition is that to obtain the maximal solution $[ p, q]$ to an
  equation $[ l, r]\obar[ p, q]\sqsubseteq_\Spec[ l', r']$, whether
  $p$ and $q$ must restrain the interval in the intersection, or can
  be $0$ and $\infty$, respectively, depends on the position of
  $[ l, r]$ relative to $[ l', r']$,
  \cf~Figure~\ref{fi:quotient-mecs}.

  It can be shown~\cite{conf/fit/FahrenbergL12} that for the
  maximum-lead distance, this variant of $\oslash$ is quantitatively
  well-behaved, but not quantitatively exact.
  Theorem~\ref{wm2.th:quotient} hence translates to the fact that for
  all SMTA $S$, $T$, $X$ for which $S$ is deterministic and $T/ S$
  exists, $\md( X, T/ S)\ge \md( S\| X, T)$.
\end{example}

\section{Conjunction}
\label{wm2.se:conj}

Conjunction of SMTS can be used to merge two specifications into one.
Let $\oland: \Spec\times \Spec\parto \Spec$ be a partial label operator
for which it holds that
\begin{itemize}
\item for all $k, \ell\in \Spec$, if $k\oland \ell$ is defined, then
  $k\oland \ell\labpre k$, $k\oland \ell\labpre
  \ell$, and
\item for all $\ell, \ell'\in \Spec$, $( \exists k\in \Spec: d( k,
  \ell)\ne \top_\LL, d( k, \ell')\ne \top_\LL)\liff( \exists m\in \Spec:
  \ell\oland m, \ell'\oland m\text{ are defined})$.
\end{itemize}
The first requirement above ensures that conjunction acts as a lower
bound, and the second one relates it to distances such that two labels
have a common quantitative refinement if and only if they have a common
conjunction.  One also usually wants conjunction to be a \emph{greatest}
lower bound; we say that $\oland$ is \emph{conjunctively compositional}
if it holds for all $k, \ell, m\in \Spec$ for which $m\labpre
k$ and $m\labpre \ell$ that also $k\oland \ell$ is defined
and $m\labpre k\oland \ell$.

As a quantitative generalization, and analogously to what we did for
structural composition, we say that $\oland$ is \emph{conjunctively
  bounded} by a function $C: \LL\times \LL\to \LL$ if $C$ is monotone in
both coordinates, has $C( \bot_\LL, \bot_\LL)= \bot_\LL$, $C( \alpha,
\top_\LL)= C( \top_\LL, \alpha)= \top_\LL$ for all $\alpha\in \LL$, and if
it holds for all $k, \ell, m\in \Spec$ for which $d( m, k)\ne \top_\LL$
and $d( m, \ell)\ne \top_\LL$ that $k\oland \ell$ is defined and
\begin{equation*}
  F( m, k\oland \ell, C( \alpha, \alpha'))\sqsubseteq_\LL C( F( m, k,
  \alpha), F( m, \ell, \alpha'))
\end{equation*}
for all $\alpha, \alpha'\in \LL$.  Note that this implies that $d( m,
k\oland \ell)\sqsubseteq_\LL C( d( m, k), d( m, \ell))$, hence
conjunctive boundedness implies conjunctive
compositionality.  Like $P$ for structural composition, $C$ gives a
uniform bound on label conjunction.

\begin{definition}
  The \emph{conjunction} of two SMTS $S$ and $T$ is the SMTS $S\wedge
  T=\rho_B( S\times T,( s_0, t_0), \mmayto_{ S\wedge T}, \mmustto_{
    S\wedge T})$ given as follows:
  \begin{gather*}
    \dfrac{%
      s\mustto{ k}_S s'\quad t\mayto{ \ell}_T t'\quad k\oland
      \ell\text{ defined}%
    }{%
      ( s, t)\mustto{ k\oland \ell}_{ S\wedge T}( s', t')%
    } \qquad%
    \dfrac{%
      s\mayto{ k}_S s'\quad t\mustto{ \ell}_T t'\quad k\oland
      \ell\text{ defined}%
    }{%
      ( s, t)\mustto{ k\oland \ell}_{ S\wedge T}( s', t')%
    } \\%
    \dfrac{%
      s\mayto{ k}_S s'\quad t\mayto{ \ell}_T t'\quad k\oland \ell\text{
        defined}%
    }{%
      ( s, t)\mayto{ k\oland \ell}_{ S\wedge T}( s', t')%
    } \\%
    \dfrac{%
      s\mustto{ k}_S s'\quad \forall t\mayto{ \ell}_T t': k\oland
      \ell\text{ undef.}%
    }{%
      ( s, t)\in B} \qquad%
    \dfrac{%
      t\mustto{ \ell}_T t'\quad \forall s\mayto{ k}_S s': k\oland
      \ell\text{ undef.}%
    }{%
      ( s, t)\in B}
  \end{gather*}
\end{definition}

Note that like for quotient, conjunction of SMTS may give inconsistent
states which need to be pruned after.  As seen in the last two SOS
rules above, this is the case when one SMTS specifies a \must
transition with which the other SMTS cannot synchronize; then, the
demand on implementations would be that they simultaneously
\emph{must} and \emph{cannot} have a transition, which of course is
unsatisfiable.

The next theorem shows the precise conditions under which conjunction is
a greatest lower bound.  Note that the greatest-lower-bound condition
$U\mr S$, $U\mr T\limpl U\mr S\wedge T$ entails uniqueness.

\begin{theorem}
  \label{wm2.th:conj}
  Let $S$, $T$, $U$ be 
  SMTS.  If $S\wedge T$ is defined, then $S\wedge T\mr S$ and $S\wedge
  T\mr T$.  If, additionally, $S$ or $T$ are deterministic, then:
  \begin{itemize}
  \item If $\oland$ is conjunctively compositional, $U\mr S$, and
    $U\mr T$, then $S\wedge T$ is defined and $U\mr S\wedge T$.
  \item If $\oland$ is conjunctively bounded by $C$, $\md( U, S)\ne
    \top_\LL$, and $\md( U, T)\ne\top_\LL$, then $S\wedge T$ is defined
    and $\md( U, S\wedge T)\sqsubseteq_\LL C( \md( U, S), \md( U, T))$.
  \end{itemize}
\end{theorem}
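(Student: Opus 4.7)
The plan is to establish the three assertions in sequence, with the quantitative bound being the most delicate.

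For the refinements $S\wedge T \mr S$ and $S\wedge T \mr T$, I would take the projection $R = \{((s,t), s) \mid (s,t) \text{ a surviving state of } S\wedge T\}$. May-transitions match because $k_s \oland k_t \labpre k_s$ by the lower-bound property of $\oland$, and a must-transition $s \mustto{k_s} s'$ in $S$ is answerable precisely because $(s,t)$ escaping pruning rules out the first inconsistency case, producing $t \mayto{k_t} t'$ with $k_s \oland k_t$ defined; the case of $T$ is symmetric.

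For the second claim, let $R_{US}, R_{UT}$ witness $U \mr S$ and $U \mr T$, and define $R = \{(u, (s,t)) \mid (u,s) \in R_{US},\, (u,t) \in R_{UT}\}$. I would first show that no reachable pair $(s,t)$ lies in $B$: given $s \mustto{k_s} s'$, lift via $U \mr S$ to $u \mustto{k_u} u'$ with $k_u \labpre k_s$, then use consistency of $U$ to obtain $u \mayto{k_u'} u'$ with $k_u \labpre k_u'$, then apply $U \mr T$ on this may-transition to obtain $t \mayto{k_t} t'$ with $k_u' \labpre k_t$; conjunctive compositionality applied to $k_u \labpre k_s$ and $k_u \labpre k_t$ then makes $k_s \oland k_t$ defined, ruling out the first inconsistency case (the second is symmetric). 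Then I would verify that $R$ is a modal refinement: the may-case uses the same chain together with conjunctive compositionality, and the must-case (arising, say, via $s \mustto{k_s} s'$ and $t \mayto{k_t} t'$) is handled by the analogous chain, where determinism of $T$ (or symmetrically $S$) identifies the may-transition produced by the chain with the one used in the conjunction.

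The third claim is the main obstacle. I plan to work with refinement families via Lemma~\ref{wm2.le:family}, defining
\begin{equation*}
R^\wedge_\beta = \{(u,(s,t)) \mid \exists \alpha, \alpha' : (u,s) \in R^{US}_\alpha,\, (u,t) \in R^{UT}_{\alpha'},\, C(\alpha, \alpha') \sqsubseteq_\LL \beta\}.
\end{equation*}
The consistency and may-case arguments transfer verbatim once conjunctive boundedness is noted to imply conjunctive compositionality, with the quantitative bound on the may-case following from the defining inequality $F(k_u, k_s \oland k_t, C(\bar\alpha, \bar\alpha')) \sqsubseteq_\LL C(F(k_u, k_s, \bar\alpha), F(k_u, k_t, \bar\alpha'))$. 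The hard part is the determinism step in the must-case: after chaining from $s \mustto{k_s} s'$ to obtain a second may-transition $t \mayto{k_t^*} t^*$ with $k_u' \labpre k_t^*$ and $F(k_u', k_t^*, \bar\alpha') \sqsubseteq_\LL \alpha'$, I need a label $\hat k \in \Spec$ with $d(\hat k, k_t) \ne \top_\LL$ and $d(\hat k, k_t^*) \ne \top_\LL$ to invoke determinism of $T$. The plan is to take $\hat k = k_u' \oland k_s$, which is defined by conjunctive compositionality on $k_u \labpre k_u', k_s$; then $\hat k \labpre k_u' \labpre k_t^*$ forces $d(\hat k, k_t^*) = \bot_\LL$, while the $\oland$-equivalence applied with mediator $m = k_s$ (noting $\hat k \oland k_s$ and $k_t \oland k_s$ are both defined) yields $d(\hat k, k_t) \ne \top_\LL$. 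Determinism then forces $k_t = k_t^*$ and $t' = t^*$, from which transitivity gives $k_u \labpre k_t$, and conjunctive boundedness together with monotonicity of $F$ in its first argument (using $k_u \labpre k_u'$ and $F(k_u', k_t^*, \bar\alpha') \sqsubseteq_\LL \alpha'$) closes the quantitative bound.
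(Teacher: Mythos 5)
Your overall architecture for the third claim --- the family $R^\wedge_\beta$ indexed by $C$-bounds of the component indices, the may-case closed by the defining inequality of conjunctive boundedness, and the reduction of the must-case to a determinism argument identifying the two $T$-may-transitions --- is the same as the paper's (which defers the first two, Boolean, claims to a citation; your sketches for those are fine).

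The gap is in your determinism step, and it comes from importing Boolean facts into the quantitative setting. A quantitative refinement family matching $u\mayto{ k_u'} u'$ against $T$ yields $t\mayto{ k_t^*} t^*$ with $F( k_u', k_t^*, \bar\alpha')\sqsubseteq_\LL \alpha'$ only; it does \emph{not} yield $k_u'\labpre k_t^*$, so the inference ``$\hat k\labpre k_u'\labpre k_t^*$ forces $d( \hat k, k_t^*)= \bot_\LL$'' fails (at best one gets $d( \hat k, k_t^*)\ne \top_\LL$ via the triangle inequality for $d$). Likewise $k_u\labpre k_s$ is not available (only $d( k_u, k_s)\ne \top_\LL$), so the definedness of $\hat k= k_u'\oland k_s$ must come from conjunctive boundedness, not conjunctive compositionality, and the concluding ``transitivity gives $k_u\labpre k_t$'' is unavailable (and unneeded). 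More seriously, the $\oland$/distance axiom applied to the definedness of $\hat k\oland k_s$ and $k_t\oland k_s$ only produces \emph{some} $k''$ with $d( k'', \hat k)\ne \top_\LL$ and $d( k'', k_t)\ne \top_\LL$, not $d( \hat k, k_t)\ne \top_\LL$; determinism needs a \emph{single} label at non-$\top_\LL$ distance from both $k_t$ and $k_t^*$, and combining $k''$ with $\hat k$ through the triangle inequality does not preserve $\ne \top_\LL$ in a general $\LL=( \Realnn\cup\{ \infty\})^M$ (two non-top elements can sum to $\top_\LL$ pointwise). The detour through $\hat k$ is what creates the problem. The paper's route avoids it: establish $d( k_u, k_t^*)\ne \top_\LL$ directly (extended triangle inequality for $F$ across $k_u\labpre k_u'$, or your monotonicity-in-the-first-argument observation), conclude by conjunctive boundedness that $k_s\oland k_t^*$ is defined, and then --- since $k_s\oland k_t$ is also defined --- apply the $\oland$/distance equivalence with common partner $k_s$ to the pair $( k_t, k_t^*)$ itself, which hands determinism of $T$ exactly the witness it needs. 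With that substitution your argument closes.
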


\begin{proof}
  The proof of the two first claims is
  in~\cite{DBLP:journals/mscs/BauerJLLS12}.  For the third claim, let
  $\cal R=\{ R_\alpha\subseteq U\times S\mid \alpha\in \LL\}$ and
  $\cal R'=\{ R'_\alpha\subseteq U\times T\mid \alpha\in \LL\}$ be relation
  families witnessing $\md( U, S)$ and $\md( U, T)$, respectively,
  define
  $R^\wedge_\beta=\{( u,( s, t))\mid \exists \alpha, \alpha'\in \LL:(
  u, s)\in R_\alpha,( u, t)\in R'_{ \alpha'}, C( \alpha,
  \alpha')\sqsubseteq_\LL \beta\}\subseteq U\times S\times T$ for all
  $\beta\in \LL$, and let
  $\cal R^\wedge=\{ R^\wedge_\beta\mid \beta\in \LL\}$.  We show that
  $\cal R^\wedge$ is a witness for
  $\md( U, S\wedge T)\sqsubseteq_\LL C( \md( U, S), \md( U, T))$.

  We have $( u_0,( s_0, t_0))\in R^\wedge_{ C( \md( U, S), \md( U,
    T))}\in \cal R^\wedge$.  Let $\beta\in \LL\setminus\{ \bot_\LL\}$ and $(
  u,( s, t))\in R^\wedge_\beta\in \cal R^\wedge$, then we have $\alpha,
  \alpha'\in \LL\setminus\{ \bot_\LL\}$ with $( u, s)\in R_\alpha\in \cal R$,
  $( u, t)\in R'_{ \alpha'}\in \cal R'$, and $C( \alpha,
  \alpha')\sqsubseteq_\LL \beta$.

  Assume $u\mayto{ m}_U u'$, then there exist $s\mayto{ k}_S s'$ and
  $\bar \alpha\in \LL$ for which
  $( u', s')\in R_{ \bar \alpha}\in \cal R$ and
  $F( m, k, \bar \alpha)\sqsubseteq_\LL \alpha$, and similarly
  $t\mayto{ \ell}_T t'$ and $\bar \alpha'$ with
  $( u', t')\in R'_{ \bar \alpha'}\in \cal R'$ and
  $F( m, \ell, \bar \alpha')\sqsubseteq \alpha'$.  Then
  $d( m, k)\ne \top_\LL$ and $d( m, \ell)\ne \top_\LL$, so by
  conjunctive boundedness $k\oland \ell$ is defined, and
  $( s, t)\mayto{ k\oland \ell}_{ S\wedge T}( s', t')$ by definition
  of $S\wedge T$.  Also,
  $( u',( s', t'))\in R^\wedge_{ C( \bar \alpha, \bar \alpha')}\in
  \cal R^\wedge$ and
  $F( m, k\oland \ell, C( \bar \alpha, \bar \alpha'))\sqsubseteq_\LL
  C( F( m, k, \bar \alpha), F( m, \ell, \bar \alpha'))\sqsubseteq_\LL
  C( \alpha, \alpha')$.

  Assume $( s, t)\mustto{ k\oland \ell}_{ S\wedge T}( s', t')$, then
  $s\mustto{ k}_S s'$ and $t\mustto{ \ell}_T t'$ by definition of
  $S\wedge T$.  We can without loss of generality postulate that $T$
  is deterministic.  The fact that $( u, s)\in R_\alpha\in \cal R$
  implies that there are $u\mustto{ m}_U u'$ and $\bar \alpha\in \LL$
  for which $( u', s')\in R_{ \bar \alpha}\in \cal R$ and
  $F( m, k, \bar \alpha)\sqsubseteq_\LL \alpha$.  We must also have
  $u\mayto{ m'}_U u'$ for some $m'\sqsupseteq_\Spec m$, and then
  $( u, t)\in R'_{ \bar \alpha}\in \cal R'$ implies that there exist
  $t\mayto{ \ell'}_T t''$ and $\bar \alpha'\in \LL$ with
  $( u', t'')\in R'_{ \bar \alpha'}\in \cal R'$ and
  $F( m', \ell', \bar \alpha')\sqsubseteq_\LL \alpha'$.

  The triangle inequality for $F$ gives
  \begin{equation*}
    F( m, \ell', \bar
    \alpha')\sqsubseteq_\LL F( m, m', \bot_\LL)\oplus_\LL F( m', \ell', \bar
    \alpha')\sqsubseteq_\LL \alpha'\,,
  \end{equation*}
  hence $d( m, \ell')\ne \top_\LL$.  Together with
  $d( m, k)\ne \top_\LL$, conjunctive boundedness allows us to
  conclude that $k\oland \ell'$ is defined, but then both
  $k\oland \ell$ and $k\oland \ell'$ are defined, hence by
  determinism of $T$, $\ell= \ell'$ and $t'= t''$. \qed
\end{proof}

\begin{example}
  For the set $\Spec= \Sigma\times \II$ from our examples, the unique
  compositional conjunction operator on is given, on labels, by
  intersection of intervals:
  \begin{equation*}
    ( a,[ l, r])\oland( a',[ l', r'])=
    \begin{cases}
      ( a,[ \max( l, l'), \min( r, r')]) \hspace*{-5em}& \\
      &\text{if } a= a',
      \max( l, l')\le \min( r, r'), \\
      \text{undefined} &\text{otherwise}.
    \end{cases}
  \end{equation*}
  We can easily show that $\oland$ is \emph{not} conjunctively
  bounded: with $m=( a,[ 2, 2])$, $k=( a,[ 0, 1])$ and
  $\ell=( a,[ 3, 4])$, we have $d( m, k)= d( m, \ell)= 1$, but
  $k\oland \ell$ is not defined.  Noting that this statement does not
  involve the distance iterator $F$, we conclude that neither
  accumulating, point-wise nor limit-average distance admit a bounded
  conjunction operator.  For the accumulating distance, this statement
  is Theorem~\ref{weightedmodal.th:no-conj}.
\end{example}


To deal with the problem that, as in the above example, conjunction may
not be conjunctively bounded, we introduce another, weaker, property
which ensures some compatibility of conjunction with distances.  We say
that $\oland$ is \emph{relaxed conjunctively bounded} by a function $C:
\LL\times \LL\to \LL$ if $C$ is monotone in both coordinates, has $C(
\bot_\LL, \bot_\LL)= \bot_\LL$, $C( \alpha, \top_\LL)= C( \top_\LL,
\alpha)= \top_\LL$ for all $\alpha\in \LL$, and such that
for all $k, \ell\in \Spec$ for which there is $m\in \Spec$ with $d( m,
k)\ne \top_\LL$ and $d( m, \ell)\ne \top_\LL$, there exist $k', \ell'\in
\Spec$ with $k'\oland \ell'$ defined, $k\labpre k'$,
$\ell\labpre \ell'$, $d( k', k)\ne \top_\LL$, and $d( \ell',
\ell)\ne \top_\LL$, such that for all $m'\in \Spec$, $\alpha, \alpha'\in
\LL$,
\begin{equation}
  \label{wm2.eq:relconb}
  F( m', k'\oland \ell', C( \alpha, \alpha'))\sqsubseteq_\LL C( F( m',
  k, \alpha), F( m', \ell, \alpha')).
\end{equation}

The following theorem shows that relaxed boundedness of $\oland$
entails a similar property for SMTS conjunction.

\begin{theorem}
  \label{wm2.th:relax_conj}
  Let $S$, $T$ be SMTS with $S$ or $T$ deterministic and $\oland$
  relaxed conjunctively bounded by $C$.  If there is an SMTS $U$ for
  which $\md( U, S)\ne \top_\LL$ and $\md( U, T)\ne \top_\LL$, then
  there exist $\beta$- and $\gamma$-widenings $S'$ of $S$ and $T'$ of
  $T$ such that $S'\wedge T'$ is defined, and
  $\md( U', S'\wedge T')\sqsubseteq_\LL C( \md( U', S), \md( U', T))$
  for all SMTS $U'$.
\end{theorem}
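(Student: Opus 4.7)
The plan is to construct the widenings $S'$ and $T'$ explicitly, using the SMTS $U$ as a scaffold that certifies which pairs of labels in $S$ and $T$ need to be made jointly conjunctionable, then to verify the three required properties by combining the refinement families for $\md(U',S)$ and $\md(U',T)$ in the style of the proof of Theorem~\ref{wm2.th:conj}.

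First, I would appeal to Lemma~\ref{wm2.le:family} to obtain modal refinement families $\cal R^{US}$ and $\cal R^{UT}$ witnessing $\md(U,S)$ and $\md(U,T)$. Since these distances are below $\top_\LL$, each transition $u \mayto{m}_U u'$ that is reached by these families has matching transitions $s \mayto{k}_S s'$ and $t \mayto{\ell}_T t'$ with $d(m,k) \ne \top_\LL$ and $d(m,\ell) \ne \top_\LL$. This certifies, via the $\Leftrightarrow$-clause in the definition of $\oland$ together with relaxed conjunctive boundedness, the existence of widened labels $k \labpre k'$ and $\ell \labpre \ell'$ with $k' \oland \ell'$ defined and the $F$-inequality~\eqref{wm2.eq:relconb} available.

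Next I would construct $S'$ (resp.\ $T'$) by replacing, for each such paired transition $s \mayto{k}_S s'$ (resp.\ $t \mayto{\ell}_T t'$), the label $k$ (resp.\ $\ell$) by the widened $k'$ (resp.\ $\ell'$) chosen above, and leaving untouched any transitions that are not paired through $U$. The determinism hypothesis on $S$ or $T$ is the crucial ingredient here: it ensures that each transition in the deterministic system has at most one potential partner in the other system (up to $d \ne \top_\LL$), so the choice of widening is unambiguous and no transition needs to be widened in two conflicting ways. With $\beta = \sup d(k',k)$ and $\gamma = \sup d(\ell',\ell)$ (suprema over all widened transitions), one checks by the identity relation on the underlying state spaces that $S'$ is a $\beta$-widening of $S$ and $T'$ a $\gamma$-widening of $T$. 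The conjunction $S' \wedge T'$ is then defined, since for every pair of matching transitions we have arranged $k' \oland \ell'$ to be defined, and pruning of inconsistent states does not erase the initial state (using that no new incompatibilities between \textit{must}-transitions are introduced by widening).

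Finally, for the distance bound, given an arbitrary $U'$ with witnessing families $\cal R^{U'S}, \cal R^{U'T}$ for $\md(U',S), \md(U',T)$, I would mimic the argument in the proof of Theorem~\ref{wm2.th:conj}: define
\begin{equation*}
  R^\wedge_\beta = \bigl\{\bigl(u',(s,t)\bigr) \bigm| \exists \alpha,\alpha' \in \LL : (u',s) \in R^{U'S}_\alpha,\ (u',t) \in R^{U'T}_{\alpha'},\ C(\alpha,\alpha') \sqsubseteq_\LL \beta \bigr\}
\end{equation*}
and verify that $\{R^\wedge_\beta\}$ is a modal refinement family from $U'$ to $S' \wedge T'$ by using inequality~\eqref{wm2.eq:relconb} to discharge the $F$-obligations on the widened labels $k',\ell'$, again leveraging determinism to identify must-transition matches uniquely. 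The main obstacle, and the reason the standard conjunction proof must be modified, is precisely the possibility that the widening of a given transition demanded by one choice of $U$-witness differs from the widening demanded by another: it is here that the determinism hypothesis and the ``universal'' character of the inequality~\eqref{wm2.eq:relconb} (which quantifies over all $m'$ and all $\alpha,\alpha'$) combine to ensure that a single choice of widened labels $S',T'$ simultaneously serves every $U'$.
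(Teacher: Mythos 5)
Your overall strategy---use the witness families for $\md(U,S)$ and $\md(U,T)$ to identify the label pairs $(k,\ell)$ that need a joint widening, invoke relaxed conjunctive boundedness to produce $k'\oland \ell'$, and then verify the bound for an arbitrary $U'$ via the combined relation family exactly as in the proof of Theorem~\ref{wm2.th:conj}---is the paper's. You also correctly isolate the reason a single pair $(S',T')$ can serve every $U'$: the widened labels provided by relaxed conjunctive boundedness depend only on $(k,\ell)$, and inequality~\eqref{wm2.eq:relconb} is universally quantified over $m'$, $\alpha$, $\alpha'$.

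There is, however, a concrete flaw in your construction of $S'$ and $T'$ and in the role you assign to determinism. Determinism of, say, $S$ ensures that a given $u\mayto{m}_U u'$ has at most one match $s\mayto{k}_S s'$; it does \emph{not} ensure that this transition of $S$ has a unique partner in $T$. If $T$ is nondeterministic, the same $s\mayto{k}_S s'$ may be paired with $t\mayto{\ell_1}_T t_1$ and with $t\mayto{\ell_2}_T t_2$, and relaxed conjunctive boundedness applied to $(k,\ell_1)$ and to $(k,\ell_2)$ may return two different widenings $k_1'\ne k_2'$ of $k$. So your claim that ``the choice of widening is unambiguous'' is false, and \emph{replacing} $k$ by a single widened label is not well defined. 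The paper instead \emph{adds} every widened transition $s\mayto{k'}_{S'}s'$ produced during the iteration (so one transition of $S$ may acquire several widened copies), updates $\beta:=\max(\beta,d(k',k))$ at each step, and appeals to compact branching of $S$ and $T$ to conclude that the process converges with $\beta,\gamma\ne\top_\LL$---a point which your bare $\beta=\sup d(k',k)$ glosses over. Determinism is needed only where it is needed in Theorem~\ref{wm2.th:conj}, namely in the must-transition half of the refinement-family check, to identify the may-transition of the deterministic component matched through $U'$ with the one occurring in the must-transition of $S'\wedge T'$; your last paragraph does use it there as well, so once the construction is repaired by adding rather than replacing transitions, the rest of your argument goes through.
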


\begin{proof}
  We start by constructing $S'$ and $T'$, almost as in the proof of the
  third claim of Theorem~\ref{wm2.th:conj}.  The states of $S'$ and $T'$
  will be the same as for $S$ and $T$, and we start by letting $\beta=
  \bot_\LL$, $\gamma= \bot_\LL$.

  Let $U$ fulfill $\md( U, S)\ne \top_\LL$ and
  $\md( U, T)\ne \top_\LL$, let
  $\cal R=\{ R_\alpha\subseteq U\times S\mid \alpha\in \LL\}$ and
  $\cal R'=\{ R'_\alpha\subseteq U\times T\mid \alpha\in \LL\}$ be
  relation families witnessing $\md( U, S)$ and $\md( U, T)$,
  respectively, define
  $R^\wedge_\eta=\{( u,( s, t))\mid \exists \alpha, \alpha'\in \LL:(
  u, s)\in R_\alpha,( u, t)\in R'_{ \alpha'}, C( \alpha,
  \alpha')\sqsubseteq_\LL \eta\}\subseteq U\times S\times T$ for all
  $\eta\in \LL$, and let
  $\cal R^\wedge=\{ R^\wedge_\eta\mid \eta\in \LL\}$.

  Now let $\eta\in \LL\setminus\{ \top_\LL\}$ and
  $( u,( s, t))\in R^\wedge_\eta\in \cal R^\wedge$, then we have
  $\alpha, \alpha'\in \LL\setminus\{ \bot_\LL\}$ with
  $( u, s)\in R_\alpha\in \cal R$,
  $( u, t)\in R'_{ \alpha'}\in \cal R'$, and
  $C( \alpha, \alpha')\sqsubseteq_\LL \eta$.  Let $u\mayto{ m}_U u'$,
  then also $s\mayto{ k}_S s'$ and $t\mayto{ \ell}_T t'$, and there
  are $\bar \alpha, \bar \alpha'\in \LL\setminus\{ \top_\LL\}$ with
  $F( m, k, \bar \alpha)\sqsubseteq_\LL \alpha$ and
  $F( m, \ell, \bar \alpha')\sqsubseteq_\LL \alpha'$.  Hence
  $d( m, k)\ne \top_\LL$ and $d( m, \ell)\ne \top_\LL$, and by relaxed
  conjunctive boundedness we have $k', \ell'\in \Spec$ with
  $k\labpre k'$, $\ell\labpre \ell'$,
  $d( k', k)\ne \top_\LL$, $d( \ell', \ell)\ne \top_\LL$, and
  $k'\oland \ell'$ defined.  We add the transitions
  $s\mayto{ k'}_{ S'} s'$, $t\mayto{ \ell'}_{ T'} t'$ to $S'$ and $T'$
  and update $\beta:= \max( \beta, d( k', k))$,
  $\gamma:= \max( \gamma, d( \ell', \ell))$.

  As the sets $\{ k\in \Spec\mid s\mayto{ k}_S s'\}$, $\{ \ell\in
  \Spec\mid t\mayto{ \ell}_T t'\}$ are compact, the above process
  converges to some $\beta, \gamma\ne \top_\LL$.  The \must transitions
  we just copy from $S$ to $S'$ and from $T$ to $T'$, and then $S'$ is a
  $\beta$-widening of $S$ and $T'$ is a $\gamma$-widening of $T$.

  We must show that $S'$ and $T'$ satisfy the properties claimed.  By
  construction $S'\wedge T'$ is defined, so let $U'$ be an SMTS with
  $\md( U', S)\ne \top_\LL$ and $\md( U', T)\ne \top_\LL$ (otherwise
  we have nothing to prove).  We must show that
  $\md( U', S'\wedge T')\sqsubseteq_\LL C( \md( U', S), \md( U', T))$.
  Let $\cal R=\{ R_\alpha\subseteq U'\times S\mid \alpha\in \LL\}$ and
  $\cal R'=\{ R'_\alpha\subseteq U'\times T\mid \alpha\in \LL\}$ be
  relation families witnessing $\md( U', S)$ and $\md( U', T)$,
  respectively, define
  $R^{ \wedge'}_\eta=\{( u',( s, t))\mid \exists \alpha, \alpha'\in
  \LL:( u', s)\in R_\alpha,( u', t)\in R'_{ \alpha'}, C( \alpha,
  \alpha')\sqsubseteq_\LL \eta\}\subseteq U'\times S\times T$ for all
  $\eta\in \LL$, and let
  $\cal R^{ \wedge'}=\{ R^{ \wedge'}_\eta\mid \eta\in \LL\}$.

  We have
  $( u_0',( s_0, t_0))\in R^{ \wedge'}_{ C( \md( U', S), \md( U',
    T))}\in \cal R^{ \wedge'}$.  Let
  $\eta\in \LL\setminus\{ \bot_\LL\}$ and
  $( u',( s, t))\in R^{ \wedge'}_\eta$, then we have
  $\alpha, \alpha'\in \LL\setminus\{ \bot_\LL\}$ with
  $( u', s)\in R_\alpha\in \cal R$,
  $( u', t)\in R'_{ \alpha'}\in \cal R'$, and
  $C( \alpha, \alpha')\sqsubseteq_\LL \eta$.  Let
  $u'\mayto{ m}_{ U'} u''$, then also $s\mayto{ k}_S s'$ and
  $t\mayto{ \ell}_T t'$, and there are
  $\bar \alpha, \bar \alpha'\in \LL\setminus\{ \top_\LL\}$ with
  $( u'', s')\in R_{ \bar \alpha}$,
  $( u'', t')\in R'_{ \bar \alpha'}$,
  $F( m, k, \bar \alpha)\sqsubseteq_\LL \alpha$, and
  $F( m, \ell, \bar \alpha')\sqsubseteq_\LL \alpha'$.

  By construction of $S'$ and $T'$, we have $s\mayto{ k'}_{ S'} s'$
  and $t\mayto{ \ell'}_{ T'} t'$ with $k\labpre k'$,
  $\ell\labpre \ell'$, $d( k', k)\sqsubseteq_\LL \beta$, and
  $d( \ell', \ell)\sqsubseteq_\LL \gamma$, and such that $k'\oland
  \ell'$ is defined.  Also, $( u'',( s', t'))\in R^{ \wedge'}_{ C(
    \bar \alpha, \bar \alpha')}$ and
  \begin{equation*}
    F( m, k'\oland
    \ell', C( \bar \alpha, \bar \alpha'))\sqsubseteq_\LL
    C( F( m, k, \bar \alpha), F( m, \ell, \bar
    \alpha'))\sqsubseteq_\LL C( \alpha, \alpha').
  \end{equation*}

  The other direction of the proof, starting with a transition $( s,
  t)\mustto{ k\oland \ell}_{ S'\wedge T'}( s', t')$, is an exact copy
  of the corresponding part of the proof of Theorem~\ref{wm2.th:conj}. \qed
\end{proof}

\begin{example}
  For the set $\Spec= \Sigma\times \II$ from our examples, the following
  lemma shows a one-step version of relaxed conjunctive boundedness.

  \begin{lemma}
    \label{wm2.le:ex-relconb}
    For all $k, \ell\in \Spec$ for which there is $m\in \Spec$ with $d(
    m, k)\ne \infty$ and $d( m, \ell)\ne \infty$, there exist $k',
    \ell'\in \Spec$ with $k\labpre k'$, $\ell\labpre
    \ell'$, $d( k', k)\ne \infty$, $d( \ell', \ell)\ne \infty$, and
    $k'\oland \ell'$ defined, and then $d( m', k'\oland \ell')\le
    \max( d( m', k), d( m', \ell))$ for all $m'\in \Spec$.
  \end{lemma}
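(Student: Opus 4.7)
The plan is to prove the lemma by a direct construction that handles two cases depending on whether the intervals of $k$ and $\ell$ already overlap. First I would observe that the hypothesis---the existence of an $m \in \Spec$ with both $d(m,k)$ and $d(m,\ell)$ finite---forces $k$ and $\ell$ to share the same action symbol $a$, since differing actions give infinite label distance for any common refinement. So write $k = (a,[l_1,r_1])$ and $\ell = (a,[l_2,r_2])$.

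The easy case is when $\max(l_1, l_2) \le \min(r_1, r_2)$: here $k \oland \ell$ is already defined, so I take $k' = k$, $\ell' = \ell$; the distances $d(k',k) = d(\ell',\ell) = 0$ are trivially finite, and the required inequality reduces to verifying $d(m', k \oland \ell) \le \max(d(m',k), d(m',\ell))$ directly from the formula $d_\Spec((a,[l,r]),(a,[l',r'])) = \max(l'-l, r-r', 0)$. In the disjoint case, without loss of generality $r_1 < l_2$; I set $k' = (a, [l_1, l_2])$ (widening $k$ just enough to meet $\ell$) and $\ell' = \ell$. Then $k \labpre k'$ and $\ell \labpre \ell'$, we have $d(k',k) = l_2 - r_1 < \infty$ and $d(\ell',\ell) = 0 < \infty$, and $k' \oland \ell' = (a, [l_2, l_2])$ is defined.

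For the inequality $d(m', k' \oland \ell') \le \max(d(m', k), d(m', \ell))$ in the disjoint case, I would first reduce to implementation labels: writing $m' = (a', [l_3, r_3])$ (the inequality being trivial when $a' \ne a$) and using the Hausdorff extension $d(m',\cdot) = \sup_{x \in [l_3, r_3]} d((a,x), \cdot)$, together with the distributivity $\sup_x \max(f(x), g(x)) = \max(\sup_x f, \sup_x g)$, it suffices to prove the inequality for singleton $m_x = (a, x)$ with $x \in \Real$. There the left-hand side is $|x - l_2|$, and one checks by splitting on $x \le l_2$, $l_2 < x \le r_2$, and $x > r_2$ that this is bounded by $\max(l_1 - x, x - r_1, 0)$ when $x > l_2$ (using $r_1 < l_2$, so $x - r_1 > x - l_2$) and by $l_2 - x = \max(l_2 - x, x - r_2, 0)$ when $x \le l_2$.

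The argument is entirely elementary; the only minor subtlety is the reduction from specification to implementation labels via the Hausdorff formula, and checking that the chosen widening really gives a finite widening distance while allowing the intersection to become nonempty. The main obstacle, if any, is being systematic in the case analysis to avoid missing a subcase, particularly when the intervals are positioned asymmetrically with respect to $m'$; but no deep reasoning about the iterator $F$ is needed since the statement of the lemma concerns only $d$ directly.
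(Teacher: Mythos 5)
Your proof is correct and follows essentially the same strategy as the paper's: reduce to a common action symbol, split on whether the two intervals already intersect, and in the disjoint case widen the labels just enough for the conjunction to become defined, then verify the distance inequality by inspecting the closed-form $\max$ expressions. The one concrete difference is in the disjoint case: writing $k=(a,[l,r])$, $\ell=(a,[l',r'])$ with $r<l'$, the paper widens \emph{both} labels, taking $k'=(a,[l,l'])$ and $\ell'=(a,[r,r'])$ so that $k'\oland\ell'=(a,[r,l'])$, whereas you widen only $k$ and keep $\ell'=\ell$, so that $k'\oland\ell'$ is the singleton interval $(a,[l',l'])$. Both choices satisfy all the requirements; yours has one fewer widening to check, at the cost of a smaller conjoined interval and hence a slightly tighter inequality to verify --- which your case analysis on the position of $x$ relative to $l'$ does correctly, since $r<l'$ gives $x-r>x-l'$ for $x>l'$. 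Your explicit reduction from specification labels $m'$ to implementation labels via the Hausdorff supremum and the distributivity of $\sup$ over $\max$ is also sound, though the paper shortcuts it by computing directly with the formula $d_\Spec((a,[x,y]),(a,[x',y']))=\max(x'-x,\,y-y',\,0)$.
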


  \begin{proof}
    Let $k, \ell\in \Spec$ such that there is $m\in \Spec$ with $d( m,
    k)\ne \infty$ and $d( m, \ell)\ne \infty$.  This implies that
    $k=( a,[ l, r])$ and $\ell=( a,[ l', r'])$ for some $a\in \Sigma$,
    $k, l, k', l'\in \Int\cup\{ -\infty, \infty\}$.  Without loss of
    generality we can assume that $l\le l'$.

    If $r\ge l'$, then $k\oland \ell=( a,[ l', r])$ is defined, and we
    take $k'= k$, $\ell'= \ell$.  Now let $m'=( a',[ l'', r''])\in
    \Spec$.  If $a'\ne a$, the property to prove is trivially true.  If
    $a'= a$, then we have
    \begin{align*}
      d( m', k'\oland \ell') &= \max( 0, l'- l'', r''- r)\,, \\
      d( m', k) &= \max( 0, l- l'', r''- r)\,, \\
      d( m', \ell) &= \max( 0, l'- l'', r''- r')\,.
    \end{align*}
    Thus we need to show that
    \begin{equation*}
      \max( 0, l'- l'', r''- r)\le \max( 0, l- l'', r''- r, l'- l'', r''-
      r')\,,
    \end{equation*}
    which is clear as all left-hand terms also appear on the right-hand
    side.

    In case $r< l'$, we let $k'=( a,[ l, l'])$ and $\ell'=( a,[ r,
    r'])$.  Then $k\labpre k'$, $\ell\labpre \ell'$,
    and $k'\oland \ell'=( a,[ r, l'])$ is defined.  Also, $d( k', k)=
    d( \ell', \ell)= l'- r\ne \infty$.

    Let $m'=( a',[ l'', r''])$ as before, then the case $a'\ne a$ is
    again trivial.  We have
    \begin{equation*}
      d( m', k'\oland \ell')= \max( 0, r- l'', r''- l')\,,
    \end{equation*}
    so we need to show that
    \begin{align*}
      \max( 0, r- l'', r''- l') &\le \max( 0, l- l'', r''- r, l'- l'',
      r''- r') \\
      &= \max( 0, r''- r, l'- l'')\,,
    \end{align*}
    where the equality follows from $l\le l'$, hence $l- l''\le l'-
    l''$, and $r\le r'$, hence $r''- r'\le r''- r$.  But $0\le l'- r$,
    $r- l''< l'= l''$, and $r''- l'< r''- r$ because of $r< l'$, so the
    inequality follows. \qed
  \end{proof}

  For the accumulating distance, it then follows that $\oland$ is
  relaxed conjunctively bounded by $C( \alpha, \alpha')= \alpha+
  \alpha'$: Using the notation from Lemma~\ref{wm2.le:ex-relconb}, we need
  to show~\eqref{wm2.eq:relconb}, \ie~that $d( m', l'\oland \ell')+
  \lambda( \alpha+ \alpha')\le d( m', k)+ \lambda \alpha+ d( m', \ell)+
  \lambda \alpha'$, which however is clear by $d( m', k'\oland
  \ell')\le \max( d( m', k), d( m', \ell))\le d( m', k)+ d( m', \ell)$.
\end{example}

\begin{example}
  For the pointwise distance, $\oland$ is relaxed conjunctively bounded
  by $C( \alpha, \alpha')= \max( \alpha, \alpha')$:
  \eqref{wm2.eq:relconb}~is then equivalent to $\max( d( m', k'\oland
  \ell'), \alpha, \alpha')\le \max( d( m', k), d( m', \ell), \alpha,
  \alpha')$, which follows from Lemma~\ref{wm2.le:ex-relconb}.
\end{example}

\begin{example}
  For the limit-average distance, $\oland$ is relaxed conjunctively
  bounded by $C( \alpha, \alpha')= \alpha\oplus_\LL \alpha'$: Again
  using the notation from Lemma~\ref{wm2.le:ex-relconb}, we need to
  show~\eqref{wm2.eq:relconb}, so we need to see that for all $j\in
  \Natp$,
  \begin{multline*}
    \tfrac1{ j+ 1} d( m', k'\oland \ell')+ \tfrac j{ j+ 1} \alpha( j- 1)+
    \tfrac j{ j+ 1} \alpha'( j- 1) \\
    \le \tfrac1{ j+ 1} d( m', k)+ \tfrac j{ j+
      1} \alpha( j- 1)+ \tfrac1{ j+ 1} d( m', \ell)+ \tfrac j{ j+ 1}
    \alpha'( j- 1)\,.
  \end{multline*}
  This follows again from
  $d( m', k'\oland \ell')\le \max( d( m', k), d( m', \ell))\le d( m',
  k)+ d( m', \ell)$.
\end{example}

\begin{example}
  Also for the maximum-lead distance, $\oland$ is relaxed
  conjunctively bounded by
  $C( \alpha, \alpha')= \alpha\oplus_\LL \alpha'$.  To see this, we
  again use the notation from Lemma~\ref{wm2.le:ex-relconb}.  We need
  to show that for all $\alpha, \alpha'\in \LL$ and all $d\in \Real$,
  \begin{multline}
    \label{eq:ml-relconb}
    \smash[b]{ \adjustlimits \sup_{ l''\le z\le r''} \inf_{ r\le w\le
        l'} \max(| d+ z- w|, \alpha( d+ z- w)+ \alpha'( d+ z- w))} \\
      \le
    \begin{aligned}[t]
      & \adjustlimits \sup_{ l''\le z\le r''} \inf_{ l\le x\le r} \max(|
      d+ z- x|, \alpha( d+ z- x)) \\
      &+ \adjustlimits \sup_{ l''\le z\le r''} \inf_{ l'\le y\le r'}
      \max(| d+ z- y|, \alpha'( d+ z- y))
    \end{aligned}
  \end{multline}
  Now for all $z\in[ l'', r'']$, we have
  \begin{equation*}
    \smash[b]{ \inf_{ r\le w\le l'}| d+ z- w|\le \inf_{ l\le x\le r}| d+ z- x|+
      \inf_{ l'\le y\le r'}| d+ z- y|}
  \end{equation*}
  and
  \begin{multline*}
    \smash[b]{\inf_{ r\le w\le l'}}( \alpha( d+ z- w)+ \alpha'( d+ z- w)) \\
    \le \inf_{ l\le x\le r} \alpha( d+ z- x)+ \inf_{ l'\le y\le r'}
    \alpha'( d+ z- y)\,;
  \end{multline*}
  both can be shown by simply considering all cases of the placement of
  the infima.  But then also
  \begin{align*}
    \max( \inf_{ r\le w\le l'}| d+ z- w|, & \inf_{ r\le w\le l'}( \alpha(
    d+ z- w)+ \alpha'( d+ z- w))) \\
    &\le \max\left\{
      \begin{aligned}
        & \inf_{ l\le x\le r}| d+ z- x|+ \inf_{ l'\le y\le r'}| d+ z- y|
        \\
        & \inf_{ l\le x\le r} \alpha( d+ z- x)+ \inf_{ l'\le y\le r'}
        \alpha'( d+ z- y))
      \end{aligned}
    \right. \\
    &\le
    \begin{aligned}[t]
      & \max( \inf_{ l\le x\le r}| d+ z- x|, \inf_{ l\le x\le r} \alpha(
      d+ z- x)) \\
      &+ \max( \inf_{ l'\le y\le r'}| d+ z- y|, \inf_{ l'\le y\le r'}
      \alpha'( d+ z- y))\,,
    \end{aligned}
  \end{align*}
  the last inequality by distributivity of $+$ over $\max$.  As this
  holds for all $z$, we have proven~\eqref{eq:ml-relconb}.
\end{example}

\section{Logical Characterizations}

We show that quantitative refinement admits a logical characterization.
Our results extend the logical characterization of modal transition
systems in~\cite{DBLP:conf/avmfss/Larsen89}.  Our logic $\mathcal L$ is
the smallest set of expressions generated by the following abstract
syntax:
\begin{equation*}
  \phi,\phi_1,\phi_2 := \ltrue \mid \lfalse
  \mid  \langle \ell\rangle \phi \mid [ \ell] \phi \mid
  \phi_1 \wedge \phi_2
  \mid \phi_1 \vee \phi_2 \qquad (\ell\in \Spec)
\end{equation*}
The semantics of a formula $\phi\in \mathcal L$ is a mapping $\wsem \phi:
S\to \LL$ given inductively as follows:
\begin{gather*}
  \begin{aligned}
    \wsem \ltrue s &= \bot &\qquad\qquad \wsem \lfalse s &= \top \\
    \wsem{( \phi_1\wedge \phi_2)} s &= \max( \wsem{ \phi_1} s, \wsem \phi_2
    s) &\qquad \wsem{( \phi_1\vee \phi_2)} s &= \min( \wsem{ \phi_1} s,
    \wsem{ \phi_2} s)
  \end{aligned}
  \\
  \begin{aligned}
    \wsem{ \langle \ell\rangle \phi} s &= \inf\{ F( k, \ell, \wsem \phi
    t)\mid s\mustto{ k} t, d( k, \ell)\ne \top_\LL\} \\
    \wsem{ [ \ell] \phi} s &= \sup\{ F( k, \ell, \wsem \phi t)\mid
    s\mayto{ k} t, d( k, \ell)\ne \top_\LL\}
  \end{aligned}
\end{gather*}
For a SMTS $S$ we write $\wsem \phi S= \wsem \phi s_0$.

The below theorems express the fact that $\mathcal L$ is
\emph{quantitatively sound} for refinement distance, \ie~the value of
a formula in a specification is bounded by its value in any other
specification together with their distance, and that the
disjunction-free fragment of $\mathcal L$ is \emph{quantitatively
  implementation complete}, \ie~the value of any disjunction-free
formula in a specification $S$ is bounded above by its value in any
implementation of $S$.  Note that disjunction-freeness is a very
common assumption in this context,
\cf~\cite{DBLP:conf/avmfss/Larsen89, DBLP:conf/atva/BenesCK11}.

\begin{theorem}
  \label{wm2.th:l-sound}
  For all $\phi\in \mathcal L$ and all SMTS $S$, $T$, $\wsem \phi
  S\sqsubseteq_\LL \wsem \phi T\oplus_\LL \md( S, T)$.
\end{theorem}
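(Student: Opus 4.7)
The plan is to strengthen the statement pointwise---to show that $\wsem{\phi}{s} \sqsubseteq_\LL \wsem{\phi}{t} \oplus_\LL \md(s, t)$ for all states $s$ of $S$ and $t$ of $T$---and then specialize to initial states. I would proceed by structural induction on $\phi$. The case $\md(s, t) = \top_\LL$ is immediate since the right-hand side collapses to $\top_\LL$; otherwise I fix, by Lemma~\ref{wm2.le:family}, a modal refinement family $\cal R$ with $(s, t) \in R_{\md(s, t)}$, which will feed the modal cases.

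The base cases $\phi = \ltrue$ and $\phi = \lfalse$ are immediate from the semantics. For $\phi = \phi_1 \wedge \phi_2$ and $\phi = \phi_1 \vee \phi_2$, I apply the induction hypothesis to both arguments and conclude using monotonicity of $\max$, $\min$, and $\oplus_\LL$, together with distributivity of $\oplus_\LL$ over $\max$ and $\min$ in the lattice $\LL$ (which follows from $\LL = (\Realnn\cup\{\infty\})^M$ being a product of extended non-negative reals).

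For the diamond case $\phi = \langle \ell \rangle \psi$, I fix an arbitrary $t \mustto{\ell'}_T t'$ contributing to $\wsem{\langle \ell \rangle \psi}{t}$, \ie~with $d(\ell', \ell) \ne \top_\LL$. The must-clause of $\cal R$ yields $s \mustto{k}_S s'$ together with some $\beta$ and $(s', t') \in R_\beta$ such that $F(k, \ell', \beta) \sqsubseteq_\LL \md(s, t)$; monotonicity of $F$ gives $F(k, \ell', \md(s', t')) \sqsubseteq_\LL \md(s, t)$. Since $F(k, \ell', \bot_\LL) = d(k, \ell')$ and $F$ is monotone in its third argument, we get $d(k, \ell') \ne \top_\LL$, and the triangle inequality for $d$ on $\Spec$ combined with $d(\ell', \ell) \ne \top_\LL$ gives $d(k, \ell) \ne \top_\LL$, so this transition does contribute to $\wsem{\langle \ell \rangle \psi}{s}$. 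Chaining the induction hypothesis $\wsem{\psi}{s'} \sqsubseteq_\LL \wsem{\psi}{t'} \oplus_\LL \md(s', t')$, monotonicity of $F(k, \ell, \cdot)$, and the extended triangle inequality for $F$ with $\ell'$ as midpoint yields
\begin{align*}
  F(k, \ell, \wsem{\psi}{s'})
  &\sqsubseteq_\LL F(k, \ell, \wsem{\psi}{t'} \oplus_\LL \md(s', t')) \\
  &\sqsubseteq_\LL F(k, \ell', \md(s', t')) \oplus_\LL F(\ell', \ell, \wsem{\psi}{t'}) \\
  &\sqsubseteq_\LL \md(s, t) \oplus_\LL F(\ell', \ell, \wsem{\psi}{t'}).
\end{align*}
Since $\wsem{\langle \ell \rangle \psi}{s} \sqsubseteq_\LL F(k, \ell, \wsem{\psi}{s'})$ and this bound holds for every contributing $(\ell', t')$, taking the infimum on the right finishes the case.

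The box case $\phi = [\ell]\psi$ is dual: starting from any $s \mayto{k}_S s'$ with $d(k, \ell) \ne \top_\LL$ contributing to $\wsem{[\ell]\psi}{s}$, the may-clause of $\cal R$ produces $t \mayto{\ell'}_T t'$ with $F(k, \ell', \md(s', t')) \sqsubseteq_\LL \md(s, t)$, and the same chain of induction hypothesis, monotonicity and triangle inequality gives $F(k, \ell, \wsem{\psi}{s'}) \sqsubseteq_\LL \md(s, t) \oplus_\LL F(\ell', \ell, \wsem{\psi}{t'})$; taking supremum over contributing $(k, s')$ on the left then delivers the bound. The main obstacle here---absent in the diamond case---is verifying that the $\ell'$ thus produced also satisfies $d(\ell', \ell) \ne \top_\LL$, so that the term $F(\ell', \ell, \wsem{\psi}{t'})$ actually appears in the supremum defining $\wsem{[\ell]\psi}{t}$. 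The triangle inequality for $d$ gives $d(k, \ell) \sqsubseteq_\LL d(k, \ell') \oplus_\LL d(\ell', \ell)$ but not the reverse direction needed here; the argument therefore relies on the distances of practical interest being $\top_\LL$-symmetric ($d(x, y) = \top_\LL$ iff $d(y, x) = \top_\LL$), reflecting agreement on the discrete action-part of labels, from which $d(\ell', k) \ne \top_\LL$ and thus $d(\ell', \ell) \ne \top_\LL$ follow.
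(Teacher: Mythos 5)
Your proof follows essentially the same route as the paper's: structural induction, with the modal cases handled by matching transitions via the refinement characterization of $\md$, applying the induction hypothesis to the targets, and chaining monotonicity of $F$ in its third argument with the extended triangle inequality for $F$ (the paper uses the transition label $k_2$ as midpoint where you use $\ell'$, which is the same choice). The one place you go beyond the paper is in tracking the side conditions $d( \cdot, \ell)\ne \top_\LL$ in the semantics of $\langle \ell\rangle$ and $[ \ell]$ — the paper's proof silently drops them — and the obstacle you identify in the box case, namely that $d( \ell', \ell)\ne \top_\LL$ for the matched label does not follow from the asymmetric triangle inequality alone, is genuine; your additional $\top_\LL$-symmetry assumption, which holds in all of the paper's example distances, is a legitimate patch for a gap the paper's own argument does not acknowledge.
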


\begin{proof}
  Structural induction.  The claim obviously holds for $\phi= \ltrue$ and
  $\phi= \lfalse$; if $\phi= \phi_1\wedge \phi_2$, then $\wsem{ \phi_i}
  s_1\sqsubseteq_\LL \wsem{ \phi_i} s_2\oplus_\LL \md( s_1, s_2)$ for $i=
  1, 2$ imply that also $\max( \wsem{ \phi_1} s_1, \wsem{ \phi_2}
  s_1)\sqsubseteq_\LL \max( \wsem{ \phi_1} s_2, \wsem{ \phi_2}
  s_2)\oplus_\LL \md( s_1, s_2)$, and similarly for $\phi= \phi_1\vee
  \phi_2$.

  For the case $\phi=\langle \ell\rangle \phi'$, there is nothing to
  prove if there are no transitions $s_2\mustto{}_2$ or if $\md( s_1,
  s_2)= \top_\LL$.  Let thus $s_2\mustto{ k_2}_2 t_2$, then there exist
  $s_1\mustto{ k_1}_1 t_1$ with $F( k_1, k_2, \md( t_1,
  t_2))\sqsubseteq_\LL \md( s_1, s_2)$.  Now by induction hypothesis,
  $\wsem{ \phi'} t_1\sqsubseteq_\LL \md( t_1, t_2)\oplus_\LL \wsem{ \phi'}
  t_2$, and then, using the triangle inequality,
  \begin{align*}
    F( k_1, \ell, \wsem{ \phi'} t_1) &\sqsubseteq_\LL F( k_1, k_2, \md(
    t_1, t_2))\oplus_\LL F( k_2, \ell, \wsem{ \phi'} t_2) \\
    &\sqsubseteq_\LL \md( s_1, s_2)\oplus_\LL F( k_2, \ell, \wsem{ \phi'}
    t_2)\,.
  \end{align*}
  As $s_2\mustto{ k_2}_2 t_2$ was arbitrary, this entails
  \begin{multline*}
    \inf\{ F( k_1, \ell, \wsem{ \phi'} t_1)\mid s_1\mustto{ k_1}_1
    t_1\} \\ \sqsubseteq_\LL \inf\{ F( k_2, \ell, \wsem{ \phi'} t_2)\mid
    s_1\mustto{ k_2}_2 t_2\}\oplus_\LL \md( s_1, s_2)\,.
  \end{multline*}

  For the case $\phi=[ \ell] \phi'$ the proof is similar: We have
  nothing to prove if $\md( s_1, s_2)= \top_\LL$ or if there are no
  transitions $s_1\mayto{ k_1}_1 t_1$ with $F( k_1, \ell, \wsem{ \phi'}
  t_1)\ne \top_\LL$, so assume there is such a transition.  Then we also
  have $s_2\mayto{ k_2}_2 t_2$ with $F( k_1, k_2, \md( t_1,
  t_2))\sqsubseteq_\LL \md( s_1, s_2)$, and
  \begin{multline*}
    F( k_1, \ell, \wsem{ \phi'} t_1)\sqsubseteq_\LL F( k_1, k_2, \md(
    t_1, t_2))\oplus_\LL F( k_2, \ell, \wsem{ \phi'} t_2) \\
    \sqsubseteq_\LL \md( s_1, s_2)\oplus_\LL F( k_2, \ell, \wsem{ \phi'}
    t_2)\,.
  \end{multline*}
\end{proof}

\medskip

\begin{theorem}
  \label{wm2.th:l-complete}
  For all disjunction-free formulae $\phi\in \mathcal L$ and all SMTS
  $S$,
  $\wsem \phi S= \sup_{ I\in \llbracket S\rrbracket} \wsem \phi I$.
\end{theorem}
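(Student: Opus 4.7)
The plan is to prove the two inequalities separately; the harder direction is completeness, which I would establish by structural induction on disjunction-free formulae.

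The easy direction, $\wsem \phi S \sqsupseteq_\LL \sup_{I \in \llbracket S \rrbracket} \wsem \phi I$, follows immediately from Theorem~\ref{wm2.th:l-sound}: any $I \in \llbracket S \rrbracket$ satisfies $I \mr S$, hence $\md(I, S) = \bot_\LL$, hence $\wsem \phi I \sqsubseteq_\LL \wsem \phi S \oplus_\LL \bot_\LL = \wsem \phi S$. Taking the supremum over $I$ gives the inequality (assuming $\llbracket S \rrbracket \neq \emptyset$, which holds since every consistent SMTS admits an implementation).

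For the reverse inclusion, I would prove by structural induction on disjunction-free $\phi$ the stronger claim that for every state $s$ of every SMTS $S$, $\wsem \phi s = \sup_{I \in \llbracket (s,S) \rrbracket} \wsem \phi I$. The base cases $\phi = \ltrue, \lfalse$ are immediate as the semantics are constants $\bot_\LL$ and $\top_\LL$. For $\phi = \phi_1 \wedge \phi_2$, I would use the lattice identity $\sup_I \max(f(I), g(I)) = \max(\sup_I f(I), \sup_I g(I))$ (each direction is a one-line argument) together with the induction hypothesis applied to $\phi_1$ and $\phi_2$.

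The real work is in the two modal cases. For $\phi = [\ell] \phi'$, to match each transition $s \mayto{k}_S t$ with $d(k,\ell) \neq \top_\LL$, I would use closedness of $\llbracket k \rrbracket$ under $F$ to pick $m \in \llbracket k \rrbracket$ with $F(m, \ell, \wsem{\phi'} t) = F(k, \ell, \wsem{\phi'} t)$; by the induction hypothesis applied to each state of $S$, choose implementations $J_{t'} \in \llbracket (t',S) \rrbracket$ with $\wsem{\phi'} J_{t'}$ approximating $\wsem{\phi'} t'$ arbitrarily well; then build an implementation $I$ that includes the designated transition $i_0 \mustto{m} j$ (targeting $J_t$) together with one chosen implementation transition for each \emph{must} transition of $S$. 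Since the sup in $\wsem{[\ell] \phi'} I$ ranges over all implementation transitions, $I$ realizes at least $F(k, \ell, \wsem{\phi'} t)$. For $\phi = \langle \ell \rangle \phi'$, the inf is over must transitions, so the construction needs more care: for each \must transition $s_0 \mustto{k'}_S t'$ with $d(k',\ell) = \top_\LL$, pick $m' \in \llbracket k' \rrbracket$ with $d(m',\ell) = \top_\LL$ (such $m'$ exists by the definition of the Hausdorff extension of $d$), so that this transition is excluded from the implementation's inf; for each \must transition with $d(k',\ell) \neq \top_\LL$, pick $m' \in \llbracket k' \rrbracket$ achieving $F(m', \ell, \wsem{\phi'} t') = F(k', \ell, \wsem{\phi'} t')$ by closedness.

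The principal obstacle is the $\langle \ell \rangle$ case, where implementations are forced to realize every \must transition of $S$, so one cannot simply discard unfavorable transitions; the argument relies crucially on the fact that if $d(k, \ell) = \top_\LL$ then some $m \in \llbracket k \rrbracket$ inherits this, allowing the transition to be eliminated from the inf via the guard $d(m, \ell) \neq \top_\LL$. A secondary technical point is that the induction hypothesis only gives a supremum, not an achieved maximum, so the construction of $I$ must be parametrized by an approximation parameter and the conclusion obtained by passing to the supremum over all such choices; the monotonicity and continuity of $F$ in its third coordinate make this passage routine.
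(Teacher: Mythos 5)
Your plan reproduces the paper's proof in all essentials: one inclusion from Theorem~\ref{wm2.th:l-sound} together with $\md( I, S)= \bot_\LL$ for $I\in \llbracket S\rrbracket$; the other by structural induction, with the $\sup$/$\max$ interchange for conjunction, and with witnessing implementations for the modal cases built from the induction hypothesis up to an approximation parameter $\alpha\sqsubset_\LL \wsem \phi S$ (exactly the point you raise at the end) and from closedness of $\llbracket k\rrbracket$ under $F$ to choose the implementation label. Your $[ \ell]$ case is the paper's: start from an arbitrary implementation (local consistency) and add one transition realizing the supremum-attaining may-transition, which exists by compact branching.

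The one step I would not let pass is in the $\langle \ell\rangle$ case: you assert that $d( k', \ell)= \top_\LL$ yields some $m'\in \llbracket k'\rrbracket$ with $d( m', \ell)= \top_\LL$ ``by the definition of the Hausdorff extension''. It does not: $d( k', \ell)= \sup_{ m\in \llbracket k'\rrbracket} d( m, \ell)$, and in $\LL=( \Realnn\cup\{ \infty\})^M$ a supremum can equal $\top_\LL$ without any single term doing so. This case split is also not how the paper proceeds; it treats all must-transitions of $s_0$ uniformly. Since $F( k, \ell, \bot_\LL)= d( k, \ell)$ and $F$ is monotone in its third argument, a must-transition with $d( k, \ell)= \top_\LL$ has $F( k, \ell, \wsem{ \phi'} t)= \top_\LL\sqsupset_\LL \alpha$ like every other one, so it is covered by the same choice of $\alpha_k'$, $J$ and $n$, and a $\top_\LL$ contribution is harmless in an infimum anyway --- the exclusion guard only genuinely matters for the supremum in the $[ \ell]$ case. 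Apart from this invalid inference (whose conclusion you do not actually need), your proposal, including the degenerate case $\wsem{ \phi'} t= \bot_\LL$ and the passage to the supremum over approximations, matches the paper's argument.
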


\begin{proof}
  Theorem~\ref{wm2.th:l-sound} entails $\wsem \phi I\sqsubseteq_\LL \wsem
  \phi S\oplus_\LL \md( I, S)= \wsem \phi S$ for all $I\in \llbracket
  S\rrbracket$, hence also $\sup_{ I\in \llbracket S\rrbracket} \wsem
  \phi I\sqsubseteq_\LL \wsem \phi S$.  To show that $\wsem \phi
  S\sqsubseteq_\LL \sup_{ I\in \llbracket S\rrbracket} \wsem \phi I$ we
  use structural induction on $\phi$.  If $\phi= \ltrue$, both sides
  are $\bot_\LL$, and if $\phi= \lfalse$, both sides are $\top_\LL$,
  so the induction base is clear.

  The case $\phi= \phi_1\wedge \phi_2$ is also clear: By hypothesis,
  $\wsem{ \phi_1} S\sqsubseteq_\LL \sup_{ I\in \llbracket S\rrbracket}
  \wsem{ \phi_1} I$ and similarly for $\phi_2$, hence
  \begin{align*}
    \wsem \phi S= \max( \wsem{ \phi_1} S, \wsem{ \phi_2} S) &\sqsubseteq_\LL
    \max( \sup_{ I\in \llbracket S\rrbracket} \wsem{ \phi_1} I, \sup_{
      I\in \llbracket S\rrbracket} \wsem{ \phi_2} I) \\ &= \sup_{ I\in
      \llbracket S\rrbracket} \max( \wsem{ \phi_1} I, \wsem{ \phi_2} I)\,.
  \end{align*}

  For the case $\phi= \langle \ell\rangle \phi'$, we are done if $\wsem
  \phi S= \bot_\LL$.  Otherwise, let $\alpha\sqsubset_\LL \wsem \phi S$; we
  want to expose $I\in \llbracket S\rrbracket$ for which
  $\alpha\sqsubset_\LL \wsem \phi I$.  Start by letting $I=\{ i_0\}$ and
  $\mmustto_I= \emptyset$.

  Now for each transition $s_0\mustto{ k}_S t$, we have
  $\alpha\sqsubset_\LL F( k, \ell, \wsem{ \phi'} t)$, so (assuming for the
  moment that $\wsem{ \phi'} t\ne \bot_\LL$) there is
  $\alpha_k'\sqsubset_\LL \wsem{ \phi'} t$ for which $F( k, \ell,
  \alpha_k')\sqsupset_\LL \alpha$.  By induction hypothesis, there is
  $J\in \llbracket t, S\rrbracket$ for which $\alpha_k'\sqsubset_\LL
  \wsem{ \phi'} J$; let $n\in \llbracket k\rrbracket$ such that $F( n,
  \ell, \wsem{ \phi'} J)= F( k, \ell, \wsem{ \phi'} J)$, and add $J$
  together with a transition $i_0\mustto{ n}_I j_0$ to $I$.  In case
  $\wsem{ \phi'} t= \bot_\LL$, we just take an arbitrary $J\in \llbracket
  t, S\rrbracket$.

  For the so-constructed implementation $I$ we have
  \begin{align}
    \wsem \phi I &= \inf\{ F( m, \ell, \wsem{ \phi'} j\mid i_0\mustto{
    m}_I j\} \notag\\
    &= \inf\{ F( k, \ell, \wsem{ \phi'} J)\mid s_0\mustto{ k}_S t, J\in
    \llbracket t, S\rrbracket, \wsem{ \phi'} t= \bot_\LL \text{ or }
    \alpha_k'\sqsubset_\LL \wsem{ \phi'} J\} \notag\\
    &\sqsupset_\LL \inf(\{ F( k, \ell, \alpha_k')\mid s_0\mustto{ k}_S
    t\}\cup\{ F( k, \ell, \wsem{ \phi'} t)\})\sqsupseteq_\LL
    \alpha\,, \label{wm2.eq:fibragen}
  \end{align}
  the strict inequality in~\eqref{wm2.eq:fibragen} because $S$ is compactly
  branching.

  For the case $\phi=[ \ell] \phi'$, let again $\alpha\sqsubset_\LL \wsem
  \phi S$, and let $I\in \llbracket S\rrbracket$ be any implementation
  (there exists one because of local consistency of $S$).  If $F( k,
  \ell, \wsem{ \phi'} t)= \top_\LL$ for all $s_0\mayto{ k}_S t$, then
  $\wsem \phi S= \sup \emptyset= \bot_\LL$ and we are done.  Otherwise let
  $s_0\mayto{ k}_S t$ be such that $\wsem \phi S= F( k, \ell, \wsem{
    \phi'} t)$, which exists because $S$ is compactly branching.  Then
  $\alpha\sqsubset_\LL F( k, \ell, \wsem{ \phi'} t)$, so (assuming that
  $\wsem{ \phi'} t\ne \bot_\LL$) we have $\alpha_k'\sqsubset_\LL \wsem{
    \phi'} t$ with $F( k, \ell, \alpha_k')\sqsupset_\LL \alpha$.

  Let $J\in \llbracket t, S\rrbracket$ such that $\alpha_k'\sqsubset_\LL
  \wsem{ \phi'} J$, let $n\in \llbracket k\rrbracket$ such that $F( n,
  \ell, \wsem{ \phi'} J)= F( k, \ell, \wsem{ \phi'} J)$, and add $J$
  together with a transition $i_0\mustto{ n}_I j_0$ to $I$.  Then
  \begin{align*}
    \wsem \phi I &= \sup\{ F( m, \ell, \wsem{ \phi'} n)\mid i_0\mustto{
    m}_I j\} \\
    &\sqsupseteq_\LL F( n, \ell, \wsem{ \phi'} J)= F( k, \ell, \wsem{
      \phi'} J)\sqsupseteq_\LL F( k, \ell, \alpha_k')\sqsupset_\LL
    \alpha\,.
  \end{align*}
  In case $\wsem{ \phi'} t= \bot_\LL$ instead, we again take an arbitrary
  $J\in \llbracket t, S\rrbracket$, and then $\wsem \phi I\sqsupseteq_\LL
  F( k, \ell, \wsem{ \phi'} t)\sqsupset_\LL \alpha$. \qed
\end{proof}

\chapter[Logical vs.\ Behavioral Specifications][Logical vs.\
Behavioral Specifications]{Logical vs.\ Behavioral
  Specifications\footnote{This chapter is based on the journal
    paper~\cite{DBLP:journals/iandc/BenesFKLT20} published in
    Information and Computation.}}
\label{ch:dmts}

In this chapter we depart from the quantitative setting of this thesis
and introduce \emph{disjunctive} modal transition systems.  We show
that this generalization of MTS is closely related to other
specification formalisms, \viz acceptance automata and the modal
$\nu$-calculus and that it admits both disjunction and conjunction as
well as a general notion of quotient which was unavailable for MTS.

\section{Specification Formalisms}
\label{sec:spec}

In this section we introduce the four specification formalisms with
which this chapter is concerned.  For the rest of the chapter, we fix
a~finite alphabet $\Sigma$.  In each of the formalisms, the semantics
of a~specification is a~set of implementations, in our case always
a~set of \emph{(finite) labeled transition systems} (LTS) over
$\Sigma$, \ie~structures $\mcalI=( S, s^0, \omust)$ consisting of
a~finite set $S$ of \emph{states}, an~initial state $s^0 \in S$, and
a~\emph{transition relation} $\omust\subseteq S\times \Sigma\times S$.

\subsection{Disjunctive Modal Transition Systems}

\begin{definition}
  A \emph{disjunctive modal transition system} (DMTS) is a structure
  $\mcalD=( S, S^0, \omay, \omust)$ consisting of finite sets $S\supseteq
  S^0$ of states and initial states, a \emph{may}-transition relation
  $\omay\subseteq S\times \Sigma\times S$, and a \emph{disjunctive
    must}-transition relation $\omust\subseteq S\times 2^{ \Sigma\times
    S}$.  It is assumed that for all $( s, N)\in \omust$ and all $( a,
  t)\in N$, $( s, a, t)\in \omay$; furthermore, if $(s, \emptyset) \in \omust$
  then there are no $a$ and $t$ such that $(s, a, t)\in \omay$.
\end{definition}

As customary, we write $s\DMTSmay a t$ instead of $( s, a, t)\in \omay$,
$s\DMTSmust{} N$ instead of $( s, N)\in \omust$, $s\DMTSmay a$ if there exists
$t$ for which $s\DMTSmay a t$, and $s\notmay a$ if there does not.

The intuition is that may-transitions $s\DMTSmay a t$ specify which
transitions are permitted in an implementation, whereas
a~must-transition $s\DMTSmust{} N$ stipulates a disjunctive requirement: at
least one of the choices $( a, t)\in N$ has to be implemented.  A DMTS $(
S, S^0, \omay, \omust)$ is an \emph{implementation} if $S^0=\{ s^0\}$ is
a~singleton and $\omust=\{( s,\{( a, t)\})\mid s\DMTSmay a t\}$.

DMTS were introduced in~\cite{DBLP:conf/lics/LarsenX90} in the context
of equation solving.  They are a natural extension of the modal
transition systems (MTS) of previous chapters.  We say that a~DMTS
$( S, S^0, \omay, \omust)$ is a~MTS if $S^0 = \{ s^0 \}$ is
a~singleton and for all $s\DMTSmust{} N$ it holds that $N=\{( a, t)\}$
is also a~singleton.  When speaking about MTS, we usually write
$s \DMTSmust{a} t$ instead of $s \DMTSmust{} \{(a,t)\}$.

An LTS $( S, s^0, \omust)$ can be translated to a DMTS implementation
\linebreak
$( S, S^0, \omay,$ $\omust')$ by setting $S^0=\{ s^0\}$,
$\omay=\omust$ and $\omust'=\{( s,\{( a, t)\})\mid s\DMTSmust a t\}$.
This defines an embedding of LTS into DMTS whose image is precisely
the set of DMTS implementations.

\begin{definition}\label{def:dmts_mr}
  Let $\mcalD_1=( S_1, S^0_1, \omay_1, \omust_1)$ and
  $\mcalD_2=( S_2, S^0_2, \omay_2, \omust_2)$ be DMTS.  A relation
  $R\subseteq S_1\times S_2$ is a \emph{modal refinement} if for all
  $( s_1, s_2)\in R$ the following conditions hold:
  \begin{itemize}
  \item for all $s_1\DMTSmay a_1 t_1$ there is $t_2\in S_2$ with
    $s_2\DMTSmay a_2 t_2$ and $( t_1, t_2)\in R$, and
  \item for all $s_2\DMTSmust{}_2 N_2$ there is $s_1\DMTSmust{}_1 N_1$ such
    that for each $( a, t_1)\in N_1$ there is $( a, t_2)\in N_2$ with
    $( t_1, t_2)\in R$.
  \end{itemize}
  We say that $\mcalD_1$ \emph{modally refines} $\mcalD_2$, denoted $\mcalD_1\mr
  \mcalD_2$, whenever there exists a modal refinement $R$ such that for all
  $s^0_1\in S^0_1$, there exists $s^0_2\in S^0_2$ for which $( s^0_1,
  s^0_2)\in R$.
\end{definition}

We write $\mcalD_1\mreq \mcalD_2$ if $\mcalD_1\mr \mcalD_2$ and $\mcalD_2\mr \mcalD_1$.
For states $s_1\in S_1$, $s_2\in S_2$, we write $s_1\mr s_2$ if
$( S_1,\{ s_1\}, \omay_1, \omust_1)\mr( S_2,\{ s_2\}, \omay_2,
\omust_2)$.
Sometimes we will refer to the last property of a modal refinement
relation,
$\forall s^0_1\in S^0_1: \exists s^0_2\in S^0_2:( s^0_1, s^0_2)\in R$,
as being \emph{initialised}.
Note that modal refinement is reflexive and transitive, \ie~a preorder
on DMTS.

The \emph{set of implementations} of a DMTS $\mcalD$ is $\sem \mcalD=\{ \mcalI\mr
\mcalD\mid \mcalI~\text{implement-}\linebreak[4]
\text{ation}\}$.  This is, thus, the set of all LTS which satisfy the
specification given by the DMTS $\mcalD$.  We say that $\mcalD_1$
\emph{thoroughly refines} $\mcalD_2$, and write $\mcalD_1\DMTStr \mcalD_2$, if $\sem{
  \mcalD_1}\subseteq \sem{ \mcalD_2}$.  We write $\mcalD_1\DMTStreq \mcalD_2$ if
$\mcalD_1\DMTStr \mcalD_2$ and $\mcalD_2\DMTStr \mcalD_1$.  For states $s_1\in S_1$, $s_2\in
S_2$, we write $\sem{ s_1}= \sem{( S_1,\{ s_1\}, \omay_1, \omust_1)}$
and $s_1\DMTStr s_2$ if $\sem{ s_1}\subseteq \sem{ s_2}$.

The proposition below, which follows directly from transitivity of modal
refinement, shows that modal refinement is \emph{sound} with respect to
thorough refinement; in the context of specification theories, this is
what one would expect, and we only include it for completeness of
presentation.  It can be shown that modal refinement is also
\emph{complete} for \emph{deterministic}
DMTS~\cite{DBLP:conf/atva/BenesCK11}, but we will not need this
here.

\begin{proposition}
  \label{pr:mrvstr}
  For all DMTS $\mcalD_1$, $\mcalD_2$, $\mcalD_1\mr \mcalD_2$ implies $\mcalD_1\DMTStr
  \mcalD_2$. \noproof
\end{proposition}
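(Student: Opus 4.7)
The plan is to derive the inclusion $\sem{\mcalD_1} \subseteq \sem{\mcalD_2}$ directly from transitivity of modal refinement, as suggested in the excerpt. This is essentially a one-line argument, so the main task is just to unfold the definitions correctly.

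First I would take an arbitrary implementation $\mcalI \in \sem{\mcalD_1}$. By definition of $\sem{\cdot}$, this means $\mcalI$ is a DMTS implementation (in the sense that $\mcalI$ has a singleton initial state and its must-transitions are exactly the singletons corresponding to its may-transitions) together with a witnessing modal refinement $\mcalI \mr \mcalD_1$. By the hypothesis of the proposition, we also have $\mcalD_1 \mr \mcalD_2$.

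Next I would invoke transitivity of modal refinement, which is noted just before the statement (``modal refinement is reflexive and transitive''). Transitivity is the standard fact that the relational composition of two modal refinement relations is again a modal refinement relation, and that the initial-state condition composes correctly: for every $s^0_\mcalI$ in the initial states of $\mcalI$, there exists $s^0_1 \in S^0_1$ related to it by the first refinement, and then some $s^0_2 \in S^0_2$ related to $s^0_1$ by the second. Thus $\mcalI \mr \mcalD_2$, so $\mcalI \in \sem{\mcalD_2}$. Since $\mcalI$ was arbitrary, $\sem{\mcalD_1} \subseteq \sem{\mcalD_2}$, which is the definition of $\mcalD_1 \DMTStr \mcalD_2$.

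There is no real obstacle here; the only point that requires a moment's thought is to confirm that transitivity for DMTS modal refinement holds with the initialised condition. This is routine: if $R_1 \subseteq S_\mcalI \times S_1$ and $R_2 \subseteq S_1 \times S_2$ are modal refinements, the composite $R_2 \circ R_1$ satisfies the may- and must-transition matching conditions of Definition~\ref{def:dmts_mr} by chaining the witnesses, and its initialisation follows because for each $s^0_\mcalI$ we successively pick $s^0_1$ and then $s^0_2$ from the two given initialisations. Since the paper already asserts transitivity, in the final write-up I would simply cite it rather than reprove it.
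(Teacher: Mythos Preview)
Your proposal is correct and matches the paper's approach exactly: the paper gives no proof for this proposition (it is marked \texttt{\textbackslash noproof}) and simply notes that it ``follows directly from transitivity of modal refinement,'' which is precisely the argument you spell out.
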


\subsection{The Modal $\nu$-Calculus}

We recall the syntax and semantics of the modal $\nu$-calculus, the
fragment of the modal $\mu$-calculus~\cite{unpub/ScottB69,
  DBLP:journals/tcs/Kozen83} with only greatest fixed points.  Instead
of an explicit greatest fixed point operator, we use the
representation by equation systems in Hennessy-Milner logic developed
in~\cite{DBLP:journals/tcs/Larsen90}.

For a finite set $X$ of variables, let $\HML( X)$ be the set of
\emph{Hennessy-Milner formulae}, generated by the abstract syntax
\begin{equation*}
  \HML(
  X)\ni \phi\Coloneqq \ltrue\mid \lfalse\mid x\mid \langle a\rangle
  \phi\mid[ a] \phi\mid \phi\land \phi\mid \phi\lor \phi\,,
\end{equation*}
for $a\in \Sigma$ and $x\in X$.

A \emph{declaration} is a mapping $\Delta: X\to \HML( X)$; we recall
the greatest fixed point semantics of declarations
from~\cite{DBLP:journals/tcs/Larsen90}.  For an LTS
$( S, s^0, \omust)$, an~\emph{assignment} is a mapping
$\sigma: X\to 2^S$.  The set of assignments forms a~complete lattice
with order $\sigma_1\sqsubseteq \sigma_2$ iff
$\sigma_1( x)\subseteq \sigma_2( x)$ for all $x\in X$ and least upper
bound
$\big(\bigsqcup_{ i\in I} \sigma_i\big)( x)= \bigcup_{ i\in I}
\sigma_i( x)$.

The semantics of a formula is a subset of $S$, given relative to an
assignment~$\sigma$, defined as follows: $\lsem \ltrue \sigma= S$, $\lsem
\lfalse \sigma= \emptyset$, $\lsem x \sigma= \sigma( x)$, $\lsem{ \phi\land
  \psi} \sigma= \lsem \phi\sigma\cap \lsem \psi \sigma$, $\lsem{
  \phi\lor \psi} \sigma= \lsem \phi\sigma\cup \lsem \psi \sigma$, and
\begin{align*}
  \lsem{\langle a\rangle \phi} \sigma &= \{ s\in S\mid \exists s\DMTSmust a
  s': s'\in \lsem \phi \sigma\}, \\
  \lsem{[ a] \phi} \sigma &= \{ s\in S\mid \forall s\DMTSmust a s': s'\in
  \lsem \phi \sigma\}.
\end{align*}
The semantics of a declaration $\Delta$ is then the assignment defined
by
\begin{equation*}
  \lsem \Delta= \bigsqcup\{ \sigma: X\to 2^S\mid \forall x\in X:
  \sigma( x)\subseteq \lsem{ \Delta( x)} \sigma\};
\end{equation*}
the greatest (post)fixed point of $\Delta$.

A \emph{$\nu$-calculus expression} is a structure
$\mcalN=( X, X^0, \Delta)$, with $X^0\subseteq X$ sets of variables and
$\Delta: X\to \HML( X)$ a declaration.  We say that an LTS
$\mcalI=( S, s^0, \omust)$ \emph{implements} (or models) the expression,
and write $\mcalI\models \mcalN$, if there is $x^0\in X^0$ such that
$s^0\in \lsem \Delta( x^0)$.  We write $\sem \mcalN$ for the set of
implementations (models) of a $\nu$-calculus expression $\mcalN$.  As for
DMTS, we write $\sem x= \sem{( X,\{ x\}, \Delta)}$ for $x\in X$, and
thorough refinement of expressions and variables is defined
accordingly.\footnote{Any $\nu$-calculus expression is thoroughly
  equivalent to one with precisely one initial state, \ie~$X^0$ a
  singleton, however this is not true for $\nu$-calculus expressions
  in normal form as defined below.  See also
  Section~\ref{se:initial}.}

We are now going to introduce a~\emph{normal form} for $\nu$-calculus
expressions. The purpose of this normal form is twofold. One is to allow
us to define modal refinement for $\nu$-calculus, an analogue to the
DMTS modal refinement that can be seen as a~sound approximation of the
logical implication, \cf Proposition~\ref{pr:mrvstr}.  The second
purpose is to facilitate a~simple translation between DMTS and
$\nu$-calculus expressions, see Section~\ref{se:dmtsnu}
below.

\begin{lemma}
  \label{le:hmlnormal}
  For any $\nu$-calculus expression $\mcalN_1=( X_1, X^0_1, \Delta_1)$,
  there exists another expression $\mcalN_2=( X_2, X^0_2, \Delta_2)$ with
  $\sem{ \mcalN_1}= \sem{ \mcalN_2}$ and such that for any $x\in X$,
  $\Delta_2( x)$ is of the form
  \begin{equation}
    \label{eq:hmlnormal}
    \Delta_2( x)= \bigland_{ i\in I}\Big( \biglor_{ j\in
      J_i} \langle a_{ ij}\rangle  x_{ ij}\Big)\land \bigland_{ a\in
      \Sigma}[ a] \Big( \biglor_{ j\in J_a} y_{ a, j}\Big)
  \end{equation}
  for finite (possibly empty) index sets $I$, $J_i$, $J_a$, for $i\in I$
  and $a\in \Sigma$, and all $x_{ ij}, y_{ a, j}\in X_2$.  Additionally,
  for all $i\in I$ and $j\in J_i$, there exists $j'\in J_{ a_{ ij}}$ for
  which $x_{ ij}\DMTStr y_{ a_{ ij}, j'}$.
  Also if at least one of the $J_i = \emptyset$ then $J_a = \emptyset$ for all
  $a$.
\end{lemma}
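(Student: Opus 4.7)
My approach is to construct $\mcalN_2$ from $\mcalN_1$ through a sequence of semantics-preserving rewritings that gradually shape each declaration into the required form, introducing auxiliary variables whose declarations are themselves normalized. Since the greatest-fixed-point semantics of declarations is insensitive to adding fresh variables together with a defining equation (as the fresh variable simply denotes the set given by its defining formula), it suffices to transform each $\Delta_1(x)$ while possibly enlarging $X$; the initial set $X_2^0 = X_1^0$ is preserved so that $\sem{\mcalN_2} = \sem{\mcalN_1}$.

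First, I would perform a \emph{modality normalization} phase. For each variable, I push modalities inward using $\langle a\rangle(\phi\lor\psi)\equiv\langle a\rangle\phi\lor\langle a\rangle\psi$ and $[a](\phi\land\psi)\equiv[a]\phi\land[a]\psi$; for subformulas of shape $\langle a\rangle(\phi\land\psi)$ or $[a](\phi\lor\psi)$ that block this, I introduce a fresh variable $z$ with $\Delta_2(z)=\phi\land\psi$ (respectively $\phi\lor\psi$) and replace the subformula by $\langle a\rangle z$ (resp.\ $[a]z$). After distributing $\land$ over $\lor$ to reach conjunctive normal form over diamond/box literals, I merge all $[a]$-conjuncts for the same $a$ via $[a]\phi\land[a]\psi\equiv[a](\phi\land\psi)$ and then iterate the diamond/box pushing until the clause for each $a$ becomes a single $[a](\bigvee_j y_{a,j})$. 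If no $[a]$-conjunct occurs for some $a\in\Sigma$, I add $[a]\top$ where $\top$ is a fresh variable with $\Delta_2(\top)=\ltrue$, whose own normal form has $I=\emptyset$ and each $J_a=\{*\}$ referencing $\top$; the case where some $J_i$ should be empty (making $\Delta_2(x)\equiv\lfalse$) is handled by forcing all $J_a=\emptyset$ as well, consistent with having no models.

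Next, I enforce the \emph{compatibility condition}. Suppose after the first phase $\Delta_2(x)$ has a disjunctive clause $\bigvee_{j\in J_i}\langle a_{ij}\rangle x_{ij}$ and, for the same action $a_{ij}$, a box clause $[a_{ij}](\bigvee_{j'\in J_{a_{ij}}} y_{a_{ij},j'})$. In any model, an $a_{ij}$-successor witnessing the diamond must also satisfy the box disjunction, so replacing $x_{ij}$ by the semantically equivalent $\bigvee_{j'\in J_{a_{ij}}}(x_{ij}\land y_{a_{ij},j'})$ preserves semantics. I then introduce fresh variables $z_{ij,j'}$ with $\Delta_2(z_{ij,j'})=x_{ij}\land y_{a_{ij},j'}$ and use $\langle a_{ij}\rangle\bigvee_{j'}z_{ij,j'}\equiv\bigvee_{j'}\langle a_{ij}\rangle z_{ij,j'}$ to rewrite the diamond clause into the normal-form shape. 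Each $z_{ij,j'}$ then thoroughly refines $y_{a_{ij},j'}$ by construction, securing the required witness $j'\in J_{a_{ij}}$.

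The main obstacle is ensuring termination and consistency: the new variables $z_{ij,j'}$ and any auxiliaries from the first phase must themselves be normalized, so I must argue the overall process converges. I would organize the transformation as a well-founded rewriting on a measure that combines modal nesting depth and the number of non-normalized clauses — each application of a rewriting rule strictly decreases this measure, and auxiliary variables introduced at a step inherit smaller measure because their defining formulas are strict subformulas or conjunctions of already-normalized formulas. A final bookkeeping check verifies the syntactic shape of~\eqref{eq:hmlnormal} and the compatibility clause hold simultaneously, and that $\sem{\mcalN_2}=\sem{\mcalN_1}$ follows because each rewriting step is semantics-preserving under the greatest-fixed-point interpretation.
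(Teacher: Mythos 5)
There is a genuine gap in your first phase. The target normal form only admits two kinds of conjuncts: disjunctions of diamonds, $\biglor_{j}\langle a_{ij}\rangle x_{ij}$, and, per action, a \emph{single} box over a disjunction of variables, $[a](\biglor_j y_{a,j})$. Your plan of distributing $\land$ over $\lor$ into a conjunctive normal form and then merging boxes via $[a]\phi\land[a]\psi\equiv[a](\phi\land\psi)$ gets stuck on any CNF clause that contains a box literal, e.g.\ $\Delta_1(x)=[a]y\lor[b]z$ or $\langle a\rangle y\lor[b]z$. Since $[a]y\lor[b]z\not\equiv[a](y\lor z)$ and a disjunction containing a box is not of either admissible shape, no amount of modality-pushing or fresh-variable introduction of the kind you describe resolves it: a fresh variable $w$ with $\Delta_2(w)=[a]y\lor[b]z$ just relocates the problem. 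The paper's proof goes the other way around: it invokes the Boudol--Larsen \emph{strong normal form}, a top-level \emph{disjunction} $\biglor_{i\in I}\bigl(\bigland_{j}\langle a_{ij}\rangle\phi_{ij}\land\bigland_a[a]\psi_{i,a}\bigr)$, and then \emph{splits} each variable $x$ into one variable $\tilde x^i$ per disjunct, replacing occurrences of $x$ under modalities by $\biglor_i\tilde x^i$ (which distributes through both $\langle a\rangle$ and $[a]$) and taking $X_2^0=\{\tilde x^i\mid x\in X_1^0\}$. This variable splitting, and in particular the passage to \emph{multiple initial variables}, is not optional: the paper shows in its section on initial states that normal-form expressions with a single initial variable are strictly less expressive, so your stipulation that $X_2^0=X_1^0$ cannot be maintained in general.

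The second phase of your argument, enforcing the compatibility condition by rewriting $\langle a_{ij}\rangle x_{ij}$ into $\biglor_{j'}\langle a_{ij}\rangle z_{ij,j'}$ with $\Delta_2(z_{ij,j'})=x_{ij}\land y_{a_{ij},j'}$, is sound: the equivalence $\langle a\rangle\phi\land[a]\psi\equiv\langle a\rangle(\phi\land\psi)\land[a]\psi$ holds, and $\sem{z_{ij,j'}}\subseteq\sem{y_{a_{ij},j'}}$ gives the required witness $j'$. The paper obtains this condition for free because the strong normal form of~\cite{DBLP:journals/tcs/BoudolL92} already carries $\phi_{ij}\DMTStr\psi_{i,a_{ij}}$. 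So your proposal would become correct if you replaced the CNF-based first phase by a DNF-based one together with the variable-splitting and multiple-initial-state mechanism; as written, it fails on declarations whose propositional structure places boxes under disjunctions.
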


Remark that this normal form includes a semantic check ($x_{ ij}\DMTStr
y_{ a_{ ij}, j'}$), so it is not entirely syntactic.

\begin{proof}
  It is shown in~\cite{DBLP:journals/tcs/BoudolL92} that any
  Hennessy-Milner formula is equivalent to one in so-called \emph{strong
    normal form}, \ie~of the form $\biglor_{ i\in I}( \bigland_{ j\in
    J_i} \langle a_{ ij}\rangle \phi_{ ij}\land \bigland_{ a\in \Sigma}[
  a] \psi_{ i, a})$ for HML formulas $\phi_{ ij}$, $\psi_{ i, a}$ which
  are also in strong normal form, and such that for all $i, j$,
  $\phi_{ ij}\DMTStr \psi_{ i, a_{ ij}}$.

  We can replace the $\phi_{ ij}$, $\psi_{ i, a}$ by (new) variables
  $x_{ ij}$, $y_{ i, a}$ and add declarations
  $\Delta_2( x_{ ij})= \phi_{ ij}$,
  $\Delta_2( y_{ i, a})= \psi_{ i, a}$ to arrive at an expression in
  which all formulae are of the form
  $\Delta_2( x)= \biglor_{ i\in I}( \bigland_{ j\in J_i} \langle a_{
    ij}\rangle x_{ ij}\land \bigland_{ a\in \Sigma}[ a] y_{ i, a})$
  and such that for all $i, j$, $x_{ ij}\DMTStr y_{ i, a_{ ij}}$.

  Now for each such formula, replace (recursively) $x$ by new
  variables $\{ \tilde x^i\mid i\in I\}$, similarly for $y$, and set
  $\Delta_2( \tilde x^i)= \bigland_{ j\in J_i}\langle a_{
    ij}\rangle\big( \biglor_k \tilde x_{ij}^k\big)\land \bigland_{
    a\in \Sigma}[ a]\big( \biglor_k \tilde y_{ i, a}^k\big)$.
  Using initial variables $X_2^0=\{ \tilde x^i\mid x\in X_1^0\}$, the
  so-constructed $\nu$-calculus expression is equivalent to the
  original one.  We know that for all $i, j$,
  $\bigvee_k \tilde x_{ ij}^k\DMTStr \bigvee_k \tilde y_{ i, a_{ ij}}^k$,
  hence for all $i, j, k$ there exists $k'$ such that
  $\tilde x_{ ij}^k\DMTStr \tilde y_{ i, a_{ ij}}^{ k'}$. We can thus rename
  variables and apply the distributivity of $\langle \cdot\rangle$ over
  $\bigvee$.

  Finally, if at least one of the $J_i = \emptyset$ then $\Delta_2$ is
  false and we can simply set $J_a = \emptyset$ for all $a$ without changing
  the semantics of $\Delta_2$.
  \qed
\end{proof}

As this is a type of \emph{conjunctive normal form}, it is clear that
translating a~$\nu$-calculus expression into normal form may incur an
exponential blow-up.

We introduce some notation for $\nu$-calculus expressions in normal form
which will make our life easier later.  Let $\mcalN=( X, X^0, \Delta)$ be
such an expression and $x\in X$, with $\Delta( x)= \bigland_{ i\in
  I}\big( \biglor_{ j\in J_i} \langle a_{ ij}\rangle x_{ ij}\big)\land
\bigland_{ a\in \Sigma}[ a] \big( \biglor_{ j\in J_a} y_{ a, j}\big)$ as
in the lemma.  Define $\Diamond( x)=\{\{( a_{ ij}, x_{ ij})\mid j\in
J_i\}\mid i\in I\}$ and, for each $a\in \Sigma$, $\Box^a( x)=\{ y_{ a,
  j}\mid j\in J_a\}$.  Note that now,
\begin{equation*}
  \Delta( x)= \bigland_{ N\in
    \Diamond(x)} \Big( \biglor_{( a, y)\in N} \langle a\rangle y\Big)
  \land \bigland_{ a\in \Sigma}[ a]\Big( \biglor_{ y\in \Box^a( x)}
  y\Big)\,.
\end{equation*}

\begin{definition}
  Let $\mcalN_1=( X_1, X^0_1, \Delta_1)$, $\mcalN_2=( X_2, X^0_2, \Delta_2)$
  be $\nu$-calculus expressions in normal form and $R\subseteq X_1\times
  X_2$.  The relation $R$ is a \emph{modal refinement} if it holds for
  all $( x_1, x_2)\in R$ that
  \begin{itemize}
  \item for all $a\in \Sigma$ and every $y_1\in
    \Box^a_1( x_1)$, there is $y_2\in \Box^a_2( x_2)$ for which $( y_1,
    y_2)\in R$, and
  \item for all $N_2\in \Diamond_2( x_2)$ there is $N_1\in \Diamond_1(
    x_1)$ such that for each $( a, y_1)\in N_1$, there exists $( a,
    y_2)\in N_2$ with $( y_1, y_2)\in R$.
  \end{itemize}
  We say that $\mcalN_1$ \emph{modally refines} $\mcalN_2$, denoted $\mcalN_1\mr
  \mcalN_2$, whenever there exists a~modal refinement $R$ such that for
  every $x^0_1\in X^0_1$ there exists $x^0_2\in X^0_2$ for which $(
  x^0_1, x^0_2)\in R$.
\end{definition}

We say that a $\nu$-calculus expression $( X, X^0, \Delta)$ in normal
form is an \emph{implementation} if $X^0=\{ x^0\}$ is a singleton,
$\Diamond( x)=\{\{( a, y)\}\mid y\in \Box^a( x), a\in \Sigma\}$ and
$\Box^a(x)= \emptyset$ for all $a\notin \Sigma$, for all $x\in X$.

We can translate an LTS $( S, s^0, \omust)$ to a $\nu$-calculus
expression $( S, S^0, \Delta)$ in normal form by setting $S^0=\{ s^0\}$
and $\Diamond( s)=\{\{( a, t)\}\mid s\DMTSmust a t\}$ and $\Box^a( s)=\{
t\mid s\DMTSmust a t\}$ for all $s\in S$, $a\in \Sigma$.  Like for DMTS,
this defines an embedding of LTS into the modal $\nu$-calculus whose
image are precisely the $\nu$-calculus implementations.

We will show below in Theorem~\ref{th:models=ref-nu} that for any LTS
$\mcalI$ and any $\nu$-calculus expression $\mcalN$ in normal form,
$\mcalI\models \mcalN$ iff $\mcalI\mr \mcalN$, hence the fixed-point semantics
of~\cite{DBLP:journals/tcs/Larsen90} and our refinement semantics agree.
As a corollary of this result, we get that modal refinement is a~sound
approximation to logical implication, \ie that $\mcalN_1 \mr \mcalN_2$ implies that
for all implementations $\mcalI$, $(\mcalI \models \mcalN_1) \Rightarrow (\mcalI \models
\mcalN_2)$.

\subsection{Nondeterministic Acceptance Automata}

\begin{definition}
  A~\emph{nondeterministic acceptance automaton} (\NAA) is a structure
  $\mcalA=( S, S^0, \Tran)$, with $S\supseteq S^0$ finite sets of states
  and initial states and $\Tran: S\to 2^{ 2^{ \Sigma\times S}}$ an
  assignment of \emph{transition constraints}.
\end{definition}

Acceptance automata were first introduced
in~\cite{report/irisa/Raclet07} (see
also~\cite{DBLP:journals/entcs/Raclet08}, where a slightly different
language-based approach is taken), based on the notion of acceptance
trees in~\cite{DBLP:journals/jacm/Hennessy85}; however, these are
\emph{deterministic}.  We extend the formalism to a nondeterministic
setting here.  The following notion of modal refinement was introduced
in~\cite{DBLP:conf/atva/BenesKLMS11}.

\begin{definition}\label{def:aaref}
  Let $\mcalA_1=( S_1, S^0_1, \Tran_1)$ and $\mcalA_2=( S_2, S^0_2, \Tran_2)$
  be \NAA.  A relation $R\subseteq S_1\times S_2$ is a \emph{modal
    refinement} if it holds for all $( s_1, s_2)\in R$ and all $M_1\in
  \Tran_1( s_1)$ that there exists $M_2\in \Tran_2( s_2)$ such that
  \begin{equation}
    \label{eq:aaref}
    \begin{aligned}
      & \forall( a, t_1)\in M_1: \exists( a, t_2)\in M_2:( t_1, t_2)\in
      R\,, \\
      & \forall( a, t_2)\in M_2: \exists( a, t_1)\in M_1:( t_1, t_2)\in
      R\,.
    \end{aligned}
  \end{equation}
  We say that $\mcalA_1$ \emph{modally refines} $\mcalA_2$, and write
  $\mcalA_1\mr \mcalA_2$, whenever there exists a modal refinement $R$ such
  that for all $s^0_1\in S^0_1$, there exists $s^0_2\in S^0_2$ for which
  $( s^0_1, s^0_2)\in R$.
\end{definition}

An \NAA\ is an \emph{implementation} if $S^0=\{ s^0\}$ is a singleton and,
for all $s\in S$, $\Tran( s)=\{ M\}$ is a singleton.  An LTS $( S, s^0,
\omust)$ can be translated to an \NAA\ by setting $S^0=\{ s^0\}$ and
$\Tran( s)=\{\{( a, t)\mid s\DMTSmust a t\}\}$.  This defines an
embedding of LTS into \NAA\ whose image are precisely the
\NAA\ implementations.  As for DMTS, we write $\sem \mcalA$ for the set
of implementations of an \NAA\ $\mcalA$, and through refinement and
equivalence are defined accordingly.

\subsection{Hybrid Modal Logic}

As our fourth specification formalism, we introduce a hybrid modal
logic, closely related to the Boolean modal transition systems
of~\cite{DBLP:conf/atva/BenesKLMS11} and hybrid in the sense
of~\cite{book/Prior68, DBLP:journals/igpl/Blackburn00}: it contains
nominals, and the semantics of a nominal is given as all sets which
contain the nominal.

For a finite set $X$ of nominals, let $\mcalL( X)$ be the set of formulae
generated by the abstract syntax
$\mcalL( X)\ni \phi\coloneqq \ltrue\mid \lfalse\mid \langle a\rangle
x\mid \neg \phi\mid \phi\land \phi$,
for $a\in \Sigma$ and $x\in X$.  The semantics of a formula is a set
of subsets of $\Sigma\times X$, given as follows:
$\lsem{ \ltrue}= 2^{ \Sigma\times X}$, $\lsem{ \lfalse}= \emptyset$,
$\lsem{ \neg \phi}= 2^{ \Sigma\times X}\setminus \lsem{ \phi}$,
$\lsem{ \langle a\rangle x}=\{ M\subseteq \Sigma\times X\mid( a, x)\in
M\}$,
and $\lsem{ \phi\land \psi}= \lsem{ \phi}\cap \lsem{ \psi}$.  We also
define disjunction
$\phi_1\lor \phi_2= \neg( \neg \phi_1\land \neg \phi_2)$.

An \emph{$\mcalL$-expression} is a structure $\mcalE=( X, X^0, \Phi)$
consisting of finite sets $X^0\subseteq X$ of variables and a mapping
$\Phi: X\to \mcalL( X)$.  Such an expression is an \emph{implementation} if
$\lsem{ \Phi( x)}=\{ M\}$ is a singleton for each $x\in X$.

\begin{definition}
  Let $\mcalE_1=( X_1, X^0_1, \Phi_1)$ and $\mcalE_2=( X_2, X^0_2, \Phi_2)$ be
  $\mcalL$-expressions.  A relation $R\subseteq X_1\times X_2$ is a
  \emph{modal refinement} if it holds for all $( x_1, x_2)\in R$ and all
  $M_1\in \lsem{ \Phi_1( x_1)}$ that there exists $M_2\in \lsem{ \Phi_2(
    x_2)}$ such that
  \begin{itemize}
  \item $\forall( a, y_1)\in M_1: \exists( a, y_2)\in M_2:( y_1, y_2)\in
    R$,
  \item $\forall( a, y_2)\in M_2: \exists( a, y_1)\in M_1:( y_1, y_2)\in
    R$.
  \end{itemize}
  We say that $\mcalE_1$ \emph{modally refines} $\mcalE_2$, denoted $\mcalE_1\mr
  \mcalE_2$, whenever there exists a~modal refinement $R$ such that for all
  $x^0_1\in X^0_1$, there exists $x^0_2\in X^0_2$ for which $( x^0_1,
  x^0_2)\in R$.
\end{definition}

We can translate an LTS $( S, s^0, \omust)$ to an $\mcalL$-expression
$( S, S^0, \Phi)$ by setting $S^0=\{ s^0\}$ and
$\Phi( s)= \bigland_{ s\DMTSmust a t} \langle a\rangle t\land \bigland_{
  s\not\DMTSmust b u} \neg \langle b\rangle u$.
This defines an embedding of LTS into $\mcalL$-expressions whose image
are precisely the $\mcalL$-implementations.  As for DMTS, we write
$\sem \mcalE$ for the set of implementations of an $\mcalL$-expression
$\mcalE$, and through refinement and equivalence are defined accordingly.

\begin{remark}
  As all our four specification formalisms have the same type of
  implementations, labeled transition systems, we can use thorough
  refinement and equivalence cross-formalism.  As an example example,
  for a given DMTS $\mcalD$ and an \NAA\ $\mcalA$, the expression
  $\mcalD\DMTStr \mcalA$ is valid.  We will use these types of thorough
  refinement and equivalence in many places throughout the paper.
\end{remark}

\section{Structural Equivalence}
\label{sec:struct}

\begin{figure}
  \hfill
  \xymatrix@R=12ex@C=4em{%
    & \text{DMTS} \ar@<.4ex>[r]^\da \ar@<.4ex>[d]^\dn
    & \text{\NAA} \ar@<.4ex>[l]^\ad \ar@<.4ex>[d]^\al
    \\ \text{$\nu$-calculus} \ar@<.4ex>[r]
    & \text{$\nu$-normal form} \ar@<.4ex>[l]
    \ar@<.4ex>[u]^\nd
    & \text{hybrid modal logic} \ar@<.4ex>[u]^\la
  }\hfill\mbox{}
  \caption{%
    \label{fi:translations}
    Six translations between specification formalisms}
\end{figure}
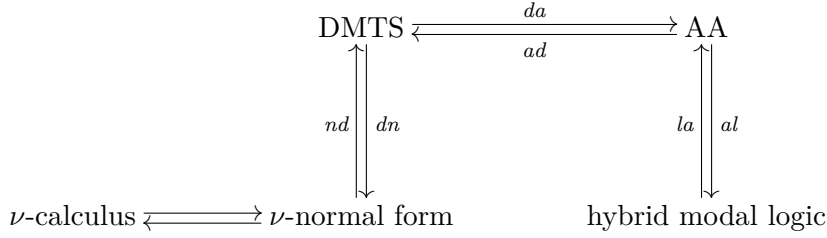

We proceed to show that our four specification formalisms are
structurally equivalent.  To this end, we shall expose six translations
between them, see Figure~\ref{fi:translations}.  Section~\ref{se:dmtsnu}
is concerned with $\dn$ and $\nd$, Section~\ref{se:aahyb} with $\al$ and
$\la$, and Section~\ref{se:dmtsaa} with $\da$ and $\ad$.  We show in
Theorems~\ref{th:dmtsnu}, \ref{th:aahyb} and~\ref{th:dmtsaa} that all
six translations preserve and reflect modal refinement.

\subsection{DMTS vs.\ the Modal $\nu$-Calculus}
\label{se:dmtsnu}

Our first two translations are rather straight-forward.  For a DMTS
$\mcalD=( S, S^0, \omay, \omust)$ and all $s\in S$, define
$\Diamond(s)=\{ N \mid s \DMTSmust{} N\}$ and, for each $a\in \Sigma$,
$\Box^a(s)=\{ t \mid s \DMTSmay{a} t\}$.  Then, let
\begin{equation}
  \label{eq:dmtstonu}
  \Delta( s)= \bigland_{N \in \Diamond(s)}\Big( \biglor_{( a, t)\in N}
  \langle a\rangle t\Big)\land \bigland_{ a\in \Sigma}[ a]\Big(
  \biglor_{t \in \Box^a(s)} t\Big)
\end{equation}
and define the (normal-form) $\nu$-calculus expression $\dn( \mcalD)=( S,
S^0, \Delta)$.

Note how the formula precisely expresses that we demand at least one of
every choice of disjunctive must-transitions (first part) and permit all
may-transitions (second part); this is similar to the
\emph{characteristic formulae} of~\cite{DBLP:conf/avmfss/Larsen89}.

Conversely, for a $\nu$-calculus expression $\mcalN=( X, X^0, \Delta)$ in
normal form, let
\begin{gather*}
  \omay=\{( x, a, y)\in X\times \Sigma\times X\mid y\in \Box^a( x)\}, \\
  \omust=\{( x, N)\mid x\in X, N\in \Diamond( x)\}.
\end{gather*}
and define the DMTS $\nd( \mcalN)=( X, X^0, \omay, \omust)$.  Note how
this is a simple syntactic translation from diamonds to disjunctive
must-transitions and from boxes to may-transitions.  Also, the two
translations are inverse to each other: $\dn( \nd( \mcalN))= \mcalN$ and
$\nd( \dn( \mcalD))= \mcalD$.

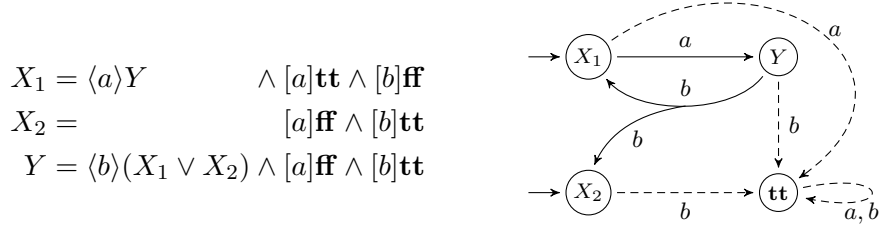
\begin{figure}
  \centering
  \begin{minipage}{.4\linewidth}
    \begin{alignat*}{2}
      X_1 &= \langle a\rangle Y &\vphantom{x} \land [a] \ltrue \land [b]
      \lfalse \\
      X_2 &= &[a]\lfalse \land [b]\ltrue\\
      Y   &= \langle b \rangle (X_1 \lor X_2) &\vphantom{x} \land [a]
      \lfalse \land [b]\ltrue
    \end{alignat*}
  \end{minipage}
  \qquad
  \begin{tikzpicture}[baseline=(bas),font=\footnotesize,->,>=stealth',
    yscale=1.8,xscale=2.5, state/.style={shape=circle,draw,font=\scriptsize,
      inner sep=.5mm,outer sep=0.8mm,minimum size=.5cm,initial text=,
      initial distance=1ex}]
    \path[use as bounding box] (-1,-1.2) rectangle (1,.5);
    \coordinate (bas) at (0,-0.4);
    \node[state, initial left] (s0) at (-.5,0) {$X_1$};
    \node[state] (s1) at (.5,-1) {$\ltrue$};
    \node[state, initial left] (s2) at (-.5,-1) {$X_2$};
    \node[state] (s3) at (.5,0) {$Y$};
    \coordinate (t1) at (-.5,-.3);
    \coordinate (t2) at (0,-.6);
    \path (s0) edge[densely dashed,bend left,looseness=2,out=90,in=90]
    node[right] {$a$} (s1);
    \path (s0) edge node[above] {$a$} (s3);
    \path (s3) edge [bend left=60]
    node[inner sep=0,outer sep=0,minimum size=0,pos=0.5,name=XX] {}
    node[above] {$b$} (s0);
    \path (XX) edge[bend right] node[below,pos=0.4] {$b$} (s2);
    \path (s2) edge[densely dashed] node[below] {$b$} (s1);
    \path (s3) edge[densely dashed] node[right] {$b$} (s1);
    \path (s1) edge[loop right,densely dashed] node[below]
    {$a, b$\,\,\,\,\,} (s1);
  \end{tikzpicture}
  \caption{%
    \label{fig:nu-to-dmts}
    $\nu$-calculus expression in normal form and its DMTS translation,
    \cf~Example~\ref{ex:nu-to-dmts}.  The state corresponding to
    $\lfalse$ is inconsistent and not shown}
\end{figure}

\begin{example}\label{ex:nu-to-dmts}
  Consider the $\nu$-calculus formula
  \begin{equation*}
    X = \big(\langle a\rangle (\langle
    b\rangle X \land [a] \lfalse ) \land [b] \lfalse \big) \lor [a]\lfalse\,.
  \end{equation*}
  Converting the formula into the normal form of
  Lemma~\ref{le:hmlnormal} yields the result given in
  Figure~\ref{fig:nu-to-dmts} (left), where both $X_1$ and $X_2$ are
  initial variables. The resulting DMTS is illustrated in
  Figure~\ref{fig:nu-to-dmts} (right). \qed
\end{example}

The following theorem follows easily:

\begin{theorem}
  \label{th:dmtsnu}
  For all DMTS $\mcalD_1$, $\mcalD_2$, $\mcalD_1\mr \mcalD_2$ iff
  $\dn( \mcalD_1)\mr \dn( \mcalD_2)$.  For all $\nu$-calculus
  expressions $\mcalN_1$, $\mcalN_2$ in normal form,
  $\mcalN_1\mr \mcalN_2$ iff $\nd( \mcalN_1)\mr \nd(
  \mcalN_2)$. \noproof
\end{theorem}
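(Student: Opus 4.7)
The plan is to observe that the two notions of modal refinement coincide essentially by construction, so that the very same binary relation $R$ witnesses refinement on the two sides. Concretely, I would unfold the definition of $\dn(\mcalD)$ to see that for $\mcalD = (S, S^0, \omay, \omust)$, the sets $\Box^a(s)$ and $\Diamond(s)$ in $\dn(\mcalD)$ are exactly $\{t \mid s \DMTSmay{a} t\}$ and $\{N \mid s \DMTSmust{} N\}$. Hence the clause ``for every $y_1 \in \Box^a_1(x_1)$ there is $y_2 \in \Box^a_2(x_2)$ with $(y_1,y_2)\in R$'' in the $\nu$-calculus refinement literally reads ``for every $s_1 \DMTSmay{a}_1 t_1$ there is $s_2 \DMTSmay{a}_2 t_2$ with $(t_1,t_2)\in R$'', which is the first clause of DMTS refinement; similarly, the second clause involving $\Diamond_2(x_2)$ and $\Diamond_1(x_1)$ matches the disjunctive-must clause for DMTS verbatim. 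The same applies to the initial-state condition, which is common to the two definitions.

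From this syntactic alignment I would conclude, for the first statement: a relation $R \subseteq S_1 \times S_2$ is an initialised modal refinement from $\mcalD_1$ to $\mcalD_2$ if and only if it is an initialised modal refinement from $\dn(\mcalD_1)$ to $\dn(\mcalD_2)$. Both implications of the ``iff'' then follow immediately by picking the witnessing relation on one side and using it on the other.

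For the second statement, I would appeal to the observation already recorded in the paragraph preceding the theorem, namely that $\dn \circ \nd$ and $\nd \circ \dn$ are the identity on normal-form $\nu$-calculus expressions and on DMTS, respectively. Applying the first statement to $\mcalD_i = \nd(\mcalN_i)$ then gives $\nd(\mcalN_1) \mr \nd(\mcalN_2)$ iff $\dn(\nd(\mcalN_1)) \mr \dn(\nd(\mcalN_2))$, and the right-hand side simplifies to $\mcalN_1 \mr \mcalN_2$.

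There is no real obstacle here: the theorem is essentially a bookkeeping check that the translations were designed so that modal refinement on DMTS and modal refinement on normal-form $\nu$-calculus are the \emph{same} relational condition after renaming $\omay$ to $\Box$ and $\omust$ to $\Diamond$. The only subtlety worth stating explicitly in the write-up is that the $\nu$-calculus refinement is defined only for expressions \emph{in normal form}, which is why the translation $\nd$ (and thus the second half of the theorem) requires normal form as an input hypothesis; since $\dn(\mcalD)$ is always in normal form by construction of (\ref{eq:dmtstonu}), the first half needs no such proviso.
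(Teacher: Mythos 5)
Your proposal is correct and matches the paper's treatment: the paper marks this theorem as following easily from the definitions (it gives no written proof), precisely because, as you observe, $\dn$ and $\nd$ make the two refinement conditions literally identical under the renaming $\omay\leftrightarrow\Box$, $\omust\leftrightarrow\Diamond$, and the two translations are mutually inverse. Your write-up simply makes this bookkeeping explicit, which is fine.
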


As a consequence, we can now show that the fixed-point semantics and
our refinement semantics for the modal $\nu$-calculus agree:

\begin{theorem}
  \label{th:models=ref-nu}
  For any LTS $\mcalI$ and any $\nu$-calculus expression $\mcalN$ in
  normal form, $\mcalI\models \mcalN$ iff $\mcalI\mr \mcalN$.
\end{theorem}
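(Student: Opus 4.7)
The plan is to prove both directions directly from the greatest fixed point definition of $\lsem{\Delta}$, using the fact that $\mcalN$ is in the normal form of Lemma~\ref{le:hmlnormal}. The core observation is that for $\mcalN = (X, X^0, \Delta)$ in normal form, the formula $\Delta(x)$ unfolds as
\begin{equation*}
  \bigland_{N \in \Diamond(x)}\Big(\biglor_{(a,y) \in N} \langle a\rangle y\Big) \land \bigland_{a \in \Sigma}[a]\Big(\biglor_{y \in \Box^a(x)} y\Big),
\end{equation*}
so its one-step semantic unfolding matches exactly the two clauses defining modal refinement for $\nu$-expressions. Viewing the LTS $\mcalI = (S, s^0, \omust)$ as a $\nu$-expression in normal form via the embedding, we have $\Diamond_\mcalI(s) = \{\{(a,t)\} \mid s \DMTSmust a t\}$ and $\Box^a_\mcalI(s) = \{t \mid s \DMTSmust a t\}$.

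For the direction $\mcalI \mr \mcalN \Rightarrow \mcalI \models \mcalN$, I would take a modal refinement $R \subseteq S \times X$ witnessing $\mcalI \mr \mcalN$ and define the assignment $\sigma : X \to 2^S$ by $\sigma(x) = \{s \in S \mid (s,x) \in R\}$. The goal is to show $\sigma$ is a post-fixed point of $\Delta$, \ie~$\sigma(x) \subseteq \lsem{\Delta(x)}\sigma$ for every $x \in X$. Fix $(s,x) \in R$. For each $N \in \Diamond(x)$, the modal refinement property yields some $N_1 \in \Diamond_\mcalI(s)$, necessarily a singleton $\{(a,t)\}$ with $s \DMTSmust a t$, matched by some $(a,y) \in N$ with $(t,y) \in R$; thus $t \in \sigma(y)$ and $s \in \lsem{\biglor_{(a,y) \in N} \langle a\rangle y}\sigma$. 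For each $a \in \Sigma$ and each $s \DMTSmust a t$ (equivalently $t \in \Box^a_\mcalI(s)$), modal refinement supplies some $y \in \Box^a(x)$ with $(t,y) \in R$, so $t \in \sigma(y)$, giving $s \in \lsem{[a]\biglor_{y \in \Box^a(x)} y}\sigma$. Conjoining the two parts, $s \in \lsem{\Delta(x)}\sigma$. Since $\sigma$ is a post-fixed point and $\lsem{\Delta}$ is the greatest such, $\sigma(x) \subseteq \lsem{\Delta}(x)$ for all $x$; picking $x^0 \in X^0$ with $(s^0, x^0) \in R$ gives $s^0 \in \lsem{\Delta}(x^0)$, \ie~$\mcalI \models \mcalN$.

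For the converse $\mcalI \models \mcalN \Rightarrow \mcalI \mr \mcalN$, I would pick $x^0 \in X^0$ with $s^0 \in \lsem{\Delta}(x^0)$ and define $R = \{(s,x) \in S \times X \mid s \in \lsem{\Delta}(x)\}$, so that $(s^0, x^0) \in R$. To see $R$ is a modal refinement, fix $(s,x) \in R$ and use the fixed-point equality $\lsem{\Delta}(x) = \lsem{\Delta(x)}\lsem{\Delta}$ to unfold the conjunction above. The $\Box$-conjuncts give, for every $s \DMTSmust a t$, some $y \in \Box^a(x)$ with $t \in \lsem{\Delta}(y)$, \ie~$(t,y) \in R$, which is the $\Box$-clause of modal refinement. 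The $\Diamond$-conjuncts give, for every $N \in \Diamond(x)$, some $(a,y) \in N$ and some $s \DMTSmust a t$ with $t \in \lsem{\Delta}(y)$; the required $N_1 \in \Diamond_\mcalI(s)$ is then simply the singleton $\{(a,t)\}$, whose unique pair is matched by $(a,y) \in N$ with $(t,y) \in R$.

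The main obstacle is minor bookkeeping: one must be careful that the normal form really produces the exact syntactic shape above (so that the one-step semantic unfolding factors cleanly into a $\Diamond$-part and a $\Box$-part), and that the implementation shape of $\mcalI$ forces $\Diamond_\mcalI(s)$ to consist only of singletons, which is precisely what makes the $\Diamond$-clause of modal refinement equivalent to ``there exists $(a,y) \in N$ and $s \DMTSmust a t$ with $(t,y) \in R$''. Once these syntactic correspondences are aligned, both directions reduce to the defining post-fixed-point/greatest-post-fixed-point property of $\lsem{\Delta}$, and no induction or further machinery is required.
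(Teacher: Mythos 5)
Your proof is correct and is essentially the paper's own argument: both directions rest on exhibiting $\sigma(x)=\{s\mid(s,x)\in R\}$ as a post-fixed point of $\Delta$, respectively showing that $R=\{(s,x)\mid s\in\lsem\Delta(x)\}$ is a modal refinement. The only cosmetic difference is that the paper first passes to the DMTS side via the syntactic translations $\nd$ and $\dn$ and runs the same coinductive argument there, whereas you work directly with the $\Diamond/\Box^a$ structure of the normal form; since those translations are purely notational, the two arguments coincide.
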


\begin{proof}
  We show that $\mcalI\mr \mcalD$ iff $\mcalI\models \dn( \mcalD)$ for any DMTS
  $\mcalD$; the claim then follows because $\mcalI\mr \mcalN$ iff $\mcalI\mr \nd(
  \mcalN)$ iff $\mcalI\models \dn( \nd( \mcalN))= \mcalN$.

  Write $\mcalI=( I, i^0, \omust_I)$, $\mcalD=( S, S^0, \omay, \omust)$, and
  $\dn( \mcalD)=( S, S^0, \Delta)$.

  We start with the only-if part. The proof is done by coinduction.  We
  define the assignment $\sigma : S \to 2^I$ as follows: $\sigma(t) =
  \{j \in I \mid j \mr t\}$.  We need to show that for every $s \in S$,
  $\sigma(s) \subseteq \lsem{ \Delta(s)} \sigma$.  Let $i \in
  \sigma(s)$.

  As $i \mr s$, we know that (1) $\forall s\DMTSmust{} N: \exists i\DMTSmust a_I
  j,( a, t)\in N: j\mr t$ and (2) $\forall i\DMTSmust a_I j: \exists s\DMTSmay a
  t: j\mr t$.

  Due to (1), we see that for all $N\in \Diamond( s)$, there is $i\DMTSmust
  a_I j$ and $( a, t)\in N$ such that $j\in \sigma( t)$ and $i \in
  \lsem{ \langle a\rangle t} \sigma$.  Hence $i\in \bigland_{ N\in
    \Diamond( s)} \biglor_{( a, t)\in N} \lsem{ \langle a\rangle t)}
  \sigma= \lsem{ \bigland_{ N\in \Diamond( s)} \biglor_{( a, t)\in N}
    \langle a\rangle t)} \sigma$.

  Due to (2), it holds that for every $a\in \Sigma$ and every
  $i\DMTSmust a_I j$, there is $t\in \Box^a( s)$ such that
  $j\in \sigma( t)\subseteq \lsem{ \biglor_{t \in \Box^a( s)}
    t}\sigma$.
  Hence
  $i\in \bigland_{ a\in \Sigma} \lsem{[ a]( \biglor_{t \in \Box^a( s)}
    t)} \sigma= \lsem{ \bigland_{ a\in \Sigma}[ a]( \biglor_{t \in
      \Box^a( s)} t)} \sigma$.
  Altogether, we have shown that $i\in \lsem{ \Delta( s)} \sigma$.

  Clearly, there is $s^0 \in S^0$ such that $i^0 \in
  \sigma(s^0)$. Therefore, $\mcalI \models \dn( \mcalD)$.

  For the other direction, define a~relation $R\subseteq I\times S$ by
  $R=\{( j, t)\mid j\models t\}$.  We show that $R$ satisfies the
  conditions of modal refinement.

  Let $( i, s)\in R$. As $i\models s$, we know that (1) $\forall N\in
  \Diamond( s): \exists( a, t)\in N: i\models \langle a\rangle t$ and
  (2) $\forall a\in \Sigma: i\models[ a]( \biglor_{ t\in \Box^a( s)}
  t)$.

  By (1), we know that for all $s\DMTSmust{} N$, there is $( a, t)\in N$
  and $i\DMTSmust a_I j$ such that $j\models t$.  By (2), it holds that for
  all $i\DMTSmust a_I j$, there is $s\DMTSmay a t$ so that $j\models t$.  We
  have shown that $i\mr s$.

  Clearly, there is $s^0\in S^0$ for which $( i^0, s^0)\in R$, hence
  $\mcalI\mr \mcalD$. \qed
\end{proof}

\subsection{\NAA\ vs.\ Hybrid Modal Logic}
\label{se:aahyb}

Also the translations between \NAA\ and our hybrid modal logic are
straight-forward.  For an \NAA\ $\mcalA=( S, S^0, \Tran)$ and all $s\in S$,
let
\begin{equation*}
  \Phi( s)= \biglor_{ M\in \Tran( s)} \Big( \bigland_{( a, t)\in M}
  \langle a\rangle t\land \bigland_{( b, u)\notin M} \neg \langle
  b\rangle u\Big)
\end{equation*}
and define the $\mcalL$-expression $\al( \mcalA)=( S, S^0, \Phi)$.

For an $\mcalL$-expression $\mcalE=( X, X^0, \Phi)$ and all $x\in X$, let
$\Tran( x)= \lsem{ \Phi( x)}$ and define the \NAA\ $\la( \mcalE)=( X, X^0,
\Tran)$.

\begin{theorem}
  \label{th:aahyb}
  For all \NAA\ $\mcalA_1$, $\mcalA_2$, $\mcalA_1\mr \mcalA_2$ iff
  $\al( \mcalA_1)\mr \al( \mcalA_2)$.  For all $\mcalL$-expressions
  $\mcalE_1$, $\mcalE_2$, $\mcalE_1\mr \mcalE_2$ iff
  $\la( \mcalE_1)\mr \la( \mcalE_2)$.
\end{theorem}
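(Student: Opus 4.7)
The plan is to observe that both translations essentially preserve the underlying transition constraints, after which the two refinement definitions (Definition~\ref{def:aaref} for NAA and the analogous one for $\mcalL$-expressions) coincide literally. The key technical lemma is a semantic identity for the $\Phi$ built by $\al$.

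First I would prove that for any NAA $\mcalA = (S, S^0, \Tran)$ and any $s \in S$, writing $\al(\mcalA) = (S, S^0, \Phi)$, we have $\lsem{\Phi(s)} = \Tran(s)$. Fix $M' \subseteq \Sigma \times S$. Unfolding the semantics of hybrid modal logic:
\begin{equation*}
  M' \in \lsem{\Phi(s)} \iff \exists M \in \Tran(s): M' \in \Bigl\lsem{\bigland_{(a,t)\in M} \langle a\rangle t \land \bigland_{(b,u)\notin M} \neg\langle b\rangle u\Bigr\rsem}.
\end{equation*}
The conjunction of diamonds $\bigland_{(a,t)\in M} \langle a\rangle t$ forces $M \subseteq M'$, while the conjunction of negated diamonds $\bigland_{(b,u)\notin M} \neg\langle b\rangle u$ forces $M' \subseteq M$. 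Together they pin $M' = M$, so $M' \in \lsem{\Phi(s)}$ iff $M' \in \Tran(s)$. In particular $\la(\al(\mcalA)) = \mcalA$ on the nose, and for the other direction $\al(\la(\mcalE))$ has transition formulae that are semantically (though not syntactically) equivalent to those of $\mcalE$, in the sense that $\lsem{\Phi'(x)} = \lsem{\Phi(x)}$ for all $x$.

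Next I would observe that once we have $\Tran_i(s_i) = \lsem{\Phi_i(s_i)}$, the quantified conditions in Definition~\ref{def:aaref} and in the definition of $\mcalL$-expression refinement are word-for-word the same: both require that for every $M_1$ in the transition constraint at $s_1$ (resp.~$x_1$), there is some $M_2$ in the transition constraint at $s_2$ (resp.~$x_2$) matched by the two-sided condition (\ref{eq:aaref}). Hence a relation $R \subseteq S_1 \times S_2$ is an NAA modal refinement from $\mcalA_1$ to $\mcalA_2$ if and only if the same $R$ is an $\mcalL$-expression modal refinement from $\al(\mcalA_1)$ to $\al(\mcalA_2)$; the initialisation condition on $S^0_1, S^0_2$ is literally preserved since $\al$ keeps the state set and initial states. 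This gives $\mcalA_1 \mr \mcalA_2 \iff \al(\mcalA_1) \mr \al(\mcalA_2)$.

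For the second half of the theorem, given $\mcalE_1, \mcalE_2$, I apply the same argument via $\la$: the transition constraint of $\la(\mcalE_i)$ at $x_i$ is by definition $\lsem{\Phi_i(x_i)}$, so again the refinement conditions coincide and $\mcalE_1 \mr \mcalE_2 \iff \la(\mcalE_1) \mr \la(\mcalE_2)$. No obstacle really stands out here beyond the careful bookkeeping of the semantic identity; the theorem is essentially a case of the two formalisms being notational variants, with $\al$ and $\la$ acting as mutually inverse (up to semantic equivalence of transition formulae) identities on state spaces.
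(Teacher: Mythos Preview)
Your proposal is correct and follows essentially the same approach as the paper: both prove the key semantic identity $\lsem{\Phi(s)} = \Tran(s)$ for the $\al$ translation (the paper via an explicit chain of set-theoretic equalities, you via the same observation that the diamond conjuncts force $M \subseteq M'$ and the negated-diamond conjuncts force $M' \subseteq M$), note that the identity holds for $\la$ by definition, and then observe that the two refinement definitions become literally identical once the transition constraints agree.
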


\begin{proof}
  We show that for any \NAA\ $( S, S^0, \Tran)$ and any $\mcalL$-expression
  $( S, S^0, \Phi)$, $\Tran( s)= \lsem{ \Phi( s)}$ for every $s\in S$,
  for both translations.  For the second one, $\la$, this is clear by
  definition, and for the first,
  \begin{align*}
    \lsem{ \Phi( s)} &= \lsem{ \biglor_{ M\in \Tran( s)} \big(
      \bigland_{( a, t)\in M} \langle a\rangle t\land \bigland_{( b,
        u)\notin M} \neg \langle b\rangle u\big)} \\
    &= \bigcup_{ M\in \Tran( s)} \big( \bigcap_{( a, t)\in M} \{ M'\mid(
    a, t)\in M'\} \cap \bigcap_{( b, u)\notin M} \{ M'\mid( b, u)\notin
    M'\}\big) \\
    &= \bigcup_{ M\in \Tran( s)} \big( \{ M'\mid \forall( a, t)\in M:(
    a, t)\in M'\} \\[-3ex]
    & \hspace*{14em} \cap\{ M'\mid \forall( b, u)\notin M:( b, u)\notin
    M'\}\big) \\[1ex]
    &= \bigcup_{ M\in \Tran( s)} \big( \{ M'\mid M\subseteq M'\}\cap\{
    M'\mid M'\subseteq M\}\big) \\
    &= \bigcup_{ M\in \Tran( s)}\{ M\}= \Tran( s)
  \end{align*}
  as was to be shown. \qed
\end{proof}

\subsection{DMTS vs.\ \NAA}
\label{se:dmtsaa}

The translations between DMTS and \NAA\ are somewhat more intricate.
For a DMTS $\mcalD=( S, S^0, \omay, \omust)$ and all $s\in S$, let
\begin{equation*}
  \Tran(s)=\{ M\subseteq \Sigma\times S\mid \forall (a,t)\in M: s\DMTSmay{a}
  t, \forall s\DMTSmust{} N: N\cap M\ne \emptyset\}
\end{equation*}
and define the \NAA\ $\da( \mcalD)=( S, S^0, \Tran)$.

For an \NAA\ $\mcalA=( S, S^0, \Tran)$, define the DMTS $\ad( \mcalA)=( D, D^0,
\omay, \omust)$ as follows:
\begin{align*}
  D &= \{ M\in \Tran( s)\mid s\in S\} \cup \{ \lightning \}\\
  D^0 &= \{ M^0\in \Tran( s^0)\mid s^0\in S^0\} \cup \{ \lightning \mid
	\text{if } \exists s^0 \in S^0 : \Tran(s^0) = \emptyset \}\\
  \omust &= \big\{\big( M,\{( a, M')\mid M'\in \Tran(
  t)\}\big)\bigmid( a, t)\in M, \Tran(t) \ne \emptyset\big\} \cup \phantom{x}\\
 &\quad\ \big\{\big(M, \{(a, \lightning)\}\big) \mid (a, t) \in M,
   \Tran(t) = \emptyset \big\} \cup \big\{\big(\lightning, \emptyset\big)\big\}\\
  \omay &= \{( M, a, M')\mid \exists M\DMTSmust{} N: ( a, M')\in N\}
\end{align*}
Note that the state spaces of $\mcalA$ and $\ad( \mcalA)$ are not the same;
the one of $\ad( \mcalA)$ may be exponentially larger.

\begin{theorem}
  \label{th:dmtsaa}
  For all DMTS $\mcalD_1$, $\mcalD_2$, $\mcalD_1\mr \mcalD_2$ iff
  $\da( \mcalD_1)\mr \da( \mcalD_2)$.  For all \NAA\ $\mcalA_1$,
  $\mcalA_2$, $\mcalA_1\mr \mcalA_2$ iff
  $\ad( \mcalA_1)\mr \ad( \mcalA_2)$.
\end{theorem}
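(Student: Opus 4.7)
For the $\da$ half (DMTS to NAA, same state space), my plan is to test the same relation $R\subseteq S_1\times S_2$ for both notions of refinement. In the forward direction, given a DMTS refinement $R$, a pair $(s_1, s_2)\in R$, and $M_1\in \Tran_1(s_1)$, I would build a matching $M_2\in \Tran_2(s_2)$ in two stages: first, for every $s_2\DMTSmust{}_2 N_2$, the DMTS refinement yields $s_1\DMTSmust{}_1 N_1$ with a matching condition; since $M_1\cap N_1\ne \emptyset$ by membership in $\Tran_1(s_1)$, pick $(a, t_1)$ in this intersection and add its DMTS match $(a, t_2)\in N_2$ to $M_2$, securing $M_2\cap N_2\ne \emptyset$; second, for each remaining $(a, t_1)\in M_1$, include an arbitrary may-match $(a, t_2)$ supplied by the DMTS may-clause. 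Both halves of~\eqref{eq:aaref} and the property $M_2\in \Tran_2(s_2)$ then follow by construction.

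In the reverse direction for $\da$, starting from an NAA refinement $R$, I would check the two DMTS clauses. For the may-clause, given $s_1\DMTSmay{a}_1 t_1$ with $s_1$ locally consistent, construct $M_1\in \Tran_1(s_1)$ containing $(a, t_1)$ by adjoining one element from each $s_1\DMTSmust{}_1 N$ (each such $N$ is nonempty by local consistency); NAA matching then delivers $(a, t_2)\in M_2$ with $(t_1, t_2)\in R$ and hence $s_2\DMTSmay{a}_2 t_2$. For the must-clause, given $s_2\DMTSmust{}_2 N_2$, I would argue by contradiction: if no $N_1$ matches $N_2$, pick for every $s_1\DMTSmust{}_1 N_1$ a ``bad'' witness $b_{N_1}\in N_1$ lacking a match in $N_2$, assemble $M_1=\{b_{N_1}\}\in \Tran_1(s_1)$, and use the NAA condition together with the reverse clause of~\eqref{eq:aaref} to extract some $(a, t_1)\in M_1$ matched in $N_2$, contradicting the choice of $b_{N_1}$. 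Inconsistent states with $s_1\DMTSmust{}_1 \emptyset$ are treated separately since they vacuously refine anything.

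For the $\ad$ half the state spaces differ, so relations must be transported. In the forward direction I would move an NAA refinement $R\subseteq S_1\times S_2$ to $R'\subseteq D_1\times D_2$ by declaring $(M_1, M_2)\in R'$ whenever some $(s_1, s_2)\in R$ has $M_1\in \Tran_1(s_1)$, $M_2\in \Tran_2(s_2)$, and $M_1, M_2$ satisfy~\eqref{eq:aaref}; additionally pair $\lightning$ with every $d_2\in D_2$, since $\lightning$'s sole must-transition $\lightning\DMTSmust{}\emptyset$ trivialises both DMTS clauses. The DMTS may-clause on $(M_1, M_2)\in R'$ then follows from~\eqref{eq:aaref}'s forward half and the NAA refinement at the target states (noting that $\Tran_2(t_2)=\emptyset$ exactly forces a may-transition to $\lightning$), while the must-clause uses~\eqref{eq:aaref}'s reverse half together with the observation that each must-transition from $M_2$ in $\ad(\mcalA_2)$ is generated by some $(a, t_2)\in M_2$ and can be matched by the must-transition from $M_1$ rooted at the corresponding $(a, t_1)\in M_1$ supplied by~\eqref{eq:aaref}.

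The reverse direction for $\ad$ is the step I expect to be the main obstacle. Given a DMTS refinement $R'\subseteq D_1\times D_2$, I would define $R\subseteq S_1\times S_2$ by declaring $(s_1, s_2)\in R$ whenever every $M_1\in \Tran_1(s_1)$ admits some $M_2\in \Tran_2(s_2)$ with $(M_1, M_2)\in R'$, and then verify~\eqref{eq:aaref} for such a pair. The reverse half of~\eqref{eq:aaref} comes cleanly from the DMTS must-check: every must-transition from $M_2$ in $\ad(\mcalA_2)$ has the form $\{(a, M^*_2)\mid M^*_2\in \Tran_2(t_2)\}$ for some $(a, t_2)\in M_2$, and the matching must-transition from $M_1$ identifies $(a, t_1)\in M_1$ witnessing $(t_1, t_2)\in R$ by the very definition of $R$. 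Pinning down the forward half — for each $(a, t_1)\in M_1$, finding a \emph{single} $(a, t_2)\in M_2$ such that every $M^*_1\in \Tran_1(t_1)$ is matched inside $\Tran_2(t_2)$ — is delicate, since the DMTS may-check on its own only produces a target $t_2$ that may vary with the choice of $M^*_1$; overcoming this requires combining the may-check with the rigidity enforced by the must-check (which prevents $M_2$ from containing ``scattered'' targets) and careful handling of the $\lightning$ state, and is where I expect most of the technical work of the proof to concentrate.
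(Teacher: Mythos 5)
Your treatment of three of the four implications coincides with the paper's proof. For the $\da$ direction you use the same relation both ways, the same set $M_1$ obtained by adjoining $(a,t_1)$ to one element of (respectively, all of) each must-set of $s_1$ for the may-clause, and the same contradiction argument with ``bad witnesses'' $b_{N_1}$ for the must-clause; the only difference is cosmetic, namely that in the first implication the paper takes $M_2$ to be the set of \emph{all} may-successors of $s_2$ that are $R$-related to some element of $M_1$ while you make a pointwise selection --- both satisfy \eqref{eq:aaref} and membership in $\Tran_2(s_2)$. For the forward $\ad$ direction your transported relation, including the pairs $(\lightning, M_2)$ and the observation that $\lightning\DMTSmust{}\emptyset$ trivialises the must-clause, is exactly the paper's.

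The gap is in the last implication, precisely where you locate it, and flagging it is not the same as closing it. With $(s_1,s_2)\in R$ defined by ``every $M_1\in\Tran_1(s_1)$ has an $R'$-partner in $\Tran_2(s_2)$,'' the forward half of \eqref{eq:aaref} requires, for each $(a,t_1)\in M_1$ with $\Tran_1(t_1)\ne\emptyset$, a \emph{single} $(a,t_2)\in M_2$ such that \emph{every} $M_1''\in\Tran_1(t_1)$ is matched inside $\Tran_2(t_2)$; as you note, the DMTS may-clause only produces, for each $M_1''$ separately, a match $M_2''$ lying in $\Tran_2(t_2'')$ for a target $t_2''$ that a priori varies with $M_1''$, and your appeal to ``the rigidity enforced by the must-check'' is not carried out. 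Since this is the one step of the theorem that genuinely needs an argument, the proposal as written does not establish the statement. For comparison, the paper proceeds here by picking one $(a,M_1')\in N_1$, obtaining $M_2\DMTSmay{a}M_2'$ with $(M_1',M_2')\in R'$, locating the must-set $N_2=\{(a,M_2'')\mid M_2''\in\Tran_2(t_2)\}$ of $M_2$ containing $(a,M_2')$ --- which fixes $t_2$ --- and then asserting, for arbitrary $M_1''\in\Tran_1(t_1)$ with may-match $M_2''$, that $M_2''\in\Tran_2(t_2)$ ``by construction of $\omay_2$.'' You need to supply exactly that uniformisation: show that the target $t_2$ can be chosen independently of $M_1''$, or rework the argument so that the must-sets of $M_2$ rooted at candidate pairs $(a,t_2)\in M_2$, together with the DMTS must-clause, pin down a common $t_2$. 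Until that is done, the fourth implication remains unproven.
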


\begin{proof}
  There are four parts to this proof, two implications to show the
  first claim of the theorem and two implications for the second
  claim.

  \proofpara{$\mcalD_1\mr \mcalD_2$ implies $\da( \mcalD_1)\mr \da( \mcalD_2)$:}

  Write $\mcalD_1=( S_1, S^0_1, \omay_1, \omust_1)$,
  $\mcalD_2=( S_2, S^0_2, \omay_2, \omust_2)$.  We have a modal
  refinement relation (in the DMTS sense) $R\subseteq S_1\times S_2$.
  Now let $( s_1, s_2)\in R$ and $M_1\in \Tran_1( s_1)$, and define
  \begin{equation*}
    M_2=\{( a, t_2)\mid s_2\DMTSmay{a}_2 t_2, \exists( a, t_1)\in M_1:( t_1,
    t_2)\in R\}.
  \end{equation*}

  We prove that $M_2\in \Tran_2( s_2)$.  First we notice that by
  construction, indeed $s_2\DMTSmay{a}_2 t_2$ for all $( a, t_2)\in M_2$.
  Now let $s_2\DMTSmust{}_2 N_2$; we need to show that
  $N_2\cap M_2\ne \emptyset$.

  By DMTS refinement, we have $s_1\DMTSmust{}_1 N_1$ such that $\forall( a,
  t_1)\in N_1: \exists( a, t_2)\in N_2:( t_1, t_2)\in R$.  We know that
  $N_1\cap M_1\ne \emptyset$, so let $( a, t_1)\in N_1\cap M_1$.  Then
  there also is $( a, t_2)\in N_2$ with $( t_1, t_2)\in R$.  But $( a,
  t_2)\in N_2$ implies $s_2\DMTSmay{a}_2 t_2$, hence $( a, t_2)\in M_2$.

  Now the condition
  \begin{equation*}
    \forall( a, t_2)\in M_2: \exists( a, t_1)\in M_1:( t_1, t_2)\in
    R
  \end{equation*}
  in the definition of \NAA\ refinement is satisfied by construction.  For
  the inverse condition, let $( a, t_1)\in M_1$, then $s_1\DMTSmay{a}_1
  t_1$, so by DMTS refinement, there is $t_2\in S_2$ with $s_2\DMTSmay{a}_2
  t_2$ and $( t_1, t_2)\in R$, whence $( a, t_2)\in M_2$ by construction.

  \proofpara{$\da( \mcalD_1)\mr \da( \mcalD_2)$ implies $\mcalD_1\mr \mcalD_2$:}

  Let $R\subseteq S_1\times S_2$ be a~modal refinement relation in the
  \NAA\ sense and $( s_1, s_2)\in R$.
  Let $s_1\DMTSmay{a}_1 t_1$ and
  $M_1=\{( a, t_1)\}\cup \bigcup_{ s_1\DMTSmust{}_1 N_1} N_1$, then
  $M_1\in \Tran_1( s_1)$ by construction.  As $R$ is a modal
  refinement, this implies that there is $M_2\in \Tran_2( s_2)$ and
  $( a, t_2)\in M_2$ with $( t_1, t_2)\in R$, but then also
  $s_2\DMTSmay{a}_2 t_2$ as was to be shown.

  Let $s_2\DMTSmust{}_2 N_2$ and assume, for the sake of contradiction, that
  there is no $s_1\DMTSmust{}_1 N_1$ for which $\forall( a, t_1)\in N_1:
  \exists( a, t_2)\in N_2:( t_1, t_2)\in R$ holds.  Then for each
  $s_1\DMTSmust{}_1 N_1$, there is an element $( a_{ N_1}, t_{ N_1})\in N_1$
  for which there is no $( a_{ N_1}, t_2)\in N_2$ with $( t_{ N_1},
  t_2)\in R$.

  Let $M_1=\{( a_{ N_1}, t_{ N_1})\mid s_1\DMTSmust{}_1 N_1\}$, then
  $M_1\in \Tran_1( s_1)$ by construction.  Hence we have
  $M_2\in \Tran_2( s_2)$ satisfying the conditions in the definition
  of \NAA\ refinement.  By construction of $\Tran_2( s_2)$,
  $s_2\DMTSmust{}_2 N_2$ and $N_2\cap M_2\ne \emptyset$, so let
  $( a, t_2)\in N_2\cap M_2$.  Then there exists $( a, t_1)\in M_1$
  for which $( t_1, t_2)\in R$, in contradiction to the definition of
  $M_1$.

  \proofpara{$\mcalA_1\mr \mcalA_2$ implies $\ad( \mcalA_1)\mr \ad( \mcalA_2)$:}

  Write $\mcalA_1=( S_1, S^0_1, \Tran_1)$,
  $\mcalA_2=( S_2, S^0_2, \Tran_2)$, with DMTS translations
  $( D_1, D^0_1, \omust_1, \omay_1)$,
  $( D_2, D^0_2, \omust_2, \omay_2)$.  We have a modal refinement
  relation (in the \NAA\ sense) $R\subseteq S_1\times S_2$.  Define
  $R'\subseteq D_1\times D_2$ by
  \begin{multline*}
    R'= \{(\lightning, M_2) \mid M_2 \in D_2 \}\\
    \cup \{( M_1, M_2)\mid \exists( s_1, s_2)\in R: M_1\in \Tran_1( s_1),
    M_2\in \Tran( s_2), \\
    \begin{aligned}
      & \forall( a, t_1)\in M_1: \exists( a, t_2)\in M_2:( t_1, t_2)\in
      R, \\
      & \forall( a, t_2)\in M_2: \exists( a, t_1)\in M_1:( t_1, t_2)\in
      R \}.
    \end{aligned}
  \end{multline*}
  We show that $R'$ is a modal refinement in the DMTS sense.  Let $(
  M_1, M_2)\in R'$.

  If $M_1 = \lightning$ then $R'$ trivially satisfies the modal refinement
  conditions as there is no $\lightning \DMTSmay{a}$ transition and every
  $M_2 \DMTSmust{} N_2$ is matched by $\lightning \DMTSmust{} \emptyset$. Let us henceforth
  assume that $M_1 \ne \lightning$.

  Let $M_2\DMTSmust{}_2 \{(a, \lightning)\}$. By construction of $\omust_2$, there is
  $(a, t_2) \in M_2$ with $\Tran_2(t_2) = \emptyset$. Then $(M_1, M_2) \in R'$
  implies that there must be $(a, t_1) \in M_1$ for which $(t_1, t_2) \in R$.
  As $R$ is a \NAA\ refinement, this means that $\Tran_1(t_1) = \emptyset$
  and thus $M_1 \DMTSmust{}_1 \{(a, \lightning)\}$. Clearly, $(a, \lightning)$ is matched
  by $(a, \lightning)$ and $(\lightning, \lightning) \in R'$.

  Let $M_2\DMTSmust{}_2 N_2 \ne \{(a, \lightning)\}$.
  By construction of $\omust_2$, there is $( a,
  t_2)\in M_2$ such that $N_2=\{( a, M_2')\mid M_2'\in \Tran_2( t_2)\}
  \ne \emptyset$.
  Then $( M_1, M_2)\in R'$ implies that there must be $( a, t_1)\in M_1$
  for which $( t_1, t_2)\in R$.

  If $\Tran_1(t_1) = \emptyset$, we know that $M_1 \DMTSmust{}_1 \{(a, \lightning)\}$.
  We can then match $(a, \lightning)$ with arbitrary $(a, M'_2) \in N_2$ as
  $(\lightning, M'_2) \in R'$.

  If $\Tran_1(t_1) \ne \emptyset$, we can define $N_1=\{( a, M_1')\mid
  M_1'\in \Tran_1( t_1)\}$, whence $M_1\DMTSmust{}_1 N_1$.
  We show that $\forall( a, M_1')\in N_1: \exists( a, M_2')\in N_2:(
  M_1', M_2')\in R'$.  Let $( a, M_1')\in N_1$, then $M_1'\in \Tran_1(
  t_1)$.  From $( t_1, t_2)\in R$ we hence get $M_2'\in \Tran_2( t_2)$,
  and then $( a, M_2')\in N_2$ by construction of $N_2$ and $( M_1',
  M_2')\in R'$ due to the conditions of \NAA\ refinement (applied to $(
  t_1, t_2)\in R$).

  Let $M_1\DMTSmay{a}_1 \lightning$. We then have $M_1\DMTSmust{} \{(a, \lightning)\}$
  which means that there is $(a, t_1) \in M_1$ with $\Tran_1(t_1) = \emptyset$.
  By $(M_1, M_2) \in R'$ we get $(a, t_2) \in M_2$, hence
  $M_2 \DMTSmust{}_2 N_2$ with some $(a, M_2') \in N_2$ and $M_2 \DMTSmay{a} M_2'$.
  By definition of $R'$, $(\lightning, M_2') \in R'$.

  Let $M_1\DMTSmay{a}_1 M_1' \ne \lightning$, then we have $M_1\DMTSmust{}_1 N_1$ for which
  $(a, M_1')\in N_1$ and by construction of $\omay_1$.  This in turn implies
  that there must be $( a, t_1)\in M_1$ such that $N_1=\{( a, M_1'')\mid
  M_1''\in \Tran_1( t_1)\} \ne \emptyset$,
  and then by $( M_1, M_2)\in R'$, we get $(
  a, t_2)\in M_2$ for which $( t_1, t_2)\in R$.
  Due to the conditions of \NAA\ refinement, $\Tran_2(t_2) \ne \emptyset$.
  Let $N_2=\{( a,
  M_2')\mid M_2'\in \Tran_2( t_2)\}$, then $M_2\DMTSmust{}_2 N_2$ and hence
  $M_2\DMTSmay{a}_2 M_2'$ for all $( a, M_2')\in N_2$.
  Furthermore, the argument above shows that there is $( a,
  M_2')\in N_2$ for which $( M_1', M_2')\in R'$.

  We miss to show that $R'$ is initialised.
  If $\lightning \in D_1^0$, then $(\lightning, M_2^0) \in R'$ for any $M_2^0 \in D_2^0$.
  If $\lightning \ne M_1^0\in D_1^0$, then
  we have $s_1^0\in S_1^0$ with $M_1^0\in \Tran_1( s_1^0)$.  As $R$ is
  initialised, this entails that there is $s_2^0\in S_2^0$ with $(
  s_1^0, s_2^0)\in R$, which gives us $M_2^0\in \Tran_2( s_2^0)$ which
  satisfies the \NAA\ refinement conditions, whence $( M_1^0, M_2^0)\in R'$.

  \proofpara{$\ad( \mcalA_1)\mr \ad( \mcalA_2)$ implies $\mcalA_1\mr \mcalA_2$:}

  Let $R\subseteq D_1\times D_2$ be a~modal refinement relation in the
  DMTS sense and define $R'\subseteq S_1\times S_2$ by
  \begin{equation*}
    R'=\{( s_1, s_2)\mid \forall M_1\in \Tran_1( s_1): \exists M_2\in
    \Tran_2( s_2):( M_1, M_2)\in R\}\,;
  \end{equation*}
  we will show that $R'$ is an \NAA\ modal refinement.

  Observe first that $(M_1, \lightning) \in R$ implies $M_1 = \lightning$ as
  $\lightning$ is the only state in $D_1$ that has a must transition
  $\DMTSmust{}_1 \emptyset$. We shall occasionally refer to this observation in
  the following.

  Let $( s_1, s_2)\in R'$ and $M_1\in \Tran_1( s_1)$, then by
  construction of $R'$, we have $M_2\in \Tran_2( s_2)$ with $( M_1,
  M_2)\in R$.

  Let $( a, t_2)\in M_2$; we need to find $( a, t_1)\in M_1$ such that
  $( t_1, t_2)\in R'$.

  If $\Tran_2(t_2) = \emptyset$, then $M_2 \DMTSmust{}_2 \{(a, \lightning)\}$.
  From $(M_1, M_2) \in R$ we get $M_1 \DMTSmust{}_1 N_1$ such that
  all $(a, M_1') \in N_1$ are matched by $(a, \lightning)$ with $(M_1', \lightning) \in R$.
  By the observation above, this means that $M_1' = \lightning$ and thus
  $N_1 = \{(a, \lightning)\}$ due to the construction. This means that
  there exists $(a, t_1) \in M_1$ with $\Tran_1(t_1) = \emptyset$.
  Hence $(t_1, t_2) \in R'$.

  If $\Tran_2(t_2) \ne \emptyset$,
  define $N_2=\{( a, M_2')\mid M_2'\in \Tran_2( t_2)\}$, then
  $M_2\DMTSmust{}_2 N_2$.  Now $( M_1, M_2)\in R$ implies that there must
  be $M_1\DMTSmust{}_1 N_1$ satisfying
  $\forall( a, M_1')\in N_1: \exists( a, M_2')\in N_2:( M_1', M_2')\in
  R$.
  If $N_1 = \{(a, \lightning)\}$, we have $(a, t_1) \in M_1$ with
  $\Tran_1(t_1) = \emptyset$ and thus trivially $(t_1, t_2) \in R'$.
  Otherwise, we have $( a, t_1)\in M_1$ such that
  $N_1=\{( a, M_1')\mid M_1'\in \Tran_1( t_1)\}$; we only miss to show
  that $( t_1, t_2)\in R'$.

  Let $M_1'\in \Tran_1( t_1)$, then $( a, M_1')\in N_1$, hence there
  is $( a, M_2')\in N_2$ with $( M_1', M_2')\in R$, but
  $( a, M_2')\in N_2$ also entails $M_2'\in \Tran_2( t_2)$; thus
  $(t_1, t_2) \in R'$.

  Let now $( a, t_1)\in M_1$; we need to find $( a, t_2)\in M_2$ such that
  $( t_1, t_2)\in R'$.

  If $\Tran_1(t_1) = \emptyset$, then $M_1 \DMTSmust{}_1 \{(a, \lightning)\}$
  and $M_1 \DMTSmay{a} \lightning$. By modal refinement, we have $M_2 \DMTSmay{a} M_2'$
  with $(\lightning, M_2') \in R$. In any case (whether $M_2' = \lightning$ or not),
  this means that there exists some $(a, t_2) \in M_2$ and trivially
  $(t_1, t_2) \in R'$.

  In case $\Tran_1( t_1)\ne \emptyset$, define
  $N_1=\{( a, M_1')\mid M_1'\in \Tran_1( t_1)\}$, then
  $N_1\ne \emptyset$ and $M_1\DMTSmust{}_1 N_1$.  Now let
  $( a, M_1')\in N_1$, then $M_1\DMTSmay{a}_1 M_1'$, hence we have
  $M_2\DMTSmay{a}_2 M_2'$ for some $( M_1', M_2')\in R$ by modal
  refinement.  Note that $M_1' \ne \lightning$ implies
  $M_2' \ne \lightning$ due to the observation above.
  By construction of $\omay_2$, this implies that there
  is $M_2\DMTSmust{}_2 N_2$ with $( a, M_2')\in N_2$, and we have
  $( a, t_2)\in M_2$ for which
  $N_2=\{( a, M_2'')\mid M_2''\in \Tran_2( t_2)\}$.  We show that
  $( t_1, t_2)\in R'$.

  Let $M_1''\in \Tran_1( t_1)$, then $( a, M_1'')\in N_1$, thus
  $M_1\DMTSmay a_1 M_1''$, so that there is $M_2\DMTSmay a_2 M_2''$ with $(
  M_1'', M_2'')\in R$.  By construction of $\omay_2$, there is
  $M_2\DMTSmust{}_2 N_2'$ with $( a, M_2'')\in N_2'$, hence also $M_2''\in
  \Tran_2( t_2)$.

  We miss to show that $R'$ is initialised.  Let $s^0_1\in S^0_1$; if
  $\Tran_1( s^0_1)= \emptyset$, then trivially $( s^0_1, s^0_2)\in R'$
  for any $s^0_2\in S^0_2$.  If $\Tran_1( s^0_1)\ne \emptyset$, then
  there is $M^0_1\in \Tran_1( s^0_1)$.  As $R$ is initialised, this
  gets us $M^0_2\in D_2$ with $( M^0_1, M^0_2)\in R$, but
  $M^0_2\in \Tran_2( s^0_2)$ for some $s^0_2\in S^0_2$, and then
  $( s^0_1, s^0_2)\in R'$.
\end{proof}

\begin{corollary}
  For all DMTS $\mcalD$, $\nu$-calculus expressions $\mcalN$, \NAA\
  $\mcalA$, and $\mcalL$-ex\-pressions $\mcalE$,
  $\dn( \mcalD)\DMTStreq \da( \mcalD)\DMTStreq \mcalD$,
  $\nd( \mcalN)\DMTStreq \mcalN$,
  $\ad( \mcalA)\DMTStreq \al( \mcalA)\DMTStreq \mcalA$, and
  $\la( \mcalE)\DMTStreq \mcalE$. \noproof
\end{corollary}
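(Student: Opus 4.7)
The plan is to derive all six thorough equivalences from a single uniform scheme that combines the modal-refinement preservation theorems (Theorems~\ref{th:dmtsnu}, \ref{th:aahyb}, \ref{th:dmtsaa}) with the observation that implementation membership coincides with modal refinement of the canonical LTS-embedding. Write $\iota_D, \iota_N, \iota_\mcalA, \iota_\mcalE$ for the embeddings of LTS into DMTS, $\nu$-normal form, \NAA, and hybrid modal logic, respectively, as defined in Section~\ref{sec:spec}. By definition of $\sem{\cdot}$ for DMTS, \NAA, and $\mcalL$-expressions, we have $\mcalI \in \sem{\mcalS}$ iff $\iota(\mcalI) \mr \mcalS$, and by Theorem~\ref{th:models=ref-nu} the same holds for $\nu$-calculus expressions in normal form.

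First I would verify a compatibility lemma: each of the six translations commutes, up to modal equivalence, with the LTS embeddings on the nose. For $\dn$, applying the defining formula~\eqref{eq:dmtstonu} to $\iota_D(\mcalI)$ gives exactly $\iota_N(\mcalI)$, because $\Diamond(s)$ collapses to $\{\{(a,t)\} : s \DMTSmust{a}_I t\}$ and $\Box^a(s) = \{t : s \DMTSmust{a}_I t\}$; the inverse check for $\nd$ is immediate since $\dn$ and $\nd$ are mutually inverse. For $\al$ and $\la$, a direct unfolding shows $\Tran(s) = \lsem{\Phi(s)}$ collapses to the singleton transition constraint of $\iota_\mcalA(\mcalI)$ and vice versa. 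For $\da$, applied to $\iota_D(\mcalI)$ the definition of $\Tran(s)$ reduces to the single set $\{(a,t) : s \DMTSmust{a}_I t\}$, yielding exactly $\iota_\mcalA(\mcalI)$.

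Second, with compatibility in hand, each thorough equivalence follows by a short chain of iffs. For example, to show $\dn(\mcalD) \DMTStreq \mcalD$, take any LTS $\mcalI$ and chain
\begin{equation*}
  \mcalI \in \sem{\mcalD}
  \iff \iota_D(\mcalI) \mr \mcalD
  \iff \dn(\iota_D(\mcalI)) \mr \dn(\mcalD)
  \iff \iota_N(\mcalI) \mr \dn(\mcalD)
  \iff \mcalI \in \sem{\dn(\mcalD)},
\end{equation*}
using in turn the definition of $\sem{\mcalD}$, Theorem~\ref{th:dmtsnu}, the compatibility lemma, and Theorem~\ref{th:models=ref-nu}. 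The equivalences $\nd(\mcalN) \DMTStreq \mcalN$, $\al(\mcalA) \DMTStreq \mcalA$, $\la(\mcalE) \DMTStreq \mcalE$, and $\da(\mcalD) \DMTStreq \mcalD$ are obtained by the same pattern using the appropriate preservation theorem.

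The main obstacle is the case $\ad(\mcalA) \DMTStreq \mcalA$, since $\ad$ genuinely changes the state space (states become transition constraints) and introduces the universal sink $\lightning$, so equality $\ad(\iota_\mcalA(\mcalI)) = \iota_D(\mcalI)$ cannot hold literally. Here I would prove the compatibility lemma in the weaker form $\ad(\iota_\mcalA(\mcalI)) \mreq \iota_D(\mcalI)$ by exhibiting an explicit modal refinement relation $R \subseteq D \times I$ that pairs each reachable transition-constraint state $M \in \Tran(s)$ (which, for an implementation, is forced to be the singleton $\{(a,t) : s \DMTSmust{a}_I t\}$) with the corresponding $I$-state $s$, while mapping $\lightning$ to any sink-less $s$ only when unreachable. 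Since modal equivalence of $\iota_D$-images translates, via Proposition~\ref{pr:mrvstr}, into equality of implementation sets, the same chain of iffs then closes the argument for $\ad$.
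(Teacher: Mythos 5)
Your proposal is correct and is essentially the argument the paper leaves implicit behind the \noproof: combine the refinement-preservation theorems with the observation that each translation carries LTS embeddings to (modally equivalent) LTS embeddings, then chain the resulting iffs. Your handling of the only delicate case, $\ad$, is also right — for an implementation every $\Tran(s)$ is a singleton, $\lightning$ is unreachable and can simply be omitted from the witnessing relation, so $\ad$ commutes with the embeddings up to $\mreq$, which suffices by transitivity of $\mr$.
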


\subsection{Translation Complexity}

We have shown that our four specification formalisms are structurally
equivalent, which will be useful for us from a theoretical point of
view.  From a practical point of view however, some of the translations
may incur exponential blow-ups, hence care has to be taken.  On the
other hand, all our translations can be implemented in an on-the-fly
manner, only creating states when necessary.

We already noticed that the translation of $\nu$-calculus expressions
into normal form may incur an exponential blow-up, so this also affects
our translation from the modal $\nu$-calculus to DMTS.  When considering
only normal-form expressions, the translations to and from DMTS incur no
blow-ups.

When translating from \NAA\ to our hybrid modal logic, we see that, due to
the complementation $( b, u)\notin M$, the length of a formula $\Phi(
s)$ is quadratic in the representation of $\Tran( s)$.  For the reverse
translation, the number of $\Tran$ constraints can be exponential in the
number of states.  The worst case is $\Phi( s)= \ltrue$, which gets
translated to $\Tran( s)= 2^{ 2^{ \Sigma\times S}}$.

\begin{remark}
  There is a direct translation from DMTS to hybrid modal logic: for a
  DMTS $\mcalD=( S, S^0, \omay, \omust)$, define
  $\dl( \mcalD)=( S, S^0, \Phi)$ with
  \begin{equation*}
    \Phi( s)= \bigland_{ s\DMTSmust{} N} \biglor_{( a, t)\in
      N} \langle a\rangle t\land \bigland_{ s\not\DMTSmay a u} \neg \langle
    a\rangle u
  \end{equation*}
  for all $s\in S$.  This translation is again quadratic, and
  $\da( \mcalD)= \la( \dl( \mcalD))$.
\end{remark}

The translations between DMTS and \NAA\ may involve exponential blow-ups
both ways.  For the first translation, we can see this by considering
the one-state DMTS $(\{ s\},\{ s\}, \omay, \omust)$ with $\omay=\{( s,
a, s)\mid a\in \Sigma\}$ and $\omust= \emptyset$.  Then $\Tran( s)= 2^{
  2^{ \Sigma\times S}}$.

The fact that also the translation from \NAA\ to DMTS may be exponential
in space is evident from the definition.  To see that this blow-up is
unavoidable, we expose a special property of the $\Tran$-sets arising in
the DMTS-to-\NAA\ translation.

\begin{lemma}
  \label{le:dmtstobfsspecial}
  Let $\mcalD=( S, S^0, \omay, \omust)$ be a DMTS and $s\in S$.  For all
  $M_1, M_2\in \Tran( s)$ and all $M\subseteq \Sigma\times S$ with
  $M_1\subseteq M\subseteq M_1\cup M_2$, also $M\in \Tran( s)$.
\end{lemma}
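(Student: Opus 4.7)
The plan is to unfold the definition of $\Tran(s)$ and verify the two required conditions on $M$ directly from the hypothesis $M_1 \subseteq M \subseteq M_1 \cup M_2$. Recall that $M \in \Tran(s)$ iff (i) every $(a,t) \in M$ satisfies $s \DMTSmay{a} t$, and (ii) $N \cap M \ne \emptyset$ for every disjunctive must-set $N$ with $s \DMTSmust{} N$.

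For condition (i), I would pick an arbitrary $(a,t) \in M$. Because $M \subseteq M_1 \cup M_2$, this element lies in $M_1$ or in $M_2$, and in either case membership in $\Tran(s)$ of $M_1$ respectively $M_2$ delivers $s \DMTSmay{a} t$. So the upper bound $M \subseteq M_1 \cup M_2$ is what secures the may-transition requirement.

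For condition (ii), I would fix an arbitrary $s \DMTSmust{} N$. Since $M_1 \in \Tran(s)$, we already have $N \cap M_1 \ne \emptyset$; and the lower bound $M_1 \subseteq M$ then yields $N \cap M \supseteq N \cap M_1 \ne \emptyset$. So the lower bound $M_1 \subseteq M$ is exactly what preserves the disjunctive must-constraints.

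There is no genuine obstacle here: the lemma is essentially a sandwich argument, using the two inclusions in complementary ways (upper bound for may, lower bound for must). The only mild subtlety is to remark that the role of $M_2$ is only to provide \emph{room} for $M$ to be larger than $M_1$; the constraints from $\omust$ are witnessed entirely by $M_1$, while the constraints from $\omay$ are witnessed element-by-element by whichever of $M_1$ or $M_2$ contains the relevant pair.
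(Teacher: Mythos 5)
Your proof is correct and follows the same route as the paper: the upper bound $M\subseteq M_1\cup M_2$ together with $M_1,M_2\in\Tran(s)$ gives the may-condition pointwise, and the lower bound $M_1\subseteq M$ with $M_1\in\Tran(s)$ gives nonempty intersection with every must-set. Nothing is missing.
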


\begin{proof}
  For $i=1,2$, since $M_i \in \Tran(s)$, we know that
  \begin{itemize}
  \item for all $(a,t) \in M_i$, $s \DMTSmay a t$, and
  \item for all $s\DMTSmust{} N$, there is $(a,t) \in M_i \cap N$.
  \end{itemize}

  Now as $M \subseteq M_1 \cup M_2$, it directly follows that for all
  $(a,t) \in M$, we have $s\DMTSmay a t$. Moreover, since $M_1 \subseteq M$,
  we also have that for all $s\DMTSmust{} N$, there exists $(a,t) \in M \cap
  N$.  As a~consequence, $M \in \Tran(s)$. \qed
\end{proof}

Using this, we can show the following.

\begin{proposition}
  \label{pr:bfstodmtsblowup}
  There exists a one-state \NAA\ $\mcalA$ for which any DMTS $\mcalD\DMTStreq \mcalA$
  has at least $2^{ n- 1}$ states, where $n$ is the size of the alphabet
  $\Sigma$.
\end{proposition}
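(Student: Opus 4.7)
The plan is to exploit Lemma~\ref{le:dmtstobfsspecial}: the $\Tran$-sets of $\da(\mcalD)$ enjoy a sandwich-closure property, and I will construct $\mcalA$ so that matching its $\Tran(s^0)$ forces each state of $\mcalD$ to carry at most one ``label profile''. Concretely, let $\Sigma=\{a_1,\dotsc,a_n\}$ and define the one-state NAA $\mcalA=(\{s^0\},\{s^0\},\Tran)$ with
\begin{equation*}
  \Tran(s^0) = \{ M \subseteq \Sigma\times\{s^0\} \mid |M| \text{ is even} \}\,.
\end{equation*}
Identifying such an $M$ with its projection $\ell(M)\subseteq\Sigma$, we have $|\Tran(s^0)|=\sum_{k \text{ even}} \binom{n}{k} = 2^{n-1}$, and implementations of $\mcalA$ are exactly the LTS in which every state has an even number of outgoing labels.

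Now suppose $\mcalD=(S,S^0,\omay,\omust)$ satisfies $\mcalD\DMTStreq\mcalA$; without loss of generality, every state of $\mcalD$ is reachable and consistent, so $\Tran_{\da(\mcalD)}(s)\ne\emptyset$ for every $s$. For each $s\in S$ set $\mcalL(s)=\{\ell(M)\mid M\in\Tran_{\da(\mcalD)}(s)\}$. I would establish two facts. First, each $\ell\in\mcalL(s)$ has even size: since $\sem{\mcalD}=\sem{\da(\mcalD)}\subseteq\sem{\mcalA}$, any $M\in\Tran_{\da(\mcalD)}(s)$ can be realised at a reachable state of some implementation of $\mcalD$ (build the implementation by picking a witnessing element of $\Tran$ at each encountered state along some reachable path, and at the copy of $s$ use $M$ itself), and the outgoing-label set $\ell(M)$ of that implementation state must lie in $\Tran(s^0)$. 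Second, every even-sized $\ell\subseteq\Sigma$ occurs in $\mcalL(s)$ for some $s\in S^0$: the single-state LTS $\mcalI_\ell$ with a self-loop for each $a\in\ell$ lies in $\sem{\mcalA}=\sem{\mcalD}$, and the matching conditions of NAA refinement applied to $\mcalI_\ell\mr\da(\mcalD)$ produce some $s_0\in S^0$ and $M\in\Tran_{\da(\mcalD)}(s_0)$ with $\ell(M)=\ell$.

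The key step is then to show $|\mcalL(s)|\le 1$ for each $s$, using Lemma~\ref{le:dmtstobfsspecial}. If $\ell_1\ne\ell_2$ both lay in $\mcalL(s)$, choose $M_1,M_2\in\Tran_{\da(\mcalD)}(s)$ with $\ell(M_i)=\ell_i$ and, without loss of generality, $x\in\ell_2\setminus\ell_1$, giving $(x,t)\in M_2$ for some $t$. Then
\begin{equation*}
  M_1 \subseteq M_1\cup\{(x,t)\} \subseteq M_1\cup M_2\,,
\end{equation*}
so by Lemma~\ref{le:dmtstobfsspecial}, $M_1\cup\{(x,t)\}\in\Tran_{\da(\mcalD)}(s)$. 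Its label projection $\ell_1\cup\{x\}$ has odd cardinality, contradicting the first fact. Combining the two facts, the $2^{n-1}$ distinct even-sized subsets of $\Sigma$ force $2^{n-1}$ distinct states in $S$, proving the bound.

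The main obstacle is the realisability argument in the first fact: turning an abstract $M\in\Tran_{\da(\mcalD)}(s)$ at a reachable state $s$ into the transition set of a reachable implementation state. This requires constructing, along a may-path from $S^0$ to $s$, an implementation that both satisfies all must-constraints encountered on the way and uses exactly $M$ at (the copy of) $s$; the definition of $\Tran$ guarantees this locally, but a careful inductive unfolding is needed to ensure global consistency. Once this is cleanly set up, the remainder of the argument is essentially an application of the sandwich lemma.
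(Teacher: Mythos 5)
Your proof is correct and follows essentially the same route as the paper's: the same one-state \NAA\ whose transition constraint consists of all even-cardinality subsets, Lemma~\ref{le:dmtstobfsspecial} to rule out a single state carrying two distinct even label profiles, and a count over initial states. You additionally spell out two steps the paper leaves implicit---that every label profile occurring at a reachable state must be even (via the realizability/unfolding argument) and that every even subset must occur at some initial state---which makes the argument more self-contained but does not change the approach.
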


\begin{proof}
  Let $\Sigma=\{ a_1,\dots, a_n\}$ and $\mcalA=(\{ s^0\},\{ s^0\}, \Tran)$
  the \NAA\ with $\Tran( s^0)=\{ M\subseteq \Sigma\times\{ s^0\}\mid
  \exists k:| M|= 2k\}$ the transition constraint containing all
  disjunctive choices of even cardinality.  Let $\mcalD=( T, T^0, \omay,
  \omust)$ be a DMTS with $\mcalD\DMTStreq \mcalA$; we claim that $\mcalD$ must have
  at least $2^{ n- 1}$ initial states.

  Assume, for the purpose of contradiction, that
  $T^0=\{ t^0_1,\dots, t^0_m\}$ with $m< 2^{ n- 1}$.  As
  $\mcalD\DMTStreq \mcalA$, we must have
  $\bigcup_{ i= 1}^m \Tran( t^0_i)=\{ M\subseteq \Sigma\times T\mid
  \exists k:| M|= 2k\}$,
  so that there is an index $j\in\{ 1,\dots, m\}$ for which
  $\Tran( t^0_j)=\{ M_1, M_2\}$ contains two different disjunctive
  choices from $\Tran( s^0)$.  By Lemma~\ref{le:dmtstobfsspecial},
  also $M\in \Tran( t^0_j)$ for any $M$ with
  $M_1\subseteq M\subseteq M_1\cup M_2$.  But $M_1\cup M_2$ has
  greater cardinality than $M_1$, so that there will be an
  $M\in \Tran( t^0_j)$ with odd cardinality. \qed
\end{proof}

\begin{figure}
  \hfill
  \xymatrix@R=12ex@C=4em{%
    & \text{DMTS} \ar@<.4ex>[r]^{\text{X}} \ar@<.4ex>[d]^{\text{L}}
    \ar[dr]_{\text{Q}}
    & \text{\NAA} \ar@<.4ex>[l]^{\text{X}} \ar@<.4ex>[d]^{\text{Q}}
    \\ \text{$\nu$-calculus} \ar@<.4ex>[r]^{\text{X}}
    & \text{$\nu$-normal form} \ar@<.4ex>[l]^{\text{L}}
    \ar@<.4ex>[u]^{\text{L}}
    & \text{hybrid modal logic} \ar@<.4ex>[u]^{\text{X}}
  }\hfill\mbox{}
  \caption{%
    \label{fi:translations-comp}
    Complexity of the translations between specification formalisms.
    ``L'' stands for linear (no blow-up), ``Q'' for quadratic blow-up,
    and ``X'' for exponential blow-up}
\end{figure}
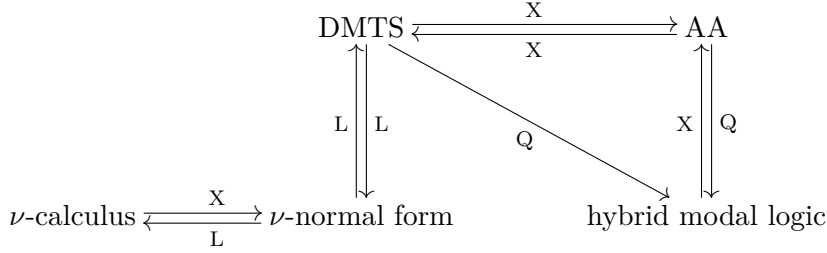

Figure~\ref{fi:translations-comp} sums up the translation complexities.

\subsection{Initial States}
\label{se:initial}

We finish this section with a justification for why we allow our
specifications to have several (or possibly zero) initial states.  The
first lemma shows that for \NAA, and up to \emph{thorough} refinement,
this is inessential; due to their close relationship, this also holds
for $\mcalL$-expressions.

\begin{lemma}
  \label{le:aa-single}
  For any \NAA\ $\mcalA_1$, there is an \NAA\ $\mcalA_2=( S_2, S^0_2, \Tran_2)$ with
  $S^0_2=\{ s^0_2\}$ a singleton and $\mcalA_1\DMTStreq \mcalA_2$.
\end{lemma}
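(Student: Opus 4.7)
The plan is to construct $\mcalA_2$ by adjoining a fresh initial state whose transition constraint menu is the union of the menus of the original initial states, and then verify that this preserves the set of implementations. Concretely, given $\mcalA_1 = (S_1, S^0_1, \Tran_1)$, pick a fresh symbol $s^0_2 \notin S_1$ and set $S_2 = S_1 \cup \{s^0_2\}$, $S^0_2 = \{s^0_2\}$, and
\begin{equation*}
  \Tran_2(s) = \Tran_1(s) \text{ for } s \in S_1, \qquad
  \Tran_2(s^0_2) = \bigcup_{s \in S^0_1} \Tran_1(s).
\end{equation*}
Note that when $S^0_1 = \emptyset$ this yields $\Tran_2(s^0_2) = \emptyset$, which is exactly what is needed since in that case $\sem{\mcalA_1} = \emptyset$ and the singleton $\Tran_I(i^0)$ of any implementation cannot be matched by any element of $\emptyset$.

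For the inclusion $\sem{\mcalA_1} \subseteq \sem{\mcalA_2}$, I would take an implementation $\mcalI = (I, i^0, \Tran_I)$ and a modal refinement $R \subseteq I \times S_1$ witnessing $\mcalI \mr \mcalA_1$, initialised via some $s^0 \in S^0_1$ with $(i^0, s^0) \in R$. Set $R' = R \cup \{(i^0, s^0_2)\}$. The conditions of Definition~\ref{def:aaref} hold on all pairs in $R$ because $\Tran_2$ agrees with $\Tran_1$ on $S_1$; for the new pair $(i^0, s^0_2)$, any $M_I \in \Tran_I(i^0)$ is matched by the same $M_A \in \Tran_1(s^0) \subseteq \Tran_2(s^0_2)$ that witnessed the match at $(i^0, s^0)$ in $R$, and the matched successors lie in $S_1$, so they remain in $R \subseteq R'$.

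For the converse $\sem{\mcalA_2} \subseteq \sem{\mcalA_1}$, suppose $R' \subseteq I \times S_2$ witnesses $\mcalI \mr \mcalA_2$ with $(i^0, s^0_2) \in R'$. Since $\mcalI$ is an implementation, $\Tran_I(i^0) = \{M_I\}$ is a singleton, and the refinement condition at $(i^0, s^0_2)$ yields a matching $M_A \in \Tran_2(s^0_2) = \bigcup_{s \in S^0_1} \Tran_1(s)$. Pick any $s^0 \in S^0_1$ with $M_A \in \Tran_1(s^0)$ and define $R = (R' \setminus \{(i^0, s^0_2)\}) \cup \{(i^0, s^0)\}$. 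The only pair requiring verification is $(i^0, s^0)$: the single $M_I$ is matched by $M_A \in \Tran_1(s^0)$ with exactly the same successor pairs as in $R'$, and these successors live in $S_1$, hence in $R$.

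The construction is almost trivial and the main ``obstacle'' is merely bookkeeping: one must observe that the successors produced by any match at $s^0_2$ necessarily lie in $S_1$ (because each $M_A \in \Tran_2(s^0_2)$ was drawn from some $\Tran_1(s)$), so the recursive portion of the refinement relation can be transplanted unchanged between $\mcalA_1$ and $\mcalA_2$. The degenerate case $S^0_1 = \emptyset$ is worth flagging explicitly, since without it the statement would be subtly false for empty-initial-state \NAA.
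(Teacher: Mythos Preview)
Your construction is identical to the paper's, and the overall strategy for both inclusions matches. For the forward direction the paper is slightly more economical: it exhibits a modal refinement $\mcalA_1 \mr \mcalA_2$ via $R = \id_{S_1} \cup \{(s^0_1, s^0_2) \mid s^0_1 \in S^0_1\}$ and then invokes Proposition~\ref{pr:mrvstr}, rather than lifting each implementation refinement individually; but your direct argument is equally valid.

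There is one small bookkeeping slip in your backward direction. You set $R = (R' \setminus \{(i^0, s^0_2)\}) \cup \{(i^0, s^0)\}$, but nothing in the definition of modal refinement prevents $R'$ from containing further pairs $(i, s^0_2)$ with $i \neq i^0$; if it does, your $R$ is not a subset of $I \times S_1$. The paper strips \emph{all} pairs with second component $s^0_2$, defining $R_1 = \{(s, s_2) \in R_2 \mid s_2 \neq s^0_2\} \cup \{(s^0, s^0_1)\}$. Your own observation that every $M_A \in \Tran_2(s^0_2)$ has targets in $S_1$ is exactly what ensures that this stripping is harmless, so the fix is immediate.
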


\begin{proof}
  Write $\mcalA_1=( S_1, S^0_1, \Tran_1)$.  If $S^0_1= \emptyset$, we can
  let $S_2= S^0_2=\{ s^0_2\}$ and $\Tran_2( s^0_2)=
  \emptyset$.
  Otherwise, we let $S_2= S_1\cup\{ s^0_2\}$, where $s^0_2$ is a new
  state, and $\Tran_2( s)= \Tran_1( s)$ for $s\in S_1$,
  $\Tran_2( s^0_2)= \bigcup_{ s_1^0\in S_1^0} \Tran_1( s_1^0)$.  Let
  $R=\id_{ S_1}\cup\{( s^0_1, s^0_2)\mid s^0_1\in S^0_1\}$, then $R$
  is easily seen to be a modal refinement showing $\mcalA_1\mr \mcalA_2$.

  We show that $\mcalA_2\DMTStr \mcalA_1$.  Let $\mcalI=( S, s^0, \omust)\in \sem{
    \mcalA_2}$, then we have a modal refinement $R_2\subseteq S\times S_2$,
  \ie~such that for all $( s, s_2)\in R_2$, there exists $M_2\in
  \Tran_2( s_2)$ for which
  \begin{equation}
    \label{eq:ltsmraa}
    \begin{gathered}
      \forall s\DMTSmust a t: \exists( a, t_2)\in M_2:( t, t_2)\in R_2\,, \\
      \forall( a, t_2)\in M_2: \exists s\DMTSmust a t:( t, t_2)\in R_2\,.
    \end{gathered}
  \end{equation}
  Now $( s^0, s_2^0)\in R_2$ implies that there must be $M_2\in \Tran_2(
  s_2^0)$ for which~\eqref{eq:ltsmraa} holds, but by definition of
  $\Tran_2( s_2^0)$, this entails that there is $s_1^0\in S_1^0$ for
  which $M_2\in \Tran_1( s_1^0)$.  Define $R_1\subseteq S\times S_1$ by
  \begin{equation*}
    R_1=\{( s, s_2)\mid( s, s_2)\in R_2, s_2\ne s^0_2\}\cup\{( s^0,
    s_1^0)\}\,,
  \end{equation*}
  then $R_1$ is a modal refinement showing $\mcalI\mr \mcalA_1$. \qed
\end{proof}

In order to show that the above statement does \emph{not} hold for DMTS,
we expose a special property of DMTS with single initial states,
\cf~\cite[Example~7.8]{DBLP:journals/tcs/BenesKLS09}.  Recall that for
LTS $\mcalI_1=( S_1, s_1^0, \omust_1)$, $\mcalI_2=( S_2, s_2^0, \omust_2)$,
their \emph{nondeterministic sum} is given by $\mcalI_1+ \mcalI_2=( S, s^0,
\omust)$ with $S=S_1\cup S_2\cup\{ s^0\}$ (with the unions disjoint),
where $s^0$ is a new state,
and transitions $s\DMTSmust a t$ iff $s\DMTSmust a_1 t$ or $s\DMTSmust a_2 t$
together with $s^0\DMTSmust a t$ for all $t$ with $s_1^0\DMTSmust a_1 t$ or
$s_2^0 \DMTSmust a_2 t$.

\begin{lemma}
  \label{le:dmts1sumclosed}
  If $\mcalD=( S,\{ s^0\}, \omay, \omust)$ is a DMTS with a single initial
  state and $\mcalI_1, \mcalI_2\in \sem \mcalD$, then also $\mcalI_1+ \mcalI_2\in \sem
  \mcalD$.
\end{lemma}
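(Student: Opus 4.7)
The plan is to exploit the fact that both implementations $\mcalI_1$ and $\mcalI_2$ refine $\mcalD$ via modal refinements that send their respective initial states to the unique initial state $s^0$ of $\mcalD$. I can then glue these refinements together to obtain a refinement from the sum, since the fresh initial state of the sum has transitions drawn from both $s_1^0$ and $s_2^0$, and both of these map to the same state $s^0$ in $\mcalD$.

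More concretely, write $\mcalI_i = (S_i, s_i^0, \omust_i)$ for $i = 1, 2$, and let $R_i \subseteq S_i \times S$ be a modal refinement witnessing $\mcalI_i \mr \mcalD$, so that $(s_i^0, s^0) \in R_i$. Let $\mcalI_1 + \mcalI_2 = (S_1 \cup S_2 \cup \{s^0_+\}, s^0_+, \omust)$ with $s^0_+$ the fresh initial state. I would then define
\begin{equation*}
  R = R_1 \cup R_2 \cup \{(s^0_+, s^0)\} \subseteq (S_1 \cup S_2 \cup \{s^0_+\}) \times S
\end{equation*}
and claim that $R$ witnesses $\mcalI_1 + \mcalI_2 \mr \mcalD$.

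The verification of the refinement conditions proceeds in three cases. For pairs already in $R_1$ (resp.\ $R_2$), the conditions hold because $R_1$ (resp.\ $R_2$) is itself a modal refinement, and the relevant transitions of $\mcalI_1 + \mcalI_2$ from states in $S_i$ coincide with those of $\mcalI_i$ (the sum adds no new transitions out of old states). For the new pair $(s^0_+, s^0)$: any may-side transition $s^0_+ \DMTSmust a t$ in the sum originates, by definition, from some $s_i^0 \DMTSmust a_i t$, and then $R_i$ (applied to $(s_i^0, s^0) \in R_i$) supplies a matching $s^0 \DMTSmay a t'$ with $(t, t') \in R_i \subseteq R$. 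Dually, any disjunctive must-transition $s^0 \DMTSmust{} N$ is matched via $R_1$ by some $s_1^0 \DMTSmust a_1 t_1$ with a suitable $(a, t') \in N$ and $(t_1, t') \in R_1 \subseteq R$; by construction of the sum, $s^0_+ \DMTSmust a t_1$ as well, which gives the required match.

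There is no real obstacle here; the whole point is that having a \emph{single} initial state in $\mcalD$ means that the two refinements $R_1$ and $R_2$ are compatible at the initial level and can be merged. The lemma would clearly fail if $\mcalD$ were allowed multiple initial states, since then $\mcalI_1$ and $\mcalI_2$ could match different initial states and the sum would be unable to make the required choice simultaneously—this is exactly the phenomenon that justifies allowing multiple initial states in DMTS, in contrast to \NAA\ (Lemma~\ref{le:aa-single}).
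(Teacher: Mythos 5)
Your proof is correct and follows exactly the paper's argument: take the union $R_1\cup R_2\cup\{( s^0_+, s^0)\}$ of the two witnessing refinements together with the new initial pair, which is only possible because both $R_1$ and $R_2$ relate their initial states to the \emph{same} state $s^0$. The paper simply asserts that this relation is ``clearly'' a modal refinement, whereas you spell out the case analysis, but the content is identical.
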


\begin{proof}
  Let $i^0_1$ and $i^0_2$ be the initial states of $\mcalI_1$ and $\mcalI_2$,
  respectively. Let further $i^0$ be the initial state of $\mcalI_1 + \mcalI_2$.
  Assume that we have modal refinements $R_1$ and $R_2$ such that
  $(i^0_1,s^0) \in R_1$ and $(i^0_2,s^0) \in R_2$. Let
  $R = R_1 \cup R_2 \cup \{ (i^0,s^0) \}$. Clearly, $R$ is a~modal
  refinement witnessing $\mcalI_1+\mcalI_2 \mr \mcalD$.\qed
\end{proof}

Now let $\mcalD$ be the DMTS, with \emph{two} initial states, depicted in
Figure~\ref{fi:dmts2i}, then $\sem \mcalD=\{ \mcalI_1, \mcalI_2\}$ as also seen in
Figure~\ref{fi:dmts2i}, but $\mcalI_1+ \mcalI_2\notin \sem \mcalD$.  Hence $\mcalD$ is
not thoroughly equivalent to any DMTS with a single initial state.

Applying the construction from the proof of Lemma~\ref{le:aa-single}
to the \NAA\ generated by the DMTS in Figure~\ref{fi:dmts2i} gives an
\NAA\ $\mcalA_2=( S_2,\{ s^0_2\}, \Tran_2)$ with
$\Tran_2( s^0_2)=\{\{( a, s_1)\},\{( b, t_1)\}\}$ (where $s_1$ and
$t_1$ are the target states of the $a$ and $b$ transitions in $\mcalD$,
respectively).  This specifies an \emph{exclusive disjunction}: one of
$a$ and $b$ has to be implemented, but not both.  This also serves to
show that Lemma~\ref{le:dmts1sumclosed} does not hold for \NAA.

\begin{figure}
  \centering
  \begin{tikzpicture}[->, >=stealth', font=\footnotesize,
    state/.style={ shape= circle, draw, initial text=, inner
      sep= .5mm, minimum size= 2mm}, scale= 1]
    \begin{scope}
      \node[state,initial] (i1) at (0,0) {};
      \node[state,initial] (i2) at (0,-1) {};
      \node[state] (a) at (2,0) {};
      \node[state] (b) at (2,-1) {};
      \node at (1,-2) {$\mcalD$};
      \path (i1) edge node[above] {$a$} (a);
      \path (i2) edge node[above] {$b$} (b);
    \end{scope}
    \begin{scope}[xshift=12em]
      \node[state,initial] (i1) at (0,0) {};
      \node[state] (a) at (2,0) {};
      \node at (1,-2) {$\mcalI_1$};
      \path (i1) edge node[above] {$a$} (a);
    \end{scope}
    \begin{scope}[xshift=24em]
      \node[state,initial] (i2) at (0,-1) {};
      \node[state] (b) at (2,-1) {};
      \node at (1,-2) {$\mcalI_2$};
      \path (i2) edge node[above] {$b$} (b);
    \end{scope}
  \end{tikzpicture}
  \caption{%
    \label{fi:dmts2i}
    DMTS $\mcalD$ with two initial states and its two implementations
    $\mcalI_1$, $\mcalI_2$}
\end{figure}
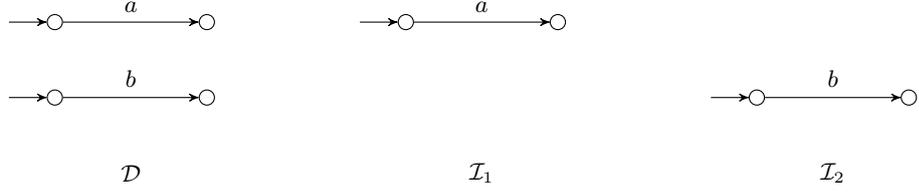

\begin{corollary}
  There is a DMTS $\mcalD_1$ for which there is no DMTS $\mcalD_2=( S_2,
  S_2^0,$ $\omay, \omust)$ with $S_2^0=\{ s_2^0\}$ a singleton and
  $\mcalD_1\DMTStreq \mcalD_2$. \noproof
\end{corollary}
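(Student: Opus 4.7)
The plan is to take $\mcalD_1$ to be the DMTS $\mcalD$ depicted in Figure~\ref{fi:dmts2i} and argue by contradiction, leveraging Lemma~\ref{le:dmts1sumclosed} as the essential tool. The intuition is that single-initial-state DMTS cannot express exclusive disjunction between two trace shapes, precisely because their implementation sets are closed under nondeterministic sum, whereas $\mcalD$ has exactly two implementations $\mcalI_1, \mcalI_2$ whose sum is not an implementation.

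First I would verify the two claims already hinted at in the text just above the corollary. Namely, $\sem{\mcalD} = \{\mcalI_1, \mcalI_2\}$: any implementation must refine one of the two initial states of $\mcalD$, each of which has a single outgoing \must-transition (labelled $a$ or $b$, respectively) leading to a state with no further transitions; refinement therefore forces exactly the shape of $\mcalI_1$ or $\mcalI_2$. Conversely, the inclusions $\mcalI_1, \mcalI_2 \in \sem{\mcalD}$ are witnessed by the obvious singleton modal refinements into the respective initial states. Next, I would observe that $\mcalI_1 + \mcalI_2 \notin \sem{\mcalD}$: the initial state of the sum has both an $a$- and a $b$-transition, but to refine either initial state of $\mcalD$ an implementation can have only the $a$-transition or only the $b$-transition (since the \may-relation from each initial state of $\mcalD$ permits only one of the two labels).

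Now suppose, for contradiction, that there exists a DMTS $\mcalD_2 = (S_2, \{s_2^0\}, \omay, \omust)$ with $\mcalD \DMTStreq \mcalD_2$. Then $\sem{\mcalD_2} = \sem{\mcalD} = \{\mcalI_1, \mcalI_2\}$, so in particular $\mcalI_1, \mcalI_2 \in \sem{\mcalD_2}$. By Lemma~\ref{le:dmts1sumclosed}, applied to the single-initial-state DMTS $\mcalD_2$, we conclude $\mcalI_1 + \mcalI_2 \in \sem{\mcalD_2} = \sem{\mcalD}$, contradicting the previous step.

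No real obstacle is expected here; everything routine is already assembled in the preceding paragraph of the paper. The only part requiring a little care is the verification that $\sem{\mcalD} = \{\mcalI_1, \mcalI_2\}$ up to isomorphism, and in particular that nothing larger can arise as an implementation despite DMTS allowing branching; but since each initial state of $\mcalD$ has a unique \must-transition to a terminal state and no additional \may-transitions, the refinement conditions pin down the implementation shape completely.
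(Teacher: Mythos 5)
Your proposal is correct and follows exactly the paper's own argument: take $\mcalD$ from Figure~\ref{fi:dmts2i}, observe $\sem{\mcalD}=\{\mcalI_1,\mcalI_2\}$ while $\mcalI_1+\mcalI_2\notin\sem{\mcalD}$, and apply Lemma~\ref{le:dmts1sumclosed} to rule out any thoroughly equivalent single-initial-state DMTS. Nothing is missing.
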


Due to their close relationship with DMTS, this property also holds for
$\nu$-calculus expressions \emph{in normal form}: there exist
$\nu$-calculus expressions which are not equivalent to any normal-form
$\nu$-calculus expression with a single initial variable.  (Of course,
omitting ``normal form'' would make this statement invalid; as
disjunction is part of the syntax, any $\nu$-calculus expression is
thoroughly equivalent to one with only one initial variable.)

We also remark that the above argument can easily be extended to show
that for any $k\in \Nat$, there exists a DMTS with $k+ 1$ initial states
which is not thoroughly equivalent to any DMTS with at most $k$ initial
states.

Using again the example in Figure~\ref{fi:dmts2i}, we can also show that
the statement in Lemma~\ref{le:aa-single} does \emph{not} hold when
thorough equivalence is replaced by modal equivalence.  Let
$\mcalA_1= \da( \mcalD)$, with initial states $s^0$ and $t^0$, be the \NAA\
translation of the DMTS in Figure~\ref{fi:dmts2i} and assume that there
exists an \NAA\ $\mcalA_2$ with single initial state $s_2^0$ for which
$\mcalA_2\mr \mcalA_1$.  Then there is a modal refinement $R$ with
$( s_2^0, s^0),( s_2^0, t^0)\in R$.  Let $M_2\in \Tran_2( s_2^0)$,
then by $( s_2^0, s^0)\in R$, there must be some $( a, t_2)\in M_2$
with $( t_2, s_1)\in R$.  By $( s_2^0, t^0)\in R$, this implies that
there must be $( a, t_1)\in \Tran_1( t^0)$, a contradiction.

\section{Specification Theory}\label{sec:theory}

Behavioral specifications typically come equipped with operations which
allow for \emph{compositional reasoning}, \viz~conjunction, composition
and quotient, \cf~\cite{DBLP:conf/fase/BauerDHLLNW12}.  On deterministic
MTS, these operations can be given easily using simple structural
operational rules.  For non-deterministic systems this is significantly
harder.

We remark that composition and quotient operators are well-known from
some logics, such as, \eg~linear~\cite{DBLP:journals/tcs/Girard87} or
spatial logic~\cite{DBLP:journals/iandc/CairesC03}, and were extended to
quite general contexts~\cite{DBLP:conf/icalp/CardelliLM11}.  However,
whereas these operators are part of the formal syntax in those logics,
for us they are simply operations on logical expressions (or DMTS, or
\NAA).  Consequently, composition is generally only a sound
over-approximation of the semantic composition.

Given the structural equivalence of DMTS, the modal $\nu$-calculus, \NAA,
and our hybrid modal logic exposed in the previous section, it suffices
to introduce the operations for \emph{one} of the four types of
specifications.  On the other hand, we will often state properties for
\emph{all} four types of specifications at the same time, letting $\mcalS$
stand for a specification of any type.

\subsection{Disjunction and Conjunction}
\label{se:discon}

Disjunction of specifications is easily defined as we allow multiple
initial states.  For DMTS $\mcalD_1=( S_1, S_1^0, \omay_1, \omust_1)$,
$\mcalD_2=( S_2, S_2^0, \omay_2, \omust_2)$, we can hence define $\mcalD_1\lor
\mcalD_2= (S_1 \cup S_2, S^0_1 \cup S^0_2, \omay_1 \cup \omay_2,
\omust_1\cup\omust_2)$ (with all unions disjoint).  Similar definitions
are available for the other types of specifications, and disjunction
commutes with the translations.

Conjunction for DMTS is an extension of the construction
from~\cite{DBLP:conf/atva/BenesCK11} for multiple initial states.
Given two DMTS $\mcalD_1=( S_1, S^0_1, \omay_1, \omust_1)$,
$\mcalD_2=( S_2, S^0_2, \omay_2, \omust_2)$, we define
$\mcalD_1\land \mcalD_2=( S, S^0, \omay, \omust)$ with $S= S_1\times S_2$,
$S^0= S^0_1\times S^0_2$, and
\begin{itemize}
\item $( s_1, s_2)\DMTSmay a( t_1, t_2)$ iff $s_1\DMTSmay a_1 t_1$ and $s_2\DMTSmay
  a_2 t_2$,
\item for all $s_1\DMTSmust{}_1 N_1$, $( s_1, s_2)\DMTSmust{} \{( a,( t_1,
  t_2))\mid( a, t_1)\in N_1, s_2\DMTSmay a_2 t_2\}$,
\item for all $s_2\DMTSmust{}_2 N_2$, $( s_1, s_2)\DMTSmust{} \{( a,( t_1,
  t_2))\mid( a, t_2)\in N_2, s_1\DMTSmay a_1 t_1\}$.
\end{itemize}

For \NAA, conjunction can be defined using auxiliary projection
functions
$\pi_i: 2^{ \Sigma\times S_1\times S_2}\to 2^{ \Sigma\times S_i}$
given by
\begin{align*}
  \pi_1( M)=&\{( a, s_1)\mid \exists s_2\in
  S_2:( a, s_1, s_2)\in M\}\,,\\
  \pi_2( M)=&\{( a, s_2)\mid \exists s_1\in S_1:( a, s_1, s_2)\in M\}\,.
\end{align*}
Then for \NAA\ $\mcalA_1=( S_1, S^0_1, \Tran_1)$,
$\mcalA_2=( S_2, S^0_2, \Tran_2)$, we let
$\mcalA_1\land \mcalA_2=( S, S^0, \Tran)$, with $S= S_1\times S_2$,
$S^0= S^0_1\times S^0_2$ and
$\Tran(( s_1, s_2))=\{ M\subseteq \Sigma\times S\mid \pi_1( M)\in
\Tran_1( s_1), \pi_2( M)\in \Tran_2( s_2)\}$.

We can also define conjunction for $\mcalL$-expressions, using similar
auxiliary mappings on formulae.  For sets $X_1$, $X_2$ and
$i\in\{ 1, 2\}$, we define
$\rho_i: \mcalL( X_i)\to \mcalL( X_1\times X_2)$\label{pg:rho-L}
inductively, by
\begin{itemize}
\item $\rho_i( \ltrue)= \ltrue$, $\rho_i( \lfalse)= \lfalse$, $\rho_i(
  \neg \phi)= \neg \rho_i( \phi)$, $\rho_i( \phi_i\land \phi_2)= \rho_i(
  \phi_i)\land \rho_i( \phi_2)$,
\item $\rho_1( \langle a\rangle x_1)= \biglor_{ x_2\in X_2} \langle
  a\rangle( x_1, x_2)$,
\item $\rho_2( \langle a\rangle x_2)= \biglor_{ x_1\in X_1} \langle
  a\rangle( x_1, x_2)$.
\end{itemize}
Then, for $\mcalL$-expressions $\mcalE_1=( X_1, X_1^0, \Phi_1)$,
$\mcalE_2=( X_2, X_2^0, \Phi_2)$, we let
$\mcalE_1\land \mcalE_2=( X_1\times X_2, X_1^0\times X_2^0, \Phi)$ with
$\Phi(( x_1, x_2))= \rho_1( \Phi_1( x_1))\land \rho_2( \Phi_2( x_2))$.

\begin{lemma}
  For all DMTS $\mcalD_1$, $\mcalD_2$, \NAA\ $\mcalA_1$, $\mcalA_2$, and
  $\mcalL$-expressions $\mcalE_1$, $\mcalE_2$, $\da( \mcalD_1\land \mcalD_2)= \da(
  \mcalD_1)\land \da( \mcalD_2)$,
  $\al( \mcalA_1\land \mcalA_2)= \al( \mcalA_1)\land
  \al( \mcalA_2)$, and $\la( \mcalE_1\land \mcalE_2)= \la( \mcalE_1)\land \la(
  \mcalE_2)$.
\end{lemma}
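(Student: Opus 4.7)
The plan is to prove each of the three equalities separately, exploiting that in each case the translations produce specifications with the same underlying state set $S_1\times S_2$ and the same initial states $S_1^0\times S_2^0$, so that equality reduces to comparing the transition-constraint assignment $\Tran$ (for the first two identities) or checking $\lsem{\Phi}$ at each state (for the third). The common tool will be the projection maps $\pi_1,\pi_2$ used in the definitions of $\wedge$ on \NAA\ and $\wedge$ on $\mcalL$-expressions.

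For $\da(\mcalD_1\wedge \mcalD_2)=\da(\mcalD_1)\wedge \da(\mcalD_2)$, I would simply unfold both sides. On the left, $M\in\Tran((s_1,s_2))$ iff every $(a,(t_1,t_2))\in M$ is a synchronous may-transition (so $s_1\DMTSmay a_1 t_1$ and $s_2\DMTSmay a_2 t_2$) and every disjunctive must-transition $(s_1,s_2)\DMTSmust{}N$ (which by construction comes from some $s_i\DMTSmust{}_i N_i$) intersects $M$. On the right, $M$ lies in the \NAA\ conjunction iff $\pi_i(M)\in\Tran_i(s_i)$ for $i=1,2$. The two sides then match because the may-clause decomposes coordinatewise into $\pi_1(M),\pi_2(M)$, while the must-clause translates, for $s_1\DMTSmust{}_1 N_1$, into: $(s_1,s_2)\DMTSmust{}\{(a,(t_1,t_2))\mid (a,t_1)\in N_1,\,s_2\DMTSmay a_2 t_2\}$ intersects $M$ iff $N_1\cap\pi_1(M)\neq\emptyset$; here the $s_2\DMTSmay a_2$ witness is supplied by $\pi_2(M)\in\Tran_2(s_2)$. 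Both implications are then routine.

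For $\al(\mcalA_1\wedge\mcalA_2)=\al(\mcalA_1)\wedge\al(\mcalA_2)$, the two sides are $\mcalL$-expressions on the same state set, so what must be shown is that for every $(s_1,s_2)$ the associated formulas have the same semantics (since from Theorem~\ref{th:aahyb} we know $\lsem{\Phi}=\Tran$ under our translations, interpreting the claimed equality semantically is the only reasonable reading). The key lemma I would prove by structural induction is
\begin{equation*}
\lsem{\rho_i(\phi)}=\{M\subseteq\Sigma\times(S_1\times S_2)\mid \pi_i(M)\in\lsem{\phi}\}
\end{equation*}
for all $\phi\in\mcalL(X_i)$. The base cases $\ltrue,\lfalse$ and the inductive cases for $\neg$ and $\wedge$ are immediate; the only interesting case is $\phi=\langle a\rangle x$, where $\rho_1(\langle a\rangle x)=\bigvee_{y\in X_2}\langle a\rangle(x,y)$ holds in $M$ iff some $(a,(x,y))\in M$ iff $(a,x)\in\pi_1(M)$. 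Applying this lemma twice and using $\lsem{\psi_1\wedge\psi_2}=\lsem{\psi_1}\cap\lsem{\psi_2}$ yields
\begin{equation*}
\lsem{\rho_1(\Phi_1(s_1))\wedge\rho_2(\Phi_2(s_2))}=\{M\mid \pi_i(M)\in\lsem{\Phi_i(s_i)},\ i=1,2\}\,,
\end{equation*}
which is exactly the $\Tran$-constraint assigned by $\al$ to $(s_1,s_2)$ in $\mcalA_1\wedge\mcalA_2$ under the \NAA\ conjunction.

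Finally, $\la(\mcalE_1\wedge \mcalE_2)=\la(\mcalE_1)\wedge\la(\mcalE_2)$ follows almost by definition, using the same auxiliary lemma on $\rho_i$: the left-hand side assigns to $(x_1,x_2)$ the set $\lsem{\rho_1(\Phi_1(x_1))\wedge\rho_2(\Phi_2(x_2))}$, which as computed above is $\{M\mid \pi_i(M)\in\lsem{\Phi_i(x_i)}\}$; this is precisely the \NAA\ conjunction of $\la(\mcalE_1)$ and $\la(\mcalE_2)$ at $(x_1,x_2)$. I expect the main---and really only nontrivial---obstacle to be the $\langle a\rangle$-case of the induction on $\rho_i$, since everywhere else the statements reduce to symbol pushing once the projection lemma is in place.
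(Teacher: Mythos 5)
Your proposal is correct and follows essentially the same route as the paper: the identity $M\models \rho_i( \phi)$ iff $\pi_i( M)\models \phi$ (proved by structural induction on $\phi$) disposes of the $\al$ and $\la$ cases, and the DMTS case is settled by comparing the two transition-constraint assignments on the common state space $S_1\times S_2$ via the projections, exactly as you describe. Your explicit remark that the $\al$-equality must be read semantically (via $\lsem\Phi= \Tran$) and your iff-analysis of the must-clause are sound elaborations of steps the paper leaves implicit.
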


Note that above makes no statement about the $\ad$ translation; due to
the change of state space during the translation, equality does not
hold here.

\begin{proof}
  The last two claims follow easily once one notices that for
  $i\in\{ 1, 2\}$ and all $M$, $M\models \rho_i( \phi)$ iff
  $\pi_i( M)\models \phi$.  To show the first claim, let
  $\mcalD_1=( S_1, S^0_1, \omay_1, \omust_1)$ and
  $\mcalD_2=( S_2, S^0_2, \omay_2, \omust_2)$ be DMTS, with \NAA\
  translations $\da( \mcalD_1)=( S_1, S^0_1, \Tran_1)$ and
  $\da( \mcalD_2)=( S_2, S^0_2, \Tran_2)$.  Write
  $\mcalA^\land= \da( \mcalD_1\land \mcalD_2)$ and
  $\mcalA_\land= \da( \mcalD_1)\land \da( \mcalD_2)$; we show that
  $\mcalA^\land= \mcalA_\land$.

  First, remark that $\mcalA^\land$ and $\mcalA_\land$ have precisely the same
  state space $S_1\times S_2$ and initial states $S_1^0\times S_2^0$. We
  now show that they have the same transition constraints. Let
  $\Tran_\land$ (resp.\ $\Tran^\land$) be the transition constraints
  mapping of $\mcalA_\land$ (resp.\ $\mcalA^\land$).  Let $( s_1, s_2)\in
  S_1\times S_2$ and $M\in \Tran_\land( s_1, s_2)$.

  By construction of $\Tran_\land$, there must be $M_1\in \Tran_1( s_1)$
  and $M_2\in \Tran_2( s_2)$ such that $\pi_1( M)= M_1$ and $\pi_2( M)=
  M_2$. We show that $M\in \Tran^\land( s_1, s_2)$.  Let $( a,( t_1,
  t_2))\in M$. Since $\pi_1( M)= M_1$ and $\pi_2( M)= M_2$, we have $(
  a, t_1)\in M_1$ and $( a, t_2)\in M_2$. As a consequence, there are
  transitions $s_1\DMTSmay{a} t_1$ and $s_2\DMTSmay{a} t_2$ in $\mcalD_1$ and
  $\mcalD_2$, respectively.  Thus, by construction, there is a transition
  $( s_1, s_2)\DMTSmay{a}( t_1, t_2)$ in $\mcalD_1\land \mcalD_2$.

  Let $( s_1, s_2)\DMTSmust{} N$ in $\mcalD_1\land \mcalD_2$.  By construction,
  $N$ is such that either (1) there exists $N_1$ such that $s_1\DMTSmust{}
  N_1$ in $\mcalD_1$ and $N=\{( a,( t_1, t_2))\mid( a, t_1)\in N_1,( s_1,
  s_2)\DMTSmay a( t_1, t_2)\}$, or (2) there exists $N_2$ such that
  $s_2\DMTSmust{} N_2$ in $\mcalD_2$ and $N=\{( a,( t_1, t_2))\mid( a, t_2)\in
  N_2,( s_1, s_2)\DMTSmay a( t_1, t_2)\}$. Assume that (1) holds (case (2)
  being symmetric).  Since $M_1\in \Tran_1( s_1)$, there must be $( a,
  t_1)\in N_1\cap M_1$.  As $\pi_1( M)= M_1$, there must be $t_2\in S_2$
  such that $( a,( t_1, t_2))\in M$. As a~consequence, there is $( a,(
  t_1, t_2))\in M \cap N$.

  We have shown that $M\in \Tran^\land( s_1, s_2)$.  Similarly, we can
  show that for all $M\in \Tran^\land( s_1, s_2)$, we also have $M\in
  \Tran_\land( s_1, s_2)$. We can thus conclude that $\Tran^\land =
  \Tran_\land$, hence $\mcalA^\land= \mcalA_\land$. \qed
\end{proof}

\begin{theorem}
  \label{th:condis}
  For all specifications $\mcalS_1$, $\mcalS_2$, $\mcalS_3$,
  \begin{itemize}
  \item $\mcalS_1\lor \mcalS_2\mr \mcalS_3$ iff $\mcalS_1\mr \mcalS_3$ and $\mcalS_2\mr
    \mcalS_3$,
  \item $\mcalS_1\mr \mcalS_2\land \mcalS_3$ iff $\mcalS_1\mr \mcalS_2$ and $\mcalS_1\mr
    \mcalS_3$,
  \item $\sem{ \mcalS_1\lor \mcalS_2}= \sem{ \mcalS_1}\cup \sem{ \mcalS_2}$, and
    $\sem{ \mcalS_1\land \mcalS_2}= \sem{ \mcalS_1}\cap \sem{ \mcalS_2}$.
  \end{itemize}
\end{theorem}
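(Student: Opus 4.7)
The plan is to prove all three bullet points for DMTS and then transfer the results to the other three formalisms via the structural equivalences of Theorems~\ref{th:dmtsnu}, \ref{th:aahyb} and~\ref{th:dmtsaa}, together with the fact that all four translations commute with $\sem\cdot$ (so implementation sets are preserved). Conjunction and disjunction have already been defined in all four settings and shown to commute with the translations, so this transfer is essentially routine.

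For the disjunction statement $\mcalS_1 \lor \mcalS_2 \mr \mcalS_3 \liff \mcalS_1 \mr \mcalS_3 \text{ and } \mcalS_2 \mr \mcalS_3$, I would work in DMTS and observe that, because $\mcalS_1\lor\mcalS_2$ is defined as the disjoint union of state spaces, transition relations and initial state sets, a relation $R \subseteq (S_1 \cup S_2)\times S_3$ decomposes uniquely as $R_1 \cup R_2$ with $R_i \subseteq S_i \times S_3$; the local refinement conditions for $R$ at a pair $(s_i, s_3)$ are precisely the conditions for $R_i$ at $(s_i, s_3)$, and the initialisation condition splits along $S_1^0$ and $S_2^0$. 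Conversely, the union of two refinements is again a refinement because the components do not interact. The semantic equality $\sem{\mcalS_1 \lor \mcalS_2} = \sem{\mcalS_1}\cup\sem{\mcalS_2}$ then follows by applying this biimplication with $\mcalS_3 := \mcalI$ an implementation, in one direction, and in the other direction by noting that any refinement $\mcalI \mr \mcalS_1$ directly yields $\mcalI \mr \mcalS_1 \lor \mcalS_2$ because initial states are preserved by the injection.

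For the conjunction statement $\mcalS_1 \mr \mcalS_2 \land \mcalS_3 \liff \mcalS_1 \mr \mcalS_2 \text{ and } \mcalS_1 \mr \mcalS_3$, I would again argue in DMTS using the explicit definition of $\land$ from Section~\ref{se:discon}. The forward direction uses projection: given a refinement $R \subseteq S_1 \times (S_2 \times S_3)$, define $R_2 = \{(s_1, s_2) \mid \exists s_3 : (s_1,(s_2,s_3)) \in R\}$ and analogously $R_3$; these are refinement relations to $\mcalS_2$ and $\mcalS_3$ because may-transitions of $\mcalS_2 \land \mcalS_3$ conjunctively synchronise may-transitions of both components, and must-transitions of $\mcalS_2$ and $\mcalS_3$ lift to must-transitions of $\mcalS_2 \land \mcalS_3$ by the SOS rules. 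The converse direction is the main obstacle: given $R_2, R_3$, I form $R = \{(s_1,(s_2,s_3)) \mid (s_1,s_2)\in R_2, (s_1,s_3)\in R_3\}$ and must verify the DMTS refinement clauses. The may-clause is easy (pair the two matches coming from $R_2$ and $R_3$). The must-clause is the delicate one: a disjunctive must-transition of $(s_2,s_3)$ comes, say, from $s_2 \DMTSmust{}_2 N_2$ and has the shape $\{(a,(t_2,t_3)) \mid (a,t_2)\in N_2, s_3 \DMTSmay{a}_3 t_3\}$; using $R_2$ I obtain $s_1 \DMTSmust{}_1 N_1$ matching $N_2$, and for each $(a,t_1)\in N_1$ I get $(a,t_2)\in N_2$ with $(t_1,t_2)\in R_2$, and then separately use the may-clause for $R_3$ applied to the may-transition $s_1\DMTSmay{a}$ (which exists because disjunctive must-transitions are dominated by may-transitions) to produce a matching $s_3\DMTSmay{a}_3 t_3$ with $(t_1,t_3)\in R_3$, yielding $(t_1,(t_2,t_3))\in R$ as required.

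Finally, the semantic identity $\sem{\mcalS_1 \land \mcalS_2} = \sem{\mcalS_1} \cap \sem{\mcalS_2}$ will follow formally from the conjunction bi-implication together with reflexivity and transitivity of $\mr$: applying the bi-implication to $\mcalS_1 \land \mcalS_2 \mr \mcalS_1 \land \mcalS_2$ gives $\mcalS_1 \land \mcalS_2 \mr \mcalS_i$ for $i=1,2$, so by Proposition~\ref{pr:mrvstr} $\sem{\mcalS_1\land\mcalS_2}\subseteq \sem{\mcalS_1}\cap\sem{\mcalS_2}$; the reverse inclusion follows by applying the bi-implication with $\mcalS_1 := \mcalI$. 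To lift everything to the other formalisms I use that $\da$, $\nd$, $\dn$, $\al$, $\la$ and $\ad$ all preserve and reflect $\mr$ and commute with $\lor$ and $\land$ (for the $\ad$ translation, the state space changes but thorough equivalence is preserved, which is sufficient for the semantic statement); this makes the extension essentially mechanical.
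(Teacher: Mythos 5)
Your DMTS arguments are essentially the paper's own: the split of a refinement relation along the disjoint union for $\lor$, the projections $R_2, R_3$ for the forward direction of the conjunction claim, and, for the converse, the relation $R=\{( s_1,( s_2, s_3))\mid( s_1, s_2)\in R_2,( s_1, s_3)\in R_3\}$ with the must-clause handled exactly as you describe (match the must via $R_2$, then match the induced may-transition of $s_1$ via $R_3$). Two steps of your plan, however, do not go through as stated.

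First, the transfer of the conjunction bi-implication to \NAA\ is not routine. Conjunction commutes with $\da$, but it does \emph{not} commute with $\ad$ --- the paper points this out immediately after the commutation lemma, since $\ad$ changes the state space --- and by Lemma~\ref{le:dmtstobfsspecial} and Proposition~\ref{pr:bfstodmtsblowup} not every \NAA\ arises as $\da$ of a DMTS (the $\Tran$-sets in the image of $\da$ satisfy a convexity property that general \NAA\ violate). So for arbitrary \NAA\ $\mcalA_1, \mcalA_2, \mcalA_3$ you cannot reduce $\mcalA_1\mr \mcalA_2\land \mcalA_3$ to the DMTS case via the translations you cite; the paper instead gives a second, direct proof of both directions for \NAA, working with the projections $\pi_1, \pi_2$ defining the \NAA\ conjunction. (The links DMTS~$\leftrightarrow$~$\nu$-calculus and \NAA~$\leftrightarrow$~hybrid logic are genuinely routine, since those translations are mutually inverse and commute with the operations.) Second, the inclusion $\sem{ \mcalS_1\lor \mcalS_2}\subseteq \sem{ \mcalS_1}\cup \sem{ \mcalS_2}$ does not follow from ``the bi-implication with $\mcalS_3\coloneqq \mcalI$'': that bi-implication has the disjunction on the \emph{left} of $\mr$, whereas here you need to analyse $\mcalI\mr \mcalS_1\lor \mcalS_2$. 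The correct argument is that an implementation has exactly one initial state, which any initialised witness must relate to an initial state of either $\mcalS_1$ or $\mcalS_2$; by disjointness of the union the witness then restricts to a refinement into that component. The paper explicitly remarks that this step would break down if implementations were allowed several initial states, so the point is not cosmetic.
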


\begin{proof}
  The proof falls into six parts.  We show the second statement of the
  theorem separately for DMTS and for \NAA.  For the other formalisms,
  the statement then follows by structural equivalence.

  \proofpara{$\mcalS_1\lor \mcalS_2\mr \mcalS_3$ iff $\mcalS_1\mr \mcalS_3$ and $\mcalS_2\mr
    \mcalS_3$:}

  Any modal refinement $R\subseteq( S_1\cup S_2)\times S_3$ splits
  into two refinements $R_1\subseteq S_1\times S_3$,
  $R_2\subseteq S_2\times S_3$ and vice versa.

  \proofpara{$\mcalS_1\mr \mcalS_2\land \mcalS_3$ is implied by
    $\mcalS_1\mr \mcalS_2$ and $\mcalS_1\mr \mcalS_3$:}

  We first show this proof for DMTS.  Let
  $\mcalS_i=( S_i, S^0_i, \omay_i, \omust_i)$, for $i= 1, 2, 3$, be DMTS
  and $R_2\subseteq S_1\times S_2$, $R_3\subseteq S_1\times S_3$ modal
  refinements and define
  $R=\{( s_1,( s_2, s_3))\mid( s_1, s_2)\in R_1,( s_1, s_3)\in
  R_3\}\subseteq S_1\times( S_2\times S_3)$.  Then $R$ is initialised.

  Now let $( s_1,( s_2, s_3))\in R$, then $( s_1, s_2)\in R_2$ and $(
  s_1, s_3)\in R_3$.  Assume that $s_1\DMTSmay{ a}_1 t_1$, then by $\mcalS_1\mr
  \mcalS_2$, we have $s_2\DMTSmay{ a}_2 t_2$ with $( t_1, t_2)\in R_2$.
  Similarly, by $\mcalS_1\mr \mcalS_3$, we have $s_3\DMTSmay{ a}_3 t_3$ with $(
  t_1, t_3)\in R_3$.  But then also $( t_1,( t_2, t_3))\in R$, and $(
  s_2, s_3)\DMTSmay{ a}( t_2, t_3)$ by definition.

  Assume that $( s_2, s_3)\DMTSmust{} N$.  Without loss of generality we
  can assume that there is $s_2\DMTSmust{}_2 N_2$ such that
  $N=\{( a,( t_2, t_3))\mid( a, t_2)\in N_2, s_3\DMTSmay{ a}_3 t_3\}$.  By
  $S_1\mr S_2$, we have $s_1\DMTSmust{}_1 N_1$ such that
  \begin{equation}
    \label{eq:conj_proof_N1}
    \forall( a, t_1)\in N_1: \exists( a, t_2)\in N_2:( t_1, t_2)\in
    R_2
  \end{equation}

  Let $( a, t_1)\in N_1$, then also $s_1\DMTSmay{ a}_1 t_1$, so by
  $S_1\mr S_3$, there is $s_3\DMTSmay{ a}_3 t_3$ with
  $( t_1, t_3)\in R_3$.  By~\eqref{eq:conj_proof_N1}, we also have
  $( a, t_2)\in N_2$ such that $( t_1, t_2)\in R_2$, but then
  $( a,( t_2, t_3))\in N$ and $( t_1,( t_2, t_3))\in R$.

  \proofpara{$\mcalS_1\mr \mcalS_2\land \mcalS_3$ implies $\mcalS_1\mr \mcalS_2$ and
    $\mcalS_1\mr \mcalS_3$:}

  Let $R\subseteq S_1\times( S_2\times S_3)$ be a (DMTS) modal
  refinement.  We show that $\mcalS_1\mr \mcalS_2$, the proof of
  $\mcalS_1\mr \mcalS_3$ being entirely analogous.  Define
  $R_2=\{( s_1, s_2)\mid \exists s_3\in S_3:( s_1,( s_2, s_3))\in
  R\}\subseteq S_1\times S_2$, then $R_2$ is initialised.

  Let $( s_1, s_2)\in R_2$, then we must have $s_3\in S_3$ such that $(
  s_1,( s_2, s_3))\in R$.  Assume that $s_1\DMTSmay{ a}_1 t_1$, then also $(
  s_2, s_3)\DMTSmay{ a}( t_2, t_3)$ and $( t_1,( t_2, t_3))\in R$.  By
  construction we have $s_2\DMTSmay{ a}_2 t_2$ and $s_3\DMTSmay{ a}_3 t_3$.
  Moreover, by definition of $R_2$, $( t_1, t_2)\in R_2$.

  Assume that $s_2\DMTSmust{}_2 N_2$, then by construction of
  $\mcalS_2\land \mcalS_3$,
  $( s_2, s_3)\DMTSmust{} N=\{( a,( t_2, t_3))\mid( a, t_2)\in N_2,
  s_3\DMTSmay{ a}_3 t_3\}$.
  By $\mcalS_1\mr \mcalS_2\land \mcalS_3$, we have $s_1\DMTSmust{}_1 N_1$ such that
  $\forall( a, t_1)\in N_1: \exists( a,( t_2, t_3))\in N:( t_1,( t_2,
  t_3))\in R$.

  Let $( a, t_1)\in N_1$, then we have $( a,( t_2, t_3))\in N$ for
  which $( t_1,( t_2, t_3))\in R$.  By construction of $N$, this
  implies that there are $( a, t_2)\in N_2$ and $s_3\DMTSmay{ a}_3 t_3$.
  Moreover, by definition of $R_2$, $( t_1, t_2)\in R$.

  \proofpara{$\mcalA_1\mr \mcalA_2\land \mcalA_3$ is implied by
    $\mcalA_1\mr \mcalA_2$ and $\mcalA_1\mr \mcalA_3$:}

  Now we show the second part of the lemma for \NAA.  Let
  $\mcalA_i=( \mcalS_i, \mcalS^0_i, \Tran_i)$, for $i= 1, 2, 3$, be \NAA and
  $R_2\subseteq S_1\times S_2$, $R_3\subseteq S_1\times S_3$ modal
  refinements and define
  $R=\{( s_1,( s_2, s_3))\mid( s_1, s_2)\in R_1,( s_1, s_3)\in
  R_3\}\subseteq S_1\times( S_2\times S_3)$.  Then $R$ is
  initialised.

  Let $( s_1,( s_2, s_3))\in R$, then $( s_1, s_2)\in R_2$ and
  $( s_1, s_3)\in R_3$.  Let $M_1\in \Tran_1( s_1)$, then we have
  $M_2\in \Tran_2( s_2)$ and $M_3\in \Tran_3( s_3)$ such that the
  pairs $M_1, M_2$ and $M_1, M_3$ verify the conditions~\eqref{eq:aaref}
  in Definition~\ref{def:aaref}.  Let $M=\{( a,( t_2, t_3))\mid( a,
  t_2)\in M_2,( a, t_3)\in M_3\}\subseteq \Sigma\times S$.

  We show that $\pi_2( M)= M_2$ and, similarly, $\pi_3( M)= M_3$:  It
  is clear that $\pi_2( M)\subseteq M_2$, so let $( a, t_2)\in M_2$.
  By the refinement $R_2$, there is $( a, t_1)\in M_1$, so by the
  refinement $R_3$, there is $( a, t_3)\in M_3$, but then $( a,
  t_2)\in \pi_2( M)$.  Using $\pi_2( M)= M_2$ and $\pi_3( M)= M_3$, we
  can now conclude that $M\in \Tran(( s_2, s_3))$.

  Let $( a, t_1)\in M_1$, then we have $( a, t_2)\in M_2$ and $( a,
  t_3)\in M_3$ such that $( t_1, t_2)\in R_2$ and $( t_1, t_3)\in
  R_3$.  But then $( t_1,( t_2, t_3))\in R$ and $( a,( t_2, t_3))\in
  M$.

  Let $( a,( t_2, t_3))\in M$, then $( a, t_2)\in M_2$ and $( a,
  t_3)\in M_3$.  By the refinements $R_2$ and $R_3$, we have $( a,
  t_1)\in M_1$ such that $( t_1, t_2)\in R_2$ and $( t_1, t_3)\in
  R_3$, but then also $( t_1,( t_2, t_3))\in R$.

  \proofpara{$\mcalA_1\mr \mcalA_2\land \mcalA_3$ implies $\mcalA_1\mr \mcalA_2$ and
    $\mcalA_1\mr \mcalA_3$:}

  Let $R\subseteq S_1\times( S_2\times S_3)$ be a (\NAA) modal
  refinement.  We show that $\mcalA_1\mr \mcalA_2$; the proof of
  $\mcalA_1\mr \mcalA_3$ is similar.  Define
  $R_2=\{( s_1, s_2)\mid \exists s_3\in S_3:( s_1,( s_2, s_3))\in
  R\}\subseteq S_1\times S_2$, then $R_2$ is initialised.

  Let $( s_1, s_2)\in R_2$, then there is $t_3\in S_3$ with $( t_1,(
  t_2, t_3))\in R$.  Let $M_1\in \Tran_1( s_1)$, then we have $M\in
  \Tran(( s_2, s_3))$ such that the pair $M_1, M$ satisfies
  conditions~\eqref{eq:aaref}.  Let $M_2= \pi_2( M)\in \Tran_2( s_2)$.

  Let $( a, t_1)\in M_1$, then there is $( a,( t_2, t_3))\in M$ such
  that $( t_1,( t_2, t_3))\in R$; hence $( t_1, t_2)\in R$ and $( a,
  t_2)\in M_2$.

  Let $( a, t_2)\in M_2$, then there is $t_3\in S_3$ such that $( a,(
  t_2, t_3))\in M$.  But then we also have $( a, t_1)\in M_1$ such
  that $( t_1,( t_2, t_3))\in R$, thus $( t_1, t_2)\in R_2$.

  \proofpara{$\sem{ \mcalS_1\lor \mcalS_2}= \sem{ \mcalS_1}\cup \sem{ \mcalS_2}$
    and $\sem{ \mcalS_1\land \mcalS_2}= \sem{ \mcalS_1}\cap \sem{ \mcalS_2}$:}

  $\sem{ \mcalS_1\land \mcalS_2}= \sem{ \mcalS_1}\cap \sem{ \mcalS_2}$ is clear
  from what we just proved: for all implementations $\mcalI$,
  $\mcalI\mr \mcalS_1\land \mcalS_2$ iff $\mcalI\mr \mcalS_1$ and $\mcalI\mr \mcalS_2$.
  For the other part, it is clear by construction that for any
  implementation $\mcalI$, any witness $R$ for $\mcalI\mr \mcalS_1$ is also a
  witness for $\mcalI\mr \mcalS_1\lor \mcalS_2$, and similarly for $\mcalS_2$,
  hence
  $\sem{ \mcalS_1}\cup \sem{ \mcalS_2}\subseteq \sem{ \mcalS_1\lor \mcalS_2}$.

  To show the other inclusion, we note that an initialised refinement
  $R$ witnessing $\mcalI\mr \mcalS_1\lor \mcalS_2$ must relate the initial state
  of $\mcalI$ either to an initial state of $\mcalS_1$ or to an initial state
  of $\mcalS_2$.  In the first case, and by disjointness, $R$~witnesses
  $\mcalI\mr \mcalS_1$, in the second, $\mcalI\mr \mcalS_2$.  Note how it is
  essential here that \emph{implementations} have but \emph{one} initial
  state; this part of the proof would break down if we were to allow
  several initial states for implementations.  \qed
\end{proof}

\begin{corollary}
  \label{co:distlat}
  With operations $\lor$ and $\land$, each of our four classes of
  specifications forms a bounded distributive lattice up to $\mreq$.
\end{corollary}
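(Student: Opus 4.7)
The plan is to derive the corollary almost entirely from Theorem~\ref{th:condis}, supplementing it only with the existence of top and bottom elements and with a direct check of distributivity. By parts (i) and (ii) of Theorem~\ref{th:condis}, in each of the four classes of specifications the operation $\lor$ computes binary joins and $\land$ computes binary meets with respect to $\mr$. Commutativity, associativity, idempotency of both operations, and the two absorption laws then follow up to $\mreq$ as formal consequences of these universal properties, with no further computation required.

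For boundedness, I would exhibit explicit extremal elements at the DMTS level and transport them through the translations of Section~\ref{sec:struct}. Take $\bot=(\emptyset,\emptyset,\emptyset,\emptyset)$: the empty relation is a vacuously initialised modal refinement from $\bot$ to any $\mcalS$, so $\bot\mr \mcalS$. For the top, take the one-state universal DMTS $\top=(\{*\},\{*\},\{(*,a,*)\mid a\in\Sigma\},\emptyset)$; given any $\mcalS=(S,S^0,\omay,\omust)$, the relation $S\times\{*\}$ is an initialised modal refinement (every may-transition of $\mcalS$ is matched by the appropriate self-loop at $*$, and $\top$ imposes no must-obligations), so $\mcalS\mr\top$.

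The only substantive point is distributivity. The easy inclusion $(\mcalS_1\land \mcalS_2)\lor(\mcalS_1\land \mcalS_3)\mr \mcalS_1\land(\mcalS_2\lor \mcalS_3)$ is automatic from the universal properties: each disjunct refines both $\mcalS_1$ and one of $\mcalS_2,\mcalS_3$, hence $\mcalS_2\lor \mcalS_3$, and so their meet. For the converse $\mcalS_1\land(\mcalS_2\lor \mcalS_3)\mr(\mcalS_1\land \mcalS_2)\lor(\mcalS_1\land \mcalS_3)$, I would argue at the DMTS level that the two sides are actually isomorphic. Since DMTS disjunction is a genuine disjoint union, the state set $S_1\times(S_2\sqcup S_3)$ is canonically bijective with $(S_1\times S_2)\sqcup(S_1\times S_3)$. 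Under this bijection, a may-transition $(s_1,s_2)\DMTSmay{a}(t_1,t)$ on the left with $s_2\in S_2$ forces $t\in S_2$, because in $\omay_2\sqcup\omay_3$ only same-component transitions exist, and this matches exactly a may-transition of $\mcalS_1\land \mcalS_2$; symmetrically for $s_2\in S_3$. The disjunctive must-transitions match by the same reasoning, whether they arise from an $\mcalS_1$-must or an $(\mcalS_2\lor \mcalS_3)$-must. The identity modulo this bijection is therefore a modal refinement in both directions.

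The main obstacle, such as it is, lies in this last isomorphism check; but it is mechanical once the definitions are unfolded, and it crucially relies on the fact that $\lor$ is defined as \emph{disjoint} union, so no ``mixing'' between $S_2$- and $S_3$-components can occur inside $\mcalS_1\land(\mcalS_2\lor \mcalS_3)$. Finally, since Theorems~\ref{th:dmtsnu}, \ref{th:aahyb}, and~\ref{th:dmtsaa} show that the translations preserve and reflect $\mr$, the induced preorders modulo $\mreq$ on the four formalisms are isomorphic, so both boundedness and distributivity transfer automatically from DMTS to the modal $\nu$-calculus in normal form, to \NAA, and to the hybrid modal logic.
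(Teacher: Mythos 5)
Your proposal is correct and follows essentially the same route as the paper: lattice structure from Theorem~\ref{th:condis}, the empty-initial-state specification and the one-state universal DMTS as bottom and top, and distributivity by observing that both sides of the distributive law have canonically the same state space with the identity acting as a two-sided modal refinement. The only cosmetic difference is that you verify distributivity at the DMTS level whereas the paper does it for \NAA; both transfer to the other formalisms via the refinement-preserving translations.
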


\begin{proof}
  The bottom elements (up to $\mreq$) in the lattices are given by
  specifications with empty initial state sets.  The top elements are
  the DMTS $(\{ s^0\},\{ s^0\},\{( s^0, a, s^0)\mid a\in \Sigma\},
  \emptyset)$ and its respective translations.  The other lattice
  properties follow from Theorem~\ref{th:condis}.

  We miss to verify distributivity.  Let $\mcalA_i=( S_i, S_i^0, \Tran_i)$,
  for $i= 1, 2, 3$, be \NAA.  The set of states of both $\mcalA_1\land(
  \mcalA_2\lor \mcalA_3)$ and $( \mcalA_1\land \mcalA_2)\lor( \mcalA_1\land \mcalA_3)$ is
  $S_1\times( S_2\cup S_3)= S_1\times S_2\cup S_1\times S_3$, and one
  easily sees that the identity relation is a two-sided modal
  refinement.  Things are similar for the other distributive law. \qed
\end{proof}

\subsection{Composition}

The composition operator for a specification theory is to mimic, at
specification level, the parallel composition of implementations.  That
is to say, if $\|$ is a composition operator for implementations (LTS),
then the goal is to extend~$\|$ to specifications such that for all
specifications $\mcalS_1$, $\mcalS_2$,
\begin{equation}
  \label{eq:compcomplete}
  \sem{ \mcalS_1\| \mcalS_2}= \big\{ \mcalI_1\| \mcalI_2\mid \mcalI_1\in
  \sem{ \mcalS_1}, \mcalI_2\in \sem{ \mcalS_2}\big\}.
\end{equation}

For simplicity, we use CSP-style synchronisation for parallel
composition of LTS, however, our results readily carry over to other
types of composition.  Analogously to the situation for
MTS~\cite{DBLP:journals/tcs/BenesKLS09}, we have the following negative
result:

\begin{figure}
  \centering
  \begin{tikzpicture}[->, >=stealth', font=\footnotesize,
    state/.style={ shape= circle, draw, initial text=, inner
      sep= .5mm, minimum size= 2mm}, scale= 1, on grid]
    \node[state,label=left:{$s$},initial above] (s) {};
    \node[state,right=3 of s,label=left:{$t$},initial above] (t) {};
    \node[state,below left=1 and 0.5 of s] (s1) {};
    \node[state,below right=1 and 0.5 of s] (s2) {};
    \node[state,below left=1 and 0.5 of t] (t1) {};
    \node[state,below right=1 and 0.5 of t] (t2) {};

    \path
    (s) edge[must,loop right] node {$a$} (s)
    (t) edge[must,loop right] node {$a$} (t)
    (s) edge[must,swap] 	node[pos=0.75] {$b$} (s1)
    (s) edge[may]  		node[pos=0.75] {$c$} (s2)
    (t) edge[may,swap] 	node[pos=0.75] {$b$} (t1)
    (t) edge[must]  	node[pos=0.75] {$c$} (t2)
    ;
  \end{tikzpicture}
  \caption{%
    \label{fi:composition_cex}
    DMTS $\mcalS$ and $\mcalT$ whose composition cannot be captured
    precisely}
\end{figure}
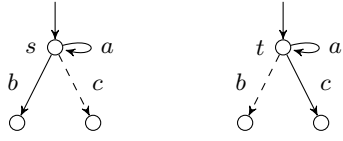

\begin{theorem}
  \label{th:composition}
  There is no operator $\|$, for any of our specification formalisms,
  which satisfies~\eqref{eq:compcomplete}.
\end{theorem}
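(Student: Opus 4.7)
The plan is to prove the theorem by contradiction using the DMTS $\mcalS$ and $\mcalT$ of Figure~\ref{fi:composition_cex}. Since the four specification formalisms are structurally equivalent (Section~\ref{sec:struct}) and share the same class of implementations (LTS), it suffices to argue within DMTS. Write $P=\{\mcalI_1\|\mcalI_2\mid\mcalI_1\in\sem{\mcalS},\mcalI_2\in\sem{\mcalT}\}$ for the right-hand side of~\eqref{eq:compcomplete}; I treat $\sem{\mcalU}$ up to bisimulation, the natural equivalence on implementation sets. I will exhibit an LTS $\mcalL$ such that $\mcalL\in\sem{\mcalU}$ for every DMTS $\mcalU$ with $\sem{\mcalU}\supseteq P$, yet $\mcalL\notin P$, producing the required contradiction with $\sem{\mcalU}=P$.

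First I analyse $P$. Each $\mcalI_1\in\sem{\mcalS}$ carries a forced $a$-transition at every reachable non-sink state, a forced $b$-transition to a sink, and an optional $c$-transition to a sink; $\mcalI_2\in\sem{\mcalT}$ dually, with $b$ and $c$ interchanged. Under CSP-style synchronisation a composed state $(i,k)$ has a $b$-transition iff $k$ does (since $i$ always does) and a $c$-transition iff $i$ does. In particular the single-state LTSs $\mcalJ_1,\mcalJ_2,\mcalJ_3,\mcalJ_4$ with $a$-self-loop and action sets $\{a\}$, $\{a,b\}$, $\{a,c\}$, $\{a,b,c\}$ respectively all lie in $P$, hence in $\sem{\mcalU}$.

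Let $\mcalL$ be the $2$-state LTS in which $\ell^0$ has an $a$-self-loop, an $a$-transition to $\ell^1$, and a $b$-transition to a sink, while $\ell^1$ has an $a$-self-loop and a $c$-transition to a sink. The second key step is to show $\mcalL\in\sem{\mcalU}$. Since $\mcalJ_4$ lies in $\sem{\mcalU}$, a witness $\mcalJ_4\mr\mcalU$ provides a state $v$ of $\mcalU$ reachable from $u_0$ by may-$a$ with $\mcalJ_4\mr v$; using further that all of $\mcalJ_1,\ldots,\mcalJ_4$ refine $\mcalU$, one verifies that the set $V=\{v\mid\mcalJ_4\mr v\}$ is closed under a suitable may-$a$ edge, carries may-$b$ and may-$c$ edges to sink-like states, and that every must-constraint of a $v\in V$ contains an $(a,\cdot)$-entry (by invoking $\mcalJ_1\mr v$). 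The relation $R=\{(\ell^0,u_0),(\ell^1,u_0)\}\cup\{(\ell^j,v)\mid j\in\{0,1\},\,v\in V\}$, augmented by trivial sink pairs, is then a modal refinement witnessing $\mcalL\mr\mcalU$.

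The main obstacle is proving $\mcalL\notin P$. Suppose for contradiction $\mcalL\sim\mcalI_1\|\mcalI_2$ with initial pair $(i_0,k_0)\sim\ell^0$. By the CSP enable/disable analysis above, a composed state $(i,k)$ has $b$ iff $k$ does and $c$ iff $i$ does. The two bisimulation classes of $a$-successors of $\ell^0$ in $\mcalL$ (namely $\ell^0$ itself, with $b$ and no $c$, and $\ell^1$, with $c$ and no $b$) force $(i_0,k_0)$ to reach under $a$ both a pair $(i',k')$ with $k'$ carrying $b$ and $i'$ not carrying $c$, and a pair $(i'',k'')$ with $i''$ carrying $c$ and $k''$ not carrying $b$. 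A case split on the $a$-branching of $i_0$ in $\mcalI_1$ and of $k_0$ in $\mcalI_2$ yields a contradiction in every case: if $\mcalI_1$ supplies only one $a$-successor of $i_0$, the two required target pairs share their $\mcalI_1$-coordinate and hence agree in their $c$-status, contradicting the requirement that only one of them has $c$; the case of a single $a$-successor in $\mcalI_2$ is symmetric, contradicting the $b$-status; and if both sides branch, the Cartesian product of the $a$-successor sets contains a mixed pair combining the $c$-negative $\mcalI_1$-state with the $b$-negative $\mcalI_2$-state, which has neither $b$ nor $c$ and hence is bisimilar to no state of $\mcalL$. Combining this with the previous paragraph yields $\mcalL\in\sem{\mcalU}\setminus P$, contradicting~\eqref{eq:compcomplete}.
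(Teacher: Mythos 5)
Your choice of $\mcalS$ and $\mcalT$ and your analysis of the set $P=\{\mcalI_1\|\mcalI_2\mid \mcalI_1\in\sem{\mcalS},\mcalI_2\in\sem{\mcalT}\}$ are fine, and your argument that $\mathcal L\notin P$ is sound: the set of ``flavours'' (enabled subsets of $\{b,c\}$) of the $a$-successors of a composed state is always of product form $\{\beta\cup\gamma\mid \beta\in B_0,\gamma\in C_0\}$ with $B_0\subseteq\{\emptyset,\{b\}\}$ and $C_0\subseteq\{\emptyset,\{c\}\}$, and $\{\{b\},\{c\}\}$ is not of this form. The gap is in your second step, the claim that $\mathcal L\in\sem{\mathcal U}$ for every DMTS $\mathcal U$ with $\sem{\mathcal U}\supseteq P$. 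The argument you give uses only the four one-state implementations $\mcalJ_1,\dotsc,\mcalJ_4$, and that information is demonstrably insufficient: the DMTS $\mathcal U_0=\mcalJ_1\lor\mcalJ_2\lor\mcalJ_3\lor\mcalJ_4$ satisfies $\mcalJ_i\in\sem{\mathcal U_0}$ for all $i$, yet $\sem{\mathcal U_0}$ consists, up to bisimulation, of exactly these four LTS, so $\mathcal L\notin\sem{\mathcal U_0}$. (Of course $\sem{\mathcal U_0}\not\supseteq P$, but your argument never invokes any element of $P$ beyond the $\mcalJ_i$, so it cannot distinguish $\mathcal U_0$ from the hypothetical $\mathcal U$.) Concretely, the step ``every must-constraint of a $v\in V$ contains an $(a,\cdot)$-entry, by invoking $\mcalJ_1\mr v$'' is unjustified: $V$ is defined via $\mcalJ_4\mr v$, and nothing forces the witness for $\mcalJ_1\mr\mathcal U$ to pass through the same states; in $\mathcal U_0$ it does not, and the must-constraints of the $\mcalJ_4$-state (which force both $b$ and $c$) are not matched by $\ell^1$. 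Whether the statement ``$\sem{\mathcal U}\supseteq P$ implies $\mathcal L\mr\mathcal U$'' is even true for your fixed $\mathcal L$ is unclear; proving it would in any case require a global argument about $\mathcal U$, not a local one from finitely many named implementations.

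This is exactly the difficulty the paper's proof is engineered to avoid: it does not exhibit a fixed counterexample implementation, but one depending on the putative specification $\mcalD$. Since $\mcalD$ has finitely many initial states while $P$ contains the infinitely many pairwise non-bisimilar implementations $\mcalI_n$ (with signature $\emptyset^n\{b,c\}\emptyset^\omega$), two of them, $\mcalI_k$ and $\mcalI_l$, must refine the same initial state; Lemma~\ref{le:dmts1sumclosed} (closure of single-initial-state DMTS under nondeterministic sum) then forces $\mcalI_k+\mcalI_l\in\sem{\mcalD}$, and a signature argument shows $\mcalI_k+\mcalI_l\notin P$. To salvage your approach you would need a similar pigeonhole-plus-closure argument forcing some implementation outside $P$ into $\sem{\mathcal U}$; membership of finitely many specific elements of $P$ can never suffice.
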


\begin{proof}
  We show that there exist DMTS $\mcalS$ and $\mcalT$
  such that there is no DMTS $\mcalD$ with $\sem{\mcalD}= \sem \mcalS\| \sem
  \mcalT\coloneqq \{ \mcalI\| \mcalJ\mid \mcalI\in \sem{\mcalS}, \mcalJ\in \sem{ \mcalT}\}$.
  They are given in Figure~\ref{fi:composition_cex}; $\mcalS$ has initial
  state~$s$, while $\mcalT$ has initial state~$t$.  Note that in fact,
  $\mcalS$ and $\mcalT$ are MTS, \ie~no disjunctive must transitions are used.

  We make the following observations about implementations of $\mcalS$
  and $\mcalT$.  They always admit one or more infinite runs labeled
  with $a$'s with one-step $b$ or $c$ branches. Moreover, all infinite
  runs in these implementations are of this form. To each infinite
  $a$-run of an implementation we assign its signature, that is a~word
  over $2^{\{b,c\}}$ that describes which one-step branches are
  available at each step. This means that every implementation of
  $\mcalS$ has runs with signatures from
  $\{\{b\},\{b,c\}\}^\omega$, while every implementation of
  $\mcalT$ has runs with signatures from
  $\{\{c\},\{b,c\}\}^\omega$.

  We now construct an implementation state space as illustrated
  in~Figure~\ref{fi:composition_cex_impl}.  Consider the implementations
  $\mcalI_1, \mcalI_2, \dotsc$ that share the same state space and have the
  initial state $i_1, i_2, \dotsc$, respectively.  The implementation
  $\mcalI_n$ has only one $a$-run with the signature $\emptyset^n \{b,c\}
  \emptyset^\omega$.  Note that $\mcalI_n$ is the composition of an
  implementation of $\mcalS$ that has only one $a$-run with the signature
  $\{b\}^n \{b,c\} \{b\}^\omega$ and an implementation of $\mcalT$ that has
  only one $a$-run with the signature $\{c\}^n \{b,c\} \{c\}^\omega$.

  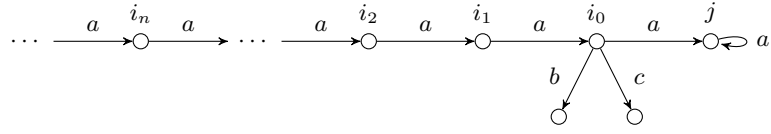
\begin{figure}[tbp]
    \centering
    \begin{tikzpicture}[->, >=stealth', font=\footnotesize,
      state/.style={ shape= circle, draw, initial text=, inner
        sep= .5mm, minimum size= 2mm}, scale= 1, on grid]
      \node[state,label=above:{$j$}] (j) {};
      \node[state,label=above:{$i_0$},left=1.5 of j] (i0) {};
      \node[state,below left=1 and 0.5 of i0] (b) {};
      \node[state,below right=1 and 0.5 of i0] (c) {};
      \node[state,label=above:{$i_1$},left=1.5 of i0] (i1) {};
      \node[state,label=above:{$i_2$},left=1.5 of i1] (i2) {};
      \node[left=1.5 of i2] (dots) {$\cdots$};
      \node[state,label=above:{$i_n$},left=1.5 of dots] (in) {};
      \node[left=1.5 of in] (dots2) {$\cdots$};

      \path
      (j) edge[must,loop right] node {$a$} (j)
      (i0) edge[must,swap] node[pos=0.75] {$b$} (b)
      (i0) edge[must] node[pos=0.75] {$c$} (c)
      (i0) edge[must] node {$a$} (j)
      (i1) edge[must] node {$a$} (i0)
      (i2) edge[must] node {$a$} (i1)
      (dots) edge[must] node {$a$} (i2)
      (in) edge[must] node {$a$} (dots)
      (dots2) edge[must] node {$a$} (in)
      ;
    \end{tikzpicture}
    \caption{%
      \label{fi:composition_cex_impl}
      Implementation state space in the proof
      of~Theorem~\ref{th:composition}}
  \end{figure}

  Assume now that there exists a~DMTS $\mcalD$ with $\sem{\mcalD}= \sem \mcalS\|
  \sem \mcalT$.  As all $\mcalI_n$ belong to $\sem{\mcalD}$ and there is only
  a~finite number of initial states of $\mcalD$, there has to be at least
  one initial state of $\mcalD$, say $d$, such that there exists a~modal
  refinement $R$ containing both $(i_k,d)$ and $(i_l,d)$ for some
  numbers $k < l$. Let $\mcalD_d$ be created from $\mcalD$ by changing the set
  of initial states to the singleton~$\{d\}$. As both $\mcalI_k\mr \mcalD_d$
  and $\mcalI_l\mr \mcalD_d$ and $\mcalD_d$ has only one initial state, we know
  by Lemma~\ref{le:dmts1sumclosed} that also $\mcalI_{kl} = \mcalI_k + \mcalI_l\mr
  \mcalD_d$. The unfolding of this implementation is illustrated in
  Figure~\ref{fi:composition_cex_impl_sum}.

  \begin{figure}
    \centering
    \begin{tikzpicture}[->, >=stealth', font=\footnotesize,
      state/.style={ shape= circle, draw, initial text=, inner
        sep= .5mm, minimum size= 2mm}, scale= 1, on grid]
      \node[state,label=above:{$\mcalI_{kl}$},initial] (I) {};
      \node[state,above right=1 and 1.5 of I] (ik1) {};
      \node[state,below right=1 and 1.5 of I] (il1) {};
      \node[right=1.5 of ik1] (ik2) {$\cdots$};
      \node[right=1.5 of il1] (il2) {$\cdots$};
      \node[state,right=1.5 of ik2] (ik3) {};
      \node[state,right=1.5 of il2] (il3) {};
      \node[right=1.5 of ik3] (ik4) {$\cdots$};
      \node[right=1.5 of il3] (il4) {$\cdots$};
      \node[state,right=1.5 of ik4] (ik5) {};
      \node[state,right=1.5 of il4] (il5) {};
      \node[right=1.5 of ik5] (ik6) {$\cdots$};
      \node[right=1.5 of il5] (il6) {$\cdots$};

      \node[state,above left=1 and 0.5 of ik3] (bk) {};
      \node[state,above right=1 and 0.5 of ik3] (ck) {};
      \node[state,below left=1 and 0.5 of il5] (bl) {};
      \node[state,below right=1 and 0.5 of il5] (cl) {};

      \path
      (I) edge[must,swap] node {$a$} (ik1)
      (I) edge[must] node {$a$} (il1)
      (ik1) edge[must,swap] node {$a$} (ik2)
      (il1) edge[must] node {$a$} (il2)
      (ik2) edge[must,swap] node {$a$} (ik3)
      (il2) edge[must] node {$a$} (il3)
      (ik3) edge[must,swap] node {$a$} (ik4)
      (il3) edge[must] node {$a$} (il4)
      (ik4) edge[must,swap] node {$a$} (ik5)
      (il4) edge[must] node {$a$} (il5)
      (ik5) edge[must,swap] node {$a$} (ik6)
      (il5) edge[must] node {$a$} (il6)

      (ik3) edge[must] node[pos=0.75] {$b$} (bk)
      (ik3) edge[must,swap] node[pos=0.75] {$c$} (ck)
      (il5) edge[must,swap] node[pos=0.75] {$b$} (bl)
      (il5) edge[must] node[pos=0.75] {$c$} (cl)

      ;
    \end{tikzpicture}
    \caption{%
      \label{fi:composition_cex_impl_sum}
      The nondeterministic sum of $\mcalI_k$ and $\mcalI_l$, unfolded}
  \end{figure}
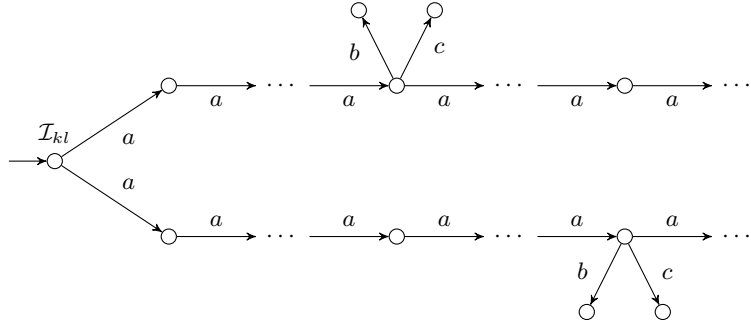

  We now argue that $\mcalI_{kl}\notin \sem \mcalS\| \sem \mcalT$.  We actually show
  that it cannot even be bisimilar to any $\mcalI \| \mcalJ$ with
  $\mcalI\in\sem{\mcalS}$ and $\mcalJ\in\sem{\mcalT}$.  Let us assume that there
  exist such $\mcalI$ and $\mcalJ$. We make the following observations:
  \begin{itemize}
  \item $\mcalI$ has to contain at least one $a$-run with signature
    $\{b\}^k\{b,c\}\{b\}^\omega$. Otherwise, it would be impossible to
    create the $\mcalI_k$ part of $\mcalI_{kl}$.
  \item $\mcalJ$ has to contain at least one $a$-run with signature
    $\{c\}^l\{b,c\}\{c\}^\omega$. Otherwise, it would be impossible to
    create the $\mcalI_l$ part of $\mcalI_{kl}$.
  \end{itemize}
  However, these observations mean that $\mcalI\| \mcalJ$ contains at least
  one $a$-run with signature $\emptyset^k\{ c\}\emptyset^{ l- k- 1}\{
  b\}\emptyset^\omega$. It is thus not bisimilar to $\mcalI_{kl}$. \qed
\end{proof}

Given that we cannot have~\eqref{eq:compcomplete}, the revised goal is
to have a \emph{sound} composition operator for which the right-to-left
inclusion holds in~\eqref{eq:compcomplete}.  For \NAA\ $\mcalA_1=( S_1,
S^0_1, \Tran_1)$, $\mcalA_2=( S_2, S^0_2, \Tran_2)$, we define $\mcalA_1\|
\mcalA_2=( S, S^0, \Tran)$ with $S= S_1\times S_2$, $S^0= S^0_1\times
S^0_2$, and for all $( s_1, s_2)\in S$, $\Tran( s_1, s_2)=\{ M_1\|
M_2\mid M_1\in \Tran_1( s_1), M_2\in \Tran_2( s_2)\}$, where $M_1\|
M_2=\{( a,( t_1, t_2))\mid( a, t_1)\in M_1,( a, t_2)\in M_2\}$.
Composition for DMTS is defined using the translations to and from \NAA;
note that this may incur an exponential blow-up.

\begin{lemma}
  \label{le:bfs||prop}
  Up to $\mreq$, the operator $\|$ on \NAA\ is associative and commutative,
  distributes over $\lor$, and has unit $\mathsf{U}$, where $\mathsf{U}$
  is the LTS $(\{ s\}, s, \omust)$ with $s\DMTSmust a s$ for all $a\in
  \Sigma$.
\end{lemma}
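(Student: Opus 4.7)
\bigskip

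The plan is to verify each of the four claims (commutativity, associativity, distributivity over $\lor$, and the unit law) by exhibiting a canonical bijective correspondence between state spaces and checking that it serves as a modal refinement in both directions in the sense of Definition~\ref{def:aaref}. In each case, the heavy lifting is done by the corresponding set-theoretic identity for the label-set operator $M_1 \| M_2 = \{(a,(t_1,t_2)) \mid (a,t_1) \in M_1,\,(a,t_2) \in M_2\}$, so the real content of the proof is showing that these identities lift correctly through the $\Tran$-mappings.

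For \emph{commutativity} $\mcalA_1 \| \mcalA_2 \mreq \mcalA_2 \| \mcalA_1$, I would use the swap relation $R = \{((s_1,s_2),(s_2,s_1))\}$. The key verification is that for any $M_1 \| M_2 \in \Tran(s_1,s_2)$, its tuple-swap equals $M_2 \| M_1 \in \Tran(s_2,s_1)$, and this single $M'$ satisfies both conditions of~\eqref{eq:aaref} via the pairing $(a,(t_1,t_2)) \leftrightarrow (a,(t_2,t_1))$. For \emph{associativity}, I would use the reassociation bijection $((s_1,s_2),s_3) \leftrightarrow (s_1,(s_2,s_3))$; the defining identity $(M_1 \| M_2) \| M_3 = M_1 \| (M_2 \| M_3)$ (up to tuple reassociation) does the work, and this identity extends to an equality of $\Tran$-sets by the form of the definition.

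For the \emph{unit} $\mcalA \| \mathsf{U} \mreq \mcalA$, the crucial point is that $\mathsf{U}$, viewed as an \NAA\ implementation (via the LTS embedding), has $\Tran_{\mathsf{U}}(s) = \{M_{\mathsf{U}}\}$ with $M_{\mathsf{U}} = \{(a,s) \mid a \in \Sigma\}$. Hence for any $M \in \Tran(s_1)$ we have $M \| M_{\mathsf{U}} = \{(a,(t,s)) \mid (a,t) \in M\}$, which is canonically in bijection with $M$ itself. I would then verify that the projection $R = \{((s_1,s),s_1)\}$ and its inverse are modal refinements by reading off~\eqref{eq:aaref} through this bijection. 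For \emph{distributivity} $\mcalA_1 \| (\mcalA_2 \lor \mcalA_3) \mreq (\mcalA_1 \| \mcalA_2) \lor (\mcalA_1 \| \mcalA_3)$, the state spaces literally coincide, being $S_1 \times (S_2 \cupplus S_3) = (S_1 \times S_2) \cupplus (S_1 \times S_3)$, as do the initial-state sets, and the $\Tran$-constraints at each state agree on the nose (a state $(s_1,s_i)$ only ``sees'' transitions in $\mcalA_i$ for its side); the identity relation is then a two-sided modal refinement, in close analogy with the distributivity argument in Corollary~\ref{co:distlat}.

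The expected obstacle is not conceptual but bookkeeping: care is needed in the unit case to distinguish between $\mathsf{U}$ viewed as an LTS and as an \NAA\ (so that $\Tran_{\mathsf{U}}$ is the \emph{singleton} transition-constraint set $\{M_{\mathsf{U}}\}$, not some larger family), and to check that the requirement in~\eqref{eq:aaref} that each $M$ in $\Tran_1$ be matched by \emph{some} $M'$ in $\Tran_2$ (and vice versa) is satisfied in both directions of the equivalence. Once that is set up correctly, all four verifications reduce to immediate applications of the set-theoretic identities for $\|$ on label sets.
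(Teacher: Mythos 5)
Your proposal is correct and follows essentially the same route as the paper's own proof: associativity and commutativity via the evident bijections (which the paper dismisses as clear), distributivity by observing that both sides have the identical state space, initial states and $\Tran$-sets so that the identity relation is a two-sided refinement, and the unit law via the projection relation $\{((s,u),s)\}$ together with the observation that $\Tran(s,u)=\{M_1\|(\Sigma\times\{u\})\mid M_1\in\Tran(s)\}$ with each composed set in canonical bijection with $M_1$. Your explicit attention to $\mathsf{U}$'s singleton transition constraint is exactly the point the paper's argument rests on.
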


\begin{proof}
  Associativity and commutativity are clear.  To show distributivity
  over $\lor$, let $\mcalA_i=( S_i, S_i^0, \Tran_i)$, for $i= 1, 2, 3$, be
  \NAA. We prove that $\mcalA_1\|( \mcalA_1\lor \mcalA_3)\mreq \mcalA_1\| \mcalA_2\lor
  \mcalA_1\| \mcalA_3$; right-distributivity will follow by commutativity.
  The state spaces of both sides are $S_1\times S_2\cup S_1\times S_3$,
  and it is easily verified that the identity relation is a two-sided
  modal refinement.

  For the claim that $\mcalA\| \mathsf{U}\mreq \mcalA$ for all \NAA\ $\mcalA=( S,
  S^0, \Tran)$, let $u$ be the unique state of $\mathsf{U}$ and define
  $R=\{(( s, u), s)\mid s\in S\}\subseteq S\times \mathsf{U} \times S$.
  We show that $R$ is a two-sided modal refinement.  Let $(( s, u),
  s)\in R$ and $M\in \Tran( s, u)$, then there must be $M_1\in \Tran(
  s)$ for which $M= M_1\|( \Sigma\times\{ u\})$.  Thus $M_1=\{( a,
  t)\mid( a,( t, u))\in M\}$.  Then any element of $M$ has a
  corresponding one in $M_1$, and vice versa, and their states are
  related by $R$.  For the other direction, let $M_1\in \Tran( s)$, then
  $M= M_1\|( \Sigma\times\{ u\})=\{( a,( t, u))\mid( a, t)\in M_1\}\in
  \Tran( s, u)$, and the same argument applies. \qed
\end{proof}

The next theorem is one of \emph{independent implementability}, as it
ensures that a~composition of refinements is a refinement of
compositions:

\begin{theorem}
  \label{th:indimp}
  For all specifications $\mcalS_1$, $\mcalS_2$, $\mcalS_3$, $\mcalS_4$, $\mcalS_1\mr
  \mcalS_3$ and $\mcalS_2\mr \mcalS_4$ imply $\mcalS_1\| \mcalS_2\mr \mcalS_3\| \mcalS_4$.
\end{theorem}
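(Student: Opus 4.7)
The plan is to prove the statement at the level of \NAA\ and then transfer it to the other three formalisms using the structural equivalences of Section~\ref{sec:struct}. This is natural because composition $\|$ is defined directly for \NAA, while DMTS composition is defined via the round-trip $\ad\circ \da$ (and likewise one can lift $\|$ to $\nu$-calculus expressions and $\mcalL$-expressions via $\nd\circ \dn$ and $\la\circ \al$).

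Write $\mcalA_i=(S_i, S_i^0, \Tran_i)$ for $i=1,2,3,4$, and assume witnesses $R_{13}\subseteq S_1\times S_3$ for $\mcalA_1\mr \mcalA_3$ and $R_{24}\subseteq S_2\times S_4$ for $\mcalA_2\mr \mcalA_4$. Define the product relation
\begin{equation*}
  R=\{((s_1,s_2),(s_3,s_4))\mid (s_1,s_3)\in R_{13},\ (s_2,s_4)\in R_{24}\}\subseteq (S_1\times S_2)\times(S_3\times S_4).
\end{equation*}
Initialisation is immediate: any $(s_1^0,s_2^0)\in S_1^0\times S_2^0$ is matched via $R_{13}$, $R_{24}$ by some $(s_3^0,s_4^0)\in S_3^0\times S_4^0$ with $((s_1^0,s_2^0),(s_3^0,s_4^0))\in R$.

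The key step is to verify Definition~\ref{def:aaref} for $R$. Pick $((s_1,s_2),(s_3,s_4))\in R$ and $M\in \Tran(s_1,s_2)$. By the definition of $\|$ on \NAA, $M=M_1\| M_2$ for some $M_1\in \Tran_1(s_1)$ and $M_2\in \Tran_2(s_2)$. Applying the refinement $R_{13}$ to $M_1$ yields $M_3\in \Tran_3(s_3)$ satisfying the two matching conditions of \eqref{eq:aaref}; similarly $R_{24}$ yields $M_4\in \Tran_4(s_4)$ matching $M_2$. Then $M_3\| M_4\in \Tran(s_3,s_4)$ by construction, and we check the two conditions of~\eqref{eq:aaref} for the pair $M, M_3\| M_4$: given $(a,(t_1,t_2))\in M$ we have $(a,t_1)\in M_1$ and $(a,t_2)\in M_2$, hence $(a,t_3)\in M_3$ with $(t_1,t_3)\in R_{13}$ and $(a,t_4)\in M_4$ with $(t_2,t_4)\in R_{24}$, so $(a,(t_3,t_4))\in M_3\| M_4$ and $((t_1,t_2),(t_3,t_4))\in R$. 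The reverse direction, starting from $(a,(t_3,t_4))\in M_3\| M_4$, is entirely symmetric, using the other halves of the conditions supplied by $R_{13}$ and $R_{24}$.

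The only subtle point, and what I would regard as the main obstacle, is the bookkeeping in the previous paragraph: one must keep the two sides of \eqref{eq:aaref} straight, because the CSP-style synchronisation $M_1\| M_2=\{(a,(t_1,t_2))\mid (a,t_1)\in M_1,(a,t_2)\in M_2\}$ forces us to use both matching directions of each of $R_{13}$ and $R_{24}$ simultaneously. Once this is done, the \NAA\ case is settled; Theorems~\ref{th:dmtsnu}, \ref{th:aahyb} and~\ref{th:dmtsaa} then transport the result to DMTS, the modal $\nu$-calculus (in normal form), and hybrid modal logic, since all six translations preserve and reflect $\mr$ and the composition operators are defined to commute with the translations (up to $\mreq$, which is what the statement needs).
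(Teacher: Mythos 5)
Your proof is correct, but it takes a genuinely different route from the paper. The paper argues algebraically: from $\mcalS_1\mr \mcalS_3$ it gets $\mcalS_1\lor \mcalS_3\mreq \mcalS_3$ (via Theorem~\ref{th:condis}), expands $(\mcalS_1\lor \mcalS_3)\|(\mcalS_2\lor \mcalS_4)$ using distributivity of $\|$ over $\lor$ (Lemma~\ref{le:bfs||prop}), and reads off $\mcalS_1\|\mcalS_2\mr \mcalS_3\|\mcalS_4$ from the lattice structure --- so independent implementability appears there as a formal consequence of monotonicity of $\|$ with respect to joins, with no need to unfold the definitions of $\|$ or of refinement. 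You instead construct the witnessing relation explicitly as the product $R=R_{13}\times R_{24}$ on $(S_1\times S_2)\times(S_3\times S_4)$ and verify Definition~\ref{def:aaref} directly, choosing $M_3\|M_4$ as the match for $M_1\|M_2$; your bookkeeping of the two directions of \eqref{eq:aaref} is exactly right, and initialisation is handled correctly. Each approach buys something: the paper's is shorter and exposes the structural reason (distributivity) behind the result, while yours is self-contained (it needs neither Theorem~\ref{th:condis} nor Lemma~\ref{le:bfs||prop}) and exhibits a concrete refinement relation --- which is precisely the construction that survives the passage to the quantitative setting, where the product of two refinement \emph{families} proves Theorem~\ref{th:indimp-q} (and, earlier, Theorem~\ref{weightedmodal.th:indepimp}); the algebraic argument does not transfer there as directly. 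Your remark on transporting the result from \NAA\ to the other formalisms via Theorems~\ref{th:dmtsnu}, \ref{th:aahyb} and~\ref{th:dmtsaa} is also sound, since DMTS composition is by definition the translation of the \NAA\ one.
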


\begin{proof}
  Let $\mcalS_1\mr \mcalS_3$ and $\mcalS_2\mr \mcalS_4$, then $\mcalS_1\lor \mcalS_3\mreq
  \mcalS_3$ and $\mcalS_2\lor \mcalS_4\mreq \mcalS_4$.  By distributivity,
  \begin{align*}
    \mcalS_3\| \mcalS_4 &\mreq ( \mcalS_1\lor \mcalS_3)\|( \mcalS_2\lor \mcalS_4) \\
    &\mreq \mcalS_1\| \mcalS_2\lor \mcalS_1\| \mcalS_4\lor \mcalS_3\| \mcalS_2\lor
      \mcalS_3\| \mcalS_4\,,
  \end{align*}
  thus
  \begin{equation*}
    \mcalS_1\| \mcalS_2\lor \mcalS_1\| \mcalS_4\lor \mcalS_3\| \mcalS_2\mr \mcalS_3\|
    \mcalS_4\,.
  \end{equation*}
  But
  \begin{equation*}
    \mcalS_1\| \mcalS_2\mr \mcalS_1\| \mcalS_2\lor \mcalS_1\| \mcalS_4\lor \mcalS_3\|
    \mcalS_2\,,
  \end{equation*}
  finishing the argument. \qed
\end{proof}

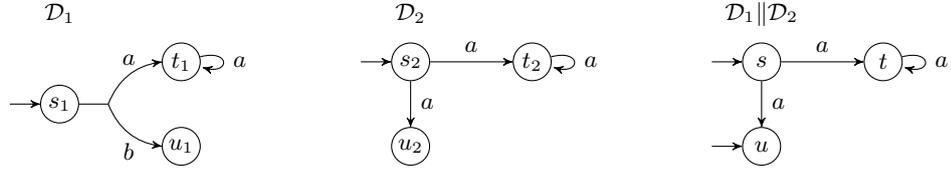
\begin{figure}
  \centering
  \begin{tikzpicture}[->, >=stealth', font=\footnotesize,
    state/.style={shape=circle, draw, initial text=,inner
      sep=.3mm,minimum size=5mm}, scale=.8]
    \begin{scope}
      \node at (0,1.5) {$\mcalD_1$};
      \node[state,initial] (X) at (0,0) {$s_1$};
      \node[state] (Z) at (2,.7) {$t_1$};
      \node[state] (Y) at (2,-.7) {$u_1$};
      \coordinate (XX) at (.8,0);
      \path[-] (X) edge (XX);
      \path (XX) edge[bend left] node[above]{$a$} (Z);
      \path (XX) edge[bend right] node[below]{$b$} (Y);
      \path (Z) edge[loop right] node[right]{$a$} (Z);
    \end{scope}
    \begin{scope}[xshift=15em]
      \node at (0,1.5) {$\mcalD_2$};
      \node[state,initial] (X) at (0,.7) {$s_2$};
      \node[state] (Z) at (2,.7) {$t_2$};
      \node[state] (Y) at (0,-.7) {$u_2$};
      \path (X) edge node[above]{$a$} (Z);
      \path (X) edge node[right]{$a$} (Y);
      \path (Z) edge[loop right] node[right]{$a$} (Z);
    \end{scope}
    \begin{scope}[xshift=30em]
      \node at (0,1.5) {$\mcalD_1\| \mcalD_2$};
      \node[state, initial] (X) at (0,.7) {$s$};
      \node[state] (Z) at (2,.7) {$t$};
      \node[state, initial] (Y) at (0,-.7) {$u$};
      \path (X) edge node[above]{$a$} (Z);
      \path (X) edge node[right]{$a$} (Y);
      \path (Z) edge[loop right] node[right]{$a$} (Z);
    \end{scope}
  \end{tikzpicture}
  \caption{%
    \label{fi:comp}
    Two DMTS and the reachable parts of the DMTS translation of their
    composition.  Here, $s=\{( a,( t_1, t_2)),( a,( t_1, u_2))\}$,
    $t=\{( a,( t_1, t_2))\}$ and $u= \emptyset$}
\end{figure}

\begin{example}
  An example of composition is shown in Figure~\ref{fi:comp}.  Here the
  DMTS translation of $\mcalD_1\| \mcalD_2$ has two initial states; it can be
  shown that no DMTS with a single initial state is thoroughly
  equivalent.
\end{example}

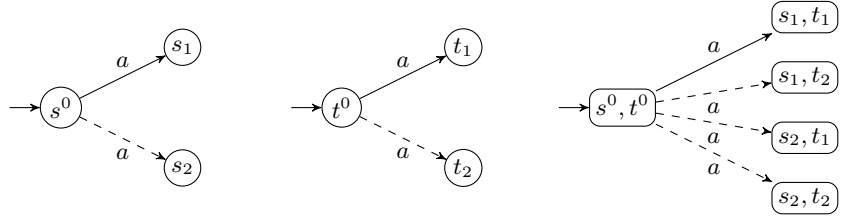
\begin{figure}
  \centering
  \begin{tikzpicture}[->, >=stealth', font=\footnotesize,
    state/.style={ shape= circle, draw, initial text=, inner
      sep= .5mm, minimum size= 2mm}, scale= .8, on grid]
    \begin{scope}
      \node[state,initial] (s) at (0,0) {$s^0$};
      \node[state] (s1) at (2,1) {$s_1$};
      \node[state] (s2) at (2,-1) {$s_2$};
      \path (s) edge[must] node[above] {$a$} (s1);
      \path (s) edge[may] node[below] {$a$} (s2);
    \end{scope}
    \begin{scope}[xshift=12em]
      \node[state,initial] (s) at (0,0) {$t^0$};
      \node[state] (s1) at (2,1) {$t_1$};
      \node[state] (s2) at (2,-1) {$t_2$};
      \path (s) edge[must] node[above] {$a$} (s1);
      \path (s) edge[may] node[below] {$a$} (s2);
    \end{scope}
    \begin{scope}[xshift=24em, state/.style={ shape= rectangle, rounded
        corners, draw, initial text=, inner sep= .8mm, minimum size=
        2mm}]
      \node[state,initial] (st) at (0,0) {$s^0, t^0$};
      \node[state] (s1t1) at (3,1.5) {$s_1, t_1$};
      \node[state] (s1t2) at (3,.5) {$s_1, t_2$};
      \node[state] (s2t1) at (3,-.5) {$s_2, t_1$};
      \node[state] (s2t2) at (3,-1.5) {$s_2, t_2$};
      \path (st) edge[must] node[above] {$a$} (s1t1);
      \path (st) edge[may] node[below] {$a$} (s1t2);
      \path (st) edge[may] node[below] {$a$} (s2t1);
      \path (st) edge[may] node[below] {$a$} (s2t2);
    \end{scope}
  \end{tikzpicture}
  \caption{%
    \label{fi:mtscompvsaa}
    Two MTS and their MTS composition according
    to~\cite{DBLP:conf/avmfss/Larsen89}}
\end{figure}

Remark that \NAA\ composition is more precise than the composition for
MTS introduced in~\cite{DBLP:conf/avmfss/Larsen89}.  The MTS composition
is given by the following rules: $(s_1,s_2) \DMTSmay{a} (t_1,t_2)$ whenever
$s_1 \DMTSmay{a} t_1$ and $s_2 \DMTSmay{a} t_2$, $(s_1,s_2) \DMTSmust{a} (t_1,t_2)$
whenever $s_1 \DMTSmust{a} t_1$ and $s_2 \DMTSmust{a} t_2$.  The difference
between the two compositions is illustrated in
Figure~\ref{fi:mtscompvsaa}. The figure shows two MTS and their MTS
composition; for their \NAA\ composition,
\begin{multline}
  \label{eq:mtscompvsaa}
  \Tran( s^0, t^0)=\big\{\{( a,( s_1, t_1))\}, \\
  \{( a,( s_1, t_1)),( a,( s_1, t_2))\},\{( a,( s_1, t_1)),( a,( s_2,
  t_1))\}, \\
  \{( a,( s_1, t_1)),( a,( s_1, t_2)),( a,( s_2, t_1)),( a,( s_2,
  t_2))\}\big\}\,.
\end{multline}
The \NAA\ translation of their MTS composition has eight transition
constraints instead of four; note how the four constraints
in~\eqref{eq:mtscompvsaa} precisely correspond to the four
implementation choices for $s^0$ and $t^0$.

It can easily be shown that generally, \NAA\ composition is a refinement
of MTS composition.  The following lemma shows a~stronger relationship,
namely that the MTS composition is a~conservative approximation of the
\NAA composition.

\begin{lemma}\label{lem:mts-naa-comp}
  Let $\mcalM_1$, $\mcalM_2$, $\mcalM_3$ be MTS and let $\pM$ and $\pA$ be the
  MTS and \NAA composition, respectively. It holds that
  $\mcalM_1 \pM \mcalM_2 \mr \mcalM_3$ iff $\mcalM_1 \pA \mcalM_2 \mr \mcalM_3$.
\end{lemma}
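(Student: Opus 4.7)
The plan is to show both implications using the \emph{same} witness relation $R \subseteq (S_1\times S_2)\times S_3$; the key bookkeeping is to keep track of the relationship between the \NAA\ transition constraints at a composite state $(s_1,s_2)$ under $\pA$ and under $\pM$. Set
\begin{align*}
  A_i &= \{(a,t_i)\mid s_i \DMTSmust a t_i\}, &
  B_i &= \{(a,t_i)\mid s_i \DMTSmay a t_i\} \qquad (i=1,2,3),
\end{align*}
and let $A = A_1\| A_2$, $B = B_1\| B_2$. Then the \NAA\ translation of $\mcalM_1\pM \mcalM_2$ has $\Tran_{\pM}(s_1,s_2) = \{M \mid A\subseteq M\subseteq B\}$ (by the MTS composition rules and the standard MTS-to-\NAA\ translation), while $\Tran_{\pA}(s_1,s_2) = \{M_1\|M_2 \mid A_i\subseteq M_i\subseteq B_i\}$. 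The crucial arithmetic observation, which I would check first, is that $\Tran_{\pA}(s_1,s_2)\subseteq \Tran_{\pM}(s_1,s_2)$, and that both distinguished elements $A$ and $B$ lie in $\Tran_{\pA}(s_1,s_2)$ (since $A = A_1\|A_2$ and $B = B_1\|B_2$).

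\paragraph{Forward direction ($\pM\Rightarrow\pA$).}
Given a witness $R$ for $\mcalM_1\pM\mcalM_2 \mr \mcalM_3$, I claim the same $R$ witnesses $\mcalM_1\pA\mcalM_2\mr \mcalM_3$ in the \NAA\ sense. Take $((s_1,s_2),s_3)\in R$ and $M = M_1\|M_2\in \Tran_{\pA}(s_1,s_2)$. Build $M'\in \Tran_3(s_3)$ as follows: start with $A_3$; for each $(a,(t_1,t_2))\in M$ we have $(s_1,s_2)\DMTSmay{a}(t_1,t_2)$ in $\pM$, so by MTS refinement pick some $s_3\DMTSmay{a}t_3$ with $((t_1,t_2),t_3)\in R$ and add $(a,t_3)$ to $M'$. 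Then $A_3\subseteq M'\subseteq B_3$, hence $M'\in \Tran_3(s_3)$. The forward matching condition of Definition~\ref{def:aaref} holds by construction. For the reverse condition, elements of $M'$ added in the second step are immediately matched; for elements $(a,t_3)\in A_3$ use that $s_3\DMTSmust{a}t_3$ forces, by MTS refinement, the existence of $(s_1,s_2)\DMTSmust{a}(t_1,t_2)$ with $((t_1,t_2),t_3)\in R$, and then $(a,(t_1,t_2))\in A\subseteq M$.

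\paragraph{Backward direction ($\pA\Rightarrow\pM$).}
Given a witness $R$ for the \NAA\ refinement $\mcalM_1\pA\mcalM_2\mr\mcalM_3$, I claim $R$ witnesses MTS refinement $\mcalM_1\pM\mcalM_2\mr\mcalM_3$. Here the leverage is that $A,B\in \Tran_{\pA}(s_1,s_2)$, so for $((s_1,s_2),s_3)\in R$ we obtain matching constraints $M^A_3,M^B_3\in \Tran_3(s_3)$, which in particular satisfy $A_3\subseteq M^A_3$ and $M^B_3\subseteq B_3$. Now check the two MTS clauses: for $(s_1,s_2)\DMTSmay{a}(t_1,t_2)$, i.e.\ $(a,(t_1,t_2))\in B$, the \NAA\ matching of $B$ with $M^B_3$ yields $(a,t_3)\in M^B_3\subseteq B_3$ with $((t_1,t_2),t_3)\in R$, so $s_3\DMTSmay{a}t_3$; conversely, for $s_3\DMTSmust{a}t_3$, i.e.\ $(a,t_3)\in A_3\subseteq M^A_3$, the reverse matching of $A$ with $M^A_3$ yields $(a,(t_1,t_2))\in A$ with $((t_1,t_2),t_3)\in R$, and $(a,(t_1,t_2))\in A$ precisely says $(s_1,s_2)\DMTSmust{a}(t_1,t_2)$ in the MTS composition.

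\paragraph{Main obstacle.}
Neither direction is genuinely hard once the characterisations of $\Tran_{\pM}$ and $\Tran_{\pA}$ are in place; the real care is in verifying the arithmetic identity $A = A_1\|A_2$, $B = B_1\|B_2$ and the inclusion $\Tran_{\pA}\subseteq \Tran_{\pM}$, since this is what allows the reverse direction to extract two well-chosen NAA constraints that simultaneously control the may- and must-behaviour of the MTS composition. Initialisation of $R$ transfers trivially in both directions, so no separate argument is needed there.
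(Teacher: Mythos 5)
Your proof is correct and follows essentially the same route as the paper's: both rest on the observation that $\Tran_{\pA}(s_1,s_2)\subseteq\Tran_{\pM}(s_1,s_2)$ and that these sets share the minimal element $A_1\|A_2$ and maximal element $B_1\|B_2$, which are then used to discharge the must- and may-clauses respectively in the backward direction. The only cosmetic difference is that the paper obtains the forward direction by first establishing $\mcalM_1\pA\mcalM_2\mr\mcalM_1\pM\mcalM_2$ from the inclusion of transition constraints and then invoking transitivity, whereas you construct the matching constraint $M'$ directly.
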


\begin{proof}
  Let $\mcalM_i = (S^0_i,\{s^0_i\},\omay_i,\omust_i)$ be MTS for
  $i = 1,2,3$.  In the following, we use the notation $s_1\pA s_2$ to
  denote the states $(s_1,s_2)$ of $\mcalM_1\pA\mcalM_2$ and similarly for
  $\pM$.

  For an~MTS translated into \NAA, the $\Tran$ sets have a~special
  structure, namely, for all states $s$, $\Tran(s)$ always has
  a~maximal element $\{ (a,t) \mid s \DMTSmay{a} t \}$ and a~minimal
  element $\{ (a,t) \mid s \DMTSmust{a} t \}$ (with respect to~set
  inclusion; \cf~Lemma~\ref{le:dmtstobfsspecial} for the similar
  property for DMTS).  Furthermore, we note that
  $\Tran(s_1\pA s_2) \subseteq \Tran(s_1\pM s_2)$ and, moreover,
  $\Tran(s_1\pA s_2)$ also has a~minimal and a~maximal element and
  these elements correspond to the minimal and maximal element of
  $\Tran(s_1\pM s_2)$.

  The fact that $\mcalM_1\pA\mcalM_2 \mr \mcalM_1\pM\mcalM_2$ follows from the
  observation that $\Tran(s_1\pA s_2) \subseteq \Tran(s_1\pM s_2)$ for
  all $( s_1, s_2)$.  This proves the `only-if' part of the
  lemma.

  To prove the `if' part of the lemma, we let
  $R = \{ (s_1\pM s_2, s_3) \mid s_1\pA s_2 \mr s_3 \}$ and show that
  it is a~modal refinement relation witnessing
  $\mcalM_1\pM\mcalM_2\mr\mcalM_3$.  Let $(s_1\pM s_2, s_3) \in R$.
  \begin{itemize}
  \item Let $s_1\pM s_2 \DMTSmay{a} t_1\pM t_2$. Then $(a,t_1\pM t_2)$
    belongs to the maximal element of $\Tran(s_1\pM s_2)$, which is
    also in $\Tran(s_1\pA s_2)$. Due to $s_1\pA s_2 \mr s_3$ we have
    some $N \in \Tran(s_3)$ with $(a,t_3) \in N$ such that
    $t_1\pA t_2 \mr t_3$. Thus $s_3 \DMTSmay{a} t_3$ and
    $(t_1\pM t_2, t_3) \in R$.
  \item Let $s_3 \DMTSmust{a} t_3$. Then all elements of $\Tran(s_3)$
    contain $(a,t_3)$. If we now chose the minimal element
    $M \in \Tran(s_1\pA s_2)$ then it has to contain $(a,t_1\pA t_2)$
    such that $(t_1\pA t_2\mr t_3)$.  This means that
    $s_1\pM s_2 \DMTSmust{a} t_1\pM t_2$ and
    $(t_1\pM t_2, t_3) \in R$. \qed
  \end{itemize}
\end{proof}

\subsection{Quotient}

The quotient operator for a specification theory is used to synthesise
specifications for components of a composition.  Hence it is to have the
property, for all specifications $\mcalS$, $\mcalS_1$ and all implementations
$\mcalI_1$, $\mcalI_2$, that
\begin{equation}
  \label{eq:quotient}
  \mcalI_1\in \sem{ \mcalS_1} \text{ and } \mcalI_2\in \sem{ \mcalS/ \mcalS_1}
  \text{ imply } \mcalI_1\| \mcalI_2\in \sem \mcalS.
\end{equation}
Furthermore, $\mcalS/ \mcalS_1$ is to be as permissive as possible.

\subsubsection{Quotient for MTS}

Before we describe the general construction of the quotient,
we start with a~simpler construction that works for the important
special case of MTS.
However, MTS are not closed under quotient,
\cf~\cite[Thm.~5.5]{DBLP:conf/concur/Larsen90}; we show that the
quotient of two MTS will generally be a DMTS.

Recall that MTS have only one initial state and all their must
transitions are singletons.
Let $\mcalM_1 = ( S_1, s_1^0, \omay_1, \omust_1)$ and $\mcalM_2 = ( S_2,
s_2^0, \omay_2, \omust_2)$ be MTS.  We define $\mcalM_1/ \mcalM_2=( S, s^0,
\omay, \omust)$ with $S=2^{ S_1\times S_2}$, $s^0=\{( s_1^0, s_2^0)\}$,
and the transition relations given as follows.

For $s = \{(s^1_1, s^1_2), \ldots, (s^n_1,s^n_2)\} \in S$ we say that
$a \in \Sigma$ is \emph{permissible from $s$} if for all
$i = 1, \ldots, n$ either $s^i_1 \DMTSmay{a}$ or $s^i_2 \notmay{a}$.

For $a$ permissible from $s$ and $i \in \{1, \ldots, n\}$, let
$\{t^{i,1}_2, \ldots, t^{i,m_i}_2\} = \{ t_2 \in S_2 \mid s^i_2
\DMTSmay{a} t_2\}$
be an enumeration of the possible states in $S_2$ after an
$a$-transition from $s^i_2$.  We then define the set of \emph{possible
  transitions} from $s$ under $a$ as
$\postra[a]{s}=\big\{\{(t_1^{ i, j}, t_2^{ i, j}) \mid i= 1,\dots, n,
j= 1,\dots, m_i\}\bigmid \forall i, j: s_1^i \DMTSmay{a} t_1^{i,j}\big\}$.

The transitions of $s$ are now given as follows:
for every $a$ permissible from $s$ and every $t \in \postra[a]{s}$,
let $s \DMTSmay{a} t$. Furthermore,
for every $s^i_1 \DMTSmust{a} t_1$ let
$s \DMTSmust{} \{ (a,M) \in \{a\} \times \postra[a]{s} \mid
\exists t_2 : (t_1,t_2) \in M, s^i_2 \DMTSmust{a} t_2 \}$.

Note that as a special case we obtain
$\emptyset \DMTSmay{a} \emptyset$ for all $a \in \Sigma$
and there are no must transitions from $\emptyset$.

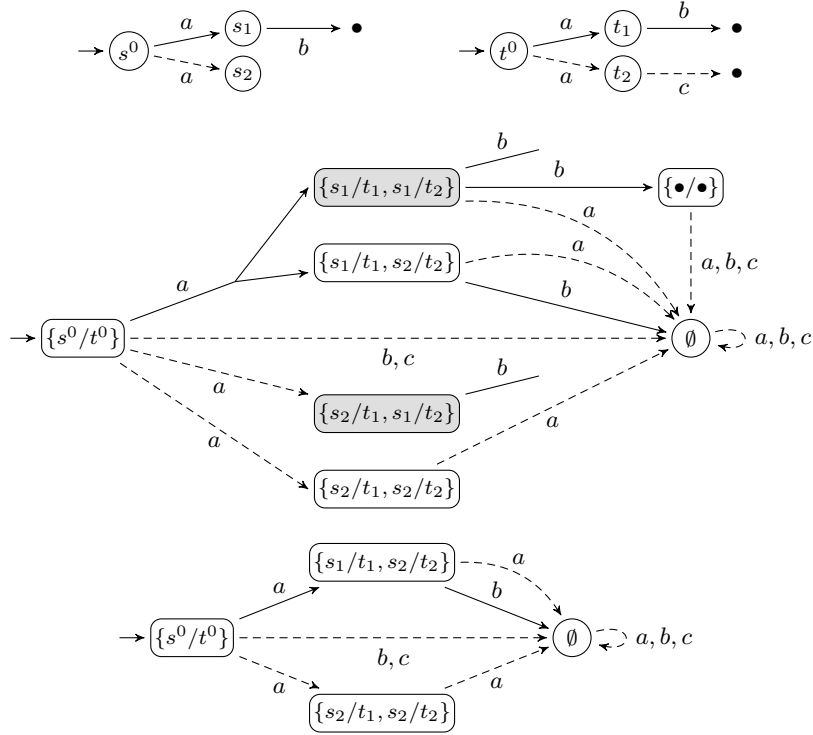
\begin{figure}[t]
  \centering
  \begin{tikzpicture}[x=1.5cm,y=0.6cm,font=\footnotesize,
    ->,>=stealth',
    state/.style={shape=circle,draw,font=\scriptsize,inner sep=.5mm,outer
      sep=0.8mm, minimum size=0.3cm,initial text=,initial
      distance=2ex}]
    \begin{scope}
      \node[state,initial] (s) at (0,0) {$t^0$};
      \node[state] (s1) at (1,0.5) {$t_1$};
      \path (s)	edge [->] node[above]{$a$}	(s1);
      \node[state] (s2) at (1,-0.5) {$t_2$};
      \path (s)	edge [->,densely dashed] node[below]{$a$}	(s2);
      \node (end) at (2,0.5) {$\bullet$};
      \path (s1)	edge [->] node[above]{$b$}	(end);
      \node (end) at (2,-0.5) {$\bullet$};
      \path (s2)	edge [->,densely dashed] node[below]{$c$}	(end);
    \end{scope}
    \begin{scope}[xshift=-5cm]
      \node[state,initial] (s) at (0,0) {$s^0$};
      \node[state] (s1) at (1,0.5) {$s_1$};
      \path (s)	edge [->] node[above]{$a$}	(s1);
      \node[state] (s2) at (1,-0.5) {$s_2$};
      \path (s)	edge [->,densely dashed] node[below]{$a$}	(s2);
      \node (end) at (2,0.5) {$\bullet$};
      \path (s1) edge [->] node[below]{$b$}	(end);
    \end{scope}
  \end{tikzpicture}\medskip

  \begin{tikzpicture}[x=4cm,y=1cm,font=\footnotesize,
    ->,>=stealth',
    state/.style={shape=circle,draw,font=\scriptsize,inner sep=.5mm,outer
      sep=0.8mm, minimum size=0.5cm,initial text=,initial
      distance=2ex,rectangle,rounded corners}]
    \node[state,initial] (s) at (0,0) {$\{s^0/ t^0\}$};
    \node[state,fill=gray!25] (s11) at (1,2) {$\{s_1/ t_1,s_1/ t_2\}$};
    \node[state] (s1) at (1,1) {$\{s_1/ t_1,s_2/ t_2\}$};
    \coordinate (amust) at (0.5,0.75);
    \path (s) edge[-] node[above]{$a$} (amust);
    \path (amust) edge[->] (s11.west);
    \path (amust) edge[->] (s1);
    \node[state] (end2) at (2,2) {$\{\bullet/\bullet\}$};
    \path (s11) edge[->] node[above] {$b$} (end2);
    \coordinate (bmust) at (1.5,2.5);
    \path (s11) edge[-] node[above]{$b$} (bmust);
    \node[state,fill=gray!25] (s21) at (1,-1) {$\{s_2/ t_1,s_1/ t_2\}$};
    \path (s)	edge [->,densely dashed] node[below]{$a$}	(s21);
    \coordinate (bmust2) at (1.5,-0.5);
    \path (s21) edge[-] node[above]{$b$} (bmust2);
    \node[state] (s2) at (1,-2) {$\{s_2/ t_1,s_2/ t_2\}$};
    \path (s)	edge [->,densely dashed] node[below]{$a$}	(s2.west);
    \node[state,shape=circle] (end) at (2,0) {$\emptyset$};
    \path (s11.350)	edge [->,bend left,densely dashed] node[above]{$a$}	(end);
    \path (s1)	edge [->] node[above]{$b$}	(end);
    \path (s1.east)	edge [->,bend left,densely dashed]
    node[above]{$a$}	(end);
    \path (s2)	edge [->,densely dashed] node[below]{$a$}	(end);
    \path (s)	edge [->,densely dashed] node[below]{$b,c$}	(end);
    \path (end)	edge [loop right,->,densely dashed]
    node[right]{$a, b, c$} (end);
    \path (end2) edge[->,densely dashed] node[right]{$a,b,c$} (end);
  \end{tikzpicture}\bigskip

   \begin{tikzpicture}[x=2.5cm,y=1cm,font=\footnotesize,
   ->,>=stealth',
   state/.style={shape=circle,draw,font=\scriptsize,inner sep=.5mm,outer
   	sep=0.8mm, minimum size=0.5cm,initial text=,initial
   	distance=2ex,rectangle,rounded corners}]
   \node[state,initial] (s) at (0,0) {$\{s^0/ t^0\}$};
   \node[state] (s1) at (1,1) {$\{s_1/ t_1,s_2/ t_2\}$};
   \path (s)	edge [->] node[above]{$a$}	(s1);
   \node[state] (s2) at (1,-1) {$\{s_2/ t_1,s_2/ t_2\}$};
   \path (s)	edge [->,densely dashed] node[below]{$a$}	(s2);
   \node[state,shape=circle] (end) at (2,0) {$\emptyset$};
   \path (s1)	edge [->] node[above]{$b$}	(end);
   \path (s1.east)	edge [->,bend left,densely dashed]
   node[above]{$a$}	(end);
   \path (s2)	edge [->,densely dashed] node[below]{$a$}	(end);
   \path (s)	edge [->,densely dashed] node[below]{$b,c$}	(end);
   \path (end)	edge [loop right,->,densely dashed]
   node[right]{$a, b, c$} (end);
   \end{tikzpicture}
  \caption{%
    \label{fig:mtsquotient}
    Two nondeterministic MTS, their quotient, and its simplification by pruning}
\end{figure}

\begin{example}
  We illustrate the construction on an example.  Let $S$ and $T$ be
  the MTS on the top of Figure~\ref{fig:mtsquotient}.  We construct
  $S/ T$, displayed below; this can be further simplified into the
  system on the bottom.

  First we construct the may-successors of $s^0/ t^0$. Both $b$ and $c$ are admissible due to $t^0\notmay{b}$ and $t^0\notmay{c}$ and thus with $\postra_b(s^0/t^0)=\postra_c(s^0/t^0)=\{\emptyset\}$.
  Consequently, the only successor here is $\emptyset$.
  Further, $a$ is also admissible due to $s^0\DMTSmay{a}$.
  For may-transitions under $a$,
  we have to consider all mappings of successors of $t^0$ to
  successors of $s^0$, namely $\{s_1/ t_1,s_1/ t_2\}$,
  $\{s_1/ t_1,s_2/ t_2\}$, $\{s_2/ t_1,s_1/ t_2\}$, and
  $\{s_2/ t_1,s_2/ t_2\}$. Besides, since there is a must-transition from
  $s^0$ (to $s_1$), we create a disjunctive must-transition to all
  successors that can be used to yield this must-transition
  $s^0\DMTSmust{a}s_1$ when composed
  with the must-transition $t^0\DMTSmust{a}t_1$. These are all
  successors where $t_1$ is mapped to $s_1$, hence the first
  two.

  Further, $\{s_1/ t_1,s_2/ t_2\}$ is obliged to have a must under $b$ so
  that it refines $s_1$ when composed with $t_1$, but cannot have any
  $c$ in order to match $s_2$ when composed with $t_2$. Similarly,
  $\{s_2/ t_1,s_2/ t_2\}$ has neither $c$ nor $b$.

  The first and third successor of $s^0/ t^0$ deserve special attention.
  Firstly, $\{s_1/ t_1,s_1/ t_2\}$ has (apart from a may-transition under $a$) transitions under $b$: may to $\{\bullet/\bullet\}$ and two musts.
  Both musts are due to $s_1$, but the first one because of $s_1$ in $s_1/ t_1$ (leading to $\{\bullet/\bullet\}$) and the second one because of $s_1$ in $s_1/ t_2$ (leading to an empty disjunction because $t_2\not\DMTSmust{}$).
  The empty disjunction is drawn as a line not branching anywhere.
  Note that it is very different from a may-transition to $\emptyset$ and cannot be implemented.
  States with such a transition are drawn in gray here and called \emph{inconsistent}.

  Secondly, $\{s_2/ t_1,s_1/ t_2\}$ is inconsistent for the same reason:
  it requires to refine $s_1$ by a composition with $t_2$.
  As $t_2$ has no must under $b$, the composition has none
  either, hence the must of $s_1$ can never be matched.
\end{example}

We have seen that the construction may produce empty must-disjunctions and thus also inconsistent states, i.e., states $s$ such that $s\DMTSmust{}\emptyset$.
Since inconsistent states have no implementations, their presence in the system is useless and we can remove them from the system using the procedure of \emph{pruning}.
This procedure produces a more readable system that has the same set of implementations and, moreover, is modally refining the original system.
The procedure is standard for MTS, see e.g.\ \cite{DBLP:journals/mscs/BauerJLLS12};
here we describe its straightforward adaptation for DMTS.
The procedure exhaustively repeats the following:
if there is an inconsistent state $s$, then remove it together with all its outgoing transitions (both may and must) and incoming may-transitions, and each remaining must-transition $t\DMTSmust{}D$ is modified into $t\DMTSmust{}D\setminus(\Sigma\times\{s\})$;
intuitively, we are removing the incoming must-branches (not the whole transitions).
This may of course turn other states inconsistent and thus the procedure is repeated until there are no more inconsistent states.

\begin{example}
  When we apply pruning to the quotient in the previous example, we
  obtain the system on the bottom of Figure~\ref{fig:mtsquotient}.
  Here the gray inconsistent states are removed and the disjunctive
  must from $\{s^0/ t^0\}$ leads only to $\{s_1/ t_1,s_2/ t_2\}$.

  Now it is easy to see that $T\|( S/ T)\mreq S$ in this case.
\end{example}

Recall from Lemma~\ref{lem:mts-naa-comp} that the MTS composition
is a~conservative approximation to the \NAA composition.
This means that the following theorem holds regardless of which of
the two compositions is used.

\begin{theorem}
  \label{thm:mts-quotient}
  For all MTS specifications $\mcalM_1$, $\mcalM_2$ and $\mcalM_3$,
  $\mcalM_1\| \mcalM_2\mr \mcalM_3$ iff
  $\mcalM_2 \mr \mcalM_3/\mcalM_1$.
\end{theorem}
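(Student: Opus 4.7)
The plan is to prove the two implications separately by constructing explicit modal refinement relations, using the natural bijection
\[
  (s_2, q) \in R \text{ with } (t_3, u_1) \in q \quad\longleftrightarrow\quad ((u_1, s_2), t_3) \in R'
\]
between refinement witnesses. By Lemma~\ref{lem:mts-naa-comp} it suffices to work with the MTS composition, in which $(s_1, s_2)\DMTSmay{a}(t_1, t_2)$ iff $s_1\DMTSmay{a}_1 t_1$ and $s_2\DMTSmay{a}_2 t_2$, and similarly for $\DMTSmust{a}$. Write $\mcalM_i = (S_i, s_i^0, \omay_i, \omust_i)$, and recall that $\mcalM_3/\mcalM_1$ has state space $2^{S_3 \times S_1}$ with initial state $\{(s_3^0, s_1^0)\}$.

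For the $(\Leftarrow)$ direction, given a witness $R$ for $\mcalM_2 \mr \mcalM_3/\mcalM_1$, I set $R' = \{((u_1, s_2), t_3) \mid \exists q : (s_2, q) \in R,\ (t_3, u_1) \in q\}$, which satisfies the initial condition by construction. For the may-case, a composite may $(u_1, s_2) \DMTSmay{a} (v_1, t_2)$ splits into $u_1 \DMTSmay{a} v_1$ and $s_2 \DMTSmay{a} t_2$; the latter is matched by $R$ with some $q \DMTSmay{a} q' \in \postra[a]{q}$ and $(t_2, q') \in R$ (which forces permissibility of $a$ from $q$, hence $t_3 \DMTSmay{a}$), and the entry of $q'$ at the index of $(t_3, u_1)$ paired with the successor $v_1$ supplies a $v_3$ with $t_3 \DMTSmay{a} v_3$ and $((v_1, t_2), v_3) \in R'$. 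For the must-case at $t_3 \DMTSmust{a} t_3'$, the quotient construction supplies $q \DMTSmust{} D$ generated by the $(t_3, u_1)$-coordinate, and $R$ yields a singleton match $s_2 \DMTSmust{a} u_2$ together with some $(a, M) \in D$ satisfying $(u_2, M) \in R$; by the definition of $D$, $M$ contains a pair $(t_3', v_1)$ with $u_1 \DMTSmust{a} v_1$, which yields $(u_1, s_2) \DMTSmust{a} (v_1, u_2)$ with $((v_1, u_2), t_3') \in R'$ by construction.

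For the $(\Rightarrow)$ direction, given $R'$, I set $R = \{(s_2, q) \mid \forall (t_3, u_1) \in q : ((u_1, s_2), t_3) \in R'\}$, initialized by assumption. The may-case proceeds by first establishing permissibility of $a$ from $q$ (any $u_1^k \DMTSmay{a} v_1$ at a pair $(t_3^k, u_1^k) \in q$ lifts to a composite may which $R'$ matches by $t_3^k \DMTSmay{a}$), then building $q' \in \postra[a]{q}$ pointwise by choosing each $v_3^{k,j}$ via $R'$ so that $((v_1^{k,j}, t_2), v_3^{k,j}) \in R'$. The main obstacle lies in the must-case: a disjunctive must $q \DMTSmust{} D$ arises from a single pair $(t_3^i, u_1^i) \in q$ and an MTS-singleton must $t_3^i \DMTSmust{a} t_3'$, and one must exhibit a singleton $s_2 \DMTSmust{a} u_2$ of $\mcalM_2$ together with a single $M \in \postra[a]{q}$ satisfying both $(a, M) \in D$ (which constrains the $(i, j_0)$-entry of $M$ to be $(t_3', v_1)$ for some $v_1$ with $u_1^i \DMTSmust{a} v_1$) and $(u_2, M) \in R$ (which constrains every entry of $M$ via $R'$). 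The crux is that applying $R'$ to the singleton must $t_3^i \DMTSmust{} \{(a, t_3')\}$ at $((u_1^i, s_2), t_3^i)$ delivers the composite singleton $(u_1^i, s_2) \DMTSmust{a} (v_1, u_2)$ in one stroke, yielding $u_1^i \DMTSmust{a} v_1$, $s_2 \DMTSmust{a} u_2$, and $((v_1, u_2), t_3') \in R'$ simultaneously; the remaining entries of $M$ are then filled by applying $R'$ to the composite mays $(u_1^k, s_2) \DMTSmay{a} (w_1^{k,j}, u_2)$ (using $s_2 \DMTSmay{a} u_2$). The essential use of the MTS hypothesis on $\mcalM_2$ and $\mcalM_3$ is that these singleton choices align coherently into one pair $(u_2, M)$, rather than into a disjunction whose branches might contradict one another.
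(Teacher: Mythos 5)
Your proposal is correct and follows essentially the same route as the paper's proof: the same two witness relations (the product-indexed relation $R'$ for one direction and the "all pairs in $q$ are $R'$-related" relation for the other), the same case analysis, and the same key use of the MTS singleton musts to align the choice of $u_2$ with a single assignment $M$. The only cosmetic difference is that your $(\Leftarrow)$ witness is phrased via the given relation $R$ and membership $(t_3,u_1)\in q$ rather than via the refinement preorder $s_2\mr\{t_3/u_1\}$, which the paper recovers from the observation that $q\supseteq\{t_3/u_1\}$ implies $q\mr\{t_3/u_1\}$.
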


\begin{proof}
  In this proof only, let $\|$ denote MTS composition.

  Write $\mcalM_i = ( S_i, s_i^0, \omay_i, \omust_i)$ for $i= 1, 2, 3$.
  We use the following notation to help distinguish states of
  $\mcalM_1\|\mcalM_2$ and $\mcalM_3/\mcalM_1$. The states of $\mcalM_1\|\mcalM_2$
  are denoted by $s_1\|s_2$ instead of $(s_1,s_2)$ while the states
  of $\mcalM_3/\mcalM_1$ are denoted by $\{s_3/ s_1,\ldots\}$
  instead of $\{(s_3,s_1),\ldots\}$.
  We also note that for states of $\mcalM_3/\mcalM_1$, $s \supseteq t$
  implies $s \mr t$ due to the construction.

  Now assume that $\mcalM_2 \mr \mcalM_3/\mcalM_1$ and let $R =
  \{(s_1\|s_2,s_3)\mid s_2 \mr \{s_3/ s_1\} \}$.  We show that $R$ is
  a~witness for $\mcalM_1\|\mcalM_2\mr\mcalM_3$, \ie that it satisfies the
  conditions of Definition~\ref{def:dmts_mr}.  Let $(s_1\|s_2,s_3) \in
  R$.
  \begin{itemize}
  \item Let $s_1\|s_2 \DMTSmay{a} t_1\|t_2$. As $s_2 \mr \{s_3 / s_1\}$
    this means that $\{ s_3 / s_1 \} \DMTSmay{a} \{t^1_3/ t^1_1,\ldots,
    t^k_3/ t^k_1\} = t$ and $t_2 \mr t$.  Due to the construction of
    $\{s_3 / s_1\}$, we know that there is an index $j$ for which
    $t^j_1 = t_1$ and $s_3 \DMTSmay{a} t^j_3$.  Let $t_3 = t^j_3$.  As $t
    \supseteq \{t_3/ t_1\}$, $t \mr \{t_3 / t_1\}$. Therefore,
    since $t_2 \mr t$, we have $t_2 \mr \{t_3 / t_1\}$ and thus
    $(t_1\parallel t_2 ,t_3) \in R$.
  \item Let $s_3 \DMTSmust{a} t_3$. This means that
    $\{ s_3 / s_1 \} \DMTSmust{} U =
      \{ (a, u) \in \{a\} \times \postra[a]{\{s_3/ s_1\}} \mid
         \exists t_1 : t_3 / t_1 \in u, s_1 \DMTSmust{a} t_1 \}$.
    As $s_2 \mr \{ s_3 / s_1 \}$, we know that $s_2 \DMTSmust{a} t_2$ and
    $t_2 \mr u$ for some $(a,u) \in U$.  Due to the construction of
    $U$ we know that there exists $t_1$ such that $t_3 / t_1 \in u$
    with $s_1 \DMTSmust{a} t_1$. Thus $s_1 \| s_2 \DMTSmust{a} t_1 \| t_2$.
    Again, as $u \supseteq \{t_3/ t_1\}$, $t_2 \mr \{t_3/ t_1\}$.
    Therefore, $(t_1\parallel t_2,t_3) \in R$.
  \end{itemize}

  Assume, for the other direction of the proof, that $\mcalM_1 \parallel
  \mcalM_2 \mr \mcalM_3$.  Define
  \begin{equation*}
    R = \{ (s_2, \{ s^1_3 / s^1_1, \ldots, s^n_3 / s^n_1 \})
    \mid \forall i = 1, \ldots, n : s^i_1 \parallel s_2 \mr s^i_3 \}\,;
  \end{equation*}
  note that $(s_2,\emptyset) \in R$ for all $s_2 \in S_2$.  We show that
  $R$ is a~witness for $\mcalM_2 \mr \mcalM_3/\mcalM_1$.  Let $(s_2,s) \in R$
  with $s = \{ s^1_3 / s^1_1, \ldots, s^n_3 / s^n_1 \}$.
  \begin{itemize}
  \item Let $s_2 \DMTSmay{a} t_2$. If there is no $i$ such that
    $s^i_1 \DMTSmay{a}$ then $s \DMTSmay{a} \emptyset$ and
    $(t_2,\emptyset) \in R$.  Otherwise, for each
    $i \in \{1, \ldots, n\}$
    and each $j \in \{1, \ldots, m_i\}$ such that
    $s^i_1 \DMTSmay{a} t^{i,j}_1$ consider that we have
    $s^i_1 \| s_2 \DMTSmay{a} t^{i,j}_1 \| t_2$ and as
    $s^i_1 \| s_2 \mr s^i_3$ we also have a~corresponding
    $s^i_3 \DMTSmay{a} t^{i,j}_3$ with $t^{i,j}_1 \| t_2 \mr t^{i,j}_3$.
    We fix these $t^{i,j}_3$ for each $i$ and $j$.  Let
    $t = \{ t^{i,j}_3 / t^{i,j}_1 \mid i \in \{1,\ldots,n\}, j \in
    \{1,\ldots,m_i\}\}$.  Clearly, $s \DMTSmay{a} t$ and $(t_2,t) \in R$.
  \item Let $s \DMTSmust{} U$ (note that this means that $s \ne \emptyset$)
    and let $s^i_3 \DMTSmust{a} t^i_3$ be
    the corresponding must transition in the construction.  As
    $s^i_1 \| s_2 \mr s^i_3$, this means that $s_2 \DMTSmust{a} t_2$ and
    $s^i_1 \DMTSmust{a} t^i_1$ such that $t^i_1 \| t_2 \mr t^i_3$.  This
    also means that $s_2 \DMTSmay{a} t_2$. We thus build $t$ as we did in
    the previous case where for $i$, $j$ such that $t^{i, j}_1 = t^i_1$
    we choose the corresponding $t^{i, j}_3$ to be $t^i_3$.
    Clearly $(t_2,t) \in R$. \qed
  \end{itemize}
\end{proof}

\subsubsection{Quotient for \NAA}

We now introduce the general quotient operator for \NAA. The
construction is similar to the previous one, with the notions of
permissibility and $\postra[a]{s}$ adapted to the more general setting.

Let $\mcalA_1=( S_1, S_1^0, \Tran_1)$, $\mcalA_2=( S_2, S_2^0, \Tran_2)$ be
\NAA\ and define $\mcalA_1/ \mcalA_2=( S, S^0, \Tran)$, with
$S= 2^{ S_1\times S_2}$.  To define the set of initial states, let us
first enumerate the initial states of $\mcalA_2$ as follows:
$S_2^0=\{ s_2^{ 0, 1},\dotsc, s_2^{ 0, p}\}$.  The set of initial
states is given by all possible assignments of states from $S_1^0$ to
states of~$S_2^0$, formally:
$S^0= \big\{\{( s_1^{ 0, q}, s_2^{ 0, q})\mid q\in\{1,\dotsc, p\}\}\bigmid
\forall q: s_1^{ 0, q}\in S_1^0\big\}$.

The assignment of transition constraints $\Tran$ is given as follows.
Let $\Tran( \emptyset)= 2^{ \Sigma\times\{ \emptyset\}}$.  For $s=\{(
s_1^1, s_2^1),\dots,( s_1^n, s_2^n)\}\in S$, say that $a\in \Sigma$ is
\emph{permissible from $s$} if it holds for all $i= 1,\dots, n$ that
there is $M_1\in \Tran_1( s_1^i)$ and $t_1\in S_1$ for which $( a,
t_1)\in M_1$, or else there is no $M_2\in \Tran_2( s_2^i)$ and $t_2\in S_2$
with $(a, t_2) \in M_2$.

Let us now fix a nonempty
$s=\{( s_1^1, s_2^1),\dotsc,( s_1^n, s_2^n)\}\in S$.  We introduce
some notation that we are going to use throughout this construction
and the following proof to denote the successor states of $s_2^i$ for
all $i$.  For each $s_2^i$ let
$\Tran_2( s_2^i)=\{ M_2^{ i, 1},\dotsc, M_2^{ i, m_i}\}$ be a fixed
enumeration of $\Tran_2(s_2^i)$. For each $a$ permissible from $s$ and
for each $M_2^{ i, j}$ we further fix an enumeration of all states in
$M_2^{ i, j}$ after an $a$-transition:
$\{ t_2\in S_2\mid( a, t_2)\in M_2^{ i, j}\}=\{ t^{ i, j, a,
  1},\dotsc, t^{ i, j, a, r_{ i, j, a}}\}$.
This means that $t^{ i, j, a, k}$ is the $k$th state (out of
$r_{ i, j, a}$) with an $a$-transition in $M_2^{ i, j}$, which is the
$j$th member of $\Tran_2( s_2^i)$, where $s_2^i$ is the state in the
$i$th pair in $s$.

For $a$ permissible from $s$, we define
\begin{multline*}
  \postra[a]{s}= \big\{\{( t_1^{ i, j, a, k}, i, j, t_2^{ i, j, a,
    k})\mid i= 1,\dotsc, n, j= 1,\dotsc, m_i, k= 1,\dotsc, r_{ i, j,
    a}\} \\
  {}\bigmid \forall i, j, k: \exists M_1\in \Tran_1( s_1^i):( a, t_1^{ i, j, a,
    k})\in M_1 \big\},
\end{multline*}
the set of all sets of possible assignments of next-$a$ states from
$s_1^i$ to next-$a$ states from $s_2^i$. Note that unlike the case of
MTS quotient, we also keep the indices $i$, $j$ in the assignments.
We further define
$\postra{s} = \{ (a, x) \mid a\text{ permissible from }s, x \in
\postra[a]{s} \}$.

To deal with the elements of $\postra[a]{s}$ we define the following
auxiliary operations. The first operation $\ell$ allows us to
``forget'' the indices $i$, $j$ and is defined as follows:
$\ell(x) = \{ (t_1, t_2) \mid \exists i, j : (t_1, i, j, t_2) \in x
\}$.
The operation can be naturally lifted to subsets of $\postra{s}$ as
follows: $\ell(N) = \{ (a, \ell(x)) \mid (a, x) \in N \}$ where
$N \subseteq \postra{s}$.

The second operation is a type of projection that given the two
indices $i$, $j$ and the state $t_2^{i, j, a, k}$ produces the
next-$a$ state of $s_1$ assigned in the given element $x$ of
$\postra[a]{s}$. Note that the projection is defined uniquely:
$x \upharpoonright (i, j, t_2^{i,j,a,k}) = t_1^{i,j,a,k}$ where
$x = \{\ldots, (t_1^{i,j,a,k}, i, j, t_2^{i,j,a,k}), \ldots \}$.  The
projection operation can also be lifted to subsets $N$ of $\postra{s}$
and sets $M_2 \in \Tran_2(s^2_i)$ as follows:
$N \upharpoonright (i, j, M_2) = \{(a, x \upharpoonright (i, j, t_2)) \mid (a, t_2) \in
M_2\}$.
Note that the result of this operation is then a set of elements of
the form $(a,t_3)$ where $t_3$ is an $a$-successor of $s_3$.

Having the two auxiliary operations, we can then finally define
\begin{multline}
  \label{eq:qtran}
  \Tran(s) = \{ \ell(N) \mid N \subseteq \postra{s},
  \forall i \in \{1, \ldots, n\}, j \in \{1, \ldots, m_j\} : \\
  N \upharpoonright (i, j, M_2^{i,j}) \in \Tran_3(s_3^i) \}\,.
\end{multline}

\begin{theorem}
  \label{th:quotient}
  For all specifications $\mcalS_1$, $\mcalS_2$, $\mcalS_3$,
  $\mcalS_1\| \mcalS_2\mr \mcalS_3$ iff
  $\mcalS_2\mr \mcalS_3/ \mcalS_1$.
\end{theorem}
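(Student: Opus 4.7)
The plan is to adapt the proof strategy of Theorem~\ref{thm:mts-quotient} to the more intricate \NAA setting, establishing the biconditional by constructing an explicit witnessing modal refinement in each direction.

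For the soundness direction, assume $R_2$ witnesses $\mcalS_2 \mr \mcalS_3/\mcalS_1$ and define
\begin{equation*}
R = \{(s_1\|s_2,\, s_3) \mid \exists s \in S : (s_2, s) \in R_2 \text{ and } (s_3, s_1) \in s\}.
\end{equation*}
Initialisation of $R$ follows because each initial state $s^0 \in S^0$ of $\mcalS_3/\mcalS_1$ is, by construction, a set of pairs that covers every $s_1^{0,q} \in S_1^0$, so the initial match of $R_2$ supplies the required $s_3^0$ for any initial pair $(s_1^0, s_2^0)$. For the matching step, let $(s_1\|s_2,\, s_3) \in R$ with witness $s$ and $M \in \Tran(s_1\|s_2)$; by definition of \NAA composition, $M = M_1 \| M_2$ with $M_i \in \Tran_i(s_i)$. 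From $(s_2, s) \in R_2$ one obtains $\ell(N) \in \Tran(s)$ that symmetrically matches $M_2$. After fixing the enumeration of $s$ so that $(s_3, s_1) = (s_3^{i_0}, s_1^{i_0})$ and of $\Tran_1(s_1)$ so that $M_1 = M_1^{i_0, j_0}$, the desired $M_3 \in \Tran_3(s_3)$ is $N \upharpoonright (i_0, j_0, M_1)$, made available by~\eqref{eq:qtran}. The two clauses of \NAA modal refinement are then verified by combining the matching of $M_2$ against $\ell(N)$ with the projection operation.

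For the maximality direction, assume $R$ witnesses $\mcalS_1\|\mcalS_2 \mr \mcalS_3$ and define
\begin{equation*}
R_2 = \{(s_2, \emptyset) \mid s_2 \in S_2\} \cup \{(s_2, s) \mid \forall (s_3^i, s_1^i) \in s : (s_1^i\|s_2,\, s_3^i) \in R\}.
\end{equation*}
Initialisation uses $R$ to pick, for each initial $s_2^0$ and each $s_1^{0,q} \in S_1^0$, some $s_3^{0,q} \in S_3^0$ with $(s_1^{0,q}\|s_2^0,\, s_3^{0,q}) \in R$, producing $s^0 = \{(s_3^{0,q}, s_1^{0,q}) \mid q\} \in S^0$ with $(s_2^0, s^0) \in R_2$. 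The trivial pair $(s_2, \emptyset)$ is handled by $\Tran(\emptyset) = 2^{\Sigma \times \{\emptyset\}}$. For the main matching step, fix $(s_2, s) \in R_2$ with $s \neq \emptyset$ and $M_2 \in \Tran_2(s_2)$. For every $(i, j)$ the composition $M_1^{i,j} \| M_2$ lies in $\Tran(s_1^i\|s_2)$, so $(s_1^i\|s_2,\, s_3^i) \in R$ produces a matching $M_3^{i,j} \in \Tran_3(s_3^i)$. From these $M_3^{i,j}$ one assembles a set $N \subseteq \postra{s}$ whose elements $(a, x)$ correspond to the pairs $(a, t_2) \in M_2$: the assignment $x$ is chosen so that each projection $x \upharpoonright (i, j, t_1^{i,j,a,k})$ is a valid $a$-successor in $M_3^{i,j}$. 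One then checks that $N \upharpoonright (i, j, M_1^{i,j}) = M_3^{i,j}$ for all $i, j$, certifying $\ell(N) \in \Tran(s)$, and that $\ell(N)$ symmetrically matches $M_2$. A side observation, used along the way, is that permissibility of $a$ from $s$ is automatic whenever $(a, t_2) \in M_2$: if $s_1^i \DMTSmay{a}$ then $s_1^i\|s_2 \DMTSmay{a}$ as well, forcing $s_3^i \DMTSmay{a}$ by the matching assumption on $R$.

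The hard part will be the combinatorial bookkeeping in the maximality direction. A single $x \in \postra[a]{s}$ must simultaneously assign numerator-successors at every index $(i, j, k)$, while the constraints it has to satisfy come from a priori independent $M_3^{i,j} \in \Tran_3(s_3^i)$. The key to resolving this is that the symmetric matching of \NAA refinement, applied to each $M_1^{i,j}\|M_2$ separately, provides enough freedom to select compatible $t_3$-values at every position $(i, j, k)$ at once; this is precisely the flexibility encoded into the construction of $\postra$ and~\eqref{eq:qtran}. Unlike in the MTS-only version of the result (Theorem~\ref{thm:mts-quotient}), no determinism assumption on $\mcalS_1$ is required, reflecting that the richer $\postra$ construction already tracks the multiplicity of transition constraints explicitly.
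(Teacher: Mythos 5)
Your proposal follows essentially the same route as the paper's proof: the same witnessing relations in both directions, the same use of the projection operation and~\eqref{eq:qtran} to extract the matching $M_3$ in the soundness direction, and the same assembly of $N\subseteq\postra{s}$ from the per-$(i,j)$ matching sets $M_3^{i,j}$ (the paper's ``adequate'' elements) in the maximality direction, with both inclusions of $N\upharpoonright(i,j,M_1^{i,j})=M_3^{i,j}$ to be checked exactly as you indicate. The only slip is the closing aside: Theorem~\ref{thm:mts-quotient} does not assume determinism of $\mcalM_1$ either, so the contrast you draw is really with the older deterministic quotient constructions in the literature rather than with that theorem.
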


\begin{proof}
  We show the proof for \NAA.
  Let $\mcalA_1=( S_1, S^0_1, \Tran_1)$ and $\mcalA_2=( S_2, S^0_2, \Tran_2)$,
  $\mcalA_3=( S_3, S^0_3, \Tran_3)$; we show that
  $\mcalA_1\| \mcalA_2\mr \mcalA_3$ iff $\mcalA_2\mr \mcalA_3/ \mcalA_1$.

  We use the notation introduced in the proof of
  Theorem~\ref{thm:mts-quotient}, \ie~$s_1\|s_2$ instead of
  $(s_1,s_2)$ when speaking about states of~$\mcalA_1\|\mcalA_2$ and
  $\{s_3/ s_1, \ldots\}$ instead of $\{(s_3,s_1),\ldots\}$ when
  speaking about states of $\mcalA_3/\mcalA_1$.  We further note that by
  construction, $s\supseteq t$ implies $s\mr t$ for all
  $s, t\in 2^{S_3\times S_1}$.

  Now assume that $\mcalA_2\mr \mcalA_3/ \mcalA_1$ and let $R=\{( s_1\| s_2,
  s_3)\mid s_2\mr \{s_3/ s_1\}\}$; we show that $R$ is a witness for
	$\mcalA_1\| \mcalA_2\mr \mcalA_3$.

  Let $( s_1\| s_2, s_3)\in R$ and $M_\|\in \Tran_\|( s_1\| s_2)$.  Then
  $M_\|= M_1\| M_2$ with $M_1\in \Tran_1( s_1)$ and $M_2\in \Tran_2(
  s_2)$.  As $s_2\mr \{s_3/ s_1\}$, we can pair $M_2$ with an $M_/\in
  \Tran_/( \{s_3/ s_1\})$, such that the conditions
  in~\eqref{eq:aaref} are satisfied (see Definition~\ref{def:aaref}).

  Note that, as $M_1 \in \Tran_1(s_1)$, there exists some $j$ such that
  $M_1 = M_1^{1,j}$, \ie~$M_1$ is the $j$th element of $\Tran(s_1)$ in the
  enumeration as described in the construction of the quotient.
  Further note that $M_\| = \ell(N)$ for some
  $N \subseteq \postra{\{s_3/ s_1\}}$ satisfying the conditions
  in~\eqref{eq:qtran}.

  We now define $M_3 = N \upharpoonright (1, j, M_1)$ and show that~\eqref{eq:aaref}
  holds for the pair $M_\|, M_3$:
  \begin{itemize}
  \item Let $( a, t_1\| t_2)\in M_\|$, then there are
    $( a, t_1)\in M_1$ and $( a, t_2)\in M_2$. By~\eqref{eq:aaref}
    applied to the pair $M_2$, $M_/$, there is $( a, t)\in M_/$
    such that $t_2\mr t$. This means that there is $(a, x)\in N$ such
    that $t = \ell(x)$. Due to the construction of
    $\postra{\{s_3/ s_1\}}$ there has to be some $t_3$ such that
    $(t_3, 1, j, t_1) \in x$.  Due to the definition of $M_3$, this
    means that $(a, t_3) \in M_3$.  We also know that
    $t \supseteq \{t_3/ t_1\}$, hence $t \mr \{t_3/ t_1\}$, and
    together with $t_2 \mr t$ we get $t_2 \mr \{t_3/ t_1\}$. Thus
    $(t_1\|t_2,t_3) \in R$.
  \item Let $( a, t_3)\in M_3$. This means that there is some
    $(a, x) \in N$ with $(t_3, 1, j, t_1) \in x$ and
    $(a,t_1) \in M_1$.  Therefore, $\ell(x) = t \in M_/$ with
    $t_3/ t_1 \in t$.  Due to~\eqref{eq:aaref} applied to $M_2$,
    $M_/$, there has to be a~corresponding $(a, t_2) \in M_2$ such
    that $t_2 \mr t$. Thus $(a, t_1\| t_2) \in M_\|$ and again by
    $t \supseteq \{t_3/ t_1\}$ we have $t_2 \mr \{t_3/ t_1\}$ and
    hence $(t_1\| t_2, t_3) \in R$.
  \end{itemize}

  It remains to show that $R$ is initialised. Let
  $\hat s_1^0 \| \hat s_2^0$ be an initial state of $\mcalA_1\|\mcalA_2$.
  By $\mcalA_2 \mr \mcalA_3/\mcalA_1$ we know that $\hat s_2^0 \mr s^0$ for
  some $s^0 \in S^0$. We then take $\hat s_3^0 \in S_3^0$ such that
  $\hat s_3^0/\hat s_1^0 \in s^0$ (there has to be exactly
  one
  due to the definition of $s^0$).  We then
  have $s^0 \supseteq \{ \hat s_3^0/\hat s_1^0 \}$ and thus
  $s^0 \mr \{ \hat s_3^0/\hat s_1^0 \}$. This means that
  $\hat s_2^0 \mr \{ \hat s_3^0/\hat s_1^0 \}$ and hence
  $(\hat s_1^0 \| \hat s_2^0, \hat s_3^0) \in R$.

  Assume, for the other direction of the proof, that $\mcalA_1\| \mcalA_2\mr
  \mcalA_3$.  Define $R\subseteq S_2\times 2^{ S_3\times S_1}$ by
  \begin{equation*}
    R=\{( s_2,\{ s_3^1/ s_1^1,\dotsc, s_3^n/ s_1^n\})\mid \forall i=
    1,\dotsc, n: s_1^i\| s_2\mr s_3^i\}\,;
  \end{equation*}
  we show that $R$ is a witness for $\mcalA_2\mr \mcalA_3/ \mcalA_1$.  We
  first note that $(s_2, \emptyset) \in R$ for all $s_2$.
  Let now
  $(s_2, s)\in R$, with nonempty
  $s=\{ s_3^1/ s_1^1,\dotsc, s_3^n/ s_1^n\}$, and
  $M_2\in \Tran_2( s_2)$.

  Note that for every $M_1^{i,j} \in \Tran(s_1^i)$ we can build
  $M_\|^{i,j} = M_1^{i,j} \| M_2$, and as $s_1^i\|s_2 \mr s_3^i$,
  there has to be a~corresponding $M_3 \in \Tran(s_3^i)$ satisfying
  the conditions of~\eqref{eq:aaref}. We fix such $M_3$ for every $i$,
  $j$ and denote it by $M_3^{i,j}$.

  We are going to build a~subset $N$ of $\postra{s}$. To that end, we
  first define an auxiliary notion of an \emph{adequate} element of
  $\postra{s}$ with respect to $(a, t_2) \in M_2$ as follows.  Let
  $(a, x) \in \postra{s}$. We say that $(a, x)$ is adequate
  w.r.t.~$(a,t_2)$ if for every $i \in \{1, \ldots, n\}$,
  $j \in \{1, \ldots, m_i\}$, and $k \in \{1, \ldots, r_{i,j,a}\}$,
  the projection $t_3= x\upharpoonright (i, j, t_1^{i,j,k})$ satisfies
  $(a, t_3) \in M_3^{i,j}$ and $t_1^{i, j, k} \| t_2 \mr t_3$.

  Clearly, if we have $(a, x)$ adequate w.r.t.~$(a, t_2)$, then
  $(t_2, \ell(x)) \in R$.

  We can now define
  \begin{equation*}
    N=\{( a, x)\in \postra{s}\mid \exists( a, t_2)\in M_2:( a,
    x)\text{ is adequate w.r.t.~}(a,t_2)\}\,.
  \end{equation*}

  We first need to show that $\ell(N) \in \Tran_/(s)$.  Let
  $i \in \{1, \ldots, n\}$ and $j \in \{1, \ldots, m_j\}$.  We want to
  show that $N \upharpoonright (i, j, M_1^{i,j}) = M_3^{i,j}$.  Let first
  $(a, t_3) \in N \upharpoonright (i, j, M_1^{i,j})$. This means that there is
  some $(a, x) \in N$ with $x = (t_3, i, j, t_1^{i,j,k})$.  Due to the
  definition of $N$, $(a, t_3) \in M_3^{i,j}$.  Let now
  $(a,t_3) \in M_3^{i,j}$.  Recall that the pair $M_1^{i,j}\|M_2$,
  $M_3^{i,j}$ satisfies~\eqref{eq:aaref}. This means that for
  $(a,t_3) \in M_3^{i,j}$ there exists
  $(a,t_1^{i,j,k}\|t_2) \in M_1^{i,j}\|M_2$ such that
  $t_1^{i, j, k} \| t_2 \mr t_3$.  Hence
  $(a, (t_3, i, j, t_1^{i,j,k}))$ is adequate w.r.t.~$(a,t_2)$ and
  thus $(a, t_3) \in N \upharpoonright (i, j, M_1^{i,j})$.

  We now show that the pair $M_2$, $M = \ell(N)$ satisfies the
  conditions of~\eqref{eq:aaref}.
  \begin{itemize}
  \item Let $(a,t_2) \in M_2$. We need to show that there exists
    $(a,x) \in N$ adequate w.r.t.~$(a,t_2)$. Recall that the pair
    $M_1^{i,j} \| M_2$, $M_3^{i,j}$ satisfies~\eqref{eq:aaref} for
    every $i$, $j$.  For every $(a, t_1^{i,j,k}\|t_2)$ there thus has
    to be $(a,t_3) \in M_3^{i,j}$ with $t_1^{i, j, k} \| t_2 \mr t_3$.
    We fix such $t_3$ for every $i$, $j$, $k$ and denote it by
    $t_3^{i,j,k}$. We then set
    $x = \{ (t_3^{i,j,k}, i, j, t_1^{i,j,k} \mid i \in \{1, \ldots,
    n\}, j \in \{1, \ldots, m_i\}, k \in \{1, \ldots, r_{i,j,a} \}$.
    Clearly $(a,x) \in N$ and $(a,x)$ is adequate w.r.t.~$(a,t_2)$.
    As noted above, we have $(t_2, \ell(x)) \in R$.
  \item Let $(a, t) \in M$. This means that there is some
    $(a, x) \in N$ such that $t = \ell(x)$. Due to the definition of
    $N$, there exists $(a, t_2) \in M_2$ such that $(a, x)$ is
    adequate w.r.t.~$(a,t_2)$.  Again, as noted above, we have
    $(t_2, t) = (t_2, \ell(x)) \in R$.
  \end{itemize}

  It remains to show that $R$ is initialised. Let $s_2^0$ be an
  initial state of $\mcalA_2$. By $\mcalA_1\|\mcalA_2\mr\mcalA_3$ we know that for
  every $s_1^{ 0, q} \in S_1^0$ (recall the enumeration of initial
  states in the construction of the quotient) there exists
  $s_3^0 \in S_3^0$ such that $s_1^{ 0, q}\| s_2^0\mr s_3^0$. Let us
  fix for every $q$ such $s_3^0$ and denote it by $s_3^{ 0, q}$. Let
  then $s^0=\{ s_3^{ 0, q}/ s_1^{ 0, q}\mid q\in\{ 1,\dotsc, p\}\}$.
  Clearly, $s^0$ is an initial state of $\mcalA_3/\mcalA_1$ and
  $(s_2^0, s^0) \in R$. \qed
\end{proof}

As a corollary, we get~\eqref{eq:quotient}: If $\mcalI_2\in \sem{ \mcalS/
  \mcalS_1}$, \ie~$\mcalI_2\mr \mcalS/ \mcalS_1$, then $\mcalS_1\| \mcalI_2\mr \mcalS$,
which using $\mcalI_1\mr \mcalS_1$ and Theorem~\ref{th:indimp} implies
$\mcalI_1\| \mcalI_2\mr \mcalS_1\| \mcalI_2\mr \mcalS$.  The reverse implication in
Theorem~\ref{th:quotient} implies that $\mcalS/ \mcalS_1$ is as permissive
as possible.

\begin{corollary}
  \label{th:resilat}
  With operations $\land$, $\lor$, $\|$ and $/$, each of our four
  classes of specifications forms a commutative residuated lattice up to
  $\mreq$.
\end{corollary}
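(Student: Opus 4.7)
The plan is to assemble the corollary directly from the results already established in the chapter, rather than verifying the residuated lattice axioms by hand. Recall that a commutative residuated lattice consists of a lattice $(L,\land,\lor)$, a commutative monoid $(L,\|,\mathbf{1})$, and a residuation operation $/$ satisfying $x\|y\le z \liff y\le z/x$, and for which $\|$ is monotone in both arguments.

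First I would invoke Corollary~\ref{co:distlat} to obtain the underlying bounded distributive lattice structure: up to $\mreq$, each of our four classes of specifications is a bounded distributive lattice under $\land$ and $\lor$, with bottom given by specifications with empty initial state sets and top given by the universal specification $\mathsf{U}$ (or its translation). Next, I would appeal to Lemma~\ref{le:bfs||prop}, which states that on \NAA\ the operator $\|$ is associative and commutative and has unit $\mathsf{U}$; this makes $(\mathord{\NAA}/\mreq,\|,\mathsf{U})$ a commutative monoid, and via the structural equivalences of Theorems~\ref{th:dmtsnu}, \ref{th:aahyb} and~\ref{th:dmtsaa} the same holds in each of the other three formalisms.

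For monotonicity of $\|$, Theorem~\ref{th:indimp} gives precisely the required property: $\mcalS_1\mr\mcalS_3$ and $\mcalS_2\mr\mcalS_4$ imply $\mcalS_1\|\mcalS_2\mr\mcalS_3\|\mcalS_4$. The decisive step is residuation, and this is the content of Theorem~\ref{th:quotient}: $\mcalS_1\|\mcalS_2\mr\mcalS_3$ iff $\mcalS_2\mr\mcalS_3/\mcalS_1$. By commutativity of $\|$ (up to $\mreq$), the symmetric formulation $\mcalS_1\|\mcalS_2\mr\mcalS_3$ iff $\mcalS_1\mr\mcalS_3/\mcalS_2$ follows immediately, so a single quotient operator suffices as is standard in the commutative case.

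Putting these ingredients together yields the commutative residuated lattice structure on each of the four classes of specifications up to $\mreq$. I anticipate no real obstacle: all algebraic identities needed (associativity, commutativity, unit, monotonicity, residuation) have already been proved for \NAA, and the translations $\da,\ad,\dn,\nd,\al,\la,\la,\dl$ preserve and reflect modal refinement, hence also preserve $\mreq$-equivalence of all derived operations. The only minor point worth checking is that the unit $\mathsf{U}$, the top element, and the $\land$/$\lor$/$\|/{/}$ operations translate coherently across formalisms; this however follows directly from the compatibility lemmas stated in Sections~\ref{sec:struct} and~\ref{se:discon}, and from the fact that $/$ on DMTS, $\nu$-expressions and $\mcalL$-expressions was defined precisely via the translations to \NAA.
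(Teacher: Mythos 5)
Your proposal is correct and follows essentially the same route as the paper's own proof, which likewise obtains the lattice structure from Corollary~\ref{co:distlat} and the residuation law from Theorem~\ref{th:quotient}, dismissing the remaining axioms (monoid structure, monotonicity, distributivity over $\lor$) with ``all other properties follow.'' You merely make explicit the intermediate citations of Lemma~\ref{le:bfs||prop} and Theorem~\ref{th:indimp} that the paper leaves implicit, which is a harmless elaboration rather than a different argument.
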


\begin{proof}
  We have already seen in Corollary~\ref{co:distlat} that the class of
  \NAA\ forms a lattice, up to $\mreq$, under $\land$ and $\lor$, and by
  Theorem~\ref{th:quotient}, $/$ is the residual, up to $\mreq$, of
  $\|$.  All other properties (such as distributivity of $\|$ over
  $\lor$ or $\mcalN\| \bot\mreq \bot$) follow. \qed
\end{proof}

\section{Related Work}\label{sec:rw}

The modal $\nu$-calculus is equivalent to the Hennessy-Milner logic
with \emph{greatest} fixed points, which arises from Hennessy-Milner
logic (HML)~\cite{DBLP:journals/jacm/HennessyM85} by introducing
variables and greatest fixed points. If also \emph{least} fixed points
are allowed, one arrives at the full \emph{modal
  $\mu$-calculus}~\cite{unpub/ScottB69, DBLP:conf/focs/Pratt81,
  DBLP:journals/tcs/Kozen83}.  Janin and Walukiewicz have
in~\cite{DBLP:conf/mfcs/JaninW95} introduced an automata-like
representation for the modal $\mu$-calculus which seems related to our
\NAA.

DMTS have been proposed as solutions to algebraic process equations in
Larsen and Xinxin's~\cite{DBLP:conf/lics/LarsenX90} and further
investigated also as a specification
formalism~\cite{DBLP:conf/concur/Larsen90,
  DBLP:conf/atva/BenesCK11}. The DMTS formalism is a member of the modal
transition systems (MTS) family and as such has also received attention
recently. The MTS formalisms have proven to be useful in
practice. Industrial applications started with
Bruns'~\cite{DBLP:journals/scp/Bruns97} where MTS have been used for an
air-traffic system at Heathrow airport. Besides, MTS classes are
advocated as an appropriate base for interface theories by Raclet \etal
in~\cite{DBLP:conf/acsd/RacletBBCP09} and for product line theories in
Nyman's~\cite{thesis/Nyman08}. Further, an MTS based software
engineering methodology for design via merging partial descriptions of
behavior has been established by Uchitel and Chechik
in~\cite{DBLP:conf/sigsoft/UchitelC04} and methods for supervisory
control of MTS shown by Darondeau \etal in~\cite{Darondeau2010a}.  Tool
support is quite extensive, \eg~\cite{DBLP:journals/fmsd/BorjessonLS95,
  DBLP:conf/eclipse/DIppolitoFFU07, DBLP:conf/atva/BauerML11, motras}.

Over the years, many extensions of MTS have been proposed, surveyed in
more detail in~\cite{DBLP:conf/birthday/Kretinsky17,
  conf/sifakis/FahrenbergLLT14, conf/sifakis/FahrenbergLT14}. While
MTS can only specify whether or not a particular transition is
required, some extensions equip MTS with more general abilities to
describe what \emph{combinations} of transitions are possible. These
include DMTS~\cite{DBLP:conf/lics/LarsenX90}, Fecher and Schmidt's
1-MTS~\cite{DBLP:journals/jlp/FecherS08} allowing to express exclusive
disjunction, OTS~\cite{DBLP:conf/memics/BenesK10} capable of
expressing positive Boolean combinations, and Boolean
MTS~\cite{DBLP:conf/atva/BenesKLMS11} covering all Boolean
combinations. The last one is closely related to our \NAA
as well as hybrid modal logic~\cite{book/Prior68,
  DBLP:journals/igpl/Blackburn00}.  Our results show that all these
formalisms are at most as expressive as DMTS.

Larsen has shown in~\cite{DBLP:conf/avmfss/Larsen89} that any finite
acyclic MTS is equivalent to a HML formula (without recursion or fixed
points), the \emph{characteristic formula} of the given MTS,
\cf~\eqref{eq:dmtstonu}.  Conversely, Boudol and Larsen show
in~\cite{DBLP:journals/tcs/BoudolL92} that any consistent and
\emph{prime} HML formula is equivalent to a~MTS.\footnote{A HML formula
  is \emph{prime} if implying a disjunction means implying one of the
  alternatives.}
Here we extend these results to $\nu$-calculus formulae, and show that
any such formula is equivalent to a DMTS, solving a problem left open
in~\cite{DBLP:conf/lics/LarsenX90}.  Hence the modal $\nu$-calculus
supports full compositionality and decomposition in the sense
of~\cite{DBLP:conf/concur/Larsen90}.  This finishes some of the work
started in~\cite{DBLP:conf/avmfss/Larsen89, DBLP:journals/tcs/BoudolL92,
  DBLP:conf/concur/Larsen90}.  Recently, the graphical representability
of a variant of alternating simulation called covariant-contravariant
simulation has been studied in~\cite{DBLP:journals/scp/AcetoFFIP13}.

Quotients are related to \emph{decomposition} of processes and
properties, an issue which has received considerable attention through
the years.  In~\cite{DBLP:conf/lics/LarsenX90}, a solution to
bisimulation $C(X)\sim P$ for a given process $P$ and context $C$ is
provided (as a~DMTS). This solves the quotienting problem $P/ C$ for
the special case where both $P$ and $C$ are processes.
This is extended in~\cite{DBLP:conf/icalp/LarsenX90} to the setting
where the context $C$ can have several holes and $C(X_1,\ldots,X_n)$
must satisfy a $\nu$-calculus property $Q$.  However, $C$ remains to be
a process context, not a~specification context. Our \emph{specification}
context allows for arbitrary specifications, representing infinite sets
of processes and process equations.  Other extensions use infinite
conjunctions~\cite{DBLP:journals/tcs/FokkinkGW06}, probabilistic
processes~\cite{DBLP:conf/concur/GeblerF12} or processes with continuous
time and space~\cite{DBLP:conf/icalp/CardelliLM11}.

Quotient operators, or \emph{guarantee} or \emph{multiplicative
  implication} as they are called there, are also well-known from
various logical formalisms.  Indeed, the algebraic properties of our
parallel composition $\|$ and quotient $/$ resemble closely those of
multiplicative conjunction $\&$ and implication $\multimap$ in
\emph{linear logic}~\cite{DBLP:journals/tcs/Girard87}, and of spatial
conjunction and implication in \emph{spatial
  logic}~\cite{DBLP:journals/iandc/CairesC03} and \emph{separation
  logic}~\cite{DBLP:conf/lics/Reynolds02, DBLP:conf/csl/OHearnRY01}.
For these and other logics, proof systems have been developed which
allow one to reason about expressions containing these operators.
In these logics, $\&$ and $\multimap$
are first-class operators on par with the other logical operators, and
their semantics are defined as certain sets of processes.  In contrast,
for \NAA\ and hence, via the translations, also for $\nu$-calculus, $\|$
and $/$ are \emph{derived} operators, and we provide constructions to
reduce any expression which contains them, to one which does not.  This
is important from the perspective of reuse of components and useful in
industrial applications. To the best of our knowledge, there are no
other such \emph{reductions} of quotient for the synchronisation type of
composition in the context of specifications.

\section{Conclusion}\label{sec:conclusion}

In this chapter we have introduced a~general specification framework
whose basis consists of four different but equally expressive
formalisms: one of a~graphical behavioral kind (DMTS), one
logic-based ($\nu$-calculus) and two intermediate languages between
the former two (\NAA\ and hybrid modal logic).  We have shown their
structural equivalence.

The established connection implies several consequences.  On the one
hand, it allows for a graphical representation of $\nu$-calculus.
Further, composition on DMTS can be transferred to the modal
$\nu$-calculus, hence turning it into a modal process algebra.  On the
other hand, such a correspondence identifies a~class of modal transition
systems with a natural expressive power and provides another
justification of this formalism. Further, this class is closed under
both conjunction and disjunction, a requirement raised by
component-based design methods.  However, it is not closed under
complement and difference.\footnote{Previous results on
  difference~\cite{DBLP:journals/sosym/SassolasCU11} are incorrect due
  to a mistake in~\cite{DBLP:conf/sigsoft/FischbeinU08} on conjunction
  of MTS, see~\cite[p.~36]{thesis-MU14}.}  Nevertheless, since DMTS are
closed under conjunction, disjunction and composition, we still have a
positive Boolean process algebra.

Altogether, we have shown that the framework possesses a rich algebraic
structure that includes logical (conjunction, disjunction) and
behavioral operations (parallel composition and quotient) and forms a
complete specification theory in the sense
of~\cite{DBLP:conf/concur/Larsen90, DBLP:conf/fase/BauerDHLLNW12}.

Moreover, the construction of the quotient solves an open problem in
the area of MTS. All attempts to find the quotient for variants of MTS
so far have been limited to the much simpler deterministic
case~\cite{DBLP:journals/entcs/Raclet08}. Here we have given the first
solution to the quotient on nondeterministic specifications: first, a
quotient construction for MTS, and then a quotient for general DMTS.
Due to the established correspondence, the quotient can be applied
also to $\nu$-calculus formulae.  We remark that all our translations
and constructions are based on a new \emph{normal form} for
$\nu$-calculus expressions, and that turning a $\nu$-calculus
expression into normal form may incur an exponential blow-up.
However, the translations and constructions preserve the normal form,
so that this translation only need be applied once in the beginning.

\chapter[Compositionality for Quantitative
Specifications][Compositionality for Quantitative
Specifications]{Compositionality for Quantitative
  Specifications\footnote{This chapter is based on the journal
    paper~\cite{DBLP:journals/soco/FahrenbergKLT18} published in Soft
    Computing.}}
\label{ch:dmts2}

This chapter continues and finishes the work of Chapter~\ref{ch:wm2}.
It extends the quantitative theory of that chapter to the disjunctive
modal transition systems (DMTS) of Chapter~\ref{ch:dmts} and shows
that also in the quantitative setting, DMTS are closely related to
acceptance automata and the modal $\nu$-calculus.  The quantitative
theory of DMTS is shown to be rather pleasant, with better properties
than for pure MTS.

\section{Structured Labels}
\label{se:structlabels}

Let $\Sigma$ be a poset with partial order $\labpre$.  We think of
$\labpre$ as \emph{label refinement}, so that if $a\labpre b$, then $a$
is less permissive (more restricted) than $b$.

\begin{definition}
  A label $a\in \Sigma$ is an \emph{implementation label} if $b\labpre
  a$ implies $b= a$ for all $b\in \Sigma$.  The set of implementation
  labels is denoted $\Gamma$, and for $a\in \Sigma$, we let $\impl
  a=\{ b\in \Gamma\mid b\labpre a\}$ denote the set of its
  implementations.
\end{definition}

Hence $a$ is an implementation label iff $a$ cannot be further
refined.  Note that $a\labpre b$ implies $\impl a\subseteq \impl b$
for all $a, b\in \Sigma$.

\begin{example}
  \label{ex:labelsets}
  A trivial but important example of our label structure is the
  \emph{discrete} one in which label refinement $\labpre$ is equality
  (and $\Gamma= \Sigma$).  This is equivalent to the ``standard'' case
  of \emph{unstructured} labels.

  A typical label set in quantitative applications consists of a
  discrete component and real-valued weights.  For specifications,
  weights are replaced by (closed) weight \emph{intervals}, so that
  $\Sigma= U\times\{[ l, r]\mid l\in \Real\cup\{ -\infty\}, r\in
  \Real\cup\{ \infty\}, l\le r\}$ for a finite set $U$,
  \cf~\cite{DBLP:journals/fmsd/BauerFJLLT13,
    DBLP:journals/mscs/BauerJLLS12}.  Label refinement is given by $(
  u_1,[ l_1, r_1])\labpre( u_2,[ l_2, r_2])$ iff $u_1= u_2$ and $[ l_1,
  r_1]\subseteq[ l_2, r_2]$, so that labels are more refined if they
  specify smaller intervals; thus, $\Gamma= U\times\{[ x, x]\mid x\in
  \Real\}\approx U\times \Real$.

  For a quite general setting, we can instead start with an arbitrary
  set $\Gamma$ of implementation labels, let $\Sigma= 2^\Gamma$, the
  powerset, and $\mathord\labpre= \mathord\subseteq$ be subset
  inclusion.  Then $\impl a= a$ for all $a\in \Sigma$.  (Hence we
  identify implementation labels with one-element subsets of $\Sigma$.)
  \qed
\end{example}

\subsection{Label operations}

Specification theories come equipped with several standard operations
that make compositional software design
possible~\cite{DBLP:conf/fase/BauerDHLLNW12}: conjunction for merging
viewpoints covering different system's
aspects~\cite{DBLP:conf/sigsoft/UchitelC04,
  DBLP:conf/concur/Ben-DavidCU13}, structural composition for running
components in parallel, and quotient to synthesize missing parts of
systems~\cite{DBLP:conf/lics/LarsenX90}. In order to provide them
for DMTS, we first need the respective atomic operations on their action
labels.

We hence assume that $\Sigma$ comes equipped with a partial conjunction,
\ie~an operator $\oland: \Sigma\times \Sigma\parto \Sigma$ for which it
holds that
\begin{enumerate}[(1)]
\item \label{en:oland.lb} if $a_1\oland a_2$ is defined, then $a_1\oland
  a_2\labpre a_1$ and $a_1\oland a_2\labpre a_2$, and
\item \label{en:oland.glb} if $a_3\labpre a_1$ and $a_3\labpre a_2$, then
  $a_1\oland a_2$ is defined and $a_3\labpre a_1\oland a_2$.
\end{enumerate}
Note that by these properties, any two partial conjunctions on $\Sigma$
have to agree on elements for which they are both defined.

\begin{example}
  \label{ex:conjunction}
  For discrete labels, the unique conjunction operator is given by
  \begin{equation*}
    a_1\oland a_2=
    \begin{cases}
      a_1 &\text{if } a_1= a_2\,,\\
      \text{undef.} &\text{otherwise}\,.
    \end{cases}
  \end{equation*}
  Indeed, by property~\eqref{en:oland.glb}, $a_1\oland a_2$ must be
  defined for $a_1= a_2$, and by~\eqref{en:oland.lb}, if $a_1\oland
  a_2= a_3$ is defined, then $a_3= a_1$ and $a_3= a_2$.

  For labels in $U\times\{[ l, r]\mid l, r\in \Real, l\le r\}$,
  the unique conjunction is
  \begin{equation*}
    ( u_1,[ l_1, r_1])\oland( u_2,[ l_2, r_2])=
    \begin{cases}
      \text{undef.} \qquad\text{if } u_1\ne u_2\text{ or }[ l_1,
      r_1]\cap[ l_2, r_2]= \emptyset\,,\\
      ( u_1,[ l_1, r_1]\cap[ l_2, r_2]) \qquad\text{otherwise}\,.
    \end{cases}
  \end{equation*}
  To see uniqueness, let $a_i=( u_i,[ l_i, r_i])$ for $i= 1, 2, 3$.
  Using property~\eqref{en:oland.glb}, we see that $a_1\oland a_2$
  must be defined when $u_1= u_2$ and $[ l_1, r_1]\cap[ l_2, r_2]\ne
  \emptyset$, and by~\eqref{en:oland.glb}, if $a_1\oland a_2= a_3$ is
  defined, then $u_3= u_1$ and $u_3= u_2$, and $[ l_3, r_3]\subseteq[
  l_1, r_1]$, $[ l_3, r_3]\subseteq[ l_2, r_2]$ imply $[ l_1,
  r_1]\cap[ l_2, r_2]\ne \emptyset$.

  Finally, for the case of specification labels as sets of
  implementation labels, the unique conjunction is $a_1\oland a_2=
  a_1\cap a_2$. \qed
\end{example}

For structural composition and quotient of specifications, we assume a
partial \emph{label synchronization} operator $\mathord{\obar}:
\Sigma\times \Sigma\parto \Sigma$ which specifies how to compose
labels.  We assume $\obar$ to be associative and commutative, with the
following technical property which we shall need later: For all $a_1,
a_2, b_1, b_2\in \Sigma$ with $a_1\labpre a_2$ and $b_1\labpre b_2$,
$a_1\obar b_1$ is defined iff $a_2\obar b_2$ is, and if both are
defined, then $a_1\obar b_1\labpre a_2\obar b_2$.

\begin{example}
  \label{ex:composition}
  For discrete labels, the conjunction of Example~\ref{ex:conjunction}
  is the same as CSP-style composition, \ie~$a\obar b= a$ if $a= b$ and
  undefined otherwise, but other compositions can easily be defined.

  For labels in $U\times\{[ l, r]\mid l, r\in \Real, l\le r\}$, several
  useful label synchronization operators may be defined for different
  applications.  One is given by \emph{addition} of intervals, \ie
  \begin{equation*}
    ( u_1,[ l_1, r_1])\obarplus( u_2,[ l_2, r_2])=
    \begin{cases}
      \text{undef.} &\text{if } u_1\ne u_2\,,\\
      ( u_1,[ l_1+ l_2, r_1+ r_2]) &\text{otherwise}\,,
    \end{cases}
  \end{equation*}
  for example modeling computation time of actions on a single
  processor.  Another operator, useful in scheduling, uses maximum
  instead of addition:
  \begin{equation*}
    ( u_1,[ l_1, r_1])\obarmax( u_2,[ l_2, r_2])=
    \begin{cases}
      \text{undef.} &\text{if } u_1\ne u_2\,,\\
      ( u_1,[ \max( l_1, l_2), \max( r_1, r_2)])
      &\text{otherwise}\,.
    \end{cases}
  \end{equation*}

  For set-valued specification labels, we may take any synchronization
  operator $\obar$ given on implementation labels $\Gamma$ and lift it
  to one on $\Sigma$ by $a_1\obar a_2=\{ b_1\obar b_2\mid b_1\in \impl{
    a_1}, b_2\in \impl{ a_2}\}$. \qed
\end{example}

\section{Specification Formalisms}
\label{se:formalisms}

In this section we introduce the specification formalisms which we use
in the rest of the paper.  The universe of models for our specifications
is the one of standard \emph{labeled transition systems}.  For
simplicity of exposition, we work only with \emph{finite} specifications
and implementations, but most of our results extend to the infinite (but
finitely branching) case.

A \emph{labeled transition system} (LTS) is a structure $\mcalI=( S, s^0,
\omust)$ consisting of a finite set $S$ of states, an initial state
$s^0\in S$, and a transition relation $\omust \subseteq S\times
\Gamma\times S$.  We usually write $\smash{s\DMTSmust a t}$ instead of $( s,
a, t)\in \omust$.  Note that transitions are labeled with
\emph{implementation} labels.

\subsection{Disjunctive Modal Transition Systems}

A \emph{disjunctive modal transition system} (DMTS) is a structure
$\mcalD=( S, S^0,$ $\omay, \omust)$ consisting of finite sets
$S\supseteq S^0$ of states and initial states, respectively,
may-tran\-sitions $\omay\subseteq S\times \Sigma\times S$, and
disjunctive must-transitions
$\omust\subseteq S\times 2^{ \Sigma\times S}$.  It is assumed that for
all $( s, N)\in \omust$ and $( a, t)\in N$ there is
$( s, b, t)\in \omay$ with $a\labpre b$.

Note that we allow multiple (or zero) initial states.  We write
$\smash{s\DMTSmay a t}$ instead of $( s, a, t)\in \omay$ and $s\DMTSmust{} N$
instead of $( s, N)\in \omust$.

A DMTS $( S, S^0, \omay, \omust)$ is an \emph{implementation} if
$\omay\subseteq S\times \Gamma\times S$, $\omust=\{( s,\{( a,
t)\})\mid \smash{s\DMTSmay a t}\}$, and $S^0=\{ s^0\}$ is a singleton; DMTS
implementations are hence isomorphic to LTS.

DMTS were introduced in~\cite{DBLP:conf/lics/LarsenX90} in the context
of equation solving, or \emph{quotient} of specifications by processes
and are used \eg~in~\cite{DBLP:conf/atva/BenesCK11} for LTL model
checking.  They are a natural extension of \emph{modal} transition
systems~\cite{DBLP:conf/lics/LarsenT88}, which are DMTS in which all
disjunctive must-transitions $s\DMTSmust{} N$ lead to singletons $N=\{( a,
t)\}$; in fact, DMTS are the closure of MTS under
quotient~\cite{DBLP:conf/lics/LarsenX90}.

We introduce a notion of modal refinement of DMTS with structured
labels.  For discrete labels, it coincides with the classical
definition~\cite{DBLP:conf/lics/LarsenX90}.

\begin{definition}
  Let $\mcalD_1=( S_1, S^0_1, \omay_1, \omust_1)$ and $\mcalD_2=( S_2, S^0_2,
  \omay_2, \omust_2)$ be DMTS.  A relation $R\subseteq S_1\times S_2$ is
  a \emph{modal refinement} if it holds for all $( s_1, s_2)\in R$ that
  \begin{itemize}
  \item for all $s_1\DMTSmay{ a_1}_1 t_1$ there is $s_2\DMTSmay{ a_2}_2 t_2$
    such that $a_1\labpre a_2$ and $( t_1, t_2)\in R$, and
  \item for all $s_2\DMTSmust{}_2 N_2$ there is $s_1\DMTSmust{}_1 N_1$ such that
    for all $( a_1, t_1)\in N_1$ there is $( a_2, t_2)\in N_2$ with
    $a_1\labpre a_2$ and $( t_1, t_2)\in R$.
  \end{itemize}
  $\mcalD_1$ \emph{refines} $\mcalD_2$, denoted $\mcalD_1\mr \mcalD_2$, if there
  exists an \emph{initialized} modal refinement $R$, \ie~one for which
  it holds that for every $s_1^0\in S_1^0$ there is $s_2^0\in S_2^0$ for
  which $( s_1^0, s_2^0)\in R$.
\end{definition}

Note that this definition reduces to the one
of~\cite{DBLP:conf/lics/LarsenX90, DBLP:conf/atva/BenesCK11} for
discrete labels (\cf~Example~\ref{ex:labelsets}).

We write $\mcalD_1\mreq \mcalD_2$ if $\mcalD_1\mr \mcalD_2$ and $\mcalD_2\mr \mcalD_1$.
The \emph{implementation semantics} of a DMTS $\mcalD$ is $\impl \mcalD=\{
\mcalI\mr \mcalD\mid \mcalI~\text{implementation}\}$.  This is, thus, the set of
all LTS which satisfy the specification given by the DMTS $\mcalD$.  We say
that $\mcalD_1$ \emph{thoroughly refines} $\mcalD_2$, and write $\mcalD_1\DMTStr
\mcalD_2$, if $\impl{ \mcalD_1}\subseteq \impl{ \mcalD_2}$.

The below proposition, which follows directly from transitivity of modal
refinement, shows that modal refinement is \emph{sound} with respect to
thorough refinement; in the context of specification theories, this is
what one would expect.  It can be shown that modal refinement is also
\emph{complete} for \emph{deterministic}
DMTS~\cite{DBLP:journals/tcs/BenesKLS09}, but we will not need this
here.

\begin{proposition}\label{prop:mrtr}
  \label{soco.pr:mrvstr}
  For all DMTS $\mcalD_1$, $\mcalD_2$, $\mcalD_1\mr \mcalD_2$ implies $\mcalD_1\DMTStr
  \mcalD_2$. \noproof
\end{proposition}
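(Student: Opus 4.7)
The plan is to derive the statement as an immediate consequence of two facts: that modal refinement $\mr$ is transitive, and that every implementation $\mcalI \in \impl{\mcalD_1}$ is itself (isomorphic to) a DMTS satisfying $\mcalI \mr \mcalD_1$. Given these, if $\mcalD_1 \mr \mcalD_2$, then for any $\mcalI \in \impl{\mcalD_1}$ we obtain $\mcalI \mr \mcalD_1 \mr \mcalD_2$, hence $\mcalI \in \impl{\mcalD_2}$ by transitivity; this yields $\impl{\mcalD_1} \subseteq \impl{\mcalD_2}$, that is, $\mcalD_1 \DMTStr \mcalD_2$.

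The only real content is therefore transitivity of $\mr$. Given witnesses $R_{12} \subseteq S_1 \times S_2$ for $\mcalD_1 \mr \mcalD_2$ and $R_{23} \subseteq S_2 \times S_3$ for $\mcalD_2 \mr \mcalD_3$, I would form the relational composition $R_{13} = R_{23} \circ R_{12} \subseteq S_1 \times S_3$ and verify that it is an initialized modal refinement. Pick $(s_1, s_3) \in R_{13}$ with intermediate witness $s_2$. For a may-transition $s_1 \DMTSmay{a_1}_1 t_1$, two successive applications of the may-clause yield $s_2 \DMTSmay{a_2}_2 t_2$ and $s_3 \DMTSmay{a_3}_3 t_3$ with $a_1 \labpre a_2 \labpre a_3$ and $(t_1, t_2) \in R_{12}$, $(t_2, t_3) \in R_{23}$; transitivity of $\labpre$ and of relational composition give $a_1 \labpre a_3$ and $(t_1, t_3) \in R_{13}$. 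For a disjunctive must-transition $s_3 \DMTSmust{}_3 N_3$, the must-clause is applied in the reverse direction: first $R_{23}$ produces $s_2 \DMTSmust{}_2 N_2$ matching $N_3$, and then $R_{12}$ produces $s_1 \DMTSmust{}_1 N_1$ matching $N_2$; given $(a_1, t_1) \in N_1$, chaining the two matches supplies $(a_2, t_2) \in N_2$ and then $(a_3, t_3) \in N_3$ with $a_1 \labpre a_2 \labpre a_3$ and $(t_1, t_3) \in R_{13}$, as required. Initialization of $R_{13}$ follows by a direct two-step witness chase through $S_1^0, S_2^0, S_3^0$.

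Applying transitivity to the pair $\mcalI \mr \mcalD_1$ and $\mcalD_1 \mr \mcalD_2$ finishes the argument. The only point that requires a moment's care is the opposite directionality of the quantifiers in the may- and must-clauses, and in particular that the must-clause composes naturally when one applies $R_{23}$ before $R_{12}$; this is, however, entirely routine and I do not expect any substantive obstacle.
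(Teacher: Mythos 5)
Your proposal is correct and is exactly the paper's argument: the paper states that the proposition ``follows directly from transitivity of modal refinement,'' which is precisely your reduction via $\mcalI \mr \mcalD_1 \mr \mcalD_2$ using relational composition of the witnessing refinements. Your sketch of transitivity itself, including the reversed quantifier chase for the disjunctive must-clause, is routine and sound.
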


\subsection{Acceptance automata}

An \emph{acceptance automaton} (\NAA) is a structure
$\mcalA=( S, S^0, \Tran)$, with $S\supseteq S^0$ finite sets of states
and initial states and $\Tran: S\to 2^{ 2^{ \Sigma\times S}}$ an
assignment of \emph{transition constraints}.
The intuition is that a transition constraint $\Tran( s)=\{ M_1,\dots,
M_n\}$ specifies a disjunction of $n$ choices $M_1,\dots, M_n$ as to
which transitions from $s$ have to be implemented.

An \NAA is an \emph{implementation} if $S^0=\{ s^0\}$ is a singleton and
it holds for all $s\in S$ that $\Tran( s)=\{ M\}\subseteq 2^{
  \Gamma\times S}$ is a singleton; hence \NAA implementations are
isomorphic to LTS.
Acceptance automata were first introduced
in~\cite{report/irisa/Raclet07}, based on the notion of acceptance trees
in~\cite{DBLP:journals/jacm/Hennessy85}; however, there they are
restricted to be \emph{deterministic}.  We employ no such restriction
here.

Let $\mcalA_1=( S_1, S^0_1, \Tran_1)$ and $\mcalA_2=( S_2, S^0_2, \Tran_2)$ be
\NAA.  A relation $R\subseteq S_1\times S_2$ is a \emph{modal refinement}
if it holds for all $( s_1, s_2)\in R$ and all $M_1\in \Tran_1( s_1)$
that there exists $M_2\in \Tran_2( s_2)$ such that
\begin{equation}
  \label{soco.eq:aaref}
  \begin{aligned}
    &\forall( a_1, t_1)\in M_1: \exists( a_2, t_2)\in M_2:
    a_1\labpre a_2,( t_1, t_2)\in R\,,\\
    &\forall( a_2, t_2)\in M_2: \exists( a_1, t_1)\in M_1:
    a_1\labpre a_2,( t_1, t_2)\in R\,.
  \end{aligned}
\end{equation}
The definition reduces to the one of~\cite{report/irisa/Raclet07} in
case labels are discrete.  We will write $M_1\labpre_R M_2$ if $M_1$,
$M_2$, $R$ satisfy~\eqref{soco.eq:aaref}.

In Chapter~\ref{ch:dmts} we have introduced translations between DMTS
and \NAA.  For a DMTS $\mcalD=( S, S^0, \omay, \omust)$ and $s\in S$,
let
$\Tran(s)=\{ M\subseteq \Sigma\times S\mid \forall (a,t)\in M:
s\DMTSmay{a} t, \forall s\DMTSmust{} N: N\cap M\ne \emptyset\}$ and
define the \NAA $\da( \mcalD)=( S, S^0, \Tran)$.
For an \NAA $\mcalA=( S, S^0, \Tran)$, define the DMTS $\ad( \mcalA)=( D, D^0,
\omay, \omust)$ by
\begin{align*}
  D &= \{ M\in \Tran( s)\mid s\in S\}\,, \\
  D^0 &= \{ M^0\in \Tran( s^0)\mid s^0\in S^0\}\,, \\
  \omust &= \big\{\big( M,\{( a, M')\mid M'\in \Tran(
  t)\}\big)\bigmid( a, t)\in M\big\}\,, \\
  \omay &= \{( M, a, M')\mid \exists M\DMTSmust{} N: ( a, M')\in N\}\,.
\end{align*}

Theorem~\ref{th:dmtsaa} is easily extended to our case of structured
labels:

\begin{theorem}
  \label{th:dmtsvsaa-bool}
  For all DMTS $\mcalD_1$, $\mcalD_2$ and \NAA $\mcalA_1$, $\mcalA_2$, $\mcalD_1\mr
  \mcalD_2$ iff $\da( \mcalD_1)\mr \da( \mcalD_2)$ and $\mcalA_1\mr \mcalA_2$ iff $\ad(
  \mcalA_1)\mr \ad( \mcalA_2)$.  \noproof
\end{theorem}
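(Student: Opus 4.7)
The plan is to prove the theorem by establishing all four implications that make up the two ``iff'' statements, adapting the proof of Theorem~\ref{th:dmtsaa} from Chapter~\ref{ch:dmts} (the discrete-label case) to the structured-label setting. The guiding principle throughout will be to replace every occurrence of label equality $a_1 = a_2$ in the discrete proof with the refinement order $a_1 \labpre a_2$, and to ensure that the resulting definitions of $M_2$, $N_2$, etc.\ still satisfy the required membership and refinement conditions.

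For the two implications between DMTS, I would reuse the ideas from Chapter~\ref{ch:dmts}. For $\mcalD_1 \mr \mcalD_2 \Rightarrow \da(\mcalD_1) \mr \da(\mcalD_2)$, given a witnessing DMTS refinement $R$, I would show that the same $R$ witnesses the \NAA refinement of the translations. For $(s_1, s_2) \in R$ and $M_1 \in \Tran_1(s_1)$, define
\begin{equation*}
  M_2 = \{(b, t_2) \mid s_2 \DMTSmay{b}_2 t_2,\ \exists (a, t_1) \in M_1 : a \labpre b,\ (t_1, t_2) \in R\}.
\end{equation*}
Verify that $M_2 \in \Tran_2(s_2)$ by using the DMTS refinement clause for disjunctive must-transitions together with $N_1 \cap M_1 \ne \emptyset$, then check that $M_1 \labpre_R M_2$ holds by construction. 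For the reverse direction, given an \NAA refinement $R$ and $(s_1, s_2) \in R$, to match a transition $s_1 \DMTSmay{a}_1 t_1$ take $M_1 = \{(a, t_1)\} \cup \bigcup_{s_1 \DMTSmust{}_1 N_1} N_1$ (which lies in $\Tran_1(s_1)$) and extract the needed $s_2 \DMTSmay{b}_2 t_2$ with $a \labpre b$ from the matching $M_2$; to match $s_2 \DMTSmust{}_2 N_2$, argue by contradiction as in the discrete proof, choosing one uncovered element from each $s_1 \DMTSmust{}_1 N_1$ to form an $M_1$ whose \NAA witness would force an element of $N_2$ to be $\labpre_R$-matched.

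The two implications between \NAA pass through the exponentially larger state spaces of the DMTS translations. For $\mcalA_1 \mr \mcalA_2 \Rightarrow \ad(\mcalA_1) \mr \ad(\mcalA_2)$, given an \NAA refinement $R$, define
\begin{equation*}
  R' = \{(M_1, M_2) \mid \exists (s_1, s_2) \in R : M_1 \in \Tran_1(s_1),\ M_2 \in \Tran_2(s_2),\ M_1 \labpre_R M_2\},
\end{equation*}
and verify that $R'$ is a DMTS modal refinement on $\ad(\mcalA_1) \times \ad(\mcalA_2)$. The may-transitions in $\ad(\mcalA_i)$ come from elements $(a, M') \in N$ with $M \DMTSmust{}_i N$, so I would use $M_1 \labpre_R M_2$ together with the construction of the disjunctive must-relation in $\ad$ to produce the $\labpre$-larger matching transition, and the $R$-relation on the targets. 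For the converse, given a DMTS refinement $R$ on $\ad(\mcalA_1) \mr \ad(\mcalA_2)$, define $R' = \{(s_1, s_2) \mid \forall M_1 \in \Tran_1(s_1): \exists M_2 \in \Tran_2(s_2): (M_1, M_2) \in R\}$, and for $(s_1, s_2) \in R'$ and $M_1 \in \Tran_1(s_1)$ use the disjunctive must $M_1 \DMTSmust{} \{(a, M'_1) \mid a \in \Sigma, M'_1 \in \Tran_1(t_1), (a, t_1) \in M_1\}$ in $\ad(\mcalA_1)$ to extract an \NAA witness $M_2$, then verify the two clauses of $\labpre_R$ using the DMTS may-matching in both directions.

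The main obstacle will be bookkeeping the label refinement order $\labpre$ throughout, especially in the two reverse directions. In the \NAA-to-DMTS reverse direction in particular, one must be careful that when lifting a DMTS match $(a, M'_1) \leadsto (b, M'_2)$ with $a \labpre b$ back to an \NAA match on individual target transitions, the label comparisons compose correctly. A secondary subtlety is the consistency condition in the structured DMTS definition: a must-transition $s \DMTSmust{} N$ with $(a, t) \in N$ only guarantees $s \DMTSmay{b} t$ for some $b \sqsupseteq a$, not $s \DMTSmay{a} t$ exactly, so the verifications of $M_2 \in \Tran_2(s_2)$ and of the intersection condition $N_2 \cap M_2 \ne \emptyset$ may require a slight reinterpretation of intersection up to $\labpre$, or a canonical-form assumption that makes must- and may-labels coincide.
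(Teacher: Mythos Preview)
Your proposal is correct and follows exactly the approach the paper intends: the theorem is stated without proof, with the remark that ``Theorem~\ref{th:dmtsaa} is easily extended to our case of structured labels,'' and your four-implication plan is precisely this extension. Indeed, the paper carries out essentially the same argument in full detail for the \emph{distance} version (Theorem~\ref{th:trans-moddist}), and your constructions of $M_2$, $R'$, etc.\ coincide with the Boolean specializations of the constructions there.

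Your final paragraph raises a genuine point worth noting. The structured consistency condition only guarantees, for $(a_2,t_2)\in N_2$, a may-transition $s_2\DMTSmay{b}_2 t_2$ with $a_2\labpre b$, not $s_2\DMTSmay{a_2}_2 t_2$ itself; the paper's detailed proof of Theorem~\ref{th:trans-moddist} simply writes ``$(a_2,t_2)\in N_2$ implies $s_2\DMTSmay{a_2}_2 t_2$,'' tacitly assuming the canonical form you identify. Your proposed resolution---either working up to $\labpre$ or assuming must-labels also occur as may-labels---is exactly right, and the latter can be imposed without loss of generality since adding $s\DMTSmay{a} t$ whenever $s\DMTSmay{b} t$ with $a\labpre b$ changes neither incoming nor outgoing modal refinements.
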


This structural equivalence will allow us to freely translate forth
and back between DMTS and \NAA in the rest of the paper.  Note,
however, that the state spaces of $\mcalA$ and $\ad( \mcalA)$ are not
the same; the one of $\ad( \mcalA)$ may be exponentially larger.
Proposition~\ref{pr:bfstodmtsblowup} shows that this blow-up is
unavoidable.

From a practical point of view, DMTS are a somewhat more useful
specification formalism than \NAA.  This is because they are usually more
compact and easily drawn and due to their close relation to the modal
$\nu$-calculus, see below.

\subsection{The Modal $\nu$-Calculus}

The modal $\nu$-calculus~\cite{DBLP:journals/deds/FeuilladeP07} is the
maximal-fixed point fragment of the modal
$\mu$-calculus~\cite{DBLP:journals/tcs/Kozen83}, \ie~the modal
$\mu$-calculus without negation and without the minimal fixed point
operator.  This is also sometimes called \emph{Henn\-essy-Milner logic
  with maximal fixed points} and represented using equation systems in
Hennessy-Milner logic with variables,
see~\cite{DBLP:journals/tcs/Larsen90, books/AcetoILS07}.  We will use
this representation below.  In Chapter~\ref{ch:dmts} we have
introduced translations between DMTS and the modal $\nu$-calculus,
showing that for discrete labels, these formalisms are
\emph{structurally equivalent}.

For a finite set $X$ of variables, let $\HML( X)$ be the set of
\emph{Hennessy-Milner formulae}, generated by the abstract syntax $\HML(
X)\ni \phi\Coloneqq \ltrue\mid \lfalse\mid x\mid \langle a\rangle \phi\mid[
a] \phi\mid \phi\land \phi\mid \phi\lor \phi$, for $a\in \Sigma$ and
$x\in X$.  A \emph{$\nu$-calculus expression} is a structure $\mcalN=( X,
X^0, \Delta)$, with $X^0\subseteq X$ sets of variables and $\Delta: X\to
\HML( X)$ a \emph{declaration}.

We recall the greatest fixed point semantics of $\nu$-calculus
expressions from \cite{DBLP:journals/tcs/Larsen90}, but extend it to
structured labels.  Let $( S, S^0, \omust)$ be an LTS, then an
\emph{assignment} is a mapping $\sigma: X\to 2^S$.  The set of
assignments forms a complete lattice with order
$\sigma_1\sqsubseteq \sigma_2$ iff
$\sigma_1( x)\subseteq \sigma_2( x)$ for all $x\in X$ and lowest upper
bound
$\big(\bigsqcup_{ i\in I} \sigma_i\big)( x)= \bigcup_{ i\in I}
\sigma_i( x)$.

The semantics of a formula in $\HML( X)$ is a function from
assignments to subsets of $S$ defined as follows: $\lsem \ltrue \sigma=
S$, $\lsem \lfalse \sigma= \emptyset$, $\lsem x \sigma= \sigma( x)$,
$\lsem{ \phi\land \psi} \sigma= \lsem \phi\sigma\cap \lsem \psi
\sigma$, $\lsem{ \phi\lor \psi} \sigma= \lsem \phi\sigma\cup \lsem
\psi \sigma$, and
\begin{align*}
  \lsem{\langle a\rangle \phi} \sigma &= \{ s\in S\mid \exists s\DMTSmust b
  t: b \in \impl{a}, t\in \lsem \phi \sigma\}, \\
  \lsem{[ a] \phi} \sigma &= \{ s\in S\mid \forall s\DMTSmust b t: b \in
  \impl{a} \limpl t\in \lsem \phi \sigma\}.
\end{align*} 
The semantics of a declaration $\Delta$ is then the assignment defined
by
\begin{equation*}
  \lsem \Delta= \bigsqcup\{ \sigma: X\to 2^S\mid \forall x\in X:
  \sigma( x)\subseteq \lsem{ \Delta( x)} \sigma\};
\end{equation*}
the greatest (pre)fixed point of~$\Delta$.

An LTS $\mcalI=( S, s^0, \omust)$ \emph{implements} (or models) the
expression $\mcalN$, denoted $\mcalI\models \mcalN$, if there is $x^0\in X^0$
such that $s^0\in \lsem \Delta( x^0)$.

In Chapter~\ref{ch:dmts} we have introduced another semantics for
$\nu$-calculus expressions, which is given by a notion of refinement,
like for DMTS and \NAA.  For this we recall the normal form for
$\nu$-calculus expressions:

\begin{lemma}
  \label{soco.le:hmlnormal}
  For any $\nu$-calculus expression $\mcalN_1=( X_1, X^0_1, \Delta_1)$,
  there exists another $\mcalN_2=( X_2, X^0_2, \Delta_2)$ with
  $\sem{ \mcalN_1}= \sem{ \mcalN_2}$ and such that for any $x\in X$,
  $\Delta_2( x)$ is of the form
  \begin{equation*}
    \Delta_2( x)= \bigland_{ i\in I}\big( \biglor_{ j\in
      J_i} \langle a_{ ij}\rangle  x_{ ij}\big)\land \bigland_{ a\in
      \Sigma}[ a] \big( \biglor_{ j\in J_a} y_{ a, j}\big)
  \end{equation*}
  for finite (possibly empty) index sets $I$, $J_i$, $J_a$ and all $x_{
    ij}, y_{ a, j}\in X_2$.  \noproof
\end{lemma}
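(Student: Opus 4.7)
The plan is to follow the approach used in the qualitative setting (Lemma~\ref{le:hmlnormal}), adapted to structured labels. The key observation is that the normal form has two ingredients: (i)~a shape constraint on each right-hand side $\Delta_2(x)$, namely a conjunction of diamonds-of-variables and boxes-of-disjunctions-of-variables, and (ii)~the fact that semantics is preserved. Nothing in the statement requires the subtle semantic compatibility $x_{ij}\DMTStr y_{a_{ij},j'}$ present in the earlier lemma, so the proof is purely syntactic and proceeds by iteratively pushing each formula towards the required shape while introducing fresh variables whose declarations inherit the relevant subformulae.

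First I would appeal to the strong normal form result of Boudol and Larsen~\cite{DBLP:journals/tcs/BoudolL92}: every Hennessy-Milner formula is equivalent to one of the form $\biglor_{i\in I}\bigl(\bigland_{j\in J_i}\langle a_{ij}\rangle \phi_{ij}\land \bigland_{a\in\Sigma}[a]\psi_{i,a}\bigr)$, where the $\phi_{ij}$ and $\psi_{i,a}$ are again in strong normal form. Applied pointwise to each $\Delta_1(x)$, and introducing fresh variables $x_{ij}$ and $y_{i,a}$ with declarations $\Delta(x_{ij})=\phi_{ij}$ and $\Delta(y_{i,a})=\psi_{i,a}$, the greatest-fixed-point semantics is preserved (this is a standard flattening construction: adding names for subformulae does not change the induced assignment on the original variables, since the new equations are definitional).

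Next I would eliminate the top-level disjunction in each declaration. For each variable $x$ whose current declaration is $\biglor_{i\in I}\bigl(\bigland_{j\in J_i}\langle a_{ij}\rangle x_{ij}\land \bigland_{a\in\Sigma}[a] y_{i,a}\bigr)$, introduce fresh variables $\tilde x^{i}$, one per disjunct, and set
\begin{equation*}
  \Delta_2(\tilde x^{i})=\bigland_{j\in J_i}\langle a_{ij}\rangle \tilde x_{ij}\land \bigland_{a\in\Sigma}[a]\tilde y_{i,a}.
\end{equation*}
Inside $\Delta_2(\tilde x^i)$, each occurrence of a variable $z$ that used to denote a disjunctive formula $\biglor_i \tilde z^{i}$ should be replaced by a fresh variable whose declaration is, in turn, that disjunction; one then pushes $\biglor$ under $\langle a\rangle$ using distributivity of diamond over disjunction to obtain $\biglor_k\langle a_{ij}\rangle \tilde x_{ij}^k$, and leaves boxes applied to disjunctions. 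Taking $X_2^0=\{\tilde x^{i}\mid x\in X_1^0,\ i\in I_x\}$ preserves the models, because $s^0\in\lsem{\Delta_1(x^0)}=\bigcup_i\lsem{\text{$i$-th disjunct}}$ iff $s^0\in\lsem{\Delta_2(\tilde x^{i})}$ for some $i$.

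The main obstacle, and the reason this is non-trivial even though it is essentially a syntactic manipulation, is keeping the greatest-fixed-point semantics invariant under the repeated introduction of fresh variables and the rewriting of $\langle a\rangle(\phi_1\lor \phi_2)$ into $\langle a\rangle \phi_1\lor\langle a\rangle \phi_2$. One has to check that each rewriting step is a genuine equivalence on the complete lattice of assignments: for the flattening step this follows since the new equations are guarded by definitional naming, and for the distributivity step it is a pointwise equivalence of monotone functionals, so by Tarski the greatest fixed points coincide. The subtlety about preserving semantics when $J_i=\emptyset$ for some $i$ (which would make $\Delta_2(\tilde x^i)$ vacuously satisfied) is handled exactly as in the proof of Lemma~\ref{le:hmlnormal}: if any $J_i=\emptyset$, one forces all $J_a=\emptyset$ as well, since the disjunct is then semantically equivalent to $\ltrue$ and boxes with empty disjunctions encode $\lfalse$ only when needed. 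Once this bookkeeping is done, the resulting $\mcalN_2$ is in the stated form and has the same implementations as $\mcalN_1$.
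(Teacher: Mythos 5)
Your proof is correct and follows essentially the same route as the paper, which states this lemma without proof and relies on the argument already given for Lemma~\ref{le:hmlnormal}: Boudol--Larsen strong normal form, flattening by naming subformulae with fresh variables, splitting each variable per top-level disjunct, and distributing $\langle a\rangle$ over $\lor$, all of which carry over unchanged to structured labels. (One small slip in a side remark: an empty disjunction $\biglor_{j\in\emptyset}$ is $\lfalse$, not $\ltrue$, so a $J_i=\emptyset$ makes the declaration unsatisfiable rather than vacuously satisfied; since the present statement omits the extra clause about $J_i=\emptyset$, this does not affect the argument.)
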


As this is a type of \emph{conjunctive normal form}, it is clear that
translating a $\nu$-calculus expression into normal form may incur an
exponential blow-up.

We introduce some notation for $\nu$-calculus expressions in normal
form.  Let $\mcalN=( X, X^0, \Delta)$ be such an expression and
$x\in X$, with
\begin{equation*}
  \Delta( x)= \bigland_{ i\in I}\Big( \biglor_{ j\in J_i} \langle a_{
    ij}\rangle x_{ ij}\Big)\land \bigland_{ a\in \Sigma}[ a] \Big(
  \biglor_{ j\in J_a} y_{ a, j}\Big)  
\end{equation*}
as in the lemma.  Define
$\Diamond( x)=\{\{( a_{ ij}, x_{ ij})\mid j\in J_i\}\mid i\in I\}$
and, for each $a\in \Sigma$, $\Box^a( x)=\{ y_{ a, j}\mid j\in J_a\}$.
Intuitively, $\Diamond( x)$ collects all
$\langle a\rangle$-requirements from $x$, whereas $\Box^a( x)$
specifies the disjunction of $[ a]$-properties which must hold from
$x$.  Note that now,
\begin{equation}
  \label{eq:boxdiatodelta}
  \Delta( x)= \bigland_{ N\in
    \Diamond(x)} \Big( \biglor_{( a, y)\in N} \langle a\rangle y\Big)
  \land \bigland_{ a\in \Sigma}[ a]\Big( \biglor_{ y\in \Box^a( x)}
  y\Big)\,.
\end{equation}

Let $\mcalN_1=( X_1, X^0_1, \Delta_1)$, $\mcalN_2=( X_2, X^0_2, \Delta_2)$ be
$\nu$-calculus expressions in normal form and $R\subseteq X_1\times
X_2$.  The relation $R$ is a \emph{modal refinement} if it holds for all
$( x_1, x_2)\in R$ that
\begin{itemize}
\item for all $a_1\in \Sigma$ and $y_1\in \Box_1^{ a_1}( x_1)$ there is
  $a_2 \in \Sigma$ and $y_2\in \Box_2^{ a_2}( x_2)$ with $a_1\labpre
  a_2$ and $( y_1, y_2)\in R$, and
\item for all $N_2\in \Diamond_2( x_2)$ there is $N_1\in
  \Diamond_1(x_1)$ such that for all $( a_1, y_1)\in N_1$ there exists
  $( a_2,y_2)\in N_2$ with $a_1 \labpre a_2$ and $( y_1, y_2)\in R$.
\end{itemize}

We say that a $\nu$-calculus expression $( X, X^0, \Delta)$ in normal
form is an \emph{implementation} if $X^0=\{ x^0\}$ is a singleton,
$\Diamond( x)=\{\{( a, y)\}\mid y\in \Box^a( x), a\in \Sigma\}$ and
$\Box^a(x)= \emptyset$ for all $a\notin \Gamma$, for all $x\in X$.  

We can translate a LTS $( S, S^0, \omust)$ to a $\nu$-calculus
expression $( S, S^0, \Delta)$ in normal form by setting $\Diamond(
s)=\{\{( a, t)\}\mid s\DMTSmust a t\}$ and $\Box^a( s)=\{ t\mid s\DMTSmust a
t\}$ for all $s\in S$, $a\in \Sigma$.  This defines a bijection
between LTS and $\nu$-calculus implementations, hence, like for DMTS
and \NAA, an embedding of LTS into the modal $\nu$-calculus.

We have shown in Chapter~\ref{ch:dmts} that for discrete labels, the
refinement semantics and the fixed point semantics of the modal
$\nu$-calculus agree; the proof can easily be extended to our case of
structured labels:

\begin{theorem}
  For any LTS $\mcalI$ and any $\nu$-calculus expression $\mcalN$ in normal
  form, $\mcalI\models \mcalN$ iff $\mcalI\mr \mcalN$.  \noproof
\end{theorem}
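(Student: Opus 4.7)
My plan is to adapt the coinductive proof of Theorem~\ref{th:models=ref-nu} from Chapter~\ref{ch:dmts}, lifting it from discrete labels to structured labels. The proof splits into two directions, both handled by unfolding the normal form~\eqref{eq:boxdiatodelta} of $\Delta(x)$ at every $x$.

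For the forward direction ($\mcalI\mr \mcalN \Rightarrow \mcalI\models \mcalN$), I would take an initialized modal refinement $R$ witnessing $\mcalI\mr \mcalN$ and define the assignment $\sigma(x) = \{s \in S_I \mid (s,x)\in R\}$. The key step is to verify that $\sigma$ is a post-fixed point of $\Delta$, i.e., $\sigma(x) \subseteq \lsem{\Delta(x)}\sigma$ for every $x \in X$; by the greatest-fixed-point characterization of $\lsem{\Delta}$ and initialization of $R$, this yields $s_I^0 \in \lsem{\Delta(x^0)}$ for some $x^0 \in X^0$, hence $\mcalI\models \mcalN$. The post-fixed-point check has two sub-claims at each $(s,x)\in R$. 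First, for every $N \in \Diamond(x)$, the Diamond clause of refinement combined with the fact that $\mcalI$'s diamond sets are singletons $\{(b,t)\}$ produces a transition $s\DMTSmust b t$ together with some $(a,y)\in N$ satisfying $b\labpre a$ and $(t,y)\in R$; since $b \in \impl{a}$ and $t\in \sigma(y)$, this gives $s\in \lsem{\langle a\rangle y}\sigma$. Second, for each $a\in \Sigma$ and each transition $s\DMTSmust b t$ with $b\in \impl{a}$, the Box clause of refinement must produce a witness $y\in \Box^a(x)$ with $(t,y)\in R$.

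The backward direction ($\mcalI\models \mcalN \Rightarrow \mcalI\mr \mcalN$) is its mirror image: set $R = \{(s,x) \mid s\in \lsem{\Delta(x)}\}$ and verify that $R$ is an initialized modal refinement. The Diamond clause unpacks the semantic satisfaction of $\biglor_{(a,y)\in N}\langle a\rangle y$ at $s$ directly into the required transition-matching pair, and the Box clause is obtained by instantiating the $\bigland_{a\in \Sigma}[a](\ldots)$ conjunct at $a = b$ for each transition $s\DMTSmust b t$: since $b\in \Sigma$ and trivially $b\labpre b$, the semantics delivers $y\in \Box^b(x)$ with $(t,y)\in R$, witnessing the Box condition with $a_2 = b$. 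The main obstacle is the second sub-claim of the forward direction: the Box clause of refinement supplies $y$ together with a \emph{specific} $a_2$ satisfying $b \labpre a_2$, whereas the semantics of $[a]\phi$ demands such a $y$ for \emph{every} $a$ with $b\labpre a$. Bridging this gap appears to require strengthening Lemma~\ref{soco.le:hmlnormal} to produce a normal form in which the family $\{\Box^a(x)\}_{a\in \Sigma}$ is coherent under label refinement (so that witnesses transfer along $\labpre$); verifying that such a strengthening preserves the semantics is expected to be the most delicate step.
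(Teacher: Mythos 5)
Your strategy is exactly the paper's: the theorem is stated there with no proof of its own, deferring to the coinductive argument of Theorem~\ref{th:models=ref-nu}, and your two directions---reading a post-fixed point off a refinement relation, and reading a refinement relation off the greatest fixed point---reproduce that argument. The backward direction and the $\Diamond$-half of the forward direction go through as you describe. The gap you flag in the forward $\Box$-case is genuine, and not an artifact of your route. Concretely, take interval labels with $a=(u,[0,1])$ and $a'=(u,[1,2])$, so that $b=(u,[1,1])\in\impl{a}\cap\impl{a'}$; let $\mcalN$ have a single initial variable $x$ with $\Diamond(x)=\emptyset$, $\Box^{a'}(x)=\{y\}$ and $\Box^{c}(x)=\emptyset$ for all $c\ne a'$ (with $\Delta(y)$ such that a deadlocked state refines $y$); and let $\mcalI$ consist of one transition $s\DMTSmust b t$ with $t$ deadlocked. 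Then $\mcalI\mr\mcalN$, since the box clause of refinement is discharged by choosing $a_2=a'$, yet $\mcalI\not\models\mcalN$, because $\Delta(x)$ contains the conjunct $[a]\lfalse$ and $b\in\impl{a}$. The discrete case escapes this because there $\impl{a}=\{a\}$, so distinct box labels never share an implementation and the existentially chosen $a_2$ is forced to equal the $a$ at which the $[a]$-conjunct is evaluated; with structured labels the quantifier mismatch is real.

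Two further remarks. First, rerouting through DMTS as in Chapter~\ref{ch:dmts} does not close the gap: $\ddh\circ\hd$ is the identity on the $\Box$-structure, and the same mismatch reappears between the DMTS may-matching clause (which matches $s\DMTSmay b t$ against \emph{some} may-transition with label above $b$) and the conjunction $\bigland_{a\in\Sigma}[a](\cdots)$ in $\ddh(\mcalD)$ (which demands a witness in $\Box^a$ for \emph{every} $a$ above $b$). So the ``easy extension'' invoked by the paper hides precisely the step you are stuck on, and you will not find the missing argument there. Second, your proposed repair---strengthening Lemma~\ref{soco.le:hmlnormal} so that the family $\{\Box^a(x)\}_{a\in\Sigma}$ is coherent under shared implementations---is the right kind of fix, but it is a genuine strengthening of the stated normal form, and you are right that establishing it while preserving $\sem{\mcalN}$ is where the real work lies. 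As written, your proof is incomplete at exactly that point, and the example above indicates that the equivalence only holds relative to such a restricted normal form (or a corresponding repair of the $[a]$-semantics).
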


For a DMTS $\mcalD=( S, S^0, \omay, \omust)$ and all $s\in S$, let
$\Diamond(s)=\{ N \mid s \DMTSmust{} N\}$ and, for each $a\in \Sigma$,
$\Box^a(s)=\{ t \mid s \DMTSmay{a} t\}$.  Define the (normal-form)
$\nu$-calculus expression $\ddh( \mcalD)=( S, S^0, \Delta)$, with $\Delta$
given as in~\eqref{eq:boxdiatodelta}.
For a $\nu$-calculus expression $\mcalN=( X, X^0, \Delta)$ in normal form,
let $\omay=\{( x, a, y)\in X\times \Sigma\times X\mid y\in \Box^a(
x)\}$, $\omust=\{( x, N)\mid x\in X, N\in \Diamond( x)\}$ and define the
DMTS $\hd( \mcalN)=( X, X^0, \omay, \omust)$.  Given that these
translations are entirely syntactic, the following theorem is not a
surprise:

\begin{theorem}
  \label{th:dmtsvsnu-bool}
  For DMTS $\mcalD_1$, $\mcalD_2$ and $\nu$-calculus expressions $\mcalN_1$,
  $\mcalN_2$, $\mcalD_1\mr \mcalD_2$ iff $\ddh( \mcalD_1)\mr \ddh( \mcalD_2)$ and
  $\mcalN_1\mr \mcalN_2$ iff $\hd( \mcalN_1)\mr \hd( \mcalN_2)$.  \noproof
\end{theorem}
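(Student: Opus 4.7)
The plan is to observe that the two translations $\ddh$ and $\hd$ are purely syntactic renamings: they merely relabel a DMTS's structural ingredients (may-transitions and disjunctive must-transitions) as the ingredients of a normal-form $\nu$-calculus expression ($\Box^a$-sets and $\Diamond$-sets), and vice versa. The definitions of modal refinement on the two sides were set up precisely so that they match under this renaming.

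First I would establish that $\ddh$ and $\hd$ are mutually inverse on their respective domains, \viz~$\hd(\ddh(\mcalD)) = \mcalD$ for every DMTS $\mcalD$, and $\ddh(\hd(\mcalN)) = \mcalN$ for every $\nu$-calculus expression $\mcalN$ in normal form. Both equalities are immediate from unfolding the definitions: $\ddh$ packages $\omay$ into $\Box$-sets and $\omust$ into $\Diamond$-sets via~\eqref{eq:boxdiatodelta}, while $\hd$ unpackages them back, and in both cases the state set and set of initial states are preserved on the nose. Consequently, it suffices to prove just one of the two equivalences, say $\mcalD_1 \mr \mcalD_2 \iff \ddh(\mcalD_1) \mr \ddh(\mcalD_2)$; the other follows by applying this to $\hd(\mcalN_1)$ and $\hd(\mcalN_2)$ and using $\ddh \circ \hd = \mathrm{id}$.

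Next I would show the key lemma: a relation $R \subseteq S_1 \times S_2$ is a DMTS modal refinement from $\mcalD_1$ to $\mcalD_2$ if and only if the \emph{same} relation $R$ (viewed as a relation on the variable sets of $\ddh(\mcalD_1)$ and $\ddh(\mcalD_2)$) is a $\nu$-calculus modal refinement. This is a straightforward line-by-line comparison of the two definitions under the dictionary $s \DMTSmay{a} t \leftrightarrow t \in \Box^a(s)$ and $s \DMTSmust{} N \leftrightarrow N \in \Diamond(s)$: the ``may/Box'' clause in DMTS refinement becomes the $\Box$-clause in $\nu$-calculus refinement, and the ``must/Diamond'' clause becomes the $\Diamond$-clause, with the label-refinement condition $a_1 \labpre a_2$ preserved verbatim in both directions. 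Since the translation does not alter $S^0$, the initialization condition (existence of a matching initial state) transfers trivially.

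Combining these two observations yields the theorem: the set of (initialized) modal refinements $R$ witnessing $\mcalD_1 \mr \mcalD_2$ coincides literally with the set witnessing $\ddh(\mcalD_1) \mr \ddh(\mcalD_2)$, so one exists iff the other does; and the $\nu$-calculus half follows by the inverse argument. There is no real obstacle here—the only thing to be careful about is verifying that nothing in the definitions is lost under the translation (in particular, that the label-refinement condition $\labpre$ and the universal/existential quantifier structure match up exactly), but the normal form of Lemma~\ref{soco.le:hmlnormal} was designed so that this matching is immediate.
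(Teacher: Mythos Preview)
Your proposal is correct and matches the paper's approach exactly: the paper gives no detailed proof here (the theorem is marked \noproof), remarking only that ``these translations are entirely syntactic,'' which is precisely your observation that $\ddh$ and $\hd$ are mutual inverses and that the same relation $R$ serves as a modal refinement on both sides under the dictionary $s\DMTSmay a t \leftrightarrow t\in\Box^a(s)$, $s\DMTSmust{} N \leftrightarrow N\in\Diamond(s)$. The detailed clause-by-clause argument you sketch is spelled out later in the paper for the quantitative version (Theorem~\ref{th:trans-moddist}), confirming that your line of reasoning is the intended one.
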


\section{Specification theory}
\label{se:specth}

Structural specifications typically come equipped with operations
which permit \emph{compositional reasoning}, \viz conjunction,
structural composition, and quotient,
\cf~\cite{DBLP:conf/fase/BauerDHLLNW12}.  On \emph{deterministic} MTS,
these operations can be given easily using simple structural
operational rules (for such semantics of weighted systems, see for
instance~\cite{DBLP:journals/iandc/KlinS13}).  For non-deterministic
specifications this is significantly harder;
in~\cite{DBLP:conf/concur/BenesDFKL13} it is shown that DMTS and \NAA
permit these operations and, additionally but trivially, disjunction.
Here we show how to extend these operations on non-deterministic
systems to our setting with structured labels.

We remark that structural composition and quotient operators are
well-known from some logics, such as,
\eg~linear~\cite{DBLP:journals/tcs/Girard87} or spatial
logic~\cite{DBLP:journals/iandc/CairesC03}, see
also~\cite{DBLP:conf/icalp/CardelliLM11} for a stochastic extension.
However, whereas these operators are part of the formal syntax in
those logics, for us they are simply operations on logical expressions
(or DMTS, or \NAA).

Given the equivalence of DMTS, \NAA and the modal $\nu$-calculus
exposed in the previous section, we will often state properties for
all three types of specifications at the same time, letting $\mcalS$
stand for any of the three types.  For definitions and proofs, we are
free to use the type of specification which is most well suited for
the context; we will use DMTS for the logical operations
(Section~\ref{se:disjconj}) and \NAA for the structural operations
(Sections~\ref{se:comp} and~\ref{se:quot}).

\subsection{Disjunction and conjunction}
\label{se:disjconj}

Disjunction of specifications is easily defined, as we allow for
multiple initial states.  For two DMTS $\mcalD_1=( S_1, S_1^0, \omay_1,
\omust_1)$ and $\mcalD_2=( S_2, S_2^0, \omay_2, \omust_2)$, we can hence
define $\mcalD_1\lor \mcalD_2= (S_1 \cup S_2, S^0_1 \cup S^0_2, \omay_1 \cup
\omay_2, \omust_1\cup\omust_2)$ (with all unions disjoint).

For conjunction, we let $\mcalD_1\land \mcalD_2=( S_1\times S_2, S_1^0\times
S_2^0, \omay, \omust)$, with
\begin{itemize}
\item $( s_1, s_2)\DMTSmay{ a_1\oland a_2}( t_1, t_2)$ whenever $s_1\DMTSmay{
    a_1}_1 t_1$, $s_2\DMTSmay{ a_2}_2 t_2$ and $a_1\oland a_2$ is defined,
\item for all $s_1\DMTSmust{} N_1$, $( s_1, s_2)\DMTSmust{} \{( a_1\oland a_2,(
  t_1, t_2))\mid( a_1, t_1)\in N_1, \smash{s_2\DMTSmay{ a_2}_2 t_2},$
  $a_1\oland a_2\text{ defined}\}$,
\item for all $s_2\DMTSmust{} N_2$, $( s_1, s_2)\DMTSmust{} \{( a_1\oland a_2,(
  t_1, t_2))\mid( a_2, t_2)\in N_2, \smash{s_1\DMTSmay{ a_1}_1 t_1},$
  $a_1\oland a_2\text{ defined}\}$.
\end{itemize}

The following theorem generalizes Theorem~\ref{th:condis} to
structured labels.  Also its proof is a generalization, but our
structured labels do introduce some extra difficulties.

\begin{theorem}
  \label{soco.th:condis}
  For all specifications $\mcalS_1$, $\mcalS_2$, $\mcalS_3$,
  \begin{itemize}
  \item $\mcalS_1\lor \mcalS_2\mr \mcalS_3$ iff $\mcalS_1\mr \mcalS_3$ and $\mcalS_2\mr
    \mcalS_3$,
  \item $\mcalS_1\mr \mcalS_2\land \mcalS_3$ iff $\mcalS_1\mr \mcalS_2$ and $\mcalS_1\mr
    \mcalS_3$,
  \item $\sem{ \mcalS_1\lor \mcalS_2}= \sem{ \mcalS_1}\cup \sem{ \mcalS_2}$, and
    $\sem{ \mcalS_1\land \mcalS_2}= \sem{ \mcalS_1}\cap \sem{ \mcalS_2}$.
  \end{itemize}
\end{theorem}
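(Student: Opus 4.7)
The plan is to follow the strategy of Theorem~\ref{th:condis} from Chapter~\ref{ch:dmts}, working throughout with the DMTS formulation (since conjunction was defined directly on DMTS in this chapter), and to identify precisely where the label order $\labpre$ and the two axioms of $\oland$ come into play.

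For the first bullet, the DMTS $\mcalS_1\lor \mcalS_2$ is a disjoint union with $S^0=S_1^0\cup S_2^0$ and transitions inherited unchanged. So any initialised modal refinement $R\subseteq (S_1\cup S_2)\times S_3$ splits into $R_i=R\cap(S_i\times S_3)$ for $i=1,2$, each of which is an initialised modal refinement (initialisation follows from $S^0_i\subseteq S^0_1\cup S^0_2$); conversely, the union of two such refinements is an initialised modal refinement of $\mcalS_1\lor \mcalS_2$ into $\mcalS_3$. No use of $\oland$ is needed here.

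For the second bullet, I would prove the non-trivial direction ($\mcalS_1\mr \mcalS_2$ and $\mcalS_1\mr \mcalS_3$ imply $\mcalS_1\mr \mcalS_2\land \mcalS_3$) by forming $R=\{(s_1,(s_2,s_3))\mid (s_1,s_2)\in R_2,\,(s_1,s_3)\in R_3\}$ from the given witnesses. For a may-step $s_1\DMTSmay{a_1}_1 t_1$ we obtain $s_2\DMTSmay{a_2}_2 t_2$ and $s_3\DMTSmay{a_3}_3 t_3$ with $a_1\labpre a_2$, $a_1\labpre a_3$; now axiom~(\ref{en:oland.glb}) of $\oland$ forces $a_2\oland a_3$ to be defined with $a_1\labpre a_2\oland a_3$, so the product transition $(s_2,s_3)\DMTSmay{a_2\oland a_3}(t_2,t_3)$ matches. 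For a must-step $(s_2,s_3)\DMTSmust{} N$, which by construction is built from some $s_i\DMTSmust{}_i N_i$, one uses $\mcalS_1\mr \mcalS_i$ to obtain a matching $s_1\DMTSmust{}_1 N_1$ and then, for each $(a_1,t_1)\in N_1$, couples the match in $N_i$ with a may-transition on the other side supplied by the other refinement; axiom~(\ref{en:oland.glb}) again yields the needed $(a_2\oland a_3,(t_2,t_3))\in N$ with $a_1\labpre a_2\oland a_3$ and $(t_1,(t_2,t_3))\in R$. The converse direction sets $R_2=\{(s_1,s_2)\mid \exists s_3:(s_1,(s_2,s_3))\in R\}$ and symmetrically for $R_3$; the crucial step is that, when chasing a requirement back through $R$, one lands on a label $a_2\oland a_3$, and axiom~(\ref{en:oland.lb}) ($a_2\oland a_3\labpre a_2$) together with transitivity of $\labpre$ gives the needed $a_1\labpre a_2$ (respectively $a_1\labpre a_3$).

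The third bullet is then essentially a corollary. For conjunction, $\sem{\mcalS_1\land \mcalS_2}=\sem{\mcalS_1}\cap\sem{\mcalS_2}$ follows by applying the second bullet with $\mcalS_1$ replaced by an arbitrary implementation $\mcalI$: $\mcalI\mr \mcalS_1\land \mcalS_2$ iff $\mcalI\mr \mcalS_1$ and $\mcalI\mr \mcalS_2$. For disjunction, the inclusion $\sem{\mcalS_1}\cup\sem{\mcalS_2}\subseteq \sem{\mcalS_1\lor \mcalS_2}$ is immediate from the construction (any refinement witness for $\mcalI\mr \mcalS_i$ is one for $\mcalI\mr \mcalS_1\lor \mcalS_2$). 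For the opposite inclusion, an implementation has a single initial state $i^0$, and any initialised refinement $R$ must pair $i^0$ with some initial state of $\mcalS_1\lor \mcalS_2$, which by disjointness lies in either $S_1^0$ or $S_2^0$; by disjointness of the state spaces and transition relations in $\mcalS_1\lor \mcalS_2$, the reachable restriction of $R$ lies entirely within one side, giving a witness for $\mcalI\mr \mcalS_i$.

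The main obstacle is the conjunction direction and, concretely, the careful bookkeeping of label refinements across the product: every time one crosses from one side of the conjunction to the other one must invoke either the greatest-lower-bound axiom~(\ref{en:oland.glb}) (to combine two label matches into a single refined label) or the lower-bound axiom~(\ref{en:oland.lb}) together with transitivity of $\labpre$ (to extract a single-sided match from a combined label). Everything else is a routine transcription of the discrete-label proof of Theorem~\ref{th:condis}, and no use of determinism or any further structure on $\Sigma$ is required.
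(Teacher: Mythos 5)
Your proposal is correct and follows essentially the same route as the paper's proof: splitting/merging refinement relations for disjunction, the product relation $R=\{(s_1,(s_2,s_3))\mid (s_1,s_2)\in R_2,\,(s_1,s_3)\in R_3\}$ (and its projections) for conjunction, and the single-initial-state/disjointness argument for $\sem{\mcalS_1\lor\mcalS_2}\subseteq\sem{\mcalS_1}\cup\sem{\mcalS_2}$. You have also correctly located the only places where the structured labels matter, namely axiom~(\ref{en:oland.glb}) to combine the two label matches and axiom~(\ref{en:oland.lb}) with transitivity of $\labpre$ to project them back out.
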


\begin{proof}
  The proof that $\mcalS_1\lor \mcalS_2\mr \mcalS_3$ iff $\mcalS_1\mr \mcalS_3$ and
  $\mcalS_2\mr \mcalS_3$ is trivial: any modal refinement $R\subseteq( S_1\cup
  S_2)\times S_3$ splits into two refinements $R_1\subseteq S_1\times
  S_3$, $R_2\subseteq S_2\times S_3$ and vice versa.

  For the proof of the second claim, which we show for DMTS, we prove
  the back direction first.  Let $R_2\subseteq S_1\times S_2$,
  $R_3\subseteq S_1\times S_3$ be initialized (DMTS) modal refinements
  which witness $\mcalS_1\mr \mcalS_2$ and $\mcalS_1\mr \mcalS_3$, respectively.
  Define $R=\{( s_1,( s_2, s_3))\mid( s_1, s_2)\in R_2,( s_1, s_3)\in
  R_3\}\subseteq S_1\times( S_2\times S_3)$, then $R$ is initialized.

  Now let $( s_1,( s_2, s_3))\in R$, then $( s_1, s_2)\in R_2$ and $(
  s_1, s_3)\in R_3$.  Assume that $s_1\DMTSmay{ a_1}_1 t_1$, then by
  $\mcalS_1\mr \mcalS_2$, we have $s_2\DMTSmay{ a_2}_2 t_2$ with $a_1\labpre a_2$
  and $( t_1, t_2)\in R_2$.  Similarly, by $\mcalS_1\mr \mcalS_3$, we have
  $s_3\DMTSmay{ a_3}_3 t_3$ with $a_1\labpre a_3$ and $( t_1, t_3)\in R_3$.
  But then also $a_1\labpre a_2\oland a_3$ and $( t_1,( t_2, t_3))\in
  R$, and $( s_2, s_3)\DMTSmay{ a_2\oland a_3}( t_2, t_3)$ by definition.

  Assume that $( s_2, s_3)\DMTSmust{} N$.  Without loss of generality we can
  assume that there is $s_2\DMTSmust{}_2 N_2$ such that $N=\{( a_2\oland
  a_3,( t_2, t_3))\mid( a_2, t_2)\in N_2, s_3\DMTSmay{ a_3}_3 t_3\}$.  By
  $S_1\mr S_2$, we have $s_1\DMTSmust{}_1 N_1$ such that $\forall( a_1,
  t_1)\in N_1: \exists( a_2, t_2)\in N_2: a_1\labpre a_2,( t_1, t_2)\in
  R_2$.

  Let $( a_1, t_1)\in N_1$, then also $s_1\DMTSmay{ a_1}_1 t_1$, so by
  $S_1\mr S_3$, there is $s_3\DMTSmay{ a_3}_3 t_3$ with $a_1\labpre a_3$ and
  $( t_1, t_3)\in R_3$.  By the above, we also have $( a_2, t_2)\in N_2$
  such that $a_1\labpre a_2$ and $( t_1, t_2)\in R_2$, but then $(
  a_2\oland a_3,( t_2, t_3))\in N$, $a_1\labpre a_2\land a_3$, and $(
  t_1,( t_2, t_3))\in R$.

  For the other direction of the second claim, let $R\subseteq
  S_1\times( S_2\times S_3)$ be an initialized (DMTS) modal refinement
  which witnesses $\mcalS_1\mr \mcalS_2\land \mcalS_3$.  We show that $S_1\mr
  S_2$, the proof of $S_1\mr S_3$ being entirely analogous.  Define
  $R_2=\{( s_1, s_2)\mid \exists s_3\in S_3:( s_1,( s_2, s_3))\in
  R\}\subseteq S_1\times S_2$, then $R_2$ is initialized.

  Let $( s_1, s_2)\in R_2$, then we must have $s_3\in S_3$ such that $(
  s_1,( s_2, s_3))\in R$.  Assume that $s_1\DMTSmay{ a_1}_1 t_1$, then also
  $( s_2, s_3)\DMTSmay{ a}( t_2, t_3)$ for some $a$ with $a_1\labpre a$ and
  $( t_1,( t_2, t_3))\in R$.  By construction we have $s_2\DMTSmay{ a_2}_2
  t_2$ and $s_3\DMTSmay{ a_3}_3 t_3$ such that $a= a_2\oland a_3$, but then
  $a_1\labpre a_2\oland a_3\labpre a_2$ and $( t_1, t_2)\in R_2$.

  Assume that $s_2\DMTSmust{}_2 N_2$, then by construction we have $( s_2,
  s_3)\DMTSmust{} N=\{( a_2\oland a_3,( t_2, t_3))\mid( a_2, t_2)\in N_2,
  s_3\DMTSmay{ a_3}_3 t_3\}$.  By $\mcalS_1\mr \mcalS_2\land \mcalS_3$, there is
  $s_1\DMTSmust{}_1 N_1$ such that $\forall( a_1, t_1)\in N_1: \exists( a,(
  t_2, t_3))\in N: a_1\labpre a,( t_1,( t_2, t_3))\in R$.

  Let $( a_1, t_1)\in N_1$, then we have $( a,( t_2, t_3))\in N$ for
  which $a_1\labpre a$ and $( t_1,( t_2, t_3))\in R$.  By construction
  of $N$, this implies that there are $( a_2, t_2)\in N_2$ and $s_3\DMTSmay{
    a_3}_3 t_3$ such that $a= a_2\oland a_3$, but then $a_1\labpre
  a_2\oland a_3\labpre a_2$ and $( t_1, t_2)\in R$.

  As to the last claims of the theorem, $\sem{ \mcalS_1\land \mcalS_2}= \sem{
    \mcalS_1}\cap \sem{ \mcalS_2}$ is clear from what we just proved: for all
  implementations $\mcalI$, $\mcalI\mr \mcalS_1\land \mcalS_2$ iff $\mcalI\mr \mcalS_1$
  and $\mcalI\mr \mcalS_2$.  For the other part, it is clear by construction
  that for any implementation $\mcalI$, any witness $R$ for $\mcalI\mr \mcalS_1$
  is also a witness for $\mcalI\mr \mcalS_1\lor \mcalS_2$, and similarly for
  $\mcalS_2$, hence $\sem{ \mcalS_1}\cup \sem{ \mcalS_2}\subseteq \sem{ \mcalS_1\lor
    \mcalS_2}$.

  To show that also $\sem{ \mcalS_1}\cup \sem{ \mcalS_2}\supseteq \sem{
    \mcalS_1\lor \mcalS_2}$, we note that an initialized refinement $R$
  witnessing $\mcalI\mr \mcalS_1\lor \mcalS_2$ must relate the initial state of
  $\mcalI$ either to an initial state of $\mcalS_1$ or to an initial state of
  $\mcalS_2$.  In the first case, and by disjointness, $R$ witnesses
  $\mcalI\mr \mcalS_1$, in the second, $\mcalI\mr \mcalS_2$.  \qed
\end{proof}

With bottom and top elements given by $\bot=( \emptyset, \emptyset,
\emptyset)$ and $\top=(\{ s\},\{ s\},$ $\Tran_\top)$ with $\Tran_\top( s)=
2^{ 2^{ \Sigma\times\{ s\}}}$, our classes of specifications form
\emph{bounded distributive lattices} up to $\mreq$.

\subsection{Structural composition}
\label{se:comp}

For \NAA $\mcalA_1=( S_1, S_1^0, \Tran_1)$, $\mcalA_2=( S_2, S_2^0, \Tran_2)$,
their \emph{structural composition} is $\mcalA_1\| \mcalA_2=( S_1\times S_2,
S_1^0\times S_2^0, \Tran)$, with $\Tran(( s_1, s_2))=\{ M_1\obar M_2\mid
M_1\in \Tran_1( s_1), M_2\in \Tran_2( s_2)\}$ for all $s_1\in S_1$,
$s_2\in S_2$, where $M_1\obar M_2=\{( a_1\obar a_2,( t_1, t_2))\mid(
a_1, t_1)\in M_1,$ $( a_2, t_2)\in M_2, a_1\obar a_2\text{ defined}\}$.

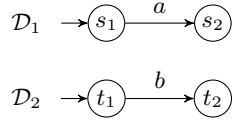
\begin{figure}
  \centering
  \begin{tikzpicture}[->, >=stealth', font=\footnotesize,
    state/.style={shape=circle, draw, initial text=,inner
      sep=.5mm,minimum size=2mm}, yscale=.5, xscale=.7]
    \node at (-1.5,0) {$\mcalD_1$};
    \node at (-1.5,-2) {$\mcalD_2$};
    \node[state, initial] (1) at (0,0) {$s_1$};
    \node[state] (2) at (2,0) {$s_2$};
    \node[state, initial] (3) at (0,-2) {$t_1$};
    \node[state] (4) at (2,-2) {$t_2$};
    \path (1) edge node[above]{$a$} (2);
    \path (3) edge node[above]{$b$} (4);
  \end{tikzpicture}
  \caption{%
    \label{fi:conjvscomp}
    Two simple DMTS}
\end{figure}

Remark a subtle difference between conjunction and structural
composition, which we expose for discrete labels and CSP-style
composition: for the DMTS $\mcalD_1$, $\mcalD_2$ shown in
Figure~\ref{fi:conjvscomp}, both $\mcalD_1\land \mcalD_2$ and $\mcalD_1\| \mcalD_2$
have only one state, but $\Tran( s_1\land t_1)= \emptyset$ and $\Tran(
s_1\| t_1)=\{ \emptyset\}$, so that $\mcalD_1\land \mcalD_2$ is inconsistent,
whereas $\mcalD_1\| \mcalD_2$ is not.

This definition extends the structural composition defined for modal
transition systems, with structured labels,
in~\cite{DBLP:journals/acta/FahrenbergL14}.  For DMTS specifications
(and hence also for $\nu$-calculus expressions), the back translation
from \NAA to DMTS entails an exponential explosion.

\begin{theorem}
  \label{th:comp}
  Up to $\mreq$, the operator $\|$ is associative, commutative and
  monotone.
\end{theorem}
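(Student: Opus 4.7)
The plan is to handle the three properties separately, with monotonicity being the substantive part and commutativity/associativity following almost mechanically from the corresponding properties of $\obar$.

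For commutativity, I would define the relation $R = \{((s_1, s_2), (s_2, s_1)) \mid s_1 \in S_1, s_2 \in S_2\}$ and show it is a modal refinement in both directions between $\mcalA_1 \| \mcalA_2$ and $\mcalA_2 \| \mcalA_1$. The key observation is that for any $M_1 \in \Tran_1(s_1)$, $M_2 \in \Tran_2(s_2)$, the set $M_1 \obar M_2$ is in bijection with $M_2 \obar M_1$ via $(a_1 \obar a_2, (t_1, t_2)) \mapsto (a_2 \obar a_1, (t_2, t_1))$, and commutativity of $\obar$ gives $a_1 \obar a_2 = a_2 \obar a_1$. Verifying the two conditions of~\eqref{soco.eq:aaref} then reduces to checking that each element of $M_1 \obar M_2$ maps to the matching element of $M_2 \obar M_1$ (and vice versa). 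Associativity is handled analogously via the relation $R = \{(((s_1, s_2), s_3), (s_1, (s_2, s_3)))\}$, invoking associativity of $\obar$ to identify $(a_1 \obar a_2) \obar a_3$ with $a_1 \obar (a_2 \obar a_3)$.

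The monotonicity claim to prove is: if $\mcalA_1 \mr \mcalA_3$ and $\mcalA_2 \mr \mcalA_4$, then $\mcalA_1 \| \mcalA_2 \mr \mcalA_3 \| \mcalA_4$. Let $R_{13}$ and $R_{24}$ witness the two given refinements. I would define
\begin{equation*}
  R = \{((s_1, s_2), (s_3, s_4)) \mid (s_1, s_3) \in R_{13},\ (s_2, s_4) \in R_{24}\}
\end{equation*}
and show $R$ is an initialized \NAA modal refinement. Initialization is immediate from initialization of $R_{13}$ and $R_{24}$. For the local condition, fix $((s_1, s_2), (s_3, s_4)) \in R$ and $M \in \Tran((s_1, s_2))$; by definition of structural composition, $M = M_1 \obar M_2$ for some $M_1 \in \Tran_1(s_1)$, $M_2 \in \Tran_2(s_2)$. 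The refinements yield $M_3 \in \Tran_3(s_3)$ with $M_1 \labpre_{R_{13}} M_3$ and $M_4 \in \Tran_4(s_4)$ with $M_2 \labpre_{R_{24}} M_4$; the candidate match for $M$ is $M_3 \obar M_4 \in \Tran((s_3, s_4))$.

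Verifying $M_1 \obar M_2 \labpre_R M_3 \obar M_4$ is the crux, and this is where the technical assumption on $\obar$ from Section~\ref{se:structlabels} is essential: for $a_1 \labpre a_3$ and $a_2 \labpre a_4$, definedness of $a_1 \obar a_2$ is equivalent to definedness of $a_3 \obar a_4$, and when defined, $a_1 \obar a_2 \labpre a_3 \obar a_4$. For the forward half of~\eqref{soco.eq:aaref}, take $(a_1 \obar a_2, (t_1, t_2)) \in M_1 \obar M_2$ arising from $(a_1, t_1) \in M_1$, $(a_2, t_2) \in M_2$; the refinements provide $(a_3, t_3) \in M_3$ and $(a_4, t_4) \in M_4$ with $a_i \labpre a_{i+2}$ and $(t_i, t_{i+2})$ in the respective $R_{i,i+2}$, so the $\obar$-property gives $(a_3 \obar a_4, (t_3, t_4)) \in M_3 \obar M_4$ with $a_1 \obar a_2 \labpre a_3 \obar a_4$ and $((t_1, t_2), (t_3, t_4)) \in R$. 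The backward half is entirely symmetric, starting from an element of $M_3 \obar M_4$ and extracting preimages in $M_1$ and $M_2$ via the backward halves of $M_1 \labpre_{R_{13}} M_3$ and $M_2 \labpre_{R_{24}} M_4$.

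The main obstacle I anticipate is getting the backward half right: from a defined $a_3 \obar a_4$ one extracts $a_1 \labpre a_3$ and $a_2 \labpre a_4$ from the two refinements, and one must then invoke the ``iff'' part of the $\obar$-assumption to conclude that $a_1 \obar a_2$ is defined in the first place, so that $(a_1 \obar a_2, (t_1, t_2))$ genuinely lies in $M_1 \obar M_2$. Without that biconditional, the match produced from the refinements might not even belong to $M_1 \obar M_2$. Once this is in place, monotonicity, combined with commutativity, implies the usual two-sided congruence $\mcalA_1 \mreq \mcalA_3$ and $\mcalA_2 \mreq \mcalA_4$ imply $\mcalA_1 \| \mcalA_2 \mreq \mcalA_3 \| \mcalA_4$, which is exactly what ``up to $\mreq$'' requires for the associativity and commutativity claims as well.
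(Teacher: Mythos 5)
Your proof is correct, but for monotonicity it takes a genuinely different route from the paper. You give the direct relational argument: build the product relation $R$ from the two witnessing refinements $R_{13}$, $R_{24}$ and verify the two halves of~\eqref{soco.eq:aaref} for $M_1\obar M_2$ against $M_3\obar M_4$, using the standing assumption on $\obar$ (that for $a_1\labpre a_3$ and $a_2\labpre a_4$, definedness of $a_1\obar a_2$ and of $a_3\obar a_4$ are equivalent, and then $a_1\obar a_2\labpre a_3\obar a_4$) to make both directions go through. The paper instead dispatches monotonicity with a lattice-theoretic shortcut: it observes that monotonicity of $\|$ is equivalent to $\|$ distributing over the join $\lor$ (given Theorem~\ref{soco.th:condis}), and that distributivity is witnessed by the identity relation between $\mcalS_1\|(\mcalS_2\lor\mcalS_3)$ and $\mcalS_1\|\mcalS_2\lor\mcalS_1\|\mcalS_3$; the passage from distributivity to independent implementability is the absorption argument of Theorem~\ref{th:indimp}. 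The paper's route is shorter and reuses already-established lattice structure, but it hides where the technical hypothesis on $\obar$ enters; your proof is self-contained, makes that dependence explicit (in particular you correctly identify that the \emph{biconditional} on definedness is indispensable for the backward half, since otherwise the candidate preimage $(a_1\obar a_2,(t_1,t_2))$ need not lie in $M_1\obar M_2$ at all), and is essentially the proof the paper itself writes out in full for the quantitative version, Theorem~\ref{th:indimp-q}. Your closing remark that monotonicity plus the one-sided refinements yields the two-sided congruence needed to make sense of ``up to $\mreq$'' is also right.
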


\begin{proof}
  Associativity and commutativity are clear by associativity and
  commutativity of $\obar$.  Monotonicity is equivalent to the assertion
  that (up to $\mreq$) $\|$ distributes over the least upper bound
  $\lor$; one easily sees that for all specifications $\mcalS_1$, $\mcalS_2$,
  $\mcalS_3$, the identity is a two-sided modal refinement $\mcalS_1\|(
  \mcalS_2\lor \mcalS_3)\mreq \mcalS_1\| \mcalS_2\lor \mcalS_1\| \mcalS_3$.  \qed
\end{proof}

\begin{corollary}[Independent implementability]
  \label{co:indimp}
  For all specifications $\mcalS_1$, $\mcalS_2$, $\mcalS_3$, $\mcalS_4$, $\mcalS_1\mr
  \mcalS_3$ and $\mcalS_2\mr \mcalS_4$ imply $\mcalS_1\| \mcalS_2\mr \mcalS_3\| \mcalS_4$.
  \noproof
\end{corollary}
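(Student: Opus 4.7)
The plan is to reduce the statement to the algebraic machinery already established in Theorem~\ref{th:comp} (monotonicity and distributivity over $\lor$) together with the lattice characterization of $\lor$ from Theorem~\ref{soco.th:condis}. Since $\|$ has been shown to be monotone up to $\mreq$, the corollary is essentially a two-step chain of monotonicity in each argument, glued by transitivity of $\mr$. This is precisely the same chain of reasoning used to derive Theorem~\ref{th:indimp} in the previous chapter, and I would replay that argument in the present structured-label setting.

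In detail, I would first translate the hypotheses into $\lor$-equivalences: from $\mcalS_1 \mr \mcalS_3$, Theorem~\ref{soco.th:condis} yields $\mcalS_1 \lor \mcalS_3 \mr \mcalS_3$ (using $\mcalS_1 \mr \mcalS_3$ and $\mcalS_3 \mr \mcalS_3$), while $\mcalS_3 \mr \mcalS_1 \lor \mcalS_3$ is automatic, so $\mcalS_1 \lor \mcalS_3 \mreq \mcalS_3$; analogously $\mcalS_2 \lor \mcalS_4 \mreq \mcalS_4$. Next I would invoke distributivity of $\|$ over $\lor$, which is exactly the content of the monotonicity argument in the proof of Theorem~\ref{th:comp}, to rewrite
\begin{equation*}
\mcalS_3 \| \mcalS_4 \;\mreq\; (\mcalS_1 \lor \mcalS_3) \| (\mcalS_2 \lor \mcalS_4) \;\mreq\; \mcalS_1 \| \mcalS_2 \lor \mcalS_1 \| \mcalS_4 \lor \mcalS_3 \| \mcalS_2 \lor \mcalS_3 \| \mcalS_4.
\end{equation*}
Since $\mcalS_1 \| \mcalS_2$ is one of the disjuncts on the right, the universal property of $\lor$ (Theorem~\ref{soco.th:condis}, first bullet) gives $\mcalS_1 \| \mcalS_2 \mr \mcalS_3 \| \mcalS_4$ by transitivity of $\mr$.

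I do not expect any genuine obstacle here: all the heavy lifting—especially verifying that the logical and structural operators behave well with respect to the structured labels $\Sigma$, $\oland$, and $\obar$—has been carried out in Theorems~\ref{soco.th:condis} and~\ref{th:comp}. The only mild care needed is to confirm that the distributivity step really does hold as an $\mreq$, rather than just as a one-sided refinement; this is immediate from the identity witness used in the proof of Theorem~\ref{th:comp}, which relates states of $\mcalS_1 \|(\mcalS_2 \lor \mcalS_3)$ and $\mcalS_1 \| \mcalS_2 \lor \mcalS_1 \| \mcalS_3$ in both directions.
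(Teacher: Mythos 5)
Your proof is correct and matches the paper's intent exactly: the corollary is left without proof precisely because it follows from the monotonicity of $\|$ established in Theorem~\ref{th:comp}, and the chain you spell out (rewriting the hypotheses as $\lor$-equivalences, distributing $\|$ over $\lor$, and projecting out the disjunct $\mcalS_1\|\mcalS_2$) is verbatim the argument the paper uses for the discrete-label version in Theorem~\ref{th:indimp}. Your side remark about the distributivity being a genuine $\mreq$ via the identity witness is also the right thing to check, and it is exactly what the proof of Theorem~\ref{th:comp} provides.
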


\subsection{Quotient}
\label{se:quot}

Because of non-determinism, we have to use a power set construction for
the quotient, as opposed to conjunction and structural composition where
product is sufficient. For \NAA $\mcalA_3=( S_3, S_3^0, \Tran_3)$, $\mcalA_1=(
S_1, S_1^0, \Tran_1)$, the quotient is $\mcalA_3 / \mcalA_1 = ( S,\{ s^0\},
\Tran)$, with $S= 2^{S_3 \times S_1}$ and $s^0=\{( s_3^0, s_1^0) \mid
s_3^0\in S_3^0, s_1^0\in S_1^0\}$.  States in $S$ will be written $\{
s_3^1 / s_1^1,\dots, s_3^n / s_1^n)\}$.
Intuitively, this denotes that such state when composed with $s_1^i$
conforms to $s_3^i$ for each $i$; we call this \emph{consistency} here.
 
We now define $\Tran$.  First, $\Tran( \emptyset)= 2^{ \Sigma\times\{
  \emptyset\}}$, so $\emptyset$ is universal.  For any other state $s=\{
s_3^1 / s_1^1,\dots, s_3^n / s_1^n\} \in S$, its set of
\emph{permissible labels} is defined by
\begin{multline*}
  \PermL(s) = \big\{ a_2 \in \Sigma \bigmid \forall i =1,\dotsc,n : \forall
  (a_1,t_1) \in\in \Tran_1(s_1^i) : \\
  \exists (a_3,t_3) \in\in \Tran_3(s_3^i) : a_1\obar a_2\labpre a_3
  \big\}\,,
\end{multline*}
that is, a label is permissible iff it cannot violate consistency.  Here
we use the notation $x\in \in z$ as a shortcut for $\exists y: x\in y\in
z$.

Now for each $a \in \PermL(s)$ and each $i \in \{1,\dots,n\}$, let
$\{t_1 \in S_1 \mid (a,t_1) \in\in
\Tran_1(t_1^i)\}=\{t_1^{i,1},\dots,t_1^{i,m_i}\}$ be an enumeration of
all the possible states in $S_1$ after an $a$-transition.  Then we
define the set of all sets of possible assignments of next-$a$ states
from $s_3^i$ to next-$a$ states from $s_1^i$:
\begin{multline*}
  \postra[a]{s} = \big\{ \{ (t_3^{i,j},t_1^{i,j}) \mid i=1,\dots,n,
  j=1,\dots,m_i\} \\ \bigmid \forall i: \forall j : (a,t_3^{i,j}) \in\in
  \Tran_3(s_3^i) \big\}
\end{multline*}
These are all possible next-state assignments which preserve
consistency.  Now let $\postra{s} = \bigcup_{a \in
  \PermL(s)}\postra[a]{s}$ and define
\begin{multline*}
  \Tran(s) = \big\{ M \subseteq \postra{s} \bigmid \forall
  i=1,\dots,n:
  \\
  \forall M_1 \in \Tran_1(s_1^i): \exists M_3\in \Tran_3( s_3^i): M
  \triangleright M_1 \labpre_R M_3 \big\}\,,
\end{multline*}
where $M \triangleright M_1 = \{ (a_1 \obar a, t_3^i) \mid (a,\{ t_3^1
/ t_1^1,\dots, t_3^k / t_1^k)\}) \in M, (a_1,t_1^i) \in M_1\}$, to
guarantee consistency no matter which element of $\Tran_1(s_1^i)$, $s$
is composed with.

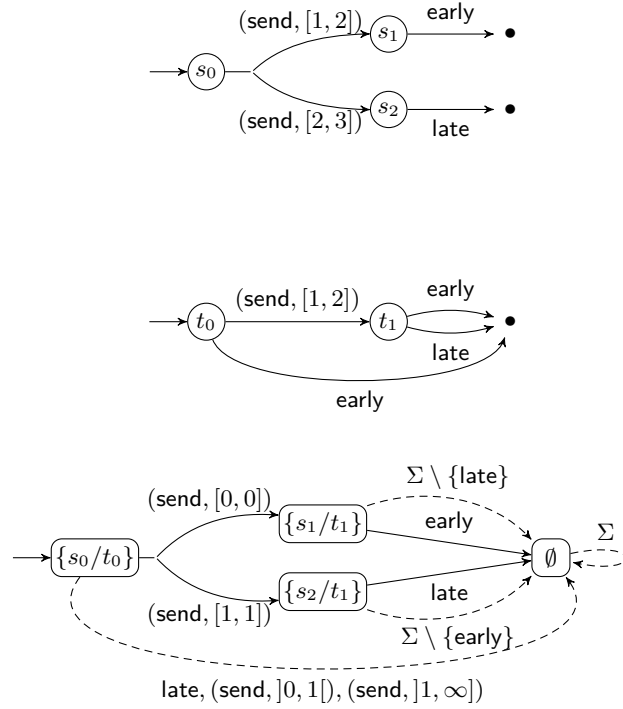
\begin{figure}
  \centering
  \begin{tikzpicture}[->, >=stealth', font=\footnotesize, xscale=2,
    yscale=1, state/.style={shape=circle, draw, initial text=,inner
      sep=.5mm,minimum size=3mm,initial distance=1.5ex}]
    \begin{scope}
      \node[state,initial] (s) at (0,0) {$s_0$};
      \node[inner sep=0,outer sep=0] (ss) at (0.3,0){};
      \node[state] (s1) at (1.2,0.5) {$s_1$};
      \path (s) edge[-] (ss);
      \path (ss)	edge [->,bend left] node[above]{$(\send,[1,2])$}
      (s1);
      \node[state] (s2) at (1.2,-0.5) {$s_2$};
      \path (ss)	edge [->,bend right]
      node[below]{$(\send,[2,3])$}	(s2);
      \node (end) at (2,0.5) {$\bullet$};
      \path (s1) edge [->] node[above]{$\early$}	(end);
      \node (end2) at (2,-0.5) {$\bullet$};
      \path (s2) edge [->] node[below]{$\late$}	(end2);
    \end{scope}
    \begin{scope}[yshift=-20ex]
      \node[state,initial] (s) at (0,0) {$t_0$};
      \node[state] (s1) at (1.2,0) {$t_1$};
      \path (s)	edge [->] node[above]{$(\send,[1,2])$}	(s1);
      \node (end) at (2,0) {$\bullet$};
      \path (s1)	edge [->,bend left] node[above]{$\early$}	(end);
      \path (s1)	edge [->, bend right] node[below]{$\late$}	(end);
      \path (s)  edge [->, bend right=80pt] node[below]{$\early$}  (end);
    \end{scope}
  \end{tikzpicture} 
  
  \bigskip

  \begin{tikzpicture}[->, >=stealth', font=\footnotesize, xscale=2,
    yscale=1.5, state/.style={shape=circle,rectangle,rounded
      corners,draw,initial text=,inner sep=.5mm,minimum size=5mm,initial
      distance=1.5ex}]
    \begin{scope}
      \node[state,initial] (s) at (0,0) {$\{s_0/t_0\}$};
      \node[inner sep=0,outer sep=0] (ss) at (0.4,0){};
      \node[state] (s1) at (1.5,0.3) {$\{s_1/t_1\}$};
      \node[state] (s11) at (1.5,-0.3) {$\{s_2/t_1\}$};
      \node[state] (s2) at (3,0) {$\emptyset$};
      \path (s2) edge[loop right, densely dashed]
      node[above,pos=.2]{$\Sigma$} (s2);
      \path (s) edge[-](ss);
      \path (ss) edge [->,bend left] node[above]{$(\send,[0,0])$}  (s1);
      \path (ss) edge [->,bend right] node[below]{$(\send,[1,1])$}  (s11);
      \path (s) edge [->,out=-90,in=-90,densely dashed,bend right=120pt]
      node[below]{$\late,(\send,\mathopen] 0,
        1\mathclose[),( \send,\mathopen] 1,\infty])$} (s2);
      \path (s1)  edge [->,densely dashed, bend left=40pt]
      node[above]{$\Sigma\setminus\{\late\}$} (s2);
      \path (s1)  edge [->] node[above]{$\early$} (s2);
      \path (s11)  edge [->,densely dashed, bend right=40pt]
      node[below]{$\Sigma\setminus\{\early\}$} (s2);
      \path (s11)  edge [->] node[below]{$\late$} (s2);
    \end{scope}
  \end{tikzpicture} 
  \caption{%
    \label{fi:quotientex}
    Two DMTS (top and center) and their quotient (bottom)}
\end{figure}

\begin{example}
  Figure~\ref{fi:quotientex} shows two simple specifications and their
  quotient under $\obarplus$, \ie~using addition of intervals for
  label synchronization (see Example~\ref{ex:composition}).  During
  the construction and the translation back to DMTS, many states were
  eliminated as they were inconsistent (their $\Tran$-set was
  empty). For instance, there is no may transition to state
  $\{s_2/t_2\}$, because when it is composed with $t_2$ there is no
  guarantee of a $\late$-transition, hence no guarantee to refine
  $s_2$.

  Note that in order to have a finite representation of the quotient,
  we have to extend the label set to allow intervals which are not
  closed; for instance, the may-transition
  $( \send,\mathopen] 1, \infty])$ from $\{ s_0/ t_0\}$ to $\emptyset$
  comprises the fact that $\postra[ a]{\{ s_0/ t_0\}}= \emptyset$ for
  all $a=( \send,[ x, \infty])$ with $x> 1$.  This can be formalized
  by introducing a (partial) \emph{label quotient} operator
  $\mathord{\oslash}: \Sigma\times \Sigma\parto \Sigma$ which is
  adjoint to label synchronization $\obar$, see
  Chapter~\ref{ch:wm2}. \qed
\end{example}

\begin{theorem}
  \label{th:quotient-bool}
  For all specifications $\mcalS_1$, $\mcalS_2$, $\mcalS_3$, $\mcalS_1\| \mcalS_2\mr
  \mcalS_3$ iff $\mcalS_2\mr \mcalS_3/ \mcalS_1$.
\end{theorem}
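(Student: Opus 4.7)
The plan is to prove the theorem for \NAA\ and transfer it to DMTS and the modal $\nu$-calculus via the structural equivalences of Theorems~\ref{th:dmtsvsaa-bool} and~\ref{th:dmtsvsnu-bool} (which reduce refinement questions on all three formalisms to refinement questions on \NAA). The argument will closely parallel Theorem~\ref{th:quotient} from Chapter~\ref{ch:dmts}, but with every equality of labels replaced by the refinement relation $\labpre$, using the properties of $\obar$ (monotonicity in both arguments) and the permissibility condition built into $\Tran$ for the quotient.

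For the direction $\mcalS_2\mr \mcalS_3/ \mcalS_1\Longrightarrow \mcalS_1\| \mcalS_2\mr \mcalS_3$, I would define the candidate relation
\begin{equation*}
  R= \big\{(s_1\| s_2,\, s_3)\bigmid s_2\mr \{s_3/ s_1\}\text{ in } \mcalA_3/ \mcalA_1\big\}\,,
\end{equation*}
and verify that it is an initialized modal refinement. For any $(s_1\| s_2, s_3)\in R$ and any $M_\|= M_1\obar M_2\in \Tran_{\|}(s_1\| s_2)$, the refinement $s_2\mr \{s_3/ s_1\}$ yields an $M_/\in \Tran_/(\{s_3/ s_1\})$ with $M_2\labpre_{R_2} M_/$, where $R_2$ witnesses $\mcalS_2\mr \mcalS_3/\mcalS_1$. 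By the defining condition of $\Tran(\{s_3/s_1\})$ (instantiated at the unique index $i=1$ with $s_1^1=s_1$, $s_3^1=s_3$, and at the $M_1$ we have in hand), there exists $M_3\in \Tran_3(s_3)$ such that $M_/\triangleright M_1\labpre_{R'} M_3$ for some refinement $R'$. I will then check that $M_\|\labpre_R M_3$: every element of $M_\|$ has the form $(a_1\obar a_2,(t_1,t_2))$ with $(a_2,\{\ldots,t_3/t_1,\ldots\})\in M_/$, which through $\triangleright$ and $R'$ gives a matching $(a_3,t_3)\in M_3$ with $a_1\obar a_2\labpre a_3$ and $(t_1\| t_2,t_3)\in R$; the converse matching is symmetric. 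Initialization follows from the definition of $s^0= \{(s_3^0,s_1^0)\mid s_3^0\in S_3^0, s_1^0\in S_1^0\}$.

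For the direction $\mcalS_1\| \mcalS_2\mr \mcalS_3\Longrightarrow \mcalS_2\mr \mcalS_3/ \mcalS_1$, I would take
\begin{equation*}
  R= \big\{\bigl(s_2,\{s_3^1/s_1^1,\dots,s_3^n/s_1^n\}\bigr)\bigmid \forall i: s_1^i\| s_2\mr s_3^i\big\}\cup\big\{(s_2,\emptyset)\mid s_2\in S_2\big\}\,,
\end{equation*}
and verify it is an initialized modal refinement. The nonempty case is the real work: given $M_2\in \Tran_2(s_2)$, for each index $i$ and each $M_1^{i,j}\in \Tran_1(s_1^i)$ the hypothesis yields a matching $M_3^{i,j}\in \Tran_3(s_3^i)$ with $M_1^{i,j}\obar M_2\labpre_{R_{\|}} M_3^{i,j}$. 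I will pick one such matching and then build $M\in \postra{s}$ by, for each label $a$ appearing in $M_2$ via $(a_2,t_2)\in M_2$ and each $(a_1,t_1^{i,j,k})\in M_1^{i,j}$, selecting the target $t_3^{i,j,k}$ guaranteed by $M_1^{i,j}\obar M_2\labpre_{R_{\|}} M_3^{i,j}$. Collecting these into an element of $\postra{s}$ and then into $M\subseteq \postra{s}$, I must check (a)~that $a_2$ is permissible from $s$, which follows because whenever $M_1^{i,j}$ enables $(a_1,t_1)$, the refinement gives $(a_3,t_3)$ with $a_1\obar a_2\labpre a_3$; (b)~that $M\triangleright M_1^{i,j}\labpre_{R'} M_3^{i,j}$ for the induced relation $R'$; and (c)~that each element of $M$ maps under $\ell(\cdot)$ (or its analogue) to a state reachable by $R$. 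Initialization again reduces to the definition of $s^0$ and the assumed initialization of $R_\|$.

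The main obstacle will be the bookkeeping in the $(\Rightarrow)$ direction, specifically constructing a single $M\in \Tran_/(s)$ that simultaneously satisfies the quantifier alternation $\forall i\forall M_1^{i,j}\exists M_3^{i,j}$ appearing in the definition of $\Tran(s)$, while also being a genuine witness for $(s_2,s)\in R$. Handling structured labels forces us to replace each equality $a_1= a_3$ from the classical proof by $a_1\obar a_2\labpre a_3$ and then track $\labpre_R$ through $\triangleright$, which is where the monotonicity assumption on $\obar$ becomes essential. The permissibility clause must be discharged explicitly at each use. Once the combinatorial construction is in place, the verification of the two matching conditions in~\eqref{soco.eq:aaref} is routine, and transfer to DMTS and the modal $\nu$-calculus is immediate from the translation theorems.
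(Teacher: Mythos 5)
Your overall route is the paper's: prove the statement for \NAA\ and transfer it to DMTS and the modal $\nu$-calculus through the translations; use $R=\{(s_1\|s_2,s_3)\mid s_2\mr\{s_3/s_1\}\}$ for one direction and $R=\{(s_2,\{s_3^1/s_1^1,\dotsc,s_3^n/s_1^n\})\mid\forall i\colon s_1^i\|s_2\mr s_3^i\}$ for the other; and in the hard direction assemble a single $M\in\Tran_/(s)$ from the matchings $M_3^{i,j}$ obtained for each $M_1^{i,j}\|M_2$. The easy direction as you describe it is fine, and in fact slightly more careful than the paper's, which takes $M_3=M_/\triangleright M_1$ as the matching set directly rather than passing through the $M_3\in\Tran_3(s_3)$ guaranteed by the definition of $\Tran_/$.

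The gap is precisely in the obstacle you flag but do not resolve. Your construction of $M$ selects, for each $(a_2,t_2)\in M_2$ and each next-state $t_1$ of $s_1^i$, a single target $t_3$ ``guaranteed by $M_1^{i,j}\obar M_2\labpre M_3^{i,j}$'' --- but this presupposes that the pair $(a_1,t_1)$ determines the index $j$. If $(a_1,t_1)$ lies in two distinct sets $M_1^{i,j_1},M_1^{i,j_2}\in\Tran_1(s_1^i)$, the targets demanded by $M_3^{i,j_1}$ and by $M_3^{i,j_2}$ may differ; a single assignment in $\postra[a_2]{s}$ cannot serve both (it maps each next-state of $s_1^i$ to exactly one state of $S_3$), and putting several assignments into $M$ does not help, because $M\triangleright M_1^{i,j_1}$ then also collects the targets chosen for $j_2$, which need not be matched in $M_3^{i,j_1}$. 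So your step~(b) fails as stated. The paper resolves this by first assuming --- after splitting states, which preserves modal equivalence --- that the elements of $\Tran_1(s_1)$ are pairwise disjoint for every $s_1$, so that each $(a_1,t_1)$ determines a unique index $\delta_i(a_1,t_1)$ and the clause ``$(a_3,t_3)\in M_3^{i,\delta_i(a_1,t_1)}$'' in the definition of $M$ is well-posed. You need this preprocessing (or an equivalent device) before the combinatorial construction and the verification of $M\triangleright M_1^{i,j}\labpre M_3^{i,j}$ can go through; the rest of your plan then matches the paper's proof.
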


\begin{proof}
  We show the proof for \NAA; for DMTS and $\nu$-calculus expressions it
  will follow through the translations.  Let $\mcalA_1=( S_1, S^0_1,
  \Tran_1)$, $\mcalA_2=( S_2, S^0_2, \Tran_2)$, $\mcalA_3=( S_3, S^0_3,
  \Tran_3)$; we show that $\mcalA_1\| \mcalA_2\mr \mcalA_3$ iff $\mcalA_2\mr
  \mcalA_3/ \mcalA_1$.

  We assume that the elements of $\Tran_1( s_1)$ are pairwise disjoint
  for each $s_1\in S_1$; this can be achieved by, if necessary,
  splitting states.

  First we note that by construction, $s\supseteq t$ implies $s\mr t$
  for all $s, t\in S$.

  Assume that $\mcalA_2\mr \mcalA_3/ \mcalA_1$ and let $R=\{( s_2, s_3/
  s_1)\mid s_2\mr s_3/ s_1\}$ be the witnessing refinement relation.
  Let $R'=\{( s_1\| s_2, s_3)\mid( s_2, s_3/ s_1)\in R\}$ (for
  readability, we abuse notation here and write $( s_1\| s_2, s_3)$
  instead of $( s_1, s_2, s_3)$); we show that $R'$ is a witness for
  $\mcalA_1\| \mcalA_2\mr \mcalA_3$.

  Let $( s_1\| s_2, s_3)\in R'$ and $M_\|\in \Tran_\|( s_1\| s_2)$.  Then
  $M_\|= M_1\| M_2$ with $M_1\in \Tran_1( s_1)$ and $M_2\in \Tran_2(
  s_2)$.  As $s_2\mr s_3/ s_1$, we can pair $M_2$ with a set $M_/\in
  \Tran_/( s_3/ s_1)$ such that $M_2\labpre_R M_/$.

  Let $M_3= M_/\triangleright M_1$.  We show that $M_\|\labpre_{ R'}
  M_3$:
  \begin{itemize}
  \item Let $( a, t_1\| t_2)\in M_\|$, then there are $a_1, a_2\in
    \Sigma$ with $a= a_1\obar a_2$ and $( a_1, t_1)\in M_1$, $( a_2,
    t_2)\in M_2$.  By $M_2\labpre_R M_/$, there is $( a_2', t)\in
    M_/$ such that $a_2\labpre a_2'$ and $t_2\mr t$.  Note that
    $a_3= a_1\obar a_2'$ is defined and $a\labpre a_3$.  Write $t=\{
    t_3^1/ t_1^1,\dots, t_3^n/ t_1^n\}$.  By construction, there
    is an index $i$ for which $t_1^i= t_1$, hence $( a_3, t_3^i)\in
    M_3$.  Also, $t\supseteq\{ t_3^i/ t_1^i\}$, hence $t_2\mr
    t_3^i/ t_1^i$ and consequently $( t_1\| t_2, t_3)\in R'$.
  \item Let $( a_3, t_3)\in M_3$, then there are $( a_2', t)\in M_/$
    and $( a_1, t_1)\in M_1$ such that $a_3= a_1\obar a_2'$ and
    $t_3/ t_1\in t$.  By $M_2\labpre_R M_/$, there is $( a_2,
    t_2)\in M_2$ for which $a_2\labpre a_2'$ and $t_2\mr t$.  Note
    that $a= a_1\obar a_2$ is defined and $a\labpre a_3$.  Thus $( a,
    t_1\| t_2)\in M_\|$, and by $t\supseteq\{ t_3/ t_1\}$, $t_2\mr
    t_3/ t_1$.
  \end{itemize}

  Assume, for the other direction of the proof, that $\mcalA_1\| \mcalA_2\mr
  \mcalA_3$ and let $R=\{( s_1\| s_2, s_3)\mid s_1\| s_2\mr s_2\}$ (again
  abusing notation) be the witnessing refinement relation.  Define
  $R'\subseteq S_2\times 2^{ S_3\times S_1}$ by
  \begin{equation*}
    R= \big\{( s_2,\{ s_3^1/ s_1^1,\dotsc, s_3^n/ s_1^n\})
    \bigmid \forall i= 1,\dotsc, n:( s_1^i\| s_2, s_3^i)\in R\}\,;
  \end{equation*}
  we show that $R'$ is a witness for $\mcalA_2\mr \mcalA_3/ \mcalA_1$.  Let $(
  s_2, s)\in R'$, with $s=\{ s_3^1/ s_1^1,\dotsc, s_3^n/ s_1^n\}$,
  and $M_2\in \Tran_2( s_2)$.

  For every $i= 1,\dotsc, n$, write the set $\Tran_1( s_1^i)=\{ M_1^{ i,
    1},\dotsc, M_1^{ i, m_i}\}$.  By assumption, $M_1^{ i, j_1}\cap
  M_1^{ i, j_2}= \emptyset$ for $j_1\ne j_2$, hence every $( a_1,
  t_1)\in \in \Tran_1( s_1^i)$ is contained in a unique $M_1^{ i,
    \delta_i( a_1, t_1)}\in \Tran_1( s_1^i)$.

  For every $j= 1,\dotsc, m_i$, let $M^{ i, j}= M_1^{ i, j}\| M_2\in
  \Tran_\|( s_1^i\| s_2)$.  By $s_1^i\| s_2\mr s_3^i$, we have $M_3^{
    i, j}\in \Tran_3( s_3^i)$ such that $M^{ i, j}\labpre_R M_3^{ i, j}$.

  Now define
  \begin{multline}
    M= \big\{ ( a_2, t)\bigmid \exists( a_2, t_2)\in M_2: \forall t_3/
    t_1\in t: \exists i, a_1, a_3:
    (a_1, t_1)\in \in \Tran_1( s_1^i), \\
    ( a_3, t_3)\in M_3^{ i, \delta_i( a_1, t_1)}, a_1\obar a_2\labpre
    a_3, t_1\| t_2\mr t_3\}\,.
    \label{eq:quotproof.M}
  \end{multline}
  We need to show that $M\in \Tran_/( s)$.

  Let $i\in\{ 1,\dots, n\}$ and $M_1^{ i, j}\in \Tran_1( s_1^i)$; we
  claim that $M\triangleright M_1^{ i, j}\labpre_{ R'} M_3^{ i, j}$.
  Let $( a_3, t_3)\in M\triangleright M_1^{ i, j}$, then $a_3=
  a_1\obar a_2$ for some $a_1, a_2$ such that $t_3/ t_1\in t$, $(
  a_1, t_1)\in M_1^{ i, j}$ and $( a_2, t)\in M$.  By disjointness,
  $j= \delta_i( a_1, t_1)$, hence by definition of $M$, $( a_3,
  t_3)\in M_3^{ i, j}$ as was to be shown.

  For the reverse inclusion, let $( a_3, t_3)\in M_3^{ i, j}$.  By
  $M^{ i, j}\labpre_R M_3^{ i, j}$ and definition of $M^{ i, j}$,
  there are $( a_1, t_1)\in M_1^{ i, j}$ and $( a_2, t_2)\in M_2$ for
  which $a_1\obar a_2\labpre a_3$ and $t_1\| t_2\mr t_3$.  Thus $j=
  \delta_i( a_1, t_1)$, so that there must be $( a_2, t)\in M$ for
  which $t_3/ t_1\in t$, but then also $( a_1\obar a_2, t_3)\in
  M\triangleright M_1^{ i, j}$.

  We show that $M_2\labpre_{ R'} M$.
  \begin{itemize}
  \item Let $( a_2, t_2)\in M_2$.  For every $i= 1,\dotsc, n$ and
    every $( a_1, t_1)\in \in \Tran_1( t_1^i)$, we can use $M^{ i,
      j}\labpre_R M_3^{ i, j}$ and choose an element $( \eta_i( a_1,
    t_1), \tau_i( a_1, t_1))\in M_3^{ i, \delta_i( a_1, t_1)}$ for
    which $t_1\| t_2\mr \tau_i(a_1, t_1)$ and $a_1\obar a_2\labpre
    \eta_i( a_1, t_1)$.
    Let $t=\{ \tau_i( a_1, t_1)/ t_1\mid i= 1,\dotsc, n,( a_1, t_1)\in
    \in \Tran_1( t_1^i)\}$, then $( a_2, t)\in M$ and $( t_2, t)\in R'$.
  \item Let $( a_2, t)\in M$, then we have $( a_2, t_2)\in M_2$
    satisfying the conditions in~\eqref{eq:quotproof.M}.  Hence $t_1\|
    t_2\mr t_3$ for all $t_3/ t_1\in t$, so that $( t_2, t)\in
    R'$. \qed
  \end{itemize}
\end{proof}

\section{Robust Specification Theories}
\label{se:quant}

We proceed to lift the results of the previous sections to a
\emph{quantitative} setting, where the Boolean notions of modal and
thorough refinement are replaced by refinement \emph{distances}.  We
have shown in previous chapters that a good setting for quantitative
analysis is given by the one of \emph{recursively specified trace
  distances} on an abstract complete lattice $\LL$.  In order to
extend this to specification theories, we enrich $\LL$ with an
addition like in Chapter~\ref{ch:wm2} and require it to be a
(commutative) quantale, see below.

Denote by $\Sigma^\infty= \Sigma^*\cup \Sigma^\omega$ the set of finite
and infinite traces over $\Sigma$.

\subsection{Recursively specified trace distances}

Recall that a \emph{(commutative) quantale} consists of a complete
lattice $( \LL, \sqsubseteq_\LL)$ and a commutative, associative
addition operation $\oplus_\LL$ which distributes over arbitrary
suprema; we denote by $\bot_\LL$, $\top_\LL$ the bottom and top elements
of $\LL$.  We call a function $d: X\times X\to \LL$, for a set $X$ and a
quantale $\LL$, an \emph{$\LL$-hemimetric} if it satisfies $d( x, x)=
\bot_\LL$ for all $x\in X$ and $d( x, z)\sqsubseteq_\LL d( x,
y)\oplus_\LL d( y, z)$ for all $x, y, z\in X$.

$\LL$-hemimetrics are generalizations of distances: for $\LL=
\Realnn\cup\{ \infty\}$ the extended real line, an $( \Realnn\cup\{
\infty\})$-hemimetric is simply an extended hemimetric, \ie~a function
$d: X\times X\to \Realnn\cup\{ \infty\}$ which satisfies $d( x, x)= 0$
for all $x\in X$ and the triangle inequality $d( x, z)\le d( x, y)+ d(
y, z)$ for all $x, y, z\in X$.  If $d$ also is symmetric,
\ie~satisfies $d( x, y)= d( y, x)$ for all $x, y\in X$, then $d$ is
usually called a \emph{pseudometric}.  If $d$ also satisfies the
principle of \emph{separability}, or indiscernibility of identicals,
\ie~such that $d( x, y)= 0$ implies $x= y$, it is called a
\emph{metric}.

A \emph{recursive trace distance specification} $( \LL, \eval, \tdl,
F)$ consists of a quantale $\LL$, a quantale morphism $\eval: \LL\to
\Realnn\cup\{ \infty\}$, an $\LL$-hemimetric $\tdl:
\Sigma^\infty\times \Sigma^\infty\to \LL$ (called \emph{lifted trace
  distance}), and a \emph{distance iterator} function $F: \Sigma\times
\Sigma\times \LL\to \LL$.  For our purposes, $F$ must be monotone in
the third and anti-monotone in the second coordinate and satisfy an
extended triangle inequality: for all $a, b, c\in \Sigma$ and $\alpha,
\beta\in \LL$, $F( a, b, \alpha)\oplus_\LL F( b, c,
\beta)\sqsupseteq_\LL F( a, c, \alpha\oplus_\LL \beta)$.

$F$ is to specify $\tdl$ recursively in the sense that for all $a, b\in
\Sigma$ and all $\sigma, \tau\in \Sigma^\infty$ (and with ``$.$''
denoting concatenation),
\begin{equation}
  \label{eq:rec}
  \tdl( a. \sigma, b. \tau)= F( a, b, \tdl( \sigma, \tau))\,.
\end{equation}
The \emph{trace distance} associated with such a distance specification
is $\td: \Sigma^\infty\times \Sigma^\infty\to \Realnn$ given by $\td=
\eval\circ \tdl$.

Note that $\tdl$ specializes to a distance on labels (because
$\Sigma\subseteq \Sigma^\infty$); we require that this is compatible
with label refinement in the sense that $a\labpre b$ implies $\tdl( a,
b)= \bot_\LL$.  Then \eqref{eq:rec}~implies that whenever $a\labpre b$,
then $F( a, b, \bot_\LL)= \tdl( a, b)= \bot_\LL$.  As an inverse
property, we say that $F$ is \emph{recursively separating} if $F( a, b,
\alpha)= \bot_\LL$ implies that $a\labpre b$ and $\alpha= \bot_\LL$.

\begin{example}
  \label{ex:distances}
  We have shown in previous chapters that all commonly used trace
  distances obey recursive characterizations as above.  We give a few
  examples, all of which are recursively separating:

  The \emph{point-wise} distance
  from~\cite{DBLP:journals/tcs/AlfaroFHMS05} has
  $\LL= \Realnn\cup\{ \infty\}$, $\eval= \id$ and
  \begin{equation*}
    \tdl( a. \sigma,
    b. \tau)= \max( d( a, b), \tdl( \sigma, \tau))\,,
  \end{equation*}
  where $d: \Sigma\times \Sigma\to \Realnn\cup\{ \infty\}$ is a
  hemimetric on labels.  For the label set $\Sigma= U\times\{[ l, r]\mid
  l\in \Real\cup\{ -\infty\}, r\in \Real\cup\{ \infty\}, l\le r\}$ from
  Example~\ref{ex:labelsets}, one useful example of such a hemimetric is
  $d(( u_1,[ l_1, r_1]),( u_2,[ l_2, r_2]))= \sup_{ x_1\in[ l_1, r_1]}$
  $\inf_{ x_2\in[ l_2, r_2]}| x_1- x_2|= \max( l_2- l_1, r_1- r_2, 0)$ if
  $u_1= u_2$ and $\infty$ otherwise.

  The \emph{discounting} distance, also used
  in~\cite{DBLP:journals/tcs/AlfaroFHMS05}, again uses $\LL=
  \Realnn\cup\{ \infty\}$ and $\eval= \id$, but
  \begin{equation*}
    \tdl( a. \sigma,
    b. \tau)= d( a, b)+ \lambda \tdl( \sigma, \tau)
  \end{equation*}
  for a constant $\lambda\in[ 0, 1\mathclose[$.

  For the \emph{limit-average} distance used
  in~\cite{DBLP:journals/tcs/CernyHR12} and other papers, $\LL=(
  \Realnn\cup\{ \infty\})^\Nat$, $\eval( \alpha)= \liminf_{ j\in \Nat}
  \alpha( j)$, and
  \begin{equation*}
    \tdl( a. \sigma, b. \tau)( j)= \tfrac 1{ j+ 1}d( a,
    b)+ \tfrac j{ j+ 1} \tdl( \sigma, \tau)( j- 1)\,.
  \end{equation*}
  It is clear that limit-average distance has no recursive
  specification which uses $\LL= \Realnn\cup\{ \infty\}$ as for the
  other distances above.  Intuitively, the quantale
  $( \Realnn\cup\{ \infty\})^\Nat$ has to be used to memorize how many
  symbols one has seen in the sequences $\sigma$, $\tau$.  This and
  other examples show that using general quantales in recursive trace
  distance specifications instead of simply
  $\LL= \Realnn\cup\{ \infty\}$ is necessary.

  The \emph{discrete} trace distance is given by $\td( \sigma, \tau)= 0$
  if $\sigma\labpre \tau$ and $\infty$ otherwise (here we have extended
  $\labpre$ to traces in the obvious way).  It has a recursive
  characterization with $\LL= \Realnn\cup\{ \infty\}$, $\eval= \id$, and
  $\td( a. \sigma, b. \tau)= \td( \sigma, \tau)$ if $a\labpre b$ and
  $\infty$ otherwise. \qed
\end{example}

For the rest of this paper, we fix a recursively specified trace
distance.

\subsection{Refinement distances}

We lift the notions of modal refinement, for all our formalisms, to
distances.  Conceptually, this is done by replacing ``$\forall$''
quantifiers by ``$\sup$'' and ``$\exists$'' by ``$\inf$'' in the
definitions, and then using the distance iterator to introduce a
recursive functional whose least fixed point is the distance.

\begin{definition}
  \quad The \emph{lifted refinement distance} on the states of DMTS
  $\mcalD_1=( S_1,$ $S^0_1, \omay_1, \omust_1)$ and $\mcalD_2=( S_2, S^0_2,
  \omay_2, \omust_2)$ is the least fixed point to the equations
  \begin{equation*}
    \mdl( s_1, s_2)= \max
    \begin{cases}
      &\hspace*{-1em} \adjustlimits \sup_{ s_1\DMTSmay{ a_1} t_1}
      \inf_{
        s_2\DMTSmay{ a_2} t_2} F( a_1, a_2, \mdl( t_1, t_2))\,, \\
      &\hspace*{-1em} \multiadjustlimits{ \sup_{ s_2\DMTSmust{} N_2}
        \inf_{ s_1\DMTSmust{} N_1} \sup_{( a_1, t_1)\in N_1} \inf_{(
          a_2, t_2)\in N_2}} F( a_1, a_2, \mdl( t_1, t_2))\,,
    \end{cases}
  \end{equation*}
  for $s_1\in S_1$, $s_2\in S_2$.  For \NAA $\mcalA_1=( S_1, S^0_1,
  \Tran_1)$, $\mcalA_2=( S_2, S^0_2, \Tran_2)$, the right-hand side is
  replaced by
  \begin{equation*}
    \adjustlimits \sup_{ M_1\in \Tran_1( s_1)} \inf_{ M_2\in \Tran_2(
      s_2)} \max
    \begin{cases}
      &\hspace*{-1em} \adjustlimits \sup_{( a_1, t_1)\in M_1} \inf_{(
        a_2, t_2)\in M_2} F( a_1, a_2, \mdl( t_1, t_2))\,, \\
      &\hspace*{-1em} \adjustlimits \sup_{( a_2, t_2)\in M_2} \inf_{(
        a_1, t_1)\in M_1} F( a_1, a_2, \mdl( t_1, t_2))\,,
    \end{cases}
  \end{equation*}
  and for $\nu$-calculus expressions $\mcalN_1=( X_1, X^0_1, \Delta_1)$,
  $\mcalN_2=( X_2, X^0_2, \Delta_2)$ in normal form, it is
  \begin{equation*}
    \max
    \begin{cases}
      & \hspace*{-1em} \adjustlimits \sup_{ a_1\in \Sigma, y_1\in
        \Box^{ a_1}_1( x_1)\,} \inf_{ \, a_2\in \Sigma, y_2\in \Box^{
          a_2}_2(
        x_2)} F( a_1, a_2, \mdl( y_1, y_2))\,, \\
      &\hspace*{-1em} \multiadjustlimits{ \sup_{ N_2\in\Diamond_2( x_2)}
      \inf_{ N_1\in \Diamond_1( x_1)} \sup_{( a_1,
        y_1)\in N_1} \inf_{( a_2, y_2)\in N_2}} F( a_1, a_2, \mdl( y_1,
      y_2))\,.
    \end{cases}
  \end{equation*}
\end{definition}

Using Tarski's fixed point theorem, one easily sees that the lifted
refinement distances are indeed well-defined.  (Here one needs
monotonicity of $F$ in the third coordinate, together with the fact that
$\sup$ and $\inf$ are monotonic.)

Note that we define the distances using \emph{least} fixed points, as
opposed to the \emph{greatest} fixed point definition of standard
refinement.  Informally, this is because our order is reversed: we are
not interested in maximizing refinement relations, but in
\emph{minimizing} refinement distance.

The lifted refinement distance between specifications is defined by
\begin{equation*}
  \mdl( \mcalS_1, \mcalS_2)= \adjustlimits \sup_{ s^0_1\in S^0_1} \inf_{
    s^0_2\in S^0_2} \mdl( s^0_1, s^0_2)\,.
\end{equation*}
Analogously to thorough refinement, there is also a \emph{lifted
  thorough refinement distance}, given by $\thdl( \mcalS_1, \mcalS_2)= \sup_{
  \mcalI_1\in \sem{ \mcalS_1}} \inf_{ \mcalI_2\in \sem{ \mcalS_2}}$ $\mdl( \mcalI_1,
\mcalI_2)$.

Using the $\eval$ function, one gets distances $\md= \eval\circ \mdl$
and $\thd= \eval\circ \thdl$, with values in $\Realnn\cup\{ \infty\}$,
which will be the ones one is interested in for concrete applications.

We recall the notion of \emph{refinement family} from
Chapter~\ref{ch:wm2} and extend it to specifications.  We give the
definition for \NAA only; for DMTS and the modal $\nu$-calculus it is
similar.

\begin{definition}
  \label{de:reffam}
  A \emph{refinement family} from $\mcalA_1$ to $\mcalA_2$, for \NAA $\mcalA_1=(
  S_1, S^0_1, \Tran_1)$, $\mcalA_2=( S_2, S^0_2, \Tran_2)$, is an
  $\LL$-indexed family of relations $R=\{ R_\alpha\subseteq S_1\times
  S_2\mid \alpha\in \LL\}$ with the property that for all $\alpha\in
  \LL$ with $\alpha\ne \top_\LL$, all $( s_1, s_2)\in R_\alpha$, and all
  $M_1\in \Tran_1( s_1)$, there is $M_2\in \Tran_2( s_2)$ such that
  \begin{itemize}
  \item $\forall( a_1, t_1)\in M_1: \exists( a_2, t_2)\in M_2, \beta\in
    \LL:( t_1, t_2)\in R_\beta, F( a_1, a_2, \beta)\sqsubseteq \alpha$,
  \item $\forall( a_2, t_2)\in M_2: \exists( a_1, t_1)\in M_1, \beta\in
    \LL:( t_1, t_2)\in R_\beta, F( a_1, a_2, \beta)\sqsubseteq \alpha$.
  \end{itemize}
\end{definition}

\begin{lemma}
  For all \NAA $\mcalA_1=( S_1, S^0_1, \Tran_1)$, $\mcalA_2=( S_2, S^0_2,
  \Tran_2)$, there exists a refinement family $R$ from $\mcalA_1$ to
  $\mcalA_2$ such that for all $s^0_1\in S^0_1$, there is $s^0_2\in S^0_2$
  for which $( s^0_1, s^0_2)\in R_{ \mdl( \mcalA_1, \mcalA_2)}$.
\end{lemma}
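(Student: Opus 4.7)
My plan is to follow the pattern established in Lemma~\ref{wm2.le:family} and construct the witnessing family directly from the fixed point $\mdl$. Concretely, I would define
\[
  R_\alpha = \{(s_1, s_2) \in S_1 \times S_2 \mid \mdl(s_1, s_2) \sqsubseteq_\LL \alpha\}
\]
for every $\alpha \in \LL$ and let $R = \{R_\alpha \mid \alpha \in \LL\}$. The family is upward-closed by construction, and downward-closed at limits because $\mdl$ takes values in the complete lattice $\LL$.

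The bulk of the argument is verifying that $R$ is a refinement family in the sense of Definition~\ref{de:reffam}. Let $(s_1, s_2) \in R_\alpha$ with $\alpha \ne \top_\LL$, so $\mdl(s_1, s_2) \sqsubseteq_\LL \alpha$. Pick any $M_1 \in \Tran_1(s_1)$. By the fixed-point equation defining $\mdl$ on \NAA\ states, the sup over $M_1 \in \Tran_1(s_1)$ of an $\inf$-expression over $M_2 \in \Tran_2(s_2)$ is bounded by $\alpha$; since all specifications are assumed finite, the infimum is attained, yielding some $M_2 \in \Tran_2(s_2)$ such that the inner max of the two $\sup$-$\inf$ expressions is $\sqsubseteq_\LL \alpha$. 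For each $(a_1, t_1) \in M_1$, finiteness again lets me pick $(a_2, t_2) \in M_2$ realizing the inner $\inf$, and taking $\beta = \mdl(t_1, t_2)$ gives $(t_1, t_2) \in R_\beta$ with $F(a_1, a_2, \beta) \sqsubseteq_\LL \alpha$. The symmetric clause (for $(a_2, t_2) \in M_2$) is handled identically from the second branch of the $\max$.

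For the initial-state condition, note that by definition
\[
  \mdl(\mcalA_1, \mcalA_2) = \sup_{s_1^0 \in S_1^0} \inf_{s_2^0 \in S_2^0} \mdl(s_1^0, s_2^0).
\]
Fix any $s_1^0 \in S_1^0$. By finiteness of $S_2^0$ the inner infimum is attained at some $s_2^0 \in S_2^0$, and this value is bounded above by $\mdl(\mcalA_1, \mcalA_2)$. Hence $\mdl(s_1^0, s_2^0) \sqsubseteq_\LL \mdl(\mcalA_1, \mcalA_2)$, which gives $(s_1^0, s_2^0) \in R_{\mdl(\mcalA_1, \mcalA_2)}$ as required.

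The only real obstacle is the attainment of infima in the refinement-family clause: a priori the fixed-point equation only asserts bounds as suprema/infima rather than maxima/minima. Here the paper's standing finiteness assumption on \NAA\ (stated at the start of Section~\ref{se:formalisms}) makes all relevant sets finite, so every $\sup$ and $\inf$ is really a $\max$ or $\min$ and the witnesses I need exist outright. If one wished to extend the statement to compactly-branching \NAA, one would instead invoke lower-semicontinuity of $F$ together with compactness, as in the proof of Lemma~\ref{wm2.le:family}.
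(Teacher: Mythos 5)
Your proposal is correct and follows essentially the same route as the paper: the paper also defines $R_\alpha=\{( s_1, s_2)\mid \mdl( s_1, s_2)\sqsubseteq_\LL \alpha\}$, reads the refinement-family conditions off the fixed-point equation for $\mdl$ with $\beta= \mdl( t_1, t_2)$ as the witness, and handles the initial states via $( s^0_1, s^0_2)\in R_{ \mdl( s^0_1, s^0_2)}\subseteq R_{ \mdl( \mcalA_1, \mcalA_2)}$. Your explicit remark that attainment of the infima rests on the chapter's standing finiteness assumption (and would need compact branching plus continuity of $F$ in general) is a point the paper leaves implicit.
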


We say that a refinement family as in the lemma \emph{witnesses} $\mdl(
\mcalA_1, \mcalA_2)$.

\begin{proof}
  Define $R$ by $R_\alpha=\{( s_1, s_2)\mid \mdl( s_1,
  s_2)\sqsubseteq_\LL \alpha\}$.  First, as $( s^0_1, s^0_2)\in R_{
    \mdl( s^0_1, s^0_2)}$ for all $s^0_1\in S^0_1$, $s^0_2\in S^0_2$, it
  is indeed the case that for all $s^0_1\in S^0_1$, there is $s^0_2\in
  S^0_2$ for which
  \begin{equation*}
    ( s^0_1, s^0_2)\in R_{ \mdl( \mcalA_1, \mcalA_2)}= R_{
      \max_{ s^0_1\in S^0_1} \min_{ s^0_2\in S^0_2} \mdl( s^0_1, s^0_2)}\,.
  \end{equation*}

  Now let $\alpha\in \LL$ with $\alpha\ne \top_\LL$ and $( s_1, s_2)\in
  R_\alpha$.  Let $M_1\in \Tran_1( s_1)$.  We have $\mdl( s_1,
  s_2)\sqsubseteq_\LL \alpha$, hence there is $M_2\in \Tran_2( s_2)$
  such that
  \begin{equation*}
    \alpha\sqsupseteq_\LL \max
    \begin{cases}
      &\hspace*{-1em} \adjustlimits \sup_{( a_1, t_1)\in M_1} \inf_{(
        a_2, t_2)\in M_2} F( a_1, a_2, \mdl( t_1, t_2))\,, \\
      &\hspace*{-1em} \adjustlimits \sup_{( a_2, t_2)\in M_2} \inf_{(
        a_1, t_1)\in M_1} F( a_1, a_2, \mdl( t_1, t_2))\,.
    \end{cases}
  \end{equation*}
  But this entails that for all $( a_1, t_1)\in M_1$, there is $( a_2,
  t_2)\in M_2$ and $\beta= \mdl( t_1, t_2)$ with $F( a_1, a_2,
  \beta)\sqsubseteq_\LL \alpha$, and that for all $( a_2, t_2)\in M_2$,
  there is $( a_1, t_1)\in M_1$ and $\beta= \mdl( t_1, t_2)$ such that
  $F( a_1, a_2, \beta)\sqsubseteq_\LL \alpha$. \qed
\end{proof}

The following quantitative extension of Theorems~\ref{th:dmtsvsaa-bool}
and~\ref{th:dmtsvsnu-bool} shows that our translations preserve and
reflect refinement distances.

\begin{theorem}
  \label{th:trans-moddist}
  For all DMTS $\mcalD_1, \mcalD_2$, all \NAA $\mcalA_1$, $\mcalA_2$ and all
  $\nu$-calculus expressions $\mcalN_1$, $\mcalN_2$:
  \begin{align*}
    \mdl( \mcalD_1, \mcalD_2) &= \mdl( \da( \mcalD_1), \da( \mcalD_2)) \\
    \mdl( \mcalA_1, \mcalA_2) &= \mdl( \ad( \mcalA_1), \ad( \mcalA_2)) \\
    \mdl( \mcalD_1, \mcalD_2) &= \mdl( \ddh( \mcalD_1), \ddh( \mcalD_2)) \\
    \mdl( \mcalN_1, \mcalN_2) &= \mdl( \hd( \mcalN_1), \hd( \mcalN_2))
  \end{align*}
\end{theorem}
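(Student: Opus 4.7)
The plan is to prove each of the four equations by exhibiting matching refinement families (in the sense of Definition~\ref{de:reffam}, together with its obvious analogues for DMTS and for $\nu$-calculus expressions in normal form) in both directions, lifting the Boolean proofs of Theorems~\ref{th:dmtsvsaa-bool} and~\ref{th:dmtsvsnu-bool} to the quantitative setting. Throughout, the key technical ingredients will be monotonicity of $F$ in its third coordinate, closure of the least-fixed-point characterization of $\mdl$ under suprema, and the fact that refinement families are closed downwards in the $\LL$-index: if $(s,t)\in R_\alpha$ and $\alpha\sqsubseteq_\LL \alpha'$ then $(s,t)\in R_{\alpha'}$.

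I would first dispatch the two easy equations concerning $\ddh$ and $\hd$, \ie~(3) and (4). These translations are purely syntactic: states and initial states are unchanged, may-transitions correspond to box-formulae, and disjunctive must-transitions correspond to diamond-conjuncts according to~\eqref{eq:boxdiatodelta}. Consequently, if $\{R_\alpha\}$ is a refinement family between the DMTS $\mcalD_1, \mcalD_2$, the same family is a refinement family between $\ddh(\mcalD_1), \ddh(\mcalD_2)$, because the two kinds of refinement conditions are literally the same once $\Diamond$ and $\Box^a$ are unfolded. The converse, and both directions for $\hd$, are analogous, so equations (3) and (4) follow.

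Next I would address equation (1) for the translation $\da$. Here again the state spaces of $\mcalD_i$ and $\da(\mcalD_i)$ coincide, so the natural candidate is to reuse the same underlying family $\{R_\alpha\}$. In one direction, given a DMTS refinement family and $M_1\in \Tran_1(s_1)$, one defines $M_2 = \{(a_2,t_2)\mid s_2\DMTSmay{a_2}_2 t_2,\ \exists (a_1,t_1)\in M_1\text{ chosen to minimize }F(a_1,a_2,\mdl(t_1,t_2))\}$ and checks, using the two clauses of DMTS modal refinement, that $M_2\in \Tran_2(s_2)$ and that the two $F$-weighted matching conditions of Definition~\ref{de:reffam} hold with index $\alpha$. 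In the converse direction, given an NAA refinement family, one answers a challenge $s_1\DMTSmay{a_1}_1 t_1$ by playing the transition constraint $M_1=\{(a_1,t_1)\}\cup \bigcup_{s_1\DMTSmust{}_1 N_1} N_1 \in \Tran_1(s_1)$, and a must-challenge $s_2\DMTSmust{}_2 N_2$ by choosing, for each $s_1\DMTSmust{}_1 N_1$, a worst pair $(a_{N_1},t_{N_1})\in N_1$ and collecting these into an $M_1$; this is the direct quantitative lift of the construction used in the Boolean proof of Theorem~\ref{th:dmtsaa}, and monotonicity of $F$ in the third argument guarantees that the $\LL$-bound $\alpha$ is preserved.

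The main obstacle is equation (2), which concerns $\ad$, because $\ad$ blows up the state space: a state of $\ad(\mcalA)$ is a transition constraint $M\in \Tran(s)$, so refinement families on the NAA side live on $S_1\times S_2$ while those on the DMTS side live on $D_1\times D_2$. For the $\sqsupseteq$ direction, given an NAA refinement family $\{R_\alpha\}$ from $\mcalA_1$ to $\mcalA_2$, I would define $R'_\beta = \{(M_1,M_2)\in D_1\times D_2 \mid \exists (s_1,s_2)\in R_\alpha,\ M_1\in \Tran_1(s_1),\ M_2\in \Tran_2(s_2),\ M_1\labpre_{R_\alpha}M_2,\ \alpha\sqsubseteq_\LL \beta\}$ and verify that it is a DMTS refinement family on $D_1\times D_2$, checking the may- and disjunctive-must-clauses directly from the structural SOS rules that define $\ad$. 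For the $\sqsubseteq$ direction, given a DMTS refinement family $\{R'_\beta\}$ from $\ad(\mcalA_1)$ to $\ad(\mcalA_2)$, I would set $R_\alpha = \{(s_1,s_2)\mid \forall M_1\in \Tran_1(s_1):\ \exists M_2\in \Tran_2(s_2),\ \beta\sqsubseteq_\LL \alpha:\ (M_1,M_2)\in R'_\beta\}$, and verify the NAA refinement condition by unfolding the definition of $\ad$: a challenge $M_1\in \Tran_1(s_1)$ produces a must-step $M_1\DMTSmust{}_1 N_1$ via the SOS rule, which is answered in the DMTS family and translated back into an $M_2\in \Tran_2(s_2)$ together with the pairing required by~\eqref{soco.eq:aaref}. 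The technical heart of the argument is that the bookkeeping of $\LL$-indices through these back-and-forth constructions is tight; this rests on monotonicity of $F$, the least-fixed-point definition of $\mdl$, and the fact that $F(a_1,a_2,\beta)\sqsubseteq_\LL \alpha$ is preserved under the reindexing used to move between the two kinds of families.
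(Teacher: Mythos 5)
Your proposal is correct and follows essentially the same route as the paper's proof: the same decomposition into eight inequalities, the purely syntactic identification for the $\ddh$/$\hd$ directions, the construction of $M_2$ from the DMTS family and the challenge sets $M_1=\{(a_1,t_1)\}\cup\bigcup_{s_1\DMTSmust{}_1 N_1}N_1$ resp.\ $\{(a_{N_1},t_{N_1})\mid s_1\DMTSmust{}_1 N_1\}$ for the $\da$ directions, and the lifted families on $D_1\times D_2$ resp.\ $S_1\times S_2$ for the $\ad$ directions. The only cosmetic difference is that you phrase the choice of $(a_1,t_1)$ as minimizing $F(a_1,a_2,\mdl(t_1,t_2))$, where the paper quantifies existentially over a witness $\beta$ with $F(a_1,a_2,\beta)\sqsubseteq_\LL\alpha$; these amount to the same thing.
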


\begin{proof}
  \mbox{}

  \noindent \underline{$\mdl( \da( \mcalD_1), \da(
    \mcalD_2))\sqsubseteq_\LL \mdl( \mcalD_1, \mcalD_2)$:}

  Let $\mcalD_1=( S_1, S^0_1, \omay_1, \omust_1)$ and $\mcalD_2=( S_2, S^0_2,$
  $\omay_2, \omust_2)$ be DMTS.  There exists a DMTS refinement family
  $R=\{ R_\alpha\subseteq S_1\times S_2\mid \alpha\in \LL\}$ such that
  for all $s^0_1\in S^0_1$, there is $s^0_2\in S^0_2$ with $( s^0_1,
  s^0_2)\in R_{ \mdl( \mcalD_1, \mcalD_2)}$.  We show that $R$ is an \NAA
  refinement family.

  Let $\alpha\in \LL$ and $( s_1, s_2)\in R_\alpha$.  Let $M_1\in
  \Tran_1( s_1)$ and define
  \begin{multline*}
    M_2= \big\{ ( a_2, t_2)\mid s_2\DMTSmay{ a_2}_2 t_2, \exists( a_1, t_1)\in
    M_1: \exists \beta\in \LL: \\
    ( t_1, t_2)\in R_\beta, F( a_1, a_2, \beta)\sqsubseteq_\LL \alpha\big\}\,.
  \end{multline*}
  The condition
  \begin{equation*}
    \forall( a_2, t_2)\in M_2: \exists( a_1, t_1)\in M_1, \beta\in
    \LL: ( t_1, t_2)\in R_\beta, F( a_1, a_2, \beta)\sqsubseteq \alpha
  \end{equation*}
  is satisfied by construction.  For the inverse condition, let $( a_1,
  t_1)\in M_1$, then $s_1\DMTSmay{ a_1}_1 t_1$, and as $R$ is a DMTS
  refinement family, this implies that there is $s_2\DMTSmay{ a_2}_2 t_2$
  and $\beta\in \LL$ for which $( t_1, t_2)\in R_\beta$ and $F( a_1,
  a_2, \beta)\sqsubseteq_\LL \alpha$, so that $( a_2, t_2)\in M_2$ by
  construction.

  We are left with showing that $M_2\in \Tran_2( s_2)$.  First we notice
  that by construction, indeed $s_2\DMTSmay{a_2}_2 t_2$ for all $( a_2,
  t_2)\in M_2$.  Now let $s_2\DMTSmust{} N_2$; we need to show that $N_2\cap
  M_2\ne \emptyset$.

  We have $s_1\DMTSmust{} N_1$ such that $\forall( a_1, t_1)\in N_1:
  \exists( a_2, t_2)\in N_2, \beta\in \LL:( t_1, t_2)\in R_\beta, F(
  a_1, a_2, \beta)\sqsubseteq_\LL \alpha$.  We know that $N_1\cap M_1\ne
  \emptyset$, so let $( a_1, t_1)\in N_1\cap M_1$.  Then there is $(
  a_2, t_2)\in N_2$ and $\beta\in \LL$ such that $( t_1, t_2)\in
  R_\beta$ and $F( a_1, a_2, \beta)\sqsubseteq_\LL \alpha$.  But $( a_2,
  t_2)\in N_2$ implies $s_2\DMTSmay{ a_2}_2 t_2$, hence $( a_2, t_2)\in
  M_2$.

  \smallskip \noindent \underline{$\mdl( \mcalD_1, \mcalD_2)\sqsubseteq_\LL
    \mdl( \da( \mcalD_1), \da( \mcalD_2))$:}

  Let $\mcalD_1=( S_1, S^0_1, \omay_1, \omust_1)$ and $\mcalD_2=( S_2, S^0_2,$
  $\omay_2, \omust_2)$ be DMTS.  There exists an \NAA refinement family
  $R=\{ R_\alpha\subseteq S_1\times S_2\mid \alpha\in \LL\}$ such that
  for all $s^0_1\in S^0_1$, there is $s^0_2\in S^0_2$ for which $( s^0_1,
  s^0_2)\in R_{ \mdl( \da( \mcalD_1), \da( \mcalD_2))}$.  We show that $R$ is
  a DMTS refinement family.  Let $\alpha\in \LL$ and $( s_1, s_2)\in
  R_\alpha$.

  Let $s_1\DMTSmay{ a_1}_1 t_1$, then we cannot have $s_1\DMTSmust{} \emptyset$.
  Let $M_1=\{( a_1, t_1)\}\cup \bigcup\{ N_1\mid s_1\DMTSmust{} N_1\}$, then
  $M_1\in \Tran_1( s_1)$ by construction.  This implies that there is
  $M_2\in \Tran_2( s_2)$, $( a_2, t_2)\in M_2$ and $\beta\in \LL$ such
  that $( t_1, t_2)\in R_\beta$ and $F( a_1, a_2, \beta)\sqsubseteq_\LL
  \alpha$, but then also $s_2\DMTSmay{ a_2} t_2$ as was to be shown.

  Let $s_2\DMTSmust{} N_2$ and assume, for the sake of contradiction, that
  there is no $s_1\DMTSmust{} N_1$ for which $\forall( a_1, t_1)\in N_1:
  \exists( a_2, t_2)\in N_2, \beta\in \LL:( t_1, t_2)\in R_\beta, F(
  a_1, a_2, \beta)\sqsubseteq_\LL \alpha$ holds.  Then for each
  $s_1\DMTSmust{} N_1$, there is an element $( a_{ N_1}, t_{ N_1})\in N_1$
  such that $\exists( a_2, t_2)\in N_2, \beta\in \LL:( t_{ N_1}, t_2)\in
  R_\beta, F( a_{ N_1}, a_2,$ $\beta)\sqsubseteq_\LL \alpha$ does
  \emph{not} hold.

  Let $M_1=\{( a_{ N_1}, t_{ N_1})\mid s_1\DMTSmust{} N_1\}$, then $M_1\in
  \Tran_1( s_1)$ by construction.  Hence we have $M_2\in \Tran_2( s_2)$
  such that $\forall( a_2, t_2)\in M_2: \exists( a_1, t_2)\in M_1,
  \beta\in \LL:( t_1, t_2)\in R_\beta, F( a_1, a_2, \beta)\sqsubseteq
  \alpha$.  Now $N_2\cap M_2\ne \emptyset$, so let $( a_2, t_2)\in
  N_2\cap M_2$, then there is $( a_1, t_1)\in M_1$ and $\beta\in \LL$
  such that $( t_1, t_2)\in R_\beta$ and $F( a_1, a_2,
  \beta)\sqsubseteq_\LL \alpha$, in contradiction to how $M_1$ was
  constructed.

  \smallskip \noindent \underline{$\mdl( \ad( \mcalA_1), \ad(
    \mcalA_2))\sqsubseteq_\LL \mdl( \mcalA_1, \mcalA_2)$:}

  Let $\mcalA_1=( S_1, S^0_1, \Tran_1)$, $\mcalA_2=( S_2, S^0_2, \Tran_2)$ be
  \NAA, with DMTS translations $\ad( \mcalA_1)=( D_1, D^0_1, \omust_1,$
  $\omay_1)$, $\ad( \mcalA_2)=( D_2, D^0_2, \omust_2, \omay_2)$.  There is
  an \NAA refinement family $R=\{ R_\alpha\subseteq S_1\times S_2\mid
  \alpha\in \LL\}$ such that for all $s^0_1\in S^0_1$, there is
  $s^0_2\in S^0_2$ with $( s^0_1, s^0_2)\in R_{ \mdl( \mcalA_1, \mcalA_2)}$.

  Define a relation family $R'=\{ R'_\alpha\subseteq D_1\times D_2\mid
  \alpha\in \LL\}$ by
  \begin{align*}
    R'_\alpha &= \big\{ ( M_1, M_2)\bigmid \exists( s_1, s_2)\in
    R_\alpha:
    M_1\in \Tran_1( s_1), M_2\in \Tran( s_2), \\
    &\hspace{8em}
    \begin{aligned}
      & \forall( a_1, t_1)\in M_1: \exists( a_2, t_2)\in M_2, \beta\in
      \LL: \\
      &\hspace*{6em} ( t_1, t_2)\in R_\beta, F( a_1, a_2,
      \beta)\sqsubseteq_\LL \alpha\,, \\
      & \forall( a_2, t_2)\in M_2: \exists( a_1, t_1)\in M_1, \beta\in
      \LL: \\
      &\hspace*{6em} ( t_1, t_2)\in R_\beta, F( a_1, a_2,
      \beta)\sqsubseteq_\LL \alpha\big\}\,.
    \end{aligned}
  \end{align*}
  We show that $R'$ is a witness for $\mdl( \ad( \mcalA_1), \ad(
  \mcalA_2))\sqsubseteq_\LL \mdl( \mcalA_1, \mcalA_2)$.  Let $\alpha\in \LL$ and
  $( M_1, M_2)\in R'_\alpha$.

  Let $M_2\DMTSmust{}_2 N_2$.  By construction of $\omust$, there is $( a_2,
  t_2)\in M_2$ such that $N_2=\{( a_2, M_2')\mid M_2'\in \Tran_2(
  t_2)\}$.  Then $( M_1, M_2)\in R'_\alpha$ implies that there must be
  $( a_1, t_1)\in M_1$ and $\beta\in \LL$ such that $( t_1, t_2)\in
  R_\beta$ and $F( a_1, a_2, \beta)\sqsubseteq_\LL \alpha$.  Let
  $N_1=\{( a_1, M_1')\mid M_1'\in \Tran_1( t_1)\}$, then $M_1\DMTSmust{}_1
  N_1$.

  We show that $\forall( a_1, M_1')\in N_1: \exists( a_2, M_2')\in N_2:(
  M_1', M_2')\in R'_\beta$: Let $( a_1, M_1')\in N_1$, then $M_1'\in
  \Tran_1( t_1)$.  From $( t_1, t_2)\in R_\beta$ we get $M_2'\in
  \Tran_2( t_2)$ such that
  \begin{align*}
    & \forall( b_1, u_1)\in M_1': \exists( b_2, u_2)\in M_2', \gamma\in
    \LL: ( u_1, u_2)\in R_\gamma, F( b_1, b_2,
    \gamma)\sqsubseteq_\LL \beta\,, \\
    & \forall( b_2, u_2)\in M_2': \exists( b_1, u_1)\in M_1', \gamma\in
    \LL: ( u_1, u_2)\in R_\gamma, F( b_1, b_2,
    \gamma)\sqsubseteq_\LL \beta\,,
  \end{align*}
  hence $( M_1', M_2')\in R'_\beta$; also, $( a_2, M_2')\in N_2$ by
  construction of $N_2$.

  Let $M_1\DMTSmay{ a_1}_1 M_1'$, then we have $M_1\DMTSmust{}_1 N_1$ for which
  $( a_1, M_1')\in N_1$ by construction of $\omay_1$.  This in turn
  implies that there must be $( a_1, t_1)\in M_1$ such that $N_1=\{(
  a_1, M_1'')\mid M_1''\in \Tran_1( t_1)\}$.  By $( M_1, M_2)\in
  R'_\alpha$, we get $( a_2, t_2)\in M_2$ and $\beta\in \LL$ such that
  $( t_1, t_2)\in R_\beta$ and $F( a_1, a_2, \beta)\sqsubseteq_\LL
  \alpha$.  Let $N_2=\{( a_2, M_2')\mid M_2'\in \Tran_2( t_2)\}$, then
  $M_2\DMTSmust{}_2 N_2$ and hence $M_2\DMTSmay{ a_2}_2 M_2'$ for all $( a_2,
  M_2')\in N_2$.  By the same arguments as above, there is $( a_2,
  M_2')\in N_2$ for which $( M_1', M_2')\in R'_\beta$.

  We miss to show that $R'$ is initialized.  Let $M_1^0\in D_1^0$, then
  we have $s_1^0\in S_1^0$ with $M_1^0\in \Tran_1( s_1^0)$.  As $R$ is
  initialized, this entails that there is $s_2^0\in S_2^0$ with $(
  s_1^0, s_2^0)\in R_{ \mdl( \mcalA_1, \mcalA_2)}$, which gives us $M_2^0\in
  \Tran_2( s_2^0)$ which satisfies the conditions in the definition of
  $R'_{ \mdl( \mcalA_1, \mcalA_2)}$, whence $( M_1^0, M_2^0)\in R'_{ \mdl(
    \mcalA_1, \mcalA_2)}$.

  \smallskip \noindent \underline{$\mdl( \mcalA_1, \mcalA_2)\sqsubseteq_\LL
    \mdl( \ad( \mcalA_1), \ad( \mcalA_2))$:}

  Let $\mcalA_1=( S_1, S^0_1, \Tran_1)$, $\mcalA_2=( S_2, S^0_2, \Tran_2)$ be
  \NAA, with DMTS translations $\ad( \mcalA_1)=( D_1, D^0_1, \omust_1,$
  $\omay_1)$, $\ad( \mcalA_2)=( D_2, D^0_2, \omust_2, \omay_2)$.  There is
  a DMTS refinement family $R=\{ R_\alpha\subseteq D_1\times D_2\mid
  \alpha\in \LL\}$ such that for all $M_1^0\in D_1^0$, there exists
  $M_2^0\in D_2^0$ with $( M_1^0, M_2^0)\in R_{ \mdl( \ad( \mcalA_1), \ad(
    \mcalA_2))}$.

  Define a relation family $R'=\{ R'_\alpha\subseteq S_1\times S_2\mid
  \alpha\in \LL\}$ by
  \begin{equation*}
    R'_\alpha= \big\{ ( s_1, s_2)\bigmid \forall M_1\in \Tran_1( s_1):
    \exists M_2\in \Tran_2( s_2):( M_1, M_2)\in R_\alpha\big\}\,;
  \end{equation*}
  we will show that $R'$ is a witness for $\mdl( \mcalA_1,
  \mcalA_2)\sqsubseteq_\LL \mdl( \ad( \mcalA_1), \ad( \mcalA_2))$.

  Let $\alpha\in \LL$, $( s_1, s_2)\in R'_\alpha$ and $M_1\in \Tran_1(
  s_1)$, then by construction of $R'$, we have $M_2\in \Tran_2( s_2)$
  with $( M_1, M_2)\in R_\alpha$.

  Let $( a_2, t_2)\in M_2$ and define $N_2=\{( a_2, M_2')\mid M_2'\in
  \Tran_2( t_2)\}$, then $M_2\DMTSmust{}_2 N_2$.  Now $( M_1, M_2)\in
  R_\alpha$ implies that there must be $M_1\DMTSmust{}_1 N_1$ satisfying
  $\forall( a_1, M_1')\in N_1: \exists( a_2, M_2')\in N_2, \beta\in
  \LL:( M_1', M_2')\in R_\beta,$ $F( a_1, a_2, \beta)\sqsubseteq_\LL
  \alpha$.  We have $( a_1, t_1)\in M_1$ such that $N_1=\{( a_1,
  M_1')\mid M_1'\in \Tran_1( t_1)\}$; we only miss to show that $( t_1,
  t_2)\in R'_\beta$ for some $\beta\in \LL$ for which $F( a_1, a_2,
  \beta)\sqsubseteq_\LL \alpha$.  Let $M_1'\in \Tran_1( t_1)$, then $(
  a_1, M_1')\in N_1$, hence there is $( a_2, M_2')\in N_2$ and $\beta\in
  \LL$ such that $( M_1', M_2')\in R_\beta$ and $F( a_1, a_2,
  \beta)\sqsubseteq \alpha$, but $( a_2, M_2')\in N_2$ also entails
  $M_2'\in \Tran_2( t_2)$.

  Let $( a_1, t_1)\in M_1$ and define $N_1=\{( a_1, M_1')\mid M_1'\in
  \Tran_1( t_1)\}$, then $M_1\DMTSmust{}_1 N_1$.  Now let $( a_1, M_1')\in
  N_1$, then $M_1\DMTSmay{ a_1}_1 M_1'$, hence we have $M_2\DMTSmay{ a_2}_2
  M_2'$ and $\beta\in \LL$ such that $( M_1', M_2')\in R_\beta$ and $F(
  a_1, a_2, \beta)\sqsubseteq_\LL \alpha$.  By construction of
  $\omay_2$, this implies that there is $M_2\DMTSmust{}_2 N_2$ with $( a_2,
  M_2')\in N_2$, and we have $( a_2, t_2)\in M_2$ for which $N_2=\{(
  a_2, M_2'')\mid M_2''\in \Tran_2( t_2)\}$.  Now if $M_1''\in \Tran_1(
  t_1)$, then $( a_1, M_1'')\in N_1$, hence there is $( a_2, M_2'')\in
  N_2$ with $( M_1'', M_2'')\in R_\beta$, but $( a, M_2'')\in N_2$ also
  gives $M_2''\in \Tran_2( t_2)$.

  We miss to show that $R'$ is initialized.  Let $s^0_1\in S^0_1$ and
  $M^0_1\in \Tran_1( s^0_1)$.  As $R$ is initialized, this gets us
  $M^0_2\in D_2$ with $( M^0_1, M^0_2)\in R_{ \mdl( \ad( \mcalA_1), \ad(
    \mcalA_2))}$, but $M^0_2\in \Tran_2( s^0_2)$ for some $s^0_2\in S^0_2$,
  and then $( s^0_1, s^0_2)\in R'_{ \mdl( \ad( \mcalA_1), \ad( \mcalA_2))}$.

  \smallskip \noindent \underline{$\mdl( \ddh( \mcalD_1), \ddh(
    \mcalD_2))\sqsubseteq_\LL \mdl( \mcalD_1, \mcalD_2)$:}

  Let $\mcalD_1=( S_1, S_1^0, \omay_1, \omust_1)$ and $\mcalD_2=( S_2, S_2^0,$
  $\omay_2, \omust_2)$ be DMTS, with $\nu$-calculus
  translations
  $\ddh( \mcalD_1)=( S_1, S_1^0, \Delta_1)$ and $\ddh( \mcalD_2)=( S_2, S_2^0,
  \Delta_2)$.  There is a DMTS refinement family $R=\{ R_\alpha\subseteq
  S_1\times S_2\mid \alpha\in \LL\}$ such that for all $s_1^0\in S_1^0$,
  there exists $s_2^0\in S_2^0$ for which $( s_1^0, s_2^0)\in R_{ \mdl(
    \mcalD_1, \mcalD_2)}$.

  Let $\alpha\in \LL$, $( s_1, s_2)\in R_\alpha$, $a_1\in \Sigma$, and
  $t_1\in \Box^{ a_1}_1( s_1)$.  Then $s_1\DMTSmay{ a_1}_1 t_1$, hence we
  have $s_2\DMTSmay{ a_2}_2 t_2$ and $\beta\in \LL$ with $( t_1, t_2)\in
  R_\beta$ and $F( a_1, a_2, \beta)\sqsubseteq_\LL \alpha$, but then
  also $t_2\in \Box^{ a_2}_2( s_2)$.

  Let $N_2\in \Diamond_2( s_2)$, then also $s_2\DMTSmust{}_2 N_2$, so that
  there must be $s_1\DMTSmust{}_1 N_1$ such that $\forall( a_1, t_1)\in N_1:
  \exists( a_2, t_2)\in N_2, \beta\in \LL:( t_1, t_2)\in R_\beta, F(
  a_1, a_2, \beta)\sqsubseteq_\LL \alpha$, but then also $N_1\in
  \Diamond_1( s_1)$.

  \smallskip \noindent \underline{$\mdl( \mcalD_1, \mcalD_2)\sqsubseteq_\LL
    \mdl( \ddh( \mcalD_1), \ddh( \mcalD_2))$:}

  Let $\mcalD_1=( S_1, S_1^0, \omay_1, \omust_1)$ and $\mcalD_2=( S_2, S_2^0,$
  $\omay_2, \omust_2)$ be DMTS, with $\nu$-calculus
  translations
  $\ddh( \mcalD_1)=( S_1, S_1^0, \Delta_1)$ and $\ddh( \mcalD_2)=( S_2, S_2^0,
  \Delta_2)$.  There is a $\nu$-calculus refinement family $R=\{
  R_\alpha\subseteq S_1\times S_2\mid \alpha\in \LL\}$ such that for all
  $s_1^0\in S_1^0$, there exists $s_2^0\in S_2^0$ for which $( s_1^0,
  s_2^0)\in R_{ \mdl( \mcalD_1, \mcalD_2)}$.

  Let $\alpha\in \LL$ and $( s_1, s_2)\in R_\alpha$, and assume that
  $s_1\DMTSmay{ a_1}_1 t_1$.  Then $t_1\in \Box^{ a_1}_1( s_1)$, so that
  there is $a_2\in \Sigma$, $t_2\in \Box^{ a_2}_2( s_2)$ and $\beta\in
  \LL$ for which $( t_1, t_2)\in R_\beta$ and $F( a_1, a_2,
  \beta)\sqsubseteq_\LL \alpha$, but then also $s_2\DMTSmay{ a_2}_2 t_2$.

  Assume that $s_2\DMTSmust{}_2 N_2$, then $N_2\in \Diamond_2( s_2)$.  Hence
  there is $N_1\in \Diamond_1( s_1)$ so that $\forall( a_1, t_1)\in N_1:
  \exists( a_2, t_2)\in N_2, \beta\in \LL:( t_1, t_2)\in R_\beta, F(
  a_1, a_2, \beta)\sqsubseteq_\LL \alpha$, but then also $s_1\DMTSmust{}_1
  N_1$.

  \smallskip \noindent \underline{$\mdl( \hd( \mcalN_1), \hd(
    \mcalN_2))\sqsubseteq_\LL \mdl( \mcalN_1, \mcalN_2)$:}

  Let $\mcalN_1=( X_1, X_1^0, \Delta_1)$, $\mcalN_2=( X_2, X_2^0, \Delta_2)$
  be $\nu$-calculus expressions in normal form, with DMTS translations
  $\hd( \mcalN_1)=( X_1, X_1^0, \omay_1, \omust_1)$ and $\hd( \mcalN_2)=( X_2,
  X_2^0, \omay_2, \omust_2)$.  There is a $\nu$-calculus refinement
  family $R=\{ R_\alpha\subseteq X_1\times X_2\mid \alpha\in \LL\}$ such
  that for all $x_1^0\in X_1^0$, there is $x_2^0\in X_2^0$ for which $(
  x_1^0, x_2^0)\in R_{ \mdl( \mcalN_1, \mcalN_2)}$.

  Let $\alpha\in \LL$ and $( x_1, x_2)\in R_\alpha$, and assume that
  $x_1\DMTSmay{ a_1}_1 y_1$.  Then $y_1\in \Box_1^{ a_1}( x_1)$, hence there
  are $a_2\in \Sigma$, $y_2\in \Box_2^{ a_2}$ and $\beta\in \LL$ such
  that $( y_1, y_2)\in R_\beta$ and $F( a_1, a_2, \beta)\sqsubseteq_\LL
  \alpha$, but then also $x_2\DMTSmay{ a_2}_2 y_2$.

  Assume that $x_2\DMTSmust{}_2 N_2$, then $N_2\in \Diamond_2( x_2)$.  Hence
  there must be $N_1\in \Diamond_1( x_1)$ such that $\forall( a_1,
  y_1)\in N_1: \exists( a_2, y_2)\in N_2, \beta\in \LL:( y_1, y_2)\in
  R_\beta, F( a_1, a_2, \beta)\sqsubseteq_\LL \alpha$, but then also
  $x_1\DMTSmust{}_1 N_1$.

  \smallskip \noindent \underline{$\mdl( \mcalN_1, \mcalN_2)\sqsubseteq_\LL
    \mdl( \hd( \mcalN_1), \hd( \mcalN_2))$:}

  Let $\mcalN_1=( X_1, X_1^0, \Delta_1)$, $\mcalN_2=( X_2, X_2^0, \Delta_2)$
  be $\nu$-calculus expressions in normal form, with DMTS translations
  $\hd( \mcalN_1)=( X_1, X_1^0, \omay_1, \omust_1)$ and $\hd( \mcalN_2)=( X_2,
  X_2^0, \omay_2, \omust_2)$.  There is a DMTS refinement family $R=\{
  R_\alpha\subseteq X_1\times X_2\mid \alpha\in \LL\}$ such that for all
  $x_1^0\in X_1^0$, there is $x_2^0\in X_2^0$ for which $( x_1^0,
  x_2^0)\in R_{ \mdl( \mcalN_1, \mcalN_2)}$.

  Let $\alpha\in \LL$, $( x_1, x_2)\in R_\alpha$, $a_1\in \Sigma$, and
  $y_1\in \Box^{ a_1}_1( x_1)$.  Then $x_1\DMTSmay{ a_1}_1 y_1$, hence we
  have $x_2\DMTSmay{ a_2}_2 y_2$ and $\beta\in \LL$ so that $( y_1, y_2)\in
  R_\beta$ and $F( a_1, a_2, \beta)\sqsubseteq_\LL \alpha$, but then
  also $y_1\in \Box^{ a_2}_2( x_2)$.

  Let $N_2\in \Diamond_2( x_2)$, then also $x_2\DMTSmust{}_2 N_2$.  Hence we
  must have $x_1\DMTSmust{}_1 N_1$ with $\forall( a_1, y_1)\in N_1: \exists(
  a_2, y_2)\in N_2, \beta\in \LL:( y_1, y_2)\in R_\beta, F( a_1, a_2,
  \beta)\sqsubseteq_\LL \alpha$, but then also $N_1\in \Diamond_1(
  x_1)$.  \qed
\end{proof}

\subsection{Properties}

We sum up some important properties of our distances.

\begin{proposition}
  \label{pr:dist-prop}
  For all specifications $\mcalS_1$, $\mcalS_2$, $\mcalS_1\mr \mcalS_2$ implies
  $\mdl( \mcalS_1, \mcalS_2)= \bot_\LL$, and $\mcalS_1\DMTStr \mcalS_2$ implies $\thdl(
  \mcalS_1, \mcalS_2)= \bot_\LL$.  If $F$ is recursively separating, then
  $\mdl( \mcalS_1, \mcalS_2)= \bot_\LL$ implies $\mcalS_1\mr \mcalS_2$.
\end{proposition}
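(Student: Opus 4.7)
The plan is to handle the three implications separately, using refinement families as the main technical device and relying on the fact that specifications are finite so suprema and infima are attained.

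For the first claim, suppose $R \subseteq S_1 \times S_2$ is a modal refinement witnessing $\mcalS_1 \mr \mcalS_2$. I would construct the trivial refinement family $\{R_\alpha\}_{\alpha \in \LL}$ by setting $R_\alpha = R$ for every $\alpha$, and verify Definition~\ref{de:reffam} at the value $\alpha = \bot_\LL$. Given $(s_1, s_2) \in R$ and a may-transition $s_1 \DMTSmay{a_1} t_1$, the modal refinement yields $s_2 \DMTSmay{a_2} t_2$ with $a_1 \labpre a_2$ and $(t_1, t_2) \in R = R_{\bot_\LL}$; choosing $\beta = \bot_\LL$, the compatibility condition $F(a_1, a_2, \bot_\LL) \sqsubseteq \bot_\LL$ follows from the fact that $a_1 \labpre a_2$ forces $\tdl(a_1, a_2) = \bot_\LL$ together with the recursive characterization~\eqref{eq:rec}. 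The must case is analogous. Initialization transfers directly from $R$, giving $\mdl(\mcalS_1, \mcalS_2) \sqsubseteq \bot_\LL$.

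The second claim follows from the first with a short computation: if $\mcalS_1 \DMTStr \mcalS_2$, then for every $\mcalI_1 \in \sem{\mcalS_1}$ we have $\mcalI_1 \in \sem{\mcalS_2}$, and the identity relation on states witnesses $\mcalI_1 \mr \mcalI_1$, so by the first claim $\mdl(\mcalI_1, \mcalI_1) = \bot_\LL$. Taking $\mcalI_2 = \mcalI_1$ in the infimum gives $\inf_{\mcalI_2 \in \sem{\mcalS_2}} \mdl(\mcalI_1, \mcalI_2) = \bot_\LL$, and the supremum over $\mcalI_1$ yields $\thdl(\mcalS_1, \mcalS_2) = \bot_\LL$.

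For the third claim, assuming $F$ recursively separating and $\mdl(\mcalS_1, \mcalS_2) = \bot_\LL$, I would define $R = \{(s_1, s_2) \mid \mdl(s_1, s_2) = \bot_\LL\}$ and show directly that $R$ is an initialized modal refinement. The key step uses finiteness: the defining equation for $\mdl$ at $(s_1, s_2) \in R$ gives, for each $s_1 \DMTSmay{a_1} t_1$, that $\inf_{s_2 \DMTSmay{a_2} t_2} F(a_1, a_2, \mdl(t_1, t_2)) = \bot_\LL$, and since the set of $s_2$-transitions is finite this infimum is attained, producing some $s_2 \DMTSmay{a_2} t_2$ with $F(a_1, a_2, \mdl(t_1, t_2)) = \bot_\LL$. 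Recursive separability of $F$ then simultaneously delivers $a_1 \labpre a_2$ and $\mdl(t_1, t_2) = \bot_\LL$, so $(t_1, t_2) \in R$, which is exactly the may condition of modal refinement. The must condition is obtained analogously: finiteness lets us extract a matching $s_1 \DMTSmust{} N_1$, and then for each $(a_1, t_1) \in N_1$ a matching $(a_2, t_2) \in N_2$, with both label refinement and membership $(t_1, t_2) \in R$ coming from the separability property. Finiteness of $S_1^0, S_2^0$ likewise converts $\mdl(\mcalS_1, \mcalS_2) = \bot_\LL$ into initialization of $R$.

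The main obstacle is the must-side of the third claim, because the equation for $\mdl$ on disjunctive must-transitions nests three alternations of sup and inf, requiring two successive appeals to finiteness to peel off the $\sup_{(a_1, t_1) \in N_1}$ and the $\inf_{(a_2, t_2) \in N_2}$ before separability can be invoked. I would present this using DMTS (rather than \NAA), since the DMTS form of the equation makes the extraction cleanest; the \NAA and $\nu$-calculus cases then follow automatically from Theorem~\ref{th:trans-moddist}.
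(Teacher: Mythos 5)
Your proposal is correct and follows essentially the same route as the paper's proof: the constant family $R_\alpha=R$ for the first claim, the choice $\mcalI_2=\mcalI_1$ inside the Hausdorff-style infimum for the second, and the relation $R=\{( s_1, s_2)\mid \mdl( s_1, s_2)= \bot_\LL\}$ combined with attainment of infima and recursive separability for the third. The only cosmetic difference is that you present the argument for DMTS while the paper uses \NAA, which is immaterial by Theorem~\ref{th:trans-moddist}.
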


\begin{proof}
  We show the proposition for \NAA.  First, if $\mcalA_1\mr \mcalA_2$, with
  $\mcalA_1=( S_1, S_1^0, \Tran_1)$, $\mcalA_2=( S_2, S_2^0, \Tran_2)$, then
  there is an initialized refinement relation $R\subseteq S_1\times
  S_2$, \ie~such that for all $( s_1, s_2)\in R$ and all $M_1\in
  \Tran_1( s_1)$, there is $M_2\in \Tran_2( s_2)$ for which
  \begin{itemize}
  \item $\forall( a_1, t_1)\in M_1: \exists( a_2, t_2)\in M_2:
    a_1\labpre a_2,( t_1, t_2)\in R$ and
  \item $\forall( a_2, t_2)\in M_2: \exists( a_1, t_1)\in M_1:
    a_1\labpre a_2,( t_1, t_2)\in R$.
  \end{itemize}
  Defining $R'=\{ R'_\alpha\mid \alpha\in \LL\}$ by $R'_\alpha= R$ for
  all $\alpha\in \LL$, we see that $R'$ is an initialized refinement
  family which witnesses $\mdl( \mcalA_1, \mcalA_2)= \bot_\LL$.

  We have shown that $\mcalA_1\mr \mcalA_2$ implies $\mdl( \mcalA_1, \mcalA_2)=
  \bot_\LL$.  Now if $\mcalA_1\DMTStr \mcalA_2$ instead, then for all
  $\mcalI\in \sem{ \mcalA_1}$, also $\mcalI\in \sem{ \mcalA_2}$, hence $\thdl( \mcalA_1,
  \mcalA_2)= \bot_\LL$.

  To show the last property, assume $F$ to be recursively separating.
  Define $R\subseteq S_1\times S_2$ by $R=\{( s_1, s_2)\mid \mdl( s_1,
  s_2)= \bot_\LL\}$; we show that $R$ is a witness for $\mcalA_1\mr
  \mcalA_2$.  By $\mdl( \mcalA_1, \mcalA_2)= \bot_\LL$, $R$ is initialized.

  Let $( s_1, s_2)\in R$ and $M_1\in \Tran_1( s_1)$, then there is
  $M_2\in \Tran_2( s_2)$ such that
  \begin{align*}
    & \forall( a_1, t_1)\in M_1: \exists( a_2, t_2)\in M_2,
    \beta_1\in \LL: \\
    &\hspace*{10em} \mdl( t_1, t_2)\sqsubseteq_\LL \beta_1, F( a_1, a_2,
    \beta_1)= \bot_\LL\,, \\
    & \forall( a_2, t_2)\in M_2: \exists( a_1, t_1)\in M_1,
    \beta_1\in \LL: \\
    &\hspace*{10em} \mdl( t_1, t_2)\sqsubseteq_\LL \beta_1, F( a_1, a_2,
    \beta_1)= \bot_\LL\,.
  \end{align*}
  As $F$ is recursively separating, we must have $a_1\labpre a_2$ in
  both these equations and $\beta_1= \beta_2= \bot_\LL$.  But then $(
  t_1, t_2)\in R$, hence $R$ is indeed a witness for $\mcalA_1\mr
  \mcalA_2$. \qed
\end{proof}

\begin{proposition}
  \label{pr:dist-prop-hemi}
  The functions $\mdl$ and $\thdl$ are $\LL$-hemi\-metrics, and $\md$,
  $\thd$ are hemimetrics.
\end{proposition}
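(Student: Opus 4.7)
The plan is to verify the four defining properties separately. The base equalities $\mdl(\mcalS,\mcalS)=\bot_\LL$ and $\thdl(\mcalS,\mcalS)=\bot_\LL$ are immediate from reflexivity of $\mr$ and $\DMTStr$ together with Proposition~\ref{pr:dist-prop}, so the real work lies in the triangle inequalities. Once these are established for $\mdl$ and $\thdl$, the corresponding properties for $\md=\eval\circ\mdl$ and $\thd=\eval\circ\thdl$ follow because $\eval$ is a quantale morphism and so preserves $\bot_\LL\mapsto 0$ and $\oplus_\LL\mapsto +$.

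For the triangle inequality of $\mdl$, I would work with \NAA\ (the DMTS and $\nu$-calculus cases then transfer via Theorem~\ref{th:trans-moddist}) and use the refinement-family characterisation of Definition~\ref{de:reffam}. Given \NAA\ $\mcalA_1,\mcalA_2,\mcalA_3$, let $R=\{R_\alpha\}$ witness $\mdl(\mcalA_1,\mcalA_2)$ and $R'=\{R'_\beta\}$ witness $\mdl(\mcalA_2,\mcalA_3)$. I would define a composite family $R''=\{R''_\gamma\mid\gamma\in\LL\}$ by
\begin{equation*}
  R''_\gamma=\big\{(s_1,s_3)\bigmid\exists s_2,\alpha,\beta:(s_1,s_2)\in R_\alpha,\,(s_2,s_3)\in R'_\beta,\,\alpha\oplus_\LL\beta\sqsubseteq_\LL\gamma\big\}\,.
\end{equation*}
The key point is then to verify that $R''$ is a refinement family: given $(s_1,s_3)\in R''_\gamma$ with $\gamma\ne\top_\LL$ and $M_1\in\Tran_1(s_1)$, one chains the two families to produce $M_2\in\Tran_2(s_2)$ and then $M_3\in\Tran_3(s_3)$, and for every $(a_1,t_1)\in M_1$ one obtains matching $(a_2,t_2)\in M_2$ and $(a_3,t_3)\in M_3$ with target pairs in $R_{\alpha'}$ and $R'_{\beta'}$ satisfying $F(a_1,a_2,\alpha')\sqsubseteq_\LL\alpha$ and $F(a_2,a_3,\beta')\sqsubseteq_\LL\beta$. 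The extended triangle inequality for $F$ then yields
\begin{equation*}
  F(a_1,a_3,\alpha'\oplus_\LL\beta')\sqsubseteq_\LL F(a_1,a_2,\alpha')\oplus_\LL F(a_2,a_3,\beta')\sqsubseteq_\LL\alpha\oplus_\LL\beta\sqsubseteq_\LL\gamma\,,
\end{equation*}
so $(t_1,t_3)\in R''_{\alpha'\oplus_\LL\beta'}$ is the required witness; the symmetric condition is handled the same way. Initialisation of $R''$ at $\gamma=\mdl(\mcalA_1,\mcalA_2)\oplus_\LL\mdl(\mcalA_2,\mcalA_3)$ follows from initialisation of $R$ and $R'$, and this gives $\mdl(\mcalA_1,\mcalA_3)\sqsubseteq_\LL\mdl(\mcalA_1,\mcalA_2)\oplus_\LL\mdl(\mcalA_2,\mcalA_3)$.

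For the triangle inequality of $\thdl$, I would reduce to the previous case via a Hausdorff-style argument analogous to the one used in Chapter~\ref{ch:weightedmodal} (modelled on~\cite[Lemma~3.72]{book/AliprantisB07}). Concretely, for any $\mcalI_1\in\sem{\mcalS_1}$ and any $\mcalI_2\in\sem{\mcalS_2}$, the just-proven triangle inequality for $\mdl$ gives
\begin{equation*}
  \inf_{\mcalI_3\in\sem{\mcalS_3}}\mdl(\mcalI_1,\mcalI_3)\sqsubseteq_\LL\mdl(\mcalI_1,\mcalI_2)\oplus_\LL\inf_{\mcalI_3\in\sem{\mcalS_3}}\mdl(\mcalI_2,\mcalI_3)\sqsubseteq_\LL\mdl(\mcalI_1,\mcalI_2)\oplus_\LL\thdl(\mcalS_2,\mcalS_3)\,,
\end{equation*}
using monotonicity of $\oplus_\LL$ (and distributivity over suprema, which also controls infima on one side via the quantale structure). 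Taking the infimum over $\mcalI_2$ and then the supremum over $\mcalI_1$ yields the desired bound. The main obstacle I anticipate is bookkeeping with the quantale structure in the Hausdorff step, since $\oplus_\LL$ is only guaranteed to distribute over suprema, so the argument has to be phrased carefully to avoid swapping $\oplus_\LL$ with infima in an unjustified way; fortunately the monotonicity of $\oplus_\LL$ in both coordinates is enough for the inequality we need.
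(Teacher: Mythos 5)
Your proposal is correct and follows essentially the same route as the paper: the reflexivity axiom via Proposition~\ref{pr:dist-prop}, the triangle inequality for $\mdl$ by composing two refinement families through the intermediate \NAA\ and invoking the extended triangle inequality for $F$, the Hausdorff-style argument for $\thdl$, and the passage to $\md$, $\thd$ via $\eval$. The only cosmetic difference is that you define the composed family with $\alpha\oplus_\LL\beta\sqsubseteq_\LL\gamma$ where the paper uses equality, which changes nothing essential.
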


\begin{proof}
  We show the proof for \NAA.  The properties that $\mdl( \mcalA, \mcalA)=
  \bot_\LL$ and $\thdl( \mcalA, \mcalA)= \bot_\LL$ follow from
  Proposition~\ref{pr:dist-prop}.

  We show the triangle inequality for $\mdl$.  The triangle inequality
  for $\thdl$ will then follow from standard arguments used to show
  that the Hausdorff metric satisfies the triangle inequality, see for
  example \cite[Lemma 3.72]{book/AliprantisB07}.  Let
  $\mcalA_1=( S_1, S^0_1, \Tran_1)$,
  $\mcalA_2=( S_2, S^0_2, \Tran_2)$, $\mcalA_3=( S_3, S^0_3, \Tran_3)$
  be \NAA and
  $R^1=\{ R^1_\alpha\subseteq S_1\times S_2\mid \alpha\in \LL\}$,
  $R^2=\{ R^2_\alpha\subseteq S_2\times S_3\mid \alpha\in \LL\}$
  refinement families such that
  $\forall s_1^0\in S_1^0: \exists s_2^0\in S_2^0:( s_1^0, s_2^0)\in
  R^1_{ \mdl( \mcalA_1, \mcalA_2)}$ and
  $\forall s_2^0\in S_2^0: \exists s_3^0\in S_3^0:( s_2^0, s_3^0)\in
  R^2_{ \mdl( \mcalA_2, \mcalA_3)}$.

  Define $R= \{ R_\alpha\subseteq S_1\times S_3\mid \alpha\in \LL\}$ by
  \begin{multline*}
    R_\alpha= \big\{( s_1, s_3)\bigmid \exists \alpha_1, \alpha_2\in
    \LL, s_2\in S_2: \\
    ( s_1, s_2)\in R^1_{ \alpha_1},( s_2, s_3)\in R^2_{ \alpha_2},
    \alpha_1\oplus_\LL \alpha_2= \alpha\big\}\,.
  \end{multline*}
  Then for all $s_1^0\in S_1^0$ there is $s_3^0\in S_3^0$ such that
  $( s_1^0, s_3^0)\in R_{ \mdl( \mcalA_1, \mcalA_2)\oplus_\LL \mdl(
    \mcalA_2, \mcalA_3)}$; we show that $R$ is a refinement family
  from $\mcalA_1$ to $\mcalA_2$.

  Let $\alpha\in \LL$ and $( s_1, s_3)\in R_\alpha$, then we have
  $\alpha_1, \alpha_2\in \LL$ and $s_2\in S_2$ such that
  $\alpha_1\oplus_\LL \alpha_2= \alpha$, $( s_1, s_2)\in R^1_{
    \alpha_1}$ and $( s_2, s_3)\in R^2_{ \alpha_2}$.  Let $M_1\in
  \Tran_1( s_1)$, then we have $M_2\in \Tran_2( s_2)$ such that
  \begin{align}
    \notag & \forall( a_1, t_1)\in M_1: \exists( a_2, t_2)\in M_2,
    \beta_1\in \LL: \\
    \label{eq:tri.1-2}
    &\hspace*{10em} ( t_1, t_2)\in R^1_{ \beta_1}, F( a_1, a_2,
    \beta_1)\sqsubseteq_\LL \alpha_1\,, \\
    \notag & \forall( a_2, t_2)\in M_2: \exists( a_1, t_1)\in M_1,
    \beta_1\in \LL: \\
    \label{eq:tri.2-1}
    &\hspace*{10em} ( t_1, t_2)\in R^1_{ \beta_1}, F( a_1, a_2,
    \beta_1)\sqsubseteq_\LL \alpha_1\,.
  \end{align}
  This in turn implies that there is $M_3\in \Tran_3( s_3)$ with
  \begin{align}
    \notag & \forall( a_2, t_2)\in M_2: \exists( a_3, t_3)\in M_3,
    \beta_2\in \LL: \\
    \label{eq:tri.2-3}
    &\hspace*{10em} ( t_2, t_3)\in R^2_{ \beta_2}, F( a_2, a_3,
    \beta_2)\sqsubseteq_\LL \alpha_2\,, \\
    \notag & \forall( a_3, t_3)\in M_3: \exists( a_2, t_2)\in M_2,
    \beta_2\in \LL: \\
    \label{eq:tri.3-2}
    &\hspace*{10em} ( t_2, t_3)\in R^2_{ \beta_2}, F( a_2, a_3,
    \beta_2)\sqsubseteq_\LL \alpha_2\,.
  \end{align}

  Now let $( a_1, t_1)\in M_1$, then we get $( a_2, t_2)\in M_2$, $(
  a_3, t_3)\in M_3$ and $\beta_1, \beta_2\in \LL$ as
  in~\eqref{eq:tri.1-2} and~\eqref{eq:tri.2-3}.  Let $\beta=
  \beta_1\oplus_\LL \beta_2$, then $( t_1, t_3)\in R_\beta$, and by the
  extended triangle inequality for $F$, $F( a_1, a_3,
  \beta)\sqsubseteq_\LL F( a_1, a_2, \beta_1)\oplus_\LL F( a_2, a_3,
  \beta_2)\sqsubseteq_\LL \alpha_1\oplus_\LL \alpha_2= \alpha$.

  Similarly, given $( a_3, t_3)\in M_3$, we can apply~\eqref{eq:tri.3-2}
  and~\eqref{eq:tri.2-1} to get $( a_1, t_1)\in M_1$ and $\beta\in \LL$
  such that $( t_1, t_3)\in R_\beta$ and $F( a_1, a_3,
  \beta)\sqsubseteq_\LL \alpha$.

  We have shown that $\mdl$ and $\tdl$ are $\LL$-hemimetrics.  Using
  monotonicity of the $\eval$ function, it follows that $\md$ and $\td$
  are hemimetrics.  \qed
\end{proof}

\begin{proposition}
  \label{pr:dist-prop.disc}
  For the \emph{discrete} distances,
  $\md^\textup{disc}( \mcalS_1, \mcalS_2)= 0$ if
  $\mcalS_1\mr \mcalS_2$ and $\infty$ otherwise.  Similarly,
  $\thd^\textup{disc}( \mcalS_1, \mcalS_2)= 0$ if
  $\mcalS_1\DMTStr \mcalS_2$ and $\infty$ otherwise.
\end{proposition}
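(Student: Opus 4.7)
The plan is to reduce both statements to Proposition~\ref{pr:dist-prop}, combined with the observation that for the discrete distance specification only the values $0$ and $\infty$ are reachable in the lattice $\LL=\Realnn\cup\{\infty\}$. First I would verify two properties of the discrete iterator $F^{\textup{disc}}(a,b,\alpha)$, which equals $\alpha$ when $a\labpre b$ and $\infty$ otherwise. One is that $F^{\textup{disc}}$ satisfies the standing hypotheses required of a distance iterator (monotonicity in the third argument, anti-monotonicity in the second, compatibility with $\labpre$, and the extended triangle inequality); all follow by a trivial case analysis. The other is that $F^{\textup{disc}}$ is recursively separating: $F^{\textup{disc}}(a,b,\alpha)=0$ forces $a\labpre b$ and $\alpha=0$, which is immediate from the definition.

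Next I would argue by induction on the Kleene approximations of the least-fixed-point definition of $\mdl^{\textup{disc}}$ that this function takes values only in $\{0,\infty\}$. The inductive step uses that $F^{\textup{disc}}$ sends $\{0,\infty\}$-valued inputs into $\{0,\infty\}$ and that $\sup$, $\inf$, and $\max$ preserve $\{0,\infty\}$. Combining this dichotomy with Proposition~\ref{pr:dist-prop} then gives the first claim: $\mdl^{\textup{disc}}(\mcalS_1,\mcalS_2)=0$ iff $\mcalS_1\mr\mcalS_2$, and $\mdl^{\textup{disc}}(\mcalS_1,\mcalS_2)=\infty$ otherwise. Applying $\eval=\id$ yields the statement for $\md^{\textup{disc}}$. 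Since the argument is formalism-agnostic, it covers DMTS, \NAA, and the modal $\nu$-calculus uniformly; alternatively, one could prove it for \NAA and transport it via Theorem~\ref{th:trans-moddist}.

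For the thorough-distance statement, one direction is immediate from Proposition~\ref{pr:dist-prop}: $\mcalS_1\DMTStr\mcalS_2$ yields $\thdl^{\textup{disc}}(\mcalS_1,\mcalS_2)=0$. For the converse, if $\mcalS_1\not\DMTStr\mcalS_2$, I would pick $\mcalI_1\in\sem{\mcalS_1}\setminus\sem{\mcalS_2}$ and observe that $\md^{\textup{disc}}(\mcalI_1,\mcalI_2)=\infty$ for every $\mcalI_2\in\sem{\mcalS_2}$: otherwise the first half of the proposition would give $\mcalI_1\mr\mcalI_2$, and transitivity of modal refinement together with $\mcalI_2\mr\mcalS_2$ would place $\mcalI_1\in\sem{\mcalS_2}$, a contradiction. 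Thus the inner $\inf$ at $\mcalI_1$ equals $\infty$, and since $\sem{\mcalS_1}\ne\emptyset$ the outer $\sup$ is $\infty$ as well; the corner case $\sem{\mcalS_2}=\emptyset$ is even easier since $\inf\emptyset=\infty$, and then $\mcalS_1\DMTStr\mcalS_2$ forces $\sem{\mcalS_1}=\emptyset$ and hence $\thdl^{\textup{disc}}=\sup\emptyset=0$. The main obstacle is simply being precise about the Kleene-iteration argument underpinning the $\{0,\infty\}$-dichotomy and about the empty-semantics corner cases; both are routine once spelled out.
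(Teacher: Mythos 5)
Your proposal is correct and follows essentially the same route as the paper: establish that the discrete refinement distance only takes the values $0$ and $\infty$, characterize the value $0$ as modal refinement, and then compute the thorough distance as a $\sup\inf$ of implementation distances using transitivity of $\mr$. The only difference is cosmetic: you obtain the direction ``$\md^{\textup{disc}}=0$ implies $\mcalS_1\mr\mcalS_2$'' by citing Proposition~\ref{pr:dist-prop} together with recursive separation of the discrete iterator (which the paper's Example~\ref{ex:distances} guarantees), whereas the paper re-derives it directly from a refinement family at level~$0$; both are valid and your shortcut is arguably cleaner.
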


\begin{proof}
  We show the proposition for \NAA.  We already know that, also for the
  discrete distances, $\mcalA_1\mr \mcalA_2$ implies $\md( \mcalA_1, \mcalA_2)= 0$
  and that $\mcalA_1\DMTStr \mcalA_2$ implies $\thd( \mcalA_1, \mcalA_2)= 0$.  We show
  that $\md( \mcalA_1, \mcalA_2)= 0$ implies $\mcalA_1\mr \mcalA_2$.  Let $R=\{
  R_\alpha\subseteq S_1\times S_2\mid \alpha\in \LL\}$ be a refinement
  family such that $\forall s_1^0\in S_1^0: \exists s_2^0\in S_2^0:(
  s_1^0, s_2^0)\in R_0$.  We show that $R_0$ is a witness for $\mcalA_1\mr
  \mcalA_2$; it is clearly initialized.

  Let $( s_1, s_2)\in R_0$ and $M_1\in \Tran_1( s_1)$, then we have
  $M_2\in \Tran_2( s_2)$ such that
  \begin{equation}
    \label{eq:disctradistproof}
    \begin{aligned}
      & \forall( a_1, t_1)\in M_1: \exists( a_2, t_2)\in M_2, \beta\in
      \LL: ( t_1, t_2)\in R_\beta, F( a_1, a_2, \beta)= 0\,, \\
      & \forall( a_2, t_2)\in M_2: \exists( a_1, t_1)\in M_1, \beta\in
      \LL: ( t_1, t_2)\in R_\beta, F( a_1, a_2, \beta)= 0\,.
    \end{aligned}
  \end{equation}
  Using the definition of the distance, we see that the condition $F(
  a_1, a_2, \beta)= 0$ is equivalent to $a_1\labpre a_2$ and $\beta= 0$,
  hence~\eqref{eq:disctradistproof} degenerates to
  \begin{equation*}
    \begin{aligned}
      & \forall( a_1, t_1)\in M_1: \exists( a_2, t_2)\in M_2:( t_1,
      t_2)\in R_0, a_1\labpre a_2\,, \\
      & \forall( a_2, t_2)\in M_2: \exists( a_1, t_1)\in M_1:( t_1,
      t_2)\in R_0, a_1\labpre a_2\,,
    \end{aligned}
  \end{equation*}
  which are exactly the conditions for $R_0$ to be a modal refinement.

  Again by definition, we see that for any \NAA $\mcalA_1$, $\mcalA_2$, either
  $\md( \mcalA_1, \mcalA_2)= 0$ or $\md( \mcalA_1, \mcalA_2)= \infty$, hence
  $\mcalA_1\not\mr \mcalA_2$ implies that $\md( \mcalA_1, \mcalA_2)= \infty$.

  To show the last part of the proposition, we notice that
  \begin{align*}
    \thd( \mcalA_1, \mcalA_2) &= \adjustlimits \sup_{ \mcalI_1\in \sem{ \mcalA_1}}
    \inf_{ \mcalI_2\in \sem{ \mcalA_2}} \md( \mcalI_1, \mcalI_2) \\
    &=
    \begin{cases}
      0 &\text{if } \forall \mcalI_1\in \sem{ \mcalA_1}: \exists \mcalI_2\in
      \sem{ \mcalA_2}: \mcalI_1\mr \mcalI_2\,, \\
      \infty &\text{otherwise}\,,
    \end{cases} \\
    &=
    \begin{cases}
      0 &\text{if } \sem{ \mcalA_1}\subseteq \sem{ \mcalA_2}\,, \\
      \infty &\text{otherwise}\,.
    \end{cases}
  \end{align*}
  Hence $\thd( \mcalA_1, \mcalA_2)= 0$ if $\mcalA_1\DMTStr \mcalA_2$ and $\thd( \mcalA_1,
  \mcalA_2)= \infty$ otherwise.  \qed
\end{proof}

As a quantitative analogy to the implication from (Boolean) modal
refinement to thorough refinement (see Proposition~\ref{prop:mrtr}), the
next theorem shows that thorough refinement distance is bounded above by
modal refinement distance.  Note that for the discrete trace distance
(and using Proposition~\ref{pr:dist-prop.disc}), this is equivalent to
the Boolean statement.

\begin{theorem}
  \label{th:mdl-vs-tdl}
  For all specifications $\mcalS_1$, $\mcalS_2$, $\thdl( \mcalS_1,
  \mcalS_2)\sqsubseteq_\LL \mdl( \mcalS_1, \mcalS_2)$.
\end{theorem}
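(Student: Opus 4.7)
\bigskip
\noindent\emph{Proof plan.} My plan is to mimic, in the quantitative setting, the standard proof strategy of Proposition~\ref{soco.pr:mrvstr} (and its predecessor Theorem~\ref{weightedmodal.th:dtledm} of Chapter~\ref{ch:weightedmodal}): given an implementation of $\mcalS_1$, I will manufacture an implementation of $\mcalS_2$ whose distance to $\mcalI_1$ is bounded above by $\mdl(\mcalS_1,\mcalS_2)$.

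First I would unfold the definition of $\thdl$ and observe that it suffices to prove, for every $\mcalI_1\in\sem{\mcalS_1}$, the existence of $\mcalI_2\in\sem{\mcalS_2}$ with $\mdl(\mcalI_1,\mcalI_2)\sqsubseteq_\LL\mdl(\mcalS_1,\mcalS_2)$. By Theorem~\ref{th:trans-moddist} I may perform the construction in whichever of the three formalisms is most convenient; \NAA is most comfortable here because of the compact form of Definition~\ref{de:reffam}. Let $\tilde R\subseteq I\times S_1$ be a Boolean refinement witnessing $\mcalI_1\mr\mcalS_1$, viewed as the refinement family with $\tilde R_\alpha=\tilde R$ for every $\alpha$ (cf.\ the first paragraph of the proof of Proposition~\ref{pr:dist-prop}), and let $R=\{R_\alpha\}$ witness $\mdl(\mcalS_1,\mcalS_2)$. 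Define the composed family $R'_\alpha=\tilde R\circ R_\alpha\subseteq I\times S_2$.

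Next I would build $\mcalI_2$ on the state set $S_2$ with $i^0_2=s^0_2$ chosen so that $(i^0,s^0_2)\in R'_{\mdl(\mcalS_1,\mcalS_2)}$. The transitions are defined inductively: for every reachable pair $(i,s_2)\in R'_\alpha$ and every $i\DMTSmust{m}_{\mcalI_1}i'$, the singleton $\{m\}\in\Tran_1(i)$ lifted along $\tilde R$ gives an $M_1\in\Tran_1(s_1)$ for some $s_1$ with $(i,s_1)\in\tilde R$ and $(a_1,t_1)\in M_1$ with $a_1=m$; then, since $(s_1,s_2)\in R_\alpha\in R$, there is $M_2\in\Tran_2(s_2)$ and $(a_2,t_2)\in M_2$ together with $\beta\in\LL$ satisfying $(t_1,t_2)\in R_\beta$ and $F(m,a_2,\beta)\sqsubseteq_\LL\alpha$. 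I now pick an implementation label $n\in\llbracket a_2\rrbracket$ and install the transition $s_2\DMTSmust{n}_{\mcalI_2}t_2$; this new pair $(i',t_2)$ sits in $R'_\beta$, so the construction continues. A dual clause is needed to make $\mcalI_2$ satisfy the \must-requirements of $\mcalS_2$: for each $s_2\DMTSmust{}_2 N_2$ arising in the \NAA picture, I similarly use $R$ and $\tilde R$ to find a matching $i\DMTSmust{m}_{\mcalI_1}i'$ and instantiate $n\in\llbracket a_2\rrbracket$. Finally I check that $R'$, restricted to reachable states, is an \NAA refinement family showing $\mdl(\mcalI_1,\mcalI_2)\sqsubseteq_\LL\mdl(\mcalS_1,\mcalS_2)$ and that $\mcalI_2$ genuinely satisfies $\mcalI_2\mr\mcalS_2$ (because every introduced label lies under a \may-label of $\mcalS_2$ and every \must-obligation is discharged).

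The delicate point, and the main obstacle, is the choice of the implementation label $n\in\llbracket a_2\rrbracket$ replacing the specification label $a_2$: I need $F(m,n,\beta)\sqsubseteq_\LL\alpha$, not merely $F(m,a_2,\beta)\sqsubseteq_\LL\alpha$. The inequality $d(m,a_2)=\inf_{n\in\llbracket a_2\rrbracket}d(m,n)$ from Section~\ref{se:structlabels}, together with monotonicity of $F(m,\cdot,\beta)$ in its second argument (which is anti-monotone, so inf on the right becomes what we want) gives this bound once the infimum is attained; this attainment is exactly the closure property assumed explicitly in Chapter~\ref{ch:weightedmodal}. If that is granted (as we implicitly do here via finiteness of our specifications), the argument goes through as described. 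Otherwise one must pass to an $\epsilon$-approximation: for every $\gamma\sqsupset_\LL\mdl(\mcalS_1,\mcalS_2)$ construct an $\mcalI_2^\gamma$ with $\mdl(\mcalI_1,\mcalI_2^\gamma)\sqsubseteq_\LL\gamma$, and then take the infimum over $\gamma$, exploiting the completeness of $\LL$ to conclude $\inf_{\mcalI_2\in\sem{\mcalS_2}}\mdl(\mcalI_1,\mcalI_2)\sqsubseteq_\LL\mdl(\mcalS_1,\mcalS_2)$. Taking finally the supremum over $\mcalI_1\in\sem{\mcalS_1}$ yields the claimed inequality.
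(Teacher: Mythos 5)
Your proposal follows essentially the same route as the paper's proof: compose the Boolean witness $\tilde R$ for $\mcalI_1\mr\mcalS_1$ with the family $R_\alpha$ witnessing $\mdl(\mcalS_1,\mcalS_2)$, build $\mcalI_2$ on the state set $S_2$ by reading off matching transition constraints, and check that the composed family bounds $\mdl(\mcalI_1,\mcalI_2)$. If anything you are more explicit than the paper on the one delicate point: the paper's construction labels the transitions of $\mcalI_2$ directly with the specification labels $a_2$, whereas you descend to an implementation label $n\in\llbracket a_2\rrbracket$ and note that this needs either the closure/attainment assumption (made explicit in Chapter~\ref{ch:wm2}) or an $\epsilon$-approximation followed by an infimum.
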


\begin{proof}
  We prove the statement for \NAA; for DMTS and $\nu$-calculus
  expressions it then follows from Theorem~\ref{th:trans-moddist}.

  Let $\mcalA_1=( S_1, S^0_1, \Tran_1)$, $\mcalA_2=( S_2, S^0_2, \Tran_2)$.
  We have a refinement family $R=\{ R_\alpha\subseteq S_1\times S_2\mid
  \alpha\in \LL\}$ such that for all $s^0_1\in S^0_1$, there is
  $s^0_2\in S^0_2$ with $( s^0_1, s^0_2)\in R_{ \mdl( \mcalA_1, \mcalA_2)}$.
  Let $\mcalI=( S, S^0, T)\in \sem{ \mcalA_1}$, \ie~$\mcalI\mr \mcalA_1$.

  Let $R^1\subseteq S\times S_1$ be an initialized modal refinement,
  and define a relation family $R^2=\{ R^2_\alpha\subseteq S\times
  S_2\mid \alpha\in \LL\}$ by $R^2_\alpha= R^1\circ R_\alpha=\{( s,
  s_2)\mid \exists s_1\in S:( s, s_1)\in R^1,( s_1, s_2)\in
  R_\alpha$.  We define a LTS $\mcalI_2=( S_2, S^0_2, T_2)$ as follows:

  For all $\alpha\in \LL$ with $\alpha\ne \top_\LL$ and $( s, s_2)\in
  R^2_\alpha$: We must have $s_1\in S_1$ with $( s, s_1)\in R^1$ and $(
  s_1, s_2)\in R_\alpha$.  Then there is $M_1\in \Tran_1( s_1)$ such that
  \begin{itemize}
  \item for all $s\DMTSmust{ a} t$, there is $( a, t_1)\in M_1$ with $( t,
    t_1)\in R_1$,
  \item for all $( a_1, t_1)\in M_1$, there is $s\DMTSmust{ a} t$ with $( t,
    t_1)\in R_1$.
  \end{itemize}
  This in turn implies that there is $M_2\in \Tran_2( s_2)$ satisfying
  the conditions in Definition~\ref{de:reffam}.  For all $( a_2, t_2)\in
  M_2$: add a transition $s_2\DMTSmust{ a_2} t_2$ to $T_2$.

  We show that the identity relation $\{( s_2, s_2)\mid s_2\in S_2\}$ is
  a witness for $\mcalI_2\mr \mcalA_2$.  Let $s_2\in S_2$ and $s_2\DMTSmust{ a_2}
  t_2$.  By construction, there is an $M_2\in \Tran_2( s_2)$ with $(
  a_2, t_2)\in M_2$, and for all $( a_2', t_2')\in M_2$, $s_2\DMTSmust{
    a_2'} t_2'$.

  We show that $R^2$ is a witness for $\mdl( \mcalI, \mcalI_2)$; clearly,
  $R^2$ is initialized.  Let $\alpha\in \LL$ with $\alpha\ne \top_\LL$
  and $( s, s_2)\in R^2_\alpha$, then there is $s_1\in S_1$ with $( s,
  s_1)\in R^1$ and $( s_1, s_2)\in R_\alpha$.  We also have $M_1\in
  \Tran_1( s_1)$ such that
  \begin{itemize}
  \item for all $s\DMTSmust{ a} t$, there is $( a, t_1)\in M_1$ with $( t,
    t_1)\in R^1$,
  \item for all $( a, t_1)\in M_1$, there is $s\DMTSmust{ a} t$ with $( t,
    t_1)\in R^1$
  \end{itemize}
  and thus $M_2\in \Tran_2( s_2)$ satisfying the conditions in
  Definition~\ref{de:reffam}.

  Let $s\DMTSmust{ a} t$, then there is $( a, t_1)\in M_1$ with $( t,
  t_1)\in R^1$, hence also $( a_2, t_2)\in M_2$ and $\beta\in \LL$ with
  $( t_1, t_2)\in R_\beta$ and $F( a, a_2, \beta)\sqsubseteq_\LL
  \alpha$.  But then $( t, t_2)\in R^2_\beta$, and $s_2\DMTSmust{ a_2} t_2$
  by construction.

  Let $s_2\DMTSmust{ a_2} t_2$.  By construction, there is an $M_2\in
  \Tran_2( s_2)$ with $( a_2, t_2)\in M_2$.  This implies that there is
  $M_1\in \Tran_1( s_1)$, $\beta\in \LL$ and $( a_1, t_1)\in M_1$ with
  $( t_1, t_2)\in R_\beta$ and $F( a_1, a_2, \beta)\sqsubseteq \alpha$.
  But then there is also $s\DMTSmust{ a_1} t$ with $( t, t_1)\in R^1$, hence
  $( t, t_2)\in R^2_\beta$. \qed
\end{proof}

\subsection{Disjunction and conjunction}

In order to generalize the properties of Theorem~\ref{soco.th:condis} to our
quantitative setting, we introduce a notion of relaxed implementation
semantics:

\begin{definition}
  The \emph{$\alpha$-relaxed implementation semantics} of $\mcalS$, for a
  specification $\mcalS$ and $\alpha\in \LL$, is
  \begin{equation*}
    \sem{ \mcalS}^\alpha=\{
    \mcalI\text{ implementation}\mid \mdl( \mcalI, \mcalS)\sqsubseteq
    \alpha\}\,.
  \end{equation*}
\end{definition}

Hence, $\sem \mcalS^\alpha$ comprises all labeled transition systems which
are implementations of $\mcalS$ \emph{up to $\alpha$}.  Note that by
Proposition~\ref{pr:dist-prop} and for $F$ recursively separating,
$\sem{ \mcalS}^{ \bot_\LL}= \sem \mcalS$.

\begin{theorem}
  \label{th:condis-q}
  For all specifications $\mcalS_1$, $\mcalS_2$, $\mcalS_3$ and $\alpha\in \LL$,
  \begin{itemize}
  \item $\mdl( \mcalS_1\lor \mcalS_2, \mcalS_3)= \max( \mdl( \mcalS_1, \mcalS_3), \mdl(
    \mcalS_2, \mcalS_3))$,
  \item $\mdl( \mcalS_1, \mcalS_2\land \mcalS_3)\sqsupseteq_\LL \max( \mdl(
    \mcalS_1, \mcalS_2), \mdl( \mcalS_1, \mcalS_3))$,
  \item $\sem{ \mcalS_1\lor \mcalS_2}^\alpha= \sem{ \mcalS_1}^\alpha\cup \sem{
      \mcalS_2}^\alpha$, and
  \item $\sem{ \mcalS_1\land \mcalS_2}^\alpha\subseteq \sem{ \mcalS_1}^\alpha\cap
    \sem{ \mcalS_2}^\alpha$.
  \end{itemize}
\end{theorem}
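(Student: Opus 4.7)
The plan is to treat the four claims roughly in order (1), (3), (2), (4), since (4) follows from (2), and parts (1) and (3) are essentially computational while (2) carries the main technical content.

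For (1), I would work with DMTS (using Theorem~\ref{th:trans-moddist} to pass to the other formalisms). Writing $\mcalS_1\lor \mcalS_2$ with initial states $S_1^0\cup S_2^0$ (disjoint union) and unfolding the definition of $\mdl$ between specifications,
\begin{equation*}
  \mdl(\mcalS_1\lor\mcalS_2,\mcalS_3)
  =\adjustlimits\sup_{s^0\in S_1^0\cup S_2^0}\inf_{s_3^0\in S_3^0}\mdl(s^0,s_3^0).
\end{equation*}
Because $S_1^0$ and $S_2^0$ are disjoint and the state-level fixed point $\mdl(s^0,s_3^0)$ for $s^0\in S_i^0$ only depends on the part of $\mcalS_1\lor\mcalS_2$ reachable from $s^0$ (which is just $\mcalS_i$), the right-hand side splits as the $\max$ of the two corresponding sup-inf expressions, giving the claim. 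I will also need to remark that the state-level refinement distances agree in $\mcalS_i$ and $\mcalS_1\lor\mcalS_2$; this is immediate since the transition structure of $\mcalS_i$ is preserved verbatim in $\mcalS_1\lor\mcalS_2$.

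For (3), the key observation is that an implementation $\mcalI$ has a single initial state $i^0$, so the outer sup collapses and
\begin{equation*}
  \mdl(\mcalI,\mcalS_1\lor\mcalS_2)=\inf_{s^0\in S_1^0\cup S_2^0}\mdl(i^0,s^0)=\min\bigl(\mdl(\mcalI,\mcalS_1),\mdl(\mcalI,\mcalS_2)\bigr).
\end{equation*}
Since we work in finite specifications, the infimum over the finite set $S_1^0\cup S_2^0$ is attained at some $s_\ast^0$; if $s_\ast^0\in S_i^0$ then $\mdl(\mcalI,\mcalS_i)\sqsubseteq_\LL\mdl(i^0,s_\ast^0)=\mdl(\mcalI,\mcalS_1\lor\mcalS_2)\sqsubseteq_\LL\alpha$. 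The converse (if one of the component distances is $\sqsubseteq_\LL\alpha$ then so is the min) is immediate from $\inf(a,b)\sqsubseteq_\LL a,b$. Combining gives $\sem{\mcalS_1\lor\mcalS_2}^\alpha=\sem{\mcalS_1}^\alpha\cup\sem{\mcalS_2}^\alpha$.

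For (2), which I expect to be the main obstacle, I would adapt the second half of the proof of Theorem~\ref{soco.th:condis} to refinement families. Given a refinement family $\mcalR=\{R_\alpha\subseteq S_1\times(S_2\times S_3)\mid\alpha\in\LL\}$ witnessing $\mdl(\mcalS_1,\mcalS_2\land\mcalS_3)$, define families $\mcalR_i=\{R_\alpha^i\}$ from $\mcalS_1$ to $\mcalS_i$ (for $i=2,3$) by projection:
\begin{equation*}
  R_\alpha^2=\{(s_1,s_2)\mid\exists s_3\in S_3:(s_1,(s_2,s_3))\in R_\alpha\},
\end{equation*}
and symmetrically for $R_\alpha^3$. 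I would then verify that $\mcalR_2$ is a refinement family from $\mcalS_1$ to $\mcalS_2$ and analogously for $\mcalR_3$. The may-transition condition is straightforward: a transition $s_1\DMTSmay{a_1}_1 t_1$ must be matched in $\mcalS_2\land\mcalS_3$ by $(s_2,s_3)\DMTSmay{a_2\oland a_3}(t_2,t_3)$; by construction of $\land$, $s_2\DMTSmay{a_2}_2 t_2$, and since $a_2\oland a_3\labpre a_2$, anti-monotonicity of $F$ in the second coordinate yields $F(a_1,a_2,\beta)\sqsubseteq_\LL F(a_1,a_2\oland a_3,\beta)\sqsubseteq_\LL\alpha$ with $(t_1,t_2)\in R_\beta^2$. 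For disjunctive must-transitions, a challenge $s_2\DMTSmust{}_2 N_2$ gives rise to $(s_2,s_3)\DMTSmust{}N=\{(a_2\oland a_3,(t_2,t_3))\mid(a_2,t_2)\in N_2,\,s_3\DMTSmay{a_3}t_3,\,a_2\oland a_3\text{ defined}\}$; the refinement family $\mcalR$ provides $s_1\DMTSmust{}_1 N_1$ matching this set, and each $(a_1,t_1)\in N_1$ then has a partner $(a_2\oland a_3,(t_2,t_3))\in N$, so $(a_2,t_2)\in N_2$ with $(t_1,t_2)\in R_\beta^2$ and $F(a_1,a_2,\beta)\sqsubseteq_\LL\alpha$ by the same anti-monotonicity argument. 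Initialisation is inherited. Thus $\mdl(\mcalS_1,\mcalS_i)\sqsubseteq_\LL\mdl(\mcalS_1,\mcalS_2\land\mcalS_3)$ for $i=2,3$, giving (2).

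Finally, (4) is immediate from (2): if $\mcalI\in\sem{\mcalS_1\land\mcalS_2}^\alpha$ then $\mdl(\mcalI,\mcalS_1\land\mcalS_2)\sqsubseteq_\LL\alpha$, so by (2), $\max(\mdl(\mcalI,\mcalS_1),\mdl(\mcalI,\mcalS_2))\sqsubseteq_\LL\alpha$, hence $\mcalI\in\sem{\mcalS_1}^\alpha\cap\sem{\mcalS_2}^\alpha$. The principal technical subtlety throughout is the need to invoke anti-monotonicity of $F$ in its second coordinate when a label $a_2$ in a component is replaced by the finer $a_2\oland a_3$ in the conjunction; this is also the reason only an inequality, and not equality, holds in (2), mirroring the observation in Chapter~\ref{ch:weightedmodal} that quantitative conjunction is fundamentally discontinuous.
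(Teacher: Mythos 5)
Your proposal is correct and follows essentially the same route as the paper's proof: part (2) via the projected relation family $R^2_\alpha=\{(s_1,s_2)\mid\exists s_3:(s_1,(s_2,s_3))\in R_\alpha\}$ together with anti-monotonicity of $F$ in its second argument applied to $a_2\oland a_3\labpre a_2$, part (1) by splitting the witnessing family over the disjoint initial-state sets, part (3) by exploiting the single initial state of an implementation, and part (4) as an immediate consequence of (2).
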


\begin{proof}
  We show the proof for DMTS.

  The proof that $\mdl( \mcalD_1\lor \mcalD_2, \mcalD_3)= \max( \mdl( \mcalD_1,
  \mcalD_3),$ $\mdl( \mcalD_2, \mcalD_3))$ is trivial: any refinement family
  witnessing $\mdl( \mcalD_1\lor \mcalD_2, \mcalD_3)$ splits into two families
  witnessing $\mdl( \mcalD_1, \mcalD_3)$ and $\mdl( \mcalD_2, \mcalD_3)$ and vice
  versa.

  To show that $\mdl( \mcalD_1, \mcalD_2\land \mcalD_3)\sqsupseteq_\LL \max(
  \mdl( \mcalD_1, \mcalD_2),$ $\mdl( \mcalD_1, \mcalD_3))$, let $R=\{
  R_\alpha\subseteq S_1\times( S_2\times S_3)\mid \alpha\in \LL\}$ be a
  witness for $\mdl( \mcalD_1, \mcalD_2\land \mcalD_3)$ and define $R^2=\{
  R^2_\alpha\subseteq S_1\times S_2\mid \alpha\in \LL\}$ by
  $R^2_\alpha=\{( s_1, s_2)\mid \exists s_3\in S_3:( s_1,( s_2, s_3))\in
  R_\alpha\}$ for all $\alpha\in \LL$.

  Let $s_1^0\in S_1^0$, then we have $( s_2^0, s_3^0)\in S_2^0\times
  S_3^0$ so that $( s_1^0,( s_2^0, s_3^0))\in R_{ \mdl( \mcalD_1,
    \mcalD_2\land \mcalD_3)}$, hence also $( s_1^0, s_2^0)\in R^2_{ \mdl(
    \mcalD_1, \mcalD_2\land \mcalD_3)}$.

  Let $\alpha\in \LL$ and $( s_1, s_2)\in R^2_\alpha$, then we have
  $s_3\in S_3$ for which $( s_1,( s_2, s_3))\in R_\alpha$.  Assume first
  that $s_1\DMTSmay{ a_1} t_1$, then there is $( s_2, s_3)\DMTSmay a( t_2, t_3)$
  and $\beta\in \LL$ such that $F( a_1, a, \beta)\sqsubseteq_\LL \alpha$
  and $( t_1,( t_2, t_3))\in R_\beta$, hence $( t_1, t_2)\in
  R^2_\beta$.  By construction of $\mcalD_2\land \mcalD_3$, there are
  $s_2\DMTSmay{ a_2} t_2$ and $s_3\DMTSmay{ a_3} t_3$ such that $a= a_2\oland
  a_3$, but then by anti-monotonicity, $F( a_1, a_2,
  \beta)\sqsubseteq_\LL F( a_1, a, \beta)\sqsubseteq \alpha$.

  Now assume $s_2\DMTSmust{} N_2$, then, by
  construction,
  $( s_2, s_3)\DMTSmust{} N=\{( a_2\oland a_3,( t_2, t_3))\mid( a_2,
  t_2)\in N_2,$ $s_3\DMTSmay{ a_3}_3 t_3\}$.  Hence we have $s_1\DMTSmust{}_1
  N_1$ such that $\forall( a_1, t_1)\in N_1: \exists( a,( t_2,
  t_3))\in N, \beta\in \LL: F( a_1, a, \beta)\sqsubseteq_\LL \alpha,(
  t_1,( t_2, t_3))\in R_\beta$.

  Let $( a_1, t_1)\in N_1$, then we have $( a,( t_2, t_3))\in N$ and
  $\beta\in \LL$ for which $F( a_1, a, \beta)\sqsubseteq_\LL \alpha$ and
  $( t_1,( t_2, t_3))\in R_\beta$, hence $( t_1, t_2)\in R^2_\beta$.  By
  construction of $N$, this implies that there are $( a_2, t_2)\in N_2$
  and $s_3\DMTSmay{ a_3}_3 t_3$ such that $a= a_2\oland a_3$, but then by
  anti-monotonicity, $F( a_1, a_2, \beta)\sqsubseteq_\LL F( a_1, a,
  \beta)\sqsubseteq \alpha$.

  We have shown that
  $\mdl( \mcalD_1, \mcalD_2\land \mcalD_3)\sqsubseteq_\LL \mdl(
  \mcalD_1, \mcalD_2)$.  The proof of
  $\mdl( \mcalD_1, \mcalD_2\land \mcalD_3)\sqsubseteq_\LL \mdl(
  \mcalD_1, \mcalD_3)$ is entirely analogous.

  The inclusion $\sem{ \mcalD_1\land \mcalD_2}^\alpha\subseteq \sem{
    \mcalD_1}^\alpha\cap \sem{ \mcalD_2}^\alpha$ is clear now: If $\mcalI\in
  \sem{ \mcalD_1\land \mcalD_2}^\alpha$, \ie~$\mdl( \mcalI, \mcalD_1\land
  \mcalD_2)\sqsubseteq_\LL \alpha$, then also $\mdl( \mcalI,
  \mcalD_1)\sqsubseteq_\LL \alpha$ and $\mdl( \mcalI, \mcalD_2)\sqsubseteq_\LL
  \alpha$, thus $\mcalI\in \sem{ \mcalD_1}^\alpha\cap \sem{ \mcalD_2}^\alpha$.

  To show that $\sem{ \mcalD_1\lor \mcalD_2}^\alpha= \sem{ \mcalD_1}^\alpha\cup
  \sem{ \mcalD_2}^\alpha$, one notices, like in the proof of
  Theorem~\ref{soco.th:condis}, that for any LTS $\mcalI$, any refinement family
  witnessing $\mdl( \mcalI, \mcalD_1)$ or $\mdl( \mcalI, \mcalD_2)$ is also a
  witness for $\mdl( \mcalI, \mcalD_1\lor \mcalD_2)$ and vice versa. \qed
\end{proof}

The below example shows why the inclusions above cannot be replaced by
equalities.  To sum up, disjunction is quantitatively sound and
complete, whereas conjunction is only quantitatively sound.

\begin{figure}
  \centering
  \begin{tikzpicture}[->, >=stealth', font=\footnotesize,
    state/.style={shape=circle, draw, initial text=,inner
      sep=.5mm,minimum size=2mm}, yscale=1, xscale=1]
    \begin{scope}
      \node at (-1,0) {$\mcalI$};
      \node[state, initial] (s) at (0,0) {};
      \node[state] (t) at (2,0) {};
      \path (s) edge node [below] {$a, 2$} (t);
    \end{scope}
    \begin{scope}[xshift=15em]
      \node at (-1,0) {$\mcalD_1$};
      \node[state, initial] (s) at (0,0) {};
      \node[state] (t) at (2,0) {};
      \path (s) edge [densely dashed] node [below] {$a,[ 0, 1]$} (t);
    \end{scope}
    \begin{scope}[yshift=-10ex]
      \node at (-1,0) {$\mcalD_2$};
      \node[state, initial] (s) at (0,0) {};
      \node[state] (t) at (2,0) {};
      \path (s) edge [densely dashed] node [below] {$a,[ 3, 4]$} (t);
    \end{scope}
    \begin{scope}[yshift=-10ex, xshift=15em]
      \node at (-1,0) {$\mcalD_1\land \mcalD_2$};
      \node[state, initial] (s) at (.5,0) {};
    \end{scope}
  \end{tikzpicture}
  \caption{%
    \label{fi:ex-conj}
    LTS $\mcalI$ together with DMTS $\mcalD_1$, $\mcalD_2$ and their conjunction.
    For the point-wise or discounting distances, $\md( \mcalI, \mcalD_1)= \md(
    \mcalI, \mcalD_2)= 1$, but $\md( \mcalI, \mcalD_1\land \mcalD_2)= \infty$}
\end{figure}

\begin{example}
  \label{ex:no-qconj}
  For the point-wise or discounting distances, the DMTS in
  Figure~\ref{fi:ex-conj} are such that $\md( \mcalI, \mcalD_1)= 1$ and $\md(
  \mcalI, \mcalD_2)= 1$, but $\md( \mcalI, \mcalD_1\land \mcalD_2)= \infty$.  Hence
  $\md( \mcalI, \mcalD_1\land \mcalD_2)\ne \max( \md( \mcalI, \mcalD_1), \md( \mcalI,
  \mcalD_2))$, and $\mcalI\in \sem{ \mcalD_1}^1\cap \sem{ \mcalD_2}^1$, but
  $\mcalI\notin \sem{ \mcalD_1\land \mcalD_2}^1$. \qed
\end{example}

\subsection{Structural composition and quotient}

We proceed to devise a quantitative generalization of the properties of
structural composition and quotient exposed in Section~\ref{se:specth}.
To this end, we need to use a \emph{uniform composition bound} on
labels:

Let $P: \LL\times \LL\to \LL$ be a function which is monotone in both
coordinates, has $P( \alpha, \bot_\LL)= P( \bot_\LL, \alpha)= \alpha$
and $P( \alpha, \top_\LL)= P( \top_\LL, \alpha)= \top_\LL$ for all
$\alpha\in \LL$.  We require that for all $a_1, b_1, a_2, b_2\in
\Sigma$ and $\alpha, \beta\in \LL$ with $F( a_1, a_2, \alpha)\ne
\top_\LL$ and $F( b_1, b_2, \beta)\ne \top_\LL$, $a_1\obar b_1$ is
defined iff $a_2\obar b_2$ is, and if both are defined, then
\begin{equation}
  \label{eq:synchbound}
  F( a_1\obar b_1, a_2\obar b_2, P( \alpha, \beta))
  \sqsubseteq_\LL P( F( a_1, a_2, \alpha), F( b_1, b_2, \beta))\,.
\end{equation}

Note that \eqref{eq:synchbound}~implies that
\begin{equation}
  \label{eq:synchbound-s}
  \tdl( a_1\obar a_2,
  b_1\obar b_2)\sqsubseteq_\LL P( \tdl( a_1, b_1), \tdl( a_2, b_2))\,.
\end{equation}
Hence $P$ provides a \emph{uniform bound} on distances between
synchronized labels, and \eqref{eq:synchbound} extends this property
so that it holds recursively.  Also, this is a generalization of the
condition that we imposed on $\obar$ in Section~\ref{se:structlabels};
it is shown in~\cite[p.~18]{DBLP:journals/acta/FahrenbergL14} that it
holds for all common label synchronizations.

Remark that $P$ can be understood as a (generalized) \emph{modulus of
  continuity}~\cite{encmath/continuity} for the partial function $f:
\Sigma\times \Sigma\parto \Sigma$ given by label synchronization $f(
a, b)= a\obar b$: with that notation, \eqref{eq:synchbound-s} asserts
that the distance from $f( a_1, a_2)$ to $f( b_1, b_2)$ is bounded by
$P$ applied to the distance from $( a_1, a_2)$ to $( b_1, b_2)$.

The following theorems show that composition is uniformly continuous
(\ie~a quantitative generalization of independent implementability;
Corollary~\ref{co:indimp}) and that quotient preserves and reflects
refinement distance (a quantitative generalization of
Theorem~\ref{th:quotient-bool}).

\begin{theorem}[Independent implementability]
  \label{th:indimp-q}
  For all specifications $\mcalS_1$, $\mcalS_2$, $\mcalS_3$,
  $\mcalS_4$,
  $\mdl( \mcalS_1\| \mcalS_2, \mcalS_3\| \mcalS_4)\sqsubseteq_\LL P(
  \mdl( \mcalS_1, \mcalS_3), \mdl( \mcalS_2, \mcalS_4))$.
\end{theorem}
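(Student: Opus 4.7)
The plan is to work with \NAA, since composition is defined there most directly and Theorem~\ref{th:trans-moddist} guarantees that refinement distances are preserved by the translations $\da$ and $\ad$. So it suffices to prove the claim for \NAA\ $\mcalA_1=(S_1,S_1^0,\Tran_1)$, $\mcalA_2=(S_2,S_2^0,\Tran_2)$, $\mcalA_3=(S_3,S_3^0,\Tran_3)$, $\mcalA_4=(S_4,S_4^0,\Tran_4)$. We can assume both $\mdl(\mcalA_1,\mcalA_3)\neq\top_\LL$ and $\mdl(\mcalA_2,\mcalA_4)\neq\top_\LL$, otherwise the right-hand side is $\top_\LL$ and there is nothing to prove.

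By the characterization of refinement distance through refinement families, pick families $R=\{R_\alpha\subseteq S_1\times S_3\}$ and $R'=\{R'_\beta\subseteq S_2\times S_4\}$ witnessing $\mdl(\mcalA_1,\mcalA_3)$ and $\mdl(\mcalA_2,\mcalA_4)$ respectively. Define a new $\LL$-indexed family on $(S_1\times S_2)\times(S_3\times S_4)$ by
\begin{equation*}
  Q_\gamma = \big\{\big((s_1,s_2),(s_3,s_4)\big) \bigmid \exists \alpha,\beta\in\LL : (s_1,s_3)\in R_\alpha,\ (s_2,s_4)\in R'_\beta,\ P(\alpha,\beta)\sqsubseteq_\LL \gamma\big\}.
\end{equation*}
Monotonicity of $P$ in both arguments ensures $Q$ is upward-closed, and by construction $((s_1^0,s_2^0),(s_3^0,s_4^0))\in Q_{P(\mdl(\mcalA_1,\mcalA_3),\mdl(\mcalA_2,\mcalA_4))}$ with appropriate choice of initial states, so the initialization condition holds.

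To verify the refinement family conditions, fix $((s_1,s_2),(s_3,s_4))\in Q_\gamma$ with $\gamma\neq\top_\LL$ and witnessing $\alpha,\beta$, and take any $M\in\Tran(s_1,s_2)$. By definition of composition, $M=M_1\obar M_2$ for some $M_1\in\Tran_1(s_1)$, $M_2\in\Tran_2(s_2)$. Applying the refinement family $R$ to $M_1$ yields $M_3\in\Tran_3(s_3)$ with $M_1\labpre_R M_3$ in the sense of~\eqref{soco.eq:aaref} (parametrized by $\alpha$), and similarly $R'$ applied to $M_2$ yields $M_4\in\Tran_4(s_4)$. Set $N=M_3\obar M_4\in\Tran(s_3,s_4)$. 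For any $(a_1\obar a_2,(t_1,t_2))\in M$ with $(a_i,t_i)\in M_i$, the refinement families give matching $(b_1,u_1)\in M_3$, $(b_2,u_2)\in M_4$ with associated witnesses $\alpha',\beta'\in\LL$ such that $(t_1,u_1)\in R_{\alpha'}$, $(t_2,u_2)\in R'_{\beta'}$, $F(a_1,b_1,\alpha')\sqsubseteq_\LL\alpha$ and $F(a_2,b_2,\beta')\sqsubseteq_\LL\beta$. Then $((t_1,t_2),(u_1,u_2))\in Q_{P(\alpha',\beta')}$, and applying the key hypothesis~\eqref{eq:synchbound} on $P$,
\begin{equation*}
  F(a_1\obar a_2,\, b_1\obar b_2,\, P(\alpha',\beta')) \sqsubseteq_\LL P(F(a_1,b_1,\alpha'),F(a_2,b_2,\beta')) \sqsubseteq_\LL P(\alpha,\beta) \sqsubseteq_\LL \gamma,
\end{equation*}
as required. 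The reverse matching (starting from an element of $N$) is symmetric. One also needs to check that $b_1\obar b_2$ is in fact defined whenever $a_1\obar a_2$ is, which follows from the hypothesis that $a_i\obar b_i$-definedness depends only on the distances $F(a_i,b_i,\cdot)\neq\top_\LL$ being finite, which is guaranteed here since $F(a_i,b_i,\cdot)\sqsubseteq_\LL\alpha,\beta\neq\top_\LL$.

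The main obstacle is purely a bookkeeping one: correctly splitting an arbitrary $M\in\Tran(s_1,s_2)$ as $M_1\obar M_2$ in a way that lets both refinement families act independently, and then re-combining the matched $M_3,M_4$ through $\obar$ so that every element on both sides pairs up correctly. The uniform bound~\eqref{eq:synchbound} on $P$ is precisely designed to make the final step of propagating the bound through $F$ work; without it, the composed witnesses $P(\alpha',\beta')$ would not be controllable by $\gamma$.
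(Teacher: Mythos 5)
Your proposal is correct and follows essentially the same route as the paper's proof: reduce to \NAA, form the product family $Q_\gamma$ indexed via $P(\alpha,\beta)\sqsubseteq_\LL\gamma$ from the two witnessing refinement families, match $M_1\obar M_2$ against $M_3\obar M_4$, and close the bound with~\eqref{eq:synchbound}. Your extra remark on the definedness of $b_1\obar b_2$ is a point the paper glosses over, but otherwise the arguments coincide.
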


\begin{proof}
  We show the proof for \NAA.  For $i= 1, 2, 3, 4$, let $\mcalA_i=( S_i,
  S_i^0, \Tran_i)$.  Let $R^1=\{ R^1_\alpha\subseteq S_1\times S_3\mid
  \alpha\in \LL\}$, $R^2=\{ R^2_\alpha\subseteq S_2\times S_4\mid
  \alpha\in \LL\}$ be refinement families such that $\forall s_1^0\in
  S_1^0: \exists s_3^0\in S_3^0:( s_1^0, s_3^0)\in R^1_{ \mdl( \mcalA_1,
    \mcalA_3)}$ and $\forall s_2^0\in S_2^0: \exists s_4^0\in S_4^0:(
  s_2^0, s_4^0)\in R^2_{ \mdl( \mcalA_2, \mcalA_4)}$.  Define $R=\{
  R_\alpha\subseteq( S_1\times S_2)\times( S_3\times S_4)\mid \alpha\in
  \L\}$ by
  \begin{multline*}
    R_\alpha= \big\{ ((s_1, s_2),( s_3, s_4))\bigmid \exists \alpha_1,
    \alpha_2\in \LL: \\
    ( s_1, s_3)\in R^1_{ \alpha_1},( s_2, s_4)\in R^2_{ \alpha_2}, P(
    \alpha_1, \alpha_2)\sqsubseteq_\LL \alpha\big\}\,,
  \end{multline*}
  then it is clear that $\forall( s_1^0, s_2^0)\in S_1^0\times S_2^0:
  \exists( s_3^0, s_4^0)\in S_3^0\times S_4^0:(( s_1^0, s_2^0),( s_3^0,
  s_4^0))\in R_{ P( \mdl( \mcalA_1, \mcalA_3), \mdl( \mcalA_2, \mcalA_4))}$.  We
  show that $R$ is a refinement family from $\mcalA_1\| \mcalA_2$ to $\mcalA_3\|
  \mcalA_4$.

  Let $\alpha\in \LL$ and $(( s_1, s_2),( s_3, s_4))\in R_\alpha$, then
  we have $\alpha_1, \alpha_2\in \LL$ with $( s_1, s_3)\in R^1_{
    \alpha_1}$, $( s_2, s_4)\in R^2_{ \alpha_2}$ and $P( \alpha_1,
  \alpha_2)\sqsubseteq_\LL \alpha$.  Let $M_{ 12}\in \Tran(( s_1,
  s_2))$, then there must be $M_1\in \Tran_1( s_1)$, $M_2\in \Tran_2(
  s_2)$ for which $M_{ 12}= M_1\obar M_2$.  Thus we also have $M_3\in
  \Tran_3( s_3)$ and $M_4\in \Tran_4( s_4)$ such that
  \begin{align}
    \notag & \forall( a_1, t_1)\in M_1: \exists( a_3, t_3)\in M_3,
    \beta_1\in \LL: \\
    \label{eq:comp.1-3}
    &\hspace*{10em} ( t_1, t_3)\in R^1_{ \beta_1}, F( a_1, a_3,
    \beta_1)\sqsubseteq_\LL \alpha_1\,, \\
    \notag & \forall( a_3, t_3)\in M_3: \exists( a_1, t_1)\in M_1,
    \beta_1\in \LL: \\
    \label{eq:comp.3-1}
    &\hspace*{10em} ( t_1, t_3)\in R^1_{ \beta_1}, F( a_1, a_3,
    \beta_1)\sqsubseteq_\LL \alpha_1\,, \\
    \notag & \forall( a_2, t_2)\in M_2: \exists( a_4, t_4)\in M_4,
    \beta_2\in \LL: \\
    \label{eq:comp.2-4}
    &\hspace*{10em} ( t_2, t_4)\in R^2_{ \beta_2}, F( a_2, a_4,
    \beta_2)\sqsubseteq_\LL \alpha_2\,, \\
    \notag & \forall( a_4, t_4)\in M_4: \exists( a_2, t_2)\in M_2,
    \beta_2\in \LL: \\
    \label{eq:comp.4-2}
    &\hspace*{10em} ( t_2, t_4)\in R^2_{ \beta_2}, F( a_2, a_4,
    \beta_2)\sqsubseteq_\LL \alpha_2\,.
  \end{align}

  Let $M_{ 34}= M_3\obar M_4$, then $M_{ 34}\in \Tran(( s_3, s_4))$.
  Let $( a_{ 12},( t_1, t_2))\in M_{ 12}$, then there are $( a_1,
  t_1)\in M_1$ and $( a_2, t_2)\in M_2$ for which $a_{ 12}= a_1\obar
  a_2$.  Using~\eqref{eq:comp.1-3} and~\eqref{eq:comp.2-4}, we get $(
  a_3, t_3)\in M_3$, $( a_4, t_4)\in M_4$ and $\beta_1, \beta_2\in \LL$
  such that $( t_1, t_3)\in R^1_{ \beta_1}$, $( t_2, t_4)\in R^2_{
    \beta_2}$, $F( a_1, a_3, \beta_1)\sqsubseteq_\LL \alpha_1$, and $F(
  a_2, a_4, \beta_2)\sqsubseteq_\LL \alpha_2$.

  Let $a_{ 34}= a_3\obar a_4$ and $\beta= P( \beta_1, \beta_2)$, then we
  have $( a_{ 34},( t_3, t_4))\in M_{ 34}$.  Also, $( t_1, t_3)\in R^1_{
    \beta_1}$ and $( t_2, t_4)\in R^2_{ \beta_2}$ imply that $(( t_1,
  t_2),( t_3, t_4))\in R_\beta$, and
  \begin{align*}
    F( a_{ 12}, a_{ 34}, \beta) &= F( a_1\obar a_2, a_3\obar a_4, P(
    \beta_1, \beta_2)) \\
    &\sqsubseteq P( F( a_1, a_3, \beta_1), F( a_2, a_4,
    \beta_2)) \\
    &\sqsubseteq_\LL P( \alpha_1, \alpha_2)\sqsubseteq_\LL \alpha\,.
  \end{align*}
  We have shown that for all $( a_{ 12},( t_1, t_2))\in M_{ 12}$, there
  exists $( a_{ 34},( t_3, t_4))\in M_{ 34}$ and $\beta\in \LL$ such
  that $(( t_1, t_2),$ $( t_3, t_4))\in R_\beta$ and $F( a_{ 12}, a_{
    34}, \beta)\sqsubseteq_\LL \alpha$.  To show the reverse property,
  starting from an element $( a_{ 34},( t_3, t_4))\in M_{ 34}$, we can
  proceed entirely analogous,
  using~\eqref{eq:comp.3-1}
  and~\eqref{eq:comp.4-2}.  \qed
\end{proof}

\begin{theorem}
  \label{th:quot-q}
  For all specifications $\mcalS_1$, $\mcalS_2$, $\mcalS_3$,
  $\mdl( \mcalS_1\| \mcalS_2, \mcalS_3)= \mdl( \mcalS_2, \mcalS_3/
  \mcalS_1)$.
\end{theorem}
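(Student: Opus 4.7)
The plan is to adapt the Boolean proof of Theorem~\ref{th:quotient-bool} to the quantitative setting by working with refinement families parameterized by $\LL$. I shall work with \NAA throughout, so that the result extends to DMTS and the $\nu$-calculus via Theorem~\ref{th:trans-moddist}. Let $\mcalA_i = (S_i, S_i^0, \Tran_i)$ for $i = 1,2,3$ and prove both inequalities $\mdl(\mcalA_1 \| \mcalA_2, \mcalA_3) \sqsubseteq_\LL \mdl(\mcalA_2, \mcalA_3/\mcalA_1)$ and $\mdl(\mcalA_2, \mcalA_3/\mcalA_1) \sqsubseteq_\LL \mdl(\mcalA_1 \| \mcalA_2, \mcalA_3)$ separately. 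As in Theorem~\ref{th:quotient-bool}, I shall assume that elements of $\Tran_1(s_1)$ are pairwise disjoint (splitting states if needed) and recall the useful fact that $s \supseteq t$ implies $s \mr t$ for quotient states, so that $\mdl(s, t) = \bot_\LL$ in that case.

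For the first inequality, let $R = \{R_\alpha \subseteq S_2 \times 2^{S_3 \times S_1} \mid \alpha \in \LL\}$ witness $\mdl(\mcalA_2, \mcalA_3/\mcalA_1)$. I would define $R' = \{R'_\alpha \mid \alpha \in \LL\}$ by $R'_\alpha = \{(s_1\|s_2, s_3) \mid (s_2, \{s_3/s_1\}) \in R_\alpha\}$ and show it witnesses the desired bound. Given $(s_1\|s_2, s_3) \in R'_\alpha$ and $M_\| = M_1 \obar M_2 \in \Tran_\|(s_1 \| s_2)$, I use the refinement family to pair $M_2$ with $M_/ \in \Tran_/(\{s_3/s_1\})$. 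Then $M_3 := M_/ \triangleright M_1$ belongs to $\Tran_3(s_3)$ by definition of quotient, and I verify the matching conditions of Definition~\ref{de:reffam}. For an element $(a, t_1\|t_2) \in M_\|$ with $a = a_1 \obar a_2$, the refinement yields $(a_2', t) \in M_/$ and $\beta \in \LL$ with $(t_2, t) \in R_\beta$ and $F(a_2, a_2', \beta) \sqsubseteq_\LL \alpha$; the crucial observation is that anti-monotonicity of $F$ in the second coordinate together with $a_1\obar a_2 \labpre a_1 \obar a_2'$ (from the order property of $\obar$) gives $F(a_1\obar a_2, a_1 \obar a_2', \beta) \sqsubseteq_\LL \alpha$, and then I use $t \supseteq \{t_3/t_1\}$ (for the appropriate $t_3$) to conclude. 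The symmetric direction uses the dual matching condition.

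For the second inequality, let $R$ witness $\mdl(\mcalA_1\|\mcalA_2, \mcalA_3)$ and define $R'_\alpha = \{(s_2, \{s_3^1/s_1^1, \dots, s_3^n/s_1^n\}) \mid \forall i : (s_1^i\|s_2, s_3^i) \in R_\alpha\}$. For $(s_2, s) \in R'_\alpha$ and $M_2 \in \Tran_2(s_2)$, I follow the Boolean construction closely: for each $i$ and each $M_1^{i,j} \in \Tran_1(s_1^i)$, the refinement family gives $M_3^{i,j} \in \Tran_3(s_3^i)$ matching $M_1^{i,j} \| M_2$ up to $\alpha$. The set $M$ defined as in equation~\eqref{eq:quotproof.M} in the Boolean proof -- but now with the additional bookkeeping that the matching respects $F(a_1\obar a_2, a_3, \beta) \sqsubseteq_\LL \alpha$ and $(t_1\|t_2, t_3) \in R_\beta$ -- will lie in $\Tran_/(s)$, and the matching conditions against $M_2$ will hold with the same bound $\alpha$.

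The main obstacle is the bookkeeping for the distances: in the Boolean version, matching is all-or-nothing, but here we must carefully thread the $\beta$ values from the refinement family through the quotient construction and verify that $F$ gives the right bounds after label synchronization or quotient. The two key technical lemmas I would lean on are anti-monotonicity of $F$ in the second coordinate (to absorb the discrepancy between an element in $\impl{a \oslash b}$ and the ``canonical'' choice in $\impl{a}$) and the fact that $s \supseteq t$ implies $\mdl(s,t) = \bot_\LL$ (to ignore extra components of quotient states). I would not need any uniform composition bound $P$ here, since quotient is adjoint to synchronization on the nose rather than only up to a bound; this is what makes the equality (as opposed to mere inequality) possible, unlike in Theorem~\ref{th:indimp-q}.
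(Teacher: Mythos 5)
Your overall route is exactly the paper's: both inequalities are proved for \NAA\ (and transferred via Theorem~\ref{th:trans-moddist}), with the same disjointness assumption on $\Tran_1(s_1)$, the same relation families (pairs $(s_1\|s_2,s_3)$ with $(s_2,\{s_3/s_1\})$ in the given family for one direction, and $(s_2,\{s_3^1/s_1^1,\dots,s_3^n/s_1^n\})$ with all $(s_1^i\|s_2,s_3^i)$ related for the other), the same choice $M_3=M_/\triangleright M_1$, the same big set $M$ with $\beta$-bookkeeping, and the same use of $s\supseteq t\Rightarrow\mdl(s,t)=\bot_\LL$.

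There is, however, one concrete flaw. In the first inequality you must pass from $F(a_2,a_2',\beta)\sqsubseteq_\LL\alpha$ to $F(a_1\obar a_2,a_1\obar a_2',\beta)\sqsubseteq_\LL\alpha$, and your justification does not establish this. First, $a_1\obar a_2\labpre a_1\obar a_2'$ would require $a_2\labpre a_2'$, which you do not know --- you only know $F(a_2,a_2',\beta)\sqsubseteq_\LL\alpha$, and this does not imply $a_2\labpre a_2'$ unless the value is $\bot_\LL$ and $F$ is recursively separating. Second, even granting that refinement of labels, anti-monotonicity in the second coordinate only yields $F(a_1\obar a_2,a_1\obar a_2',\beta)\sqsubseteq_\LL F(a_1\obar a_2,a_1\obar a_2,\beta)=\beta$, and $\beta\sqsubseteq_\LL\alpha$ need not hold (\eg~for the discounting distance $\beta$ may be as large as $\alpha/\lambda$). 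The correct argument is the instance of the uniform composition bound~\eqref{eq:synchbound} with one component trivial: $F(a_1\obar a_2,a_1\obar a_2',\beta)=F(a_1\obar a_2,a_1\obar a_2',P(\bot_\LL,\beta))\sqsubseteq_\LL P(F(a_1,a_1,\bot_\LL),F(a_2,a_2',\beta))=F(a_2,a_2',\beta)\sqsubseteq_\LL\alpha$. So your closing claim that no uniform bound $P$ is needed is wrong: $P$ is a standing assumption of the whole subsection and is genuinely used here, albeit only through the degenerate case $P(\bot_\LL,\cdot)=\id$. What makes the result an equality, unlike Theorem~\ref{th:indimp-q}, is not the absence of $P$ but the adjunction built into the quotient construction; the compatibility of $F$ with $\obar$ is still indispensable, since without it there is no relation at all between distances of component labels and distances of their compositions.
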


\begin{proof}
  We show the proof for \NAA.  Let $\mcalA_i=( S_i, S^0_i,$ $\Tran_i)$
  for $i=1,\dotsc,3$; we prove that
  $\mdl( \mcalA_1\| \mcalA_2, \mcalA_3)= \mdl( \mcalA_2, \mcalA_3/
  \mcalA_1)$.

  We assume that the elements of $\Tran_1( s_1)$ are pairwise disjoint
  for each $s_1\in S_1$; this can be achieved by, if necessary,
  splitting states.

  Define $R=\{ R_\alpha\subseteq S_1\times S_2\times S_3\mid \alpha\in
  \LL\}$ by $R_\alpha=\{( s_1\| s_2, s_3)\mid \mdl( s_2, s_3/
  s_1)\sqsubseteq_\LL \alpha\}$.  (We again abuse notation and write
  $( s_1\| s_2, s_3)$ instead of $( s_1, s_2, s_3)$.)  We show that
  $R$ is a witness for $\mdl( \mcalA_1\| \mcalA_2, \mcalA_3)$.

  Let $s_1^0\| s_2^0\in S_1^0\times S_2^0$, then there is $s_3^0/
  s_1^0\in s^0$ for which it holds that $\mdl( s_2^0, s_3^0/
  s_1^0)\sqsubseteq_\LL \mdl( \mcalA_2, \mcalA_3/ \mcalA_1)$, hence $( s_1^0\|
  s_1^0, s_3^0)\in R_{ \mdl( \mcalA_2, \mcalA_3/ \mcalA_1)}$.

  Let $\alpha\in \LL\setminus\{ \top_\LL\}$, $( s_1\| s_2, s_3)\in
  R_\alpha$ and $M_\|\in \Tran_\|( s_1\| s_2)$.  Then $M_\|= M_1\| M_2$
  with $M_1\in \Tran_1( s_1)$ and $M_2\in \Tran_2( s_2)$.  As $\mdl(
  s_2, s_3/ s_1)\sqsubseteq_\LL \alpha$, we can pair $M_2$ with an
  $M_/\in \Tran_/( s_3/ s_1)$, \ie~such that the conditions in
  Definition~\ref{de:reffam} are satisfied.

  Let $M_3= M_/\triangleright M_1$.  We show that the conditions in
  Definition~\ref{de:reffam} are satisfied for the pair $M_\|, M_3$:
  \begin{itemize}
  \item Let $( a, t_1\| t_2)\in M_\|$, then there are $a_1, a_2\in
    \Sigma$ with $a= a_1\obar a_2$ and $( a_1, t_1)\in M_1$, $( a_2,
    t_2)\in M_2$.  Hence there is $( a_2', t)\in M_/$ and $\beta\in
    \LL$ such that $F( a_2, a_2', \beta)\sqsubseteq_\LL \alpha$ and
    $\mdl( t_2, t)\sqsubseteq_\LL \beta$.

    Note that $a_3= a_1\obar a_2'$ is defined and $F( a, a_3,
    \beta)\sqsubseteq \alpha$.  Write $t=\{ t_3^1/ t_1^1,\dots,
    t_3^n/ t_1^n\}$.  By construction, there is an index $i$ for which
    $t_1^i= t_1$, hence $( a_3, t_3^i)\in M_3$.  Also, $t\supseteq\{
    t_3^i/ t_1^i\}$, hence $\mdl( t_2, t_3^i/ t_1^i)\sqsubseteq
    \beta$ and consequently $( t_1\| t_2, t_3)\in R_\beta$.

  \item Let $( a_3, t_3)\in M_3$, then there are $( a_2', t)\in M_/$
    and $( a_1, t_1)\in M_1$ such that $a_3= a_1\obar a_2'$ and $t_3/
    t_1\in t$.  Hence there are $( a_2, t_2)\in M_2$ and $\beta\in \LL$
    for which $F( a_2, a_2', \beta)\sqsubseteq_\LL \alpha$ and $\mdl(
    t_2, t)\sqsubseteq_\LL \beta$.  Note that $a= a_1\obar a_2$ is
    defined and $F( a, a_3, \beta)\sqsubseteq_\LL \alpha$.  Thus $( a,
    t_1\| t_2)\in M$, and by $t\supseteq\{ t_3/ t_1\}$, $\mdl( t_2,
    t_3/ t_1)\sqsubseteq \beta$.
  \end{itemize}

  Assume, for the other direction of the proof, that $\mcalA_1\| \mcalA_2\mr
  \mcalA_3$.  Define $R=\{ R_\alpha\subseteq S_2\times 2^{ S_3\times
    S_1}\mid \alpha\in \LL\}$ by
  \begin{equation*}
    R_\alpha= \big\{ ( s_2,\{ s_3^1/ s_1^1,\dotsc, s_3^n/ s_1^n\})\bigmid
    \forall i= 1,\dotsc, n:
    \mdl( s_1^i\| s_2, s_3^i)\sqsubseteq_\LL \alpha\big\}\,;
  \end{equation*}
  we show that $R$ is a witness for $\mdl( \mcalA_2, \mcalA_3/ \mcalA_1)$.

  Let $s_2^0\in S_2^0$.  We know that for every $s_1^0\in S_1^0$, there
  exists $\sigma( s_1^0)\in S_3^0$ such that $\mdl( s_1^0\| s_2^0,
  s_3^0)\sqsubseteq_\LL \mdl( \mcalA_1\| \mcalA_2, \mcalA_3)$.  By
  $s^0\supseteq\{ \sigma( s_1^0)/ s_1^0\mid s_1^0\in S_1^0\}$, we see
  that $( s_2^0, s^0)\in R_{ \mdl( \mcalA_1\| \mcalA_2, \mcalA_3)}$.

  Let $\alpha\in \LL\setminus\{ \top_\LL\}$ and $( s_2, s)\in R_\alpha$,
  with $s=\{ s_3^1/ s_1^1,\dotsc, s_3^n/ s_1^n\}$, and $M_2\in
  \Tran_2( s_2)$.

  For every $i= 1,\dotsc, n$, let us write $\Tran_1( s_1^i)=\{ M_1^{ i,
    1},\dotsc, M_1^{ i, m_i}\}$.  By assumption, $M_1^{ i, j_1}\cap
  M_1^{ i, j_2}= \emptyset$ for $j_1\ne j_2$, hence every $( a_1,
  t_1)\in \in \Tran_1( s_1^i)$ is contained in a unique $M_1^{ i,
    \delta_i( a_1, t_1)}\in \Tran_1( s_1^i)$.

  For every $j= 1,\dotsc, m_i$, let $M^{ i, j}= M_1^{ i, j}\| M_2\in
  \Tran_\|( s_1^i\| s_2)$.  From $\mdl( s_1^i\| s_2, s_3^i)\sqsubseteq_\LL
  \alpha$ we have $M_3^{ i, j}\in \Tran_3( s_3^i)$ such that the
  conditions in Definition~\ref{de:reffam} hold for the pair $M^{ i, j},
  M_3^{ i, j}$.

  Now define
  \begin{multline}
    \label{eq:quotproof2.M}
    M= \big\{( a_2, t)\bigmid \exists( a_2, t_2)\in M_2: \forall t_3/
    t_1\in t: \exists i, a_1, a_3, \beta: \\
    ( a_1, t_1)\in \in \Tran_1(
    s_1^i), ( a_3, t_3)\in M_3^{ i, \delta_i( a_1, t_1)}, \\
    F( a_1\obar a_2, a_3, \beta)\sqsubseteq_\LL \alpha, \mdl( t_1\|
    t_2, t_3)\sqsubseteq_\LL \beta\big\}\,.
  \end{multline}
  We need to show that $M\in \Tran_/( s)$.

  Let $i\in\{ 1,\dots, n\}$ and $M_1^{ i, j}\in \Tran_1( s_1^i)$; we
  claim that $M\triangleright M_1^{ i, j}\labpre_R M_3^{ i, j}$.  Let $(
  a_3, t_3)\in M\triangleright M_1^{ i, j}$, then $a_3= a_1\obar a_2$
  for some $a_1, a_2$ such that $t_3/ t_1\in t$, $( a_1, t_1)\in M_1^{
    i, j}$ and $( a_2, t)\in M$.  By disjointness, $j= \delta_i( a_1,
  t_1)$, hence by definition of $M$, $( a_3, t_3)\in M_3^{ i, j}$ as was
  to be shown.

  For the reverse inclusion, let $( a_3, t_3)\in M_3^{ i, j}$.  By
  definition of $M^{ i, j}$, there are $( a_1, t_1)\in M_1^{ i, j}$, $(
  a_2, t_2)\in M_2$ and $\beta\in \LL$ for which $F( a_1\obar a_2, a_3,
  \beta)\sqsubseteq_\LL \alpha$ and $\mdl( t_1\| t_2,
  t_3)\sqsubseteq_\LL \beta$.  Thus $j= \delta_i( a_1, t_1)$, so that
  there must be $( a_2, t)\in M$ for which $t_3/ t_1\in t$, but then
  also $( a_1\obar a_2, t_3)\in M\triangleright M_1^{ i, j}$.

  We show that the pair $M_2, M$ satisfies the conditions of
  Definition~\ref{de:reffam}.
  \begin{itemize}
  \item Let $( a_2, t_2)\in M_2$.  For every $i= 1,\dotsc, n$ and every
    $( a_1, t_1)\in \in \Tran_1( t_1^i)$, we can use
    Definition~\ref{de:reffam} applied to the pair $M_1^{ i, \delta_i(
      a_1, t_1)}\| M_2, M_3^{ i, \delta_i( a_1, t_1)}$ to choose an
    element $( \eta_i( a_1, t_1), \tau_i( a_1, t_1))\in M_3^{ i,
      \delta_i( a_1, t_1)}$ and $\beta_i( a_1, t_1)\in \LL$ for which
    $\mdl( t_1\| t_2, \tau_i(a_1, t_1))\sqsubseteq_\LL \beta_i( a_1,
    t_1)$ and $F( a_1\obar a_2, \eta_i( a_1, t_1),$ $\beta_i( a_1,
    t_1))\sqsubseteq_\LL \alpha$.  Let $t=\{ \tau_i( a_1, t_1)/
    t_1\mid i= 1,\dotsc, n,( a_1, t_1)\in \in \Tran_1( t_1^i)\}$, then
    $( a_2, t)\in M$ and $( t_2, t)\in R_\beta$.
  \item Let $( a_2, t)\in M$, then we have $( a_2, t_2)\in M_2$
    satisfying the conditions in~\eqref{eq:quotproof2.M}.  Hence for all
    $t_3/ t_1\in t$, there are $i$, $a_1$, $a_3$, and $\beta( t_3/
    t_1)$ such that $( a_3, t_3)\in M_3^{ i, \delta_i( a_1, t_1)}$, $F(
    a_1\obar a_2, a_3, \beta( t_3/ t_1))\sqsubseteq_\LL \alpha$,
    and
    $\mdl( t_1\| t_2, t_3)\sqsubseteq_\LL \beta( t_3/ t_1)$.  Let
    $\beta= \sup\{ \beta( t_3/ t_1)\mid t_3/ t_1\in t\}$, then
    $\mdl( t_1\| t_2, t_3)\sqsubseteq_\LL \beta$ for all $t_3/ t_1\in
    t$, hence $( t_2, t)\in R_\beta$. \qed
  \end{itemize}
\end{proof}

\section{Conclusion}

We have presented a framework for compositional and iterative design and
verification of systems which supports quantities and system and action
refinement.  Moreover, it is robust, in that it uses distances to
measure quantitative refinement and the operations preserve
distances.

The framework is very general.  It can be applied to a large variety of
quantities (energy, time, resource consumption etc.) and implement the
robustness notions associated with them.  It is also agnostic with
respect to the type of specifications used, as it applies equally to
behavioral and logical specifications.  This means that logical and
behavioral quantitative specifications can be freely combined in
quantitative system development.

\chapter{References}

\bibliography{bib}

\end{document}